\numberwithin{equation}{section}
\newcommand{\lyxrightaddress}[1]{
	\par {\raggedleft \begin{tabular}{l}\ignorespaces
	#1
	\end{tabular}
	\vspace{1.4em}
	\par}
}
\theoremstyle{plain}
\newtheorem{thm}{\protect\theoremname}[section]
\theoremstyle{definition}
\newtheorem{defn}[thm]{\protect\definitionname}
\theoremstyle{definition}
\newtheorem{example}[thm]{\protect\examplename}
\theoremstyle{plain}
\newtheorem{assumption}[thm]{\protect\assumptionname}
\theoremstyle{remark}
\newtheorem{rem}[thm]{\protect\remarkname}
\theoremstyle{plain}
\newtheorem{lem}[thm]{\protect\lemmaname}
\theoremstyle{plain}
\newtheorem{prop}[thm]{\protect\propositionname}
\theoremstyle{plain}
\newtheorem{cor}[thm]{\protect\corollaryname}
\theoremstyle{plain}
\newtheorem{ax}[thm]{\protect\axiomname}
\theoremstyle{definition}
\newtheorem{condition}[thm]{\protect\conditionname}
\theoremstyle{definition}
\newtheorem*{defn*}{\protect\definitionname}
\theoremstyle{plain}
\newtheorem*{conjecture*}{\protect\conjecturename}
\newcommand{\bra}[1]{\left\langle{#1}\right\vert}
\newcommand{\ket}[1]{\left\vert{#1}\right\rangle}
\newcommand{\qw}[1][-1]{\ar @{-} [0,#1]}
\newcommand{\gate}[1]{*{\xy *+<.6em>{#1};p\save+LU;+RU **\dir{-}\restore\save+RU;+RD **\dir{-}\restore\save+RD;+LD **\dir{-}\restore\POS+LD;+LU **\dir{-}\endxy} \qw}
\newcommand{\measureD}[1]{*{\xy*+=+<.5em>{\vphantom{\rule{0em}{.1em}#1}}*\cir{r_l};p\save*!R{#1} \restore\save+UC;+UC-<.5em,0em>*!R{\hphantom{#1}}+L **\dir{-} \restore\save+DC;+DC-<.5em,0em>*!R{\hphantom{#1}}+L **\dir{-} \restore\POS+UC-<.5em,0em>*!R{\hphantom{#1}}+L;+DC-<.5em,0em>*!R{\hphantom{#1}}+L **\dir{-} \endxy} \qw}
\newcommand{\multimeasureD}[2]{*+<1em,.9em>{\hphantom{#2}}\save[0,0].[#1,0];p\save !C *{#2},p+LU+<0em,0em>;+RU+<-.8em,0em> **\dir{-}\restore\save +LD;+LU **\dir{-}\restore\save +LD;+RD-<.8em,0em> **\dir{-} \restore\save +RD+<0em,.8em>;+RU-<0em,.8em> **\dir{-} \restore \POS !UR*!UR{\cir<.9em>{r_d}};!DR*!DR{\cir<.9em>{d_l}}\restore \qw}
\newcommand{\multigate}[2]{*+<1em,.9em>{\hphantom{#2}} \qw \POS[0,0].[#1,0];p !C *{#2},p \save+LU;+RU **\dir{-}\restore\save+RU;+RD **\dir{-}\restore\save+RD;+LD **\dir{-}\restore\save+LD;+LU **\dir{-}\restore}
\newcommand{\ghost}[1]{*+<1em,.9em>{\hphantom{#1}} \qw}
\newcommand{\Qcircuit}[1][0em]{\xymatrix @*=<#1>} 
\newcommand{\pureghost}[1]{*+<1em,.9em>{\hphantom{#1}}}
\newcommand{\multiprepareC}[2]{*+<1em,.9em>{\hphantom{#2}}\save[0,0].[#1,0];p\save !C
  *{#2},p+RU+<0em,0em>;+LU+<+.8em,0em> **\dir{-}\restore\save +RD;+RU **\dir{-}\restore\save
  +RD;+LD+<.8em,0em> **\dir{-} \restore\save +LD+<0em,.8em>;+LU-<0em,.8em> **\dir{-} \restore \POS
  !UL*!UL{\cir<.9em>{u_r}};!DL*!DL{\cir<.9em>{l_u}}\restore}
\newcommand{\prepareC}[1]{*{\xy*+=+<.5em>{\vphantom{#1\rule{0em}{.1em}}}*\cir{l^r};p\save*!L{#1} \restore\save+UC;+UC+<.5em,0em>*!L{\hphantom{#1}}+R **\dir{-} \restore\save+DC;+DC+<.5em,0em>*!L{\hphantom{#1}}+R **\dir{-} \restore\POS+UC+<.5em,0em>*!L{\hphantom{#1}}+R;+DC+<.5em,0em>*!L{\hphantom{#1}}+R **\dir{-} \endxy}}
\newcommand{\poloFantasmaCn}[1]{{{}^{#1}_{\phantom{#1}}}}
\newcommand{\rA}{\mathrm{A}}
\newcommand{\rB}{\mathrm{B}}
\newcommand{\rC}{\mathrm{C}}
\newcommand{\rD}{\mathrm{D}}
\newcommand{\rE}{\mathrm{E}}
\newcommand{\rI}{\mathrm{I}}
\newcommand{\rM}{\mathrm{M}}
\newcommand{\rS}{\mathrm{S}}
\newcommand{\rX}{\mathrm{X}}
\newcommand{\cA}{\mathcal{A}}
\newcommand{\cB}{\mathcal{B}}
\newcommand{\cC}{\mathcal{C}}
\newcommand{\cD}{\mathcal{D}}
\newcommand{\cI}{\mathcal{I}}
\newcommand{\cT}{\mathcal{T}}
\newcommand{\cU}{\mathcal{U}}
\newcommand{\cV}{\mathcal{V}}
\providecommand{\assumptionname}{Assumption}
\providecommand{\axiomname}{Axiom}
\providecommand{\conditionname}{Condition}
\providecommand{\conjecturename}{Conjecture}
\providecommand{\corollaryname}{Corollary}
\providecommand{\definitionname}{Definition}
\providecommand{\examplename}{Example}
\providecommand{\lemmaname}{Lemma}
\providecommand{\propositionname}{Proposition}
\providecommand{\remarkname}{Remark}
\providecommand{\theoremname}{Theorem}
\begin{document}
\includepdf{Frontespizio}

\lyxrightaddress{To my parents, Antonio and Silvana}

\tableofcontents{}

\chapter*{\markboth{ABSTRACT}{ABSTRACT}Abstract}

In this thesis we study the informational underpinnings of thermodynamics
and statistical mechanics. To this purpose, we use an abstract framework\textemdash general
probabilistic theories\textemdash , capable of describing arbitrary
physical theories, which allows one to abstract the informational
content of a theory from the concrete details of its formalism. In
this framework, we extend the treatment of microcanonical thermodynamics,
namely the thermodynamics of systems with a well-defined energy, beyond
the known cases of classical and quantum theory. We formulate two
requirements a theory should satisfy to have a well-defined microcanonical
thermodynamics. We adopt the recent approach of resource theories,
where one studies the transitions between states that can be accomplished
with a restricted set of physical operations. We formulate three different
resource theories, differing in the choice of the restricted set of
physical operations.

To bridge the gap between the objective dynamics of particles and
the subjective world of probabilities, one of the core issues in the
foundations of statistical mechanics, we propose four information-theoretic
axioms. They are satisfied by quantum theory and more exotic alternatives,
including a suitable extension of classical theory where classical
systems interact with each other creating entangled states. The axioms
identify a class of theories where every mixed state can be modelled
as the reduced state of a pure entangled state. In these theories
it is possible to introduce well-behaved notions of majorisation,
entropy, and Gibbs states, allowing for an information-theoretic derivation
of Landauer's principle. The three resource theories define the same
notion of resource if and only if, on top of the four axioms, the
dynamics of the underlying theory satisfy a condition called ``unrestricted
reversibility''. Under this condition we derive a duality between
microcanonical thermodynamics and pure bipartite entanglement.

\chapter{Introduction}

Thermodynamics is a powerful phenomenological paradigm encompassing
several scientific disciplines, from physics to chemistry, up to biology
and engineering. Its principles form a framework every experimental
observation must adhere to. Its two most important laws express energy
conservation (the first law), and the existence of irreversible processes,
or, loosely speaking, of an arrow of time (the second law). Whilst
it is fairly easy and reasonable to accept a principle such as energy
conservation, the second law and its consequent arrow of time have
caused a great bewilderment among scientists and philosophers, since
it was difficult to find an explanation for the origin of irreversibility.

It was also necessary to find a place in the structure of physics
for the new concepts introduced by thermodynamics: work, heat, temperature,
etc. Were they fundamental or could they be derived from other concepts?
Several theories of thermodynamics were proposed. The first leaned
towards the view that heat and temperature are primitive notions,
but later, the classic works by Maxwell \cite{Maxwell-1,Maxwell-2},
Boltzmann \cite{Boltzmann}, and Gibbs \cite{Gibbs} undertook a reduction
of the laws of thermodynamics to the laws of the underlying dynamics
of particles and fields. This reduction led to the establishment of
statistical mechanics as the standard paradigm for the foundations
for thermodynamics \cite{Callen,Huang,Kardar1,Kardar2}. This has
worked even for quantum systems, where quantum statistical mechanics
was able to predict new, genuinely quantum, phenomena, such as Bose-Einstein
condensation, later observed in a laboratory \cite{BEC}, and related
to other important phenomena: superfluidity and superconductivity
\cite{Superfluidity}. For this reason, in the following we will use
the terms ``statistical mechanics'' and ``thermodynamics'' nearly
as synonyms.

However, the statistical paradigm in turn led to novel questions,
the central one now being how to reconcile the use of statistical
notions (mixed states), associated with the incomplete knowledge of
an agent, with the picture of Nature provided by classical and quantum
mechanics, where the fundamental dynamics are deterministic (and reversible).
Different proposals have been made for classical statistical mechanics,
the best known of which are ergodic theory \cite{Birkhoff-ergodic,vonNeumann-ergodic,Ergodic1,Ergodic2,Fasano-Marmi},
and Jaynes' maximum entropy approach \cite{Jaynes1}, then extended
to quantum statistical mechanics \cite{Jaynes2}. Deffner and Zurek
refer to these attempts as to a 
\begin{quotation}
```half-way' house, populated by fictitious but useful concepts such
as \emph{ensembles}'' \cite{Zurek}.
\end{quotation}
Quantum theory, instead, offers a \emph{radically new} opportunity.
As originally noted by Schrödinger \cite{Schrodinger}, a system and
its environment can be jointly in a pure state, whilst the system
is individually in a mixed state. Here the mixed state does \emph{not}
represent an ensemble of identical systems, but rather the state of
a \emph{single} quantum system. Based on this idea, Popescu, Short,
and Winter \cite{Popescu-Short-Winter}, and Goldstein, Lebowitz,
Tumulka, and Zanghì \cite{Canonical-typicality} proposed that entanglement
could be the starting point for a new, genuinely quantum foundation
of statistical mechanics. The idea was that, when the environment
is large enough, the system is approximately in the equilibrium state
for the typical joint pure states of the system and the environment.
This idea has been explored in a variety of settings \cite{Bocchieri,Seth-Lloyd,Lubkin,Gemmer-Otte-Mahler,Mahler-book,Concentration-measure,BrandaoQIP2015,Thermo-recent},
whose common inspiration is the idea that quantum entanglement can
provide a new foundation for statistical mechanics, and ultimately,
thermodynamics. Furthermore, even irreversibility can be explained
because some degrees of freedom are traced out.

The success of the statistical mechanical paradigm is deeply tied
to the fact that the physical systems under investigation are composed
of an enormously large number of particles, which guarantees the applicability
of statistical methods. However, more recently, the scope of thermodynamics
in the quantum regime has been extended from quantum gases to microscopic
systems far from the thermodynamic limit, so crucial for the development
of nanotechnology \cite{delRio,Xuereb,Anders-thermo}. In this new
regime one studies the thermodynamic transformations and the fluctuations
of quantum systems with very few particles, a scenario often called
the \emph{single-shot} regime. Clearly one cannot use the standard
tools of statistical mechanics, and a new way to address this new
regime is to adopt a resource-theoretic approach \cite{Quantum-resource-1,Quantum-resource-2,Gour-review},
where one starts from a subset of quantum operations that are ``free''
or ``easy to implement'', and characterises the transitions that
can be accomplished by these free operations. This establishes a preorder
on quantum states based on their value as thermodynamic resources,
from the most valuable to the least valuable (the so-called ``free
states''), which correspond to equilibrium states. This emergence
of thermalisation through the repeated applications of some transformations
is the reason why resource theories have been so successful in the
study of quantum thermodynamics \cite{delRio,Xuereb,Anders-thermo}.
In this approach thermodynamic potentials often emerge as functions
assigning a value to resources compatibly with the resource preorder.
Many recent results in quantum thermodynamics have been obtained in
this way \cite{Horodecki-Oppenheim,Nicole,Athermality1,Horodecki-Oppenheim-2,Athermality2,2ndlaws,Gibbs-preserving-maps,quantum2ndlaw,Lostaglio-Jennings-Rudolph,Lostaglio-coherence,Nicole-beyond,Nicole-non-commuting,Non-commuting-Bristol,David-non-commuting,3rdlaw,3rd-law-Wilming,Sparaciari,Scharlau,Boes,Multiresource}.

The approach to quantum thermodynamics based on resource theories
uses a lot of concepts and techniques from (quantum) information theory.
This should not surprise, as from the early development of statistical
mechanics it became clear that thermodynamic concepts are intimately
tied to information-theoretic ones, as shown by the paradigmatic examples
of Maxwell's demon \cite{Maxwelldemon2}, and the closely related
Szilard engine \cite{Szilard}. In these examples, the knowledge possessed
by a (microscopic) observer was used to set up a physical process
violating the second law of thermodynamics. This was clearly paradoxical,
and called for an explanation in order to reaffirm the validity of
the famed second law. Since the paradoxes were all based on the information
possessed by an observer, if a solution was to be found, it would
involve some information-theoretic concepts. The correct solution
came many years later, when Landauer found out something very surprising:
the act of erasing and overwriting the memory of a computing device
has a physical effect \cite{Landauer}. More precisely, if we erase
an unknown bit at temperature $T$, there is an associated heat dissipation
of $kT\ln2$, where $k$ is Boltzmann constant. This opened the way
to a solution of the paradoxes, which was found by Bennett \cite{Bennett-Maxwell,Bennett2003}.
He understood that if an observer is to act on a physical system based
on the result of their previous observation, they have to \emph{store}
their observation somewhere. Since infinite storage does not exist,
at some point they will have to erase their memory, spending energy
to do it, according to Landauer's result. In conclusion, microscopic
observers such as Maxwell's demon \emph{cannot} be used for cyclic
work extraction at no additional cost, because their usage would involve
some energy dissipation at some stage of the protocol.

In this thesis, in accordance with the recent trend in theoretical
physics of grounding physics on information theory, we study the information-theoretic
foundations of thermodynamics and statistical mechanics. However,
we do it in a different way from before: instead of analysing them
in quantum theory, we do it in arbitrary physical theories. This is
a novel area of research, and this DPhil thesis is the first doctoral
thesis on this topic.

If quantum theory is the ultimate theory of Nature, at least in the
microscopic domain, why do we bother to study the foundations of thermodynamics
in general physical theories? There are manifold answers. One is that
one of the major trends in theoretical physics has always been to
generalise known results and broaden their scope, therefore it is
natural to study thermodynamics in its full generality. This may seem
a truly ambitious and hard feat. However this is precisely in the
spirit of thermodynamics: being a very general phenomenological paradigm,
it should be theory-independent in its essence, therefore it should
be possible to define and study it in abstract terms. Another answer
is that some features of quantum theory, and by extension, of quantum
thermodynamics, are best understood when looked at them ``from the
outside''. Contrasting quantum behaviour with the behaviour one observes
in general theories can provide a new insights into why the world
is quantum, this time from a thermodynamic angle. Another reason is
that, working in an abstract way, we can capture the information-theoretic
essence of thermodynamics, without ``being distracted'' by the concrete
details of the formalism of a specific physical theory. 

Clearly, the first thing we need is a theory-independent way to address
physical theories, an abstract framework that allows us to describe
all their common traits, without plunging deep into the details of
their specific formalisms. Fortunately, such a framework exists, and
it is that of \emph{general probabilistic theories} \cite{Hardy-informational-1,Barrett,Barnum-1,Janotta-Hinrichsen,chiribella2016quantum,Barnum2016},
which identifies the two main ingredients of any physical theory to
be its \emph{compositional structure} (how to build experiments) and
its \emph{probabilistic structure} (how to assign probabilities to
experimental observations). This is even more obvious in the variant
of the formalism known as \emph{operational probabilistic theories}
\cite{Chiribella-purification,Chiribella-informational,hardy2011,Hardy-informational-2,Chiribella14,QuantumFromPrinciples,hardy2013,chiribella2017quantum},
arisen from the marriage of the graphical language of category theory
\cite{Abramsky2004,Coecke-Kindergarten,Coecke-Picturalism,Selinger,Coecke2016,Coecke2017picturing}
with probability theory. As opposed to the original framework for
general probabilistic theories, based on the convex geometry of states
\cite{Barrett,Barnum-2,Wilce-formalism,Janotta-Hinrichsen,Barnum2016},
the focus of operational probabilistic theories is on physical processes
and their composition. This is the approach adopted in this thesis:
since thermodynamics is a theory concerned with processes and transitions,
it is natural to resort to a formalism where processes play centre
stage.

Historically, general probabilistic theories were introduced as a
framework from which to derive quantum theory by imposing suitable
information-theoretic principles, leading to various quantum reconstructions
\cite{Hardy-informational-1,Chiribella-informational,QuantumFromPrinciples,Hardy-informational-2,Brukner,Masanes-physical-derivation,Wilce-4.5,Masanes+all,Barnum-interference,hardy2013,Dakic2016,Muller2016}.
However, their scope is broader than just this: besides helping us
gain an operational understanding of quantum theory and why Nature
is quantum, general probabilistic theories are important also for
studying extensions and restrictions of quantum structures \cite{Scandolo14}.
Indeed, several proposals for a theory of quantum gravity have called
for a modification of the quantum laws to a more general form (see
e.g.\ \cite{modHeisenberg}). On the other side, sometimes one considers
sub-theories of quantum theory, arising for instance from an experimental
limitation on the states or operations one can implement in a laboratory.
In this case, what is the resulting theory like? To give an example,
Bartlett, Rudolph, and Spekkens studied Gaussian quantum theory, and
found out that it admits a semiclassical explanation as an epistemically
restricted theory \cite{Spekkens-Gaussian,Spekkens-epistricted}.

In our quest for the foundations of thermodynamics, we will be inspired
by the recent results obtained in quantum thermodynamics for microscopic
systems. Therefore we adopt the resource-theoretic approach, which
was shown to be not at all specific to quantum theory, but rather
applicable to a broad range of theories and situations \cite{Resource-theories,Resource-monoid,Resource-knowledge,Resource-currencies,EastThesis,East-article}.

We will find out that not all physical theories, which in principle
may be extremely counter-intuitive, are suitable to support a sensible
thermodynamics. Therefore we have to introduce some axioms in order
to restrict ourselves to thermodynamically relevant theories. Specifically,
inspired by the results about typicality \cite{Canonical-typicality,Popescu-Short-Winter},
we start from entanglement \cite{Chiribella-purification,Chiribella-Scandolo-entanglement},
turning it into an axiomatic foundation for statistical mechanics.
We explore the hypothesis that the physical systems admitting a well-behaved
statistical mechanics are exactly those where, at least in principle,
mixed states can be modelled as the local states of larger systems,
globally in a pure state \cite{TowardsThermo}. This modelling is
possible in quantum theory, where it provides the stepping stone for
the derivation of the microcanonical and canonical states in \cite{Canonical-typicality,Popescu-Short-Winter,Zurek}.
But the foundational role of entanglement is not limited to quantum
theory. We show that even classical statistical mechanics, where entanglement
is absent, can find a new foundation if classical theory is regarded
as part of a larger physical theory where classical mixed states can
be obtained as marginals of pure states of non-classical composite
systems \cite{TowardsThermo}. Remarkably, the mere fact that classical
systems \emph{could} be entangled with some other physical systems
determines some of their properties, and opens the way to the use
of typicality arguments like in the quantum case. The same approach
is applicable to several extensions of quantum theory, including quantum
theory with superselection rules \cite{Preskill-superselection,Fermionic1,Fermionic2,TowardsThermo,Purity},
and variants of quantum theory with real amplitudes \cite{Stuckelberg,Araki-real,Wootters-real,Hardy-real}.
In this framework, we demand the validity of four information-theoretic
axioms, informally stated as follows:
\begin{description}
\item [{Causality\ \cite{Chiribella-purification}}] No signal can be
sent from the future to the past.
\item [{Purity\ Preservation\ \cite{Scandolo14}}] The composition of
two pure transformations is a pure transformation.
\item [{Pure\ Sharpness\ \cite{QPL15}}] Every system has at least one
pure sharp observable.
\item [{Purification\ \cite{Chiribella-purification}}] Every state can
be modelled as the marginal of a pure state. Such a modelling is unique
up to local reversible transformations.
\end{description}
We call the theories satisfying these axioms \emph{sharp theories
with purification}, a notable example being quantum theory itself.
We show that the validity of above axioms implies that these theories
have some nearly quantum behaviour (e.g.\ the existence of entanglement),
yet they need not be quantum \cite{TowardsThermo,Purity,HOI}. Their
key feature is that they admit a level of description in which all
processes are pure and reversible, and all measurements are sharp.
As such, we believe them to play a really fundamental role in physics,
and quantum theory is an example of this. These axioms enforcing purity
at the fundamental level are also interesting from a thermodynamic
point of view. For example, Causality can be related to the ability
to discard systems, and therefore to restrict ourselves to a smaller
subsystem of a larger system. On the other side, Purification, being
the foundation for all extension and dilation theorems \cite{Chiribella-purification,Chiribella14},
can be thought of as the ability for a thermodynamic observer to enlarge
their system in order to always have an isolated system.

We study the simplest situation: microcanonical thermodynamics, describing
a system with fixed energy, first in arbitrary physical theories,
and then in sharp theories with purification. We formulate two requirements
a theory should satisfy to have a well-defined microcanonical thermodynamics,
and we show that the axioms of sharp theories with purification guarantee
that they are satisfied. The following step is to introduce a resource-theoretic
treatment of thermodynamics in this regime. Clearly, it is natural
to choose the microcanonical state as free, but what about the choice
of free operations? We have essentially three possibilities \cite{Purity}:
\begin{description}
\item [{random\ reversible\ channels}] arising from reversible dynamics
with randomly fluctuating parameters;
\item [{noisy\ operations}] generated by preparing ancillas in the microcanonical
state, turning on a reversible dynamic, and discarding the ancillas;
\item [{unital\ channels}] defined as the processes that preserve the
microcanonical state.
\end{description}
In sharp theories with purification the three sets of operations satisfy
some remarkable inclusion relations like in quantum theory, with random
reversible channels included in the set of noisy operations, and noisy
operations included in the set of unital channels.

We show that the preorder induced by unital channel is completely
characterised by a suitable majorisation criterion \cite{Purity}.
As a consequence, the functions that assign a value to states compatibly
with the preorder, which are measures of mixedness, bear a close resemblance
to entropies; in more mathematical terms they are Schur-concave functions
\cite{Olkin}. In this setting we show that it is possible to put
forward a definition of Shannon-von Neumann entropy with similar properties
to its quantum counterpart, which allows us to prove an operational
version of Landauer's principle \cite{TowardsThermo}.

If we want majorisation to completely characterise the preorder of
all the three resource theories, the physical theory must satisfy
an additional axiom, called ``unrestricted reversibility'', which
comes in three equivalent flavours in sharp theories with purification
\cite{Purity}:
\begin{description}
\item [{Permutability\ \cite{Hardy-informational-2}}] Every permutation
of every maximal set of perfectly distinguishable pure states can
be implemented by a reversible transformation.
\item [{Strong\ Symmetry\ \cite{Muller-self-duality,Barnum-interference}}] For
every two maximal sets of perfectly distinguishable pure states, there
exists a reversible transformation converting the states in one set
into the states in the other.
\item [{Reversible\ controllability\ \cite{Control-reversible}}] For
every pair of systems $\mathrm{A}$ and $\mathrm{B}$, it is possible
to reversibly implement any control-reversible transformation.
\end{description}
When unrestricted reversibility holds, the three resource theories
identify the same notion of resource\textemdash purity\textemdash and
in this case we can prove a duality between purity and pure-state
entanglement \cite{Chiribella-Scandolo-entanglement}.

\paragraph{Published work}

The core of this work is taken from \cite{TowardsThermo,Purity},
with some minor parts from \cite{HOI,Objectivity}. Specifically,
most of the material presented in chapter~\ref{chap:Sharp-theories-with}
comes from \cite{TowardsThermo}, whereas chapter~\ref{chap:Operational-thermodynamics}
contains material from both \cite{TowardsThermo} (the part about
entropies, mixedness monotones, and Landauer's principle) and \cite{Purity}
(the rest).

\paragraph{Structure}

The thesis is structured as follows: in chapter~\ref{chap:General-probabilistic-theories}
we introduce the basic framework of general probabilistic theories,
presented mainly in the operational-probabilistic variant. In the
same chapter we introduce the first axiom, Causality, stating that
information propagates from the past to the future, and we analyse
its consequences. Causality will remain a standing assumption throughout
the rest of the thesis. In chapter~\ref{chap:Resource-theories}
we present the main tool we use to study thermodynamics in general
physical theories, namely resource theories. Sharp theories with purification
are introduced in chapter~\ref{chap:Sharp-theories-with}, where
their general properties are studied in detail. The key thermodynamic
results are exposed in chapter~\ref{chap:Operational-thermodynamics}:
we start from microcanonical thermodynamics, examined in great detail
both in arbitrary physical theories and in sharp theories with purification.
Then we introduce thermal states, by which we obtain an information-theoretic
derivation of Landauer's principle. Finally conclusions are drawn
in chapter~\ref{chap:Conclusions}, with an outlook on further directions
of research.

In this thesis we assume that the reader is already familiar with
the basic framework and terminology of quantum mechanics and elementary
quantum information theory, in particular mixed states and quantum
channels. Good references in this respect are \cite{Nielsen-Chuang,Preskill,Wilde,Watrous}.
The other, more advanced, concepts will be thoroughly explained when
they are introduced.

\section{List of publications and preprints}

The work presented in this thesis contains material from the following
publications and preprints:
\begin{enumerate}
\item G.\ Chiribella, C.\ M.\ Scandolo, \emph{Microcanonical thermodynamics
in general physical theories}, New J.\ Phys.\ \textbf{19} (12),
123043 (2017) \cite{Purity}.
\item G.\ Chiribella, C.\ M.\ Scandolo, \emph{Entanglement as an axiomatic
foundation for statistical mechanics}, arXiv:1608.04459 {[}quant-ph{]}
(2016) \cite{TowardsThermo}.
\item H.\ Barnum, C.\ M.\ Lee, C.\ M.\ Scandolo, J.\ H.\ Selby, \emph{Ruling
out Higher-Order Interference from Purity Principles}, Entropy \textbf{19}
(6), 253 (2017) \cite{HOI}.
\item C.\ M.\ Scandolo, R.\ Salazar, J.\ K.\ Korbicz, P.\ Horodecki,
\emph{Is it possible to be objective in every physical theory?} arXiv:1805.12126
{[}quant-ph{]} (2018) \cite{Objectivity}.
\end{enumerate}

\chapter{General probabilistic theories\label{chap:General-probabilistic-theories}}

In this chapter we present the framework our investigation is conducted
in. Known as ``general probabilistic theories'' (GPTs) \cite{Hardy-informational-1,Barrett,Barnum-1,Janotta-Hinrichsen,chiribella2016quantum,Barnum2016},
it is general enough to accommodate essentially every physical theory,
admitting probabilistic processes. The idea behind it is that a theory
is defined by what an agent can do in a laboratory, and by the observations
they collect, and the predictions they make.

GPTs come in two flavours: one based on convex geometry \cite{Barrett,Barnum-2,Wilce-formalism,Janotta-Hinrichsen},
and the other, more general, based on the compositional structure
of physical theories \cite{Chiribella-purification,Chiribella-informational,hardy2011,Hardy-informational-2,Chiribella14,QuantumFromPrinciples,hardy2013,chiribella2017quantum}.
The theories described in the latter approach are often called \emph{operational
probabilistic theories} (OPTs). These two approaches are almost equivalent,
but OPTs are able to describe also non-convex theories, arising e.g.\ from
the lack of Causality.

For convex theories, one can translate concepts of one approach into
the other, but the scope of the two approaches remains slightly different.
The convex approach is more low-level: one starts from the state space
of single systems, and builds composites from it. Its weak point is
that one must specify all the details and constructions, and this
can become cumbersome when one studies the composition of systems
\cite{Graydon-QPL,BarnumGraydonWilceCCEJA,Galley}. However, this
approach is often inescapable if one is to deal with a concrete model.
On the other hand, OPTs use the high-level language of circuits borrowed
from category theory \cite{Abramsky2004,Coecke-Kindergarten,Coecke-Picturalism,Selinger,Coecke2016,Coecke2017picturing},
and take composition as a primitive, rather than derive it from the
structure of the state space. Its strong point is that it can be used
to derive results about a theory without specifying its concrete details
too much.

In this chapter and the rest of this thesis we will mainly adopt the
OPT variant: its focus on processes and their composition matches
beautifully with the scope of thermodynamics, which is all about processes
and transformations between states. Therefore, in the following, the
term ``GPT'' will be used as a synonym of ``OPT'', or more precisely,
of a general probabilistic theory treated in the OPT approach. Here
we present the principles underpinning the OPT framework, where the
ideas of process and composition play a central role. Even states
are viewed as processes, specifically as preparation processes. The
analysis of the operational structure of a physical theory is done
by introducing a diagrammatic language, which will be used throughout
this thesis. Then, in section~\ref{sec:The-probabilistic-structure},
we insert the probabilistic ingredient: every theory must be able
to provide probabilities of experimental outcomes.

In section~\ref{sec:Causality} we introduce the axiom of Causality
\cite{Chiribella-purification}, which is often implicitly assumed
in a lot of GPT literature. Causality will be a background assumption
throughout this thesis. The choice of Causality as an axiom for a
physical theory is motivated by several of its consequences, e.g.\ the
lack of time loops or the no-signalling principle between different
physical systems. Finally, this axiom is also appealing from a thermodynamic
perspective, for it guarantees the ability to discard systems, and
therefore to restrict ourselves to a subsystem of a larger thermodynamic
system.

\section{Events, tests, and the operational structure}

As the name suggests, an operational probabilistic theory is made
of two parts: the \emph{operational} one and the \emph{probabilistic}
one. The operational part is the more fundamental: it describes how
to build experiments in a laboratory by composing and connecting the
associated devices. As such, it is the essential ingredient of every
experimentally testable physical theory. The probabilistic part is
built on top of that, and it deals with the predictive power of the
theory, namely the ability to predict the likelihood of the various
experimental observations.

In this section we introduce a formalism able to describe the operational
structure of every physical theory \cite{Chiribella-purification,hardy2011,Hardy-informational-2,hardy2013,QuantumFromPrinciples},
which has a nice graphical representation in terms of diagrams and
circuits, taken from the graphical languages for symmetric monoidal
categories \cite{Coecke-Kindergarten,Coecke-Picturalism,Selinger,Coecke2016,Coecke2017picturing}.

\subsection{Systems and tests}

In an operational theory, there are two primitive notions: \emph{systems}
and \emph{tests}. We can have an intuition about their meaning by
thinking of a concrete experimental situation. A \emph{test} represents
the application of a physical device (beam-splitter, polarimeter,
Stern-Gerlach magnet, etc.). Every device has an input and an output,
which will be called \emph{input} and \emph{output system} respectively.
In this way, somehow systems play the role of labels attached to the
input and output ports of a device.

We denote systems by capital letters in Roman character: $\mathrm{A}$,
$\mathrm{B}$, etc. There is also a particular system, the \emph{trivial
system}, which simply means ``nothing'', or the degrees of freedom
the theory does not deal with. We will denote it by letter $\mathrm{I}$.
A device with the trivial system as input is simply a device with
\emph{no} input, and a device with the trivial system as output is
simply a device with \emph{no} output.

The application of a physical device can yield various outcomes. Each
of them corresponds to a particular event that occurred in the laboratory,
which can be identified by the experimenter by ``reading'' the device
pointer. Therefore, we can give the following characterisation of
tests.
\begin{defn}
A \emph{test} with input system $\mathrm{A}$ and output system $\mathrm{B}$
is a collection of \emph{events} $\left\{ \mathcal{C}_{i}\right\} _{i\in X}$
that can occur in an experiment, labelled by the outcome $i$ in some
set $X$. $X$ is called \emph{outcome set.}
\end{defn}

We will often say that $\left\{ \mathcal{C}_{i}\right\} _{i\in X}$
is a test from system $\mathrm{A}$ to system $\mathrm{B}$; if $\mathrm{A}$
and $\mathrm{B}$ coincide, we say that $\left\{ \mathcal{C}_{i}\right\} _{i\in X}$
is a test on system $\mathrm{A}$.

To clarify the role of outcome $i$ better, we can regard it as what
the experimenter actually sees when they perform their experiment
(a sequence of digits, a spot in a photographic plate, the device
pointer, etc.). The outcome set $X$ is the set containing all the
possible outcomes for a given test. In the following we will assume
that all outcome sets are finite. This will simplify the later mathematical
treatment, and will match a finite-dimensionality assumption we will
make in section~\ref{sec:The-probabilistic-structure}.

We can represent a test graphically as a box with incoming and outgoing
wires representing the input and output systems respectively.\[
\begin{aligned}\Qcircuit @C=1em @R=.7em @!R {    & \qw \poloFantasmaCn{\rA} &  \gate{\left\{\cC_{i}\right\}_{i \in X}} & \qw \poloFantasmaCn{\rB} &\qw}\end{aligned}
\]When there is no ambiguity, we will omit the outcome set $X$. If
we want to express that the specific event $\mathcal{C}_{i}$ has
occurred, we will write\[
\begin{aligned}\Qcircuit @C=1em @R=.7em @!R {    & \qw \poloFantasmaCn{\rA} &  \gate{\cC_{i}} & \qw \poloFantasmaCn{\rB} &\qw}\end{aligned}~,
\]without braces.

Whenever the trivial system $\mathrm{I}$ is involved, we omit the
corresponding wire and letter. Specifically, when we have no physical
input for our device\textemdash which means the trivial system as
input\textemdash we have a \emph{preparation-test} (a collection of
\emph{preparation-events}), which we represent as\[
\begin{aligned}\Qcircuit @C=1em @R=.7em @!R {    &  \prepareC{\left\{\rho_{i}\right\}} & \qw \poloFantasmaCn{\rA} &\qw}\end{aligned}~:= ~ \begin{aligned}\Qcircuit @C=1em @R=.7em @!R {    & \qw \poloFantasmaCn{\rI} &  \gate{\left\{\rho_{i}\right\}} & \qw \poloFantasmaCn{\rA} &\qw}\end{aligned}~,
\]namely with a rounded box on its left side. Intuitively, preparation-tests
prepare a system in a particular ``random state'', although we will
clarify this statement later. Similarly, when we have no physical
output for our device\textemdash i.e.\ the trivial system as output\textemdash we
have an \emph{observation-test} (a collection of \emph{observation-events}),
which we represent as\[
\begin{aligned}\Qcircuit @C=1em @R=.7em @!R {   & \qw \poloFantasmaCn{\rA} &\measureD{\left\{a_{i}\right\}}}\end{aligned}~:= ~ \begin{aligned}\Qcircuit @C=1em @R=.7em @!R {    & \qw \poloFantasmaCn{\rA} &  \gate{\left\{a_{i}\right\}} & \qw \poloFantasmaCn{\rI} &\qw}\end{aligned}~,
\]namely with a rounded box on its right side. Intuitively, observation-tests
destroy a system while acquiring some information from it, so they
are related to demolition measurements. Finally, if we have a test
$\left\{ p_{i}\right\} _{i\in X}$ from the trivial system to itself,
we omit both the wires and the box.\[
\left\{p_{i}\right\}~:=~  \begin{aligned}\Qcircuit @C=1em @R=.7em @!R {    & \qw \poloFantasmaCn{\rI} &  \gate{\left\{p_{i}\right\}} & \qw \poloFantasmaCn{\rI} &\qw}\end{aligned}
\]
\begin{defn}
We say that a test is \emph{deterministic} if its outcome set has
one element.
\end{defn}

If a test is deterministic, we omit the braces and simply write $\mathcal{C}$
instead of $\left\{ \mathcal{C}\right\} $. In a non-deterministic
test, we cannot predict which particular outcome we will obtain. On
the contrary, the outcome of a deterministic test is completely determined.
Since we are not able to predict the outcome of non-deterministic
tests, we set up a probabilistic structure that enables us to define
probabilities for the various outcomes. We will address this issue
in section~\ref{sec:The-probabilistic-structure}, but first some
other notions are needed.

\subsection{Sequential and parallel composition}

Since we are implementing a graphical language which has a direct
link to experimental apparatuses, the next step is to describe how
to connect devices. Devices can be connected sequentially or in parallel.
Let us start from sequential composition. Intuitively, two devices
can be connected sequentially, i.e.\ one after another, if the output
system of the former is the input system of the latter.
\begin{defn}
If $\left\{ \mathcal{C}_{i}\right\} _{i\in X}$ is a test from $\mathrm{A}$
to $\mathrm{B}$ with outcome set $X$, and $\left\{ \mathcal{D}_{j}\right\} _{j\in Y}$
is a test from $\mathrm{B}$ to $\mathrm{C}$ with outcome set $Y$,
we can consider the \emph{sequential composition} $\left\{ \mathcal{D}_{j}\circ\mathcal{C}_{i}\right\} _{\left(i,j\right)\in X\times Y}$
, which is a test from $\mathrm{A}$ to $\mathrm{C}$ and has outcome
set $X\times Y$.
\end{defn}

The graphical representation is quite intuitive: suppose we want to
compose the event $\mathcal{D}_{j}$ after the event $\mathcal{C}_{i}$;
we simply write\[
\begin{aligned}\Qcircuit @C=1em @R=.7em @!R {    & \qw \poloFantasmaCn{\rA} &  \gate{\cD_{j} \circ \cC_{i}} & \qw \poloFantasmaCn{\rC} &\qw}\end{aligned}~:=~ \begin{aligned}\Qcircuit @C=1em @R=.7em @!R {    & \qw \poloFantasmaCn{\rA} &  \gate{\cC_{i}} & \qw \poloFantasmaCn{\rB} &  \gate{\cD_{j}} & \qw \poloFantasmaCn{\rC} &\qw}\end{aligned}~.
\]From this notation, and from its operational meaning, we immediately
get that sequential composition is associative.

Sequential composition yields a natural ordering on tests. Indeed,
some tests are performed first and other later. In graphical language
this ordering goes from left to right: every box follows all the others
on its left. However, we must not confuse this ordering with ``temporal''
or ``causal'' ordering. We will come back to this point in section~\ref{sec:Causality}.

Now let us see an example of sequential composition of tests.
\begin{example}
Consider the diagram\[
\begin{aligned}\Qcircuit @C=1em @R=.7em @!R {  \prepareC{\left\{\rho_{i}\right\}}  & \qw \poloFantasmaCn{\rA} &  \gate{\left\{\cC_{j}\right\}} & \qw \poloFantasmaCn{\rB} &\measureD{\left\{b_{k}\right\}}}\end{aligned}~.
\]It gives instructions on how to build the experiment: first, we initialise
system $\mathrm{A}$ with the preparation-test $\left\{ \rho_{i}\right\} $,
then we perform the test $\left\{ \mathcal{C}_{j}\right\} $ from
$\mathrm{A}$ to $\mathrm{B}$ and finally we acquire some information
from $\mathrm{B}$ by destroying it with the observation-test $\left\{ b_{k}\right\} $.

If we wish to express which events actually occurred, we write\begin{equation}\label{eq:example joint}
\begin{aligned}\Qcircuit @C=1em @R=.7em @!R {  \prepareC{\rho_{i}}  & \qw \poloFantasmaCn{\rA} &  \gate{\cC_{j}} & \qw \poloFantasmaCn{\rB} &\measureD{b_{k}}}\end{aligned}~.
\end{equation}This means that the preparation-event $\rho_{i}$, the event $\mathcal{C}_{j}$,
and the observation-event $b_{k}$ occurred. 
\end{example}

We will often make use of the following short-hand notations, inspired
by quantum theory, to mean some common diagrams occurring in our analysis.
\begin{enumerate}
\item \[ \left(a_j\middle|\rho_i\right)~:=\!\!\!\!\begin{aligned}\Qcircuit @C=1em @R=.7em @!R { & \prepareC{\rho_i} & \qw \poloFantasmaCn{\rA} &\measureD{a_j}}\end{aligned}~; \]
\item \[ \left(b_k\middle|\cC_j\middle|\rho_i\right)~:=\!\!\!\!\begin{aligned}\Qcircuit @C=1em @R=.7em @!R { & \prepareC{\rho_i} & \qw \poloFantasmaCn{\rA} &\gate{\cC_j} &\qw \poloFantasmaCn{\rB} &\measureD{b_k}}\end{aligned}~; \]
\item \[ \left|\rho_j\right)\left(a_i\right|~:=~\begin{aligned}\Qcircuit @C=1em @R=.7em @!R { & \qw \poloFantasmaCn{\rA} &\measureD{a_i}&\prepareC{\rho_j}&\qw \poloFantasmaCn{\rB} &\qw}\end{aligned}~. \]
\end{enumerate}
Let us now define the identity test.
\begin{defn}
The \emph{identity test} for system $\mathrm{A}$ is a deterministic
test $\mathcal{I}_{\mathrm{A}}$ on $\mathrm{A}$ such that $\mathcal{C}_{i}\circ\mathcal{I}_{\mathrm{A}}=\mathcal{C}_{i}$
for every event $\mathcal{C}_{i}$ from $\mathrm{A}$ to $\mathrm{B}$,
and $\mathcal{I}_{\mathrm{A}}\circ\mathcal{D}_{i}=\mathcal{D}_{i}$
for every event $\mathcal{D}_{i}$ from $\mathrm{B}$ to $\mathrm{A}$.
\end{defn}

Graphically, we have\[
\begin{aligned}\Qcircuit @C=1em @R=.7em @!R {    & \qw \poloFantasmaCn{\rA} &  \gate{\cI} & \qw \poloFantasmaCn{\rA} &  \gate{\cC_{i}} & \qw \poloFantasmaCn{\rB} &\qw}\end{aligned}~=~\begin{aligned}\Qcircuit @C=1em @R=.7em @!R {    & \qw \poloFantasmaCn{\rA} &  \gate{\cC_{i}} & \qw \poloFantasmaCn{\rB} &\qw}\end{aligned}
\]for every $\mathcal{C}_{i}$, and\[
\begin{aligned}\Qcircuit @C=1em @R=.7em @!R {    & \qw \poloFantasmaCn{\rB} &  \gate{\cD_{i}} & \qw \poloFantasmaCn{\rA} &  \gate{\cI} & \qw \poloFantasmaCn{\rA} &\qw}\end{aligned}~=~\begin{aligned}\Qcircuit @C=1em @R=.7em @!R {    & \qw \poloFantasmaCn{\rB} &  \gate{\cD_{i}} & \qw \poloFantasmaCn{\rA} &\qw}\end{aligned}
\]for every $\mathcal{D}_{i}$. According to this definition, it is
clear that for every system $\mathrm{A}$ the identity test $\mathcal{I}_{\mathrm{A}}$
is unique.

Applying the identity test is just like doing nothing. For this reason
we will often omit the box for the identity test, and write just a
plain wire.

We sometimes want to ``identify'' similar systems, namely systems
that behave exactly in the same way from an operational point of view,
yet they are distinct. In quantum mechanics, for example, we can consider
the polarisation of a photon and the spin of an electron. Although
they are completely different physical systems, they are described
by the same Hilbert space.\footnote{Or by isomorphic Hilbert spaces, to be precise.}
\begin{defn}
We say that system $\mathrm{A}$ and system $\mathrm{A}'$ are \emph{operationally
equivalent} (and we write $\mathrm{A}\approx\mathrm{A}'$) if there
is a deterministic test $\mathcal{U}_{1}$ from $\mathrm{A}$ to $\mathrm{A}'$
and a deterministic test $\mathcal{U}_{2}$ from $\mathrm{A}'$ to
$\mathrm{A}$, such that\[
\begin{aligned}\Qcircuit @C=1em @R=.7em @!R {    & \qw \poloFantasmaCn{\rA} &  \gate{\cU_{1}} & \qw \poloFantasmaCn{\rA'} &  \gate{\cU_{2}} & \qw \poloFantasmaCn{\rA} &\qw}\end{aligned}~=~\begin{aligned}\Qcircuit @C=1em @R=.7em @!R {    & \qw \poloFantasmaCn{\rA} &  \gate{\cI} & \qw \poloFantasmaCn{\rA} &\qw}\end{aligned}~,
\]where $\mathcal{I}_{\mathrm{A}}$ is the identity test on $\mathrm{A}$,
and\[
\begin{aligned}\Qcircuit @C=1em @R=.7em @!R {    & \qw \poloFantasmaCn{\rA'} &  \gate{\cU_{2}} & \qw \poloFantasmaCn{\rA} &  \gate{\cU_{1}} & \qw \poloFantasmaCn{\rA'} &\qw}\end{aligned}~=~\begin{aligned}\Qcircuit @C=1em @R=.7em @!R {    & \qw \poloFantasmaCn{\rA'} &  \gate{\cI} & \qw \poloFantasmaCn{\rA'} &\qw}\end{aligned}~,
\]where $\mathcal{I}_{\mathrm{A}'}$ is the identity test on $\mathrm{A}'$.
\end{defn}

If $\mathrm{A}\approx\mathrm{A}'$, we can transform tests on system
$\mathrm{A}$ into tests on system $\mathrm{A}'$ by taking the sequential
composition with the intertwining tests $\mathcal{U}_{1}$ and $\mathcal{U}_{2}$.
Indeed, if $\mathcal{C}_{i}$ is an event on system $\mathrm{A}$,
the corresponding event $\mathcal{C}_{i}'$ on system $\mathrm{A}'$
is\[
\begin{aligned}\Qcircuit @C=1em @R=.7em @!R {    & \qw \poloFantasmaCn{\rA'} &  \gate{\cC'_{i}} & \qw \poloFantasmaCn{\rA'} &\qw}\end{aligned}~:=~\begin{aligned}\Qcircuit @C=1em @R=.7em @!R {    & \qw \poloFantasmaCn{\rA'} &  \gate{\cU_{2}} & \qw \poloFantasmaCn{\rA} &  \gate{\cC_{i}} & \qw \poloFantasmaCn{\rA}&  \gate{\cU_{1}} & \qw \poloFantasmaCn{\rA'} &\qw}\end{aligned}~.
\]

Now we move to the other type of composition: parallel composition.
If we have two systems $\mathrm{A}$ and $\mathrm{B}$, we can consider
them together, forming the composite system $\mathrm{AB}$.
\begin{defn}
If $\mathrm{A}$ and $\mathrm{B}$ are two systems, the corresponding
\emph{composite system} is $\mathrm{\mathrm{AB}}$. System composition
has the following properties.
\begin{enumerate}
\item $\mathrm{AI}=\mathrm{IA}=\mathrm{A}$ for every system $\mathrm{A}$,
where $\mathrm{I}$ is the trivial system;
\item \label{enu:commutativity composition}$\mathrm{AB}\approx\mathrm{BA}$
for all systems $\mathrm{A}$ and $\mathrm{B}$;
\item $\mathrm{A}\left(\mathrm{BC}\right)=\left(\mathrm{AB}\right)\mathrm{C}$
for all systems $\mathrm{A}$, $\mathrm{B}$, $\mathrm{C}$.
\end{enumerate}
\end{defn}

These properties have a fairly intuitive meaning.
\begin{enumerate}
\item When we combine a system with ``nothing'', we still have the original
system.
\item The composition of systems does not depend on the order we compose
them.
\item This particular form of associativity allows us to write simply $\mathrm{ABC}$,
without parentheses. Again, the order of composition is irrelevant.
\end{enumerate}
We represent composite systems diagrammatically as a collection of
wires one under another. We will typically omit the wire for the trivial
system.

We can represent an event $\mathcal{C}_{i}$ from system $\mathrm{AB}$
to system $\mathrm{CD}$ as a box with multiple wires, one for each
system.\[
\begin{aligned}\Qcircuit @C=1em @R=.7em @!R {    & \qw \poloFantasmaCn{\mathrm{AB}} &  \gate{\cC_{i}} & \qw \poloFantasmaCn{\mathrm{CD}} &\qw}\end{aligned}~ = ~ \begin{aligned}\Qcircuit @C=1em @R=.7em @!R {     & \qw \poloFantasmaCn{\rA}  & \multigate{1}{\cC_{i}} & \qw \poloFantasmaCn{\rC} &\qw  \\    & \qw \poloFantasmaCn{\rB}  &\ghost{\cC_{i}} & \qw \poloFantasmaCn{\rD} &  \qw }\end{aligned}
\]By property~\ref{enu:commutativity composition}, it is completely
irrelevant to write $\mathrm{A}$ or $\mathrm{B}$ on the upper input
wire, and the same holds for every wire. For composite systems we
depict preparation-events as\[
\begin{aligned}\Qcircuit @C=1em @R=.7em @!R { & \multiprepareC{1}{\rho_{i}}    & \qw \poloFantasmaCn{\rA} &  \qw   \\  & \pureghost{\rho_{i}}    & \qw \poloFantasmaCn{\rB}  &  \qw }\end{aligned}~,
\]and observation-events as\[
\begin{aligned}\Qcircuit @C=1em @R=.7em @!R {     & \qw \poloFantasmaCn{\rA} &  \multimeasureD{1}{a_{i}}  \\      & \qw \poloFantasmaCn{\rB}  &  \ghost{a_{i}} }\end{aligned}~.
\]

Now we can define the parallel composition of tests.
\begin{defn}
Let $\left\{ \mathcal{C}_{i}\right\} _{i\in X}$ be a test from $\mathrm{A}$
to $\mathrm{B}$, and let $\left\{ \mathcal{D}_{j}\right\} _{j\in Y}$
be a test from $\mathrm{C}$ to $\mathrm{D}$. The \emph{parallel
composition} $\left\{ \mathcal{C}_{i}\otimes\mathcal{D}_{j}\right\} _{\left(i,j\right)\in X\times Y}$
(or tensor product) is a test from $\mathrm{AC}$ to $\mathrm{BD}$
with outcome set $X\times Y$, and it is represented diagrammatically
as\[
\begin{aligned}\Qcircuit @C=1em @R=.7em @!R {     & \qw \poloFantasmaCn{\rA}  & \multigate{1}{\cC_{i}\otimes \cD_{j}} & \qw \poloFantasmaCn{\rB} &\qw  \\    & \qw \poloFantasmaCn{\rC}  &\ghost{\cC_{i}\otimes \cD_{j}} & \qw \poloFantasmaCn{\rD} &  \qw }\end{aligned} ~:=~ \begin{aligned}\Qcircuit @C=1em @R=.7em @!R {     & \qw \poloFantasmaCn{\rA}  & \gate{\cC_{i}} & \qw \poloFantasmaCn{\rB} &\qw  \\    & \qw \poloFantasmaCn{\rC}  &\gate{\cD_{j}} & \qw \poloFantasmaCn{\rD} &  \qw }\end{aligned}~.
\]
\end{defn}

Again, from its operational meaning, it is immediate that parallel
composition is associative. Note that this is captured by the graphical
notation we are using.

We can combine parallel and sequential composition: suppose $\mathcal{A}_{i}$
is an event from $\mathrm{A}$ to $\mathrm{B}$, $\mathcal{B}_{j}$
is an event from $\mathrm{B}$ to $\mathrm{C}$; $\mathcal{D}_{k}$
is an event from $\mathrm{D}$ to $\mathrm{E}$ and $\mathcal{E}_{l}$
is an event from $\mathrm{E}$ to $\mathrm{F}$. Then we have\[
\begin{aligned}\Qcircuit @C=1em @R=.7em @!R {     & \qw \poloFantasmaCn{\rA}  & \multigate{1}{\left(\cB_{j} \circ \cA_{i}\right) \otimes \left(\mathcal{E}_{l} \circ \cD_{k}\right)} & \qw \poloFantasmaCn{\rC} &\qw  \\    & \qw \poloFantasmaCn{\rD}  &\ghost{\left(\cB_{j} \circ \cA_{i}\right) \otimes \left(\mathcal{E}_{l} \circ \cD_{k}\right)} & \qw \poloFantasmaCn{\mathrm{F}} &  \qw }\end{aligned}~ =~ \begin{aligned}\Qcircuit @C=1em @R=.7em @!R {     & \qw \poloFantasmaCn{\rA}  & \gate{\cB_{j} \circ \cA_{i}} & \qw \poloFantasmaCn{\rC} &\qw  \\    & \qw \poloFantasmaCn{\rD}  &\gate{\mathcal{E}_{l} \circ \cD_{k}} & \qw \poloFantasmaCn{\mathrm{F}} &  \qw }\end{aligned}=
\]
\[
= \begin{aligned}\Qcircuit @C=1em @R=.7em @!R {& \qw \poloFantasmaCn{\rA}  & \gate{\cA_{i}} & \qw \poloFantasmaCn{\rB} & \gate{\cB_{j}}& \qw \poloFantasmaCn{\rC} &\qw \\    & \qw \poloFantasmaCn{\rD}  & \gate{\cD_{k}} & \qw \poloFantasmaCn{\rE}  & \gate{\mathcal{E}_{l}}& \qw \poloFantasmaCn{\mathrm{F}} &\qw}\end{aligned}~=~\begin{aligned}\Qcircuit @C=1em @R=.7em @!R {     & \qw \poloFantasmaCn{\rA}  & \multigate{1}{\left(\cB_{j} \otimes \mathcal{E}_{l}\right) \circ \left(\cA_{i} \otimes \cD_{k}\right)} & \qw \poloFantasmaCn{\rC} &\qw  \\    & \qw \poloFantasmaCn{\rD}  &\ghost{\left(\cB_{j} \otimes \mathcal{E}_{l}\right) \circ \left(\cA_{i} \otimes \cD_{k}\right)} & \qw \poloFantasmaCn{\mathrm{F}} &  \qw }\end{aligned}~.
\]

Let us analyse the properties of the deterministic test that intertwines
system $\mathrm{AB}$ and $\mathrm{BA}$. In practice, it swaps system
$\mathrm{A}$ and system $\mathrm{B}$, so we call it $\mathtt{SWAP}$.
Clearly swapping the systems twice yields the original system, thus
$\mathtt{SWAP}^{-1}=\mathtt{SWAP}$. Moreover it swaps the events
in a parallel composition.\[
\begin{aligned}\Qcircuit @C=1em @R=.7em @!R {     & \qw \poloFantasmaCn{\rA}  & \gate{\cC_{i}} & \qw \poloFantasmaCn{\rB} &\multigate{1}{\mathtt{SWAP}} & \qw \poloFantasmaCn{\rD} &\qw  \\    & \qw \poloFantasmaCn{\rC}  &\gate{\cD_{j}} & \qw \poloFantasmaCn{\rD} &  \ghost{\mathtt{SWAP}}& \qw \poloFantasmaCn{\rB} &\qw}\end{aligned}~=~\begin{aligned}\Qcircuit @C=1em @R=.7em @!R { & \qw \poloFantasmaCn{\rA} &\multigate{1}{\mathtt{SWAP}} & \qw \poloFantasmaCn{\rC}      & \gate{\cD_{j}}  &\qw   \poloFantasmaCn{\rD} &\qw \\ & \qw \poloFantasmaCn{\rC} &  \ghost{\mathtt{SWAP}}& \qw \poloFantasmaCn{\rA}     &\gate{\cC_{i}}  &\qw \poloFantasmaCn{\rC} &\qw }\end{aligned}
\]

Note that we can compose preparation-tests only in parallel; the same
holds for observation-tests. We will often write sequential composition
as a product: if $\mathcal{C}_{i}$ is an event from $\mathrm{A}$
to $\mathrm{B}$ and $\mathcal{D}_{j}$ is an event from $\mathrm{B}$
to $\mathrm{C}$, we will write $\mathcal{D}_{j}\circ\mathcal{C}_{i}$
simply as $\mathcal{D}_{j}\mathcal{C}_{i}$.

Now we can define operational theories.
\begin{defn}
An \emph{operational theory} is given by a collection of systems,
closed under composition, and a collection of tests, closed under
sequential and parallel composition.
\end{defn}

It is easy to see, from the properties presented above, that an operational
theory is described by a strict symmetric monoidal category \cite{Chiribella-purification,Categories-practising,Chiribella14,QuantumFromPrinciples,Coecke2017picturing}.

In the following we will assume that tests $\left\{ \mathcal{A}_{i}\right\} $
from system $\mathrm{A}$ to system $\mathrm{B}$ are performed through
a deterministic interaction between the systems and the measurement
apparatus $\mathrm{X}$, which is read by the observer with an observation-test.
\begin{assumption}[Physicalisation of readout \cite{Chiribella14}]
Every test $\left\{ \mathcal{A}_{i}\right\} _{i\in X}$ can be realised
as follows:\[
\begin{aligned}\Qcircuit @C=1em @R=.7em @!R { & \qw \poloFantasmaCn{\rA} &\gate{\left\{\cA_i \right\}} & \qw \poloFantasmaCn{\rB}      &\qw }\end{aligned}~=~\begin{aligned}\Qcircuit @C=1em @R=.7em @!R { & \qw \poloFantasmaCn{\rA} &\multigate{1}{\cA} & \qw \poloFantasmaCn{\rB}       &\qw \\ &  &  \pureghost{\cA}& \qw \poloFantasmaCn{\rX}      &\measureD{\left\{e_i\right\}}}\end{aligned}~,
\]where $\mathcal{A}$ is a deterministic test from $\mathrm{A}$ to
$\mathrm{BX}$, and $\left\{ e_{i}\right\} _{i\in X}$ is an observation-test
on $\mathrm{X}$.
\end{assumption}

\section{The probabilistic structure\label{sec:The-probabilistic-structure}}

Now we can add the probabilistic ingredient to our theory: basically,
we want to assign a number in the interval $\left[0,1\right]$ to
every event from the trivial system to itself.
\begin{defn}
An \emph{operational-probabilistic theory} (OPT) is an operational
theory where, for every test $\left\{ p_{i}\right\} _{i\in X}$ on
the trivial system $\mathrm{I}$, one has $p_{i}\in\left[0,1\right]$
and $\sum_{i\in X}p_{i}=1$.

Moreover, the sequential and parallel compositions of two events on
the trivial system are given by the product of probabilities: $p_{i}\circ p_{j}=p_{i}\otimes p_{j}=p_{i}p_{j}$.
\end{defn}

This definition states that every event from $\mathrm{I}$ to itself
can be interpreted as a probability. Consequently, we can associate
a probability with every diagram with no external wires.
\begin{example}
Let us consider eq.~\eqref{eq:example joint} again. It is a diagram
without external wires; indeed the sequential composition of the three
events is an event from the trivial system $\mathrm{I}$ to itself
(no input and no output). So we have $p_{ijk}:=\left(b_{k}\middle|\mathcal{C}_{j}\middle|\rho_{i}\right)$,
that is the \emph{joint probability} of having the preparation-event
$\rho_{i}$, the event $\mathcal{C}_{j}$, and the observation-event
$b_{k}$.
\end{example}

Henceforth we will focus only on OPTs, namely on operational theories
with a probabilistic structure.

Sometimes it happens that we obtain the same physical configuration
with different experimental procedures. For instance, in quantum theory
consider the mixed state $\rho=\frac{1}{2}\mathbf{1}$ of a qubit.
This state can be prepared either by having no information on the
state of the system, or by taking the partial trace of one of the
Bell states. The issue is now how to distinguish different experimental
preparations, or find out when they are equivalent.

Let us consider, for instance, preparation-events. If we compose a
preparation-event with an observation-event, we get $p_{ij}=\left(a_{j}\middle|\rho_{i}\right)$,
the joint probability of having the preparation-event $\rho_{i}$
and the observation-event $a_{j}$.

If we have a preparation-event $\rho_{i}$ on $\mathrm{A}$, we can
associate a real-valued function $\widehat{\rho}_{i}$ with it. This
function acts on observation-events $a_{j}$ on $\mathrm{A}$ and
yields the joint probability $p_{ij}$.
\[
\widehat{\rho}_{i}:a_{j}\mapsto\left(a_{j}\middle|\rho_{i}\right)=p_{ij}
\]
Similarly, if we have an observation-event $a_{j}$ on $\mathrm{A}$,
we can associate a real-valued function $\widehat{a}_{j}$ with it.
This function acts on preparation-events $\rho_{i}$ on $\mathrm{A}$
and yields the joint probability $p_{ij}$.
\[
\widehat{a}_{j}:\rho_{i}\mapsto\left(a_{j}\middle|\rho_{i}\right)=p_{ij}
\]

From a probabilistic point of view, we cannot distinguish two preparations
of the system if they yield the same probabilities for \emph{all}
observation-tests, even if the preparations were obtained operatively
in completely different ways. If we consider an experimenter, they
can distinguish two unknown preparations of the system by examining
the statistics they get from performing, in principle, all possible
measurements on the system. If they find any difference in the statistics,
then they conclude that the preparations were different. A very similar
argument holds for observation-events.

In this vein, we can introduce an equivalence relation between preparation-events
(and similarly between observation-events). If $\rho_{i}$ and $\sigma_{j}$
are two preparation-events on system $\mathrm{A}$, we say that they
are \emph{tomographically equivalent} (or \emph{tomographically indistinguishable}),
written as $\rho_{i}\sim\sigma_{j}$, if $\widehat{\rho}_{i}=\widehat{\sigma}_{j}$,
namely if for every observation-event $a_{k}$ on $\mathrm{A}$ we
have $\left(a_{k}\middle|\rho_{i}\right)=\left(a_{k}\middle|\sigma_{j}\right)$.
Similarly, if $a_{i}$ and $b_{j}$ are two observation-events on
$\mathrm{A}$, we say that they are \emph{tomographically equivalent}
(or \emph{tomographically indistinguishable}), written as $a_{i}\sim b_{j}$,
if $\widehat{a}_{i}=\widehat{b}_{j}$, namely if for every preparation-event
$\rho_{k}$ on $\mathrm{A}$ we have $\left(a_{i}\middle|\rho_{k}\right)=\left(b_{j}\middle|\rho_{k}\right)$.
\begin{defn}
\label{def:states}Equivalence classes of tomographically indistinguishable
preparation-events are called \emph{states}. The set of states of
system $\mathrm{A}$ is denoted as $\mathsf{St}\left(\mathrm{A}\right)$.

Equivalence classes of tomographically indistinguishable observation-events
are called \emph{effects}. The set of effects of system $\mathrm{A}$
is denoted as $\mathsf{Eff}\left(\mathrm{A}\right)$.
\end{defn}

Therefore, two states $\rho_{1}$ and $\rho_{2}$ of system $\mathrm{A}$
are equal if and only if $\left(a\middle|\rho_{1}\right)=\left(a\middle|\rho_{2}\right)$
for every effect $a\in\mathsf{Eff}\left(\mathrm{A}\right)$. Similarly,
two effects $a_{1}$ and $a_{2}$ of system $\mathrm{A}$ are equal
if and only if $\left(a_{1}\middle|\rho\right)=\left(a_{2}\middle|\rho\right)$
for every state $\rho\in\mathsf{St}\left(\mathrm{A}\right)$. The
process of reconstructing a state (or an effect) from the statistics
of measurements is called \emph{tomography}.

We can assume that equivalence classes were taken from the very beginning,
so from now on we will say that a preparation-test is made of states,
and that an observation-test is made of effects.
\begin{example}
The trivial system has a unique deterministic state and a unique deterministic
effect: it is the number 1. All the other effects and states are elements
of $\left[0,1\right]$.
\end{example}

Let us see what states and effects are in quantum mechanics.
\begin{example}
\label{ex:states effects in quantum mechanics}In quantum mechanics
we associate a Hilbert space $\mathcal{H}_{\mathrm{A}}$ with every
system $\mathrm{A}$. Deterministic states are density operators,
which means trace-class positive operators with trace equal to 1.
A non-deterministic preparation-test is sometimes called \emph{quantum
information source}: it is a collection of trace-class positive operators
$\rho_{i}$, with $\mathrm{tr}\:\rho_{i}\leq1$. This is essentially
a random preparation: a state $\rho_{i}$ is prepared with a probability
given by $\mathrm{tr}\:\rho_{i}$. Therefore in quantum mechanics
$\mathsf{St}\left(\mathrm{A}\right)$ is the set of trace-class positive
operators with trace less than or equal to one.

An effect is, instead, represented by a positive operator $P$, with
$P\leq\mathbf{1}$, where $\mathbf{1}$ is the identity operator.
Observation-tests are then POVMs. The pairing between states and effect
is given by the trace: $\left(P\middle|\rho\right)=\mathrm{tr}\:P\rho$.
In quantum mechanics there is only one deterministic effect: the identity
$\mathbf{1}$. This is not a coincidence, but it follows from Causality
(see section~\ref{sec:Causality}).
\end{example}

According to definition~\ref{def:states}, states and effects are
in fact real-valued functions. As a consequence we can take linear
combinations of them with real coefficients; in other words they span
real vector spaces. Let $\mathsf{St}_{\mathbb{R}}\left(\mathrm{A}\right)$
be the vector space spanned by states, and let $\mathsf{Eff}_{\mathbb{R}}\left(\mathrm{A}\right)$
be the vector space spanned by effects. These vector spaces can be
finite- or infinite-dimensional. In our presentation, to avoid mathematical
subtleties and simplify the treatment, we will assume that they are
finite-dimensional. Clearly, $\mathsf{Eff}_{\mathbb{R}}\left(\mathrm{A}\right)$
is the dual vector space of $\mathsf{St}_{\mathbb{R}}\left(\mathrm{A}\right)$
and $\mathsf{St}_{\mathbb{R}}\left(\mathrm{A}\right)$ is the dual
vector space of $\mathsf{Eff}_{\mathbb{R}}\left(\mathrm{A}\right)$.
For finite-dimensional vector spaces, we have $\dim\mathsf{St}_{\mathbb{R}}\left(\mathrm{A}\right)=\dim\mathsf{Eff}_{\mathbb{R}}\left(\mathrm{A}\right)$.

We can extend the notion of tomography to the full vector space $\mathsf{St}_{\mathbb{R}}\left(\mathrm{A}\right)$:
$\xi,\eta\in\mathsf{St}_{\mathbb{R}}\left(\mathrm{A}\right)$ are
equal if and only if, for every $a\in\mathsf{Eff}\left(\mathrm{A}\right)$
(or for every $X\in\mathsf{Eff}_{\mathbb{R}}\left(\mathrm{A}\right)$)
we have $\left(a\middle|\xi\right)=\left(a\middle|\eta\right)$ (or
$\left(X\middle|\xi\right)=\left(X\middle|\eta\right)$). A similar
fact holds for the vector space $\mathsf{Eff}_{\mathbb{R}}\left(\mathrm{A}\right)$.
The fact that these vector spaces have finite dimension means that
a finite set of effects (resp.\ states) is sufficient to do tomography
on states (resp.\ effects).
\begin{rem}
\label{rem:local tomography}Consider the states of a bipartite system
$\mathrm{AB}$. Clearly to do tomography one takes the effects of
$\mathrm{AB}$. Some of them will be of the product form, i.e.\ $a\otimes b$,
where $a\in\mathsf{Eff}\left(\mathrm{A}\right)$ and $b\in\mathsf{Eff}\left(\mathrm{B}\right)$,
some others will not. If considering only product effects is enough
to do tomography on all bipartite states, we say that the theory satisfies
Local Tomography \cite{Araki-real,Bergia-local-tomography,Hardy-informational-1,Chiribella-purification,Masanes-physical-derivation,Hardy-informational-2,Masanes+all}.
Quantum theory on complex Hilbert spaces satisfies Local Tomography,
but quantum theory on real Hilbert spaces does \emph{not} \cite{Stuckelberg,Araki-real,Wootters-real,Hardy-real}.
If a theory satisfies Local Tomography, we have $\mathsf{St}_{\mathbb{R}}\left(\mathrm{AB}\right)=\mathsf{St}_{\mathbb{R}}\left(\mathrm{A}\right)\otimes\mathsf{St}_{\mathbb{R}}\left(\mathrm{B}\right)$
\cite{Chiribella-purification}. In this case, product states are
enough to characterise all states of a composite system, because every
state of the composite system can be written as a linear combination
of product states. In other words, there are no ``genuinely new''
states arising when systems are composed. In the following we will
\emph{not} assume Local Tomography as an axiom, in fact we will provide
concrete examples of theories that violate it.
\end{rem}

We can also take linear combinations with non-negative coefficients,
they are called \emph{conical combinations}. Using conical combinations,
states (resp.\ effects) span a convex cone, the cone of states $\mathsf{St}_{+}\left(\mathrm{A}\right)$
(resp.\ the cone of effects $\mathsf{Eff}_{+}\left(\mathrm{A}\right)$).
These two cones $K$ are \emph{proper} by construction: it means that
\begin{enumerate}
\item $0\in K$;
\item let $\xi\neq0$; if $\xi\in\mathsf{St}_{+}\left(\mathrm{A}\right)$,
then $-\xi\notin\mathsf{St}_{+}\left(\mathrm{A}\right)$;
\item for every vector $\xi\in V$, there exist $\xi_{+},\xi_{-}\in K$
such that $\xi=\xi_{+}-\xi_{-}$.
\end{enumerate}
Here $K$ denotes the cones $\mathsf{St}_{+}\left(\mathrm{A}\right)$
and $\mathsf{Eff}_{+}\left(\mathrm{A}\right)$, and $V$ the corresponding
vector spaces $\mathsf{St}_{\mathbb{R}}\left(\mathrm{A}\right)$ and
$\mathsf{Eff}_{\mathbb{R}}\left(\mathrm{A}\right)$. Once we define
the cone of states, we can consider the \emph{dual cone} $\mathsf{St}_{+}^{*}\left(\mathrm{A}\right)$,
defined as the set of linear functionals $X$ such that $\left(X\middle|\xi\right)\geq0$
for all $\xi\in\mathsf{St}_{+}\left(\mathrm{A}\right)$. Clearly,
the elements of the cone of effects are in the dual cone, because
a conical combination of effects yields a non-negative number when
applied to a state, so $\mathsf{Eff}_{+}\left(\mathrm{A}\right)\subseteq\mathsf{St}_{+}^{*}\left(\mathrm{A}\right)$.
\begin{defn}[No-restriction hypothesis \cite{Chiribella-purification}]
We say that a theory is \emph{non-restricted}, or that it satisfies
the no-restriction hypothesis, if $\mathsf{Eff}_{+}\left(\mathrm{A}\right)=\mathsf{St}_{+}^{*}\left(\mathrm{A}\right)$
for every system.
\end{defn}

While this may look just a statement of mathematical interest, it
has some important physical implications. Consider the subset of $\mathsf{St}_{+}^{*}\left(\mathrm{A}\right)$
made of linear functionals $f$ such that $\left(f\middle|\rho\right)\in\left[0,1\right]$
for all states $\rho$. In a non-restricted theory, these elements
$f$ are also valid effects. In other words, the no-restriction hypothesis
states that every mathematically allowed effect is also a physical
effect. Clearly the no-restriction hypothesis concerns more the mathematical
structure of the theory than its operational one. Indeed, it is the
duty of the physical theory to specify what objects are to be considered
physical effects, even if they are admissible in principle, based
on their mathematical properties. For this reason, the no-restriction
hypothesis has been questioned various times on the basis of its lack
of operational motivation \cite{Chiribella-purification,Janotta-Lal,No-restriction2}.
Moreover, recently it has been shown that theories with almost quantum
correlations \cite{Almost-quantum} violate it \cite{Almost-no-restriction}.
\begin{example}
The trivial system of every theory has remarkable properties. We know
that states are in $\left[0,1\right]$, therefore $\mathsf{St}_{\mathbb{R}}\left(\mathrm{I}\right)=\mathbb{R}$.
The cone $\mathsf{St}_{+}\left(\mathrm{I}\right)$ is the set of non-negative
numbers. Similarly, $\mathsf{Eff}_{\mathbb{R}}\left(\mathrm{I}\right)=\mathbb{R}$,
and $\mathsf{Eff}_{+}\left(\mathrm{I}\right)=\mathbb{R}_{\geq0}$.
\end{example}

\begin{example}
Let us see what $\mathsf{St}_{\mathbb{R}}\left(\mathrm{A}\right)$
and $\mathsf{Eff}_{\mathbb{R}}\left(\mathrm{A}\right)$ are in finite-dimensional
quantum theory, namely when the Hilbert space is finite-dimensional
($\mathcal{H}\approx\mathbb{C}^{n}$, for $n\geq2$). $\mathsf{St}_{\mathbb{R}}\left(\mathrm{A}\right)$
is the vector space of hermitian matrices of order $n$. It is a real
vector space with dimension $n^{2}$. $\mathsf{Eff}_{\mathbb{R}}\left(\mathrm{A}\right)$
is again the vector space of hermitian matrices of order $n$.

Instead, the cones $\mathsf{St}_{+}\left(\mathrm{A}\right)$ and $\mathsf{Eff}_{+}\left(\mathrm{A}\right)$
are both the convex cone of positive semidefinite matrices. In quantum
theory, the no-restriction hypothesis is valid: $\mathsf{Eff}_{+}\left(\mathrm{A}\right)=\mathsf{St}_{+}^{*}\left(\mathrm{A}\right)$.
\end{example}

Now it is time to move our attention to the equivalence classes of
tomographically indistinguishable events for general tests.

First of all, note that every event $\mathcal{C}_{i}$ from $\mathrm{A}$
to $\mathrm{B}$ induces a linear operator $\widehat{\mathcal{C}}_{i}$
from $\mathsf{St}_{\mathbb{R}}\left(\mathrm{A}\right)$ to $\mathsf{St}_{\mathbb{R}}\left(\mathrm{B}\right)$.
We define $\widehat{\mathcal{C}}_{i}$ via its action on the spanning
set of states $\mathsf{St}\left(\mathrm{A}\right)$, as follows: 
\begin{equation}
\widehat{\mathcal{C}}_{i}:\rho_{\mathrm{A}}\mapsto\mathcal{C}_{i}\rho_{\mathrm{A}},\label{eq:linear extension}
\end{equation}
for every $\rho_{\mathrm{A}}\in\mathsf{St}\left(\mathrm{A}\right)$.
Note that $\mathcal{C}_{i}\rho_{\mathrm{A}}$ is a state of $\mathrm{B}$.
We want to check whether the linear extension of~\eqref{eq:linear extension}
is well defined. If $\xi\in\mathsf{St}_{\mathbb{R}}\left(\mathrm{A}\right)$,
we can express it as a linear combination of states, $\xi=\sum_{j}\lambda_{j}\rho_{j}$,
where $\lambda_{j}\in\mathbb{R}$ for every $j$. The obvious linear
extension of~\eqref{eq:linear extension} is $\widehat{\mathcal{C}}_{i}\xi:=\sum_{j}\lambda_{j}\widehat{\mathcal{C}}_{i}\rho_{j}$.
The problem is that, in general, $\xi$ does \emph{not} have a unique
expression as a linear combination of states. Suppose that $\xi=\sum_{j}\lambda_{j}\rho_{j}$
and $\xi=\sum_{j}\mu_{j}\sigma_{j}$, where $\mu_{j}\in\mathbb{R}$
for every $j$. Our extension $\widehat{\mathcal{C}}_{i}$ is well-defined
if and only if $\sum_{j}\lambda_{j}\widehat{\mathcal{C}}_{i}\rho_{j}=\sum_{j}\mu_{j}\widehat{\mathcal{C}}_{i}\sigma_{j}$
whenever $\sum_{j}\lambda_{j}\rho_{j}=\sum_{j}\mu_{j}\sigma_{j}$.
Using the linearity of summations, this problem is equivalent to checking
if $\sum_{j}\lambda_{j}\widehat{\mathcal{C}}_{i}\rho_{j}=0$ whenever
$\sum_{j}\lambda_{j}\rho_{j}=0$.

We have $\sum_{j}\lambda_{j}\rho_{j}=0$ if and only if $\sum_{j}\lambda_{j}\left(a\middle|\rho_{j}\right)=0$
for every effect $a\in\mathsf{Eff}\left(\mathrm{A}\right)$. Let $b$
be an arbitrary effect on $\mathrm{B}$. Then $b\widehat{\mathcal{C}}_{i}$
is an effect on $\mathrm{A}$, therefore $\sum_{j}\lambda_{j}\left(b\middle|\widehat{\mathcal{C}}_{i}\middle|\rho_{j}\right)=0$.
Since $b$ is arbitrary, this implies that $\sum_{j}\lambda_{j}\widehat{\mathcal{C}}_{i}\rho_{j}=0$.
This proves that the linear extension is well-defined.

Likewise, for every system $\mathrm{S}$, the event $\mathcal{C}_{i}\otimes\mathcal{I}_{\mathrm{S}}$
from $\mathrm{AS}$ to $\mathrm{BS}$ will induce a linear operator
from $\mathsf{St}_{\mathbb{R}}\left(\mathrm{AS}\right)$ to $\mathsf{St}_{\mathbb{R}}\left(\mathrm{BS}\right)$.
Two events $\mathcal{C}_{i}$ and $\mathcal{C}_{i}'$ from $\mathrm{A}$
to $\mathrm{B}$ are \emph{tomographically indistinguishable} if,
for every system $\mathrm{S}$, the linear operators associated with
$\mathcal{C}_{i}\otimes\mathcal{I}_{\mathrm{S}}$ and $\mathcal{C}_{i}'\otimes\mathcal{I}_{\mathrm{S}}$
are the same.

In other words, for every system $\mathrm{S}$, and every state $\rho\in\mathsf{St}\left(\mathrm{AS}\right)$,
\[ \begin{aligned}\Qcircuit @C=1em @R=.7em @!R { & \multiprepareC{1}{\rho} & \qw \poloFantasmaCn{\rA} & \gate{\cC_i} & \qw \poloFantasmaCn{\rB}&\qw \\ & \pureghost{\rho} & \qw \poloFantasmaCn{\rS} & \qw &\qw&\qw}\end{aligned} ~= \!\!\!\! \begin{aligned}\Qcircuit @C=1em @R=.7em @!R { & \multiprepareC{1}{\rho} & \qw \poloFantasmaCn{\rA} & \gate{\cC'_i} & \qw \poloFantasmaCn{\rB} &\qw \\ & \pureghost{\rho} & \qw \poloFantasmaCn{\rS} & \qw &\qw &\qw}\end{aligned}~.\]Recalling
the definition of equal states, $\mathcal{C}_{i}$ and $\mathcal{C}_{i}'$
are tomographically indistinguishable if and only if, for every system
$\mathrm{S}$, every state $\rho\in\mathsf{St}\left(\mathrm{AS}\right)$,
and every effect $E\in\mathsf{St}\left(\mathrm{BS}\right)$ one has\[ \begin{aligned}\Qcircuit @C=1em @R=.7em @!R { & \multiprepareC{1}{\rho} & \qw \poloFantasmaCn{\rA} & \gate{\cC_i} & \qw \poloFantasmaCn{\rB}&\multimeasureD{1}{E} \\ & \pureghost{\rho} & \qw \poloFantasmaCn{\rS} &\qw &\qw &\ghost{E}}\end{aligned} ~= \!\!\!\! \begin{aligned}\Qcircuit @C=1em @R=.7em @!R { & \multiprepareC{1}{\rho} & \qw \poloFantasmaCn{\rA} & \gate{\cC'_i} & \qw \poloFantasmaCn{\rB} &\multimeasureD{1}{E} \\ & \pureghost{\rho} & \qw \poloFantasmaCn{\rS} & \qw &\qw &\ghost{E}}\end{aligned}~.\]

Again, we take the quotient set of events modulo the indistinguishability
relation.
\begin{defn}
Equivalence classes of indistinguishable events from $\mathrm{A}$
to $\mathrm{B}$ are called \emph{transformations} from $\mathrm{A}$
to $\mathrm{B}$.
\end{defn}

The set of transformations from $\mathrm{A}$ to $\mathrm{B}$ is
denoted by $\mathsf{Transf}\left(\mathrm{A},\mathrm{B}\right)$. The
set of transformations from $\mathrm{A}$ to itself is denoted simply
by $\mathsf{Transf}\left(\mathrm{A}\right)$.
\begin{rem}
One may wonder why we have given such a definition of tomographically
indistinguishable events, involving an ancillary system $\mathrm{S}$.
The most obvious way of defining tomographic indistinguishability
would have been to say that $\mathcal{C}_{i}$ and $\mathcal{C}_{i}'$
are indistinguishable if $\mathcal{C}_{i}\rho=\mathcal{C}_{i}'\rho$
for every $\rho\in\mathsf{St}\left(\mathrm{A}\right)$. Actually,
this is not enough for OPTs. Indeed, Wootters provided a counterexample
concerning quantum mechanics with real Hilbert space \cite{Wootters-real}.
It can be shown that there exist two events that are locally indistinguishable,
but if we add an ancillary system, they produce distinct output states.
The condition $\mathcal{C}_{i}\rho=\mathcal{C}_{i}'\rho$ for every
$\rho\in\mathsf{St}\left(\mathrm{A}\right)$ is sufficient to identify
indistinguishable events if the theory satisfies Local Tomography
(see \cite{Chiribella-purification} for further details).
\end{rem}

Again, we will assume that equivalence classes have been taken from
the very beginning, so we will consider tests as collections of transformations.
Transformations span a real vector space, denoted by $\mathsf{Transf}_{\mathbb{R}}\left(\mathrm{A},\mathrm{B}\right)$.
\begin{defn}
A deterministic transformation $\mathcal{C}\in\mathsf{Transf}\left(\mathrm{A},\mathrm{B}\right)$
is called \emph{channel}.
\end{defn}

We will denote the set of channels from $\mathrm{A}$ to $\mathrm{B}$
by $\mathsf{DetTransf}\left(\mathrm{A},\mathrm{B}\right)$ (or by
$\mathsf{DetTransf}\left(\mathrm{A}\right)$ if $\mathrm{B}=\mathrm{A}$).

Among all possible channels, reversible ones are particularly important.
\begin{defn}
A channel $\mathcal{U}\in\mathsf{Transf}\left(\mathrm{A},\mathrm{B}\right)$
is said \emph{reversible }if it is invertible, namely if there is
another channel $\mathcal{U}^{-1}\in\mathsf{Transf}\left(\mathrm{B},\mathrm{A}\right)$,
called the \emph{inverse}, such that $\mathcal{U}^{-1}\mathcal{U}=\mathcal{I}_{\mathrm{A}}$
and $\mathcal{U}\mathcal{U}^{-1}=\mathcal{I}_{\mathrm{B}}$.
\end{defn}

Clearly, reversible channels on $\mathrm{A}$ form a group, denoted
$G_{\mathrm{A}}$. Now, we can rephrase the definition of operationally
equivalent systems: two systems $\mathrm{A}$ and $\mathrm{A}'$ are
operationally equivalent if there exists a reversible channel from
$\mathrm{A}$ to $\mathrm{A}'$.

Before moving on, let us see what transformations, channels, and reversible
channels are in quantum theory.
\begin{example}
A test in quantum theory from $\mathcal{H}_{\mathrm{A}}$ to $\mathcal{H}_{\mathrm{B}}$
is a collection of completely positive, trace non-increasing linear
maps $\left\{ \mathcal{C}_{k}\right\} $, called quantum operations
\cite{Nielsen-Chuang,Preskill}, such that $\sum_{k}\mathcal{C}_{k}$
is a trace-preserving map. Each quantum operation maps linear operators
on $\mathcal{H}_{\mathrm{A}}$ into linear operators on $\mathcal{H}_{\mathrm{B}}$.
A test is a quantum instrument, namely a collection of quantum operations
\cite{Nielsen-Chuang}.

A channel is a completely positive trace-preserving map from linear
operators on $\mathcal{H}_{\mathrm{A}}$ to linear operators on $\mathcal{H}_{\mathrm{B}}$.

Finally, reversible channels are unitary channels. They act on $\mathrm{A}$
as $\mathcal{U}\left(\rho\right)=U\rho U^{\dagger}$, where $U$ is
a unitary operator. It follows that two systems are operationally
equivalent if and only if their Hilbert spaces have the same dimension,
otherwise it is not possible to define unitary operators from one
space to the other.
\end{example}

\subsection{Purity and coarse-graining}

Even in an abstract probabilistic theory, it makes sense to define
pure and mixed states, or, more generally, pure and non-pure transformations.
The idea behind it is \emph{coarse-graining}. Let us clarify this
idea with the example of the roll of a die \cite{Chiribella-educational}.
In this random experiment, there are some atomic events, which cannot
be decomposed further: they are the numbers from 1 to 6. So, an atomic
event is, for example, ``the outcome of the roll is 2''. However,
we can consider the event ``the outcome of the roll is odd''. This
event is the union of the atomic events relative to 1, 3, 5. We have
just done a coarse-graining: we joined together some outcomes, neglecting
some information. Indeed, if we know only that the outcome was ``odd'',
we cannot retrieve any information about which number actually came
out. In this vein, we give the following definition.
\begin{defn}
A test $\left\{ \mathcal{B}_{j}\right\} _{j\in Y}$ is a \emph{coarse-graining}
of the test $\left\{ \mathcal{A}_{i}\right\} _{i\in X}$ if there
is a partition $\left\{ X_{j}\right\} _{j\in Y}$ of $X$ such that
$\mathcal{B}_{j}=\sum_{i\in X_{j}}\mathcal{A}_{i}$. In this case,
we say that $\left\{ \mathcal{A}_{i}\right\} _{i\in X}$ is a \emph{refinement}
of $\left\{ \mathcal{B}_{j}\right\} _{j\in Y}$.
\end{defn}

As we can see, this definition gives a precise characterisation of
what we mean by ``joining together outcomes''. A test that refines
another extracts more information than the other. It is clear that
if $\left\{ \mathcal{B}_{j}\right\} _{i\in Y}$ is a coarse-graining
of the test $\left\{ \mathcal{A}_{i}\right\} _{i\in X}$, it has fewer
outcomes. This concept is easily explained in fig.~\ref{fig:coarse-graining}.
\begin{figure}
\begin{centering}
\includegraphics[scale=0.8]{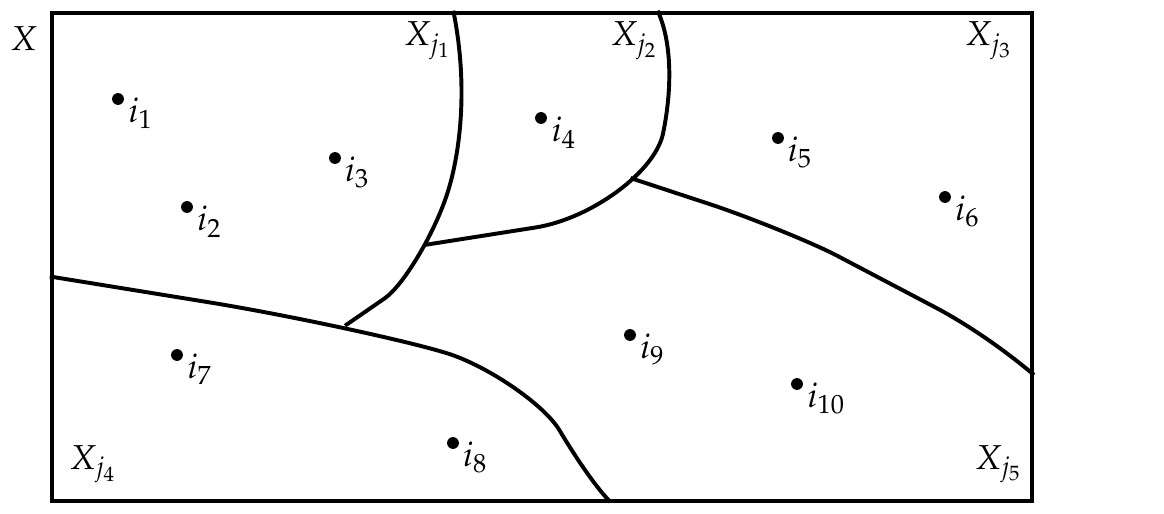}
\par\end{centering}
\caption{\label{fig:coarse-graining}The outcome set $X$ of the test $\left\{ \mathcal{A}_{i}\right\} _{i\in X}$
has 10 outcomes. To perform a coarse-graining of it, we lump together
some of its outcomes, relabelling them as a new outcome. For example,
the outcomes $i_{1}$, $i_{2}$, and $i_{3}$ are relabelled as $j_{1}$.
This gives rise to a partition $\left\{ X_{j}\right\} _{j\in Y}$
of $X$. We associate a new transformation with each set in the partition,
such that it is the sum of the transformations associated with the
outcomes contained in that set. Thus $\mathcal{B}_{j_{1}}=\mathcal{A}_{i_{1}}+\mathcal{A}_{i_{2}}+\mathcal{A}_{i_{3}}$.
The new test $\left\{ \mathcal{B}_{j}\right\} _{j\in Y}$ has 5 outcomes.}
\end{figure}

By performing a coarse-graining, we can associate a deterministic
transformation with every test. Indeed, let us take a test $\left\{ \mathcal{C}_{i}\right\} _{i\in X}$
from $\mathrm{A}$ to $\mathrm{B}$ and let us sum over \emph{all}
the outcomes $i\in X$. Then we obtain the channel $\mathcal{C}=\sum_{i\in X}\mathcal{C}_{i}$
from $\mathrm{A}$ to $\mathrm{B}$, which is called the channel associated
with the test $\left\{ \mathcal{C}_{i}\right\} _{i\in X}$. Similarly,
we can obtain a deterministic state by summing all the states in a
preparation-test; and we can get a deterministic effect by summing
all the effects in an observation-test.

We can consider also refinements of single transformations.
\begin{defn}
Let $\mathcal{C}$ be a transformation from system $\mathrm{A}$ to
system $\mathrm{B}$. Consider a test $\left\{ \mathcal{D}_{i}\right\} _{i\in X}$
from system $\mathrm{A}$ to system $\mathrm{B}$ and a subset $X_{0}\subseteq X$
such that $\mathcal{C}=\sum_{i\in X_{0}}\mathcal{D}_{i}$. Each transformation
$\mathcal{D}_{i}$, for $i\in X_{0}$ is a \emph{refinement} of $\mathcal{C}$.
\end{defn}

We can always obtain a refinement of a transformation $\mathcal{T}$
by taking a subset of a test, made of $\left\{ p_{i}\mathcal{T}\right\} _{i\in X_{0}}$,
with the property that $p_{i}\in\left(0,1\right]$ for every $i\in X_{0}$,
and $\sum_{i\in X_{0}}p_{i}=1$. Some transformations cannot be refined
further, and they admit only trivial refinements of this form above.
\begin{defn}
A transformation $\mathcal{T}$ is \emph{pure} if it has only trivial
refinements.
\end{defn}

In other words, it is not possible to extract further information
from a pure transformation.

Clearly, this definition applies also to states, which are particular
transformations from the trivial system $\mathrm{I}$ to a system
$\mathrm{A}$. We will denote the set of pure states of system $\mathrm{A}$
by $\mathsf{PurSt}\left(\mathrm{A}\right)$. The non-pure states are
called \emph{mixed}. In this way, a pure state represents maximal
knowledge about the preparation of a system, whereas a mixed state
expresses some lack of information. Similarly, we will denote the
set of pure effects of system $\mathrm{A}$ by $\mathsf{PurEff}\left(\mathrm{A}\right)$.

Let us see some examples in quantum theory.
\begin{example}
If we diagonalise a density operator $\rho=\sum_{j}p_{j}\ket{\psi_{j}}\bra{\psi_{j}}$,
each term $p_{j}\ket{\psi_{j}}\bra{\psi_{j}}$ is a refinement\footnote{With a little abuse of terminology we also say that $\ket{\psi_{j}}\bra{\psi_{j}}$
is a refinement of $\rho$.} of $\rho$. More generally, a refinement of $\rho$ is a state $\sigma$
such that $\sigma\leq\rho$. This means that the support\footnote{Recall the support of a matrix is the orthogonal complement of its
kernel.} of $\sigma$ is contained in the support of $\rho$ (see \cite[appendix A.1]{Scandolo-thesis}
for a proof). A pure state is a density operator $\lambda\ket{\psi}\bra{\psi}$,
with $\lambda\in\left(0,1\right]$, viz.\ proportional to a rank-one
projector.

In quantum mechanics, we can associate Kraus operators $\left\{ M_{k}\right\} $
with every quantum operation $\mathcal{C}$, such that $\mathcal{C}\left(\rho\right)=\sum_{k}M_{k}\rho M_{k}^{\dagger}$,
for every state $\rho$ \cite{Nielsen-Chuang,Preskill}. A quantum
operation is pure if and only it has only one Kraus operator.
\end{example}

Reversible channels are \emph{not} pure in general, unless some axioms
are imposed on the theory (see section~\ref{sec:The-axioms-and}).
\begin{example}
Consider finite-dimensional classical theory. Here states are vectors
of non-negative numbers, whose entries sum to a number less than or
equal to 1. In symbols, states are vectors $\mathbf{p}\in\mathbb{R}^{d}$,
such that $p_{i}\geq0$ for all $i=1,\ldots,d$, and $\sum_{i=1}^{d}p_{i}\leq1$.
The cone of states is spanned by the $d$ pure states $\delta_{i}$,
where $\delta_{i}$ denotes the vector with all zero entries, except
the $i$th, which is 1. The identity channel can be written as $\mathcal{I}=\sum_{i=1}^{d}\left|\delta_{i}\right)\left(\delta_{i}^{\dagger}\right|$,
where the $\delta_{i}^{\dagger}$'s are the effects such that $\left(\delta_{i}^{\dagger}\middle|\delta_{j}\right)=\delta_{ij}$,
where $\delta_{ij}$ is Kronecker delta. Indeed, every element of
the cone of states can be written as $\xi=\sum_{i=1}^{d}\lambda_{i}\delta_{i}$,
where $\lambda_{i}\geq0$, and
\[
\mathcal{I}\xi=\sum_{i=1}^{d}\sum_{j=1}^{d}\lambda_{j}\left|\delta_{i}\right)\left(\delta_{i}^{\dagger}\middle|\delta_{j}\right)=\sum_{i=1}^{d}\lambda_{i}\delta_{i}=\xi.
\]
Since classical theory satisfies Local Tomography, this is enough
to conclude that $\sum_{i=1}^{d}\left|\delta_{i}\right)\left(\delta_{i}^{\dagger}\right|$
is the identity channel. This means that the identity channel is the
coarse-graining of the test $\left\{ \left|\delta_{i}\right)\left(\delta_{i}^{\dagger}\right|\right\} _{i=1}^{d}$.
Therefore the identity channel in classical theory, albeit reversible,
is \emph{not} pure.
\end{example}

However, although not necessarily pure themselves, reversible channels
send pure states into pure states \cite{Chiribella-Scandolo-entanglement}.
They do not alter the ``purity'' of a state: they also map mixed
states into mixed states.
\begin{lem}
Let $\mathcal{U}$ be a reversible channel from $\mathrm{A}$ to $\mathrm{B}$.
Then $\psi\in\mathsf{St}\left(\mathrm{A}\right)$ is pure if and only
if $\mathcal{U}\psi\in\mathsf{St}\left(\mathrm{B}\right)$ is pure.
\end{lem}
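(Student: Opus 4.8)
The plan is to prove the equivalence in two directions, using the fact that $\mathcal{U}$ has an inverse channel $\mathcal{U}^{-1}$ and that channels act linearly on the relevant vector spaces.

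First I would prove the forward direction: if $\psi \in \mathsf{St}(\mathrm{A})$ is pure, then $\mathcal{U}\psi$ is pure. Suppose $\mathcal{U}\psi$ admits a non-trivial refinement, i.e.\ there is a test $\{\mathcal{D}_i\}_{i \in X}$ on the relevant systems and a subset $X_0 \subseteq X$ with $\mathcal{U}\psi = \sum_{i \in X_0} \mathcal{D}_i$, where the $\mathcal{D}_i$ are not all proportional to $\mathcal{U}\psi$. Applying $\mathcal{U}^{-1}$ on the left and using linearity of the induced operator, I get $\psi = \mathcal{U}^{-1}(\mathcal{U}\psi) = \sum_{i \in X_0} \mathcal{U}^{-1}\mathcal{D}_i$. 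The collection $\{\mathcal{U}^{-1}\mathcal{D}_i\}_{i \in X}$ is again a valid test (sequential composition of a test with a deterministic transformation, using Assumption on physicalisation / closure of tests under sequential composition), so each $\mathcal{U}^{-1}\mathcal{D}_i$ with $i \in X_0$ is a refinement of $\psi$. Since $\psi$ is pure, each such $\mathcal{U}^{-1}\mathcal{D}_i = p_i \psi$ for some $p_i \in (0,1]$. Applying $\mathcal{U}$ back gives $\mathcal{D}_i = p_i\, \mathcal{U}\psi$, contradicting non-triviality. Hence $\mathcal{U}\psi$ is pure.

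The reverse direction follows immediately by symmetry: $\mathcal{U}^{-1}$ is also a reversible channel (with inverse $\mathcal{U}$), so applying the forward direction to $\mathcal{U}^{-1}$ and the state $\mathcal{U}\psi$ shows that if $\mathcal{U}\psi$ is pure then $\psi = \mathcal{U}^{-1}(\mathcal{U}\psi)$ is pure. This gives the biconditional.

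The main obstacle — really the only subtle point — is making sure that $\{\mathcal{U}^{-1}\mathcal{D}_i\}_i$ genuinely counts as a refinement in the sense of the definition: one must check that it is a subset of an honest test from $\mathrm{I}$ to $\mathrm{A}$ and that the coarse-graining $\sum_{i \in X_0}\mathcal{U}^{-1}\mathcal{D}_i$ really equals $\psi$ at the level of transformations (not just on a spanning set of states with an ancilla). Both follow from the linearity of the operator induced by $\mathcal{U}^{-1}$ established just above in the excerpt (the well-definedness of the linear extension, applied also to $\mathcal{U}^{-1} \otimes \mathcal{I}_{\mathrm{S}}$), together with closure of tests under sequential composition; so the argument is essentially bookkeeping once that linearity is invoked. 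I would also remark that the same argument, run on the channel $\mathcal{U} \otimes \mathcal{I}_{\mathrm{S}}$, shows $\mathcal{U}$ preserves purity of transformations more generally, but for the stated lemma the state-level version suffices.
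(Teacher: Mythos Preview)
Your proof is correct and follows essentially the same approach as the paper: write a refinement of $\mathcal{U}\psi$, push it through $\mathcal{U}^{-1}$ to obtain a refinement of $\psi$, use purity of $\psi$ to conclude triviality, and push back through $\mathcal{U}$; then invoke symmetry via $\mathcal{U}^{-1}$ for the converse. You are slightly more explicit than the paper about why $\{\mathcal{U}^{-1}\mathcal{D}_i\}$ is a valid test, but the argument is otherwise identical.
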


\begin{proof}
Necessity. Let us write $\mathcal{U}\psi$ as a coarse-graining of
other states.
\begin{equation}
\mathcal{U}\psi=\sum_{i}\rho_{i}\label{eq:pure reversible}
\end{equation}
Let us show that each refinement $\rho_{i}$ of $\mathcal{U}\psi$
is trivial, that is $\rho_{i}=p_{i}\mathcal{U}\psi$, for some $p_{i}\in\left(0,1\right]$,
with $\sum_{i}p_{i}=1$. By applying $\mathcal{U}^{-1}$ to both sides
of eq.~\eqref{eq:pure reversible}, we have $\psi=\sum_{i}\mathcal{U}^{-1}\rho_{i}$.
Since $\psi$ is pure, each refinement $\mathcal{U}^{-1}\rho_{i}$
is trivial, namely 
\begin{equation}
\mathcal{U}^{-1}\rho_{i}=p_{i}\psi,\label{eq:pure reversible 2}
\end{equation}
for some $p_{i}\in\left(0,1\right]$, with $\sum_{i}p_{i}=1$. By
applying $\mathcal{U}$ to both sides of eq.~\eqref{eq:pure reversible 2},
we have $\rho_{i}=p_{i}\mathcal{U}\psi$. Since every refinement $\mathcal{U}\psi$
is trivial, $\mathcal{U}\psi$ is pure.

Sufficiency follows from necessity, by applying the reversible channel
$\mathcal{U}^{-1}$ to $\mathcal{U}\psi$, which is pure by hypothesis.
\end{proof}
A similar statement holds also for effects: $b\in\mathsf{Eff}\left(\mathrm{B}\right)$
is pure if and only if $b\mathcal{U}\in\mathsf{Eff}\left(\mathrm{A}\right)$
is pure.

\subsection{Norms for states, effects, and transformations}

We can define a norm in the vector space of states $\mathsf{St}_{\mathbb{R}}\left(\mathrm{A}\right)$.
It is defined as follows \cite{Chiribella-purification}.
\begin{defn}
Let $\xi\in\mathsf{St}_{\mathbb{R}}\left(\mathrm{A}\right)$. The
\emph{operational norm} of $\xi$ is
\[
\left\Vert \xi\right\Vert :=\sup_{a\in\mathsf{Eff}\left(\mathrm{A}\right)}\left(a\middle|\xi\right)-\inf_{a\in\mathsf{Eff}\left(\mathrm{A}\right)}\left(a\middle|\xi\right).
\]
\end{defn}

Let us show that it is indeed a norm. First of all, note that $\left\Vert \xi\right\Vert \geq0$
because $\sup_{a\in\mathsf{Eff}\left(\mathrm{A}\right)}\left(a\middle|\xi\right)\geq\inf_{a\in\mathsf{Eff}\left(\mathrm{A}\right)}\left(a\middle|\xi\right)$.
Then, let us show that $\left\Vert \xi\right\Vert =0$ only if $\xi=0$.
If $\left\Vert \xi\right\Vert =0$, then $\sup_{a\in\mathsf{Eff}\left(\mathrm{A}\right)}\left(a\middle|\xi\right)=\inf_{a\in\mathsf{Eff}\left(\mathrm{A}\right)}\left(a\middle|\xi\right)$.
Now, we have that $\sup_{a\in\mathsf{Eff}\left(\mathrm{A}\right)}\left(a\middle|\xi\right)\geq0$,
and $\inf_{a\in\mathsf{Eff}\left(\mathrm{A}\right)}\left(a\middle|\xi\right)\leq0$.
Indeed, write $\xi$ as $\xi=\xi_{+}-\xi_{-}$, where $\xi_{+},\xi_{-}\in\mathsf{St}_{+}\left(\mathrm{A}\right)$.
Then $\left(a\middle|\xi\right)=\left(a\middle|\xi_{+}\right)-\left(a\middle|\xi_{-}\right)$,
where $\left(a\middle|\xi_{+}\right)$ and $\left(a\middle|\xi_{-}\right)$
are both non-negative. As to the supremum, note that $\left(a\middle|\xi\right)\geq\left(a\middle|-\xi_{-}\right)$,
therefore
\[
\sup_{a\in\mathsf{Eff}\left(\mathrm{A}\right)}\left(a\middle|\xi\right)\geq\sup_{a\in\mathsf{Eff}\left(\mathrm{A}\right)}\left(a\middle|-\xi_{-}\right)=0.
\]
As to the infimum, note that $\left(a\middle|\xi\right)\leq\left(a\middle|\xi_{+}\right)$,
whence
\[
\inf_{a\in\mathsf{Eff}\left(\mathrm{A}\right)}\left(a\middle|\xi\right)\leq\inf_{a\in\mathsf{Eff}\left(\mathrm{A}\right)}\left(a\middle|\xi_{+}\right)=0.
\]
Then the only possibility of having $\sup_{a\in\mathsf{Eff}\left(\mathrm{A}\right)}\left(a\middle|\xi\right)=\inf_{a\in\mathsf{Eff}\left(\mathrm{A}\right)}\left(a\middle|\xi\right)$
is when
\begin{equation}
\sup_{a\in\mathsf{Eff}\left(\mathrm{A}\right)}\left(a\middle|\xi\right)=\inf_{a\in\mathsf{Eff}\left(\mathrm{A}\right)}\left(a\middle|\xi\right)=0.\label{eq:sup=00003Dinf}
\end{equation}
Since
\[
\sup_{a\in\mathsf{Eff}\left(\mathrm{A}\right)}\left(a\middle|\xi\right)\geq\left(a\middle|\xi\right)\geq\inf_{a\in\mathsf{Eff}\left(\mathrm{A}\right)}\left(a\middle|\xi\right),
\]
eq.~\eqref{eq:sup=00003Dinf} implies $\left(a\middle|\xi\right)=0$
for every effect $a$, so $\xi=0$.

Let us prove that $\left\Vert \lambda\xi\right\Vert =\left|\lambda\right|\left\Vert \xi\right\Vert $
for every $\lambda\in\mathbb{R}$, and every $\xi$. Let $\lambda\geq0$.
Then $\sup_{a\in\mathsf{Eff}\left(\mathrm{A}\right)}\left(a\middle|\lambda\xi\right)=\lambda\sup_{a\in\mathsf{Eff}\left(\mathrm{A}\right)}\left(a\middle|\xi\right)$,
and similarly $\inf_{a\in\mathsf{Eff}\left(\mathrm{A}\right)}\left(a\middle|\lambda\xi\right)=\lambda\inf_{a\in\mathsf{Eff}\left(\mathrm{A}\right)}\left(a\middle|\xi\right).$
Therefore
\[
\left\Vert \lambda\xi\right\Vert =\lambda\sup_{a\in\mathsf{Eff}\left(\mathrm{A}\right)}\left(a\middle|\xi\right)-\lambda\inf_{a\in\mathsf{Eff}\left(\mathrm{A}\right)}\left(a\middle|\xi\right)=\lambda\left\Vert \xi\right\Vert .
\]
Now, let $\lambda<0$. In this case $\sup_{a\in\mathsf{Eff}\left(\mathrm{A}\right)}\left(a\middle|\lambda\xi\right)=\lambda\inf_{a\in\mathsf{Eff}\left(\mathrm{A}\right)}\left(a\middle|\xi\right)$,
and similarly $\inf_{a\in\mathsf{Eff}\left(\mathrm{A}\right)}\left(a\middle|\lambda\xi\right)=\lambda\sup_{a\in\mathsf{Eff}\left(\mathrm{A}\right)}\left(a\middle|\xi\right)$.
Hence
\[
\left\Vert \lambda\xi\right\Vert =\lambda\inf_{a\in\mathsf{Eff}\left(\mathrm{A}\right)}\left(a\middle|\xi\right)-\lambda\sup_{a\in\mathsf{Eff}\left(\mathrm{A}\right)}\left(a\middle|\xi\right)=\left(-\lambda\right)\left\Vert \xi\right\Vert .
\]
In conclusion $\left\Vert \lambda\xi\right\Vert =\left|\lambda\right|\left\Vert \xi\right\Vert $.

Finally, we need to prove the triangle inequality: $\left\Vert \xi+\eta\right\Vert \leq\left\Vert \xi\right\Vert +\left\Vert \eta\right\Vert $,
for every $\xi$ and $\eta$. We have
\begin{equation}
\left\Vert \xi+\eta\right\Vert =\sup_{a\in\mathsf{Eff}\left(\mathrm{A}\right)}\left(a\middle|\xi+\eta\right)-\inf_{a\in\mathsf{Eff}\left(\mathrm{A}\right)}\left(a\middle|\xi+\eta\right).\label{eq:norm sum}
\end{equation}
Now
\begin{equation}
\sup_{a\in\mathsf{Eff}\left(\mathrm{A}\right)}\left(a\middle|\xi+\eta\right)\leq\sup_{a\in\mathsf{Eff}\left(\mathrm{A}\right)}\left(a\middle|\xi\right)+\sup_{a\in\mathsf{Eff}\left(\mathrm{A}\right)}\left(a\middle|\eta\right)\label{eq:sup sum}
\end{equation}
and
\begin{equation}
\inf_{a\in\mathsf{Eff}\left(\mathrm{A}\right)}\left(a\middle|\xi+\eta\right)\geq\inf_{a\in\mathsf{Eff}\left(\mathrm{A}\right)}\left(a\middle|\xi\right)+\inf_{a\in\mathsf{Eff}\left(\mathrm{A}\right)}\left(a\middle|\eta\right),\label{eq:inf sum}
\end{equation}
as it is not hard to show. Thus, putting eqs.~\eqref{eq:norm sum},
\eqref{eq:sup sum}, and \eqref{eq:inf sum} together, $\left\Vert \xi+\eta\right\Vert \leq\left\Vert \xi\right\Vert +\left\Vert \eta\right\Vert $.
This shows that the operational norm is indeed a norm. In subsection~\ref{subsec:Extending-the-diagonalisation}
we will show that this norm is in fact the 1-norm (or the trace norm
in quantum theory). For the trivial system, where $\mathsf{St}_{\mathbb{R}}\left(\mathrm{I}\right)=\mathbb{R}$,
we have that $\left\Vert \xi\right\Vert =\left|\xi\right|$. Indeed,
either $\xi\in\mathsf{St}_{+}\left(\mathrm{I}\right)$ or $-\xi\in\mathsf{St}_{+}\left(\mathrm{I}\right)$.
Therefore 
\[
\left\Vert \xi\right\Vert =\sup_{k\in\left[0,1\right]}k\left|\xi\right|=\left|\xi\right|,
\]
where we have used the fact that the effects of the trivial system
are elements in $\left[0,1\right]$.

Clearly, for a state, $\left\Vert \rho\right\Vert \leq1$, because
effects yield probabilities when applied to a state. Therefore the
set of states is bounded.
\begin{defn}
A state $\rho\in\mathsf{St}\left(\mathrm{A}\right)$ is \emph{normalised}
if $\left\Vert \rho\right\Vert =1$.
\end{defn}

We will denote the set of normalised states of system $\mathrm{A}$
by $\mathsf{St}_{1}\left(\mathrm{A}\right)$, and the set of normalised
\emph{pure} states by $\mathsf{PurSt}_{1}\left(\mathrm{A}\right)$.

This norm is non-decreasing under the action of transformations \cite[lemma 1]{Chiribella-purification}.
\begin{prop}
\label{prop:norm non-increasing}For every vector $\xi\in\mathsf{St}_{\mathbb{R}}\left(\mathrm{A}\right)$,
$\left\Vert \mathcal{T}\xi\right\Vert _{\mathrm{B}}\leq\left\Vert \xi\right\Vert _{\mathrm{A}}$,
where $\mathcal{T}$ is a transformation from $\mathrm{A}$ to $\mathrm{B}$.
If $\mathcal{T}$ is a reversible channel, then one has the equality.
\end{prop}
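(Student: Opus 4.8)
The plan is to unfold the definition of the operational norm on $\mathrm{B}$ and then push the transformation $\mathcal{T}$ onto the effects, exploiting the fact that precomposing an effect on $\mathrm{B}$ with $\mathcal{T}$ produces an effect on $\mathrm{A}$.

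First I would fix a test $\left\{ \mathcal{T}_{i}\right\} _{i\in X}$ containing $\mathcal{T}$ as one of its events, and, given $b\in\mathsf{Eff}\left(\mathrm{B}\right)$, an observation-test $\left\{ b_{j}\right\} _{j\in Y}$ containing $b$. Since tests are closed under sequential composition, $\left\{ b_{j}\mathcal{T}_{i}\right\} _{\left(i,j\right)\in X\times Y}$ is an observation-test on $\mathrm{A}$, so $b\mathcal{T}\in\mathsf{Eff}\left(\mathrm{A}\right)$; hence $\left\{ b\mathcal{T}:b\in\mathsf{Eff}\left(\mathrm{B}\right)\right\} \subseteq\mathsf{Eff}\left(\mathrm{A}\right)$. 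Restricting an optimisation to a subset of the feasible region can only lower a supremum and raise an infimum, so
\[
\sup_{b\in\mathsf{Eff}\left(\mathrm{B}\right)}\left(b\middle|\mathcal{T}\middle|\xi\right)\le\sup_{a\in\mathsf{Eff}\left(\mathrm{A}\right)}\left(a\middle|\xi\right),\qquad\inf_{b\in\mathsf{Eff}\left(\mathrm{B}\right)}\left(b\middle|\mathcal{T}\middle|\xi\right)\ge\inf_{a\in\mathsf{Eff}\left(\mathrm{A}\right)}\left(a\middle|\xi\right).
\]
Subtracting the second inequality from the first yields $\left\Vert \mathcal{T}\xi\right\Vert _{\mathrm{B}}\le\left\Vert \xi\right\Vert _{\mathrm{A}}$.

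For the reversible case I would apply the inequality just proved twice: once to $\mathcal{U}$ evaluated at $\xi$, giving $\left\Vert \mathcal{U}\xi\right\Vert _{\mathrm{B}}\le\left\Vert \xi\right\Vert _{\mathrm{A}}$; and once to $\mathcal{U}^{-1}\in\mathsf{Transf}\left(\mathrm{B},\mathrm{A}\right)$ evaluated at the vector $\mathcal{U}\xi\in\mathsf{St}_{\mathbb{R}}\left(\mathrm{B}\right)$, giving $\left\Vert \mathcal{U}^{-1}\mathcal{U}\xi\right\Vert _{\mathrm{A}}\le\left\Vert \mathcal{U}\xi\right\Vert _{\mathrm{B}}$. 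Since $\mathcal{U}^{-1}\mathcal{U}=\mathcal{I}_{\mathrm{A}}$, the left-hand side equals $\left\Vert \xi\right\Vert _{\mathrm{A}}$, so the two inequalities together force $\left\Vert \mathcal{U}\xi\right\Vert _{\mathrm{B}}=\left\Vert \xi\right\Vert _{\mathrm{A}}$.

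There is no genuine obstacle here; the only point that needs care is the bookkeeping on the two estimates — the minus sign in the definition of the operational norm is exactly what makes lowering the supremum and raising the infimum cooperate — together with checking that $b\mathcal{T}$ is literally a member of $\mathsf{Eff}\left(\mathrm{A}\right)$, i.e.\ an equivalence class of observation-events, which is why one invokes closure of tests under sequential composition rather than merely forming the composite linear functional abstractly.
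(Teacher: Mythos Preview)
Your proposal is correct and follows essentially the same approach as the paper's proof: both push $\mathcal{T}$ onto the effects via $b\mapsto b\mathcal{T}\in\mathsf{Eff}\left(\mathrm{A}\right)$, bound the sup and inf accordingly, and then handle the reversible case by applying the inequality to $\mathcal{U}^{-1}$ at $\mathcal{U}\xi$. Your version is slightly more explicit about why $b\mathcal{T}$ is a genuine effect, but otherwise the arguments coincide.
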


\begin{proof}
By definition
\[
\left\Vert \mathcal{T}\xi\right\Vert _{\mathrm{B}}=\sup_{b\in\mathsf{Eff}\left(\mathrm{B}\right)}\left(b\middle|\mathcal{T}\middle|\xi\right)-\inf_{b\in\mathsf{Eff}\left(\mathrm{B}\right)}\left(b\middle|\mathcal{T}\middle|\xi\right).
\]
Now, $b\mathcal{T}$ is an effect of $\mathrm{A}$, so $\sup_{b\in\mathsf{Eff}\left(\mathrm{B}\right)}\left(b\middle|\mathcal{T}\middle|\xi\right)\leq\sup_{a\in\mathsf{Eff}\left(\mathrm{A}\right)}\left(a\middle|\xi\right)$,
and $\inf_{b\in\mathsf{Eff}\left(\mathrm{B}\right)}\left(b\middle|\mathcal{T}\middle|\xi\right)\geq\inf_{a\in\mathsf{Eff}\left(\mathrm{A}\right)}\left(a\middle|\xi\right)$.
We conclude that
\[
\left\Vert \mathcal{T}\xi\right\Vert _{\mathrm{B}}\leq\sup_{a\in\mathsf{Eff}\left(\mathrm{A}\right)}\left(a\middle|\xi\right)-\inf_{a\in\mathsf{Eff}\left(\mathrm{A}\right)}\left(a\middle|\xi\right)=\left\Vert \xi\right\Vert _{\mathrm{A}}.
\]
If $\mathcal{T}$ is a reversible channel, then one has 
\[
\left\Vert \xi\right\Vert _{\mathrm{A}}=\left\Vert \mathcal{T}^{-1}\mathcal{T}\xi\right\Vert _{\mathrm{A}}\leq\left\Vert \mathcal{T}\xi\right\Vert _{\mathrm{B}},
\]
which means $\left\Vert \mathcal{T}\xi\right\Vert _{\mathrm{B}}=\left\Vert \xi\right\Vert _{\mathrm{A}}$.
\end{proof}
In particular, if we have $\xi=\rho-\sigma$, with $\rho$ and $\sigma$
two states, for every transformation
\[
\left\Vert \mathcal{T}\rho-\mathcal{T}\sigma\right\Vert \leq\left\Vert \rho-\sigma\right\Vert ,
\]
thus recovering the same statement as in quantum theory.

The norm of states is the starting point to give the definition of
a norm for the elements of the vector space of transformations.
\begin{defn}
Let $\Delta\in\mathsf{Transf}_{\mathbb{R}}\left(\mathrm{A},\mathrm{B}\right)$.
The\emph{ norm} of $\Delta$ is
\[
\left\Vert \Delta\right\Vert :=\sup_{\mathrm{S}}\sup_{\rho\in\mathsf{St}\left(\mathrm{AS}\right)}\left\Vert \left(\Delta\otimes\mathcal{I}_{\mathrm{S}}\right)\rho\right\Vert .
\]
\end{defn}

Let us show that this is indeed a norm. First of all, note that, for
every system $\mathrm{S}$, and every state $\rho\in\mathsf{St}\left(\mathrm{AS}\right)$,
\[
\sup_{\mathrm{S}}\sup_{\rho\in\mathsf{St}\left(\mathrm{AS}\right)}\left\Vert \left(\Delta\otimes\mathcal{I}_{\mathrm{S}}\right)\rho\right\Vert \geq\left\Vert \left(\Delta\otimes\mathcal{I}_{\mathrm{S}}\right)\rho\right\Vert \geq0,
\]
whence $\left\Vert \Delta\right\Vert \geq0$. Moreover, the same inequality
shows that if $\left\Vert \Delta\right\Vert =0$, then $\left\Vert \left(\Delta\otimes\mathcal{I}_{\mathrm{S}}\right)\rho\right\Vert =0$,
which in turn implies $\left(\Delta\otimes\mathcal{I}_{\mathrm{S}}\right)\rho=0$
for every system $\mathrm{S}$, and every state $\rho\in\mathsf{St}\left(\mathrm{AS}\right)$.
We conclude that $\Delta=0$.

Now, let us show that $\left\Vert \lambda\Delta\right\Vert =\left|\lambda\right|\left\Vert \Delta\right\Vert $,
for every $\lambda\in\mathbb{R}$. Now,
\[
\left\Vert \lambda\Delta\right\Vert =\sup_{\mathrm{S}}\sup_{\rho\in\mathsf{St}\left(\mathrm{AS}\right)}\left\Vert \left(\lambda\Delta\otimes\mathcal{I}_{\mathrm{S}}\right)\rho\right\Vert =\sup_{\mathrm{S}}\sup_{\rho\in\mathsf{St}\left(\mathrm{AS}\right)}\left|\lambda\right|\left\Vert \left(\Delta\otimes\mathcal{I}_{\mathrm{S}}\right)\rho\right\Vert =
\]
\[
=\left|\lambda\right|\sup_{\mathrm{S}}\sup_{\rho\in\mathsf{St}\left(\mathrm{AS}\right)}\left\Vert \left(\Delta\otimes\mathcal{I}_{\mathrm{S}}\right)\rho\right\Vert =\left|\lambda\right|\left\Vert \Delta\right\Vert .
\]
Finally, we prove the triangle inequality: $\left\Vert \Delta+\Xi\right\Vert \leq\left\Vert \Delta\right\Vert +\left\Vert \Xi\right\Vert $,
for every $\Delta,\Xi\in\mathsf{Transf}_{\mathbb{R}}\left(\mathrm{A},\mathrm{B}\right)$.
\[
\left\Vert \Delta+\Xi\right\Vert =\sup_{\mathrm{S}}\sup_{\rho\in\mathsf{St}\left(\mathrm{AS}\right)}\left\Vert \left[\left(\Delta+\Xi\right)\otimes\mathcal{I}_{\mathrm{S}}\right]\rho\right\Vert =\sup_{\mathrm{S}}\sup_{\rho\in\mathsf{St}\left(\mathrm{AS}\right)}\left\Vert \left(\Delta\otimes\mathcal{I}_{\mathrm{S}}\right)\rho+\left(\Xi\otimes\mathcal{I}_{\mathrm{S}}\right)\rho\right\Vert .
\]
Now, for every system $\mathrm{S}$, and every state $\rho\in\mathsf{St}\left(\mathrm{AS}\right)$,
we have
\[
\left\Vert \left(\Delta\otimes\mathcal{I}_{\mathrm{S}}\right)\rho+\left(\Xi\otimes\mathcal{I}_{\mathrm{S}}\right)\rho\right\Vert \leq\left\Vert \left(\Delta\otimes\mathcal{I}_{\mathrm{S}}\right)\rho\right\Vert +\left\Vert \left(\Xi\otimes\mathcal{I}_{\mathrm{S}}\right)\rho\right\Vert \leq
\]
\[
\leq\sup_{\mathrm{S}}\sup_{\rho\in\mathsf{St}\left(\mathrm{AS}\right)}\left\Vert \left(\Delta\otimes\mathcal{I}_{\mathrm{S}}\right)\rho\right\Vert +\sup_{\mathrm{S}}\sup_{\rho\in\mathsf{St}\left(\mathrm{AS}\right)}\left\Vert \left(\Xi\otimes\mathcal{I}_{\mathrm{S}}\right)\rho\right\Vert .
\]
In conclusion, $\left\Vert \Delta+\Xi\right\Vert \leq\left\Vert \Delta\right\Vert +\left\Vert \Xi\right\Vert $,
thus proving that we are dealing with an actual norm. Note that for
a generic transformation $\mathcal{T}\in\mathsf{Transf}\left(\mathrm{A},\mathrm{B}\right)$
we have
\[
\left\Vert \mathcal{T}\right\Vert =\sup_{\mathrm{S}}\sup_{\rho\in\mathsf{St}\left(\mathrm{AS}\right)}\left\Vert \left(\mathcal{T}\otimes\mathcal{I}_{\mathrm{S}}\right)\rho\right\Vert \leq1,
\]
because $\left(\mathcal{T}\otimes\mathcal{I}_{\mathrm{S}}\right)\rho$
is a physical state. Therefore the set of transformations is bounded.

When $\Delta\in\mathsf{Eff}_{\mathbb{R}}\left(\mathrm{A}\right)$
the norm of $\Delta$ takes a simpler form \cite[lemma 8]{Chiribella-purification}.
\begin{prop}
Let $X\in\mathsf{Eff}_{\mathbb{R}}\left(\mathrm{A}\right)$. Then
\[
\left\Vert X\right\Vert =\sup_{\rho\in\mathsf{St}\left(\mathrm{A}\right)}\left|\left(X\middle|\rho\right)\right|.
\]
\end{prop}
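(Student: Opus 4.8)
The plan is to prove the equality through the two inequalities $\left\Vert X\right\Vert \geq\sup_{\rho\in\mathsf{St}\left(\mathrm{A}\right)}\left|\left(X\middle|\rho\right)\right|$ and $\left\Vert X\right\Vert \leq\sup_{\rho\in\mathsf{St}\left(\mathrm{A}\right)}\left|\left(X\middle|\rho\right)\right|$, with essentially all the work in the second one. For the first inequality I would simply restrict the supremum in the definition of $\left\Vert X\right\Vert $ to the case where the ancilla $\mathrm{S}$ is the trivial system $\mathrm{I}$. Since $\mathrm{AI}=\mathrm{A}$ and $X\otimes\mathcal{I}_{\mathrm{I}}=X$, this gives $\left\Vert X\right\Vert \geq\sup_{\rho\in\mathsf{St}\left(\mathrm{A}\right)}\left\Vert \left(X\middle|\rho\right)\right\Vert $, and $\left(X\middle|\rho\right)$ is a vector in $\mathsf{St}_{\mathbb{R}}\left(\mathrm{I}\right)=\mathbb{R}$, whose operational norm was shown above to coincide with $\left|\left(X\middle|\rho\right)\right|$.

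For the second inequality, write $M:=\sup_{\rho\in\mathsf{St}\left(\mathrm{A}\right)}\left|\left(X\middle|\rho\right)\right|$; it suffices to show $\left\Vert \left(X\otimes\mathcal{I}_{\mathrm{S}}\right)\rho\right\Vert _{\mathrm{S}}\leq M$ for every system $\mathrm{S}$ and every $\rho\in\mathsf{St}\left(\mathrm{AS}\right)$. By the finite-dimensionality assumption the effect set of $\mathrm{S}$ is compact, so there exist effects $a_{+},a_{-}$ on $\mathrm{S}$ with $\left\Vert \left(X\otimes\mathcal{I}_{\mathrm{S}}\right)\rho\right\Vert _{\mathrm{S}}=\left(a_{+}\middle|X\otimes\mathcal{I}_{\mathrm{S}}\middle|\rho\right)-\left(a_{-}\middle|X\otimes\mathcal{I}_{\mathrm{S}}\middle|\rho\right)$. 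Interchanging sequential and parallel composition, $\left(a_{\pm}\middle|X\otimes\mathcal{I}_{\mathrm{S}}\middle|\rho\right)=\left(X\middle|\rho_{\pm}\right)$ where $\rho_{\pm}:=\left(\mathcal{I}_{\mathrm{A}}\otimes a_{\pm}\right)\rho\in\mathsf{St}\left(\mathrm{A}\right)$ are genuine, generally sub-normalised, states of $\mathrm{A}$; hence $\left\Vert \left(X\otimes\mathcal{I}_{\mathrm{S}}\right)\rho\right\Vert _{\mathrm{S}}=\left(X\middle|\rho_{+}-\rho_{-}\right)$. The naive estimates $\left(X\middle|\rho_{+}\right)\leq M$ and $-\left(X\middle|\rho_{-}\right)\leq M$ only give the bound $2M$, so the real point is that $\rho_{+}$ and $\rho_{-}$ cannot both be large: the optimisers of the supremum and the infimum of a linear functional over the effect cone are ``disjoint'' (just as the projectors onto the positive and negative spectral parts are in quantum theory), so $a_{+}$ and $a_{-}$ can be taken to be two distinct outcomes of a single observation-test on $\mathrm{S}$, whence $a_{+}+a_{-}$ is dominated by the deterministic effect $u$ obtained by coarse-graining that test. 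Then $\rho_{+}+\rho_{-}$ is dominated in the cone order by the state $\mu:=\left(\mathcal{I}_{\mathrm{A}}\otimes u\right)\rho$, so $\left\Vert \rho_{+}\right\Vert +\left\Vert \rho_{-}\right\Vert \leq\left\Vert \mu\right\Vert \leq1$ using that the operational norm is additive on the cone of states (a consequence of Causality). Combined with $\left|\left(X\middle|\sigma\right)\right|\leq\left\Vert \sigma\right\Vert M$ for every $\sigma\in\mathsf{St}\left(\mathrm{A}\right)$, obtained by rescaling $\sigma$ to a normalised state, this yields $\left(X\middle|\rho_{+}-\rho_{-}\right)\leq\left(\left\Vert \rho_{+}\right\Vert +\left\Vert \rho_{-}\right\Vert \right)M\leq M$, and taking the supremum over $\mathrm{S}$ and $\rho$ completes the argument.

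The main obstacle is exactly the elimination of the spurious factor of $2$: any direct estimate of $\left\Vert \left(X\otimes\mathcal{I}_{\mathrm{S}}\right)\rho\right\Vert _{\mathrm{S}}$ bounds it only by $2M$, and getting down to $M$ forces one to use both the ``disjointness'' of the two optimal effects on the ancilla and the additivity of the norm on the positive cone — facts that are immediate in quantum theory but have to be justified within the general probabilistic setting.
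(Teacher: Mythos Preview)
The paper does not give its own proof of this proposition; it simply cites \cite[lemma 8]{Chiribella-purification}. Your overall strategy---reduce to bounding $\left\Vert (X\otimes\mathcal{I}_{\mathrm{S}})\rho\right\Vert_{\mathrm{S}}$ and then control the ``budget'' $\left\Vert\rho_{+}\right\Vert+\left\Vert\rho_{-}\right\Vert$---is the right one, and once the pieces are in place the argument is correct. But the crucial step is not justified.

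The problematic sentence is the one asserting that ``$a_{+}$ and $a_{-}$ can be taken to be two distinct outcomes of a single observation-test on $\mathrm{S}$'', argued by analogy with orthogonal spectral projectors in quantum theory. In a general probabilistic theory there is no spectral decomposition of $(X\otimes\mathcal{I}_{\mathrm{S}})\rho$, and two independently chosen optimisers for the supremum and the infimum have no reason to coexist in a test. As you yourself observe, without this step you are stuck at the bound $2M$. The quantum heuristic is suggestive but is not a proof here.

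The actual justification, once you are willing to use Causality (which you already invoke explicitly for additivity of the norm on the cone), is much cleaner than the spectral picture: in a causal theory, for every effect $b$ the complement $u-b$ is again an effect, and the involution $b\mapsto u-b$ sends $(b|\xi)$ to $(u|\xi)-(b|\xi)$, so it exchanges the supremum and the infimum. Hence if $a_{+}$ achieves $\sup_{b}(b|\xi)$, then $a_{-}:=u-a_{+}$ automatically achieves $\inf_{b}(b|\xi)$, and $\{a_{+},a_{-}\}$ is literally a binary observation-test. With this choice $\rho_{+}+\rho_{-}=(\mathcal{I}_{\mathrm{A}}\otimes u)\rho$ \emph{exactly}, not merely ``dominated by'', and the remainder of your argument goes through verbatim.

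One residual issue: the proposition appears in the thesis \emph{before} Causality is introduced, yet your argument (and, in fact, the cited lemma in \cite{Chiribella-purification}) uses it. This is a looseness of placement in the thesis rather than an error in your reasoning, but you should be aware that as written you are borrowing from the next section.
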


Note that this norm is the operator norm of the linear functional
$X$ on the set of states. Clearly, the set of effects $\mathsf{Eff}\left(\mathrm{A}\right)$
of a generic system $\mathrm{A}$ is bounded too, for $\left\Vert a\right\Vert \leq1$.
The effects $a$ such that $\left\Vert a\right\Vert =1$ are called
\emph{normalised}, and their set for system $\mathrm{A}$ will be
denoted by $\mathsf{Eff}_{1}\left(\mathrm{A}\right)$. Similarly,
the set of normalised pure effects will be denoted by $\mathsf{PurEff}_{1}\left(\mathrm{A}\right)$.

\subsection{Setting up a topology}

The definition of tomography for states, effects, and transformations
naturally yields a topology. The idea is that a sequence of states,
effects, transformations converges to a limit if the states, effects,
transformations in the sequence become tomographically indistinguishable
from the limit. Let us clarify this idea separately for states, effects,
and transformations.

We say that a sequence of states $\left\{ \rho_{n}\right\} $ of system
$\mathrm{A}$ converges to $\rho\in\mathsf{St}_{\mathbb{R}}\left(\mathrm{A}\right)$
if, for every effect $a\in\mathsf{Eff}\left(\mathrm{A}\right)$ we
have 
\begin{equation}
\lim_{n\rightarrow+\infty}\left(a\middle|\rho_{n}\right)=\left(a\middle|\rho\right).\label{eq:topology states}
\end{equation}
With this topology, the issue of the convergence of a sequence of
states is turned into the convergence of a sequence of real numbers.
Notice we wrote $\rho\in\mathsf{St}_{\mathbb{R}}\left(\mathrm{A}\right)$,
because in general the limit may not be a state, but it is just an
element of the vector space of states. The limit is a state if the
set of states is topologically closed. Physically this means that
every vector that can be arbitrarily well approximated by a sequence
of states must be a state. This is fairly natural to assume, and we
will do it.

We can extend the topology defined by eq.~\eqref{eq:topology states}
to all vectors of $\mathsf{St}_{\mathbb{R}}\left(\mathrm{A}\right)$,
for every system $\mathrm{A}$: $\left\{ \xi_{n}\right\} $ converges
to $\xi$ if
\[
\lim_{n\rightarrow+\infty}\left(a\middle|\xi_{n}\right)=\left(a\middle|\xi\right)
\]
for every effect $a\in\mathsf{Eff}\left(\mathrm{A}\right)$.

Dually, we define a topology on the set of effects: a sequence $\left\{ a_{n}\right\} $
of system $\mathrm{A}$ converges to $a\in\mathsf{Eff}_{\mathbb{R}}\left(\mathrm{A}\right)$
if, for every state $\rho\in\mathsf{St}\left(\mathrm{A}\right)$,
we have
\[
\lim_{n\rightarrow+\infty}\left(a_{n}\middle|\rho\right)=\left(a\middle|\rho\right).
\]
Again, if for every convergent sequence of effects, the limit $a$
is an effect too, the set of effects is closed for every system. We
can extend this topology to all $\mathsf{Eff}_{\mathbb{R}}\left(\mathrm{A}\right)$:
$\left\{ X_{n}\right\} $ converges to $X$ if, for every state $\rho$
\[
\lim_{n\rightarrow+\infty}\left(X_{n}\middle|\rho\right)=\left(X\middle|\rho\right).
\]
Finally, let us look at transformations. Recall that transformations
are defined by their action on half of bipartite states. Therefore
it is natural to say that a sequence $\left\{ \mathcal{T}_{n}\right\} $
of transformations from $\mathrm{A}$ to $\mathrm{B}$ converges to
$\mathcal{T}\in\mathsf{Transf}_{\mathbb{R}}\left(\mathrm{A},\mathrm{B}\right)$
if, for every system $\mathrm{S}$, one has 
\begin{equation}
\lim_{n\rightarrow+\infty}\left(\mathcal{T}_{n}\otimes\mathcal{I}_{\mathrm{S}}\right)\rho_{\mathrm{AS}}=\left(\mathcal{T}\otimes\mathcal{I}_{\mathrm{S}}\right)\rho_{\mathrm{AS}},\label{eq:convergence transformations}
\end{equation}
for every state $\rho\in\mathsf{St}\left(\mathrm{AS}\right)$. In
this way, we turn the convergence of a sequence of transformations
into the convergence of a sequence of states, defined above. Again,
if $\mathsf{Transf}\left(\mathrm{A},\mathrm{B}\right)$ is closed,
then every limit of every convergent sequence of transformations will
be a transformation.

Recalling eq.~\eqref{eq:topology states}, the condition of eq.~\eqref{eq:convergence transformations}
becomes\[ \lim_{n\rightarrow+\infty}\!\!\!\!\begin{aligned}\Qcircuit @C=1em @R=.7em @!R { & \multiprepareC{1}{\rho} & \qw \poloFantasmaCn{\rA} & \gate{\cT_n} & \qw \poloFantasmaCn{\rB} &\multimeasureD{1}{E}\\ & \pureghost{\rho} & \qw \poloFantasmaCn{\rS} &\qw &\qw & \ghost{E}}\end{aligned} ~= \!\!\!\! \begin{aligned}\Qcircuit @C=1em @R=.7em @!R { & \multiprepareC{1}{\rho} & \qw \poloFantasmaCn{\rA} & \gate{\cT} & \qw \poloFantasmaCn{\rB} &\multimeasureD{1}{E}\\ & \pureghost{\rho} & \qw \poloFantasmaCn{\rS} &\qw &\qw & \ghost{E}}\end{aligned}~, \]for
every effect $E\in\mathsf{Eff}\left(\mathrm{BS}\right)$. Note that
this equation covers also the previous cases for states and effects:
it is enough to take $\mathrm{A}$ (resp.\ $\mathrm{B}$) to be the
trivial system. Therefore we can give the following definition \cite{QuantumFromPrinciples}.
\begin{defn}[Operational topology]
A sequence of transformations $\left\{ \mathcal{T}_{n}\right\} $
from system $\mathrm{A}$ to system $\mathrm{B}$ converges to $\mathcal{T}$
if\[ \lim_{n\rightarrow+\infty}\!\!\!\!\begin{aligned}\Qcircuit @C=1em @R=.7em @!R { & \multiprepareC{1}{\rho} & \qw \poloFantasmaCn{\rA} & \gate{\cT_n} & \qw \poloFantasmaCn{\rB} &\multimeasureD{1}{E}\\ & \pureghost{\rho} & \qw \poloFantasmaCn{\rS} &\qw &\qw & \ghost{E}}\end{aligned} ~= \!\!\!\! \begin{aligned}\Qcircuit @C=1em @R=.7em @!R { & \multiprepareC{1}{\rho} & \qw \poloFantasmaCn{\rA} & \gate{\cT} & \qw \poloFantasmaCn{\rB} &\multimeasureD{1}{E}\\ & \pureghost{\rho} & \qw \poloFantasmaCn{\rS} &\qw &\qw & \ghost{E}}\end{aligned}~, \]for
every system $\mathrm{S},$ every state $\rho\in\mathsf{St}\left(\mathrm{AS}\right)$,
and every effect $E\in\mathsf{Eff}\left(\mathrm{BS}\right)$.
\end{defn}

Henceforth we will assume that all sets of states, effects, and transformations
are topologically closed. The idea behind this is that a transformation
(including a state or an effect) that can be arbitrarily well approximated
by physical transformations, must be a physical transformation too.
For the trivial system this is translated as follows.
\begin{lem}
The set $\mathsf{St}\left(\mathrm{I}\right)$ is either $\left\{ 0,1\right\} $
or the whole interval $\left[0,1\right]$.
\end{lem}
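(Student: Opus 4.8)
The plan is to identify $S:=\mathsf{St}(\mathrm{I})$ with a topologically closed subset of $[0,1]$ that is stable under two elementary operations, and then run a short self-similar covering argument. First I would record the structural facts. Since every event from $\mathrm{I}$ to $\mathrm{I}$ lies in $[0,1]$ and the deterministic state is $1$, we have $S\subseteq[0,1]$, $1\in S$, and $0\in S$ (the null preparation-event); moreover $S$ is closed by the standing assumption of this section. Two closure properties are then immediate from the axioms already in place: (a) $S$ is closed under multiplication, because if $p,q\in S$ are (classes of) preparation-events on $\mathrm{I}$ then so is $q\circ p$, and the OPT composition rule gives $q\circ p=pq$; and (b) $S$ is closed under $x\mapsto 1-x$, because any $p\in S$ occurs in some preparation-test $\left\{\rho_i\right\}$ on $\mathrm{I}$, and coarse-graining all outcomes other than $p$ into a single one yields the binary test $\left\{p,\,1-p\right\}$, so $1-p\in S$. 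Combining (a) and (b), $S$ is also closed under $h_q\colon x\mapsto 1-q(1-x)$ for each fixed $q\in S$.

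Next I would dispose of the trivial alternative: if $S\subseteq\left\{0,1\right\}$ then, since $0,1\in S$, we get $S=\left\{0,1\right\}$ and we are done. So assume there is $p\in S$ with $0<p<1$, and set $q:=\max(p,1-p)$. By (b) we have $q\in S$, and by construction $q\in[1/2,1)$. Consider the maps $g\colon x\mapsto qx$ and $h\colon x\mapsto 1-q(1-x)=qx+(1-q)$ on $[0,1]$: both are increasing affine contractions with ratio $q<1$, both send $S$ into $S$ (for $g$ by property (a), for $h$ by the remark above), and
\[
g([0,1])\cup h([0,1])=[0,q]\cup[1-q,1]=[0,1],
\]
the last equality because $1-q\le q$.

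Finally I would show $S=[0,1]$ by proving $S$ is dense. Fix a target $t\in[0,1]$ and $\varepsilon>0$. Using $g([0,1])\cup h([0,1])=[0,1]$, choose $w_1\in\left\{g,h\right\}$ with $t\in w_1([0,1])$; since $w_1^{-1}(t)\in[0,1]$, choose $w_2\in\left\{g,h\right\}$ with $t\in w_1 w_2([0,1])$; iterating, after $n$ steps $t\in(w_1\cdots w_n)([0,1])$, an interval of length $q^n$. As $w_1\cdots w_n$ is increasing and affine, its value at $1$ is the right endpoint of that interval, so $\left|w_1\cdots w_n(1)-t\right|\le q^n$. But $w_1\cdots w_n(1)\in S$, because $1\in S$ and $S$ is stable under $g$ and $h$. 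Taking $n$ with $q^n<\varepsilon$ exhibits $t$ as a limit of points of $S$, and closedness of $S$ gives $t\in S$. Hence $S=[0,1]$.

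The point that needs care is the first paragraph rather than the covering argument: one must check that the two closure properties genuinely follow from the framework as developed so far — in particular that the OPT rule $q\circ p=pq$ applies to preparation-events of $\mathrm{I}$ and that coarse-graining is available here — and that $0\in\mathsf{St}(\mathrm{I})$ (equivalently, that some non-deterministic test on $\mathrm{I}$ exists), so that the alternative $S=\left\{1\right\}$ is ruled out. Once these are secured, the geometric part is routine.
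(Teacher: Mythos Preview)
Your proof is correct and follows essentially the same strategy as the paper: both exploit that $\mathsf{St}(\mathrm{I})$ is closed under products (from the OPT rule $p\circ q=pq$) and under $x\mapsto 1-x$ (from coarse-graining a preparation-test to a binary one), and then combine density with topological closedness. The paper's proof is terser---it simply asserts that from the test $\{p,1-p\}$ one can, ``repeating it several times and doing a suitable coarse-graining'', approximate any element of $[0,1]$, deferring details to a reference---whereas your iterated-function-system argument with $g(x)=qx$ and $h(x)=qx+(1-q)$ makes the density step self-contained and transparent. Your closing caveats are apt: the paper does use exactly the OPT composition rule and coarse-graining you invoke, and its statement tacitly takes $0\in\mathsf{St}(\mathrm{I})$ for granted (otherwise the deterministic case would read $S=\{1\}$), just as you flag.
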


\begin{proof}
If only 0 and 1 are allowed probabilities, there is nothing to prove.
If instead the theory admits a probability $p\in\left(0,1\right)$,
we have a test $\left\{ p,1-p\right\} $ on the trivial system, with
which, repeating it several times and doing a suitable coarse-graining,
we can approximate every element of $\left[0,1\right]$ arbitrarily
well (see \cite{Chiribella-purification} for more details). Now,
the closure in the operational topology for the trivial system coincides
with the closure in the usual topology. Indeed $x_{n}\rightarrow x$
operationally if and only if $kx_{n}\rightarrow kx$ in the usual
topology, for $k\in\left[0,1\right]$. Clearly, this is true if and
only if $x_{n}\rightarrow x$ in the usual topology. Since the set
of states is closed (in the usual topology), we have $\mathsf{St}\left(\mathrm{I}\right)=\left[0,1\right]$.
\end{proof}
In the former case, when $\mathsf{St}\left(\mathrm{I}\right)=\left\{ 0,1\right\} $,
we say that the theory is \emph{deterministic}.

\subsubsection{Equivalent topologies for states, effects, and transformations.}

We have seen that in $\mathsf{St}_{\mathbb{R}}\left(\mathrm{A}\right)$,
$\mathsf{Eff}_{\mathbb{R}}\left(\mathrm{A}\right)$, and $\mathsf{Transf}_{\mathbb{R}}\left(\mathrm{A},\mathrm{B}\right)$
we can introduce a norm, which induces another topology: we say that
a sequence $\left\{ \xi_{n}\right\} $ converges to $\xi$ in norm
if $\lim_{n\rightarrow+\infty}\left\Vert \xi_{n}-\xi\right\Vert =0$.
The topology of the norm is \emph{stronger} than the operational topology
defined above. This means that if a sequence converges in norm, it
also converges in the operational topology. Let us prove this statement
separately for states and transformations (which covers effects too).
\begin{prop}
\label{prop:norm -> operational}Let $\left\{ \xi_{n}\right\} $ be
a sequence of elements of $\mathsf{St}_{\mathbb{R}}\left(\mathrm{A}\right)$.
If $\left\{ \xi_{n}\right\} $ converges to $\xi$ in norm, it converges
to $\xi$ also operationally.
\end{prop}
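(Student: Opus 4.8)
The plan is to reduce the proposition to the single pointwise inequality
\[
\left|\left(a\middle|\zeta\right)\right|\leq\left\Vert \zeta\right\Vert \qquad\text{for all }\zeta\in\mathsf{St}_{\mathbb{R}}\left(\mathrm{A}\right),\ a\in\mathsf{Eff}\left(\mathrm{A}\right).
\]
Once this is available the proposition is immediate: if $\left\Vert \xi_{n}-\xi\right\Vert \to0$, then for every effect $a$ one gets $\left|\left(a\middle|\xi_{n}\right)-\left(a\middle|\xi\right)\right|=\left|\left(a\middle|\xi_{n}-\xi\right)\right|\leq\left\Vert \xi_{n}-\xi\right\Vert \to0$, so $\left(a\middle|\xi_{n}\right)\to\left(a\middle|\xi\right)$ for every $a\in\mathsf{Eff}\left(\mathrm{A}\right)$, which is exactly what convergence in the operational topology on $\mathsf{St}_{\mathbb{R}}\left(\mathrm{A}\right)$ means.

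To prove the pointwise inequality I would recycle two facts already observed while checking that $\left\Vert \cdot\right\Vert $ is a norm. Decomposing $\zeta=\zeta_{+}-\zeta_{-}$ with $\zeta_{+},\zeta_{-}\in\mathsf{St}_{+}\left(\mathrm{A}\right)$ — possible because the cone of states is proper — one has $\left(a\middle|\zeta\right)\geq\left(a\middle|-\zeta_{-}\right)$ and $\left(a\middle|\zeta\right)\leq\left(a\middle|\zeta_{+}\right)$ for every effect $a$, whence
\[
\sup_{a\in\mathsf{Eff}\left(\mathrm{A}\right)}\left(a\middle|\zeta\right)\geq0,\qquad\inf_{a\in\mathsf{Eff}\left(\mathrm{A}\right)}\left(a\middle|\zeta\right)\leq0.
\]
Then for a fixed effect $a$ I would bound $\left(a\middle|\zeta\right)$ from above by $\sup_{b}\left(b\middle|\zeta\right)\leq\sup_{b}\left(b\middle|\zeta\right)-\inf_{b}\left(b\middle|\zeta\right)=\left\Vert \zeta\right\Vert $ (using $\inf_{b}\left(b\middle|\zeta\right)\leq0$), and from below by $\inf_{b}\left(b\middle|\zeta\right)\geq\inf_{b}\left(b\middle|\zeta\right)-\sup_{b}\left(b\middle|\zeta\right)=-\left\Vert \zeta\right\Vert $ (using $\sup_{b}\left(b\middle|\zeta\right)\geq0$), giving $\left|\left(a\middle|\zeta\right)\right|\leq\left\Vert \zeta\right\Vert $.

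I do not expect any genuine obstacle; this is a routine estimate. The one place that calls for a little care is that the bound $\left|\left(a\middle|\zeta\right)\right|\leq\left\Vert \zeta\right\Vert $ really does use the \emph{signs} of the supremum and infimum, not merely the value of their difference: from $\inf_{b}\left(b\middle|\zeta\right)\leq\left(a\middle|\zeta\right)\leq\sup_{b}\left(b\middle|\zeta\right)$ alone one would only obtain $\left|\left(a\middle|\zeta\right)\right|\leq\max\left\{ \sup_{b}\left(b\middle|\zeta\right),-\inf_{b}\left(b\middle|\zeta\right)\right\} $, which could a priori be larger than $\left\Vert \zeta\right\Vert $. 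The same argument with the roles of states and effects exchanged covers $\mathsf{Eff}_{\mathbb{R}}\left(\mathrm{A}\right)$, and for transformations one reduces to the state case by composing with $\mathcal{I}_{\mathrm{S}}$ and an arbitrary bipartite input state, since the transformation norm is by definition a supremum of state norms.
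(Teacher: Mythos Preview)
Your proposal is correct and follows essentially the same route as the paper: both reduce to the inequality $\left|\left(a\middle|\zeta\right)\right|\leq\left\Vert\zeta\right\Vert$, established via the sign facts $\sup_{b}\left(b\middle|\zeta\right)\geq 0$ and $\inf_{b}\left(b\middle|\zeta\right)\leq 0$. The only cosmetic difference is that the paper handles the absolute value by a case split on the sign of $\left(a\middle|\eta_n\right)$ (replacing $\eta_n$ by $-\eta_n$ in the negative case), whereas you bound $\left(a\middle|\zeta\right)$ directly from above and below; the content is identical.
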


\begin{proof}
Suppose $\xi_{n}$ converges to $\xi$ in norm, then $\lim_{n\rightarrow+\infty}\left\Vert \xi_{n}-\xi\right\Vert =0$.
Now to prove that $\xi_{n}$ converges to $\xi$ operationally, it
is enough to prove that $\lim_{n\rightarrow+\infty}\left|\left(a\middle|\xi_{n}-\xi\right)\right|=0$
for every $a\in\mathsf{Eff}\left(\mathrm{A}\right)$. Define $\eta_{n}:=\xi_{n}-\xi$.
Now, let us evaluate $\left(a\middle|\eta_{n}\right)$. Clearly 
\begin{equation}
\left(a\middle|\eta_{n}\right)\leq\sup_{a\in\mathsf{Eff}\left(\mathrm{A}\right)}\left(a\middle|\eta_{n}\right).\label{eq:first inequality equivalence}
\end{equation}
Recall that for every vector $x\in\mathsf{St}_{\mathbb{R}}\left(\mathrm{A}\right)$,
$\inf_{a\in\mathsf{Eff}\left(\mathrm{A}\right)}\left(a\middle|x\right)\leq0$.
Therefore we can add the term $-\inf_{a\in\mathsf{Eff}\left(\mathrm{A}\right)}\left(a\middle|\eta_{n}\right)$
to the right-hand side of eq.~\eqref{eq:first inequality equivalence}:
\[
\left(a\middle|\eta_{n}\right)\leq\sup_{a\in\mathsf{Eff}\left(\mathrm{A}\right)}\left(a\middle|\eta_{n}\right)-\inf_{a\in\mathsf{Eff}\left(\mathrm{A}\right)}\left(a\middle|\eta_{n}\right)=\left\Vert \eta_{n}\right\Vert ,
\]
where we have recognised the definition of the norm in $\mathsf{St}_{\mathbb{R}}\left(\mathrm{A}\right)$.
We are done if we show that $\left|\left(a\middle|\eta_{n}\right)\right|\leq\left\Vert \eta_{n}\right\Vert $.
Note that
\[
\left|\left(a\middle|\eta_{n}\right)\right|=\begin{cases}
\left(a\middle|\eta_{n}\right) & \textrm{if }\left(a\middle|\eta_{n}\right)\geq0\\
\left(a\middle|-\eta_{n}\right) & \textrm{if }\left(a\middle|\eta_{n}\right)<0
\end{cases}.
\]
Repeating the same argument, we get
\[
\left(a\middle|-\eta_{n}\right)\leq\left\Vert -\eta_{n}\right\Vert =\left\Vert \eta_{n}\right\Vert ,
\]
thus showing that $\left|\left(a\middle|\eta_{n}\right)\right|\leq\left\Vert \eta_{n}\right\Vert $
for every $a\in\mathsf{Eff}\left(\mathrm{A}\right)$. Since $\left\Vert \eta_{n}\right\Vert \rightarrow0$,
we have the thesis.
\end{proof}
Let us move to the corresponding statement for transformations.
\begin{prop}
\label{prop:norm -> operational transf}Let $\left\{ \Delta_{n}\right\} $
be a sequence of elements of $\mathsf{Transf}_{\mathbb{R}}\left(\mathrm{A},\mathrm{B}\right)$.
Then if $\left\{ \Delta_{n}\right\} $ converges to $\Delta$ in norm,
it converges to $\Delta$ also operationally.
\end{prop}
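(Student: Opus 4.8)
The plan is to reduce the statement to Proposition~\ref{prop:norm -> operational}, its state-level analogue, applied to the composite system $\mathrm{BS}$. First I would write $\Xi_n := \Delta_n - \Delta \in \mathsf{Transf}_{\mathbb{R}}\left(\mathrm{A},\mathrm{B}\right)$, so that the hypothesis becomes $\lim_{n\to\infty}\left\Vert \Xi_n\right\Vert = 0$, and the goal, after unwinding the definition of the operational topology on transformations, is to show that for every system $\mathrm{S}$, every $\rho \in \mathsf{St}\left(\mathrm{AS}\right)$ and every $E \in \mathsf{Eff}\left(\mathrm{BS}\right)$ one has $\lim_{n\to\infty}\left(E\middle|\left(\Xi_n \otimes \mathcal{I}_{\mathrm{S}}\right)\rho\right) = 0$.

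Next I would fix an arbitrary system $\mathrm{S}$ and state $\rho \in \mathsf{St}\left(\mathrm{AS}\right)$, and set $\eta_n := \left(\Xi_n \otimes \mathcal{I}_{\mathrm{S}}\right)\rho \in \mathsf{St}_{\mathbb{R}}\left(\mathrm{BS}\right)$. The key observation is that tensoring $\Xi_n$ with an identity and feeding in one particular state is exactly one of the configurations over which the transformation norm takes a supremum, so directly from the definition
\[
\left\Vert \eta_n\right\Vert = \left\Vert \left(\Xi_n \otimes \mathcal{I}_{\mathrm{S}}\right)\rho\right\Vert \leq \sup_{\mathrm{R}}\sup_{\sigma \in \mathsf{St}\left(\mathrm{AR}\right)}\left\Vert \left(\Xi_n \otimes \mathcal{I}_{\mathrm{R}}\right)\sigma\right\Vert = \left\Vert \Xi_n\right\Vert .
\]
Hence $\left\Vert \eta_n\right\Vert \to 0$, i.e.\ $\{\eta_n\}$ converges to $0$ in norm in $\mathsf{St}_{\mathbb{R}}\left(\mathrm{BS}\right)$.

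Finally I would invoke Proposition~\ref{prop:norm -> operational} with the role of the system $\mathrm{A}$ there played by $\mathrm{BS}$: since $\{\eta_n\}$ converges to $0$ in norm, it converges to $0$ operationally, that is $\lim_{n\to\infty}\left(E\middle|\eta_n\right) = 0$ for every $E \in \mathsf{Eff}\left(\mathrm{BS}\right)$. Recalling that $\eta_n = \left(\Delta_n \otimes \mathcal{I}_{\mathrm{S}}\right)\rho - \left(\Delta \otimes \mathcal{I}_{\mathrm{S}}\right)\rho$, this says exactly that $\lim_{n\to\infty}\left(E\middle|\left(\Delta_n \otimes \mathcal{I}_{\mathrm{S}}\right)\rho\right) = \left(E\middle|\left(\Delta \otimes \mathcal{I}_{\mathrm{S}}\right)\rho\right)$; since $\mathrm{S}$, $\rho$, and $E$ were arbitrary, this is precisely the definition of operational convergence of $\{\Delta_n\}$ to $\Delta$. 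I do not expect any genuine obstacle: the entire content is the norm inequality displayed above together with the already-proven state-level Proposition~\ref{prop:norm -> operational}, and the only point requiring a little care is the bookkeeping that identifies the operational topology on $\mathsf{Transf}_{\mathbb{R}}\left(\mathrm{A},\mathrm{B}\right)$ with operational convergence in $\mathsf{St}_{\mathbb{R}}\left(\mathrm{BS}\right)$ of the images $\left(\Delta_n \otimes \mathcal{I}_{\mathrm{S}}\right)\rho$, uniformly over all $\mathrm{S}$ and $\rho$.
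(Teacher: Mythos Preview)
Your proposal is correct and matches the paper's proof essentially step for step: set $\Xi_n=\Delta_n-\Delta$, bound $\left\Vert(\Xi_n\otimes\mathcal{I}_{\mathrm{S}})\rho\right\Vert\le\left\Vert\Xi_n\right\Vert$ directly from the definition of the transformation norm, and then use the state-level result (Proposition~\ref{prop:norm -> operational}) on $\mathrm{BS}$ to conclude operational convergence. The only cosmetic difference is that the paper recalls the inequality $\left|\left(E\middle|\eta_n\right)\right|\le\left\Vert\eta_n\right\Vert$ from the \emph{proof} of Proposition~\ref{prop:norm -> operational}, whereas you invoke the proposition itself; either way the argument is the same.
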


\begin{proof}
Suppose $\Delta_{n}$ converges to $\Delta$ in norm, then $\lim_{n\rightarrow+\infty}\left\Vert \Xi_{n}\right\Vert =0$,
where $\Xi_{n}:=\Delta_{n}-\Delta$. Now $\Delta_{n}$ converges to
$\Delta$ operationally, if for every system $\mathrm{S}$, and every
state $\rho\in\mathsf{St}\left(\mathrm{AS}\right)$, $\left(\Delta_{n}\otimes\mathcal{I}_{\mathrm{S}}\right)\rho$
converges to $\left(\Delta\otimes\mathcal{I}_{\mathrm{S}}\right)\rho$
operationally, or in other words, if and only if $\lim_{n\rightarrow+\infty}\left(\Xi_{n}\otimes\mathcal{I}_{\mathrm{S}}\right)\rho=0$
operationally. Recalling the proof of proposition~\ref{prop:norm -> operational},
for every effect $E\in\mathsf{Eff}\left(\mathrm{BS}\right)$,
\[
\left|\left(E\middle|\Xi_{n}\otimes\mathcal{I}_{\mathrm{S}}\middle|\rho\right)\right|\leq\left\Vert \left(\Xi_{n}\otimes\mathcal{I}_{\mathrm{S}}\right)\rho\right\Vert .
\]
Now
\[
\left\Vert \left(\Xi_{n}\otimes\mathcal{I}_{\mathrm{S}}\right)\rho\right\Vert \leq\sup_{\mathrm{S}}\sup_{\rho\in\mathsf{St}\left(\mathrm{AS}\right)}\left\Vert \left(\Xi_{n}\otimes\mathcal{I}_{\mathrm{S}}\right)\rho\right\Vert =\left\Vert \Xi_{n}\right\Vert ,
\]
so $\left|\left(E\middle|\Xi_{n}\otimes\mathcal{I}_{\mathrm{S}}\middle|\rho\right)\right|\leq\left\Vert \Xi_{n}\right\Vert $.
Therefore if $\left\Vert \Xi_{n}\right\Vert \rightarrow0$, it implies
that $\left|\left(E\middle|\Xi_{n}\otimes\mathcal{I}_{\mathrm{S}}\middle|\rho\right)\right|\rightarrow0$,
viz.\ $\lim_{n\rightarrow+\infty}\left(\Xi_{n}\otimes\mathcal{I}_{\mathrm{S}}\right)\rho=0$
operationally. This concludes the proof.
\end{proof}
We have assumed that the set of states, effects, and transformations
are closed in the operational topology. Now we will make a stronger
assumption.
\begin{assumption}
The sets $\mathsf{St}\left(\mathrm{A}\right)$, $\mathsf{Eff}\left(\mathrm{A}\right)$,
and $\mathsf{Transf}\left(\mathrm{A},\mathrm{B}\right)$ are \emph{closed}
both in the operational and the norm topology, for all systems $\mathrm{A}$
and $\mathrm{B}$.
\end{assumption}

This assumption has far-reaching consequences. For example, the sets
of states, effects, and transformations, being bounded and closed
in the topology of the norm, are \emph{compact} in this topology,
as we are in finite dimension. This fact can be extended also to the
operational topology. Indeed, an important consequence of this assumption
is that the operational topology and the topology of the norm are
\emph{equivalent}: a sequence $\left\{ \xi_{n}\right\} $ converges
to $\xi$ operationally if and only if it does it in norm. Let us
prove the statement separately for states and transformations.
\begin{prop}
\label{prop:equivalence topologies}Let $\left\{ \xi_{n}\right\} $
be a sequence of elements of $\mathsf{St}_{\mathbb{R}}\left(\mathrm{A}\right)$.
Then $\xi_{n}$ converges to $\xi$ operationally if and only if it
converges to $\xi$ in norm.
\end{prop}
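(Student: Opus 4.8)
The plan is to observe that one direction — norm convergence implies operational convergence — is already Proposition~\ref{prop:norm -> operational}, so all the work lies in the converse. The structural facts I would lean on are that $\mathsf{St}_{\mathbb{R}}\left(\mathrm{A}\right)$ is finite-dimensional and that its dual $\mathsf{Eff}_{\mathbb{R}}\left(\mathrm{A}\right)$ is, by construction, \emph{spanned} by the effects $\mathsf{Eff}\left(\mathrm{A}\right)$, with $\dim\mathsf{Eff}_{\mathbb{R}}\left(\mathrm{A}\right)=\dim\mathsf{St}_{\mathbb{R}}\left(\mathrm{A}\right)=:d$. The point is that the operational topology on $\mathsf{St}_{\mathbb{R}}\left(\mathrm{A}\right)$ is exactly the weak topology induced by the full dual space, and on a finite-dimensional space this coincides with the norm topology.

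Concretely, I would first note that operational convergence $\xi_{n}\to\xi$, i.e.\ $\left(a\middle|\xi_{n}\right)\to\left(a\middle|\xi\right)$ for every effect $a\in\mathsf{Eff}\left(\mathrm{A}\right)$, extends by linearity to $\left(X\middle|\xi_{n}\right)\to\left(X\middle|\xi\right)$ for every $X\in\mathsf{Eff}_{\mathbb{R}}\left(\mathrm{A}\right)$, since each such $X$ is a finite real linear combination of effects. Then I would pick a basis $X_{1},\dots,X_{d}$ of $\mathsf{Eff}_{\mathbb{R}}\left(\mathrm{A}\right)$ and let $e_{1},\dots,e_{d}$ be the dual basis of $\mathsf{St}_{\mathbb{R}}\left(\mathrm{A}\right)$. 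Expanding $\xi_{n}=\sum_{k}\left(X_{k}\middle|\xi_{n}\right)e_{k}$ and $\xi=\sum_{k}\left(X_{k}\middle|\xi\right)e_{k}$, operational convergence gives convergence of each coordinate $\left(X_{k}\middle|\xi_{n}\right)\to\left(X_{k}\middle|\xi\right)$, hence coordinatewise convergence $\xi_{n}\to\xi$; since on the finite-dimensional space $\mathsf{St}_{\mathbb{R}}\left(\mathrm{A}\right)$ all norms are equivalent, this is convergence in the operational norm, which is what we want.

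As a cross-check I would mention the alternative compactness argument: operational convergence forces $\left\{\xi_{n}\right\}$ to be bounded, so by finite-dimensionality some subsequence converges in norm, say to $\eta$; by Proposition~\ref{prop:norm -> operational} it also converges operationally to $\eta$; and since effects separate the points of $\mathsf{St}_{\mathbb{R}}\left(\mathrm{A}\right)$ (extended tomography), operational limits are unique, so $\eta=\xi$, and a standard subsequence argument promotes this to norm convergence of the whole sequence. I do not expect a genuine obstacle; the only thing to be careful about is to invoke the correct finite-dimensional fact — namely that the effects \emph{span} the full dual, so the operational topology really is the full weak topology rather than a coarser one. The companion statement for transformations would be handled identically, using the finite-dimensionality of $\mathsf{Transf}_{\mathbb{R}}\left(\mathrm{A},\mathrm{B}\right)$ together with Propositions~\ref{prop:norm -> operational} and~\ref{prop:norm -> operational transf} and the fact that the pairings against $E\in\mathsf{Eff}\left(\mathrm{BS}\right)$ on half of bipartite states both separate and span.
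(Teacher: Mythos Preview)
Your proof is correct and takes a genuinely different route from the paper's. The paper argues via compactness of $\mathsf{Eff}\left(\mathrm{A}\right)$: the supremum and infimum defining $\left\Vert\eta_{n}\right\Vert$ are actually attained at some effects $a^{*}$ and $a_{*}$, and then operational convergence forces $\left(a^{*}\middle|\eta_{n}\right)\to 0$ and $\left(a_{*}\middle|\eta_{n}\right)\to 0$, giving $\left\Vert\eta_{n}\right\Vert\to 0$. You instead use that effects span the full dual $\mathsf{Eff}_{\mathbb{R}}\left(\mathrm{A}\right)$, extend operational convergence to all of it by linearity, and then reduce to coordinatewise convergence in a fixed basis plus equivalence of norms in finite dimension.

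Your approach is arguably cleaner and more robust: in the paper's argument the maximising and minimising effects $a^{*}$, $a_{*}$ depend on $n$, so the step ``by hypothesis, $\lim_{n\to\infty}\left(a^{*}\middle|\eta_{n}\right)=0$'' is not immediate from pointwise operational convergence alone and tacitly relies on the same finite-dimensional uniformity that your basis argument makes explicit. Your approach also avoids invoking the closedness assumption on $\mathsf{Eff}\left(\mathrm{A}\right)$, needing only finite-dimensionality and the spanning property. Your alternative compactness/subsequence argument is valid too and closer in spirit to the paper's philosophy; either version is acceptable.
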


\begin{proof}
We proved sufficiency in proposition~\ref{prop:norm -> operational}.
Let us prove necessity. Consider a sequence $\xi_{n}$ that converges
to $\xi$ operationally. Then, for every $a\in\mathsf{Eff}\left(\mathrm{A}\right)$,
we have $\lim_{n\rightarrow+\infty}\left(a\middle|\eta_{n}\right)=0$,
where $\eta_{n}:=\xi_{n}-\xi$. Now, since the set of effects is compact
(in the topology of the norm), and we are in finite dimension, the
supremum $\sup_{a\in\mathsf{Eff}\left(\mathrm{A}\right)}\left(a\middle|\eta_{n}\right)$
of the linear function $\eta_{n}$ is in fact a maximum, and achieved
on the effect $a^{*}$. Similarly, the infimum $\inf_{a\in\mathsf{Eff}\left(\mathrm{A}\right)}\left(a\middle|\eta_{n}\right)$
is in fact a minimum, and achieved on the effect $a_{*}$. Then, by
hypothesis, $\lim_{n\rightarrow+\infty}\left(a^{*}\middle|\eta_{n}\right)=0$
and $\lim_{n\rightarrow+\infty}\left(a_{*}\middle|\eta_{n}\right)=0$.
Therefore
\[
\lim_{n\rightarrow+\infty}\left\Vert \eta_{n}\right\Vert =\lim_{n\rightarrow+\infty}\left(\left(a^{*}\middle|\eta_{n}\right)-\left(a_{*}\middle|\eta_{n}\right)\right)=0.
\]
This proves that $\xi_{n}$ that converges to $\xi$ in norm too.
\end{proof}
This means that when we consider the conversion of states in the limit\textemdash e.g.\ in
data compression or in the asymptotic conversion of states\textemdash we
can choose either topology, according to its convenience in the problem
we want to address.

Let us move to the similar statement for the topologies in the vector
space of transformations, which covers the case of effects as well.
\begin{prop}
Let $\left\{ \Delta_{n}\right\} $ be a sequence of elements of $\mathsf{Transf}_{\mathbb{R}}\left(\mathrm{A},\mathrm{B}\right)$.
Then $\Delta_{n}$ converges to $\Delta$ operationally if and only
if it converges to $\Delta$ in norm.
\end{prop}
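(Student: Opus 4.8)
The plan is to follow the pattern of Proposition~\ref{prop:equivalence topologies}: sufficiency (convergence in norm implies operational convergence) is already Proposition~\ref{prop:norm -> operational transf}, so only necessity needs proof, namely that $\Delta_{n}\to\Delta$ operationally forces $\left\Vert\Delta_{n}-\Delta\right\Vert\to0$. I would \emph{not} try to transcribe the states argument directly, since the transformation norm $\left\Vert\Xi\right\Vert=\sup_{\mathrm{S}}\sup_{\rho\in\mathsf{St}\left(\mathrm{AS}\right)}\left\Vert\left(\Xi\otimes\mathcal{I}_{\mathrm{S}}\right)\rho\right\Vert$ involves a supremum over \emph{all} ancillary systems $\mathrm{S}$, and without Local Tomography or Purification there is no reason that supremum is attained, nor that operational (i.e.\ pointwise-in-$\mathrm{S}$) convergence automatically becomes uniform in $\mathrm{S}$. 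The way around this is to exploit that $\mathsf{Transf}_{\mathbb{R}}\left(\mathrm{A},\mathrm{B}\right)$ is finite-dimensional.

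Concretely, I would first observe that the operational topology on $\mathsf{Transf}_{\mathbb{R}}\left(\mathrm{A},\mathrm{B}\right)$ is exactly the topology induced by the family $\mathcal{F}$ of linear functionals
\[
\varphi_{\mathrm{S},\rho,E}:\Delta\mapsto\left(E\middle|\Delta\otimes\mathcal{I}_{\mathrm{S}}\middle|\rho\right),
\]
indexed by all systems $\mathrm{S}$, all $\rho\in\mathsf{St}\left(\mathrm{AS}\right)$ and all $E\in\mathsf{Eff}\left(\mathrm{BS}\right)$; this is just the unpacking of the definition of the operational topology together with eq.~\eqref{eq:topology states}. Moreover $\mathcal{F}$ \emph{separates points} of $\mathsf{Transf}_{\mathbb{R}}\left(\mathrm{A},\mathrm{B}\right)$: by the very definition of transformations, a nonzero element of this vector space acts nontrivially on $\mathsf{St}_{\mathbb{R}}\left(\mathrm{AS}\right)$ for some ancilla $\mathrm{S}$, hence is detected by some $\varphi_{\mathrm{S},\rho,E}$. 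Since the ambient vector space is finite-dimensional, a separating family of linear functionals must span its dual, so one can select finitely many $\varphi_{1},\dots,\varphi_{k}\in\mathcal{F}$ forming a basis of $\mathsf{Transf}_{\mathbb{R}}\left(\mathrm{A},\mathrm{B}\right)^{*}$.

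Then $\left\Vert\Delta\right\Vert_{\mathcal{F}}:=\max_{1\le i\le k}\left|\varphi_{i}\left(\Delta\right)\right|$ is a bona fide norm on $\mathsf{Transf}_{\mathbb{R}}\left(\mathrm{A},\mathrm{B}\right)$, and since all norms on a finite-dimensional space are equivalent there is $c>0$ with $\left\Vert\Delta\right\Vert\le c\left\Vert\Delta\right\Vert_{\mathcal{F}}$ for every $\Delta$. If $\Delta_{n}\to\Delta$ operationally, then $\varphi_{i}\left(\Delta_{n}\right)\to\varphi_{i}\left(\Delta\right)$ for each of the finitely many $i$ (each $\varphi_{i}$ lies in $\mathcal{F}$, so this is a special case of operational convergence), hence $\left\Vert\Delta_{n}-\Delta\right\Vert_{\mathcal{F}}\to0$ and therefore $\left\Vert\Delta_{n}-\Delta\right\Vert\to0$. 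Combined with Proposition~\ref{prop:norm -> operational transf} this proves the equivalence; specialising $\mathrm{A}$ or $\mathrm{B}$ to the trivial system also recovers the claim for effects. The only delicate point is the one flagged above — recognising that the compactness-of-the-effect-set argument used for states does not transfer because of the uncontrolled supremum over ancillas, and replacing it by the finite-dimensional reduction to finitely many of the defining functionals; after that the argument is routine.
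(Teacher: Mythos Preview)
Your proof is correct and takes a genuinely different route from the paper's. The paper does exactly what you warned against: it asserts that ``by compactness, the supremum is achieved on system $\mathrm{S}^{*}$ and on the state $\rho^{*}$'' and then invokes the pointwise hypothesis $\left\Vert\left(\Xi_{n}\otimes\mathcal{I}_{\mathrm{S}^{*}}\right)\rho^{*}\right\Vert\to0$ at that single point. This glosses over precisely the uniformity issue you flagged: the maximiser $\left(\mathrm{S}^{*},\rho^{*}\right)$ a priori depends on $n$, and there is no argument given that the supremum over the collection of all ancillary systems is actually attained on one fixed system independent of $\Xi_{n}$. Your approach sidesteps this entirely by observing that finite-dimensionality of $\mathsf{Transf}_{\mathbb{R}}\left(\mathrm{A},\mathrm{B}\right)$ lets you replace the uncountable family of defining functionals by a finite spanning subfamily, after which equivalence of norms does the work. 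The paper's argument is shorter if one accepts the compactness claim at face value, but yours is the more careful one, and it makes transparent that the result really only uses finite-dimensionality and the separating property of the operational functionals, not any structural fact about how large an ancilla needs to be.
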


\begin{proof}
We proved sufficiency in proposition~\ref{prop:norm -> operational transf},
let us show necessity. We have that $\Delta_{n}$ converges to $\Delta$
operationally if and only if, for every system $\mathrm{S}$ and every
state $\rho\in\mathsf{St}\left(\mathrm{AS}\right)$, we have that
$\left(\Delta_{n}\otimes\mathcal{I}_{\mathrm{S}}\right)\rho$ converges
to $\left(\Delta\otimes\mathcal{I}_{\mathrm{S}}\right)\rho$ operationally.
By proposition~\ref{prop:equivalence topologies}, this means that
\[
\lim_{n\rightarrow+\infty}\left\Vert \left(\Xi_{n}\otimes\mathcal{I}_{\mathrm{S}}\right)\rho\right\Vert =0,
\]
where $\Xi_{n}:=\Delta_{n}-\Delta$. Then in the definition of $\left\Vert \Xi_{n}\right\Vert $,
by compactness, the supremum is achieved on system $\mathrm{S}^{*}$
and on the state $\rho^{*}\in\mathsf{St}\left(\mathrm{A}\mathrm{S}^{*}\right)$.
Therefore
\[
\lim_{n\rightarrow+\infty}\left\Vert \Xi_{n}\right\Vert =\lim_{n\rightarrow+\infty}\left\Vert \left(\Xi_{n}\otimes\mathcal{I}_{\mathrm{S}^{*}}\right)\rho^{*}\right\Vert =0,
\]
because, by hypothesis, $\lim_{n\rightarrow+\infty}\left\Vert \left(\Xi_{n}\otimes\mathcal{I}_{\mathrm{S}}\right)\rho\right\Vert =0$
for every system $\mathrm{S}$ and every $\rho\in\mathsf{St}\left(\mathrm{AS}\right)$.
This concludes the proof.
\end{proof}
One concludes that the two topologies are completely equivalent. Specifically,
the statements about the compactness of the set of states, effects,
and transformations are valid both in the operational and the norm
topology.

We can use proposition~\ref{prop:equivalence topologies} to prove
a property of limits of sequences of transformations we will use in
the following.
\begin{lem}
\label{lem:lemma product}Let $\left\{ \mathcal{A}_{n}\right\} $
be a sequence of transformations from $\mathrm{A}$ to $\mathrm{B}$
converging to $\mathcal{A}$, and let $\left\{ \mathcal{B}_{n}\right\} $
be a sequence of transformations from $\mathrm{A}$ to $\mathrm{B}$
converging to $\mathcal{B}$. Then $\left\{ \mathcal{B}_{n}\mathcal{A}_{n}\right\} $
converges to $\mathcal{BA}$.
\end{lem}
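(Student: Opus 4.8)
\emph{Proof proposal.} The plan is to reduce the claim to the \emph{operational} definition of convergence and to feed in two facts already available: (i) for a transformation $\mathcal{T}$ from $\mathrm{A}$ to $\mathrm{B}$ and a state $\rho\in\mathsf{St}\left(\mathrm{AS}\right)$ one has $\left\Vert \left(\mathcal{T}\otimes\mathcal{I}_{\mathrm{S}}\right)\rho\right\Vert \leq\left\Vert \mathcal{T}\right\Vert $ (immediate from the definition of the transformation norm), and $\left|\left(E\middle|\eta\right)\right|\leq\left\Vert \eta\right\Vert $ for every effect $E$ and every vector $\eta$ in the corresponding state space (shown inside the proofs of Propositions~\ref{prop:norm -> operational} and~\ref{prop:norm -> operational transf}); (ii) operational convergence of a sequence of transformations is equivalent to norm convergence (Proposition~\ref{prop:equivalence topologies} and its transformation analogue proved just after it). Here $\mathcal{A}_{n}$ is a transformation from $\mathrm{A}$ to $\mathrm{B}$ and, for the composition to make sense, $\mathcal{B}_{n}$ a transformation from $\mathrm{B}$ to $\mathrm{C}$.

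First I would fix an arbitrary ancillary system $\mathrm{S}$, a state $\rho\in\mathsf{St}\left(\mathrm{AS}\right)$, and an effect $E\in\mathsf{Eff}\left(\mathrm{CS}\right)$. By the definition of the operational topology it is enough to prove
\[
\lim_{n\rightarrow+\infty}\left(E\middle|\left(\mathcal{B}_{n}\mathcal{A}_{n}\otimes\mathcal{I}_{\mathrm{S}}\right)\middle|\rho\right)=\left(E\middle|\left(\mathcal{B}\mathcal{A}\otimes\mathcal{I}_{\mathrm{S}}\right)\middle|\rho\right).
\]
Using $\left(\mathcal{B}_{n}\mathcal{A}_{n}\right)\otimes\mathcal{I}_{\mathrm{S}}=\left(\mathcal{B}_{n}\otimes\mathcal{I}_{\mathrm{S}}\right)\left(\mathcal{A}_{n}\otimes\mathcal{I}_{\mathrm{S}}\right)$, I would insert the hybrid term $\left(E\middle|\left(\mathcal{B}\otimes\mathcal{I}_{\mathrm{S}}\right)\left(\mathcal{A}_{n}\otimes\mathcal{I}_{\mathrm{S}}\right)\middle|\rho\right)$ and split the difference of interest into two pieces. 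For the first piece, set $\sigma_{n}:=\left(\mathcal{A}_{n}\otimes\mathcal{I}_{\mathrm{S}}\right)\rho\in\mathsf{St}\left(\mathrm{BS}\right)$; the piece is $\left(E\middle|\left(\left(\mathcal{B}_{n}-\mathcal{B}\right)\otimes\mathcal{I}_{\mathrm{S}}\right)\middle|\sigma_{n}\right)$, whose absolute value is at most $\left\Vert \left(\left(\mathcal{B}_{n}-\mathcal{B}\right)\otimes\mathcal{I}_{\mathrm{S}}\right)\sigma_{n}\right\Vert \leq\left\Vert \mathcal{B}_{n}-\mathcal{B}\right\Vert $; since $\mathcal{B}_{n}\rightarrow\mathcal{B}$ operationally, it converges in norm, so this bound tends to $0$ \emph{uniformly} in $n$ despite $\sigma_{n}$ itself depending on $n$. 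For the second piece, note that $E':=E\left(\mathcal{B}\otimes\mathcal{I}_{\mathrm{S}}\right)$ is a \emph{fixed} effect on $\mathrm{BS}$, so the piece equals $\left(E'\middle|\left(\mathcal{A}_{n}\otimes\mathcal{I}_{\mathrm{S}}\right)\middle|\rho\right)-\left(E'\middle|\left(\mathcal{A}\otimes\mathcal{I}_{\mathrm{S}}\right)\middle|\rho\right)$, which tends to $0$ directly from the operational convergence $\mathcal{A}_{n}\rightarrow\mathcal{A}$ applied to the triple $\left(\mathrm{S},\rho,E'\right)$. Adding the two estimates proves the displayed limit; since $\mathrm{S}$, $\rho$, and $E$ were arbitrary, $\mathcal{B}_{n}\mathcal{A}_{n}\rightarrow\mathcal{B}\mathcal{A}$ operationally.

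The main obstacle is precisely the first piece: one cannot argue ``$\mathcal{B}_{n}-\mathcal{B}\rightarrow0$ on the fixed input state $\sigma$'' because the input state $\sigma_{n}=\left(\mathcal{A}_{n}\otimes\mathcal{I}_{\mathrm{S}}\right)\rho$ moves with $n$; the resolution is to pass through the \emph{norm} of transformations, which controls the error uniformly over all input states simultaneously — this is why the equivalence of the two topologies is needed. An alternative, essentially equivalent route would be to first establish submultiplicativity $\left\Vert \mathcal{X}\mathcal{Y}\right\Vert \leq\left\Vert \mathcal{X}\right\Vert \left\Vert \mathcal{Y}\right\Vert $ of the transformation norm and then expand $\mathcal{B}_{n}\mathcal{A}_{n}-\mathcal{B}\mathcal{A}=\mathcal{B}_{n}\left(\mathcal{A}_{n}-\mathcal{A}\right)+\left(\mathcal{B}_{n}-\mathcal{B}\right)\mathcal{A}$ in norm; I would nonetheless prefer the operational splitting above, as it needs only the definition of $\left\Vert \cdot\right\Vert $ together with Proposition~\ref{prop:norm non-increasing}, and avoids proving submultiplicativity outright.
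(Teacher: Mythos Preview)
Your proof is correct, but it takes a different route from the paper's. You insert the hybrid term $\mathcal{B}\mathcal{A}_{n}$, whereas the paper inserts $\mathcal{B}_{n}\mathcal{A}$; this seemingly minor choice changes which tool handles the ``moving'' piece.

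With the paper's split the two pieces are $\left(\mathcal{B}_{n}\otimes\mathcal{I}_{\mathrm{S}}\right)\left[\left(\mathcal{A}_{n}-\mathcal{A}\right)\otimes\mathcal{I}_{\mathrm{S}}\right]\rho$ and $\left[\left(\mathcal{B}_{n}-\mathcal{B}\right)\otimes\mathcal{I}_{\mathrm{S}}\right]\left(\mathcal{A}\otimes\mathcal{I}_{\mathrm{S}}\right)\rho$. The second piece has a \emph{fixed} input state, so pointwise convergence $\mathcal{B}_{n}\to\mathcal{B}$ suffices directly. For the first piece the paper simply applies norm monotonicity (Proposition~\ref{prop:norm non-increasing}): $\left\Vert \left(\mathcal{B}_{n}\otimes\mathcal{I}_{\mathrm{S}}\right)\xi\right\Vert \le\left\Vert \xi\right\Vert $ with $\xi=\left[\left(\mathcal{A}_{n}-\mathcal{A}\right)\otimes\mathcal{I}_{\mathrm{S}}\right]\rho$, and the bound tends to zero by state-level norm convergence (Proposition~\ref{prop:equivalence topologies}). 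So the paper never needs transformation-norm convergence of $\mathcal{B}_{n}$ at all---only monotonicity plus the state-level equivalence of topologies.

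Your split, by contrast, forces you to control $\mathcal{B}_{n}-\mathcal{B}$ on the \emph{moving} input $\sigma_{n}$, which you correctly resolve by passing to $\left\Vert \mathcal{B}_{n}-\mathcal{B}\right\Vert $ via the transformation-level equivalence of topologies. Both arguments work; the paper's is slightly more economical, using only state-level tools. Your closing remark that your argument ``needs only \ldots{} Proposition~\ref{prop:norm non-increasing}'' is a slip: you do not actually use monotonicity anywhere---that is precisely what the paper's choice of intermediate term buys instead.
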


\begin{proof}
We have that $\mathcal{A}_{n}$ converges to $\mathcal{A}$ if, for
every $\mathrm{S}$, $\left\{ \left(\mathcal{A}_{n}\otimes\mathcal{I}_{\mathrm{S}}\right)\rho_{\mathrm{AS}}\right\} $
converges to $\left(\mathcal{A}\otimes\mathcal{I}_{\mathrm{S}}\right)\rho_{\mathrm{AS}}$.
Similarly, $\left\{ \left(\mathcal{B}_{n}\otimes\mathcal{I}_{\mathrm{S}}\right)\rho_{\mathrm{AS}}\right\} $
converges to $\left(\mathcal{B}\otimes\mathcal{I}_{\mathrm{S}}\right)\rho_{\mathrm{AS}}$.
By proposition~\ref{prop:equivalence topologies}, this happens if
and only if $\left\Vert \left(\mathcal{A}_{n}\otimes\mathcal{I}_{\mathrm{S}}\right)\rho_{\mathrm{AS}}-\left(\mathcal{A}\otimes\mathcal{I}_{\mathrm{S}}\right)\rho_{\mathrm{AS}}\right\Vert \rightarrow0$,
and $\left\Vert \left(\mathcal{B}_{n}\otimes\mathcal{I}_{\mathrm{S}}\right)\rho_{\mathrm{AS}}-\left(\mathcal{B}\otimes\mathcal{I}_{\mathrm{S}}\right)\rho_{\mathrm{AS}}\right\Vert \rightarrow0$.
Now we are ready to prove that $\left\{ \mathcal{B}_{n}\mathcal{A}_{n}\right\} $
converges to $\mathcal{BA}$, viz.\ that $\left\Vert \left(\mathcal{B}_{n}\mathcal{A}_{n}\otimes\mathcal{I}_{\mathrm{S}}\right)\rho_{\mathrm{AS}}-\left(\mathcal{BA}\otimes\mathcal{I}_{\mathrm{S}}\right)\rho_{\mathrm{AS}}\right\Vert \rightarrow0$.
\[
\left\Vert \left(\mathcal{B}_{n}\mathcal{A}_{n}\otimes\mathcal{I}_{\mathrm{S}}\right)\rho_{\mathrm{AS}}-\left(\mathcal{BA}\otimes\mathcal{I}_{\mathrm{S}}\right)\rho_{\mathrm{AS}}\right\Vert \leq\left\Vert \mathcal{B}_{n}\left(\mathcal{A}_{n}\otimes\mathcal{I}_{\mathrm{S}}\right)\rho_{\mathrm{AS}}-\mathcal{B}_{n}\left(\mathcal{A}\otimes\mathcal{I}_{\mathrm{S}}\right)\rho_{\mathrm{AS}}\right\Vert +
\]
\[
+\left\Vert \mathcal{B}_{n}\left(\mathcal{A}\otimes\mathcal{I}_{\mathrm{S}}\right)\rho_{\mathrm{AS}}-\mathcal{B}\left(\mathcal{A}\otimes\mathcal{I}_{\mathrm{S}}\right)\rho_{\mathrm{AS}}\right\Vert 
\]
Now, $\left\Vert \mathcal{B}_{n}\left(\mathcal{A}\otimes\mathcal{I}_{\mathrm{S}}\right)\rho_{\mathrm{AS}}-\mathcal{B}\left(\mathcal{A}\otimes\mathcal{I}_{\mathrm{S}}\right)\rho_{\mathrm{AS}}\right\Vert \rightarrow0$
because $\mathcal{B}_{n}\rightarrow\mathcal{B}$. Let us apply the
monotonicity of the norm:
\[
\left\Vert \mathcal{B}_{n}\left(\mathcal{A}_{n}\otimes\mathcal{I}_{\mathrm{S}}\right)\rho_{\mathrm{AS}}-\mathcal{B}_{n}\left(\mathcal{A}\otimes\mathcal{I}_{\mathrm{S}}\right)\rho_{\mathrm{AS}}\right\Vert \leq\left\Vert \left(\mathcal{A}_{n}\otimes\mathcal{I}_{\mathrm{S}}\right)\rho_{\mathrm{AS}}-\left(\mathcal{A}\otimes\mathcal{I}_{\mathrm{S}}\right)\rho_{\mathrm{AS}}\right\Vert .
\]
Then $\left\Vert \mathcal{B}_{n}\left(\mathcal{A}_{n}\otimes\mathcal{I}_{\mathrm{S}}\right)\rho_{\mathrm{AS}}-\mathcal{B}_{n}\left(\mathcal{A}\otimes\mathcal{I}_{\mathrm{S}}\right)\rho_{\mathrm{AS}}\right\Vert \rightarrow0$,
because $\mathcal{A}_{n}\rightarrow\mathcal{A}$, by which we conclude
that
\[
\left\Vert \left(\mathcal{B}_{n}\mathcal{A}_{n}\otimes\mathcal{I}_{\mathrm{S}}\right)\rho_{\mathrm{AS}}-\left(\mathcal{BA}\otimes\mathcal{I}_{\mathrm{S}}\right)\rho_{\mathrm{AS}}\right\Vert \rightarrow0.
\]
\end{proof}

\section{Causality and its consequences\label{sec:Causality}}

In this section we will examine how the direction in which ``information
flows'' in a theory constrains the structure of the theory itself.
In causal theories information propagates in the same order as the
input-output order given by sequential composition. More poetically,
in causal theories information propagates from the past to the future,
and one is able to choose later experiments depending on the outcomes
of present observations. Causality is a standard setting in most physical
descriptions of Nature, and in the GPT literature it is often assumed
implicitly, without even mentioning it directly. Causality will be
the first axiom we impose in this thesis. Besides being reasonable
from a physical point of view, Causality will bring a lot of interesting
and important consequences to a theory, making it easier to tackle.

Let us begin with the formal definition of causal theory.
\begin{defn}[Causality \cite{Chiribella-purification}]
\label{def:causality}A theory is \emph{causal} if for every preparation-test
$\left\{ \rho_{i}\right\} _{i\in X}$ and every observation-test $\left\{ a_{j}\right\} _{j\in Y}$
on any system $\mathrm{A}$, the probability $p_{i}:=\sum_{j\in Y}\left(a_{j}\middle|\rho_{i}\right)$
is \emph{independent} of the observation-test $\left\{ a_{j}\right\} _{j\in Y}$.
\end{defn}

In other words, if $\left\{ a_{j}\right\} _{j\in Y}$ and $\left\{ b_{k}\right\} _{k\in Z}$
are two observation-tests, we have
\begin{equation}
\sum_{j\in Y}\left(a_{j}\middle|\rho_{i}\right)=\sum_{k\in Z}\left(b_{k}\middle|\rho_{i}\right).\label{eq:causality}
\end{equation}
Loosely speaking, the preparation of the system does not depend on
the choice of subsequent (or ``future'') measurements, a sort of
no-signalling condition from the future. In this way, the direction
in which information flows, as witnessed by marginal probabilities
in definition~\ref{def:causality}, coincides with the ordering given
by sequential composition. In general, this is not obvious, as the
following example shows.
\begin{example}
Consider a theory where the states of a system are the quantum operations
on that system. Specifically, deterministic states are quantum channels.
Then we can consider the channels of this higher theory to be quantum
``supermaps'', which map quantum operations into quantum operations
\cite{Circuit-architecture,Supermaps,Hierarchy-combs,Perinotti1,Perinotti2,Gour-super}.

Let us consider a preparation of a state $\mathcal{C}_{i}$ followed
by a measurement $\mathcal{A}_{j}$, which we represent in the higher
theory as\[
\begin{aligned}\Qcircuit @C=1em @R=.7em @!R {  &\prepareC{\cC_{i}}  & \qw \poloFantasmaCn{\rA} &  \measureD{\cA_{j}}}\end{aligned}~.
\]Note that the measurement follows the preparation in the composition
order. But if we recall that $\mathcal{C}_{i}$ is a quantum operation,
namely a box with an input and an output wire, in quantum theory such
a diagram will look like\[
\begin{aligned}\Qcircuit @C=1em @R=.7em @!R {  &\multiprepareC{1}{\rho_{j}}  & \qw \poloFantasmaCn{\rA} &  \gate{\cC_{i}} & \qw \poloFantasmaCn{\rA} &\multimeasureD{1}{a_{j}}\\ &\pureghost{\rho_{j}}&\qw \poloFantasmaCn{\rS} &\qw &\qw &\ghost{a_j}}\end{aligned}~,
\]for some system $\mathrm{S}$. Note that the effect $\mathcal{A}_{j}$
is split into two parts: one is before the quantum operation and the
other is after, otherwise we could not have a diagram with no external
wires. Therefore, in the theory in which states are quantum operations,
the preparation of a state is influenced by a subsequent measurement,
so information does not propagate in the same direction as sequential
composition.
\end{example}

Definition~\ref{def:causality} can be recast in an equivalent way
\cite[lemma 4]{Chiribella-purification}, which is often more practical
to work with. 
\begin{prop}
A theory is causal if and only if for every system $\mathrm{A}$ there
is a unique deterministic effect $u_{\mathrm{A}}$.
\end{prop}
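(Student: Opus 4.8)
The plan is to prove both directions of the equivalence. For the ``if'' direction, suppose every system $\mathrm{A}$ has a unique deterministic effect $u_{\mathrm{A}}$. Given a preparation-test $\left\{\rho_{i}\right\}_{i\in X}$ and an observation-test $\left\{a_{j}\right\}_{j\in Y}$ on $\mathrm{A}$, I would observe that the coarse-graining $\sum_{j\in Y}a_{j}$ is a deterministic effect (it is the effect associated with the observation-test, obtained by summing all its outcomes). By uniqueness, $\sum_{j\in Y}a_{j}=u_{\mathrm{A}}$ regardless of which observation-test we started from. Hence $p_{i}=\sum_{j\in Y}\left(a_{j}\middle|\rho_{i}\right)=\left(u_{\mathrm{A}}\middle|\rho_{i}\right)$, which manifestly does not depend on the choice of observation-test. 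This establishes Causality.

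\textbf{The converse direction.} For the ``only if'' direction, assume the theory is causal and fix a system $\mathrm{A}$. First I note that at least one deterministic effect exists: take any observation-test (e.g.\ any test with output the trivial system) and coarse-grain it to a single outcome; the result is a deterministic effect. Now suppose $u$ and $u'$ are both deterministic effects on $\mathrm{A}$. Each of them, viewed as a one-outcome observation-test, can be compared via Causality: for any preparation-event $\rho$ (normalised or not), eq.~\eqref{eq:causality} applied to the two single-outcome tests $\left\{u\right\}$ and $\left\{u'\right\}$ gives $\left(u\middle|\rho\right)=\left(u'\middle|\rho\right)$. Since this holds for every preparation-event $\rho$, and preparation-events span $\mathsf{St}_{\mathbb{R}}\left(\mathrm{A}\right)$, we get $\left(u\middle|\xi\right)=\left(u'\middle|\xi\right)$ for all $\xi\in\mathsf{St}_{\mathbb{R}}\left(\mathrm{A}\right)$, i.e.\ $u=u'$ as effects (recall that effects are determined by their action on all states). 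Hence the deterministic effect is unique.

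\textbf{Main obstacle.} The only genuinely delicate point is making sure that the comparison in the converse direction is legitimate: Definition~\ref{def:causality} quantifies over preparation-\emph{tests} and observation-\emph{tests}, whereas to conclude $u=u'$ I need to test against a spanning set of \emph{states} (equivalently, against arbitrary preparation-events, including sub-normalised ones). This is handled by noting that a single preparation-event $\rho_{i}$ can always be embedded in a preparation-test (e.g.\ complete it to a test $\left\{\rho_{i},\ldots\right\}$, or scale), and that the defining equation of Causality holds component-wise in $i$, so it yields $\left(u\middle|\rho_{i}\right)=\left(u'\middle|\rho_{i}\right)$ for each individual preparation-event. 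Since every vector in $\mathsf{St}_{\mathbb{R}}\left(\mathrm{A}\right)$ is a linear combination of (normalised) states, linearity of effects finishes the argument. I would also remark in passing that the common value $\left(u_{\mathrm{A}}\middle|\rho\right)$ equals $\left\Vert\rho\right\Vert$ for normalised states, recovering the familiar fact that in a causal theory the unique deterministic effect computes the ``normalisation'' of a state, though this is not needed for the statement.
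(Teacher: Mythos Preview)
Your proof is correct and follows essentially the same approach as the paper: coarse-grain an arbitrary observation-test to obtain a deterministic effect and invoke uniqueness for the ``if'' direction, and view two deterministic effects as single-outcome observation-tests and apply Causality state-by-state for the ``only if'' direction. Your discussion of the ``main obstacle'' is more explicit than the paper's (which simply asserts the equality for every $\rho_{i}\in\mathsf{St}(\mathrm{A})$), but this is an elaboration rather than a different route.
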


\begin{proof}
Necessity. Suppose, by contradiction, that there are two deterministic
effects $u$ and $u'$ for system $\mathrm{A}$. Deterministic effects
are particular examples of observation-tests. Eq.~\eqref{eq:causality}
then states that $\left(u\middle|\rho_{i}\right)=\left(u'\middle|\rho_{i}\right)$
for every $\rho_{i}\in\mathsf{St}\left(\mathrm{A}\right)$. This means
that $u=u'$.

Sufficiency. Suppose there is a unique deterministic effect $u_{\mathrm{A}}$
for system $\mathrm{A}$, and consider the observation-test $\left\{ a_{j}\right\} _{j\in Y}$.
By doing a coarse-graining over the effects, we obtain the deterministic
effect $u'=\sum_{j\in Y}a_{j}$. Since the deterministic effect is
unique, it must be $u'=u$. Hence, for every state $\rho_{i}$, we
have
\[
\sum_{j\in Y}\left(a_{j}\middle|\rho_{i}\right)=\left(u\middle|\rho_{i}\right),
\]
and the right-hand side does not depend on the choice of the observation-test.
This means that the theory is causal.
\end{proof}
\begin{example}
We saw in example~\ref{ex:states effects in quantum mechanics} that
in quantum mechanics there is only one deterministic effect, the identity
operator. Hence quantum mechanics is a causal theory.
\end{example}

We noticed that if we perform a coarse-graining over the effects in
an observation-test, we have a deterministic effect. By the uniqueness
of the deterministic effect, we have that if $\left\{ a_{i}\right\} _{i\in X}$
is an observation-test on system $\mathrm{A}$, then $\sum_{i\in X}a_{i}=u$,
where $u$ is the deterministic effect of $\mathrm{A}$. This is a
necessary condition for $\left\{ a_{i}\right\} _{i\in X}$ to be an
observation-test. Specifically, this means that if $a$ is an effect,
$u-a$ is an effect too.

Let us see a straightforward corollary of uniqueness of the deterministic
effect.
\begin{cor}
In a causal theory, if $u_{\mathrm{A}}$ and $u_{\mathrm{B}}$ are
the deterministic effects of systems $\mathrm{A}$ and $\mathrm{B}$
respectively, then the deterministic effect for system $\mathrm{AB}$
is $u_{\mathrm{A}}\otimes u_{\mathrm{B}}$.
\end{cor}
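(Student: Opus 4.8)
The plan is to identify $u_{\mathrm{A}}\otimes u_{\mathrm{B}}$ as a valid deterministic effect on $\mathrm{AB}$, and then to conclude by the uniqueness of deterministic effects in a causal theory, which is the content of the proposition proved immediately above. So the entire argument rests on two ingredients already available in the excerpt: closure of the theory under parallel composition, and the characterisation of causality via a unique deterministic effect $u_{\mathrm{A}}$ for each system $\mathrm{A}$.

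First I would note that $u_{\mathrm{A}}$ and $u_{\mathrm{B}}$ are, by definition, deterministic observation-tests on $\mathrm{A}$ and $\mathrm{B}$ respectively, i.e.\ each has a one-element outcome set. Their parallel composition $u_{\mathrm{A}}\otimes u_{\mathrm{B}}$ is then an observation-test from $\mathrm{AB}$ to $\mathrm{I}\otimes\mathrm{I}=\mathrm{I}$ with outcome set a singleton times a singleton, hence again a singleton; therefore $u_{\mathrm{A}}\otimes u_{\mathrm{B}}$ is a deterministic effect on $\mathrm{AB}$. Since the theory is an operational theory, this parallel composition is indeed a legitimate test of the theory, so there is no question of it lying outside $\mathsf{Eff}\left(\mathrm{AB}\right)$.

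Then I would invoke the proposition above: because the theory is causal, system $\mathrm{AB}$ admits a \emph{unique} deterministic effect $u_{\mathrm{AB}}$. Having just produced one such effect, namely $u_{\mathrm{A}}\otimes u_{\mathrm{B}}$, uniqueness forces $u_{\mathrm{AB}}=u_{\mathrm{A}}\otimes u_{\mathrm{B}}$, which is the claim.

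There is essentially no obstacle here; the only point that needs a moment of care is the bookkeeping that parallel composition of two singleton outcome sets is again a singleton (so the composite test really is deterministic) and that $\mathrm{I}\otimes\mathrm{I}=\mathrm{I}$ (so the composite really is an effect on $\mathrm{AB}$ rather than a transformation with nontrivial output). Both are immediate from the definitions of deterministic test, parallel composition, and the trivial system given earlier in the chapter.
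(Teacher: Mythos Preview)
Your proof is correct and follows exactly the same approach as the paper: show that $u_{\mathrm{A}}\otimes u_{\mathrm{B}}$ is a deterministic effect on $\mathrm{AB}$ because the parallel composition of two single-outcome tests is single-outcome, then invoke uniqueness of the deterministic effect. The paper's version is terser, but the logic is identical.
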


\begin{proof}
The parallel composition of two single-outcome tests is clearly a
single-outcome test, hence the effect $u_{\mathrm{A}}\otimes u_{\mathrm{B}}$
is deterministic and acts on $\mathrm{AB}$. By the uniqueness of
the deterministic effect, we conclude that $u_{\mathrm{AB}}=u_{\mathrm{A}}\otimes u_{\mathrm{B}}$.
\end{proof}
In a causal theory, we can define marginal states. Suppose we have
a bipartite state of system $\mathrm{AB}$, and we are interested
in the state of subsystem $\mathrm{A}$. We want to throw away all
the information concerning system $\mathrm{B}$. This operation resembles
marginalisation in probability theory, whence the name. In quantum
mechanics, this operation is simply given by taking the partial trace
over $\mathrm{B}$.
\begin{defn}
The \emph{marginal state} (\emph{marginal} for short)\emph{ }$\rho_{\mathrm{A}}$
on system $\mathrm{A}$ of a bipartite state $\sigma_{\mathrm{AB}}$
is obtained by applying the deterministic effect to $\mathrm{B}$:\[
\begin{aligned}\Qcircuit @C=1em @R=.7em @!R { & \prepareC{\rho}    & \qw \poloFantasmaCn{\rA} &  \qw   }\end{aligned}~=\!\!\!\!\begin{aligned}\Qcircuit @C=1em @R=.7em @!R { & \multiprepareC{1}{\sigma}    & \qw \poloFantasmaCn{\rA} &  \qw   \\  & \pureghost{\sigma}    & \qw \poloFantasmaCn{\rB}  &   \measureD{u} }\end{aligned}~.
\]
\end{defn}

For this reason, we will sometimes use the notation $\mathrm{tr}_{\mathrm{A}}$
for the unique deterministic effect on $\mathrm{A}$. Therefore, $\rho_{\mathrm{A}}=\mathrm{tr}_{\mathrm{B}}\sigma_{\mathrm{AB}}$.

In a causal theory, we have also useful characterisations of channels
and tests \cite[lemma 5]{Chiribella-purification}.
\begin{prop}
\label{prop:characterization channel}Let $\mathcal{C}\in\mathsf{Transf}\left(\mathrm{A},\mathrm{B}\right)$.
$\mathcal{C}$ is a channel if and only if $u_{\mathrm{B}}\mathcal{C}=u_{\mathrm{A}}$.
\end{prop}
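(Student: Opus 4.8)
The plan is to prove both implications by invoking the preceding proposition, that a theory is causal precisely when the deterministic effect of each system is unique. For necessity, suppose $\mathcal{C}$ is a channel, i.e.\ $\{\mathcal{C}\}$ is a single-outcome test from $\mathrm{A}$ to $\mathrm{B}$. Composing it sequentially with the single-outcome observation-test $\{u_{\mathrm{B}}\}$ produces the single-outcome test $\{u_{\mathrm{B}}\mathcal{C}\}$ from $\mathrm{A}$ to the trivial system; hence $u_{\mathrm{B}}\mathcal{C}$ is a deterministic effect on $\mathrm{A}$, and by uniqueness $u_{\mathrm{B}}\mathcal{C}=u_{\mathrm{A}}$.

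For sufficiency, assume $u_{\mathrm{B}}\mathcal{C}=u_{\mathrm{A}}$. I would pick a test $\{\mathcal{C}_{i}\}_{i\in X}$ containing $\mathcal{C}$ as one of its events, say $\mathcal{C}=\mathcal{C}_{i_{0}}$ (if $X$ is a singleton there is nothing to prove). Coarse-graining all outcomes $i\neq i_{0}$ into one yields the test $\{\mathcal{C},\mathcal{D}\}$ with $\mathcal{D}:=\sum_{i\neq i_{0}}\mathcal{C}_{i}$, whose associated channel is $\mathcal{E}:=\mathcal{C}+\mathcal{D}$. Applying the necessity direction to $\mathcal{E}$ gives $u_{\mathrm{B}}\mathcal{E}=u_{\mathrm{A}}$, so $u_{\mathrm{A}}=u_{\mathrm{B}}\mathcal{C}+u_{\mathrm{B}}\mathcal{D}=u_{\mathrm{A}}+u_{\mathrm{B}}\mathcal{D}$, whence $u_{\mathrm{B}}\mathcal{D}=0$. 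If this forces $\mathcal{D}=0$, then $\mathcal{C}=\mathcal{E}$ is a channel and we are done.

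To show $\mathcal{D}=0$: for every system $\mathrm{S}$ and every state $\rho\in\mathsf{St}(\mathrm{AS})$, put $\sigma:=(\mathcal{D}\otimes\mathcal{I}_{\mathrm{S}})\rho$. Being a sum of states of $\mathrm{BS}$, $\sigma$ lies in the cone $\mathsf{St}_{+}(\mathrm{BS})$, and $(u_{\mathrm{BS}}|\sigma)=\bigl((u_{\mathrm{B}}\mathcal{D})\otimes u_{\mathrm{S}}\bigm|\rho\bigr)=0$ since $u_{\mathrm{B}}\mathcal{D}=0$. For every effect $a\in\mathsf{Eff}(\mathrm{BS})$ both $a$ and $u_{\mathrm{BS}}-a$ are effects (a consequence of causality already recorded in the excerpt), so $0\le(a|\sigma)$ and $0\le(u_{\mathrm{BS}}-a|\sigma)=-(a|\sigma)$, forcing $(a|\sigma)=0$ for all effects $a$; by tomography $\sigma=0$, and since $\mathrm{S},\rho$ were arbitrary, $\mathcal{D}=0$.

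The composition-of-tests bookkeeping (that $\{u_{\mathrm{B}}\mathcal{C}\}$, $\{\mathcal{C},\mathcal{D}\}$, $\mathcal{E}$ etc.\ are legitimate objects of the theory) is routine. The one step with genuine content, and the place I expect the difficulty to sit, is the last paragraph: that a transformation $\mathcal{D}$ with $u_{\mathrm{B}}\mathcal{D}=0$ must vanish. This uses causality essentially (through $u-a$ being an effect whenever $a$ is) together with the positivity of the state--effect pairing, and it must be carried out with an ancilla $\mathrm{S}$ present, since without Local Tomography vanishing on unextended states would not be enough to conclude $\mathcal{D}=0$ as a transformation.
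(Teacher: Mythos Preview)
Your proof is correct and follows essentially the same route as the paper's: both directions proceed exactly as you describe, reducing sufficiency to the claim that $u_{\mathrm{B}}\mathcal{D}=0$ forces $\mathcal{D}=0$. The paper in fact glosses over precisely the step you flag as the one with genuine content, passing from $u_{\mathrm{B}}\sum_{i\neq i_0}\mathcal{C}_i=0$ to $\sum_{i\neq i_0}\mathcal{C}_i=0$ with a bare ``namely''; your final paragraph supplies the missing justification, correctly carried out with an ancilla so that the conclusion holds at the level of transformations rather than just locally on states.
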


\begin{proof}
Necessity is straightforward. Since a channel is a deterministic transformation,
then $u_{\mathrm{B}}\mathcal{C}$ is a deterministic effect on system
$\mathrm{A}$. By the uniqueness of the deterministic effect, $u_{\mathrm{B}}\mathcal{C}=u_{\mathrm{A}}$.

Sufficiency. Suppose we have a test $\left\{ \mathcal{C}_{i}\right\} _{i\in X}$
from system $\mathrm{A}$ to system $\mathrm{B}$ such that $\mathcal{C}:=\mathcal{C}_{i_{0}}$
satisfies $u_{\mathrm{B}}\mathcal{C}=u_{\mathrm{A}}$. We want to
prove that $\left\{ \mathcal{C}_{i}\right\} _{i\in X}$ is a deterministic
test. Let us consider the channel $\mathcal{C}'$ associated with
the test $\left\{ \mathcal{C}_{i}\right\} _{i\in X}$, namely $\mathcal{C}'=\sum_{i\in X}\mathcal{C}_{i}$.
Since $\mathcal{C}'$ is a channel, we have $u_{\mathrm{B}}\mathcal{C}'=u_{\mathrm{A}}$.
Recalling the expression of $\mathcal{C}'$, we have
\[
u_{\mathrm{A}}=u_{\mathrm{B}}\mathcal{C}'=u_{\mathrm{B}}\mathcal{C}_{i_{0}}+u_{\mathrm{B}}\sum_{i\neq i_{0}}\mathcal{C}_{i}=u_{\mathrm{A}}+u_{\mathrm{B}}\sum_{i\neq i_{0}}\mathcal{C}_{i},
\]
because $u_{\mathrm{B}}\mathcal{C}_{i_{0}}=u_{\mathrm{A}}$. This
means $u_{\mathrm{B}}\sum_{i\neq i_{0}}\mathcal{C}_{i}=0$, namely
$\sum_{i\neq i_{0}}\mathcal{C}_{i}=0$. Therefore $\mathcal{C}=\mathcal{C}'$,
whence the test was in fact deterministic. Thus $\mathcal{C}$ is
a channel.
\end{proof}
Note that in quantum theory this is precisely the statement that a
quantum operation is a quantum channel if and only if it is trace-preserving.
This is even more obvious if we write $u_{\mathrm{A}}$ as $\mathrm{tr}_{\mathrm{A}}$. 

Specifically, if $\mathrm{A}$ is the trivial system, we have that
a state $\rho_{\mathrm{B}}$ is deterministic if and only if $\mathrm{tr}\:\rho=1$.
Moreover, for every test $\left\{ \mathcal{C}_{i}\right\} _{i\in X}$
from $\mathrm{A}$ to $\mathrm{B}$, we can consider the associated
channel $\sum_{i\in X}\mathcal{C}_{i}$. Therefore we have
\begin{equation}
\sum_{i\in X}u_{\mathrm{B}}\mathcal{C}_{i}=u_{\mathrm{A}}.\label{eq:necessity test}
\end{equation}
This is a necessary condition. In quantum theory this is the statement
that the quantum channel associated with a quantum instrument is trace-preserving.

Suppose we have two parties sharing a bipartite state. In a causal
theory it is impossible for a party to send a message to the other
by acting locally on her own physical system and relying on correlations
she shares with the other party. This form of instantaneous communication
is called \emph{signalling}. In more precise terms, in a causal theory
it is not possible for a party to communicate the outcome of a local
measurement on her system to the other without exchanging physical
systems, classical communication included, as it is usually mediated
by electromagnetic signals \cite[theorem 1]{Chiribella-purification}.
\begin{thm}
In a causal theory it is impossible to have signalling without the
exchange of physical systems.
\end{thm}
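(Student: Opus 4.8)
The plan is to turn the informal slogan into a precise claim and then reduce it to Proposition~\ref{prop:characterization channel}. Fix a bipartite state $\rho\in\mathsf{St}\left(\mathrm{AB}\right)$ shared by Alice and Bob. ``Alice acts locally without sending a physical system to Bob'' means that her device is a test $\left\{\mathcal{A}_i\right\}_{i\in X}$ from $\mathrm{A}$ to some system $\mathrm{A}'$ \emph{that she keeps}, implemented on the global system as $\mathcal{A}_i\otimes\mathcal{I}_{\mathrm{B}}$ (the output wire lands on $\mathrm{A}'$, never on $\mathrm{B}$). Bob reads out his share with an observation-test $\left\{b_j\right\}_{j\in Y}$ on $\mathrm{B}$. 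Signalling from Alice to Bob would mean that the probability distribution of Bob's outcome $j$ depends on which test $\left\{\mathcal{A}_i\right\}$ Alice chose; so I would show this distribution is independent of Alice's test.

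First I would write the joint outcome probability. Since Alice retains $\mathrm{A}'$ but does not forward it, the relevant observation on her wing is the deterministic effect $u_{\mathrm{A}'}$, which is unique because the theory is causal; hence $p_{ij}=\left(u_{\mathrm{A}'}\otimes b_j\,\middle|\,\left(\mathcal{A}_i\otimes\mathcal{I}_{\mathrm{B}}\right)\,\middle|\,\rho\right)$. Bob, ignorant of $i$, sees the marginal $p_j=\sum_{i\in X}p_{ij}$. Using linearity of the pairing and the interchange law between sequential and parallel composition, $\sum_{i}\left(u_{\mathrm{A}'}\otimes b_j\right)\left(\mathcal{A}_i\otimes\mathcal{I}_{\mathrm{B}}\right)=\left(u_{\mathrm{A}'}\mathcal{A}\right)\otimes b_j$, where $\mathcal{A}:=\sum_{i\in X}\mathcal{A}_i$ is the channel associated with Alice's test. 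By Proposition~\ref{prop:characterization channel} one has $u_{\mathrm{A}'}\mathcal{A}=u_{\mathrm{A}}$, so $p_j=\left(u_{\mathrm{A}}\otimes b_j\,\middle|\,\rho\right)=\left(b_j\,\middle|\,\rho_{\mathrm{B}}\right)$, with $\rho_{\mathrm{B}}=\mathrm{tr}_{\mathrm{A}}\rho$ Bob's marginal state. This depends only on $\rho$ and on Bob's own test, not on Alice's choice; hence no information about Alice's intervention can be extracted from Bob's statistics. Since $\mathrm{AB}\approx\mathrm{BA}$, exchanging the roles gives the same conclusion in the other direction, which establishes the theorem. (The argument is unchanged if Bob's local operation is a general test $\left\{\mathcal{B}_j\right\}$ from $\mathrm{B}$ to $\mathrm{B}'$ rather than an observation-test: one simply replaces $b_j$ by $u_{\mathrm{B}'}\mathcal{B}_j$.)

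The computational content is thus a one-line consequence of Proposition~\ref{prop:characterization channel}, i.e.\ of Causality via the uniqueness of the deterministic effect. I expect the only genuine obstacle to be the \emph{modelling} step: arguing convincingly that ``no exchange of physical systems (classical communication included)'' is faithfully captured by requiring Alice's device to act as $\mathcal{A}_i\otimes\mathcal{I}_{\mathrm{B}}$ with output on $\mathrm{A}'$, and dually for Bob, so that the only quantity accessible to the other party is the one obtained after coarse-graining over the acting party's outcomes. Once this identification is granted, causality does the rest, and one should also remark that the marginal $\rho_{\mathrm{B}}$ is literally unchanged by \emph{any} channel Alice applies to her side, which is the sharpest form of the no-signalling statement.
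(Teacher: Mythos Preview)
Your proposal is correct and follows essentially the same route as the paper: both arguments reduce the no-signalling claim to the identity $u_{\mathrm{A}'}\mathcal{A}=u_{\mathrm{A}}$ for the coarse-grained channel $\mathcal{A}=\sum_i\mathcal{A}_i$, i.e.\ to Proposition~\ref{prop:characterization channel}. The only cosmetic difference is that the paper has both parties perform general tests from the outset and computes Alice's marginal, whereas you start with Bob doing an observation-test and compute Bob's marginal (later noting the obvious extension to a general test on Bob's side); the underlying calculation is identical.
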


\begin{proof}
Suppose we have two distant parties, Alice and Bob, who share a bipartite
state $\sigma_{\mathrm{AB}}$. Suppose Alice performs a local test
$\left\{ \mathcal{A}_{i}\right\} _{i\in X}$ on $\mathrm{A}$ and
Bob performs a local test $\left\{ \mathcal{B}_{j}\right\} _{j\in Y}$
on $\mathrm{B}$. Let us define the joint probability $p_{ij}:=\mathrm{tr}_{\mathrm{AB}}\left(\mathcal{A}_{i}\otimes\mathcal{B}_{j}\right)\sigma_{\mathrm{AB}}$
and the marginal probabilities as $p_{i}^{\left(\mathrm{A}\right)}:=\sum_{j\in Y}\mathrm{tr}_{\mathrm{AB}}\left(\mathcal{A}_{i}\otimes\mathcal{B}_{j}\right)\sigma_{\mathrm{AB}}$
and $p_{j}^{\left(\mathrm{B}\right)}:=\sum_{i\in X}\mathrm{tr}_{\mathrm{AB}}\left(\mathcal{A}_{i}\otimes\mathcal{B}_{j}\right)\sigma_{\mathrm{AB}}$.
Each party cannot acquire any information about the outcomes of the
other based only on its marginal probability. Indeed, let us examine
Alice's marginal probability $p_{i}^{\left(\mathrm{A}\right)}$ better.
Let $\rho_{\mathrm{A}}$ be the marginal state of $\sigma_{\mathrm{AB}}$
on system $\mathrm{A}$. 
\[
p_{i}^{\left(\mathrm{A}\right)}=\sum_{j\in Y}\mathrm{tr}_{\mathrm{AB}}\left(\mathcal{A}_{i}\otimes\mathcal{B}_{j}\right)\sigma_{\mathrm{AB}}=\left(u_{\mathrm{A}}\mathcal{A}_{i}\otimes\sum_{j\in Y}u_{\mathrm{B}}\mathcal{B}_{j}\right)\sigma_{\mathrm{AB}}=
\]
\[
=u_{\mathrm{A}}\mathcal{A}_{i}\otimes\mathrm{tr}_{\mathrm{B}}\sigma_{\mathrm{AB}}=\mathrm{tr}_{\mathrm{A}}\mathcal{A}_{i}\rho_{\mathrm{A}}
\]
We see that Alice's marginal probability does not depend on the test
performed by Bob at all, so she cannot get any information about the
outcome of Bob's test based only on her system. A similar reasoning
applies to Bob's party: he cannot gain any information about the outcome
of Alice's test.
\end{proof}
Since in a causal theory the order of composition coincides with the
order in which information flows, we can choose a later test according
to the result of a previous test. Suppose we perform a test $\left\{ \mathcal{C}_{i}\right\} _{i\in X}$
from $\mathrm{A}$ to $\mathrm{B}$ first. Depending on the outcome
$i$, then we perform different tests $\left\{ \mathcal{D}_{j_{i}}^{\left(i\right)}\right\} _{j_{i}\in Y_{i}}$
from $\mathrm{B}$ to $\mathrm{C}$. Here the superscript in round
brackets is aimed at highlighting the dependence of the test on the
outcome of the previous one. Let us make this concept more precise
with the following definition.
\begin{defn}
If $\left\{ \mathcal{C}_{i}\right\} _{i\in X}$ is a test from $\mathrm{A}$
to $\mathrm{B}$ and, for every $i$, $\left\{ \mathcal{D}_{j_{i}}^{\left(i\right)}\right\} _{j_{i}\in Y_{i}}$
is a test from $\mathrm{B}$ to $\mathrm{C}$, then the \emph{conditioned}
(or \emph{classically controlled})\emph{ test} is a test from $\mathrm{A}$
to $\mathrm{C}$ with outcomes $\left(i,j_{i}\right)\in Z:=\bigcup_{i\in X}\left\{ i\right\} \times Y_{i}$,
and events $\left\{ \mathcal{D}_{j_{i}}^{\left(i\right)}\circ\mathcal{C}_{i}\right\} _{\left(i,j_{i}\right)\in Z}$.
\end{defn}

The graphical representation is as usual.\[
\begin{aligned}\Qcircuit @C=1em @R=.7em @!R {    & \qw \poloFantasmaCn{\rA} &  \gate{\cD_{j_{i}}^{(i)} \circ \cC_{i}} & \qw \poloFantasmaCn{\rC} &\qw}\end{aligned}:= \begin{aligned}\Qcircuit @C=1em @R=.7em @!R {    & \qw \poloFantasmaCn{\rA} &  \gate{\cC_{i}} & \qw \poloFantasmaCn{\rB} &  \gate{\cD_{j_{i}}^{(i)}} & \qw \poloFantasmaCn{\rC} &\qw}\end{aligned}~.
\]Conditioning expresses the idea of choosing what to do at later steps
using the classical information about outcomes obtained at previous
steps. The test $\left\{ \mathcal{D}_{j_{i}}^{\left(i\right)}\circ\mathcal{C}_{i}\right\} $
is well-defined thanks to Causality: it satisfies the necessary condition
of eq.~\eqref{eq:necessity test}. Indeed
\[
\sum_{i}\sum_{j_{i}}u\mathcal{D}_{j_{i}}^{\left(i\right)}\circ\mathcal{C}_{i}=\sum_{i}u\mathcal{C}_{i}=u.
\]

A particular case of conditioning is randomisation.
\begin{defn}
If $\left\{ p_{i}\right\} _{i\in X}$ is a set of probabilities\footnote{Recall that a set of probabilities can be regarded as a test from
the trivial system to itself.} and, for every $i$, $\left\{ \mathcal{C}_{j_{i}}^{\left(i\right)}\right\} _{j_{i}\in Y_{i}}$
is a test from $\mathrm{A}$ to $\mathrm{B}$, we can construct the
\emph{randomised test} $\left\{ p_{i}\mathcal{C}_{j_{i}}^{\left(i\right)}\right\} _{i\in X,j_{i}\in Y_{i}}$,
which is a test from $\mathrm{A}$ to $\mathrm{B}$ whose events are
defined as\[
p_{i} \begin{aligned}\Qcircuit @C=1em @R=.7em @!R {   & \qw \poloFantasmaCn{\rA} &  \gate{\cC_{j_{i}}^{(i)}} & \qw \poloFantasmaCn{\rB} &\qw}\end{aligned}:=\begin{aligned}\Qcircuit @C=1em @R=.7em @!R {    & \qw \poloFantasmaCn{\rA} &\qw &\qw &  \gate{\cC_{j_{i}}^{(i)}} & \qw \poloFantasmaCn{\rB} &\qw \\ & \qw \poloFantasmaCn{\rI}  &  \gate{p_{i}} & \qw \poloFantasmaCn{\rI} &\qw  &\qw &\qw }\end{aligned}~.
\]
\end{defn}

In a randomised test we are performing a classical random process
and according to its outcome we apply a test $\left\{ \mathcal{C}_{j_{i}}^{\left(i\right)}\right\} _{j_{i}\in Y_{i}}$,
where $i$ is the outcome of the classical random process. The existence
of randomised tests tells us that a non-deterministic theory must
be convex, thus recovering one of the assumptions of the convex approach
to GPTs, but from a deeper principle, Causality.
\begin{prop}
If a causal theory is not deterministic, then for all systems $\mathrm{A}$
and $\mathrm{B}$, the sets $\mathsf{St}\left(\mathrm{A}\right)$,
$\mathsf{Eff}\left(\mathrm{A}\right)$ and $\mathsf{Transf}\left(\mathrm{A},\mathrm{B}\right)$
are convex.
\end{prop}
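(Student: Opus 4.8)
The plan is to derive convexity directly from the possibility of \emph{randomising} a choice of test, which in turn rests on Causality. Since the theory is assumed not to be deterministic, the lemma proved just above gives $\mathsf{St}(\mathrm{I})=[0,1]$; hence every $p\in[0,1]$ is a legitimate event on the trivial system, and by coarse-graining a preparation-test of $\mathrm{I}$ having $p$ among its outcomes we obtain a genuine two-outcome test $\{p,1-p\}$ on $\mathrm{I}$. Because states and effects are just transformations with trivial input, respectively trivial output, it suffices to prove that $\mathsf{Transf}(\mathrm{A},\mathrm{B})$ is convex and then read off the two special cases.

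So fix $\mathcal{C}_0,\mathcal{C}_1\in\mathsf{Transf}(\mathrm{A},\mathrm{B})$ and $p\in(0,1)$ (the values $p=0,1$ are trivial). By definition each $\mathcal{C}_r$ is an event of some test $\{\mathcal{C}^{(r)}_{j}\}_{j\in Y_r}$ from $\mathrm{A}$ to $\mathrm{B}$, say $\mathcal{C}_0=\mathcal{C}^{(0)}_{j^\ast}$ and $\mathcal{C}_1=\mathcal{C}^{(1)}_{k^\ast}$. I would apply the randomised-test construction with the probability test $\{p_0,p_1\}=\{p,1-p\}$ on $\mathrm{I}$ and the tests $\{\mathcal{C}^{(0)}_{j}\}_{j}$, $\{\mathcal{C}^{(1)}_{k}\}_{k}$: this produces a bona fide test $\{p_r\mathcal{C}^{(r)}_{j}\}$ from $\mathrm{A}$ to $\mathrm{B}$, with outcome set $Z=(\{0\}\times Y_0)\sqcup(\{1\}\times Y_1)$, whose events include $p\mathcal{C}_0$ and $(1-p)\mathcal{C}_1$. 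Now coarse-grain along the partition of $Z$ whose only non-singleton block is $\{(0,j^\ast),(1,k^\ast)\}$; the transformation attached to that block is $p\mathcal{C}_0+(1-p)\mathcal{C}_1$. Since a coarse-graining of a test is again a test, this is an event of a test, hence an element of $\mathsf{Transf}(\mathrm{A},\mathrm{B})$. Taking $\mathrm{A}=\mathrm{I}$ gives convexity of $\mathsf{St}(\mathrm{B})$ and $\mathrm{B}=\mathrm{I}$ gives convexity of $\mathsf{Eff}(\mathrm{A})$.

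The whole argument is essentially ``toss a biased coin, act conditionally on the result, then forget the coin''. Its only non-formal ingredients are the two enabling facts that (i) $\{p,1-p\}$ is a legitimate test on $\mathrm{I}$ and (ii) the randomised test is legitimate — and both are already available: (i) from the preceding lemma plus coarse-graining, and (ii) from the randomised-test construction introduced above, whose well-definedness is exactly where Causality enters, via the necessary condition $\sum_i u\,\mathcal{C}_i=u$ for a family of transformations to form a test. I therefore expect no real obstacle beyond the bookkeeping with outcome sets; the conceptual content is simply that Causality is what makes classical randomisation of tests possible, and classical randomisation is what makes the operational sets convex.
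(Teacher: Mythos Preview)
Your proof is correct and follows essentially the same approach as the paper: invoke the preceding lemma to ensure $\{p,1-p\}$ is a test on $\mathrm{I}$, use randomisation to build a test containing $p\mathcal{C}_0$ and $(1-p)\mathcal{C}_1$ as events, coarse-grain to obtain their sum, and specialise to trivial input/output for states and effects. The paper's argument is terser about the outcome-set bookkeeping, but the content is the same.
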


\begin{proof}
Let $p\in\left[0,1\right]$. If the theory is non-deterministic, $p$
is a state of the trivial system. Let $\left\{ \mathcal{C}_{i}\right\} _{i\in X}$
and $\left\{ \mathcal{D}_{j}\right\} _{j\in Y}$ be tests from $\mathrm{A}$
to $\mathrm{B}$. By randomisation, we can consider the test $\left\{ p\mathcal{C}_{i}\right\} _{i\in X}\cup\left\{ \left(1-p\right)\mathcal{D}_{j}\right\} _{j\in Y}$.
By coarse-graining, the convex combination $p\mathcal{C}_{i}+\left(1-p\right)\mathcal{D}_{j}$,
is still a transformation from $\mathrm{A}$ to $\mathrm{B}$. Taking
$\mathrm{A}$ or $\mathrm{B}$ equal to the trivial system, one has
the thesis for states and effects.
\end{proof}
Another important example of classically controlled tests are measure-and-prepare
tests.
\begin{defn}
A test is \emph{measure-and-prepare} if it is of the form $\left\{ \left|\rho_{i}\right)\left(a_{i}\right|\right\} _{i\in X}$,
where $\left\{ a_{i}\right\} _{i\in X}$ is an observation-test, and
$\rho_{i}$ is a deterministic state for every $i\in X$.

A channel $\mathcal{C}$ is \emph{measure-and-prepare} if it is the
coarse-graining of a measure-and-prepare test:
\[
\mathcal{C}=\sum_{i\in X}\left|\rho_{i}\right)\left(a_{i}\right|.
\]
\end{defn}

The idea behind a measure-and-prepare test is to perform an observation-test,
and to prepare the deterministic state $\rho_{i}$ if we get outcome
$i$ upon performing the observation-test.

The relationship between classical control and Causality is so tight
that a theory where all classically controlled tests are possible
is causal \cite[lemma 7]{Chiribella-purification} (see also \cite{Diagrammatic}).
\begin{prop}
A theory where every conditioned test is possible is causal.
\end{prop}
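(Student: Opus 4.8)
The plan is to verify Definition~\ref{def:causality} directly: given a preparation-test $\left\{ \rho_{i}\right\} _{i\in X}$ and two observation-tests $\left\{ a_{j}\right\} _{j\in Y}$ and $\left\{ b_{k}\right\} _{k\in Z}$ on a system $\mathrm{A}$, I want to show $\sum_{j\in Y}\left(a_{j}\middle|\rho_{i}\right)=\sum_{k\in Z}\left(b_{k}\middle|\rho_{i}\right)$ for every $i\in X$, which is exactly the statement that the marginal probability $p_{i}$ does not depend on the observation-test. (Equivalently one could conclude via the characterisation of causality as uniqueness of the deterministic effect, since all observation-tests would then coarse-grain to the same effect; but the direct route is shorter.) The one idea needed is to package the two observation-tests into a single conditioned test so that a distinguished branch can be swapped while the remaining branches are kept fixed.

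Concretely, fix an outcome $i_{0}\in X$ and form the conditioned test that first runs the preparation-test $\left\{ \rho_{i}\right\} _{i\in X}$ and then, depending on the outcome $i$, runs $\left\{ a_{j}\right\} _{j\in Y}$ if $i=i_{0}$ and $\left\{ b_{k}\right\} _{k\in Z}$ if $i\neq i_{0}$; this has outcome set $\left(\{i_{0}\}\times Y\right)\cup\bigcup_{i\neq i_{0}}\left(\{i\}\times Z\right)$. By hypothesis this conditioned test exists, and since it runs from the trivial system to itself, the OPT normalisation axiom forces its outcome probabilities to sum to $1$:
\[
\sum_{j\in Y}\left(a_{j}\middle|\rho_{i_{0}}\right)+\sum_{i\neq i_{0}}\sum_{k\in Z}\left(b_{k}\middle|\rho_{i}\right)=1.
\]
On the other hand, the plain sequential composition of $\left\{ \rho_{i}\right\} _{i\in X}$ with $\left\{ b_{k}\right\} _{k\in Z}$ is automatically a test from the trivial system to itself (closure under sequential composition in any operational theory), so
\[
\sum_{k\in Z}\left(b_{k}\middle|\rho_{i_{0}}\right)+\sum_{i\neq i_{0}}\sum_{k\in Z}\left(b_{k}\middle|\rho_{i}\right)=1.
\]

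Subtracting the second identity from the first cancels the common tail $\sum_{i\neq i_{0}}\sum_{k\in Z}\left(b_{k}\middle|\rho_{i}\right)$ and leaves $\sum_{j\in Y}\left(a_{j}\middle|\rho_{i_{0}}\right)=\sum_{k\in Z}\left(b_{k}\middle|\rho_{i_{0}}\right)$; since $i_{0}\in X$ was arbitrary, causality follows. There is no real obstacle in this argument: the whole content is the one structural move of encoding ``compare two observation-tests on a fixed state'' as a conditioned test. The only points to handle carefully are the bookkeeping of the outcome set of the conditioned test and the degenerate cases, which are harmless — if $X=\{i_{0}\}$ the conditioned test reduces to an ordinary composition and the same conclusion holds, and if $\mathrm{A}$ admits no observation-tests at all the causality condition is vacuous for $\mathrm{A}$.
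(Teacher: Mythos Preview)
Your proof is correct and uses essentially the same structural move as the paper: build a conditioned test that branches on the preparation outcome to apply different observation-tests, then exploit the normalisation of the resulting test on the trivial system to force the two marginals to agree. The only cosmetic difference is that the paper argues by contradiction via the equivalent characterisation of Causality as uniqueness of the deterministic effect (taking the two observation-tests to be two putatively distinct deterministic effects $u$ and $u'$), whereas you verify Definition~\ref{def:causality} directly with arbitrary observation-tests; the key trick is identical.
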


\begin{proof}
Suppose by contradiction that the theory is not causal, so there exist
two deterministic effects $u\neq u'$ for a system $\mathrm{A}$.
Now, take a generic state $\rho\in\mathsf{St}\left(\mathrm{A}\right)$.
By definition, there exists a preparation-test $\left\{ \rho_{i}\right\} $
such that $\rho_{i_{0}}=\rho$. Take the coarse-grained version $\left\{ \rho_{0},\rho_{1}\right\} $,
where $\rho_{0}:=\rho_{i_{0}}$, and $\rho_{1}:=\sum_{i\neq i_{0}}\rho_{i}$.
Now, consider the classically controlled test where one applies $u$
if $\rho_{0}$ is prepared, and $u'$ if $\rho_{1}$ is prepared.
This is a valid test, given by $\left\{ \left(u\middle|\rho_{0}\right),\left(u'\middle|\rho_{1}\right)\right\} $.
This is a test on the trivial system, therefore 
\[
\left(u\middle|\rho_{0}\right)+\left(u'\middle|\rho_{1}\right)=1.
\]
Now, $\rho_{0}+\rho_{1}$ is a deterministic state, and since $u'$
is a deterministic effect
\[
\left(u'\middle|\rho_{0}\right)+\left(u'\middle|\rho_{1}\right)=1.
\]
This implies $\left(u\middle|\rho_{0}\right)=\left(u'\middle|\rho_{0}\right)$.
Since $\rho_{0}=\rho$ is an arbitrary state of $\mathrm{A}$, one
has $u=u'$. The theory must be causal.
\end{proof}

\subsection{States of causal theories}

In causal theories, the norm of a state takes a particularly simple
form.\footnote{This result clearly extends to all elements of $\mathsf{St}_{+}\left(\mathrm{A}\right)$,
which are just a non-negative rescaling of states.}
\begin{prop}
In a causal theory, for a state $\rho$ we have $\left\Vert \rho\right\Vert =\mathrm{tr}\:\rho$.
\end{prop}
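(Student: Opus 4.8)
The plan is to exploit the two facts about causal theories established just above: there is a unique deterministic effect $u_{\mathrm A}=\mathrm{tr}_{\mathrm A}$, and whenever $a\in\mathsf{Eff}\left(\mathrm A\right)$ the complement $u_{\mathrm A}-a$ is again an effect (since $\left\{a,u_{\mathrm A}-a\right\}$ is an observation-test). Unfolding the definition of the operational norm,
\[
\left\Vert\rho\right\Vert=\sup_{a\in\mathsf{Eff}\left(\mathrm A\right)}\left(a\middle|\rho\right)-\inf_{a\in\mathsf{Eff}\left(\mathrm A\right)}\left(a\middle|\rho\right),
\]
I would bound each of the two terms and show both bounds are attained.

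For the supremum: for every effect $a$ we have $\left(u_{\mathrm A}-a\middle|\rho\right)\geq 0$, hence $\left(a\middle|\rho\right)\leq\left(u_{\mathrm A}\middle|\rho\right)=\mathrm{tr}\,\rho$; and this bound is saturated by the effect $a=u_{\mathrm A}$ itself, so $\sup_{a}\left(a\middle|\rho\right)=\mathrm{tr}\,\rho$. For the infimum: every pairing $\left(a\middle|\rho\right)$ is a probability, hence non-negative, so $\inf_{a}\left(a\middle|\rho\right)\geq 0$; and taking $a=u_{\mathrm A}-u_{\mathrm A}=0$, which is a legitimate effect by the complement property, gives $\left(0\middle|\rho\right)=0$, so $\inf_{a}\left(a\middle|\rho\right)=0$. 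Subtracting the two, $\left\Vert\rho\right\Vert=\mathrm{tr}\,\rho-0=\mathrm{tr}\,\rho$.

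There is no serious obstacle: the only points needing care are that the zero functional really is an effect in a causal theory (it is, being $u_{\mathrm A}-u_{\mathrm A}$) and that the supremum and infimum are genuinely attained rather than merely approached — both settled by exhibiting the explicit optimisers $u_{\mathrm A}$ and $0$. As the footnote indicates, the identical argument covers every element of $\mathsf{St}_{+}\left(\mathrm A\right)$, since these are non-negative rescalings of states.
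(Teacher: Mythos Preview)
Your proof is correct and follows essentially the same approach as the paper: both arguments use that $u-a$ is an effect in a causal theory to bound $\left(a\middle|\rho\right)\leq\left(u\middle|\rho\right)$, and both use non-negativity of $\left(a\middle|\rho\right)$ on states to handle the infimum term. The only cosmetic difference is that the paper first collapses the norm to $\sup_{a}\left(a\middle|\rho\right)$ (using the earlier-established fact that the infimum is always $\leq 0$, hence exactly $0$ for states) and then sandwiches the supremum between $\mathrm{tr}\,\rho$ and itself, whereas you treat the supremum and infimum separately and exhibit the explicit optimisers $u$ and $0$.
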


\begin{proof}
Clearly, for a state, we have $\left\Vert \rho\right\Vert =\sup_{a\in\mathsf{Eff}\left(\mathrm{A}\right)}\left(a\middle|\rho\right)$,
therefore 
\[
\mathrm{tr}\:\rho\leq\sup_{a\in\mathsf{Eff}\left(\mathrm{A}\right)}\left(a\middle|\rho\right)=\left\Vert \rho\right\Vert ,
\]
because $\mathrm{tr}$ is a deterministic effect. Now, for every effect
$a\in\mathsf{Eff}\left(\mathrm{A}\right)$, we also have
\[
\left(u\middle|\rho\right)=\left(a\middle|\rho\right)+\left(u-a\middle|\rho\right)\geq\left(a\middle|\rho\right),
\]
whence
\[
\left\Vert \rho\right\Vert =\sup_{a\in\mathsf{Eff}\left(\mathrm{A}\right)}\left(a\middle|\rho\right)\leq\mathrm{tr}\:\rho.
\]
One concludes that $\left\Vert \rho\right\Vert =\mathrm{tr}\:\rho$.
\end{proof}
For this reason, in a causal theory, a state is deterministic ($\mathrm{tr}\:\rho=1$)
if and only if it is normalised. The norm of a state is the probability
of preparing that state in some preparation-test. Indeed, since $\mathrm{tr}$
is deterministic, the probability in $\mathrm{tr}\:\rho$ comes only
from the randomness arising from the preparation of $\rho$. 
\begin{example}
In quantum theory, we have
\[
\left\Vert \rho\right\Vert =\mathrm{tr}\:\mathbf{1}\rho=\mathrm{tr}\:\rho.
\]
Therefore normalised states are density operators (the trace is equal
to 1).
\end{example}

For every (non-zero) state $\rho$ of a causal theory we can consider
the normalised state

\[
\overline{\rho}:=\frac{\rho}{\left\Vert \rho\right\Vert }.
\]
Suppose we have the preparation-test $\left\{ \rho_{i}\right\} $.
Clearly $\left\Vert \rho_{i}\right\Vert \leq1$ and one has equality
if and only if this is a single-outcome preparation-test, viz.\ a
deterministic state. In a causal theory, every sub-normalised state
$\rho_{i}$ can be written as $\rho_{i}=p_{i}\overline{\rho}_{i}$,
where $p_{i}=\left\Vert \rho_{i}\right\Vert \in\left[0,1\right]$
and $\overline{\rho}_{i}$ is a normalised state. Therefore every
preparation-test $\left\{ \rho_{i}\right\} _{i\in X}$ is a randomised
test: we have a classical source of randomness $\left\{ p_{i}\right\} _{i\in X}$,
where $p_{i}=\left\Vert \rho_{i}\right\Vert $, and according to the
classical outcome we prepare the deterministic state $\overline{\rho}_{i}$.

In conclusion, given the linearity of OPTs, in causal theories we
can simply restrict ourselves to normalised states. In the following
we will often drop the term ``normalised'' when talking about states,
and when we write about the ``state space'' we will mean the set
of normalised states $\mathsf{St}_{1}\left(\mathrm{A}\right)$ of
a system $\mathrm{A}$. The state space of a non-deterministic causal
theory is a convex set. Indeed, for any $p\in\left[0,1\right]$, consider
the state $p\rho_{0}+\left(1-p\right)\rho_{1}$, where $\rho_{0}$
and $\rho_{1}$ are two normalised states. $p\rho_{0}+\left(1-p\right)\rho_{1}$
is still normalised:
\[
\mathrm{tr}\left[p\rho_{0}+\left(1-p\right)\rho_{1}\right]=p\mathrm{tr}\:\rho_{0}+\left(1-p\right)\mathrm{tr}\:\rho_{1}=p+1-p=1.
\]
Clearly, normalised pure states are the extreme points of the state
space $\mathsf{St}_{1}\left(\mathrm{A}\right)$. We have an interesting
geometrical interpretation for the state space of any system $\mathrm{A}$.
It is the intersection of the hyperplane $\mathrm{tr}\:\xi=1$ (for
$\xi\in\mathsf{St}_{\mathbb{R}}\left(\mathrm{A}\right)$) with the
cone of states $\mathsf{St}_{+}\left(\mathrm{A}\right)$, as illustrated
in fig.~\ref{fig:causal cone}.
\begin{figure}
\begin{centering}
\includegraphics[bb=70bp 0bp 513bp 310bp,scale=0.8]{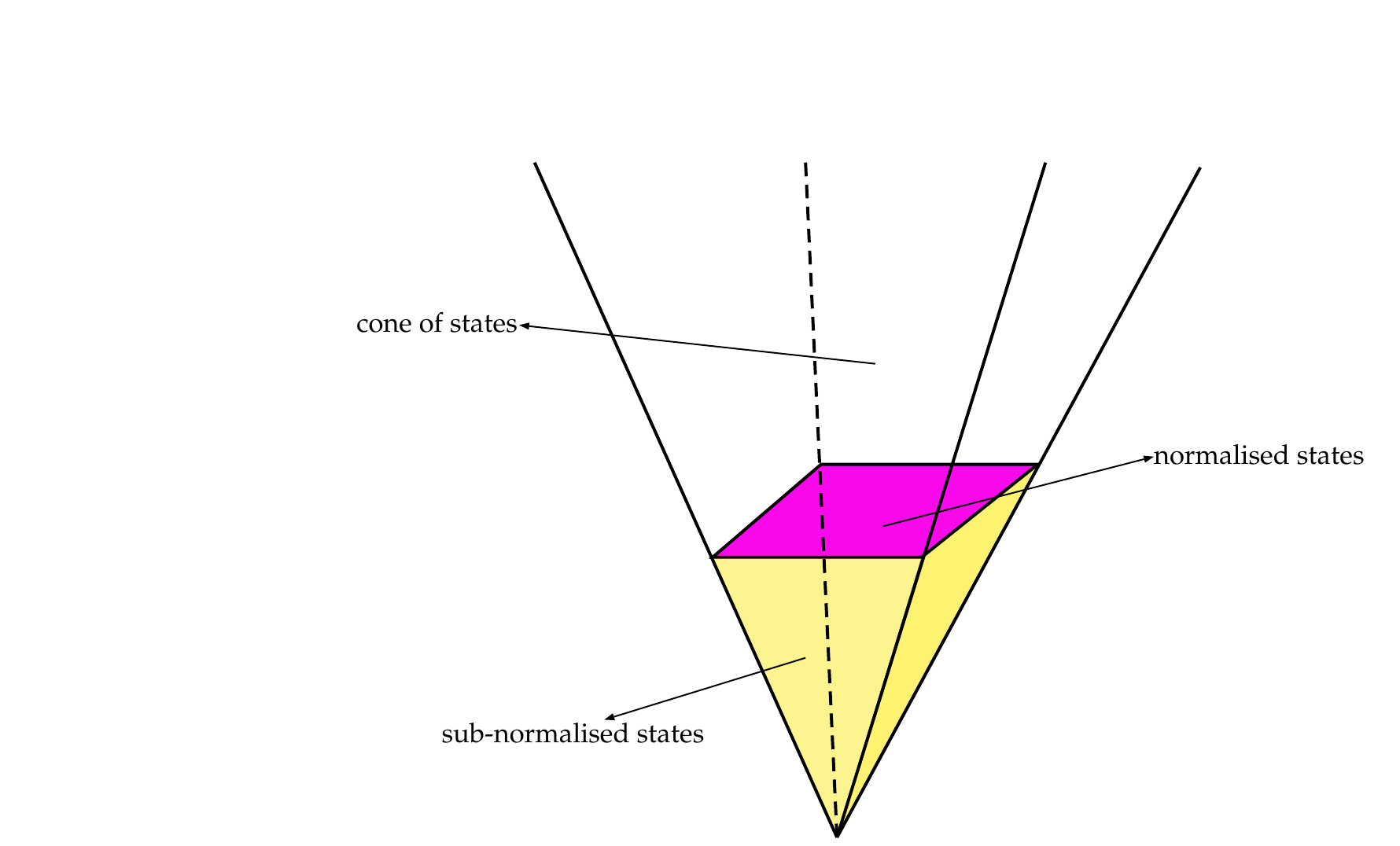}
\par\end{centering}
\caption{\label{fig:causal cone}The cone of states of a causal theory. The
set of normalised states (in purple) is given by the intersection
of the hyperplane $\mathrm{tr}\:\xi=1$, for $\xi$ in $\mathsf{St}_{\mathbb{R}}\left(\mathrm{A}\right)$,
with the cone of states $\mathsf{St}_{+}\left(\mathrm{A}\right)$
. Below that hyperplane are the sub-normalised states (in yellow).
The coloured part in the cone of states is the set of states $\mathsf{St}\left(\mathrm{A}\right)$.
The white part corresponds to super-normalised elements of $\mathsf{St}_{+}\left(\mathrm{A}\right)$,
which are non-physical.}
\end{figure}

Convex combinations of normalised states do not have only a mathematical
meaning, but can be also realised operationally. Suppose we have $\rho_{p}=p\rho_{0}+\left(1-p\right)\rho_{1}$,
where $\rho_{0},\rho_{1}\in\mathsf{St}_{1}\left(\mathrm{A}\right)$.
We can prepare $\rho_{p}$ by using the following procedure.
\begin{enumerate}
\item First of all, we perform a binary test in some arbitrary system with
outcomes $\left\{ 0,1\right\} $ and outcome probabilities $p_{0}=p$
and $p_{1}=1-p$.
\item If the outcome is $i$, then we prepare $\rho_{i}$. In this way,
we realise the preparation-test $\left\{ p_{0}\rho_{0},p_{1}\rho_{1}\right\} $.
Note that each state $p_{i}\rho_{i}$ is not normalised because it
is not deterministic: the state $\rho_{i}$ is prepared with probability
$p_{i}$.
\item Finally, we perform a coarse-graining over the outcomes, getting $\rho_{p}=p\rho_{0}+\left(1-p\right)\rho_{1}$.
\end{enumerate}
A coarse-graining of normalised states is a non-trivial convex combination
of them. Clearly pure states admit only trivial convex decompositions.
Every convex decomposition of a state $\rho$ reflects a particular
way of preparing it.
\begin{defn}
Let $\rho,\sigma\in\mathsf{St}_{1}\left(\mathrm{A}\right)$. We say
that $\sigma$ is \emph{contained} in $\rho$ if $\rho$ can be written
as $\rho=p\sigma+\left(1-p\right)\tau$, where $p\in\left(0,1\right]$
and $\tau\in\mathsf{St}_{1}\left(\mathrm{A}\right)$.
\end{defn}

In other words, $\sigma$ is contained in $\rho$ if it arises in
a convex decomposition of $\rho$. Clearly if $\rho$ is pure, only
$\rho$ can be contained in it. At the other extreme we have internal
states.
\begin{defn}
A state $\omega$ is \emph{internal} if every (normalised) state $\rho$
is contained in it.
\end{defn}

We will make use of the following definition too.
\begin{defn}
\label{def:equality upon input}Two transformations $\mathcal{A},\mathcal{A}'\in\mathsf{Transf}\left(\mathrm{A},\mathrm{B}\right)$
are \emph{equal upon input} of $\rho$, written as $\mathcal{A}=_{\rho}\mathcal{A}'$
if $\mathcal{A}\sigma=\mathcal{A}'\sigma$ for every state $\sigma$
contained in $\rho$.
\end{defn}

We conclude this subsection with the important definition of perfectly
distinguishable states, which are states that can be distinguished
in a single shot. It means that if the state is not known, it can
be identified with certainty.
\begin{defn}
The (normalised) states $\left\{ \rho_{i}\right\} _{i\in X}$ are
\emph{perfectly distinguishable} if there exists an observation-test
$\left\{ a_{i}\right\} _{i\in X}$, called the \emph{perfectly distinguishing
test}, such that
\[
\left(a_{i}\middle|\rho_{j}\right)=\delta_{ij}.
\]
\end{defn}

Perfectly distinguishable states might not exist in some systems of
a theory, as shown in the following example \cite{Objectivity}.
\begin{example}
Start with the state space of the classical trit, represented in fig.~\ref{fig:The-state-space},
with pure states $\alpha_{1}$, $\alpha_{2}$, $\alpha_{3}$.
\begin{figure}
\begin{centering}
\subfloat[\label{fig:The-state-space}The state space of a restricted trit coincides
with that of a classical trit.]{\begin{centering}
\includegraphics[scale=0.8]{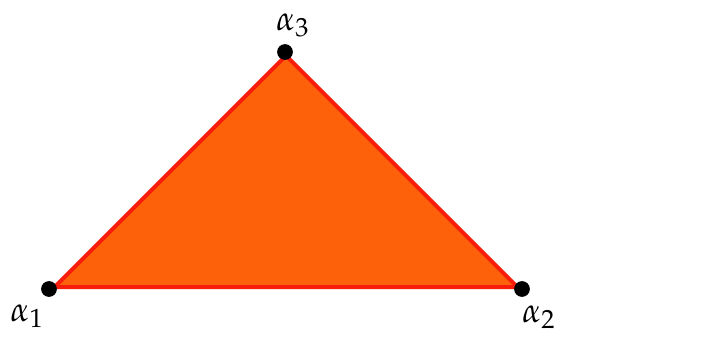}
\par\end{centering}
}\qquad{}\qquad{}\subfloat[\label{fig:A-section-of}A cross-section of the effect cone of the
classical trit (in orange) and of the restricted trit (in blue). The
dual cone is the same in both cases (in orange). The restriction on
effects is apparent.]{\begin{centering}
\includegraphics[scale=0.8]{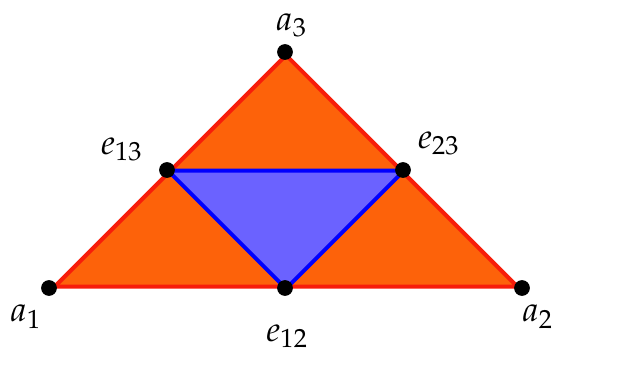}
\par\end{centering}
}
\par\end{centering}
\caption{The restricted trit}

\end{figure}
 They are perfectly distinguishable with observation-test $\left\{ a_{1},a_{2},a_{3}\right\} $:
$\left(a_{i}\middle|\alpha_{j}\right)=\delta_{ij}$. Instead of allowing
the full set of effects of classical theory, suppose that, for some
reasons, the most fine-grained effects that are allowed are $e_{ij}=\frac{1}{2}\left(a_{i}+a_{j}\right)$,
with $i\neq j$. A section of the dual cone (the same as the effect
cone of classical theory), and of the effect cone of the restricted
trit is represented in fig.~\ref{fig:A-section-of}. 

Since we have a smaller set of effects than the original classical
trit, we must check what happens to the state space. Indeed it may
happen that two states become tomographically indistinguishable because
there are not enough effects to witness their difference. However,
this is not the case of the restricted trit. The reason is that the
effects $e_{ij}$ are linearly independent, therefore they span exactly
the same effect vector space as the effects $a_{i}$, which is what
determines the tomographic power of a theory. Therefore the state
space of the restricted trit coincides with that of the classical
trit (cf.\ fig.~\ref{fig:The-state-space}). 

However, the restriction on the allowed effects has a dramatic consequence:
there are \emph{no} perfectly distinguishable pure states. The lack
of perfectly distinguishable \emph{pure} states determines the lack
of perfectly distinguishable states. Indeed, if $\rho_{1}$ were perfectly
distinguishable from $\rho_{2}$, all pure the states contained in
$\rho_{1}$ would be perfectly distinguishable from the pure states
contained in $\rho_{2}$. This is because if $\left(a\middle|\rho\right)=0$
(resp.\ $\left(a\middle|\rho\right)=1$) one has $a=_{\rho}0$ (resp.\ $a=_{\rho}u$).

First of all, let us show that $\left\{ \alpha_{1},\alpha_{2},\alpha_{3}\right\} $
are no longer perfectly distinguishable. Consider a generic effect
$e=\lambda_{12}e_{12}+\lambda_{13}e_{13}+\lambda_{23}e_{23}$, where
$\lambda_{ij}\geq0$. This effect could yield 0 on $\alpha_{2}$ and
$\alpha_{3}$ if and only if $\lambda_{12}=\lambda_{13}=\lambda_{23}=0$,
but this would be the zero effect, which cannot yield 1 on $\alpha_{1}$.
This means that the $\alpha_{i}$'s cannot be jointly perfectly distinguishable.
Maybe we can still find a pair of $\alpha_{i}$'s that are perfectly
distinguishable? The answer is again negative. To see it, take e.g.\ the
pair $\left\{ \alpha_{1},\alpha_{2}\right\} $ (for the others the
argument is the same). The only element in the effect cone that yields
1 on $\alpha_{1}$ and 0 on $\alpha_{2}$ is $2e_{13}$, but this
is \emph{not} a physical effect, because $u-2e_{13}=a_{2}$, which
is \emph{not} an effect. In other words, $2e_{13}$ cannot exist in
an observation-test of the form $\left\{ 2e_{13},u-2e_{13}\right\} $,
but all effects must be part of some observation-test! In conclusion,
the restricted trit has \emph{no} perfectly distinguishable states.
\end{example}

The example above is based on a theory where the no-restriction hypothesis
fails: the effect cone $\mathsf{Eff}_{+}\left(\mathrm{A}\right)$
was strictly contained in the dual cone $\mathsf{St}_{+}^{*}\left(\mathrm{A}\right)$.
If the two cones coincide, we can always prove that at least two perfectly
distinguishable pure states exist \cite{Objectivity}.
\begin{prop}
\label{prop:no-restriction distinguishable}In an unrestricted theory,
for every pure state $\psi_{1}$ there exists another pure state $\psi_{2}$
such that $\left\{ \psi_{1},\psi_{2}\right\} $ are perfectly distinguishable.
\end{prop}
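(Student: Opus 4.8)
The plan is to use the no-restriction hypothesis to promote a supporting functional at $\psi_{1}$ into a genuine effect, and then to read off $\psi_{2}$ as a pure state on which that effect vanishes. \emph{Step 1 (a supporting effect at $\psi_{1}$).} Recall that pure normalised states are precisely the extreme points of the compact convex set $\mathsf{St}_{1}\left(\mathrm{A}\right)$, so $\psi_{1}$ is an extreme point, hence lies on the relative boundary of $\mathsf{St}_{1}\left(\mathrm{A}\right)$ (we may assume $\mathrm{A}$ is non-trivial, so that $\mathsf{St}_{1}\left(\mathrm{A}\right)$ is not a single point). By the supporting hyperplane theorem, applied inside the affine hull of $\mathsf{St}_{1}\left(\mathrm{A}\right)$, there is a linear functional $f\in\mathsf{Eff}_{\mathbb{R}}\left(\mathrm{A}\right)$ whose maximum $c$ over $\mathsf{St}_{1}\left(\mathrm{A}\right)$ is attained at $\psi_{1}$ and which is not constant on $\mathsf{St}_{1}\left(\mathrm{A}\right)$; by compactness its minimum $m<c$ over $\mathsf{St}_{1}\left(\mathrm{A}\right)$ is also attained.

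\emph{Step 2 (normalisation, using no-restriction).} I would set $a_{1}:=\left(c-m\right)^{-1}\left(f-m\,u\right)$, where $u=\mathrm{tr}_{\mathrm{A}}$ is the unique deterministic effect. Then $\left(a_{1}\middle|\rho\right)=\left(c-m\right)^{-1}\left(\left(f\middle|\rho\right)-m\right)\in\left[0,1\right]$ for every normalised state $\rho$; since $\mathsf{St}_{+}\left(\mathrm{A}\right)$ is the cone over $\mathsf{St}_{1}\left(\mathrm{A}\right)$ (in a causal theory every sub-normalised state is $p\overline{\rho}$), this gives $\left(a_{1}\middle|\sigma\right)\geq0$ for all $\sigma\in\mathsf{St}_{+}\left(\mathrm{A}\right)$, and $\left(a_{1}\middle|\rho\right)\leq1$ for every state. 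Hence $a_{1}\in\mathsf{St}_{+}^{*}\left(\mathrm{A}\right)$ and $a_{1}$ takes values in $\left[0,1\right]$ on all states, so by the no-restriction hypothesis $a_{1}$ is a physical effect. Then $a_{2}:=u-a_{1}$ is an effect as well and $\left\{ a_{1},a_{2}\right\}$ is an observation-test, with $\left(a_{1}\middle|\psi_{1}\right)=1$ and $\min_{\rho\in\mathsf{St}_{1}\left(\mathrm{A}\right)}\left(a_{1}\middle|\rho\right)=0$ by construction.

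\emph{Step 3 (extracting $\psi_{2}$ and concluding).} The set of minimisers of the affine function $\rho\mapsto\left(a_{1}\middle|\rho\right)$ on $\mathsf{St}_{1}\left(\mathrm{A}\right)$ is a non-empty compact face; I would pick an extreme point $\psi_{2}$ of it, which is then an extreme point of $\mathsf{St}_{1}\left(\mathrm{A}\right)$, i.e.\ a pure state, with $\left(a_{1}\middle|\psi_{2}\right)=0$ and $\psi_{2}\neq\psi_{1}$. Finally one checks $\left(a_{i}\middle|\psi_{j}\right)=\delta_{ij}$: indeed $\left(a_{1}\middle|\psi_{1}\right)=1$, $\left(a_{1}\middle|\psi_{2}\right)=0$, $\left(a_{2}\middle|\psi_{1}\right)=\left(u\middle|\psi_{1}\right)-\left(a_{1}\middle|\psi_{1}\right)=0$, and $\left(a_{2}\middle|\psi_{2}\right)=\left(u\middle|\psi_{2}\right)-\left(a_{1}\middle|\psi_{2}\right)=1$, so $\left\{ a_{1},a_{2}\right\}$ perfectly distinguishes $\left\{ \psi_{1},\psi_{2}\right\}$.

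The crux is Step 2: the supporting functional must be rescalable into an effect with range in $\left[0,1\right]$, and this is exactly where no-restriction is indispensable — the restricted-trit example shows such an effect can genuinely fail to exist otherwise. A minor point to watch is that the supporting hyperplane must be proper (equivalently $f$ non-constant on $\mathsf{St}_{1}\left(\mathrm{A}\right)$, so that $c>m$ and $a_{1}$ is well defined); working within the affine hull of the state space, which is full-dimensional there, makes this automatic. One should also double-check that purity of $\psi_{1}$ is used only to place it on the boundary, and that the second pure state is produced via the standard fact that an extreme point of a face of $\mathsf{St}_{1}\left(\mathrm{A}\right)$ is extreme in $\mathsf{St}_{1}\left(\mathrm{A}\right)$.
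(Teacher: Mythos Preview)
Your proof is correct and follows essentially the same approach as the paper: use a supporting hyperplane at $\psi_{1}$, invoke the no-restriction hypothesis to turn the resulting functional into a physical effect, and extract a pure state $\psi_{2}$ on which the effect takes the opposite extreme value. The only cosmetic difference is that the paper applies the supporting hyperplane theorem to the \emph{cone} $\mathsf{St}_{+}\left(\mathrm{A}\right)$ through the origin (so the functional is automatically non-negative and only needs rescaling, with $\left(f\middle|\psi_{1}\right)=0$ and $\psi_{2}$ the maximiser), whereas you work in the affine hull of $\mathsf{St}_{1}\left(\mathrm{A}\right)$ and therefore need the extra shift by $-m\,u$; the roles of maximiser and minimiser are correspondingly swapped, but the content is the same.
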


\begin{proof}
Let $\psi_{1}$ be a pure state. The proof will consist of some steps.
In the first step, let us prove that there exists a non-trivial element
$f$ of the dual cone $\mathsf{St}_{+}^{*}\left(\mathrm{A}\right)$
such that $\left(f\middle|\psi_{1}\right)=0$. Note that being pure,
$\psi_{1}$ lies in some supporting hyperplane through the origin
of the cone $\mathsf{St}_{+}\left(\mathrm{A}\right)$ \cite{Boyd}.
Such a hyperplane must have equation $\left(f\middle|x\right)=0$
for all $x\in\mathsf{St}_{\mathbb{R}}\left(\mathrm{A}\right)$, where
$f$ is some non-trivial linear functional on $\mathsf{St}_{\mathbb{R}}\left(\mathrm{A}\right)$,
otherwise it would not pass through the origin (i.e.\ the null vector).
Being a supporting hyperplane, we can choose $f$ to be in the dual
cone $\mathsf{St}_{+}^{*}\left(\mathrm{A}\right)$ \cite{Boyd}. Thus
we have found $f\in\mathsf{St}_{+}^{*}\left(\mathrm{A}\right)$ such
that $\left(f\middle|\psi_{1}\right)=0$.

Let us consider the maximum of $f$ on the state space. Since $f$
is continuous and the state space is compact, it achieves its maximum
$\lambda^{*}$ on some state $\rho^{*}$. Note that $\lambda^{*}>0$,
otherwise $f$ would be the zero functional. Let us show that the
maximum is attained on some pure state. If $\rho^{*}$ is already
a pure state, there is nothing to prove. If it is not, consider a
refinement of $\rho^{*}$ in terms of pure states, $\rho^{*}=\sum_{i}p_{i}\psi_{i}$,
where $\left\{ p_{i}\right\} $ is a probability distribution. Apply
$f$ to $\rho^{*}$: 
\[
\lambda^{*}=\left(f\middle|\rho^{*}\right)=\sum_{i}p_{i}\left(f\middle|\psi_{i}\right).
\]
Clearly $\lambda^{*}\leq\max_{i}\left(f\middle|\psi_{i}\right)$,
but being $\lambda^{*}$ the maximum of $f$ , in fact $\lambda^{*}=\max\left(f\middle|\psi_{i}\right)$.
This means that there is a pure state $\psi_{2}$, chosen among these
$\psi_{i}$'s, on which $f$ attains its maximum.

Now consider the functional $a_{2}:=\frac{1}{\lambda^{*}}f$, which
takes values in the interval $\left[0,1\right]$ when applied to states.
Specifically $\left(a_{2}\middle|\psi_{2}\right)=1$ and $\left(a_{2}\middle|\psi_{1}\right)=0$.
By the no-restriction hypothesis, it is a valid effect, so we can
construct the observation-test $\left\{ a_{1},a_{2}\right\} $, where
$a_{1}:=u-a_{2}$, which distinguishes perfectly between $\psi_{1}$
and $\psi_{2}$.
\end{proof}
Even though the no-restriction hypothesis guarantees the existence
of perfectly distinguishable states, we do not wish to assume it for
its lack of operational motivation. Instead, in chapter~\ref{chap:Sharp-theories-with}
we will prove the existence of perfectly distinguishable states for
a class of OPTs from first principles.

Finally we give the following definition.
\begin{defn}
A set $\left\{ \rho_{i}\right\} _{i\in X}$ of perfectly distinguishable
states is \emph{maximal} if no state $\rho_{0}$ can be added such
that $\left\{ \rho_{i}\right\} _{i\in X}\cup\left\{ \rho_{0}\right\} $
are still perfectly distinguishable.
\end{defn}

If the $\rho_{i}$'s are pure states, instead of writing ``maximal
set of perfectly distinguishable states'', we will opt for the shorter
terminology of ``pure maximal set''.

\subsection{The group of reversible channels\label{subsec:The-group-of-reversible}}

Causality provides us with a new insight into the structure of reversible
channels of a system $\mathrm{A}$, and will enable us to prove some
interesting properties. First of all, we show that the set of channels
is compact \cite[corollary 30]{Chiribella-purification}.
\begin{prop}
\label{prop:channels closed}For every pair of systems $\mathrm{A}$
and $\mathrm{B}$ the set of channels $\mathsf{DetTransf}\left(\mathrm{A},\mathrm{B}\right)$
is compact.
\end{prop}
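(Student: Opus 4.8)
The plan is to exhibit $\mathsf{DetTransf}\left(\mathrm{A},\mathrm{B}\right)$ as a closed subset of the set $\mathsf{Transf}\left(\mathrm{A},\mathrm{B}\right)$, which we already know to be compact, and then use the elementary topological fact that a closed subset of a compact set is compact.

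First I would record that $\mathsf{Transf}\left(\mathrm{A},\mathrm{B}\right)$ is indeed compact: by the standing Assumption it is closed in the norm topology, by the bound $\left\Vert \mathcal{T}\right\Vert \leq1$ it is bounded, and since $\mathsf{Transf}_{\mathbb{R}}\left(\mathrm{A},\mathrm{B}\right)$ is finite-dimensional, closed plus bounded gives compact; by Proposition~\ref{prop:equivalence topologies} (and its analogue for transformations) this compactness holds in the operational topology as well, so the choice of topology is immaterial. Next, invoking Causality and Proposition~\ref{prop:characterization channel}, a transformation $\mathcal{C}\in\mathsf{Transf}\left(\mathrm{A},\mathrm{B}\right)$ is a channel if and only if $u_{\mathrm{B}}\mathcal{C}=u_{\mathrm{A}}$, where $u_{\mathrm{A}},u_{\mathrm{B}}$ are the unique deterministic effects. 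Hence $\mathsf{DetTransf}\left(\mathrm{A},\mathrm{B}\right)=\left\{ \mathcal{C}\in\mathsf{Transf}\left(\mathrm{A},\mathrm{B}\right):u_{\mathrm{B}}\mathcal{C}=u_{\mathrm{A}}\right\}$.

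Then I would consider the map $\Phi\colon\mathsf{Transf}_{\mathbb{R}}\left(\mathrm{A},\mathrm{B}\right)\to\mathsf{Eff}_{\mathbb{R}}\left(\mathrm{A}\right)$ defined by $\Phi\left(\Delta\right):=u_{\mathrm{B}}\Delta$. This map is linear, and being a linear map between finite-dimensional normed spaces it is continuous; alternatively, its continuity follows directly from Lemma~\ref{lem:lemma product} applied with the constant sequence $\mathcal{B}_{n}=u_{\mathrm{B}}$. Therefore $\Phi^{-1}\left(\left\{ u_{\mathrm{A}}\right\} \right)$ is closed in $\mathsf{Transf}_{\mathbb{R}}\left(\mathrm{A},\mathrm{B}\right)$, and consequently $\mathsf{DetTransf}\left(\mathrm{A},\mathrm{B}\right)=\mathsf{Transf}\left(\mathrm{A},\mathrm{B}\right)\cap\Phi^{-1}\left(\left\{ u_{\mathrm{A}}\right\} \right)$ is a closed subset of the compact set $\mathsf{Transf}\left(\mathrm{A},\mathrm{B}\right)$, hence compact. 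There is no serious obstacle here; the only point deserving a line of justification is the continuity of post-composition with the fixed effect $u_{\mathrm{B}}$, which is immediate in finite dimension (or from Lemma~\ref{lem:lemma product}).
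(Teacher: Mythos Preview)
Your proof is correct and follows essentially the same approach as the paper: both use Proposition~\ref{prop:characterization channel} to identify $\mathsf{DetTransf}\left(\mathrm{A},\mathrm{B}\right)$ as the set of transformations satisfying $u_{\mathrm{B}}\mathcal{C}=u_{\mathrm{A}}$, show this condition is preserved under limits, and conclude closedness inside the compact set $\mathsf{Transf}\left(\mathrm{A},\mathrm{B}\right)$. The only difference is cosmetic: the paper verifies closedness by taking a convergent sequence of channels and checking $\left(u_{\mathrm{B}}\middle|\mathcal{C}\middle|\rho\right)=\lim_{n}\left(u_{\mathrm{B}}\middle|\mathcal{C}_{n}\middle|\rho\right)=\left(u_{\mathrm{A}}\middle|\rho\right)$ directly, whereas you package the same computation as continuity of the linear map $\Phi\left(\Delta\right)=u_{\mathrm{B}}\Delta$ and take the preimage of a point.
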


\begin{proof}
Let $\left\{ \mathcal{C}_{n}\right\} $ be a sequence of channels
that converges to some transformation $\mathcal{C}\in\mathsf{Transf}\left(\mathrm{A},\mathrm{B}\right)$.
Let us prove that $\mathcal{C}$ is a channel. We have $u_{\mathrm{B}}\mathcal{C}_{n}=u_{\mathrm{A}}$
for all $n\in\mathbb{N}$, because they are all channels. Then, for
every state $\rho\in\mathsf{St}\left(\mathrm{A}\right)$, we have
\[
\left(u_{\mathrm{B}}\middle|\mathcal{C}\middle|\rho\right)=\lim_{n\rightarrow+\infty}\left(u_{\mathrm{B}}\middle|\mathcal{C}_{n}\middle|\rho\right)=\left(u_{\mathrm{A}}\middle|\rho\right).
\]
Therefore we conclude that $u_{\mathrm{B}}\mathcal{C}=u_{\mathrm{A}}$,
which means that $\mathcal{C}$ is a channel, by proposition~\ref{prop:characterization channel}.
We conclude that every convergent sequence of channels converges to
a channel, by which we obtain the closure of $\mathsf{DetTransf}\left(\mathrm{A},\mathrm{B}\right)$.
$\mathsf{DetTransf}\left(\mathrm{A},\mathrm{B}\right)$ is bounded
because it is a subset of $\mathsf{Transf}\left(\mathrm{A},\mathrm{B}\right)$,
therefore it is compact in the norm topology. By the equivalence of
the norm and operational topologies, we can say that $\mathsf{DetTransf}\left(\mathrm{A},\mathrm{B}\right)$
is compact in both topologies.
\end{proof}
We can exploit the compactness of channels to prove that the group
of reversible channels is compact as well \cite[corollary 31]{Chiribella-purification}.
\begin{prop}
\label{prop:group compact}For every system $\mathrm{A}$, the group
$G_{\mathrm{A}}$ of reversible channels is compact.
\end{prop}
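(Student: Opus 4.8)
The plan is to realise $G_{\mathrm{A}}$ as a closed subset of the compact set $\mathsf{DetTransf}\left(\mathrm{A}\right)$ (Proposition~\ref{prop:channels closed}). Since every reversible channel is in particular a channel, $G_{\mathrm{A}}\subseteq\mathsf{DetTransf}\left(\mathrm{A}\right)$, so $G_{\mathrm{A}}$ is automatically bounded, and because we work in finite dimension it suffices to show that $G_{\mathrm{A}}$ is closed: a closed subset of a compact set is compact. So let $\left\{\mathcal{U}_{n}\right\}$ be a sequence of reversible channels converging to some $\mathcal{U}\in\mathsf{Transf}\left(\mathrm{A}\right)$; by Proposition~\ref{prop:channels closed} we already know $\mathcal{U}$ is a channel, and the whole point is to exhibit an inverse for it.

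The key step is to apply compactness a second time, now to the inverses. The sequence $\left\{\mathcal{U}_{n}^{-1}\right\}$ consists of channels, hence lies in the compact set $\mathsf{DetTransf}\left(\mathrm{A}\right)$, so it has a subsequence $\left\{\mathcal{U}_{n_{k}}^{-1}\right\}$ converging to some channel $\mathcal{V}$. Along this subsequence we still have $\mathcal{U}_{n_{k}}\to\mathcal{U}$ (a subsequence of a convergent sequence converges to the same limit), so by the joint continuity of sequential composition, Lemma~\ref{lem:lemma product}, we obtain $\mathcal{U}_{n_{k}}^{-1}\mathcal{U}_{n_{k}}\to\mathcal{V}\mathcal{U}$ and $\mathcal{U}_{n_{k}}\mathcal{U}_{n_{k}}^{-1}\to\mathcal{U}\mathcal{V}$. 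But the left-hand sides are constantly equal to the identity test $\mathcal{I}_{\mathrm{A}}$, so by uniqueness of limits $\mathcal{V}\mathcal{U}=\mathcal{U}\mathcal{V}=\mathcal{I}_{\mathrm{A}}$. Hence $\mathcal{U}$ is invertible with inverse $\mathcal{V}\in\mathsf{Transf}\left(\mathrm{A}\right)$, i.e.\ $\mathcal{U}\in G_{\mathrm{A}}$. This shows $G_{\mathrm{A}}$ is closed, and therefore compact in the norm topology; by the equivalence of the norm and operational topologies it is compact in both.

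I do not expect a serious obstacle here: the only point requiring care is that one cannot simply assume the inverses $\mathcal{U}_{n}^{-1}$ converge—one must pass to a convergent subsequence, which is exactly what compactness of $\mathsf{DetTransf}\left(\mathrm{A}\right)$ provides, and then keep track of the matching subsequence of the $\mathcal{U}_{n}$ so that Lemma~\ref{lem:lemma product} applies. Everything else is routine. (As a by-product this also shows that inversion is continuous on $G_{\mathrm{A}}$, so $G_{\mathrm{A}}$ is a compact topological group.)
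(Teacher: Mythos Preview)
Your proof is correct and follows essentially the same approach as the paper: both show $G_{\mathrm{A}}$ is closed in the compact set $\mathsf{DetTransf}\left(\mathrm{A}\right)$ by extracting a convergent subsequence of the inverses and using Lemma~\ref{lem:lemma product} to verify that the limit channel is an inverse of $\mathcal{U}$. Your write-up is slightly more explicit about why one works along the subsequence of the $\mathcal{U}_{n}$ as well, but the argument is the same.
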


\begin{proof}
Consider a convergent sequence $\left\{ \mathcal{U}_{n}\right\} $
of reversible channels. By proposition~\ref{prop:channels closed},
the limit $\mathcal{U}$ is a channel on $\mathrm{A}$. Now consider
the sequence $\left\{ \mathcal{U}_{n}^{-1}\right\} $. Since the set
of channels is compact, there exists a subsequence $\left\{ \mathcal{U}_{n_{k}}^{-1}\right\} $
that converges to a channel $\mathcal{C}$. Now
\[
\mathcal{CU}=\lim_{n\rightarrow+\infty}\mathcal{U}_{n_{k}}^{-1}\mathcal{U}_{n_{k}}=\mathcal{I},
\]
and
\[
\mathcal{UC}=\lim_{n\rightarrow+\infty}\mathcal{U}_{n_{k}}\mathcal{U}_{n_{k}}^{-1}=\mathcal{I},
\]
where we have used lemma~\ref{lem:lemma product}, and the fact that
every subsequence of $\mathcal{U}_{n}$ converges to $\mathcal{U}$.
This proves that $G_{\mathrm{A}}$ is closed. Recalling the fact that
the set of channels is bounded, the group is compact (in both topologies).
\end{proof}
Finite groups are trivial examples of compact groups. Compact groups
enjoy a remarkable mathematical property: they admit a finite Haar
measure $h$, unique up to rescaling, namely a measure such that $h\left(G\right)<+\infty$,
and that is invariant under left and right action of the group\footnote{For a finite group, the Haar measure is simply the counting measure,
that counts the number of elements in a subset of the group.} \cite{Folland_real}. In other words, if $S$ is a Borel subset of
$G$, one has 
\[
h\left(\mathcal{U}S\right)=h\left(S\mathcal{U}\right)=h\left(S\right),
\]
for every $\mathcal{U}\in G$, where $\mathcal{U}S=\left\{ \mathcal{U}\mathcal{V}:\mathcal{V}\in S\right\} $
and $S\mathcal{U}=\left\{ \mathcal{V}\mathcal{U}:\mathcal{V}\in S\right\} $.
Since $h$ is a finite measure, it can be renormalised so that $h\left(G\right)=1$,
which we will always assume in the following. This measure will be
one of the key ingredients in the construction of invariant states,
an extremely important notion that will be used throughout this thesis.
\begin{defn}
A state $\chi\in\mathsf{St}\left(\mathrm{A}\right)$ is called \emph{invariant}
if $\mathcal{U}\chi=\chi$ for every $\mathcal{U}\in G_{\mathrm{A}}$.
\end{defn}

Since we are working in causal theories, in the following we will
consider only \emph{normalised} invariant states. Do they exist? The
answer is affirmative.
\begin{prop}
In every causal theory there is at least one invariant state.
\end{prop}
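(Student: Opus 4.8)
The natural approach is to construct the invariant state by averaging an arbitrary normalised state over the group $G_{\mathrm{A}}$ using the (normalised) Haar measure $h$ that exists because $G_{\mathrm{A}}$ is compact (Proposition~\ref{prop:group compact}). Concretely, pick any normalised state $\rho\in\mathsf{St}_{1}\left(\mathrm{A}\right)$ and define
\[
\chi:=\int_{G_{\mathrm{A}}}\mathcal{U}\rho\,\,\mathrm{d}h\left(\mathcal{U}\right).
\]
The plan is then to verify three things in turn: that this integral is well-defined as an element of $\mathsf{St}_{\mathbb{R}}\left(\mathrm{A}\right)$; that $\chi$ is actually a (normalised) state; and that $\chi$ is invariant under every reversible channel.

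For well-definedness, I would note that $\mathsf{St}_{\mathbb{R}}\left(\mathrm{A}\right)$ is finite-dimensional, so the vector-valued integral can be understood coordinate-wise: fixing a basis of effects $\left\{a_{k}\right\}$ spanning $\mathsf{Eff}_{\mathbb{R}}\left(\mathrm{A}\right)$, the scalar functions $\mathcal{U}\mapsto\left(a_{k}\middle|\mathcal{U}\middle|\rho\right)$ are continuous on the compact group $G_{\mathrm{A}}$ (continuity of the pairing together with continuity of the group action, which follows from Lemma~\ref{lem:lemma product} and the equivalence of topologies), hence integrable against the finite measure $h$, and $\chi$ is the unique vector with $\left(a_{k}\middle|\chi\right)=\int_{G_{\mathrm{A}}}\left(a_{k}\middle|\mathcal{U}\middle|\rho\right)\mathrm{d}h\left(\mathcal{U}\right)$ for all $k$. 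To see that $\chi$ is a state, I would use that $\mathsf{St}\left(\mathrm{A}\right)$ is compact and convex: $\chi$ is a limit of convex combinations (Riemann-type sums) of the states $\mathcal{U}\rho$, and the set of states is closed and convex, so $\chi\in\mathsf{St}\left(\mathrm{A}\right)$. Normalisation is immediate from Proposition~\ref{prop:characterization channel}/the fact that $u_{\mathrm{A}}\mathcal{U}=u_{\mathrm{A}}$ for channels $\mathcal{U}$: $\mathrm{tr}\,\chi=\int_{G_{\mathrm{A}}}\left(u_{\mathrm{A}}\middle|\mathcal{U}\middle|\rho\right)\mathrm{d}h=\int_{G_{\mathrm{A}}}\mathrm{tr}\,\rho\,\,\mathrm{d}h=\mathrm{tr}\,\rho=1$, using $h\left(G_{\mathrm{A}}\right)=1$.

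Finally, for invariance, fix $\mathcal{V}\in G_{\mathrm{A}}$. Then $\mathcal{V}\chi=\int_{G_{\mathrm{A}}}\mathcal{V}\mathcal{U}\rho\,\,\mathrm{d}h\left(\mathcal{U}\right)$, and by the left-invariance of the Haar measure the substitution $\mathcal{W}=\mathcal{V}\mathcal{U}$ gives $\mathcal{V}\chi=\int_{G_{\mathrm{A}}}\mathcal{W}\rho\,\,\mathrm{d}h\left(\mathcal{W}\right)=\chi$; tomographically this is checked by pairing with an arbitrary effect and moving $\mathcal{V}$ inside the integral via linearity. This completes the construction. The main technical obstacle is the first step — making rigorous sense of the vector-valued Haar integral and justifying that the linearity and continuity of states, effects, and the group action let one interchange integration with the pairing — rather than the invariance argument itself, which is the usual one-line change of variables. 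Everything else reduces to the compactness and convexity facts already established (Propositions~\ref{prop:channels closed}, \ref{prop:group compact}, and the equivalence of topologies).
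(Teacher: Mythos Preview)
Your proof is correct and follows essentially the same route as the paper: both construct the invariant state by Haar-averaging a fixed state over the compact group $G_{\mathrm{A}}$ and then invoke left-invariance of the Haar measure via the substitution $\mathcal{W}=\mathcal{V}\mathcal{U}$. The paper averages a pure state $\psi$ rather than an arbitrary normalised $\rho$ and skips the well-definedness and ``$\chi$ is a state'' checks you spell out, but the argument is the same.
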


\begin{proof}
Take a pure state $\psi$, and consider the state
\[
\chi_{\psi}:=\int_{G}\mathcal{V}\psi\,\mathrm{d}\mathcal{V},
\]
where $\mathrm{d}\mathcal{V}$ is the Haar probability measure. $\chi_{\psi}$
is invariant, indeed for every reversible channel $\mathcal{U}$,
we have
\[
\mathcal{U}\chi_{\psi}=\int_{G}\mathcal{U}\mathcal{V}\psi\,\mathrm{d}\mathcal{V}=\int_{G}\mathcal{W}\psi\,\mathrm{d}\left(\mathcal{U}^{-1}\mathcal{W}\right)=\int_{G}\mathcal{W}\psi\,\mathrm{d}\mathcal{W}=\chi_{\psi},
\]
where we have set $\mathcal{W}:=\mathcal{U}\mathcal{V}$, and we have
exploited the invariance of the Haar probability measure. Therefore
at least one invariant state always exists.
\end{proof}
If $G$ is a finite, the definition of $\chi_{\psi}$ becomes $\chi_{\psi}=\frac{1}{\left|G\right|}\sum_{i=1}^{\left|G\right|}\mathcal{V}_{i}\psi$.

In general, there will be more than one invariant state, because $\chi_{\psi}$
depends on the choice of $\psi$, and different $\psi$'s may give
rise to different invariant states $\chi_{\psi}$. However, if $\psi$
and $\psi'$ are in the same orbit of the action of $G$, which means
$\psi'=\mathcal{U}\psi$, they generate the same invariant state.
Indeed,
\[
\chi_{\psi'}=\int_{G}\mathcal{V}\psi'\,\mathrm{d}\mathcal{V}=\int_{G}\mathcal{V}\mathcal{U}\psi\,\mathrm{d}\mathcal{V}=\int_{G}\mathcal{W}\psi\,\mathrm{d}\left(\mathcal{W}\mathcal{U}^{-1}\right)=\int_{G}\mathcal{W}\psi\,\mathrm{d}\mathcal{W}=\chi_{\psi},
\]
setting $\mathcal{W}:=\mathcal{V}\mathcal{U}$, and harnessing the
invariance of the Haar probability measure. If there is only one orbit
on the set of normalised pure states, the invariant state is unique.
When there is only one orbit, the action is called \emph{transitive}.
In practice it means that for every pair of pure states $\psi$, $\psi'$
there exists a reversible channel $\mathcal{U}$ such that $\psi'=\mathcal{U}\psi$.
\begin{prop}
\label{prop:uniqueness invariant}A causal theory with transitive
action has a unique invariant state, which is internal.
\end{prop}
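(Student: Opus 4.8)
The plan is to prove two things: uniqueness of the invariant state, and that it is internal. For uniqueness I would use the construction $\chi_\psi = \int_G \mathcal{V}\psi\,\mathrm{d}\mathcal{V}$ already introduced: any invariant state $\chi$ satisfies $\chi = \int_G \mathcal{V}\chi\,\mathrm{d}\mathcal{V}$ (since $\mathcal{V}\chi=\chi$ for all $\mathcal{V}$ and the Haar measure is normalised), and since the theory is causal with transitive action, $\chi$ being normalised can be written as a convex mixture $\chi = \sum_i p_i \psi_i$ of normalised pure states $\psi_i$; then by linearity $\chi = \sum_i p_i \int_G \mathcal{V}\psi_i\,\mathrm{d}\mathcal{V} = \sum_i p_i \chi_{\psi_i}$, and by transitivity all the $\chi_{\psi_i}$ equal the same state $\chi_\psi$ (this is exactly the orbit argument just given in the excerpt), so $\chi = \chi_\psi$. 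Hence every invariant state coincides with $\chi_\psi$, giving uniqueness. One subtlety: to write $\chi$ as a convex mixture of pure states I am implicitly using that every normalised state has a decomposition into normalised pure states — this should be available in finite dimension from compactness of the state space (Krein--Milman), but if the excerpt only guarantees refinements into pure states up to here, I would phrase it via an arbitrary refinement $\chi = \sum_i \rho_i$ into pure (sub-normalised) states and rescale.

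For internality I would show that an arbitrary normalised state $\rho$ is contained in $\chi_\psi$, i.e.\ that $\chi_\psi = p\rho + (1-p)\tau$ for some $p\in(0,1]$ and normalised $\tau$. The natural approach: since the action is transitive, write $\rho = \sum_j q_j \phi_j$ as a convex combination of normalised pure states, and for each pure $\phi_j$ pick a reversible channel $\mathcal{U}_j$ with $\phi_j = \mathcal{U}_j \psi$. Then consider averaging: because $\chi_\psi$ is the Haar average of $\mathcal{V}\psi$ over the whole group, and the group average ``dominates'' any single group element's contribution, I expect $\chi_\psi$ to have each $\mathcal{V}\psi$ appearing ``with positive weight'' in a suitable sense. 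More concretely, I would fix a pure state $\phi$ and try to split the integral: pick a small open neighbourhood $N$ of the identity in $G$ with $h(N) = p > 0$ (possible since $G$ is compact and, if infinite, has no isolated points — or just take any Borel set of positive measure), and write $\chi_\psi = \int_N \mathcal{V}\psi\,\mathrm{d}\mathcal{V} + \int_{G\setminus N}\mathcal{V}\psi\,\mathrm{d}\mathcal{V}$; the second integral, rescaled, is a normalised state $\tau$, and the first, rescaled by $1/p$, is a normalised state $\sigma$. This shows every such $\sigma$ is contained in $\chi_\psi$, but I need to cover \emph{arbitrary} $\rho$, not just these averages.

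The cleaner route for internality: show first that $\psi$ itself is ``almost contained'' and then use transitivity plus convexity. Actually the slick argument is this: for any normalised pure state $\phi$, by transitivity $\phi = \mathcal{U}\psi$, so $\mathcal{U}\chi_\psi = \chi_\psi$ gives $\chi_\psi = \int_G \mathcal{V}\mathcal{U}\psi\,\mathrm{d}\mathcal{V}$; now isolate the term near $\mathcal{V} = \mathcal{I}$: choosing a positive-measure Borel set $S \ni \mathcal{I}$ small enough, $\int_S \mathcal{V}\mathcal{U}\psi\,\mathrm{d}\mathcal{V}$ is close to $h(S)\,\phi$ — but making this rigorous needs continuity of $\mathcal{V}\mapsto \mathcal{V}\phi$ and a limiting argument, and the limit of contained states need not be contained. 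I think the honest obstacle is precisely this: \textbf{proving that every pure state (hence, by convexity, every state) is contained in $\chi_\psi$ with a uniformly positive weight}. I would handle it by a compactness/continuity argument: the map $G \to \mathsf{St}_1(\mathrm{A})$, $\mathcal{V}\mapsto\mathcal{V}\psi$ is continuous (by Lemma~\ref{lem:lemma product} and the operational topology), its image is the orbit $O$, and $\chi_\psi$ is the barycentre of the pushforward of the Haar probability measure onto $O$. Since the Haar measure has full support on $G$ and the pushforward has full support on $\overline{O}$, the barycentre $\chi_\psi$ lies in the \emph{relative interior} of $\mathrm{conv}(\overline{O})$; as $\mathrm{conv}(\overline{O})$ contains all pure states (by transitivity the orbit is all of $\mathsf{PurSt}_1(\mathrm{A})$, so its closed convex hull is all of $\mathsf{St}_1(\mathrm{A})$), $\chi_\psi$ is in the interior of the full state space, which is exactly the statement that it is internal. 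The main work is pinning down the ``barycentre of a full-support measure lies in the relative interior'' step and confirming the orbit's closed convex hull is the whole normalised state space; both are finite-dimensional convexity facts I would cite or prove in a line each.
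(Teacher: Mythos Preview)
Your uniqueness argument is correct and in fact spells out what the paper leaves implicit (the paper's proof opens with ``We need to prove only the second part of the statement'', treating uniqueness as already established by the orbit discussion preceding the proposition).

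For internality, your final route via the barycentre-in-relative-interior lemma is correct, but the paper takes a shorter and more direct path that avoids the measure-theoretic detour entirely. The paper's trick: since $\chi = \int_G \mathcal{V}\psi\,\mathrm{d}\mathcal{V}$ for \emph{every} pure $\psi$, by linearity $\chi = \int_G \mathcal{V}\rho\,\mathrm{d}\mathcal{V}$ for \emph{every} state $\rho$. In finite dimension, Carath\'eodory's theorem lets one rewrite this integral as a \emph{finite} convex combination $\chi = \sum_i p_i\,\mathcal{U}_i\rho$. Now apply $\mathcal{U}_{i_0}^{-1}$ for any fixed index $i_0$: by invariance,
\[
\chi = \mathcal{U}_{i_0}^{-1}\chi = p_{i_0}\rho + \sum_{i\neq i_0} p_i\,\mathcal{U}_{i_0}^{-1}\mathcal{U}_i\rho,
\]
which exhibits $\rho$ explicitly as a term in a convex decomposition of $\chi$. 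This sidesteps all your worries about small neighbourhoods, limiting arguments, and full-support pushforwards; the Carath\'eodory reduction to a finite sum is precisely what makes the ``isolate one term'' move rigorous without any approximation. Your approach buys a more geometric picture (the invariant state sits in the relative interior of the state space) at the cost of an auxiliary convexity lemma; the paper's buys a two-line self-contained argument using only invariance and Carath\'eodory.
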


\begin{proof}
We need to prove only the second part of the statement. Since the
invariant state is unique, we have $\chi=\int_{G}\mathcal{V}\psi\,\mathrm{d}\mathcal{V}$
for every pure state $\psi$, therefore, by linearity 
\begin{equation}
\chi=\int_{G}\mathcal{V}\rho\,\mathrm{d}\mathcal{V},\label{eq:invariant from rho}
\end{equation}
for every state $\rho$. Now, since we are in finite dimension, by
Carathéodory's theorem for convex geometry \cite{Caratheodory,Steinitz},
the integral~\eqref{eq:invariant from rho} is a finite convex combination
of reversible channels: $\chi=\sum_{i}p_{i}\mathcal{U}_{i}\rho$.
Now, apply $\mathcal{U}_{i_{0}}^{-1}$ to $\chi$, where $\mathcal{U}_{i_{0}}$
is one of the reversible channels in the convex combination. This
yields
\[
\chi=\mathcal{U}_{i_{0}}^{-1}\chi=p_{i_{0}}\rho+\sum_{i\neq i_{0}}p_{i}\mathcal{U}_{i_{0}}^{-1}\mathcal{U}_{i}\rho
\]
for every $\rho$. This shows that every state is contained in $\chi$,
whence $\chi$ is internal.
\end{proof}
The uniqueness of the invariant state does \emph{not} imply that the
action is transitive, as shown in the following example.
\begin{example}
Consider theory with a system whose state space is a diamond, as illustrated
in fig.~\ref{fig:rombo}.
\begin{figure}
\begin{centering}
\includegraphics[scale=0.8]{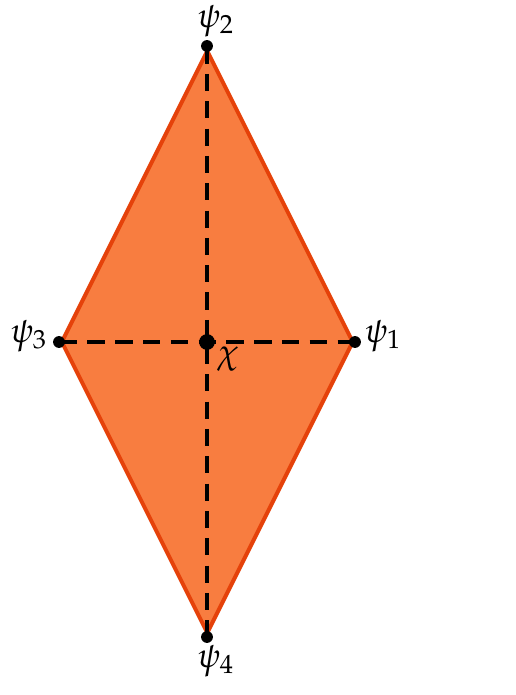}
\par\end{centering}
\caption{\label{fig:rombo}A state space where there is a unique invariant
state, the centre $\chi$ of the diamond, but the action of reversible
channels is not transitive on the pure states $\psi_{1}$, $\psi_{2}$,
$\psi_{3}$, and $\psi_{4}$. }

\end{figure}
 The pure states are its vertices $\psi_{1}$, $\psi_{2}$, $\psi_{3}$,
and $\psi_{4}$. Assuming that all symmetries of the state space are
allowed reversible channels, besides the identity we have the reflection
across the longer diagonal and the reflection across the shorter diagonal.
There are two orbits on the set of pure states: $\left\{ \psi_{1},\psi_{3}\right\} $
and $\left\{ \psi_{2},\psi_{4}\right\} $, so the action is \emph{not}
transitive. Yet there is a unique invariant state $\chi$, which is
the centre of the diamond.
\end{example}

Another interesting property of theories with a transitive action
of the group of reversible channels is that the set of pure states
is compact.
\begin{prop}
\label{prop:pure compactness}In a causal theory with transitive action,
the set of (normalised) pure states is compact.
\end{prop}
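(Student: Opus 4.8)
The plan is to exhibit $\mathsf{PurSt}_1\left(\mathrm{A}\right)$ as the continuous image of the compact group $G_{\mathrm{A}}$, so that compactness follows from the elementary fact that a continuous image of a compact set is compact. Concretely, I would fix a reference normalised pure state $\psi_0\in\mathsf{PurSt}_1\left(\mathrm{A}\right)$ (such a state exists since every system has pure states) and consider the evaluation map $\varphi\colon G_{\mathrm{A}}\to\mathsf{St}_{\mathbb{R}}\left(\mathrm{A}\right)$ defined by $\varphi\left(\mathcal{U}\right):=\mathcal{U}\psi_0$.

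The first step is to identify the image of $\varphi$ with the full set of normalised pure states. The inclusion $\varphi\left(G_{\mathrm{A}}\right)\subseteq\mathsf{PurSt}_1\left(\mathrm{A}\right)$ holds because reversible channels send pure states to pure states (the lemma stating that reversible channels preserve purity) and preserve the norm, so $\left\Vert\mathcal{U}\psi_0\right\Vert=\left\Vert\psi_0\right\Vert=1$ by Proposition~\ref{prop:norm non-increasing}. The reverse inclusion is precisely transitivity of the action: given any normalised pure state $\psi$, there is a reversible $\mathcal{U}$ with $\psi=\mathcal{U}\psi_0=\varphi\left(\mathcal{U}\right)$. Hence $\mathsf{PurSt}_1\left(\mathrm{A}\right)=\varphi\left(G_{\mathrm{A}}\right)$.

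The second step is to check that $\varphi$ is continuous, where $G_{\mathrm{A}}$ carries the topology inherited from $\mathsf{Transf}\left(\mathrm{A}\right)$ (operational or norm — equivalent by the earlier propositions). This is immediate from the definition of the operational topology for transformations: if $\mathcal{U}_n\to\mathcal{U}$, then specialising the defining condition to the trivial ancilla and to the input state $\psi_0$ yields $\mathcal{U}_n\psi_0\to\mathcal{U}\psi_0$, i.e.\ $\varphi\left(\mathcal{U}_n\right)\to\varphi\left(\mathcal{U}\right)$. Combining these two steps with Proposition~\ref{prop:group compact} (compactness of $G_{\mathrm{A}}$) gives that $\mathsf{PurSt}_1\left(\mathrm{A}\right)$ is compact; as a compact subset of the finite-dimensional Hausdorff space $\mathsf{St}_{\mathbb{R}}\left(\mathrm{A}\right)$ it is in particular closed and bounded. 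I do not anticipate a serious obstacle: the argument is short, and the only conceptual point worth flagging is that transitivity is exactly what is needed to avoid the familiar phenomenon that the extreme points of a compact convex body may fail to form a closed set, so the hypothesis genuinely cannot be dropped.
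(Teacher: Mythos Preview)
Your proposal is correct and is essentially the paper's argument, packaged slightly differently. The paper proves closedness of $\mathsf{PurSt}_1\left(\mathrm{A}\right)$ by a sequential argument: given $\psi_n\to\psi$, write $\psi_n=\mathcal{U}_n\varphi_0$ by transitivity, extract a convergent subsequence $\mathcal{U}_{n_k}\to\mathcal{U}$ using Proposition~\ref{prop:group compact}, and conclude $\psi=\mathcal{U}\varphi_0$ is pure. This is just the sequential unpacking of your ``continuous image of compact is compact'' statement, with the same evaluation map, the same reference pure state, and the same appeal to compactness of $G_{\mathrm{A}}$ and purity-preservation of reversible channels.
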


\begin{proof}
We only need to prove that the set of pure states is closed, being
a subset of a compact set, the set of states. Let $\left\{ \psi_{n}\right\} $
be a sequence of pure states converging to the state $\psi$. We have
to prove that $\psi$ is pure. Take a fixed pure state $\varphi_{0}$,
then, by the transitivity of the action, there exists a sequence $\left\{ \mathcal{U}_{n}\right\} $
of reversible channels such that $\psi_{n}=\mathcal{U}_{n}\varphi_{0}$.
By proposition~\ref{prop:group compact}, there exists a subsequence
$\left\{ \mathcal{U}_{n_{k}}\right\} $ converging to the reversible
channel $\mathcal{U}$. Hence
\[
\psi=\lim_{n\rightarrow+\infty}\psi_{n_{k}}=\lim_{n\rightarrow+\infty}\mathcal{U}_{n_{k}}\varphi_{0}=\mathcal{U}\varphi_{0}.
\]
We now that reversible channels send pure states to pure states \cite{Chiribella-Scandolo-entanglement},
so $\psi$ is a pure state too.
\end{proof}

\chapter{Resource theories\label{chap:Resource-theories}}

In recent years, thermodynamics far from the thermodynamic limit has
attracted remarkable attention \cite{delRio,Xuereb,Anders-thermo},
given the ever-increasing developments in the field of nanotechnology,
and the actual experimental realisations of microscopic systems. In
this new regime quantum effects become important, and must be taken
into account in the physical description. To this end, quantum resource
theories \cite{Quantum-resource-1,Quantum-resource-2,Gour-review}
have provided valuable tools, especially to study deterministic thermodynamics,
viz.\ non-fluctuating processes. The power of resource theories stems
from the fact that they describe thermodynamic transitions in a natural
way: instead of specifying the actual details of the ``mechanics''
implementing a thermodynamic transition between two quantum states,
they aim to study the convertibility criteria between states, namely
\emph{when} (and \emph{not} how) a state can be transformed into another.

The ideas behind resource theories are pretty general and independent
of the physical theory in which they are formulated \cite{Resource-theories,Resource-monoid,Resource-knowledge,Resource-currencies}.
The key concept is that the states of a physical system are to be
viewed as resources. Then one studies the conversion between different
resources by means of a restricted set of channels, which are singled
out by the physical setting as those that easy to implement. For this
reason, they are called \emph{free operations}.

As to resources, some of them are very abundant or easy to get, therefore
they are \emph{free}; others, instead, may be extremely rare or hard
to obtain, and these will be the most valuable. Thus, in the resource-theoretic
framework a hierarchy of resources according to their value is naturally
built in. This hierarchy is translated into mathematical terms as
a partial preorder induced by free operations. In this preorder, a
state is more valuable than another if from the former we can reach
a larger set of target states. In the light of this hierarchy, it
is important to find functions that assign a value to resources in
a way that it is consistent with the preorder. These are the \emph{resource
monotones} \cite{MHorodecki,Quantum-resource-1,Resource-theories,Quantum-resource-2,Gour-review},
and they can be very often identified with some thermodynamic potentials.

The subject of resource theories is extremely broad, both in quantum
theory and in GPTs. In this chapter we give a brief presentation of
the minimal necessary concepts to understand the contents and the
spirit of the rest of the thesis. We will start with the definition
of resource theory, and then we will introduce the resource preorder,
establishing a hierarchy among the states of a GPT. Finally, we define
resource monotones. We just mention that, besides the results presented
here for resource theories in GPTs, we have studied how the thermodynamic
limit emerges in the resource-theoretic approach \cite{EastThesis,East-article},
but they are not reported here, for they have no direct consequences
for the rest of the thesis.

\section{Resource theories of states}

In a great deal of situations, even in everyday life, one comes across
resources, so often that we do not even question the meaning of this
term. But what is actually a resource? To define a resource, one must
first specify a task. In this setting, an object is a resource if
it has some value with respect to performing that task. For example,
a chemical product is a resource if it can be used to obtain something
we want, for example it fertilises the soil.
\begin{example}
Ammonia has plenty of uses as a chemical product. In the process of
producing ammonia, $\mathrm{NH}_{3}$, we need hydrogen and nitrogen
as resources, according to the reaction
\[
\mathrm{N}_{2}+3\mathrm{H}_{2}\rightarrow2\mathrm{N}\mathrm{H}_{3}.
\]
Regarding all the chemical species of the above reaction as resources,
we understand an important feature of resources: they can be converted
into one another.
\end{example}

As the example above highlights, chemistry is the prototypical resource
theory, but there are plenty of examples even in physics, which is
the scope of our analysis. In particular, the advent of quantum information
processing has provided very many examples of new tasks that can be
performed only with access to quantum states, which, in our terminology,
become quantum resources. Think, for example, of quantum teleportation
\cite{Teleportation} or dense-coding \cite{Dense-coding}, which
are only possible if one has access to entangled states. For this
reason it is very natural to take quantum entanglement as a resource,
and indeed it was the first quantum resource to be studied extensively
\cite{Entanglement-entropy2}.

In this presentation, we want to go beyond quantum theory, and to
treat resources in general physical theories using the formalism of
causal GPTs. Indeed, as shown in \cite{Resource-theories,Resource-monoid,EastThesis,East-article},
many of the features of quantum resource theories \cite{Quantum-resource-1,Quantum-resource-2,Gour-review}
are in fact far more general in their scope. In the following we will
always assume that our resources are states, like in most of concrete
examples, but this need not be the case \cite{Resource-theories,Coherence-beyond-states,Gour-review,Thermal-capacity,Gour-super,Gour-entropy-channel}.

The idea behind a theory of resources is to identify which states
are useless as resources, because they are abundant, easy to obtain,
or of no use. These states are called \emph{free states}. Clearly,
all states of the trivial system, which are ``non-physical states'',
are free, and represent a ``void'' resource. For a generic system
$\mathrm{A}$, we have a partition of $\mathsf{St}_{1}\left(\mathrm{A}\right)$
into the set of free states $F$ and its complement, which contains
the true, costly, resources. As we highlighted above, the specification
of $F$ depends on the specific task we are considering. We can easily
have a state free for one task, and a true resource for another.

We will assume the set of free states to be topologically closed:
a state that can be arbitrarily well approximated by free states must
be free too. Finally, is natural to assume that the tensor product
of free states is another free state of the composite system. This
is because if two resources have zero value if taken on their own,
they have zero value also when taken together \cite{Resource-theories},
therefore $F_{\mathrm{AB}}\supseteq F_{\mathrm{A}}\otimes F_{\mathrm{B}}$,
where $F_{\mathrm{S}}$ denotes the set of free states of system $\mathrm{S}$.

Clearly resources can be manipulated by an agent who has a task to
perform, which means resources can be converted from one to another.
However, the specific task one is considering constrains the allowed
manipulation of resources: one is restricted to performing a subset
of all transformations of the theory. These restricted transformations
are usually regarded as being easy to implement, or realisable at
no cost. They are called \emph{free transformations}. Since we want
the agent to be able to \emph{choose} the operations they implement,
it is customary to focus only on deterministic transformations, i.e.\ channels,
where there is no randomness involved. This is what we will do. For
every input and output system $\mathrm{A}$ and $\mathrm{B}$ respectively,
we will consider a partition of channels $\mathsf{DetTransf}\left(\mathrm{A},\mathrm{B}\right)$
into a set of free channels $F_{\mathrm{A}\rightarrow\mathrm{B}}$
and a set of non-free channels. We will call $F_{\mathrm{A}\rightarrow\mathrm{B}}$
the set of \emph{free operations}. For similar reason to the case
of states, we take $F_{\mathrm{A}\rightarrow\mathrm{B}}$ to be topologically
closed. Given the definition of states in OPTs, when $\mathrm{A}$
is the trivial system, $F_{\mathrm{A}\rightarrow\mathrm{B}}$ defines
the free states of $\mathrm{B}$. Therefore there is no need to deal
with free operations and free states as separate objects \cite{Resource-theories,Gour-review}. 

When $\mathrm{B}$ is the trivial system, we will always take the
deterministic effect of $\mathrm{A}$ to be free. The reason why we
do this will become clear in the following section.

Now, what are the properties of the free operations of a physical
theory? They are summarised in the following requirements:
\begin{itemize}
\item the identity channel is free because doing nothing comes for free;
\item the $\mathtt{SWAP}$ is a free operation;
\item the sequential composition of free operations is a free operation;
\item the parallel composition of free operations is a free operation.
\end{itemize}
These requirements are quite intuitive in their meaning. From these
properties, it is obvious that the action of a free operation on a
free state yields a free state, because it is the sequential composition
of two free operations.

Now we are ready to collect all these remarks in the following definition
of a resource theory.
\begin{defn}
\label{def:A-resource-theory}A \emph{resource theory} on a causal
GPT is the specification, for every pair of systems $\mathrm{A}$
and $\mathrm{B}$, of a closed set of \emph{free operations} $F_{\mathrm{A}\rightarrow\mathrm{B}}\subseteq\mathsf{DetTransf}\left(\mathrm{A},\mathrm{B}\right)$.
This set satisfies the following properties:
\begin{itemize}
\item $\mathcal{\mathcal{I}}\in F_{\mathrm{A}\rightarrow\mathrm{A}}$;
\item $\mathtt{SWAP}\in F_{\mathrm{AB}\rightarrow\mathrm{BA}}$;
\item if $\mathcal{F}_{1}\in F_{\mathrm{A}\rightarrow\mathrm{B}}$ and $\mathcal{F}_{2}\in F_{\mathrm{B}\rightarrow\mathrm{C}}$,
then $\mathcal{F}_{2}\circ\mathcal{F}_{1}\in F_{\mathrm{A}\rightarrow\mathrm{C}}$;
\item if $\mathcal{F}_{1}\in F_{\mathrm{A}\rightarrow\mathrm{B}}$ and $\mathcal{F}_{2}\in F_{\mathrm{C}\rightarrow\mathrm{D}}$,
then $\mathcal{F}_{1}\otimes\mathcal{F}_{2}\in F_{\mathrm{AC}\rightarrow\mathrm{BD}}$;
\item $F_{\mathrm{A}\rightarrow\mathrm{I}}=\left\{ u_{\mathrm{A}}\right\} $.
\end{itemize}
\end{defn}

The interesting case, clearly, is when $F_{\mathrm{A}\rightarrow\mathrm{B}}$
is \emph{strictly} contained in the set of all channels from $\mathrm{A}$
and $\mathrm{B}$, because this means there is a true restriction
on the allowed manipulations of the system. It is indeed from this
restriction that the role of states as resources emerges, otherwise
the resource theory is trivial.

From a more mathematical perspective, the properties satisfied by
the set of free operations tell us that a resource theory specifies
a strict symmetric monoidal subcategory of the underlying OPT on which
it is defined \cite{Resource-theories,Resource-monoid}.
\begin{example}
The prototypical example of resource theory is the resource theory
of quantum entanglement \cite{Entanglement-entropy2}, where free
states are separable states, and free operations are LOCC channels\footnote{Recall that LOCC stands for ``Local Operations and Classical Communication''.}
\cite{LOCC1,LOCC2,Lo-Popescu}, namely channels that can be decomposed
as the action of local operations and classical communication exchanged
between the two parties. Here is an example:\[
\sum_{j_1,j_2,j_3}\begin{aligned} \Qcircuit @C=1em @R=.7em @!R { & \qw \poloFantasmaCn{\rA_0} & \gate{ \cA_{j_{1}} }\ar@{-->}[drr] & \qw & \qw \poloFantasmaCn{\rA_1}&\qw & \gate{\cA^{\left(j_{1},j_2\right)}_{j_3} } & \qw \poloFantasmaCn{\rA_2} &\qw \\ & \qw \poloFantasmaCn{\rB_0} & \qw &\qw & \gate{ \cB^{\left(j_{1}\right)}_{j_{2}} }\ar@{-->}[urr] & \qw &\qw \poloFantasmaCn{\rB_1} & \qw &\qw }\end{aligned} ~,
\]where dashed wires represent classical communication. Note that separable
states are exactly the states that can be prepared with a LOCC protocol.

Alternatively, one can take as free operations the largest set of
channels that send separable states to separable states. This is the
set of separable operations \cite{SEP}.
\end{example}

We saw that free states are a special kind of free operations, therefore
specifying free operations determines also free states: they are exactly
those operations with trivial input. Yet, sometimes it is possible,
or much easier, to specify only the free states. Can we recover a
full resource theory just from the structure of free states? In general
we will have a large freedom in the choice of free operations that
respect the requirements of definition~\ref{def:A-resource-theory}.
The largest set of free operations compatible with the set of free
states $F$ is the set of channels from $\mathrm{A}$ to $\mathrm{B}$
that send the free states of $\mathrm{A}$ to free states of $\mathrm{B}$:
\begin{equation}
F_{\mathrm{A}\rightarrow\mathrm{B}}^{\max}=\left\{ \mathcal{C}\in\mathsf{DetTransf}\left(\mathrm{A},\mathrm{B}\right):\mathcal{C}f\in F_{\mathrm{B}},\textrm{ where }f\in F_{\mathrm{A}}\right\} ,\label{eq:RNG}
\end{equation}
where $F_{\mathrm{S}}$ is the set of free states of system $\mathrm{S}$.
This is called the set of ``resource-non-generating operations'',
and it is the largest set of free operations, because every other
set of free operations satisfies the condition of eq.~\eqref{eq:RNG}.
However, this condition is not enough, because it says nothing about
the behaviour of these operations under parallel composition. Indeed,
from eq.~\eqref{eq:RNG} we know that the parallel composition of
two free operations in $F_{\mathrm{A}\rightarrow\mathrm{B}}^{\max}$
preserves the product of free states $\left(\mathcal{C}\otimes\mathcal{C}'\right)\left(f_{\mathrm{A}}\otimes f_{\mathrm{B}}\right)\in F_{\mathrm{AB}}$,
but the free states $F_{\mathrm{AB}}$ of the composite system $\mathrm{AB}$
may be also of a different form from product states. Therefore we
take as free operations the subset of $F_{\mathrm{A}\rightarrow\mathrm{B}}^{\max}$
such that arbitrary tensor products of its elements preserve all free
states of any composite system. These channels are also known as ``completely
resource-non-generating operations'' \cite{Gour-review}.

\section{The resource preorder\label{sec:The-resource-preorder}}

After giving the formal definition of resource theory, we want to
examine its implications. One of the main applications of the resource-theoretic
framework is to study state conversions when there is a restriction
on the allowed channels. A physical example is thermodynamics, where
one studies the thermodynamic convertibility of states under some
external constraints, e.g.\ the fact that the temperature is constant
or that no heat is exchanged in the thermodynamic transition. Thermodynamics
also provides an example of a situation in which, to derive thermodynamic
properties, such as relations between thermodynamic potentials, it
is not important to specify the actual details of the transition between
two states, but it is enough to know that such a transition is possible
\cite{giles2016mathematical}. This is the typical situation arising
when one considers functions of state, and this structure is beautifully
captured by the preorder one can set up in a resource theory.

The preorder arises when one wishes to better specify the value of
a resource. Indeed, in the previous section, we divided states into
two classes: free and costly ones. This is a rough classification,
because it is natural to expect that even among costly states there
will be an internal hierarchy, with some of them closer to free states.

As it often happens experimentally, the processing of resources consumes
or degrades them, even if such a processing is free. In this way,
from a precious resource, we end up with resources that are more and
more useless. The idea is that a resource is more valuable than another
if we can reach a larger set of resources by manipulating it. Therefore,
the most natural way to define a hierarchy on resources is to take
advantage of this idea about resource processing.
\begin{defn}
In a resource theory of states, we say that a state $\rho$ is \emph{more
valuable} than a state\footnote{Note that $\rho$ and $\sigma$ can be states of different systems.}
$\sigma$, and we write $\rho\succsim\sigma$, if there exists a free
operation $\mathcal{F}$ such that $\mathcal{F}\rho=\sigma.$
\end{defn}

We see that the hierarchy among resources is based on resource convertibility,
and note that we do not need to specify the actual details of $\mathcal{F}$,
but only that such an $\mathcal{F}$ exists. As already written, the
relation $\succsim$ on the set of states is a preorder.
\begin{prop}
The relation $\succsim$ is a preorder\footnote{Recall that a preorder is a relation that is reflexive and transitive.}.
\end{prop}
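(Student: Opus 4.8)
The plan is to verify the two defining properties of a preorder—reflexivity and transitivity—directly from the axioms for free operations in Definition~\ref{def:A-resource-theory}. No clever construction is needed; this is a bookkeeping argument, and the only ``obstacle'' is simply to cite the right closure properties.

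First I would establish reflexivity. Given any state $\rho\in\mathsf{St}_{1}\left(\mathrm{A}\right)$, the identity channel $\mathcal{I}_{\mathrm{A}}$ belongs to $F_{\mathrm{A}\rightarrow\mathrm{A}}$ by the first bullet of Definition~\ref{def:A-resource-theory}, and it satisfies $\mathcal{I}_{\mathrm{A}}\rho=\rho$. Hence there is a free operation taking $\rho$ to itself, so $\rho\succsim\rho$.

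Next I would establish transitivity. Suppose $\rho\succsim\sigma$ and $\sigma\succsim\tau$, where $\rho\in\mathsf{St}_1(\mathrm{A})$, $\sigma\in\mathsf{St}_1(\mathrm{B})$, $\tau\in\mathsf{St}_1(\mathrm{C})$. By definition there exist free operations $\mathcal{F}_1\in F_{\mathrm{A}\rightarrow\mathrm{B}}$ and $\mathcal{F}_2\in F_{\mathrm{B}\rightarrow\mathrm{C}}$ with $\mathcal{F}_1\rho=\sigma$ and $\mathcal{F}_2\sigma=\tau$. By the third bullet of Definition~\ref{def:A-resource-theory}, the sequential composition $\mathcal{F}_2\circ\mathcal{F}_1$ lies in $F_{\mathrm{A}\rightarrow\mathrm{C}}$, and using associativity of sequential composition together with the linear-extension convention of eq.~\eqref{eq:linear extension} we get $\left(\mathcal{F}_2\circ\mathcal{F}_1\right)\rho=\mathcal{F}_2\left(\mathcal{F}_1\rho\right)=\mathcal{F}_2\sigma=\tau$. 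Therefore $\rho\succsim\tau$, which completes the verification that $\succsim$ is reflexive and transitive, i.e.\ a preorder. The only point worth a remark is that $\succsim$ need not be antisymmetric—two distinct states (possibly even of different systems) may be interconvertible by free operations—so it is genuinely a preorder rather than a partial order.
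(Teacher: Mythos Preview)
Your proof is correct and follows essentially the same approach as the paper: reflexivity from the identity being free, transitivity from closure of free operations under sequential composition. The paper's version is terser (it does not cite the specific bullets or mention associativity explicitly), but the argument is identical.
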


\begin{proof}
The relation $\succsim$ is reflexive. Indeed for any state $\rho$,
$\rho\succsim\rho$ because the identity channel is a free operation.
The relation is also transitive. Indeed, suppose we have $\rho\succsim\sigma$
and $\sigma\succsim\tau$. This means that there exist two free operations
$\mathcal{F}_{1}$ and $\mathcal{F}_{2}$ such that $\mathcal{F}_{1}\rho=\sigma$
and $\mathcal{F}_{2}\sigma=\tau$. Taking the sequential composition,
we have the free operation $\mathcal{F}_{2}\mathcal{F}_{1}$ such
that $\mathcal{F}_{2}\mathcal{F}_{1}\rho=\tau$, which means $\rho\succsim\tau$.
\end{proof}
In general, however, we \emph{cannot} conclude that, if $\rho\succsim\sigma$
and $\sigma\succsim\rho$, then $\rho=\sigma$. This only means that
it is possible to convert $\rho$ into $\sigma$ with a free operation
$\mathcal{F}$, and $\sigma$ into $\rho$ with a free operation $\mathcal{F}'$.
Note that this does \emph{not} even mean that $\mathcal{F}'=\mathcal{F}^{-1}$.
Nevertheless, if $\rho\succsim\sigma$ and $\sigma\succsim\rho$,
we can think of $\rho$ and $\sigma$ as \emph{equivalent}, and we
say that $\rho$ is \emph{as valuable as} $\sigma$. Indeed, it is
straightforward to see that we can define an equivalence relation
$\sim$, where $\rho\sim\sigma$ if $\rho\succsim\sigma$ and $\sigma\succsim\rho$.
Taking the quotient of the set of states modulo $\sim$, the preorder
$\succsim$ becomes a partial order $\succeq$ between equivalence
classes.

Sometimes, given two states $\rho$ and $\sigma$ we cannot find neither
a free operation converting $\rho$ into $\sigma$, nor a free operation
converting $\sigma$ into $\rho$. In this case $\rho$ and $\sigma$
are ``incomparable'': we cannot establish which of the two is the
more valuable. This means that the preorder on resources is only partial.

Let us prove that this preorder is compatible with the tensor product
of states. 
\begin{prop}
\label{prop:compatible preorder}Suppose $\rho\succsim\rho'$ and
$\sigma\succsim\sigma'$, then one has $\rho\otimes\sigma\succsim\rho'\otimes\sigma'$.
\end{prop}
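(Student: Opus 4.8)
The plan is to unfold the definitions and then invoke the closure of free operations under parallel composition. By hypothesis $\rho\succsim\rho'$ means there is a free operation $\mathcal{F}_{1}\in F_{\mathrm{A}\rightarrow\mathrm{A}'}$ with $\mathcal{F}_{1}\rho=\rho'$, and $\sigma\succsim\sigma'$ means there is a free operation $\mathcal{F}_{2}\in F_{\mathrm{B}\rightarrow\mathrm{B}'}$ with $\mathcal{F}_{2}\sigma=\sigma'$, where $\mathrm{A},\mathrm{B}$ are the systems of $\rho,\sigma$ and $\mathrm{A}',\mathrm{B}'$ those of $\rho',\sigma'$. First I would form the channel $\mathcal{F}_{1}\otimes\mathcal{F}_{2}$; by the fourth bullet of definition~\ref{def:A-resource-theory} (closure of free operations under parallel composition) this lies in $F_{\mathrm{AB}\rightarrow\mathrm{A}'\mathrm{B}'}$, hence is a free operation.

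Next I would compute its action on the product state. Using the compatibility of parallel composition with the preparation of product states established earlier in the excerpt (the identity $\left(\mathcal{C}_{i}\otimes\mathcal{D}_{j}\right)\left(\rho\otimes\sigma\right)=\left(\mathcal{C}_{i}\rho\right)\otimes\left(\mathcal{D}_{j}\sigma\right)$, which follows from the interchange of sequential and parallel composition depicted diagrammatically), we get
\[
\left(\mathcal{F}_{1}\otimes\mathcal{F}_{2}\right)\left(\rho\otimes\sigma\right)=\left(\mathcal{F}_{1}\rho\right)\otimes\left(\mathcal{F}_{2}\sigma\right)=\rho'\otimes\sigma'.
\]
Since $\mathcal{F}_{1}\otimes\mathcal{F}_{2}$ is a free operation carrying $\rho\otimes\sigma$ to $\rho'\otimes\sigma'$, the definition of the resource preorder gives $\rho\otimes\sigma\succsim\rho'\otimes\sigma'$, which is the claim.

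Alternatively, and perhaps more in keeping with the "one free operation at a time" spirit, one could argue in two steps: apply $\mathcal{F}_{1}\otimes\mathcal{I}_{\mathrm{B}}$ (free, since $\mathcal{I}$ is free and free operations are closed under $\otimes$) to obtain $\rho\otimes\sigma\succsim\rho'\otimes\sigma$, then apply $\mathcal{I}_{\mathrm{A}'}\otimes\mathcal{F}_{2}$ to obtain $\rho'\otimes\sigma\succsim\rho'\otimes\sigma'$, and conclude by transitivity of $\succsim$. Either route works; there is no real obstacle here, the only thing to be careful about is citing the correct structural properties of $F_{\bullet\rightarrow\bullet}$ (closure under $\otimes$, and freeness of $\mathcal{I}$) rather than taking them for granted, and making explicit the functoriality $\left(\mathcal{F}_{1}\otimes\mathcal{F}_{2}\right)\left(\rho\otimes\sigma\right)=\left(\mathcal{F}_{1}\rho\right)\otimes\left(\mathcal{F}_{2}\sigma\right)$ that the monoidal structure of the OPT supplies.
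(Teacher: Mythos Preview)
Your proof is correct and follows essentially the same approach as the paper: take the two free operations $\mathcal{F}_{1}$ and $\mathcal{F}_{2}$ witnessing the hypotheses, form $\mathcal{F}_{1}\otimes\mathcal{F}_{2}$ (free by closure under parallel composition), and apply it to $\rho\otimes\sigma$. The paper's proof is slightly terser, omitting the explicit justification of functoriality and the alternative two-step route, but the core argument is identical.
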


\begin{proof}
By hypothesis there exist two free operations $\mathcal{F}_{1}$ and
$\mathcal{F}_{2}$ such that $\mathcal{F}_{1}\rho=\rho'$ and $\mathcal{F}_{2}\sigma=\sigma'$.
Then $\mathcal{F}_{1}\otimes\mathcal{F}_{2}$ is a free operation,
and 
\[
\left(\mathcal{F}_{1}\otimes\mathcal{F}_{2}\right)\left(\rho\otimes\sigma\right)=\rho'\otimes\sigma'.
\]
This implies $\rho\otimes\sigma\succsim\rho'\otimes\sigma'$.
\end{proof}
Therefore even the equivalence relation $\sim$ associated with the
preorder $\succsim$ is compatible with the parallel composition of
states.

Now we can prove that free states are indeed the least valuable states.
\begin{prop}
\label{prop:free states minima}Let $f\in\mathsf{St}_{1}\left(\mathrm{A}\right)$
be a free state. Then, for every state $\rho$ of any system $\mathrm{B}$,
one has $\rho\succsim f$.
\end{prop}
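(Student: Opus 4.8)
The plan is to show that any state $\rho$ of any system $\mathrm{B}$ can be converted to the free state $f\in\mathsf{St}_1(\mathrm{A})$ by a free operation, namely a measure-and-prepare channel. Concretely, I would take the channel
\[
\mathcal{F}:=\left|f\right)\left(u_{\mathrm{B}}\right|\in\mathsf{DetTransf}\left(\mathrm{B},\mathrm{A}\right),
\]
which first discards the input $\rho$ using the deterministic effect $u_{\mathrm{B}}$ and then prepares the free state $f$. This is a legitimate channel in a causal theory: $u_{\mathrm{A}}\mathcal{F}=\left(u_{\mathrm{A}}\middle|f\right)u_{\mathrm{B}}=u_{\mathrm{B}}$ since $f$ is normalised, so by Proposition~\ref{prop:characterization channel} it is indeed a channel. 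Applied to $\rho$ it gives $\mathcal{F}\rho=\left(u_{\mathrm{B}}\middle|\rho\right)f=f$, again because $\rho$ is normalised. Hence $\rho\succsim f$ provided $\mathcal{F}$ is a free operation.

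The key step, then, is to argue that $\mathcal{F}=\left|f\right)\left(u_{\mathrm{B}}\right|$ is free. I would decompose it as a sequential composition of two free operations: the deterministic effect $u_{\mathrm{B}}\in F_{\mathrm{B}\to\mathrm{I}}$, which is free by the last bullet of Definition~\ref{def:A-resource-theory} (equivalently, by the standing convention that the deterministic effect is always taken to be free), followed by the preparation of $f$, i.e.\ the channel $f\in\mathsf{DetTransf}(\mathrm{I},\mathrm{A})=F_{\mathrm{I}\to\mathrm{A}}$, which is free precisely because $f$ is a free \emph{state} (recall that when the input is the trivial system, free operations from $\mathrm{I}$ to $\mathrm{A}$ are exactly the free states of $\mathrm{A}$). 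Since the sequential composition of free operations is free, $\mathcal{F}=f\circ u_{\mathrm{B}}\in F_{\mathrm{B}\to\mathrm{A}}$, and therefore $\rho\succsim f$.

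The main obstacle — really the only subtlety — is making sure the two ingredients are genuinely covered by the axioms: that discarding is always free (this is the reason the text earlier promised to always take the deterministic effect of $\mathrm{A}$ to be free "when $\mathrm{B}$ is the trivial system"), and that a free state, viewed as a channel with trivial input, counts as a free operation (which is how the excerpt sets things up: "free states are a special kind of free operations"). Once these are in place the rest is the short computation above. I would also remark that $\rho$ and $f$ may live on different systems, which is exactly why the channel goes from $\mathrm{B}$ to $\mathrm{A}$ and why the preorder $\succsim$ is defined across systems; no extra care is needed beyond using $u_{\mathrm{B}}$ for the discarding and $f$ for the preparation.
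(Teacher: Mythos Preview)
Your proof is correct and follows exactly the same approach as the paper: both construct the channel $\mathcal{F}=\left|f\right)\left(u_{\mathrm{B}}\right|$, argue it is free as the sequential composition of the free discarding operation $u_{\mathrm{B}}$ and the free preparation $f$, and then compute $\mathcal{F}\rho=f$. The paper's version is terser, omitting the explicit verification that $\mathcal{F}$ is a channel, but the argument is identical.
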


\begin{proof}
Let us consider the channel $\mathcal{F}=\left|f\right)_{\mathrm{A}}\left(u\right|_{\mathrm{B}}$,
which is a free operation because it is the sequential composition
of free operations. Then for every $\rho\in\mathsf{St}_{1}\left(\mathrm{B}\right)$,
we have
\[
\mathcal{F}\rho=f\mathrm{tr}\:\rho=f,
\]
so $\rho\succsim f$.
\end{proof}
An immediate consequence is that all free states (even of different
systems) are equivalent to each other. Indeed if $f'$ is another
free state, if we take $\rho=f'$, we have $f'\succsim f$. If instead
we take $\rho=f$, proposition~\ref{prop:free states minima} tells
us that $f\succsim f'$. In conclusion, $f\sim f'$.

Moreover, taking multiple copies of a free state does not increase
its value. This is because the tensor product of free states is still
a free state: we have $f\sim f^{\otimes n}$, for every $n\geq1$.
In summary, since there is no cost to prepare an arbitrary number
of copies of free states, having many copies is just like having a
single copy.
\begin{rem}
In the light of the resource preorder, we can understand why we required
the deterministic effect, which destroys resources, to be a free operation.
If $f$ is a free resource of system $\mathrm{A}$, there is a free
process $\mathcal{F}:\mathrm{I}\rightarrow\mathrm{A}$ that prepares
it from ``nothing''. According to our definition of the ``more
valuable'' relation, this means that the void resource, i.e.\ ``nothing'',
is more valuable than a real resource! This would be quite absurd
if we did not impose that there is also a free process destroying
$f$, thus implying that $f$ is more valuable than ``nothing''.
The conjunction of these two relations yields that a free state is
equivalent to ``nothing''. This highlights how useless free states
are.
\end{rem}

We have seen that free states of system $\mathrm{A}$ are particular
free operations, from the trivial system to $\mathrm{A}$. However,
if free operations are defined as having the same input and output
system, the only free state is the state of the trivial system, or
in other words, the theory does not have free states. In this case,
can we identify a set of states that can be introduced ``by hand''
as free states of the theory? Proposition~\ref{prop:free states minima}
states that free states are the minima of the preorder. We introduce
the following definition, which characterises states that mimic some
properties of free states, specifically the fact that they are the
minima of the preorder.
\begin{defn}
A state $\rho\in\mathsf{St}_{1}\left(\mathrm{A}\right)$ is \emph{almost
free} if for any state $\sigma\in\mathsf{St}_{1}\left(\mathrm{A}\right)$
we have $\sigma\succsim\rho$.
\end{defn}

We see that almost free states are all equivalent to each other: if
$\rho$ and $\rho'$ are almost free states, by definition $\rho\succsim\rho'$
and $\rho'\succsim\rho$, so $\rho\sim\rho'$. Clearly free states
are also almost free. If the theory has free states, almost free states
are free too. Indeed, take $\sigma$ to be a free state $f$, then
we have $f\succsim\rho$. This implies that there is a free operation
from $\mathrm{I}$ to $\mathrm{A}$ given by $\mathcal{F}f$, where
$\mathcal{F}$ is a free operation converting $f$ into $\rho$, such
that $\rho=\mathcal{F}f$. Hence $\rho$ is a free state too. Therefore,
talking about almost free states is meaningful in theories where there
are no free states.

Now, consider the almost free states that are stable under tensor
product, by which we mean that the product of two almost free states
is still an almost free state. These are the states that can be promoted
``by hand'' to free states of the theory. In conclusion, even if
a theory has no free states, if it has almost free states stable under
tensor product, we can introduce them as free states of the resource
theory \cite{Gour-review}: they are the minima of the resource preorder,
and all equivalent to one another. The only difference is that we
cannot prepare them with a free operation. We will use this fact in
subsection~\ref{subsec:The-RaRe-resource}, where we will add a free
state ``by hand'' that cannot emerge from the definition of free
operations.

\section{Resource monotones\label{sec:Resource-monotones}}

Having established a hierarchy among resources, it is sometimes useful
to have a direct way of quantifying the value of a resource by assigning
it a real number, its ``price''. This means translating the preorder
on states into the usual order on real numbers. To this end, we need
real-valued functions that respect the preorder on resources.
\begin{defn}
A real-valued function $M:\mathsf{St}_{1}\left(\mathrm{A}\right)\rightarrow\mathbb{R}$
is a \emph{resource monotone} (\emph{monotone} for short) if $\rho\succsim\sigma$
implies $M\left(\rho\right)\geq M\left(\sigma\right)$.
\end{defn}

In words, monotones assign a price to resources, consistent with their
value. Specifically, if $\rho\sim\sigma$, then $M\left(\rho\right)=M\left(\sigma\right)$.
Indeed, if $\rho\sim\sigma$, then $\rho\succsim\sigma$ and $\sigma\succsim\rho$,
thus $M\left(\rho\right)\geq M\left(\sigma\right)$ and $M\left(\sigma\right)\geq M\left(\rho\right)$,
whence one has $M\left(\rho\right)=M\left(\sigma\right)$.

A careful examination of the definition of monotones shows that they
have some tricky subtleties. Indeed, it is not possible to translate
the hierarchy of resources faithfully into the ordering of real numbers.
The main reason for such a difficulty is that we can only establish
a \emph{partial preorder} among resources, whereas we have a \emph{total
order} on real numbers. Indeed, we can have two incomparable states
$\rho$ and $\sigma$, but if $M$ is a monotone, we have either $M\left(\rho\right)\geq M\left(\sigma\right)$
or $M\left(\sigma\right)\geq M\left(\rho\right)$, because two real
numbers can always be compared.

According to the definition of resource monotones, $\rho\succsim\sigma$
implies $M\left(\rho\right)\geq M\left(\sigma\right)$, but the converse
implication in general does not hold. This means that the preorder
$\succsim$ is more fundamental than the order induced by monotones.
Indeed, if $M\left(\rho\right)\geq M\left(\sigma\right)$ we \emph{cannot}
conclude that $\rho\succsim\sigma$.

However, resource monotones are useful to detect \emph{non-convertibility}
of resources \cite{Resource-theories}. Recalling the definition,
$M\left(\rho\right)<M\left(\sigma\right)$ means $\rho\:\cancel{\succsim}\:\sigma$,
viz.\ there is no free operation converting $\rho$ into $\sigma$.
Similarly, if $M\left(\rho\right)=M\left(\sigma\right)$ we \emph{cannot}
conclude that $\rho\sim\sigma$. Indeed, a trivial monotone is a constant
function that assigns the same value to all resources, irrespective
of their place in the resource hierarchy. In this case, both equivalent
and inequivalent resources have the same value.

To obtain a full equivalence between the preorder on resources and
the ordering induced by monotones, we have to take more than one resource
monotone. In this respect, a family of monotones $\left\{ M_{i}\right\} _{i\in X}$
is said to be \emph{complete} if we have $\rho\succsim\sigma$ if
and only if $M_{i}\left(\rho\right)\geq M_{i}\left(\sigma\right)$
for every $i\in X$.
\begin{prop}[{\cite[proposition 5.2]{Resource-theories}}]
Every resource theory admits a complete family of monotones.
\end{prop}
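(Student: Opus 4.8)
The plan is to construct, for each state $\sigma$, a monotone $M_\sigma$ that detects exactly the downward set of $\sigma$ in the preorder, and then let the family range over all states $\sigma$. Concretely, I would set
\[
M_\sigma(\rho):=\begin{cases}1 & \text{if }\rho\succsim\sigma\\ 0 & \text{otherwise.}\end{cases}
\]
First I would check that each $M_\sigma$ is a genuine monotone: if $\rho\succsim\rho'$, I must show $M_\sigma(\rho)\geq M_\sigma(\rho')$. The only nontrivial case is $M_\sigma(\rho')=1$, i.e. $\rho'\succsim\sigma$; but then transitivity of $\succsim$ (already established in the excerpt, in the proposition that $\succsim$ is a preorder) gives $\rho\succsim\sigma$, so $M_\sigma(\rho)=1$ as required. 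Note that $M_\sigma$ takes values in $\{0,1\}\subseteq\mathbb{R}$, so it is a legitimate real-valued function on $\mathsf{St}_1(\mathrm{A})$ for each system $\mathrm{A}$ — strictly speaking one gets a monotone on each system separately, which is all the definition asks for.

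Next I would verify completeness of the family $\{M_\sigma\}_{\sigma}$, indexed by all states $\sigma$ (of all systems). One direction is immediate from the definition of monotone: if $\rho\succsim\sigma$ then $M_\tau(\rho)\geq M_\tau(\sigma)$ for all $\tau$. For the converse, suppose $M_\sigma(\rho)\geq M_\sigma(\sigma)$ holds for every $\sigma$ (in particular taking $\sigma$ to be the state whose downward set we are probing). Reflexivity of $\succsim$ gives $\sigma\succsim\sigma$, hence $M_\sigma(\sigma)=1$; the inequality $M_\sigma(\rho)\geq 1$ then forces $M_\sigma(\rho)=1$, which by definition means $\rho\succsim\sigma$. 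So $M_\tau(\rho)\geq M_\tau(\sigma)$ for all $\tau$ implies in particular (at $\tau=\sigma$) that $\rho\succsim\sigma$, which is exactly what completeness demands.

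There is essentially no hard obstacle here — the argument is a direct unwinding of the definitions of monotone, preorder, and completeness, relying only on reflexivity and transitivity of $\succsim$, both already proved. The one point deserving a word of care is bookkeeping across systems: the definition of a monotone in the excerpt fixes a system $\mathrm{A}$ and asks for $M:\mathsf{St}_1(\mathrm{A})\to\mathbb{R}$, whereas $\rho\succsim\sigma$ allows $\rho,\sigma$ to live on different systems. The clean way to handle this is to say that for each system $\mathrm{A}$ the relevant family is $\{M_\sigma\restriction_{\mathsf{St}_1(\mathrm{A})}\}$ with $\sigma$ ranging over states of all systems, and the completeness statement is read within each fixed output system; alternatively one observes the construction is uniform and the same indexed family works simultaneously. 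I would remark that the family is typically enormous (one monotone per state), which is unavoidable in general precisely because, as discussed just before the proposition, a single real-valued monotone cannot capture a genuinely partial preorder.
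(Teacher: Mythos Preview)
Your proposal is correct and follows essentially the same approach as the paper: the paper also defines $M_\tau(\rho)=1$ if $\rho\succsim\tau$ and $0$ otherwise, verifies monotonicity via transitivity, and proves completeness by evaluating at $\tau=\sigma$ and using reflexivity (phrased there as a contradiction argument). Your discussion of the cross-system bookkeeping is actually more careful than the paper's, which simply takes the index set to be $\mathsf{St}_1(\mathrm{A})$ without further comment.
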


\begin{proof}
Take $X$ to be state space of system $\mathrm{A}$: $X=\mathsf{St}_{1}\left(\mathrm{A}\right)$.
Then let us label monotones with a state $\tau$. For every resource
$\rho$, define $M_{\tau}\left(\rho\right)$ as
\[
M_{\tau}\left(\rho\right)=\begin{cases}
1 & \textrm{if }\rho\succsim\tau\\
0 & \textrm{if }\rho\:\cancel{\succsim}\:\tau
\end{cases}.
\]
Let us show that $M_{\tau}$ is a monotone for every state $\tau$.
Suppose $\rho\succsim\sigma$.
\begin{itemize}
\item If $\sigma\succsim\tau$, then by transitivity $\rho\succsim\tau$.
In this case we have $M_{\tau}\left(\rho\right)=M_{\tau}\left(\sigma\right)=1$,
whence $M_{\tau}\left(\rho\right)\geq M_{\tau}\left(\sigma\right)$.
\item If $\rho\:\cancel{\succsim}\:\tau$ and $\sigma\:\cancel{\succsim}\:\tau$,
then $M_{\tau}\left(\rho\right)=M_{\tau}\left(\sigma\right)=0$, whence
$M_{\tau}\left(\rho\right)\geq M_{\tau}\left(\sigma\right)$.
\item If $\rho\succsim\tau$ but $\sigma\:\cancel{\succsim}\:\tau$, $M_{\tau}\left(\rho\right)=1$
and $M_{\tau}\left(\sigma\right)=0$, and again $M_{\tau}\left(\rho\right)\geq M_{\tau}\left(\sigma\right)$.
\end{itemize}
This shows that $\left\{ M_{\tau}\right\} $ is a family of monotones.
Let us show that this family is also complete. To do that, we must
prove that if $M_{\tau}\left(\rho\right)\geq M_{\tau}\left(\sigma\right)$
for every state $\tau$, then $\rho\succsim\sigma$. Suppose, by contradiction
that $\rho\:\cancel{\succsim}\:\sigma$. Consider $\tau=\sigma$.
Then we have $M_{\sigma}\left(\rho\right)=0$ because $\rho\:\cancel{\succsim}\:\sigma$,
but $M_{\sigma}\left(\sigma\right)=1$ because $\sigma\succsim\sigma$,
therefore $M_{\sigma}\left(\rho\right)<M_{\sigma}\left(\sigma\right)$,
in contradiction with the hypothesis that $M_{\tau}\left(\rho\right)\geq M_{\tau}\left(\sigma\right)$
for every state $\tau$.
\end{proof}
Although we have managed to construct a complete family of resource
monotones for every resource theory, such a family is not so practical,
for it is indexed by the states themselves. 

It is useful to classify resource monotones into some categories according
to their behaviour under composition of resources \cite{MHorodecki}:
\begin{description}
\item [{Additive}] $M\left(\rho\otimes\sigma\right)=M\left(\rho\right)+M\left(\sigma\right)$,
for all states $\rho$ and $\sigma$.
\item [{Partially\ additive}] $M\left(\rho^{\otimes n}\right)=nM\left(\rho\right)$
for every state $\rho$, and every $n\geq1$.
\item [{Regularisable}] if $\lim_{n\rightarrow+\infty}\frac{1}{n}M\left(\rho^{\otimes n}\right)<+\infty$,
and the limit exists for every state $\rho$.
\end{description}
Clearly
\[
\textrm{Additive}\subseteq\textrm{Partially additive}\subseteq\textrm{Regularisable},
\]
as it is straightforward to check. Later in section~\ref{sec:Mixedness-monotones}
we will encounter some examples of additive monotones. Regularisable
monotones are fundamental when one wishes to study the thermodynamic
limit \cite{EastThesis,East-article}, because they can be extended
to the regime where there are infinitely many copies of a state (the
thermodynamic limit). Indeed in this regime, the quantities that matter
are densities, i.e.\ the value of the monotone per copy, and the
condition $\lim_{n\rightarrow+\infty}\frac{1}{n}M\left(\rho^{\otimes n}\right)<+\infty$
means that it is possible to define a density in the thermodynamic
limit, namely
\[
M^{\infty}\left(\rho\right):=\lim_{n\rightarrow+\infty}\frac{1}{n}M\left(\rho^{\otimes n}\right).
\]
$M^{\infty}$ is called the \emph{regularisation} of $M$.
\begin{prop}
Given a regularisable monotone $M$, its regularisation $M^{\infty}$
is a resource monotone too.
\end{prop}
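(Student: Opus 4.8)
The plan is to show that $M^{\infty}$ inherits monotonicity from $M$ by exploiting the multiplicativity of the resource preorder under tensor products (proposition~\ref{prop:compatible preorder}). Suppose $\rho\succsim\sigma$; I want to conclude $M^{\infty}\left(\rho\right)\geq M^{\infty}\left(\sigma\right)$. The key observation is that if $\rho\succsim\sigma$, then by $n$-fold application of proposition~\ref{prop:compatible preorder} we get $\rho^{\otimes n}\succsim\sigma^{\otimes n}$ for every $n\geq1$. Indeed, a single free operation $\mathcal{F}$ with $\mathcal{F}\rho=\sigma$ gives, upon taking $\mathcal{F}^{\otimes n}$ (a free operation by the parallel-composition closure of free operations), $\mathcal{F}^{\otimes n}\left(\rho^{\otimes n}\right)=\sigma^{\otimes n}$.

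From $\rho^{\otimes n}\succsim\sigma^{\otimes n}$ and the fact that $M$ is a monotone, we immediately obtain $M\left(\rho^{\otimes n}\right)\geq M\left(\sigma^{\otimes n}\right)$ for every $n$. Dividing by $n$ preserves the inequality, so $\frac{1}{n}M\left(\rho^{\otimes n}\right)\geq\frac{1}{n}M\left(\sigma^{\otimes n}\right)$ for all $n\geq1$. Since $M$ is regularisable by hypothesis, both limits $M^{\infty}\left(\rho\right)=\lim_{n\rightarrow+\infty}\frac{1}{n}M\left(\rho^{\otimes n}\right)$ and $M^{\infty}\left(\sigma\right)=\lim_{n\rightarrow+\infty}\frac{1}{n}M\left(\sigma^{\otimes n}\right)$ exist and are finite. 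Taking the limit $n\rightarrow+\infty$ on both sides of the inequality and using the fact that limits preserve weak inequalities, we conclude $M^{\infty}\left(\rho\right)\geq M^{\infty}\left(\sigma\right)$, which is exactly the monotonicity of $M^{\infty}$.

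There is essentially no serious obstacle here — the argument is a short chain of elementary facts. The only point that deserves care is making sure the hypotheses line up: ``regularisable'' as defined in the excerpt guarantees the limit exists and is finite \emph{for every state}, so there is no issue about $M^{\infty}$ being well-defined on all of $\mathsf{St}_{1}\left(\mathrm{A}\right)$; and one should note explicitly that $\rho$ and $\sigma$ may live on different systems, but since $\rho^{\otimes n}$ and $\sigma^{\otimes n}$ are still legitimate states (of the respective composite systems) the monotone $M$ applies to them and proposition~\ref{prop:compatible preorder} is applicable across different systems. I would also remark that, although not required by the statement, the same argument shows $M^{\infty}$ is partially additive by construction, and hence (trivially) regularisable itself, so the class of regularisable monotones is closed under regularisation in the expected way.
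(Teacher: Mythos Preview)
Your proof is correct and follows essentially the same route as the paper: use proposition~\ref{prop:compatible preorder} to lift $\rho\succsim\sigma$ to $\rho^{\otimes n}\succsim\sigma^{\otimes n}$, apply monotonicity of $M$, divide by $n$, and pass to the limit. The additional remarks you make about well-definedness and partial additivity of $M^{\infty}$ are correct but not part of the paper's proof.
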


\begin{proof}
Suppose $\rho\succsim\sigma$, then $\rho^{\otimes n}\succsim\sigma^{\otimes n}$
by proposition~\ref{prop:compatible preorder}. Therefore, since
$M$ is a monotone, $M\left(\rho^{\otimes n}\right)\geq M\left(\sigma^{\otimes n}\right)$.
Then we get $\frac{1}{n}M\left(\rho^{\otimes n}\right)\geq\frac{1}{n}M\left(\sigma^{\otimes n}\right)$,
and by taking the limit for $n\rightarrow+\infty$ of both sides we
get $M^{\infty}\left(\rho\right)\geq M^{\infty}\left(\sigma\right)$.
\end{proof}
For this reason, it is meaningful to give the following definition.
\begin{defn}
A monotone $M$ is \emph{regularised} if there exists a monotone $M'$,
such that, for every $\rho$
\[
M\left(\rho\right)=\lim_{n\rightarrow+\infty}\frac{1}{n}M'\left(\rho^{\otimes n}\right).
\]
\end{defn}

In words, a monotone is regularised if it is the regularisation of
another (regularisable) monotone. Regularised monotones $M$ are the
meaningful monotones in the thermodynamic limit, because they are
expressed as the density per particle of another monotone $M'$. Note
that partially additive monotones are not only regularisable, but
also regularised: it is enough to take $M'=M$. Indeed
\[
M\left(\rho\right)=\lim_{n\rightarrow+\infty}\frac{1}{n}M\left(\rho^{\otimes n}\right)=\lim_{n\rightarrow+\infty}\frac{1}{n}\cdot nM\left(\rho\right).
\]
Regularised monotones, which include partially additive monotones,
exhibit a remarkable property.
\begin{prop}
Let $M$ be a regularised monotone. If $f$ is a free state, $M\left(f\right)=0$.
Moreover, for every state $\rho$, $M\left(\rho\right)\geq0$.
\end{prop}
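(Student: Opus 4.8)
The plan is to handle the two assertions separately, leaning on the definition of a regularised monotone and on the basic facts about free states established in Section~\ref{sec:The-resource-preorder}.

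\emph{Step 1: $M(f)=0$ for a free state $f$.} By hypothesis there is a regularisable monotone $M'$ with $M(\rho)=\lim_{n\to+\infty}\frac{1}{n}M'\left(\rho^{\otimes n}\right)$ for every $\rho$. I would use that the tensor product of free states is free, so $f^{\otimes n}$ is free for every $n\geq 1$, and hence $f^{\otimes n}\sim f$ since all free states are equivalent to one another. Because $M'$ is a monotone it takes equal values on equivalent resources, so $M'\left(f^{\otimes n}\right)=M'(f)$ for all $n$, whence
\[
M(f)=\lim_{n\to+\infty}\frac{1}{n}M'\left(f^{\otimes n}\right)=\lim_{n\to+\infty}\frac{1}{n}M'(f)=0 .
\]

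\emph{Step 2: $M(\rho)\geq 0$ for every state $\rho$.} Here I would first note that $M$, being the regularisation of the regularisable monotone $M'$, is itself a resource monotone (this is exactly the content of the proposition immediately preceding this statement, and is also built into the definition of a regularised monotone). The deterministic state of the trivial system is always free, so, calling it $f_{0}$, Proposition~\ref{prop:free states minima} gives $\rho\succsim f_{0}$ for every state $\rho$ of any system. Monotonicity of $M$ then yields $M(\rho)\geq M(f_{0})$, and Step~1 applied to the free state $f_{0}$ gives $M(f_{0})=0$; hence $M(\rho)\geq 0$.

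The argument is essentially mechanical, so there is no serious obstacle. The only mild subtlety is one of ordering: Step~1 must be carried out first, because it is invoked inside Step~2 (via $M(f_{0})=0$); and one should make explicit that a free state is available to begin with — which it always is, since the trivial system's state is free. Everything else reduces to the identity $f^{\otimes n}\sim f$, the fact that monotones respect the equivalence relation $\sim$, and the fact that free states are the minima of the resource preorder.
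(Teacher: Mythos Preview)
Your proof is correct and follows essentially the same approach as the paper: use $f\sim f^{\otimes n}$ to get $M'(f^{\otimes n})=M'(f)$ and hence $M(f)=0$, then use $\rho\succsim f$ together with the monotonicity of $M$ to conclude $M(\rho)\geq 0$. The only cosmetic difference is that you explicitly invoke the trivial system's state to guarantee a free state exists, whereas the paper simply works with a generic free state $f$ throughout.
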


\begin{proof}
$M$ is regularised, therefore there exists a monotone $M'$ such
that $M\left(f\right)=\lim_{n\rightarrow+\infty}\frac{1}{n}M'\left(f^{\otimes n}\right)$.
Since $f$ is a free state, we have $f\sim f^{\otimes n}$, therefore
$M'\left(f^{\otimes n}\right)=M'\left(f\right)$. Hence
\[
M\left(f\right)=\lim_{n\rightarrow+\infty}\frac{1}{n}M'\left(f\right)=0,
\]
because $M'\left(f\right)$ is a constant. To conclude the proof,
recall that, for every state $\rho$, $\rho\succsim f$. Being $M$
a monotone, $M\left(\rho\right)\geq M\left(f\right)=0$.
\end{proof}
We motivated the introduction of the preorder on states with the fact
that some costly resources could be in fact close to free states.
We model this idea of closeness by introducing a distance from a state
to the set of free states.
\[
d\left(\rho\right):=\inf_{f\in F_{\mathrm{A}}}\left\Vert \rho-f\right\Vert 
\]
Here $\rho$ is a state of $\mathrm{A}$, and $f$ is a free state
of $\mathrm{A}$. Note that the infimum is achieved, so it is in fact
a minimum, because we have assumed the set of free states to be topologically
closed. This distance $d$ is a resource monotone. Indeed, suppose
$\rho\succsim\sigma$. This means there exists a free operation $\mathcal{F}$
such that $\mathcal{F}\rho=\sigma$. Then
\[
d\left(\sigma\right)=\inf_{f\in F_{\mathrm{A}}}\left\Vert \sigma-f\right\Vert =\inf_{f\in F_{\mathrm{A}}}\left\Vert \mathcal{F}\rho-f\right\Vert \leq\inf_{f\in F_{\mathrm{A}}}\left\Vert \mathcal{F}\rho-\mathcal{F}f\right\Vert ,
\]
because $\left\{ \mathcal{F}f\right\} $ is a subset of $F_{\mathrm{A}}$.
Now, the operational norm of a vector is non-increasing under a channel
by proposition~\ref{prop:norm non-increasing}. Therefore
\[
\left\Vert \mathcal{F}\rho-\mathcal{F}f\right\Vert =\left\Vert \mathcal{F}\left(\rho-f\right)\right\Vert \leq\left\Vert \rho-f\right\Vert .
\]
Taking the infimum, we get $\inf_{f\in F_{\mathrm{A}}}\left\Vert \mathcal{F}\rho-\mathcal{F}f\right\Vert \leq\inf_{f\in F_{\mathrm{A}}}\left\Vert \rho-f\right\Vert $.
In conclusion
\[
d\left(\sigma\right)\leq\inf_{f\in F_{\mathrm{A}}}\left\Vert \rho-f\right\Vert =d\left(\rho\right).
\]
This shows that $d$ is indeed a monotone. We have presented this
construction for the operational norm, but in fact it can be extended
to any function $D\left(\rho,\sigma\right)$ that is decreasing under
the action of channels \cite{Quantum-resource-2,Gour-review}.

\chapter{Sharp theories with purification\label{chap:Sharp-theories-with}}

In this chapter we introduce the theories we are going to study for
the rest of the thesis. These theories can be roughly characterised
as those admitting a level of description where all processes are
pure and reversible, and all measurements are sharp. For these reasons,
these theories are particularly appealing for the foundation of physics,
and in particular thermodynamics. The key axiom defining them\textemdash Purification\textemdash underpins
all dilation and extension theorems \cite{Chiribella-purification,Chiribella14},
so important in quantum theory, and from a thermodynamic perspective
it gives a formal guarantee that every observer can enlarge their
system in order to deal with an isolated one, where information is
maximal, and all evolutions are reversible. From a thermodynamic viewpoint,
somehow Purification can be regarded as dual to Causality: if Causality
allows one to go from larger to smaller systems, Purification enables
us to do the opposite. Moreover, mixed states, so important for thermodynamics
emerge in a different way: not only as ensembles, but also as marginals
of pure states. This is particularly appealing for the foundations
of thermodynamics, because one no longer needs to resort to fictitious
and subjective ensembles, but mixed states arise because one is tracing
out the degrees of freedom of the environment. A similar fact also
holds for the issue of irreversibility.

Informally, sharp theories with purification are causal theories satisfying
three additional axioms:
\begin{description}
\item [{Purity\ Preservation}] The composition of two pure transformations
is a pure transformation.
\item [{Pure\ Sharpness}] Every system has at least one pure sharp observable.
\item [{Purification}] Every state can be modelled as the marginal of a
pure state. Such a modelling is unique up to local reversible transformations.
\end{description}
These axioms will be presented more formally in section~\ref{sec:The-axioms-and},
where their first consequences, in particular of Purification, will
be examined. Sharp theories with purification enjoy some remarkable
properties that, in some sense, make them close to quantum theory.
In this chapter we will focus on those properties relevant to the
thermodynamic analysis of chapter~\ref{chap:Operational-thermodynamics}.
One of the key features is a state-effect duality, by which with every
normalised pure state we can associate a unique normalised pure effect,
the \emph{dagger} of the state, and vice versa \cite{QPL15,TowardsThermo}.
The second key result, which constitutes the high spot of this chapter,
is the diagonalisation theorem, which states that in these theories
every state can be \emph{diagonalised} \cite{QPL15,TowardsThermo},
viz.\ written as a convex combination of perfectly distinguishable
pure states, with unique coefficients \cite{TowardsThermo}. The fact
that with every state we can associate the probability distribution
of its eigenvalues allows us to introduce entropic functions, which
will be done in section~\ref{sec:Mixedness-monotones}. Furthermore,
the two key properties of sharp theories with purification allow us
to identify every element of the vector space of effects with a physical
observable. This will prove of fundamental importance in section~\ref{sec:Generalised-Gibbs-states}
to define thermal states, and to derive a lot of properties of a generalisation
of Shannon-von Neumann entropy (section~\ref{sec:Properties-of-Shannon-von}).

From these properties, others follow, but we do not report them in
detail in this chapter because they are not so crucial for thermodynamics.
We summarise them briefly here. A consequence of the state-effect
duality is that sharp theories with purification are (strongly) self-dual
\cite{HOI}, which enables us to extend the dagger to \emph{all} transformations
\cite{HOI}, not just to states and effects. The conjunction of strong
self-duality with the state-effect duality implies that these theories
satisfy the no-restriction hypothesis \cite{Chiribella-purification,Janotta-Lal}.
Moreover, we can prove that there exists a pure projector on every
face of the state space \cite{HOI}. The conjunction of these two
facts implies that sharp theories with purifications are Euclidean
Jordan algebras \cite{Alfsen-Shultz78,Alfsen-Shultz,Graydon-QPL,BarnumGraydonWilceCCEJA,HOI},
and that their interference is constrained at most to the second order
\cite{HOI}. However, note that not all Euclidean Jordan algebras
are sharp theories with purification.

We conclude this chapter by presenting two new examples of sharp theories
with purification. They are both theories constructed by imposing
some superselection rules to quantum theory, and by defining system
composition in such a way that the three defining axioms are satisfied.
The first example \cite{Purity} is a theory where every non-trivial
system is given by a pair of isomorphic quantum systems. This theory
in section~\ref{sec:Sufficiency-of-majorisation} will provide a
counterexample in which majorisation is \emph{not} sufficient to characterise
certain thermodynamic transitions. The second example \cite{TowardsThermo}
is an extension of classical theory in which some systems look classical
at the single-system level, but become entangled when composed. This
shows that classical theory can be embedded and treated as a sub-theory
of a sharp theory with purification, and that, ultimately, the results
we obtain for sharp theories with purification can be extended to
classical theory too, to provide an information-theoretic foundation
of classical thermodynamics.

\section{The axioms and their first consequences\label{sec:The-axioms-and}}

In this section we present the axioms defining sharp theories with
purification. These axioms are added on top of Causality, and will
single out a class of theories where everything is pure and reversible
at the fundamental level.

The first axiom to be added on top of Causality, Purity Preservation,
states that no information can leak to the environment when two pure
transformations are composed.
\begin{ax}[Purity Preservation \cite{Scandolo14}]
Sequential and parallel compositions of pure transformations yield
pure transformations.
\end{ax}

We consider Purity Preservation as a fundamental requirement to do
physics. Considering the theory as an algorithm to make deductions
about physical processes, Purity Preservation ensures that, when presented
with maximal information about two processes, the algorithm outputs
maximal information about their composition \cite{Scandolo14}. Purity
Preservation is very close to a slightly weaker axiom, Atomicity of
Composition, introduced by D'Ariano in \cite{D'Ariano}, and used
in the axiomatisation of \cite{Chiribella-informational}. However
Purity Preservation is stronger, in that it requires the preservation
of purity also for \emph{parallel} composition, and not just for sequential
composition like in D'Ariano's original axiom. An immediate consequence
of Purity Preservation is that the product of two pure states is pure,
a fact usually proved using Local Tomography \cite{Chiribella-purification}.
Notably, quaternionic quantum theory fails this principle, for the
product of two pure states is not pure in general \cite{Graydon-QPL,BarnumGraydonWilceCCEJA}.

The second axiom, Pure Sharpness, guarantees that every system possesses
at least one elementary property, in the sense of Piron \cite{PironBook}.
Recall that here we are not assuming the no-restriction hypothesis,
so the following axiom needs to be imposed.
\begin{ax}[Pure Sharpness \cite{QPL15}]
For every system there exists at least one pure effect occurring
with unit probability on some state.
\end{ax}

Pure Sharpness is reminiscent of the Sharpness axiom used in Hardy's
2011 axiomatisation \cite{Hardy-informational-2,hardy2013}, which
requires a one-to-one correspondence between pure states and effects
that distinguish maximal sets of states. A similar axiom also appeared
in works by Wilce \cite{Wilce-spectral,Royal-road-0,Royal-road},
where he stipulates that for every pure effect there exists a \emph{unique}
state on which it occurs with probability 1.

The two axioms above are satisfied by both classical and quantum theory.
Our last axiom, Purification, is precisely the one that characterises
all physical theories admitting a fundamental level of description
where all deterministic processes are pure and reversible. Essentially,
Purification expresses a strengthened version of the principle of
conservation of information \cite{Chiribella-educational,Scandolo14},
demanding not only that information be conserved, but also that randomness
can always be modelled as due to the presence of some inaccessible
degree of freedom. In its simplest form, Purification is phrased as
a requirement about \emph{causal} theories, where the marginal of
a bipartite state is defined in a canonical way. In this case, we
say that a state $\rho\in\mathsf{St}_{1}\left(\mathrm{A}\right)$
can be purified if there exists a pure state $\Psi\in\mathsf{PurSt}_{1}\left(\mathrm{AB}\right)$
that has $\rho$ as its marginal on system $\mathrm{A}$. In this
case, we call $\Psi$ a \emph{purification} of $\rho$ and $\mathrm{B}$
a \emph{purifying system}. The axiom is as follows.
\begin{ax}[Purification \cite{Chiribella-purification}]
Every state can be purified. Every two purifications of the same
state, with the same purifying system, differ by a reversible channel
on the purifying system.
\end{ax}

The second part of the axiom states that, if $\Psi,\Psi'\in\mathsf{PurSt}_{1}\left(\mathrm{AB}\right)$
are such that $\mathrm{tr}_{\mathrm{B}}\Psi_{\mathrm{AB}}=\mathrm{tr}_{\mathrm{B}}\Psi'_{\mathrm{AB}}$,
then\[ \begin{aligned}\Qcircuit @C=1em @R=.7em @!R { & \multiprepareC{1}{\Psi'} & \qw \poloFantasmaCn{\rA} & \qw \\ & \pureghost{\Psi'} & \qw \poloFantasmaCn{\rB} & \qw }\end{aligned}~=\!\!\!\!\begin{aligned}\Qcircuit @C=1em @R=.7em @!R { & \multiprepareC{1}{\Psi} & \qw \poloFantasmaCn{\rA} & \qw & \qw & \qw \\ & \pureghost{\Psi} & \qw \poloFantasmaCn{\rB} & \gate{\cU}& \qw \poloFantasmaCn{\rB}& \qw }\end{aligned}~, \]where
$\mathcal{U}$ is a reversible channel on $\mathrm{B}$.

Quantum theory, both on complex and real Hilbert spaces, satisfies
Purification. Recently also Spekkens' toy model\footnote{The original, non-convex, version of it \cite{Toy-model}.}
\cite{Toy-model} has been shown to satisfy Purification \cite{Disilvestro}.
Other non-trivial examples are fermionic quantum theory \cite{Fermionic1,Fermionic2},
and doubled quantum theory \cite{Purity}, presented in section~\ref{sec:Example:-doubled-quantum}.
Remarkably, even classical theory can be regarded as a sub-theory
of a larger physical theory where Purification is satisfied (see section~\ref{sec:Example:-extended-classical})
\cite{TowardsThermo}.
\begin{defn}[Sharp theories with purification]
A causal theory satisfying Purity Preservation, Pure Sharpness, and
Purification will be called a \emph{sharp theory with purification.}
\end{defn}

In the rest of the section we will outline the first consequences
of these axioms, especially of Purification.

\subsection{First consequences}

The easiest consequence of Purification is that reversible channels
act transitively on the set of pure states \cite{Chiribella-purification}.
\begin{prop}[Transitivity]
For any pair of pure states $\psi$, $\psi'$, there exists a reversible
channel $\mathcal{U}$ that $\psi'=\mathcal{U}\psi$.
\end{prop}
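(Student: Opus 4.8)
The plan is to use Purification to reduce the claim to a statement about pairs of purifications of a common state, and then invoke the second (uniqueness) part of the Purification axiom. The key observation is that any two pure states $\psi, \psi' \in \mathsf{PurSt}_1(\mathrm{A})$ can both be viewed as purifications of one and the same state of a smaller system --- namely the trivial system $\mathrm{I}$. Indeed, $\mathrm{tr}_{\mathrm{A}}\,\psi$ and $\mathrm{tr}_{\mathrm{A}}\,\psi'$ are both states of $\mathrm{I}$, and since $\psi$ and $\psi'$ are normalised we have $\mathrm{tr}\,\psi = \mathrm{tr}\,\psi' = 1$; because the normalised state of the trivial system is unique (it is the number $1$), this means $\psi$ and $\psi'$ are purifications of the \emph{same} state of $\mathrm{I}$, both with purifying system $\mathrm{A}$.

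First I would spell out this reduction carefully: regard $\mathrm{IA} = \mathrm{A}$, so that $\psi \in \mathsf{PurSt}_1(\mathrm{IA})$ and $\psi' \in \mathsf{PurSt}_1(\mathrm{IA})$, with marginals on $\mathrm{I}$ equal (both being the unique deterministic state $1$ of the trivial system). Then the second part of the Purification axiom applies directly: two purifications of the same state with the same purifying system differ by a reversible channel on the purifying system. Here the purifying system is $\mathrm{A}$, so there exists a reversible channel $\mathcal{U} \in G_{\mathrm{A}}$ such that $\psi' = \mathcal{U}\psi$. Diagrammatically, the equation $\psi' = \mathcal{U}\psi$ is exactly the statement of the axiom with the (invisible) $\mathrm{I}$-wire suppressed and the $\mathrm{B}$-wire of the axiom playing the role of $\mathrm{A}$.

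I do not anticipate a serious obstacle here --- this is essentially an unpacking of definitions. The only subtlety worth flagging explicitly is the use of the convention $\mathrm{IA} = \mathrm{A}$ together with the fact (established earlier via Causality and the lemma that $\mathsf{St}(\mathrm{I})$ is $\{0,1\}$ or $[0,1]$) that a normalised state of $\mathrm{I}$ is unique; this is what guarantees that $\psi$ and $\psi'$ purify the \emph{same} $\mathrm{I}$-state rather than merely isomorphic ones. One could alternatively phrase the argument by noting that the marginal of a pure state of $\mathrm{A}$ on the trivial system carries no information, so ``being a purification of the trivial state'' is no constraint at all, and every pure normalised state of $\mathrm{A}$ qualifies; the uniqueness clause of Purification then forces transitivity. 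Either phrasing is a couple of lines, and the proof is complete once the reduction to the axiom is made explicit.
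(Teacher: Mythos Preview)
Your proposal is correct and matches the paper's own proof essentially line for line: both view $\psi$ and $\psi'$ as purifications of the unique deterministic state $1$ of the trivial system $\mathrm{I}$ with purifying system $\mathrm{A}$, and then invoke the uniqueness clause of Purification to obtain the reversible channel $\mathcal{U}$. Your additional remarks about the conventions $\mathrm{IA}=\mathrm{A}$ and the uniqueness of the normalised trivial-system state are exactly the implicit steps the paper's one-paragraph proof relies on.
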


\begin{proof}
Every system $\mathrm{A}$ is a purifying system for the trivial system
$\mathrm{I}$. Then $\psi$ and $\psi'$ are two purifications of
the same deterministic state of the trivial system (which is the number
1), therefore they differ by a reversible channel on the purifying
system $\mathrm{A}$, which means $\psi'=\mathcal{U}\psi$.
\end{proof}
As a consequence, every finite-dimensional system possesses a unique
invariant state, which is an internal state (see proposition~\ref{prop:uniqueness invariant}).
Also, transitivity implies that the set of pure states is compact
for every system (proposition~\ref{prop:pure compactness}). This
is generally a non-trivial property\textemdash see \cite{Entropy-Barnum}
for a counterexample of a state space with a non-closed set of pure
states.

A crucial consequence of Purification is the \emph{steering property}.
\begin{thm}[Pure Steering]
Let $\rho\in\mathsf{St}_{1}\left(\mathrm{A}\right)$ and let $\Psi\in\mathsf{PurSt}_{1}\left(\mathrm{AB}\right)$
be a purification of $\rho$. Then $\sigma$ is contained in $\rho$
if and only if there exist an effect $b_{\sigma}$ on the purifying
system $\mathrm{B}$ and a non-zero probability $p$ such that\[ p\!\!\!\!\begin{aligned}\Qcircuit @C=1em @R=.7em @!R { & \prepareC{\sigma} & \qw \poloFantasmaCn{\rA} & \qw }\end{aligned}~=\!\!\!\!\begin{aligned}\Qcircuit @C=1em @R=.7em @!R { & \multiprepareC{1}{\Psi} & \qw \poloFantasmaCn{\rA} & \qw \\ & \pureghost{\Psi} & \qw \poloFantasmaCn{\rB} & \measureD{b_{\sigma}} }\end{aligned}~. \]
\end{thm}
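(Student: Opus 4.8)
The statement is an "if and only if" about steering, so I would prove the two directions separately. The "if" direction (existence of $b_\sigma$ and $p$ implies $\sigma$ is contained in $\rho$) should be the easy one, essentially a coarse-graining argument; the "only if" direction (every $\sigma$ contained in $\rho$ can be steered) is where Purification does the real work, and I expect that to be the main obstacle.

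\medskip

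\textbf{The "if" direction.} Suppose there exist an effect $b_\sigma$ on $\mathrm{B}$ and $p \in (0,1]$ with $p\,\sigma = (\mathcal{I}_{\mathrm{A}} \otimes b_\sigma)\Psi$. Since $b_\sigma$ is an effect, $u_{\mathrm{B}} - b_\sigma$ is also an effect (here I use Causality: coarse-graining an observation-test yields the unique deterministic effect, so $\{b_\sigma, u_{\mathrm{B}} - b_\sigma\}$ is a valid observation-test on $\mathrm{B}$). Apply the two effects of this test to $\Psi$: the first gives $p\sigma$, and the second gives some vector which, being the result of applying an effect to a state, is a non-negative rescaling of a normalised state, say $(1-p)\tau$ with $\tau \in \mathsf{St}_1(\mathrm{A})$ --- I should check the coefficient is exactly $1-p$, which follows by applying $u_{\mathrm{A}}$ to both sides and using $u_{\mathrm{A}}\rho = 1$ together with $(u_{\mathrm{A}} \otimes u_{\mathrm{B}})\Psi = 1$. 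Summing, $\rho = \mathrm{tr}_{\mathrm{B}}\Psi = p\sigma + (1-p)\tau$, which is exactly the definition of $\sigma$ being contained in $\rho$. (If $p=1$ then $\rho = \sigma$, still fine.)

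\medskip

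\textbf{The "only if" direction.} Suppose $\sigma$ is contained in $\rho$, so $\rho = p\sigma + (1-p)\tau$ for some $p \in (0,1]$ and $\tau \in \mathsf{St}_1(\mathrm{A})$. The idea is: first establish the claim for the special case where $\rho$ itself is \emph{pure} and then bootstrap. Actually the cleanest route is: take a purification $\Psi$ of $\rho$ on $\mathrm{B}$. Build an auxiliary purification as follows. Let $\Sigma \in \mathsf{PurSt}_1(\mathrm{A}\mathrm{C})$ be a purification of $\sigma$ and $\mathrm{T} \in \mathsf{PurSt}_1(\mathrm{A}\mathrm{C})$ a purification of $\tau$ (same purifying system $\mathrm{C}$, which can be arranged by enlarging if necessary). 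Consider a preparation-test on an ancilla realising the distribution $\{p, 1-p\}$, paired with preparing $\Sigma$ or $\mathrm{T}$ accordingly; coarse-graining over the binary outcome but \emph{keeping} the ancilla as a flag system $\mathrm{F}$, one obtains a pure state $\Phi \in \mathsf{PurSt}_1(\mathrm{A}\mathrm{C}\mathrm{F})$ whose marginal on $\mathrm{A}$ is $\rho$ and such that applying the flag effect $|0)(0|$ on $\mathrm{F}$ (a pure effect, using Pure Sharpness and the state-effect structure) steers $\mathrm{A}$ to $p\sigma$. Here I use Purity Preservation to guarantee $\Phi$ is pure. Now $\Phi$ and $\Psi$ are two purifications of the same state $\rho$; they may have different purifying systems ($\mathrm{C}\mathrm{F}$ versus $\mathrm{B}$), so I pad the smaller one with a pure ancilla to equalise the purifying systems (a standard move justified by Purification applied to the trivial system), and then the uniqueness clause of Purification gives a reversible channel $\mathcal{U}$ on the purifying system with $\Psi = (\mathcal{I}_{\mathrm{A}} \otimes \mathcal{U})\Phi$ (after padding). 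Pulling the steering effect back through $\mathcal{U}$ --- i.e. setting $b_\sigma := (\text{flag effect}) \circ \mathcal{U}^{-1}$, extended trivially over the padding --- yields an effect on $\mathrm{B}$ with $p\,\sigma = (\mathcal{I}_{\mathrm{A}} \otimes b_\sigma)\Psi$, as required.

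\medskip

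\textbf{Main obstacle.} The delicate point is matching purifying systems: $\Phi$ lives on $\mathrm{A}\mathrm{C}\mathrm{F}$ and $\Psi$ on $\mathrm{A}\mathrm{B}$, and the uniqueness part of Purification only compares purifications \emph{with the same purifying system}. So I need the auxiliary lemma (surely available from Purification, via purifying the trivial system) that any two purifications become related by a reversible channel once one pads with local pure ancillas to make the purifying systems operationally equivalent; I'd want to invoke or quickly re-derive this. Beyond that, the bookkeeping of normalisation coefficients (ensuring the rescalings are exactly $p$ and $1-p$ rather than merely proportional) needs Causality but is routine. The conceptual content is entirely carried by Purification's uniqueness clause plus Purity Preservation; everything else is diagram manipulation.
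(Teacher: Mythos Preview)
Your ``if'' direction is correct and matches the paper's argument.

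The ``only if'' direction has a genuine gap. The flagged state you construct,
\[
\Phi \;=\; p\,\Sigma \otimes \phi_0 \;+\; (1-p)\,T \otimes \phi_1,
\]
is a nontrivial convex combination of two distinct pure states, and hence is \emph{mixed} whenever $0<p<1$. Purity Preservation does not rescue this: that axiom concerns sequential and parallel composition of pure transformations, and says nothing about convex mixtures. So you cannot invoke the uniqueness clause of Purification to compare $\Phi$ with $\Psi$, because $\Phi$ is not a purification of anything.

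The paper avoids this trap by a different route. Rather than building a purification directly, it invokes Physicalisation of Readout: the preparation-test $\{p\sigma,(1-p)\tau\}$ is realised as a deterministic (generally mixed) bipartite state $\Sigma\in\mathsf{St}_1(\mathrm{AX})$ together with an observation-test $\{c_\sigma,c_\tau\}$ on $\mathrm{X}$ that steers $\mathrm{A}$ to $p\sigma$ or $(1-p)\tau$. One then \emph{purifies} $\Sigma$ to a pure state $\Phi\in\mathsf{PurSt}_1(\mathrm{AXC})$, which is automatically also a purification of $\rho$. From there the padding-and-uniqueness argument you sketched goes through: tensor $\Phi$ with a pure ancilla on $\mathrm{B}$ and $\Psi$ with pure ancillas on $\mathrm{X}$ and $\mathrm{C}$, apply uniqueness of purification on the common purifying system $\mathrm{XCB}$, and pull $c_\sigma\otimes u\otimes u$ back through the resulting reversible channel to obtain $b_\sigma$.

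Your construction can be repaired along the same lines --- purify your mixed flagged state one more time and track the steering effect through --- but as written the claim that $\Phi$ is pure is simply false, and the proof collapses at that point.
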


\begin{proof}
Sufficiency is easy. Consider the observation-test $\left\{ b_{\sigma},u-b_{\sigma}\right\} $
such that\[ p\!\!\!\!\begin{aligned}\Qcircuit @C=1em @R=.7em @!R { & \prepareC{\sigma} & \qw \poloFantasmaCn{\rA} & \qw }\end{aligned}~=\!\!\!\!\begin{aligned}\Qcircuit @C=1em @R=.7em @!R { & \multiprepareC{1}{\Psi} & \qw \poloFantasmaCn{\rA} & \qw \\ & \pureghost{\Psi} & \qw \poloFantasmaCn{\rB} & \measureD{b_{\sigma}} }\end{aligned}~. \]Then\[\begin{aligned}\Qcircuit @C=1em @R=.7em @!R { & \prepareC{\rho} & \qw \poloFantasmaCn{\rA} & \qw }\end{aligned}~=\!\!\!\!\begin{aligned}\Qcircuit @C=1em @R=.7em @!R { & \multiprepareC{1}{\Psi} & \qw \poloFantasmaCn{\rA} & \qw \\ & \pureghost{\Psi} & \qw \poloFantasmaCn{\rB} & \measureD{u} }\end{aligned}~=\!\!\!\! \begin{aligned}\Qcircuit @C=1em @R=.7em @!R { & \multiprepareC{1}{\Psi} & \qw \poloFantasmaCn{\rA} & \qw \\ & \pureghost{\Psi} & \qw \poloFantasmaCn{\rB} & \measureD{b_{\sigma}}}\end{aligned}~+\!\!\!\!\begin{aligned}\Qcircuit @C=1em @R=.7em @!R { & \multiprepareC{1}{\Psi} & \qw \poloFantasmaCn{\rA} & \qw \\ & \pureghost{\Psi} & \qw \poloFantasmaCn{\rB} & \measureD{u-b_{\sigma}}}\end{aligned} ~= \]
\[
=p\!\!\!\!\begin{aligned}\Qcircuit @C=1em @R=.7em @!R { & \prepareC{\sigma} & \qw \poloFantasmaCn{\rA} & \qw }\end{aligned}+\left(1-p\right)\!\!\!\!\begin{aligned}\Qcircuit @C=1em @R=.7em @!R { & \prepareC{\tau} & \qw \poloFantasmaCn{\rA} & \qw }\end{aligned}~,
\]where $\tau$ is the state induced by applying the effect $u-b_{\sigma}$.
This proves that $\sigma$ is contained in $\rho$.

Conversely, if $\sigma$ is contained in $\rho$, it means that $\rho=p\sigma+\left(1-p\right)\tau$,
with $p\in\left(0,1\right)$. From an operational point of view, it
means that there is a preparation-test $\left\{ p\sigma,\left(1-p\right)\tau\right\} $,
of which $\rho$ is the coarse-graining. By Physicalisation of Readout,
there exists a normalised bipartite state $\Sigma\in\mathsf{St}_{1}\left(\mathrm{AX}\right)$,
and an observation-test $\left\{ c_{\sigma},c_{\tau}\right\} $ on
$\mathrm{X}$ such that\begin{equation}\label{eq:steering1} p\!\!\!\!\begin{aligned}\Qcircuit @C=1em @R=.7em @!R { & \prepareC{\sigma} & \qw \poloFantasmaCn{\rA} & \qw }\end{aligned}~=\!\!\!\!\begin{aligned}\Qcircuit @C=1em @R=.7em @!R { & \multiprepareC{1}{\Sigma} & \qw \poloFantasmaCn{\rA} & \qw \\ & \pureghost{\Sigma} & \qw \poloFantasmaCn{\rX} & \measureD{c_{\sigma}} }\end{aligned} \end{equation}and\begin{equation}\label{eq:steering2} \left(1-p\right)\!\!\!\!\begin{aligned}\Qcircuit @C=1em @R=.7em @!R { & \prepareC{\tau} & \qw \poloFantasmaCn{\rA} & \qw }\end{aligned}~=\!\!\!\!\begin{aligned}\Qcircuit @C=1em @R=.7em @!R { & \multiprepareC{1}{\Sigma} & \qw \poloFantasmaCn{\rA} & \qw \\ & \pureghost{\Sigma} & \qw \poloFantasmaCn{\rX} & \measureD{c_{\tau}} }\end{aligned}~. \end{equation}In
general, $\Sigma$ will not be pure, so let us take a purification
$\Phi\in\mathsf{PurSt}_{1}\left(\mathrm{AXC}\right)$ of $\Sigma$.
Note that $\Phi$ is a purification of $\rho$ too, indeed\[ \begin{aligned}\Qcircuit @C=1em @R=.7em @!R { & \multiprepareC{2}{\Phi} & \qw \poloFantasmaCn{\rA} & \qw \\ & \pureghost{\Phi} & \qw \poloFantasmaCn{\rX} & \measureD{u}\\ & \pureghost{\Phi} & \qw \poloFantasmaCn{\rC} & \measureD{u} }\end{aligned}~= \!\!\!\! \begin{aligned}\Qcircuit @C=1em @R=.7em @!R { & \multiprepareC{1}{\Sigma} & \qw \poloFantasmaCn{\rA} & \qw \\ & \pureghost{\Sigma} & \qw \poloFantasmaCn{\rX} & \measureD{c_{\sigma}}}\end{aligned}~+\!\!\!\!\begin{aligned}\Qcircuit @C=1em @R=.7em @!R { & \multiprepareC{1}{\Sigma} & \qw \poloFantasmaCn{\rA} & \qw \\ & \pureghost{\Sigma} & \qw \poloFantasmaCn{\rX} & \measureD{c_{\tau}}}\end{aligned} ~=\!\!\!\!\begin{aligned}\Qcircuit @C=1em @R=.7em @!R { & \prepareC{\rho} & \qw \poloFantasmaCn{\rA} & \qw }\end{aligned}~,\]having
used eqs.~\eqref{eq:steering1} and \eqref{eq:steering2}. Now, let
us show that we can induce $p\sigma$ by applying a suitable effect
on the purifying system of any purification $\Psi\in\mathsf{PurSt}_{1}\left(\mathrm{AB}\right)$
of $\rho$. Now, take the pure states $\beta\in\mathsf{PurSt}_{1}\left(\mathrm{B}\right)$,
$\xi\in\mathsf{PurSt}_{1}\left(\mathrm{X}\right)$, and $\gamma\in\mathsf{PurSt}_{1}\left(\mathrm{C}\right)$.
Then $\Phi\otimes\beta$ and $\Psi\otimes\xi\otimes\gamma$ are pure
states (by Purity Preservation), and they are purifications of $\rho$
with the same purifying system $\mathrm{XCB}$ (up to system swapping),
so\[ \begin{aligned}\Qcircuit @C=1em @R=.7em @!R { & \multiprepareC{2}{\Phi} & \qw \poloFantasmaCn{\rA} & \qw \\ & \pureghost{\Phi} & \qw \poloFantasmaCn{\rX} & \qw\\ & \pureghost{\Phi} & \qw \poloFantasmaCn{\rC} & \qw \\ & \prepareC{\beta} & \qw \poloFantasmaCn{\rB} & \qw }\end{aligned}~=\!\!\!\!\begin{aligned}\Qcircuit @C=1em @R=.7em @!R { & \multiprepareC{1}{\Psi} & \qw \poloFantasmaCn{\rA} & \qw &\qw &\qw \\ & \pureghost{\Psi} & \qw \poloFantasmaCn{\rB} & \multigate{2}{\cU} & \qw \poloFantasmaCn{\rX} &\qw \\  & \prepareC{\xi} & \qw \poloFantasmaCn{\rX} & \ghost{\cU} & \qw \poloFantasmaCn{\rC} &\qw \\ & \prepareC{\gamma} & \qw \poloFantasmaCn{\rC} & \ghost{\cU} & \qw \poloFantasmaCn{\rB} &\qw}\end{aligned}~. \]Then\[
p\!\!\!\!\begin{aligned}\Qcircuit @C=1em @R=.7em @!R { & \prepareC{\sigma} & \qw \poloFantasmaCn{\rA} & \qw }\end{aligned}~=\!\!\!\!\begin{aligned}\Qcircuit @C=1em @R=.7em @!R { & \multiprepareC{2}{\Phi} & \qw \poloFantasmaCn{\rA} & \qw \\ & \pureghost{\Phi} & \qw \poloFantasmaCn{\rX} & \measureD{c_{\sigma}}\\ & \pureghost{\Phi} & \qw \poloFantasmaCn{\rC} & \measureD{u} \\ & \prepareC{\beta} & \qw \poloFantasmaCn{\rB} & \measureD{u} }\end{aligned}~=\!\!\!\!\begin{aligned}\Qcircuit @C=1em @R=.7em @!R { & \multiprepareC{1}{\Psi} & \qw \poloFantasmaCn{\rA} & \qw &\qw &\qw \\ & \pureghost{\Psi} & \qw \poloFantasmaCn{\rB} & \multigate{2}{\cU} & \qw \poloFantasmaCn{\rX} &\measureD{c_{\sigma}} \\  & \prepareC{\xi} & \qw \poloFantasmaCn{\rX} & \ghost{\cU} & \qw \poloFantasmaCn{\rC} &\measureD{u} \\ & \prepareC{\gamma} & \qw \poloFantasmaCn{\rC} & \ghost{\cU} & \qw \poloFantasmaCn{\rB} &\measureD{u}}\end{aligned}~=:
\]
\[
=:\!\!\!\!\begin{aligned}\Qcircuit @C=1em @R=.7em @!R { & \multiprepareC{1}{\Psi} & \qw \poloFantasmaCn{\rA} & \qw \\ & \pureghost{\Psi} & \qw \poloFantasmaCn{\rB} & \measureD{b_{\sigma}}}\end{aligned},
\]where\[
\begin{aligned}\Qcircuit @C=1em @R=.7em @!R {&\qw \poloFantasmaCn{\rB} &\measureD{b_{\sigma}}}\end{aligned}~:=~\begin{aligned}\Qcircuit @C=1em @R=.7em @!R { & \qw & \qw \poloFantasmaCn{\rB} & \multigate{2}{\cU} & \qw \poloFantasmaCn{\rX} &\measureD{c_{\sigma}} \\  & \prepareC{\xi} & \qw \poloFantasmaCn{\rX} & \ghost{\cU} & \qw \poloFantasmaCn{\rC} &\measureD{u} \\ & \prepareC{\gamma} & \qw \poloFantasmaCn{\rC} & \ghost{\cU} & \qw \poloFantasmaCn{\rB} &\measureD{u}}\end{aligned}~.
\]This proves the theorem.
\end{proof}
Purification also enables us to link equality upon input (as defined
in definition~\ref{def:equality upon input}) to equality on purifications
\cite[theorem 7]{Chiribella-purification}, as an easy consequence
of Pure Steering.
\begin{prop}
\label{prop:purifications -> input} Let $\rho$ be a state of system
$\mathrm{A}$ and let $\Psi\in\mathsf{St}_{1}\left(\mathrm{AB}\right)$
be a purification of $\rho$. Then, for every pair of transformations
$\mathcal{A}$ and $\mathcal{A}'$, transforming $\mathrm{A}$ into
$\mathrm{C}$, if\[ \begin{aligned}\Qcircuit @C=1em @R=.7em @!R { & \multiprepareC{1}{\Psi} & \qw \poloFantasmaCn{\rA} & \gate{\mathcal A} & \qw \poloFantasmaCn{\rC}&\qw \\ & \pureghost{\Psi} & \qw \poloFantasmaCn{\rB} & \qw &\qw&\qw}\end{aligned} ~= \!\!\!\! \begin{aligned}\Qcircuit @C=1em @R=.7em @!R { & \multiprepareC{1}{\Psi} & \qw \poloFantasmaCn{\rA} & \gate{\mathcal A'} & \qw \poloFantasmaCn{\rC} &\qw \\ & \pureghost{\Psi} & \qw \poloFantasmaCn{\rB} & \qw &\qw &\qw}\end{aligned}~, \]then
$\mathcal{A}=_{\rho}\mathcal{A}'$.

If system $\mathrm{C}$ is trivial, then one has the full equivalence:
for every pair of effects $a$ and $a'$\[ \begin{aligned}\Qcircuit @C=1em @R=.7em @!R { & \multiprepareC{1}{\Psi} & \qw \poloFantasmaCn{\rA} & \measureD{a} \\ & \pureghost{\Psi} & \qw \poloFantasmaCn{\rB} & \qw }\end{aligned} ~= \!\!\!\! \begin{aligned}\Qcircuit @C=1em @R=.7em @!R { & \multiprepareC{1}{\Psi} & \qw \poloFantasmaCn{\rA} & \measureD{a'} \\ & \pureghost{\Psi} & \qw \poloFantasmaCn{\rB} & \qw}\end{aligned} \]if
and only if $a=_{\rho}a'$.
\end{prop}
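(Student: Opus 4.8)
The statement has two parts: a general implication about transformations $\mathcal{A},\mathcal{A}'$ from $\mathrm{A}$ to $\mathrm{C}$, and a strengthening to an equivalence when $\mathrm{C}$ is trivial (i.e.\ for effects). My plan is to derive both from \textbf{Pure Steering}, which characterises the states contained in $\rho$ as exactly those proportional to the states one steers to on the purification $\Psi$.

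For the first part, suppose the two transformations agree after composition with $\Psi$. Let $\sigma$ be any state contained in $\rho$. By Pure Steering there is an effect $b_\sigma$ on $\mathrm{B}$ and a non-zero probability $p$ with $p\,\sigma = \left(\mathcal{I}_{\mathrm{A}}\otimes b_\sigma\right)\Psi$ (applying $b_\sigma$ to the purifying system). Now apply $\mathcal{A}$ to system $\mathrm{A}$: since $\mathcal{A}$ acts only on $\mathrm{A}$ and $b_\sigma$ only on $\mathrm{B}$, the two operations commute in the circuit, so $p\,\mathcal{A}\sigma = \left(\mathcal{A}\otimes b_\sigma\right)\Psi$. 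By hypothesis $\left(\mathcal{A}\otimes\mathcal{I}_{\mathrm{B}}\right)\Psi = \left(\mathcal{A}'\otimes\mathcal{I}_{\mathrm{B}}\right)\Psi$, and composing both sides with $b_\sigma$ on $\mathrm{B}$ gives $\left(\mathcal{A}\otimes b_\sigma\right)\Psi = \left(\mathcal{A}'\otimes b_\sigma\right)\Psi$, hence $p\,\mathcal{A}\sigma = p\,\mathcal{A}'\sigma$. Dividing by $p>0$ yields $\mathcal{A}\sigma = \mathcal{A}'\sigma$. Since $\sigma$ was an arbitrary state contained in $\rho$, this is precisely $\mathcal{A} =_\rho \mathcal{A}'$ by Definition~\ref{def:equality upon input}.

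For the second part, with $\mathrm{C}$ trivial the transformations are effects $a,a'$. The forward direction is the special case of the first part, so it remains to prove: if $a =_\rho a'$ then $\left(a\otimes\mathcal{I}_{\mathrm{B}}\right)\Psi = \left(a'\otimes\mathcal{I}_{\mathrm{B}}\right)\Psi$. Here I would use that both sides are states of $\mathrm{B}$, so by the definition of equality of states it suffices to check that every effect $b$ on $\mathrm{B}$ gives the same number against each. Composing with $b$ produces the scalar $\left(a\otimes b\right)\Psi$; writing $q := \left(u_{\mathrm{A}}\otimes b\right)\Psi$, if $q=0$ then both scalars vanish (since $a,a'\le u_{\mathrm{A}}$), and if $q>0$ then $\sigma_b := \tfrac{1}{q}\left(\mathcal{I}_{\mathrm{A}}\otimes b\right)\Psi$ is a normalised state contained in $\rho$ (the marginal of $\Psi$ is $\rho$, and $\left(\mathcal{I}_{\mathrm{A}}\otimes b\right)\Psi$ together with $\left(\mathcal{I}_{\mathrm{A}}\otimes(u-b)\right)\Psi$ forms a preparation-test coarse-graining to $\rho$). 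Then $a =_\rho a'$ gives $\left(a|\sigma_b\right) = \left(a'|\sigma_b\right)$, and multiplying by $q$ gives $\left(a\otimes b\right)\Psi = \left(a'\otimes b\right)\Psi$. As $b$ was arbitrary, the two states of $\mathrm{B}$ coincide.

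The only mildly delicate point is justifying that $\sigma_b$ is genuinely \emph{contained in} $\rho$ in the technical sense (part of a convex decomposition into normalised states), which is exactly the content of the easy direction of Pure Steering applied with the two-outcome test $\{b, u-b\}$; so in fact I can invoke Pure Steering directly rather than re-deriving it. I do not expect any real obstacle here — the argument is a routine unpacking of Pure Steering plus the circuit-level commutation of operations on disjoint systems — but care is needed to handle the degenerate case $q=0$ cleanly and to keep track of the normalisations.
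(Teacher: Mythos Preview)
Your proposal is correct and is exactly the argument the paper intends: the proposition is stated as ``an easy consequence of Pure Steering'' with a reference to \cite[theorem 7]{Chiribella-purification}, and your unpacking---steering to $\sigma$ via an effect $b_\sigma$ on $\mathrm{B}$, using the interchange of operations on disjoint wires, and for the converse testing the two $\mathrm{B}$-states against an arbitrary effect $b$---is precisely that routine derivation. Your handling of the degenerate case $q=0$ and the containment of $\sigma_b$ in $\rho$ is fine.
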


Pure Steering guarantees the existence of pure dynamically faithful
states, in the following sense.
\begin{defn}
A state $\rho\in\mathsf{St}_{1}\left(\mathrm{AB}\right)$ is \emph{dynamically
faithful} on system $\mathrm{A}$ if for every system $\mathrm{C}$
and for every pair of transformations $\mathcal{A}$ and $\mathcal{A}'$
transforming $\mathrm{A}$ into $\mathrm{C}$ \[ \begin{aligned}\Qcircuit @C=1em @R=.7em @!R { & \multiprepareC{1}{\rho} & \qw \poloFantasmaCn{\rA} & \gate{\mathcal A} & \qw \poloFantasmaCn{\rC}&\qw \\ & \pureghost{\rho} & \qw \poloFantasmaCn{\rB} & \qw &\qw&\qw}\end{aligned} ~= \!\!\!\! \begin{aligned}\Qcircuit @C=1em @R=.7em @!R { & \multiprepareC{1}{\rho} & \qw \poloFantasmaCn{\rA} & \gate{\mathcal A'} & \qw \poloFantasmaCn{\rC} &\qw \\ & \pureghost{\rho} & \qw \poloFantasmaCn{\rB} & \qw &\qw &\qw}\end{aligned} \]implies
$\mathcal{A}=\mathcal{A}'$.
\end{defn}

Thanks to Pure Steering, we have the following characterisation.
\begin{prop}
\label{prop:faithful}A pure state $\Psi_{\mathrm{AB}}$ is dynamically
faithful on system $\mathrm{A}$ if and only if its marginal $\omega_{\mathrm{A}}$
on $\mathrm{A}$ is internal.
\end{prop}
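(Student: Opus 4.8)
The plan is to prove both implications using Pure Steering as the main tool, since dynamical faithfulness is about distinguishing transformations via their action on half of a bipartite state, and the contained states of a marginal are exactly what Pure Steering controls.

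First I would prove sufficiency: assume $\omega_{\mathrm{A}} = \mathrm{tr}_{\mathrm{B}} \Psi_{\mathrm{AB}}$ is internal, and suppose $(\mathcal{A}\otimes\mathcal{I}_{\mathrm{B}})\Psi = (\mathcal{A}'\otimes\mathcal{I}_{\mathrm{B}})\Psi$ for transformations $\mathcal{A},\mathcal{A}'$ from $\mathrm{A}$ to $\mathrm{C}$. By Proposition~\ref{prop:purifications -> input} (which is itself a consequence of Pure Steering), this equality on the purification $\Psi$ implies $\mathcal{A} =_{\omega_{\mathrm{A}}} \mathcal{A}'$, i.e.\ $\mathcal{A}\sigma = \mathcal{A}'\sigma$ for every state $\sigma$ contained in $\omega_{\mathrm{A}}$. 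But $\omega_{\mathrm{A}}$ is internal, so \emph{every} normalised state $\sigma$ of $\mathrm{A}$ is contained in $\omega_{\mathrm{A}}$; hence $\mathcal{A}\sigma = \mathcal{A}'\sigma$ for all $\sigma \in \mathsf{St}_1(\mathrm{A})$, and by linearity $\mathcal{A} = \mathcal{A}'$. (I should be slightly careful here: the definition of a transformation in an OPT involves an ancilla $\mathrm{S}$, so strictly I would note that $\mathcal{A}\otimes\mathcal{I}_{\mathrm{S}}$ and $\mathcal{A}'\otimes\mathcal{I}_{\mathrm{S}}$ agree on all of $\mathsf{St}(\mathrm{AS})$ because $\omega_{\mathrm{A}}\otimes\rho_{\mathrm{S}}$ has $\Psi_{\mathrm{AB}}\otimes$(purification of $\rho_{\mathrm{S}}$) as a purification and the internal-state argument plus Proposition~\ref{prop:purifications -> input} again applies — this is the routine bookkeeping I would not write out in full.)

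Next I would prove necessity by contrapositive: assume $\omega_{\mathrm{A}}$ is \emph{not} internal, and produce two distinct transformations agreeing on $\Psi$. Since $\omega_{\mathrm{A}}$ is not internal, there is a normalised state $\sigma_0$ of $\mathrm{A}$ not contained in $\omega_{\mathrm{A}}$. By Pure Steering, the states contained in $\omega_{\mathrm{A}}$ are exactly those of the form $\frac{1}{p}(\mathrm{tr}_{\mathrm{B}}\otimes\,\mathrm{id})[(\,b\otimes\mathcal{I}_{\mathrm{A}})\Psi]$ — so I want to exploit that the ``steerable'' states form a proper subset, hence lie in a proper face (or proper subcone) $K$ of $\mathsf{St}_+(\mathrm{A})$. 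The idea is then to pick $\mathcal{A} = \mathcal{I}_{\mathrm{A}}$ and $\mathcal{A}'$ a transformation that acts as the identity on the linear span of $K$ but differently elsewhere; concretely, one uses a pure sharp effect / projection supplied by Pure Sharpness and Purification to build a channel (say, a ``measure-and-reprepare'' type channel onto the face generated by $\omega_{\mathrm{A}}$) that fixes every state contained in $\omega_{\mathrm{A}}$ but moves $\sigma_0$. Then $(\mathcal{A}\otimes\mathcal{I}_{\mathrm{B}})\Psi = (\mathcal{A}'\otimes\mathcal{I}_{\mathrm{B}})\Psi$ by Pure Steering (both sides are determined by the action on steerable states), while $\mathcal{A}\sigma_0 \neq \mathcal{A}'\sigma_0$, so $\mathcal{A}\neq\mathcal{A}'$, contradicting faithfulness.

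The main obstacle is the necessity direction: constructing an explicit transformation that is the identity on the face generated by $\omega_{\mathrm{A}}$ but not globally the identity. This requires knowing that there is a nontrivial pure projector onto that face — a structural fact about sharp theories with purification (existence of projections onto faces) that the excerpt alludes to but states only later, so I would either invoke it as an available lemma or, more self-containedly, argue directly: if $\omega_{\mathrm{A}}$ is not internal then its support face $F$ is proper, so there is a pure effect $e$ with $e =_{\omega_{\mathrm{A}}} u$ but $e \neq u$ (by the converse part of Proposition~\ref{prop:purifications -> input}, $a=_\rho a'$ iff $a,a'$ agree on $\Psi$), and from this effect one manufactures the needed channel. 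Pinning down that this $e$ exists and that the resulting channel is genuinely distinct from $\mathcal{I}_{\mathrm{A}}$ is the delicate part; everything else is a direct application of Pure Steering and Proposition~\ref{prop:purifications -> input}.
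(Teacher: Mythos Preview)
The paper does not actually prove this proposition: it simply defers to theorems~8 and~9 of \cite{Chiribella-purification}, remarking that the argument there can be adapted to work without Local Tomography. So there is no detailed proof in the text to compare yours against; your proposal is effectively an attempt to reconstruct that adaptation.

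Your sufficiency direction is the right idea and matches the spirit of the cited theorems: equality on the purification $\Psi$ gives $\mathcal A=_{\omega_{\mathrm A}}\mathcal A'$ via proposition~\ref{prop:purifications -> input}, and internality of $\omega_{\mathrm A}$ upgrades this to $\mathcal A\sigma=\mathcal A'\sigma$ for all $\sigma\in\mathsf{St}_1(\mathrm A)$. However, the parenthetical you call ``routine bookkeeping'' is exactly the place where the original reference invokes Local Tomography, and dropping that axiom is the substantive content of the adaptation the paper alludes to. Your proposed fix---tensor $\Psi$ with a purification of an internal state of the ancilla $\mathrm S$ and apply proposition~\ref{prop:purifications -> input} on $\mathrm{AS}$---needs $\omega_{\mathrm A}\otimes\chi_{\mathrm S}$ to be internal in $\mathrm{AS}$. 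That does hold in sharp theories with purification, but it rests on $\chi_{\mathrm A}\otimes\chi_{\mathrm S}=\chi_{\mathrm{AS}}$, which in this thesis is only established later (via proposition~\ref{prop:information locality} and the informational-equilibrium argument of chapter~\ref{chap:Operational-thermodynamics}). So at this point in the text your fix would need a forward reference; it is correct, but not the self-contained one-liner you suggest.

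For necessity your projector idea is viable but, as you acknowledge, depends on structural facts about faces proved much later. A lighter route, closer to the cited reference and to the tools already on the table, is to use the \emph{effect} clause of proposition~\ref{prop:purifications -> input}: if $\omega_{\mathrm A}$ is not internal it lies on the boundary of $\mathsf{St}_1(\mathrm A)$, and the states contained in it lie in a proper supporting hyperplane, so one can exhibit effects $a\neq a'$ with $a=_{\omega_{\mathrm A}}a'$. By the ``if and only if'' part of that proposition these satisfy $(a\otimes\mathcal I_{\mathrm B})\Psi=(a'\otimes\mathcal I_{\mathrm B})\Psi$, giving two distinct transformations (to the trivial system) that agree on $\Psi$, which directly contradicts faithfulness without building any projector channel.
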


\begin{proof}
The proof is an adaptation of the arguments of \cite[theorems 8, 9]{Chiribella-purification},
which is valid even without invoking the Local Tomography axiom used
therein \cite{Chiribella}.
\end{proof}
An indirect consequence of Pure Steering is a simple condition for
a set of transformations to be a test (cf.\ \cite[theorem 18]{Chiribella-purification}).
\begin{prop}
\label{prop:sufficientfortest}A set of transformations $\left\{ \mathcal{A}_{i}\right\} _{i=1}^{n}\subset\mathsf{Transf}\left(\mathrm{A},\mathrm{B}\right)$
is a test if and only if $\sum_{i=1}^{n}u_{\mathrm{B}}\mathcal{A}_{i}=u_{\mathrm{A}}$.
Specifically, a set of effects $\left\{ a_{i}\right\} _{i=1}^{n}$
is an observation-test if and only if $\sum_{i=1}^{n}a_{i}=u_{\mathrm{A}}$.
\end{prop}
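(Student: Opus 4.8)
The plan is to prove the two directions separately, with the nontrivial content entirely on the sufficiency side. Necessity is immediate and was essentially already recorded in eq.~\eqref{eq:necessity test}: if $\left\{\mathcal{A}_i\right\}_{i=1}^n$ is a test from $\mathrm{A}$ to $\mathrm{B}$, the associated channel $\mathcal{A}=\sum_i\mathcal{A}_i$ satisfies $u_{\mathrm{B}}\mathcal{A}=u_{\mathrm{A}}$ by Proposition~\ref{prop:characterization channel}, hence $\sum_{i=1}^n u_{\mathrm{B}}\mathcal{A}_i=u_{\mathrm{A}}$. So the work is in showing that the condition $\sum_{i=1}^n u_{\mathrm{B}}\mathcal{A}_i=u_{\mathrm{A}}$ is also sufficient for a finite set of transformations to be assembled into a genuine test of the theory.

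The natural strategy mirrors the proof of Pure Steering and of Proposition~\ref{prop:characterization channel}: we want to ``physicalise'' the candidate test. First I would fix a pure dynamically faithful state $\Psi_{\mathrm{A}\mathrm{A}'}$ on $\mathrm{A}$ — it exists because every system has an internal invariant state (transitivity, Proposition~\ref{prop:uniqueness invariant}) and by Proposition~\ref{prop:faithful} a purification of an internal state is dynamically faithful. Applying each $\mathcal{A}_i\otimes\mathcal{I}_{\mathrm{A}'}$ to $\Psi$ yields sub-normalised bipartite states $\sigma_i\in\mathsf{St}\left(\mathrm{B}\mathrm{A}'\right)$, and the hypothesis $\sum_i u_{\mathrm{B}}\mathcal{A}_i=u_{\mathrm{A}}$ translates, via faithfulness and the marginalisation identity, into $\sum_i\sigma_i=\Psi$ as far as the $\mathrm{B}$-marginal is concerned — more precisely $\sum_i\sigma_i$ is a purification-compatible state with the same $\mathrm{A}'$-marginal, so $\sum_i\sigma_i$ is a normalised state of $\mathrm{B}\mathrm{A}'$. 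The point is then to realise $\left\{\sigma_i\right\}$ as an actual preparation-test (a collection of sub-normalised states summing to a deterministic one), which one can do by Physicalisation of Readout together with the steering argument, and finally to pull this test back through $\Psi$ to obtain the test $\left\{\mathcal{A}_i\right\}$ on transformations. Because $\Psi$ is dynamically faithful, the transformations recovered from the realised preparation-test coincide with the original $\mathcal{A}_i$, so $\left\{\mathcal{A}_i\right\}$ is a test. The statement for effects is the special case $\mathrm{B}=\mathrm{I}$, where $u_{\mathrm{B}}\mathcal{A}_i=a_i$ directly.

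I expect the main obstacle to be the passage from ``a collection of sub-normalised bipartite states summing to a deterministic state'' to ``a bona fide preparation-test'', i.e.\ showing that any such decomposition is operationally realisable in the theory. This is exactly where Physicalisation of Readout and Pure Steering do the heavy lifting, as in \cite[theorem 18]{Chiribella-purification}: one takes a purification of the sum state, uses the steering theorem to produce an observation-test on the purifying system whose outcomes steer the purification onto the $\sigma_i$, and then composes to get the desired test. A secondary subtlety is bookkeeping with sub-normalisation — the $\sigma_i$ need not be normalised, so one must carry the weights $p_i=\left\Vert\sigma_i\right\Vert$ along — but this is routine given the discussion of randomised and conditioned tests in Section~\ref{sec:Causality}. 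The finiteness assumption on $n$ is used only to stay within the finite-outcome setting adopted throughout.
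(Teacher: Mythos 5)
Your overall route is the right one---it is essentially the argument of \cite[theorem 18]{Chiribella-purification} that the paper defers to---and the necessity direction, the choice of a pure dynamically faithful state via propositions~\ref{prop:uniqueness invariant} and \ref{prop:faithful}, and the observation that $\Sigma:=\sum_{i}\sigma_{i}$ is normalised (just apply $u_{\mathrm{BA}'}$ and use the hypothesis) are all correct. However, you have mislocated the difficulty. The step you single out as the main obstacle---passing from ``sub-normalised states summing to a deterministic state'' to ``a bona fide preparation-test''---is in fact the easy part, and needs neither Purification nor Steering: writing $\sigma_{i}=p_{i}\overline{\sigma}_{i}$ with $\sum_{i}p_{i}=1$, the randomised-and-coarse-grained test of section~\ref{sec:Causality} (toss a classical coin $\left\{ p_{i}\right\}$, prepare $\overline{\sigma}_{i}$ on outcome $i$) realises $\left\{ \sigma_{i}\right\}$ as a preparation-test in any causal, non-deterministic theory.

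The genuine gap is the step you compress into ``pull this test back through $\Psi$'': the framework provides no operation that converts a preparation-test on $\mathrm{BA}'$ into a test of transformations from $\mathrm{A}$ to $\mathrm{B}$, and dynamical faithfulness only lets you \emph{identify} transformations once you already have a test in hand. The standard way to close this is: by Physicalisation of Readout, write $\sigma_{i}=\left(\mathcal{I}_{\mathrm{BA}'}\otimes e_{i}\right)\Theta$ for a deterministic state $\Theta\in\mathsf{St}_{1}\left(\mathrm{BA}'\mathrm{X}\right)$ and an observation-test $\left\{ e_{i}\right\}$ on $\mathrm{X}$; take a purification $\Gamma\in\mathsf{PurSt}_{1}\left(\mathrm{BA}'\mathrm{XC}\right)$ of $\Theta$ and note that $\Gamma$ and $\Psi$ are both purifications of the same marginal on $\mathrm{A}'$, but with \emph{different} purifying systems ($\mathrm{BXC}$ versus $\mathrm{A}$). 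One therefore needs the strengthened uniqueness statement that two purifications with different purifying systems are connected by a \emph{channel} on the purifying side, yielding $\mathcal{C}\in\mathsf{DetTransf}\left(\mathrm{A},\mathrm{BXC}\right)$ with $\left(\mathcal{I}_{\mathrm{A}'}\otimes\mathcal{C}\right)\Psi=\Gamma$. Then $\mathcal{A}'_{i}:=\left(\mathcal{I}_{\mathrm{B}}\otimes e_{i}\otimes u_{\mathrm{C}}\right)\circ\mathcal{C}$ \emph{is} a test by construction (a channel followed by an observation-test on part of its output), it satisfies $\left(\mathcal{A}'_{i}\otimes\mathcal{I}_{\mathrm{A}'}\right)\Psi=\sigma_{i}$, and faithfulness gives $\mathcal{A}'_{i}=\mathcal{A}_{i}$. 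Without this lemma and this explicit construction your argument does not close; with them, it does, and the effect case $\mathrm{B}=\mathrm{I}$ follows as you say.
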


A proof of this proposition that does not make use of Local Tomography
can be found in \cite{Scandolo-thesis}.

We will also make use of the following important consequences of Purification.
The first states that every channel admits a dilation to a reversible
channel \cite[subsection 9 A]{Chiribella-purification}.
\begin{prop}
\label{prop:dilation channels}For every channel $\mathcal{C}\in\mathsf{Transf}\left(\mathrm{A},\mathrm{B}\right)$
there is a reversible extension $\mathcal{V}\in\mathsf{Transf}\left(\mathrm{A},\mathrm{BE}\right)$
for some system $\mathrm{E}$, namely a channel $\mathcal{V}$ of
the form\begin{equation} \label{eq:reversible extension}\begin{aligned}\Qcircuit @C=1em @R=.7em @!R { & \qw \poloFantasmaCn{\rA} & \multigate{1}{\cV} & \qw \poloFantasmaCn{\rB} &\qw \\  &  &\pureghost{\cV} & \qw \poloFantasmaCn{\rE} & \qw }\end{aligned}~=\!\!\!\! \begin{aligned}\Qcircuit @C=1em @R=.7em @!R { & & \qw \poloFantasmaCn{\rA} & \multigate{1}{\cU} & \qw \poloFantasmaCn{\rB} &\qw \\ & \prepareC{\eta} & \qw \poloFantasmaCn{\rE'} &\ghost{\cU} & \qw \poloFantasmaCn{\rE} & \qw }\end{aligned}~,\end{equation}for
some system $\mathrm{E}'$, some pure state $\eta\in\mathsf{St}_{1}\left(\mathrm{E}'\right)$,
where $\mathcal{U}$ is a reversible channel, such that\[ \begin{aligned}\Qcircuit @C=1em @R=.7em @!R { & \qw \poloFantasmaCn{\rA} & \gate{\cC} & \qw \poloFantasmaCn{\rB} &\qw }\end{aligned} ~=~ \begin{aligned}\Qcircuit @C=1em @R=.7em @!R { & \qw \poloFantasmaCn{\rA} & \multigate{1}{\cV} & \qw \poloFantasmaCn{\rB} &\qw \\  &  &\pureghost{\cV} & \qw \poloFantasmaCn{\rE} & \measureD{u} }\end{aligned}~.\]

Moreover, if $\mathcal{V}$ and $\mathcal{V}'$ are two reversible
extensions of the same channel, one has\[ \begin{aligned}\Qcircuit @C=1em @R=.7em @!R { & \qw \poloFantasmaCn{\rA} & \multigate{1}{\cV'} & \qw \poloFantasmaCn{\rB} &\qw \\  &  &\pureghost{\cV'} & \qw \poloFantasmaCn{\rE} & \qw }\end{aligned}~=~\begin{aligned}\Qcircuit @C=1em @R=.7em @!R { & \qw \poloFantasmaCn{\rA} & \multigate{1}{\cV} & \qw \poloFantasmaCn{\rB} &\qw &\qw &\qw \\  &  &\pureghost{\cV} & \qw \poloFantasmaCn{\rE} & \gate{\cU'}& \qw \poloFantasmaCn{\rE}&\qw}\end{aligned}~,
\]where $\mathcal{U}'$ is a reversible channel on $\mathrm{E}$.
\end{prop}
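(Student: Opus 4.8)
The plan is to build the dilation directly from Purification applied to the channel viewed as a state via the usual ``Choi-like'' construction, and then to establish uniqueness using Pure Steering together with the second part of the Purification axiom. First I would fix a pure dynamically faithful state $\Phi\in\mathsf{PurSt}_1(\mathrm{RA})$ on system $\mathrm{A}$; this exists by Proposition~\ref{prop:faithful}, since every finite-dimensional system has an internal state (namely the invariant state coming from Transitivity and Proposition~\ref{prop:uniqueness invariant}). Applying $\mathcal{C}$ to the $\mathrm{A}$-leg of $\Phi$ produces a (generally non-pure) state of $\mathrm{RB}$; take a purification $\Theta\in\mathsf{PurSt}_1(\mathrm{RBE})$ of it, with some purifying system $\mathrm{E}$. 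The idea is that $\Theta$ encodes a transformation $\mathcal{V}$ from $\mathrm{A}$ to $\mathrm{BE}$ via dynamical faithfulness of $\Phi$: since $\Phi$ is dynamically faithful, there is at most one transformation $\mathcal{V}\in\mathsf{Transf}(\mathrm{A},\mathrm{BE})$ whose action on the $\mathrm{A}$-leg of $\Phi$ gives $\Theta$, and one checks such a $\mathcal{V}$ actually exists because $\Theta$ is a purification of $\mathcal{C}$ applied to $\Phi$ and purifications are ``dilations'' in the appropriate sense. Discarding $\mathrm{E}$ from $\mathcal{V}$ recovers $\mathcal{C}$ (both sides act identically on the faithful state $\Phi$, hence are equal by faithfulness), and $\mathcal{V}$ is a channel since $u_{\mathrm{BE}}\mathcal{V}=u_{\mathrm{B}}\mathcal{C}=u_{\mathrm{A}}$ by Proposition~\ref{prop:characterization channel}.

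Next I would show $\mathcal{V}$ is a \emph{pure} channel, i.e.\ can itself be written in the reversible-dilation form \eqref{eq:reversible extension}. The key observation is that $\Theta$ is pure, and a channel whose action on a pure faithful state yields a pure state must be pure: any refinement $\{\mathcal{V}_i\}$ of $\mathcal{V}$ would, by acting on $\Phi$, give a refinement of the pure state $\Theta$, which must be trivial, and then faithfulness of $\Phi$ transports triviality back to $\mathcal{V}$. Having a pure channel $\mathcal{V}\colon\mathrm{A}\to\mathrm{BE}$, I would invoke Purification one more time in the standard ``every pure channel is a reversible interaction with a pure ancilla'' shape: purify the (pure) state $\mathcal{V}$ applied to an internal pure state of $\mathrm{A}$ embedded in a larger faithful state; comparing two purifications with the same purifying system and using that reversible channels act transitively on pure states (Transitivity) lets one peel off a reversible $\mathcal{U}$ and a pure input state $\eta$ on an ancilla $\mathrm{E}'$ so that $\mathcal{V}=(\mathcal{U})\circ(\mathcal{I}_{\mathrm{A}}\otimes\eta)$ up to system reordering. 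This gives the form in \eqref{eq:reversible extension}.

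For the uniqueness clause, suppose $\mathcal{V},\mathcal{V}'\in\mathsf{Transf}(\mathrm{A},\mathrm{BE})$ are two reversible extensions of the same channel $\mathcal{C}$. Acting on the pure faithful state $\Phi\in\mathsf{PurSt}_1(\mathrm{RA})$ gives two states $\Theta=(\mathcal{I}_{\mathrm{R}}\otimes\mathcal{V})\Phi$ and $\Theta'=(\mathcal{I}_{\mathrm{R}}\otimes\mathcal{V}')\Phi$ in $\mathsf{PurSt}_1(\mathrm{RBE})$ (pure by Purity Preservation and the argument above), and both have the same marginal on $\mathrm{RB}$ because discarding $\mathrm{E}$ from either recovers $(\mathcal{I}_{\mathrm{R}}\otimes\mathcal{C})\Phi$. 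By the second part of the Purification axiom, $\Theta$ and $\Theta'$ differ by a reversible channel $\mathcal{U}'$ on the purifying system $\mathrm{E}$, i.e.\ $\Theta'=(\mathcal{I}_{\mathrm{RB}}\otimes\mathcal{U}')\Theta$. Finally, dynamical faithfulness of $\Phi$ on $\mathrm{A}$ upgrades this equality of states to the claimed equality of transformations $\mathcal{V}'=(\mathcal{I}_{\mathrm{B}}\otimes\mathcal{U}')\mathcal{V}$.

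The main obstacle I anticipate is the careful bookkeeping in the first paragraph: making rigorous the passage ``$\Theta$ encodes a transformation $\mathcal{V}$'' requires knowing that the correspondence between transformations out of $\mathrm{A}$ and states obtained by acting on a faithful $\Phi$ is not just injective (which is faithfulness) but \emph{surjective onto the right subset} of states — precisely, that a state $\Theta$ of $\mathrm{RBE}$ arises from some transformation $\mathrm{A}\to\mathrm{BE}$ iff its $\mathrm{R}$-marginal matches that of $\Phi$ and it is suitably ``no-signalling'' from $\mathrm{BE}$ back to $\mathrm{R}$. This surjectivity is itself a consequence of Purification (it is essentially the content of the dilation theorems cited as \cite[subsection 9 A]{Chiribella-purification} and Proposition~\ref{prop:purifications -> input}), so I would either cite it or reconstruct it via Pure Steering; everything else is routine application of faithfulness, Purity Preservation, Transitivity, and the uniqueness part of Purification.
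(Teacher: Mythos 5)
The paper does not actually prove this proposition: it imports it wholesale from \cite[subsection 9\,A]{Chiribella-purification}, so there is no internal proof to compare against. Your uniqueness clause is correct and is the standard argument: the two pure states $(\mathcal{I}_{\mathrm{R}}\otimes\mathcal{V})\Phi$ and $(\mathcal{I}_{\mathrm{R}}\otimes\mathcal{V}')\Phi$ are purifications of $(\mathcal{I}_{\mathrm{R}}\otimes\mathcal{C})\Phi$ with the same purifying system $\mathrm{E}$ (their purity uses that reversible channels preserve purity plus Purity Preservation), the uniqueness part of Purification gives the reversible $\mathcal{U}'$ on $\mathrm{E}$, and dynamical faithfulness of $\Phi$ upgrades the equality of states to equality of transformations.

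The existence half, however, is circular as written, and in two places. First, the passage ``$\Theta$ encodes a transformation $\mathcal{V}$'' is exactly the dilation theorem you are trying to prove; deferring it to ``the dilation theorems cited'' or to an unexecuted reconstruction does not close the argument. Second, your middle paragraph reduces the problem to ``every pure channel is a reversible interaction with a pure ancilla,'' which is again a restatement of the conclusion. Both gaps disappear if you construct $\mathcal{V}$ directly in the form $\mathcal{U}\circ(\mathcal{I}_{\mathrm{A}}\otimes\eta)$ rather than positing it abstractly: let $\Phi\in\mathsf{PurSt}_1(\mathrm{RA})$ purify the internal state of $\mathrm{A}$, let $\Theta\in\mathsf{PurSt}_1(\mathrm{RBE}_0)$ purify $(\mathcal{I}_{\mathrm{R}}\otimes\mathcal{C})\Phi$, and observe that $\Theta$ and $\Phi$ are both purifications of the same state $\sigma_{\mathrm{R}}=\mathrm{tr}_{\mathrm{A}}\Phi$ (here you use $u_{\mathrm{B}}\mathcal{C}=u_{\mathrm{A}}$), with purifying systems $\mathrm{BE}_0$ and $\mathrm{A}$ respectively. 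Pad both with pure product states so that the purifying systems become operationally equivalent ($\Theta\otimes\alpha_{\mathrm{A}}$ versus $\Phi\otimes\beta_{\mathrm{B}}\otimes\epsilon_{\mathrm{E}_0}$, both purifying $\sigma_{\mathrm{R}}$ with purifying system $\mathrm{ABE}_0$ up to $\mathtt{SWAP}$s); the uniqueness part of Purification then yields a reversible $\mathcal{U}:\mathrm{ABE}_0\to\mathrm{BE}_0\mathrm{A}$ with $\Theta\otimes\alpha=(\mathcal{I}_{\mathrm{R}}\otimes\mathcal{U})(\Phi\otimes\beta\otimes\epsilon)$. Setting $\mathrm{E}:=\mathrm{E}_0\mathrm{A}$, $\eta:=\beta\otimes\epsilon$ and $\mathcal{V}:=\mathcal{U}\circ(\mathcal{I}_{\mathrm{A}}\otimes\eta)$ gives existence \emph{and} the form \eqref{eq:reversible extension} in one stroke; $\mathrm{tr}_{\mathrm{E}}\mathcal{V}=\mathcal{C}$ then follows from faithfulness of $\Phi$. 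This is the same padding-and-comparison trick used in the paper's proof of Pure Steering, which is the device you correctly identified but did not deploy.
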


This is the form Purification takes for channels; indeed it is possible
to prove that reversible channels are pure channels in theories with
purification \cite[corollary 28]{Chiribella-purification}, so $\mathcal{V}$
is also a \emph{pure} extension of the channel $\mathcal{C}$.

An immediate consequence is the following corollary.
\begin{cor}
\label{cor:pure map factorise}Suppose $\mathcal{V}$ is a channel
of the form~\eqref{eq:reversible extension}, and such that\[ \begin{aligned}\Qcircuit @C=1em @R=.7em @!R { & \qw \poloFantasmaCn{\rA} & \multigate{1}{\cV} & \qw \poloFantasmaCn{\rB} &\qw \\  &  &\pureghost{\cV} & \qw \poloFantasmaCn{\rE} & \measureD{u} }\end{aligned}~=~\begin{aligned}\Qcircuit @C=1em @R=.7em @!R { &  \qw \poloFantasmaCn{\rA} & \gate{\cU} & \qw \poloFantasmaCn{\rB} &\qw }\end{aligned}~,\]where
$\mathcal{U}$ is a reversible channel. Then\[
\begin{aligned}\Qcircuit @C=1em @R=.7em @!R { & \qw \poloFantasmaCn{\rA} & \multigate{1}{\cV} & \qw \poloFantasmaCn{\rB} &\qw \\  &  &\pureghost{\cV} & \qw \poloFantasmaCn{\rE} & \qw}\end{aligned}~=~\begin{aligned}\Qcircuit @C=1em @R=.7em @!R {  & \qw \poloFantasmaCn{\rA} & \gate{\cU} & \qw \poloFantasmaCn{\rB} &\qw \\ &&\prepareC{\eta} &\qw \poloFantasmaCn{\rE} &\qw}\end{aligned}~,
\]where $\eta$ is some pure state.
\end{cor}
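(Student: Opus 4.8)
The plan is to recognise $\cV$ as a reversible extension, in the sense of Proposition~\ref{prop:dilation channels}, of the channel obtained from it by discarding system $\rE$; to exhibit $\cU\otimes\eta_{0}$ (with $\eta_{0}$ any fixed normalised pure state of $\rE$) as a \emph{second} reversible extension of the \emph{same} channel; and then to invoke the uniqueness clause of that proposition.

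In detail: by hypothesis $\cV$ has the form~\eqref{eq:reversible extension}, so it \emph{is} a reversible extension of the channel $\cC$ obtained by discarding $\rE$ at its output, and by assumption $\cC=\cU$, a reversible channel. Now fix any normalised pure state $\eta_{0}\in\mathsf{PurSt}_{1}\left(\rE\right)$ (such a state exists, e.g.\ by Transitivity, since every system has pure states) and put $\cV':=\cU\otimes\eta_{0}\in\mathsf{Transf}\left(\rA,\rB\rE\right)$. Writing $\cV'=\left(\cU\otimes\cI_{\rE}\right)\circ\left(\cI_{\rA}\otimes\eta_{0}\right)$ shows that $\cV'$ is itself of the form~\eqref{eq:reversible extension}: the purifying ancilla is $\rE$, the ancilla state is the pure state $\eta_{0}$, and the reversible channel is $\cU\otimes\cI_{\rE}$ (reversible because $\cU$ is, with inverse $\cU^{-1}\otimes\cI_{\rE}$). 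Moreover, discarding $\rE$ at the output of $\cV'$ returns $\cU\,\left(\mathrm{tr}\,\eta_{0}\right)=\cU$, since $\eta_{0}$ is normalised. Hence $\cV'$ is a reversible extension of $\cU$ on the same purifying system $\rE$ as $\cV$.

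Applying the uniqueness part of Proposition~\ref{prop:dilation channels} to the two reversible extensions $\cV$ and $\cV'$ of the channel $\cU$, there is a reversible channel $\mathcal{W}$ on $\rE$ such that $\cV=\left(\cI_{\rB}\otimes\mathcal{W}\right)\circ\cV'=\left(\cI_{\rB}\otimes\mathcal{W}\right)\circ\left(\cU\otimes\eta_{0}\right)=\cU\otimes\left(\mathcal{W}\eta_{0}\right)$. Setting $\eta:=\mathcal{W}\eta_{0}$ gives $\cV=\cU\otimes\eta$, and $\eta$ is a normalised pure state because reversible channels send pure states to pure states (the lemma asserting that reversible channels preserve purity). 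This is precisely the asserted factorisation.

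I do not anticipate a genuine obstacle here: the only delicate point is to check that $\cU\otimes\eta_{0}$ really matches the template~\eqref{eq:reversible extension} with \emph{the same} output ancilla $\rE$, so that the uniqueness clause of Proposition~\ref{prop:dilation channels} is directly applicable; once that is in place the conclusion is immediate, and no appeal to Pure Steering or explicit purification arguments is needed (though such a route is also available, using that a pure channel dilating the identity must be a product with a pure state).
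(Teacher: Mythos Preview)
Your proof is correct and follows essentially the same approach as the paper: both construct the alternative reversible extension $\cV'=\cU\otimes\eta'$ (with $\eta'$ a fixed pure state of $\rE$), verify it has the form~\eqref{eq:reversible extension} with joint reversible channel $\cU\otimes\cI_{\rE}$, and then invoke the uniqueness clause of Proposition~\ref{prop:dilation channels} to obtain $\cV=\cU\otimes(\mathcal{W}\eta')$ for some reversible $\mathcal{W}$ on $\rE$.
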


\begin{proof}
Consider the channel\[
\begin{aligned}\Qcircuit @C=1em @R=.7em @!R { & \qw \poloFantasmaCn{\rA} & \multigate{1}{\cV'} & \qw \poloFantasmaCn{\rB} &\qw \\  &  &\pureghost{\cV'} & \qw \poloFantasmaCn{\rE} & \qw}\end{aligned}~=~\begin{aligned}\Qcircuit @C=1em @R=.7em @!R {  & \qw \poloFantasmaCn{\rA} & \gate{\cU} & \qw \poloFantasmaCn{\rB} &\qw \\  & &\prepareC{\eta'} &\qw \poloFantasmaCn{\rE}&\qw}\end{aligned}~,
\]for $\eta'$ pure; which is of the form~\eqref{eq:reversible extension},
where the joint reversible channel is $\mathcal{U}\otimes\mathcal{I}_{\mathrm{E}}$.
Clearly, we have\[
\begin{aligned}\Qcircuit @C=1em @R=.7em @!R { & \qw \poloFantasmaCn{\rA} & \multigate{1}{\cV'} & \qw \poloFantasmaCn{\rB} &\qw \\  &  &\pureghost{\cV'} & \qw \poloFantasmaCn{\rE} & \measureD{u} }\end{aligned}~=~\begin{aligned}\Qcircuit @C=1em @R=.7em @!R { &  \qw \poloFantasmaCn{\rA} & \gate{\cU} & \qw \poloFantasmaCn{\rB} &\qw }\end{aligned}~,
\]therefore, by proposition~\ref{prop:dilation channels}, there is
a reversible channel $\mathcal{U}'$ on $\mathrm{E}$ such that\[
\begin{aligned}\Qcircuit @C=1em @R=.7em @!R { & \qw \poloFantasmaCn{\rA} & \multigate{1}{\cV} & \qw \poloFantasmaCn{\rB} &\qw \\  &  &\pureghost{\cV} & \qw \poloFantasmaCn{\rE} & \qw }\end{aligned}~=\!\!\!\!\begin{aligned}\Qcircuit @C=1em @R=.7em @!R {  & & \qw \poloFantasmaCn{\rA} & \gate{\cU} & \qw \poloFantasmaCn{\rB} &\qw \\  &\prepareC{\eta'} &\qw \poloFantasmaCn{\rE}& \gate{\cU'} & \qw \poloFantasmaCn{\rE} &\qw}\end{aligned}~=~\begin{aligned}\Qcircuit @C=1em @R=.7em @!R {  & \qw \poloFantasmaCn{\rA} & \gate{\cU} & \qw \poloFantasmaCn{\rB} &\qw \\ &&\prepareC{\eta} &\qw \poloFantasmaCn{\rE} &\qw}\end{aligned},
\]where $\eta=\mathcal{U}'\eta'$ is a pure state because $\eta'$ is
pure.
\end{proof}
Now we can move to study less immediate consequences of the axioms,
which appeared in \cite{TowardsThermo}.

\section{State-effect duality\label{sec:State-effect-duality}}

In this section we derive the first important property of sharp theories
with purification: a duality between normalised pure states and normalised
pure effects. To do that, first we need some technical lemmas.

\subsection{Technical lemmas}

Pure Sharpness stipulates that for every system there is a pure effect
occurring with probability 1 on some state. We can easily show that
such a state must be pure \cite[lemma 26, theorem 7]{Chiribella-informational}.
\begin{prop}
\label{prop:uniqueness of state}Let $a$ be a normalised pure effect.
Then there exists a pure state $\alpha$ such that $\left(a\middle|\alpha\right)=1$.
If $\rho$ is another state such that $\left(a\middle|\rho\right)=1$,
then $\rho=\alpha$.
\end{prop}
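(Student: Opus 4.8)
The plan is to first establish existence of a \emph{pure} state on which $a$ occurs with certainty, and then prove uniqueness separately. For existence: by Pure Sharpness there is \emph{some} state $\rho$ with $\left(a\middle|\rho\right)=1$. Take a refinement of $\rho$ into pure states via a preparation-test, $\rho=\sum_{i}p_{i}\alpha_{i}$ with $p_{i}\in\left(0,1\right]$, $\sum_i p_i=1$, and the $\alpha_i$ normalised pure (such a refinement exists by Carath\'eodory in finite dimension, or simply by the definition of states as equivalence classes of preparation-events together with the fact that extreme points of the state space are pure). Applying $a$ gives $1=\sum_{i}p_{i}\left(a\middle|\alpha_{i}\right)$, and since each $\left(a\middle|\alpha_{i}\right)\in\left[0,1\right]$ and the $p_i$ are a probability distribution, we must have $\left(a\middle|\alpha_{i}\right)=1$ for every $i$ with $p_i>0$. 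Pick any such $\alpha:=\alpha_{i_0}$; it is pure and satisfies $\left(a\middle|\alpha\right)=1$.

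For uniqueness, suppose $\rho$ is \emph{any} state (not necessarily pure) with $\left(a\middle|\rho\right)=1$; I want $\rho=\alpha$. The key idea is that $\left(a\middle|\sigma\right)=1$ forces, for every state $\sigma$ contained in $\rho$, that $a=_{\sigma}u$ (this is exactly the remark used in the restricted-trit example: if $\left(a\middle|\rho\right)=1$ then $a=_{\rho}u$, since $u-a$ is an effect and $\left(u-a\middle|\rho\right)=0$ together with non-negativity of effects on contained states kills it). Concretely, write $\rho=\sum_i p_i\beta_i$ as a pure refinement; then each $\left(a\middle|\beta_i\right)=1$ as above, so it suffices to show that \emph{any} pure state $\beta$ with $\left(a\middle|\beta\right)=1$ equals $\alpha$ — uniqueness among pure states then forces the mixed $\rho$ to be a convex combination of copies of $\alpha$, i.e.\ $\rho=\alpha$. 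So reduce to: $\alpha,\beta$ pure, $\left(a\middle|\alpha\right)=\left(a\middle|\beta\right)=1$ implies $\alpha=\beta$.

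The main obstacle is this last reduction, and here is where Purification does the work. By Transitivity there is a reversible channel $\mathcal{U}$ with $\beta=\mathcal{U}\alpha$. Consider the pure effect $a\mathcal{U}$ (a pure effect by the remark after the previous Lemma that reversible channels preserve purity of effects): $\left(a\mathcal{U}\middle|\alpha\right)=\left(a\middle|\beta\right)=1$. So both $a$ and $a\mathcal{U}$ are pure effects attaining $1$ on $\alpha$. Now purify $\alpha$ trivially — $\alpha$ itself, being a pure state, can be regarded as a purification of the number $1$ of the trivial system with purifying system $\mathrm{A}$ — and use Pure Steering / Proposition~\ref{prop:purifications -> input}: applying the effect $a$ (resp.\ $u-a$) on $\mathrm{A}$ to $\alpha$ steers to a subnormalised state; since $\left(u-a\middle|\alpha\right)=0$, we get $u-a=_{\alpha}0$, hence $a=_{\alpha}u$. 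The same argument gives $u - a\mathcal{U}=_{\alpha} 0$. From $a=_{\alpha}u$ I then want to conclude $\left(a\middle|\gamma\right)=1$ for every state $\gamma$ contained in $\alpha$; but $\alpha$ is pure so the only state contained in it is $\alpha$ itself, which gives nothing new directly. The clean route instead: uniqueness of the deterministic \emph{effect} (Causality) combined with the fact that $\{a, u-a\}$ and $\{a\mathcal U, u - a\mathcal U\}$ are observation-tests, plus the one-to-one correspondence furnished by the state-effect duality being \emph{constructed} in this very section — so the honest structure is that this Proposition is the lemma feeding the duality, and the uniqueness claim should be proved using only Purification, Transitivity, Pure Steering and Proposition~\ref{prop:faithful}. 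I would therefore lean on Proposition~\ref{prop:faithful}: pick a pure dynamically faithful state $\Psi_{\mathrm{AB}}$ with internal marginal; express the hypotheses $\left(a\middle|\alpha\right)=1=\left(a\mathcal{U}\middle|\alpha\right)$ in terms of applying $a$ and $a\mathcal U$ on the $\mathrm A$-leg of $\Psi$, use internality of the marginal to run the Pure Steering argument on \emph{every} state contained in that marginal (in particular on $\alpha$ and on $\mathcal{U}^{-1}\alpha$), and deduce $a=a\mathcal{U}$ as effects, whence $\mathcal U$ fixes the pure state $\alpha$ only if $\beta=\mathcal U\alpha=\alpha$ — the contradiction with $\beta\neq\alpha$ closing the argument. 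The delicate point to get right is exactly which faithful state and which ``equality upon input'' instance to invoke so that Proposition~\ref{prop:purifications -> input} applies cleanly; everything else is bookkeeping with $u-a\geq 0$ and the probability-distribution argument.
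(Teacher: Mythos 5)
The existence half of your argument is sound: decompose a state attaining $\left(a\middle|\rho\right)=1$ into pure states and note that each component must also attain $1$. (One small correction: the existence of such a $\rho$ does not come from Pure Sharpness, which only guarantees \emph{one} pure sharp effect per system; for an arbitrary normalised pure effect it comes from $\left\Vert a\right\Vert=\sup_{\rho}\left(a\middle|\rho\right)=1$ together with compactness of the state space.) Your reduction of uniqueness to the pure--pure case is also valid. The gap is in the pure--pure case itself. First, the step you lean on -- ``deduce $a=a\,\mathcal{U}$, whence $\beta=\mathcal{U}\alpha=\alpha$'' -- is a non sequitur: $a\,\mathcal{U}=a$ only says $\left(a\middle|\mathcal{U}\sigma\right)=\left(a\middle|\sigma\right)$ for every $\sigma$, which tells you that $\mathcal{U}$ maps the set of states attaining $1$ on $a$ into itself; it pins down $\mathcal{U}\alpha$ only if you \emph{already know} that set is a singleton, which is exactly what you are trying to prove. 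Second, proving $a=a\,\mathcal{U}$ is the \emph{dual} uniqueness statement (at most one pure effect attaining $1$ on a given pure state); in the paper that is proposition~\ref{prop:atmostoneeffect}, whose proof \emph{uses} proposition~\ref{prop:uniqueness of state}. So your ``clean route'' is both circular in spirit and, even if granted, insufficient to conclude.

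The missing idea is a purification argument (the paper itself defers to \cite[lemma 26, theorem 7]{Chiribella-informational}). One first shows that \emph{any} state $\rho$ with $\left(a\middle|\rho\right)=1$ must be pure: take a purification $\Psi\in\mathsf{PurSt}_{1}\left(\mathrm{AB}\right)$ of $\rho$ and apply $a$ to the $\mathrm{A}$ wire. The resulting state of $\mathrm{B}$ is pure by Purity Preservation and has norm $\left(a\middle|\rho\right)=1$, so the complementary piece $\left(\left(u-a\right)\otimes\mathcal{I}_{\mathrm{B}}\right)\Psi$ has zero norm and vanishes; hence the marginal $\rho_{\mathrm{B}}$ equals $\left(a\otimes\mathcal{I}_{\mathrm{B}}\right)\Psi$ and is pure. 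A pure bipartite state with a pure marginal factorises (compare $\Psi$ with the product purification of $\rho_{\mathrm{B}}$ and invoke uniqueness of purification), so $\rho=\rho_{\mathrm{A}}$ is itself pure. Uniqueness is then immediate: if $\rho\neq\rho'$ both attained $1$ on $a$, so would $\frac{1}{2}\left(\rho+\rho'\right)$, which would then have to be pure -- impossible for a nontrivial mixture. Your draft never establishes the ``must be pure'' lemma, and without it neither your steering manoeuvres nor transitivity can close the argument.
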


Combining the above result with our Pure Sharpness axiom, we derive
the following proposition \cite[proposition 9]{QPL15}.
\begin{prop}
\label{prop:identifyingeffect}For every pure state $\alpha\in\mathsf{PurSt}_{1}\left(\mathrm{A}\right)$
there exists at least one pure effect $a\in\mathsf{PurEff}\left(\mathrm{A}\right)$
such that $\left(a\middle|\alpha\right)=1$.
\end{prop}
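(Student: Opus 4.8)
The plan is to combine the two axioms \emph{Pure Sharpness} and \emph{Transitivity} in the most direct way possible, using \autoref{prop:uniqueness of state} only for the uniqueness half and the transitivity consequence of Purification for the existence half. First I would invoke Pure Sharpness: for the system $\mathrm{A}$ there exists a pure effect $a_{0}\in\mathsf{PurEff}\left(\mathrm{A}\right)$ and a state $\rho_{0}\in\mathsf{St}_{1}\left(\mathrm{A}\right)$ with $\left(a_{0}\middle|\rho_{0}\right)=1$. Without loss of generality $a_{0}$ is normalised, since $\left(a_{0}\middle|\rho_{0}\right)=1\leq\left\Vert a_{0}\right\Vert\leq1$ forces $\left\Vert a_{0}\right\Vert=1$. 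By \autoref{prop:uniqueness of state} the state on which a normalised pure effect attains value $1$ is itself pure, so $\rho_{0}=:\alpha_{0}$ is a pure state with $\left(a_{0}\middle|\alpha_{0}\right)=1$. Thus we have \emph{one} distinguished pair $\left(\alpha_{0},a_{0}\right)$ of a pure state and a pure effect that "click" together with certainty.

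The core step is to transport this pair to an arbitrary pure state $\alpha$ by a reversible channel. Since the theory satisfies Purification, reversible channels act transitively on pure states, so there exists a reversible channel $\mathcal{U}$ on $\mathrm{A}$ with $\alpha=\mathcal{U}\alpha_{0}$. Define $a:=a_{0}\mathcal{U}^{-1}$, i.e. the effect obtained by precomposing $a_{0}$ with $\mathcal{U}^{-1}$. Then
\[
\left(a\middle|\alpha\right)=\left(a_{0}\middle|\mathcal{U}^{-1}\middle|\mathcal{U}\alpha_{0}\right)=\left(a_{0}\middle|\alpha_{0}\right)=1.
\]
It remains to check that $a$ is a pure effect. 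This follows from the fact, already recorded in the excerpt right after the lemma that reversible channels preserve purity of states, that the analogous statement holds for effects: $b\in\mathsf{Eff}\left(\mathrm{B}\right)$ is pure if and only if $b\mathcal{U}\in\mathsf{Eff}\left(\mathrm{A}\right)$ is pure (stated for reversible $\mathcal{U}$). Applying this with $\mathcal{U}^{-1}$ in place of $\mathcal{U}$ and $b=a_{0}$ pure, we get that $a=a_{0}\mathcal{U}^{-1}$ is pure. That completes the argument, yielding the required $a\in\mathsf{PurEff}\left(\mathrm{A}\right)$ with $\left(a\middle|\alpha\right)=1$.

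I expect no genuine obstacle here; the only points requiring a little care are (i) confirming that Pure Sharpness really hands us a \emph{normalised} pure effect (or that we may renormalise), which is immediate from $\left\Vert a\right\Vert\le1$ together with $\left(a\middle|\rho\right)=1$; and (ii) making sure that the "reversible channels preserve purity" statement is available for effects, which the excerpt explicitly asserts. If one wanted to avoid even invoking that effect-version, one could instead argue directly that a nontrivial refinement of $a_{0}\mathcal{U}^{-1}$ would pull back under precomposition with $\mathcal{U}$ to a nontrivial refinement of $a_{0}$, contradicting purity of $a_{0}$; but citing the stated lemma is cleaner. The proof is therefore essentially a one-line transport argument once the setup from Pure Sharpness and \autoref{prop:uniqueness of state} is in place.
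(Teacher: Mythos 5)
Your proposal is correct and follows essentially the same route as the paper's proof: Pure Sharpness supplies a seed pair $\left(\alpha_{0},a_{0}\right)$ (with $\alpha_{0}$ pure via proposition~\ref{prop:uniqueness of state}), transitivity transports it to the given $\alpha$, and $a:=a_{0}\mathcal{U}^{-1}$ is the desired pure effect. The extra checks you flag (normalisation of $a_{0}$ and purity-preservation for effects under reversible channels) are exactly the points the paper leaves implicit, so nothing is missing.
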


\begin{proof}
By Pure Sharpness, there exists at least one pure effect $a_{0}$
such that $\left(a_{0}\middle|\alpha_{0}\right)=1$, for some state
$\alpha_{0}$, where $\alpha_{0}$ is pure. Now, for a generic pure
state $\alpha$, by transitivity, there is a reversible channel $\mathcal{U}$
such that $\alpha=\mathcal{U}\alpha_{0}$. Hence, the effect $a:=a_{0}\mathcal{U}^{-1}$
is pure and $\left(a\middle|\alpha\right)=1$.
\end{proof}
In summary, for every normalised pure effect $a\in\mathsf{PurEff}_{1}\left(\mathrm{A}\right)$,
we can associate a \emph{unique} pure state $\alpha\in\mathsf{PurSt}_{1}\left(\mathrm{A}\right)$
with it such that $\left(a\middle|\alpha\right)=1$. A probabilistic
model like this has been dubbed ``sharp'' by Wilce \cite{Wilce2010,Wilce-spectral,Royal-road-0,Royal-road}.
Conversely, given a pure state $\alpha$, there always exists at least
one pure effect $a$ such that $\left(a\middle|\alpha\right)=1$.
This shows that there is a surjective correspondence between normalised
pure effects and normalised pure states. We will show in a while that
this correspondence is in fact bijective.

These results, presented here as propositions, were instead taken
by Selby and Coecke as fundamental requirements for the definition
of \emph{test structure}, providing an operational characterisation
of the Hermitian adjoint \cite{Selby-dagger}, used in a new reconstruction
of quantum theory \cite{Diagrammatic}.

\subsubsection{Probability balance of pure bipartite states}

Given a normalised state $\rho\in\mathsf{St}_{1}\left(\mathrm{A}\right)$,
we define the probability $p_{*}$ as the maximum probability that
a pure state can have in a convex decomposition of $\rho$, namely\footnote{Note that the maximum is well-defined because the set of pure states
is compact, thanks to transitivity.}
\[
p_{*}:=\max_{\alpha\in\mathsf{PurSt}_{1}\left(\mathrm{A}\right)}\left\{ p:\rho=p\alpha+\left(1-p\right)\sigma,\sigma\in\mathsf{St}_{1}\left(\mathrm{A}\right)\right\} .
\]
We call $p_{*}$ the \emph{maximum eigenvalue} of $\rho$, and say
that the \emph{pure} state $\alpha$ is the corresponding \emph{eigenstate}.
The reason for this terminology will become clear once we prove our
diagonalisation theorem.

A fundamental consequence of our axioms is that both marginals of
a bipartite state have the same maximum eigenvalue.
\begin{thm}
\label{thm:probability balance}Let $\Psi$ be a pure bipartite state
of system $\mathrm{AB}$, let $\rho_{\mathrm{A}}$ and $\rho_{\mathrm{B}}$
be its marginals on systems $\mathrm{A}$ and $\mathrm{B}$ respectively.
Then, $\rho_{\mathrm{A}}$ and $\rho_{\mathrm{B}}$ have the same
maximum eigenvalue, namely
\[
p_{*,\mathrm{A}}=p_{*,\mathrm{B}}=:p_{*},
\]
where $p_{*,\mathrm{A}}$ and $p_{*,\mathrm{B}}$ are the maximum
eigenvalues of $\rho_{\mathrm{A}}$ and $\rho_{\mathrm{B}}$ respectively.

Moreover, when $\rho_{\mathrm{A}}$ (or equivalently $\rho_{\mathrm{B}}$
) is decomposed as $\rho_{\mathrm{A}}=p_{*}\alpha+\left(1-p_{*}\right)\sigma$
for some \emph{pure} state $\alpha$ and some state $\sigma$, the
states $\alpha$ and $\sigma$ are perfectly distinguishable with
the observation-test $\left\{ a,u_{\mathrm{A}}-a\right\} $, where
$a$ is any pure effect such that $\left(a\middle|\alpha\right)=1$.
\end{thm}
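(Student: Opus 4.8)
The plan is to leverage Pure Steering together with the state-effect duality (Propositions~\ref{prop:uniqueness of state} and~\ref{prop:identifyingeffect}) to transfer the notion of ``maximum eigenvalue'' between the two marginals of $\Psi$. Fix a decomposition $\rho_{\mathrm A} = p_{*,\mathrm A}\,\alpha + (1-p_{*,\mathrm A})\,\sigma$ with $\alpha$ pure, and let $a$ be a pure effect with $(a\mid\alpha)=1$. The key observation is that applying $a$ to system $\mathrm A$ of the purification $\Psi$ produces a sub-normalised state on $\mathrm B$; write
\[
\begin{aligned}\Qcircuit @C=1em @R=.7em @!R { & \multiprepareC{1}{\Psi} & \qw \poloFantasmaCn{\rA} & \measureD{a} \\ & \pureghost{\Psi} & \qw \poloFantasmaCn{\rB} & \qw }\end{aligned} ~=~ q\,\beta,
\]
where $\beta$ is a normalised state of $\mathrm B$ and $q = \tr\bigl[(a\otimes u_{\mathrm B})\Psi\bigr]$. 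Applying instead $u_{\mathrm A}$ gives $\rho_{\mathrm B}$, while splitting $u_{\mathrm A} = a + (u_{\mathrm A}-a)$ shows $\rho_{\mathrm B} = q\beta + (1-q)\beta'$ for some normalised $\beta'$. So $q$ is an eigenvalue of $\rho_{\mathrm B}$ \emph{provided} $\beta$ is pure; that is the crucial sub-claim.

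To see that $\beta$ is pure, I would use the fact that $a$, being pure with $(a\mid\alpha)=1$, ``picks out'' $\alpha$ in the sense that $a =_{\rho_{\mathrm A}} $ behaves like a multiple of the dagger of $\alpha$. More concretely: since $\Psi$ is pure and $a$ is a pure effect, the un-normalised state $q\beta$ obtained by plugging $a$ into one wing is a pure state by Purity Preservation applied appropriately (the composition of the pure state $\Psi$ with the pure effect $a$ on one system yields a pure state on the other — this is essentially the ``teleportation''/steering structure). Thus $\beta$ is a normalised pure state of $\mathrm B$. Hence $q$ is the weight of a pure state in a convex decomposition of $\rho_{\mathrm B}$, so $q \le p_{*,\mathrm B}$. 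Moreover, the value of $q = \tr\bigl[(a\otimes u_{\mathrm B})\Psi\bigr] = (a\mid\rho_{\mathrm A}) = p_{*,\mathrm A}\cdot 1 + (1-p_{*,\mathrm A})(a\mid\sigma) \ge p_{*,\mathrm A}$, and in fact I want to show $q = p_{*,\mathrm A}$, which requires $(a\mid\sigma)=0$.

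The remaining work is to establish $(a\mid\sigma)=0$, i.e.\ that the eigenstate $\alpha$ and the complementary state $\sigma$ are perfectly distinguished by $\{a, u_{\mathrm A}-a\}$ — which is precisely the ``moreover'' clause of the theorem, so the two halves of the statement are intertwined and should be proved together. Here I would argue by contradiction/optimality: if $(a\mid\sigma) = \epsilon > 0$, then the state $\sigma$ itself contains $\alpha$ with some positive weight (using Pure Steering on a purification of $\sigma$, or a direct argument that a state giving value $>0$ to the sharp effect $a$ must have $\alpha$ in its decomposition), which would let one rewrite $\rho_{\mathrm A}$ with a strictly larger coefficient of $\alpha$, contradicting maximality of $p_{*,\mathrm A}$. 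This gives $(a\mid\sigma)=0$, hence $q = p_{*,\mathrm A} \le p_{*,\mathrm B}$. By the symmetric argument (purifications are symmetric: $\Psi$ purifies $\rho_{\mathrm B}$ with purifying system $\mathrm A$), one gets $p_{*,\mathrm B}\le p_{*,\mathrm A}$, hence equality $p_{*,\mathrm A}=p_{*,\mathrm B}=:p_*$, and the distinguishing test $\{a,u_{\mathrm A}-a\}$ works by construction since $(a\mid\alpha)=1$ and $(a\mid\sigma)=0$.

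The main obstacle I anticipate is the purity of $\beta$ (equivalently, that $q\beta = (a\otimes\mathcal I_{\mathrm B})\Psi$ is pure): this needs a clean argument that composing a pure bipartite state with a pure effect on one factor yields a pure state on the other. I would expect to prove this via Pure Steering — any state contained in $q\beta$ would pull back, through the effect $a$ on $\mathrm A$, to a state contained in $\rho_{\mathrm A}$ that still lies ``under'' $\alpha$ (since $a$ is sharp on $\alpha$), forcing it to be proportional to $\alpha$'s steered partner — combined with Purity Preservation. The bookkeeping of which effects on the purifying system induce which states, and making the maximality argument rigorous without circularity between the two claims, is where the care is needed; everything else is diagram manipulation licensed by Causality and Purification.
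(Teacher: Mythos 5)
Your overall skeleton---apply the pure effect $a$ to one wing of the purification, use Purity Preservation to conclude that the steered state $\beta$ is pure, and compare weights on the two marginals---is essentially the right one; it is in fact the argument behind the result that the paper itself imports from an external reference without reproducing the proof, so a self-contained version along your lines is welcome. The purity of $\beta$, which you flag as the main obstacle, is actually immediate: reversible channels, and hence the identity, are pure in theories with purification, so $a\otimes\mathcal{I}_{\mathrm{B}}$ is pure by Purity Preservation, and so is its sequential composite with the pure state $\Psi$; no Pure Steering gymnastics are needed there.

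The genuine gap is your argument for $\left(a\middle|\sigma\right)=0$. The auxiliary claim that a state assigning positive probability to the pure effect $a$ must contain $\alpha$ in some convex decomposition is false: take $\sigma$ pure with $\sigma\neq\alpha$ and $\left(a\middle|\sigma\right)>0$ (in quantum theory, $\sigma=\ket{0}\bra{0}$ and $a=\ket{+}\bra{+}$). A pure state contains only itself, so $\alpha$ appears in no decomposition of such a $\sigma$, and your intended contradiction with the maximality of $p_{*,\mathrm{A}}$ never materialises. Fortunately you do not need this step where you placed it. From what you have already established, $p_{*,\mathrm{A}}\le q\le p_{*,\mathrm{B}}$: the first inequality because $q=\left(a\middle|\rho_{\mathrm{A}}\right)=p_{*,\mathrm{A}}+\left(1-p_{*,\mathrm{A}}\right)\left(a\middle|\sigma\right)\ge p_{*,\mathrm{A}}$, the second because $\beta$ is pure and occurs with weight $q$ in a convex decomposition of $\rho_{\mathrm{B}}$. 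The symmetric argument, starting from a maximum eigenstate of $\rho_{\mathrm{B}}$ and a pure effect sharp on it, gives $p_{*,\mathrm{B}}\le p_{*,\mathrm{A}}$. This already proves the first claim, and then $q$ is squeezed: $p_{*}\le q\le p_{*}$ forces $\left(a\middle|\rho_{\mathrm{A}}\right)=p_{*}$, hence $\left(1-p_{*}\right)\left(a\middle|\sigma\right)=0$, which is the ``moreover'' clause (for $p_{*}<1$; for $p_{*}=1$ the state is pure and there is nothing to prove). So the correct order of deduction is the reverse of yours: equality of the two maximum eigenvalues comes first, and $\left(a\middle|\sigma\right)=0$ is a consequence of it, not a premise.
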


\begin{proof}
The fact that both marginals have the same maximum eigenvalue was
proved in \cite[theorem 2, corollary 1]{QPL15}, and we will not report
the proof here for the sake of brevity.

Now, write $\rho_{\mathrm{A}}$ as
\begin{equation}
\rho_{\mathrm{A}}=p_{*}\alpha+\left(1-p_{*}\right)\sigma,\label{eq:rho*}
\end{equation}
where $\alpha$ is an eigenstate with maximum eigenvalue of $\rho$,
and $\sigma$ is possibly mixed. By \cite[proposition 11]{QPL15},
if $a$ is a pure effect such that $\left(a\middle|\alpha\right)=1$
we have $\left(a\middle|\rho_{\mathrm{A}}\right)=p_{*}$. Combining
this equality with eq.~\eqref{eq:rho*} we finally obtain 
\[
p_{*}=\left(a\middle|\rho_{\mathrm{A}}\right)=p_{*}+\left(1-p_{*}\right)\left(a\middle|\sigma\right),
\]
which implies $\left(a\middle|\sigma\right)=0$ (unless $p_{*}=1$,
but in this case the state $\rho_{\mathrm{A}}$ is pure). Hence, $\alpha$
and $\sigma$ are perfectly distinguishable with the test $\left\{ a,u_{\mathrm{A}}-a\right\} $.
\end{proof}
Now we have managed to decompose every given state into a mixture
of two perfectly distinguishable states. The probability balance has
a lot of other consequences. The first is that every non-trivial system
has at least two perfectly distinguishable pure states. To prove it,
however, first of all, we must note that for the invariant state $\chi$,
due to its invariance under the action of reversible channels, every
pure state is an eigenstate with maximum eigenvalue. Indeed, if $\chi$
is decomposed as 
\[
\chi=p_{*}\alpha+\left(1-p_{*}\right)\sigma,
\]
it can also be decomposed as 
\[
\chi=p_{*}\mathcal{U}\alpha+\left(1-p_{*}\right)\mathcal{U}\sigma,
\]
where $\mathcal{U}$ is a reversible channel. Owing to transitivity,
every pure state $\alpha'$ can be obtained as $\mathcal{U}\alpha$
for some suitable reversible channel, meaning that every pure state
is an eigenstate with maximum eigenvalue.
\begin{cor}
If $\mathrm{A}\neq\mathrm{I}$, then every pure state of $\mathrm{A}$
is perfectly distinguishable from some other pure state.
\end{cor}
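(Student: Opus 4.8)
The plan is to reduce the statement to the invariant state $\chi$ and then apply Theorem~\ref{thm:probability balance} (the probability balance). The key observation, already noted in the excerpt just before the corollary, is that for the invariant state $\chi$ of system $\mathrm{A}$, \emph{every} pure state is an eigenstate with maximum eigenvalue. So first I would take an arbitrary pure state $\psi\in\mathsf{PurSt}_{1}\left(\mathrm{A}\right)$ and decompose $\chi=p_{*}\psi+\left(1-p_{*}\right)\sigma$ for the maximum eigenvalue $p_{*}$ and some state $\sigma$. Applying the probability balance theorem to a purification $\Psi$ of $\chi$ (whose existence is guaranteed by Purification), I get that $\psi$ and $\sigma$ are perfectly distinguishable with the observation-test $\left\{a,u_{\mathrm{A}}-a\right\}$, where $a$ is any pure effect with $\left(a\middle|\psi\right)=1$ (such $a$ exists by Proposition~\ref{prop:identifyingeffect}).

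The remaining work is to extract from $\sigma$ an actual \emph{pure} state $\psi'$ that is still perfectly distinguishable from $\psi$. For this I would take any pure state $\psi'$ contained in a convex decomposition of $\sigma$, i.e.\ write $\sigma = q\psi' + \left(1-q\right)\tau$ with $q\in\left(0,1\right]$. Since $\left(a\middle|\sigma\right)=0$ and $a$ is an effect (so $\left(a\middle|\cdot\right)\geq 0$ on all states), we must have $\left(a\middle|\psi'\right)=0$. Then $\left(a\middle|\psi\right)=1$ and $\left(a\middle|\psi'\right)=0$, so the test $\left\{a,u_{\mathrm{A}}-a\right\}$ perfectly distinguishes $\psi$ and $\psi'$: indeed $\left(a\middle|\psi\right)=1$, $\left(u_{\mathrm{A}}-a\middle|\psi\right)=0$, $\left(a\middle|\psi'\right)=0$, $\left(u_{\mathrm{A}}-a\middle|\psi'\right)=1$.

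I still need $p_{*}<1$ (so that $\sigma$ genuinely appears and is a normalised state on which to apply the above), which is where the hypothesis $\mathrm{A}\neq\mathrm{I}$ enters. If $p_{*}=1$ then $\chi=\psi$ would be pure, and by transitivity every pure state would equal $\chi$, forcing $\mathsf{PurSt}_{1}\left(\mathrm{A}\right)$ to be a singleton. A one-dimensional (normalised) state space means the system carries no information — it is operationally the trivial system — contradicting $\mathrm{A}\neq\mathrm{I}$. (Concretely: if there is a unique normalised state, then by the no-restriction considerations or simply because effects separate states, $\dim\mathsf{St}_{\mathbb{R}}\left(\mathrm{A}\right)=1$, so $\mathrm{A}\approx\mathrm{I}$.) I expect this last point — arguing cleanly that a single pure state forces $\mathrm{A}=\mathrm{I}$ — to be the only real subtlety; everything else is a direct assembly of the probability balance theorem and Proposition~\ref{prop:identifyingeffect}. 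Finally, since $\psi$ was an arbitrary pure state of $\mathrm{A}$, the conclusion follows for every pure state.
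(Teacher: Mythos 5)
Your proposal is correct and follows essentially the same route as the paper: apply the probability-balance theorem to the invariant state $\chi$, for which every pure state is an eigenstate with maximum eigenvalue, and then pass from $\sigma$ to a pure state contained in it via $\left(a\middle|\sigma\right)=0$. The only difference is that you spell out the extraction of the pure state $\psi'$ and the edge case $p_{*}=1$, which the paper leaves implicit.
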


\begin{proof}
The proof is an application of theorem~\ref{thm:probability balance}
to the case of the invariant state, and it has already appeared in
\cite[corollary 3]{QPL15}. Since for every pure state $\alpha$,
$\chi=p_{*}\alpha+\left(1-p_{*}\right)\sigma$, $\alpha$ is perfectly
distinguishable from $\sigma$, and from all the pure states contained
in $\sigma$. 
\end{proof}
It is quite remarkable that the existence of perfectly distinguishable
states pops out from the axioms, without being assumed from the start,
or without relying on mathematical assumptions such as the no-restriction
hypothesis (cf.\ proposition~\ref{prop:no-restriction distinguishable}).

Another consequence of the probability balance is the following.
\begin{cor}
\label{cor:pupstar}Let $\rho$ be a mixed state of system $\mathrm{A}$.
Then, the following are equivalent:
\begin{enumerate}
\item $\alpha$ is an eigenstate of $\rho$ with maximum eigenvalue $p_{*}$;
\item $\left(a\middle|\rho\right)=p_{*}$ for every pure effect $a$ such
that $\left(a\middle|\alpha\right)=1$.
\end{enumerate}
\end{cor}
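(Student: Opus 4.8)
The plan is to prove the two implications separately, relying on Theorem~\ref{thm:probability balance} and the machinery already developed around maximum eigenvalues.

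\textbf{From (1) to (2).} Suppose $\alpha$ is an eigenstate of $\rho$ with maximum eigenvalue $p_{*}$, so that $\rho = p_{*}\alpha + (1-p_{*})\sigma$ for some state $\sigma$. Since $\rho$ is mixed we have $p_{*} < 1$. Take any pure effect $a$ with $\left(a\middle|\alpha\right)=1$. The key input is the cited fact \cite[proposition 11]{QPL15} (already used inside the proof of Theorem~\ref{thm:probability balance}): if $a$ is a pure effect with $\left(a\middle|\alpha\right)=1$ and $\alpha$ is an eigenstate of $\rho$ with maximum eigenvalue $p_{*}$, then $\left(a\middle|\rho\right)=p_{*}$. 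This is exactly statement (2), so this direction is essentially a direct appeal to that proposition. If one wants to avoid quoting proposition 11 directly, one can instead invoke a purification $\Psi_{\mathrm{AB}}$ of $\rho$, use Pure Steering to realise the decomposition $\rho = p_{*}\alpha + (1-p_{*})\sigma$ via an observation-test on $\mathrm{B}$, and combine with the probability balance; but quoting the result is cleaner.

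\textbf{From (2) to (1).} Now suppose $\left(a\middle|\rho\right)=p_{*}$ for every pure effect $a$ with $\left(a\middle|\alpha\right)=1$, where $p_{*}$ is the maximum eigenvalue of $\rho$. First I would take a purification $\Psi \in \mathsf{PurSt}_{1}\left(\mathrm{AB}\right)$ of $\rho$. By Theorem~\ref{thm:probability balance}, the marginal $\rho_{\mathrm{B}}$ has the same maximum eigenvalue $p_{*}$, and decomposing $\rho_{\mathrm{A}}=\rho$ as $\rho = p_{*}\beta + (1-p_{*})\tau$ for an eigenstate $\beta$ realising the maximum yields that $\beta$ and $\tau$ are perfectly distinguished by $\left\{b, u_{\mathrm{A}}-b\right\}$ for any pure effect $b$ with $\left(b\middle|\beta\right)=1$. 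The task is to show we may take $\beta = \alpha$. The natural route: pick a pure effect $a$ with $\left(a\middle|\alpha\right)=1$ (this exists by Proposition~\ref{prop:identifyingeffect}); by hypothesis $\left(a\middle|\rho\right)=p_{*}$. Now use Pure Steering on the purification $\Psi$: since $\left(a\middle|\rho\right)$ equals the maximum eigenvalue, the state steered on $\mathrm{A}$ by the effect $a$ applied appropriately — or rather, I would argue that the preparation-test witnessing $\left(a\middle|\rho\right)=p_{*}$ together with the complement $u_{\mathrm{A}}-a$ gives a convex decomposition $\rho = p_{*}\rho' + (1-p_{*})\rho''$ where $\rho'$ is the normalised state obtained by conditioning on $a$. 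Because $\left(a\middle|\rho'\right)=1$ and $a$ is a pure effect, Proposition~\ref{prop:uniqueness of state} forces $\rho' = \alpha$. Hence $\rho = p_{*}\alpha + (1-p_{*})\rho''$ exhibits $\alpha$ as a pure state appearing in a convex decomposition of $\rho$ with weight exactly $p_{*}$, the maximum eigenvalue; thus $\alpha$ is an eigenstate of $\rho$ with maximum eigenvalue, which is (1).

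\textbf{Main obstacle.} The subtle point in the $(2)\Rightarrow(1)$ direction is legitimising the step ``$\left(a\middle|\rho\right)=p_{*}$ implies there is a genuine convex decomposition $\rho = \left(a\middle|\rho\right)\rho' + \text{rest}$ with $\rho'$ the post-selected state''. In an abstract OPT one does not automatically have ``post-selection'' producing a normalised state; one must produce an honest preparation-test. I would handle this through Purification plus Pure Steering: take a purification $\Psi_{\mathrm{AB}}$ of $\rho$, and since $a$ is a (sub-unit) effect with $u_{\mathrm{A}}-a$ also an effect, the test $\left\{a, u_{\mathrm{A}}-a\right\}$ acting on system $\mathrm{A}$ of $\Psi$ — read off via Physicalisation of Readout if needed — yields a preparation-test $\left\{\left(a\middle|\rho\right)\rho', \left(1-\left(a\middle|\rho\right)\right)\rho''\right\}$ on system $\mathrm{B}$, whose coarse-graining is $\rho_{\mathrm{B}}$; then transfer back to $\mathrm{A}$. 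Actually the cleanest formulation applies Pure Steering directly on $\mathrm{A}$: one wants a decomposition of $\rho$ itself, and the pair of effects $\left\{a, u_{\mathrm{A}}-a\right\}$ on $\mathrm{A}$ does not automatically decompose $\rho$ — it decomposes the marginal on the \emph{purifying} side. So the careful bookkeeping of which system carries the decomposition is where I expect to spend the most care; but the statement is symmetric enough (Theorem~\ref{thm:probability balance}) that it goes through. Modulo that, the argument is short.
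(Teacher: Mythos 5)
Your direction $(1)\Rightarrow(2)$ matches the paper exactly: it is a direct appeal to the cited proposition. The problem is in $(2)\Rightarrow(1)$, where you have correctly located the obstacle but not actually overcome it. Your argument hinges on the assertion that the conditional state $\rho'$ satisfies $\left(a\middle|\rho'\right)=1$, from which uniqueness of the state identified by a pure effect would give $\rho'=\alpha$. But this probability-one claim is never established, and it does not follow from what you have. If you merely know that $\rho=p_{*}\rho'+\left(1-p_{*}\right)\rho''$ is some convex decomposition and that $\left(a\middle|\rho\right)=p_{*}$, then writing $x=\left(a\middle|\rho'\right)$ and $y=\left(a\middle|\rho''\right)$ only gives $p_{*}=p_{*}x+\left(1-p_{*}\right)y$, which has many solutions (e.g.\ $x=y=p_{*}$); nothing forces $x=1$. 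Moreover, as you yourself note, the test $\left\{ a,u-a\right\} $ applied to the $\mathrm{A}$ side of a purification decomposes the marginal on $\mathrm{B}$, not $\rho$ itself, so the object $\rho'$ as a state of $\mathrm{A}$ is not even canonically defined; and without identifying it with $\alpha$ you have only shown that \emph{some} pure state appears with weight $p_{*}$ in a decomposition of $\rho$, which is true by the definition of $p_{*}$ and says nothing about $\alpha$.

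The paper closes this gap with a genuine round trip that your sketch gestures at but does not execute. Applying $a$ to the $\mathrm{A}$ side of the purification $\Psi$ steers a \emph{pure} state $\beta$ on $\mathrm{B}$ with weight $q=\left(a\middle|\rho\right)=p_{*}$; by the probability balance theorem $p_{*}$ is also the maximum eigenvalue of $\widetilde{\rho}=\mathrm{tr}_{\mathrm{A}}\Psi$, so $\beta$ is an eigenstate of $\widetilde{\rho}$ with maximum eigenvalue. One then invokes the already-proved implication $1\Rightarrow2$ \emph{for $\beta$ and $\widetilde{\rho}$} to get a pure effect $b$ with $\left(b\middle|\beta\right)=1$ and $\left(b\middle|\widetilde{\rho}\right)=p_{*}$, which steers back a pure state $\alpha'$ on $\mathrm{A}$ appearing in a decomposition of $\rho$ with weight $p_{*}$ — so $\alpha'$ \emph{is} an eigenstate with maximum eigenvalue. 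The final, essential step is the cross-term computation $\left(a\middle|\alpha'\right)=\frac{1}{p_{*}}\left(a\otimes b\middle|\Psi\right)=\left(b\middle|\beta\right)=1$, which by uniqueness of the state associated with a pure effect gives $\alpha'=\alpha$. It is precisely this joint-probability identity that substitutes for your unjustified claim $\left(a\middle|\rho'\right)=1$; without it the argument does not go through.
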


\begin{proof}
By \cite[proposition 11]{QPL15}, we already know that $1\Rightarrow2$.
Let us prove the converse implication $2\Rightarrow1$. Suppose that
$\left(a\middle|\rho\right)=p_{*}$, where $a$ is a pure effect such
that $\left(a\middle|\alpha\right)=1$ on some pure state $\alpha$.
Let us show that $\alpha$ is an eigenstate of $\rho$ with maximum
eigenvalue $p_{*}$. Now, for every purification of $\rho$, say $\Psi\in\mathsf{PurSt}_{1}\left(\mathrm{AB}\right)$,
one has\begin{equation}\label{11} \begin{aligned}\Qcircuit @C=1em @R=.7em @!R { & \multiprepareC{1}{\Psi} & \qw \poloFantasmaCn{\rA} & \measureD{a} \\ & \pureghost{\Psi} & \qw \poloFantasmaCn{\rB} & \qw }\end{aligned}~=q\!\!\!\!\begin{aligned}\Qcircuit @C=1em @R=.7em @!R { & \prepareC{\beta} & \qw \poloFantasmaCn{\rB} & \qw }\end{aligned} ~, \end{equation}where
$\beta$ is a normalised state, pure by Purity Preservation. We have\[ q=q\!\!\!\!\begin{aligned}\Qcircuit @C=1em @R=.7em @!R { & \prepareC{\beta} & \qw \poloFantasmaCn{\rB} & \measureD{u} }\end{aligned} ~=\!\!\!\!\begin{aligned}\Qcircuit @C=1em @R=.7em @!R { & \multiprepareC{1}{\Psi} & \qw \poloFantasmaCn{\rA} & \measureD{a} \\ & \pureghost{\Psi} & \qw \poloFantasmaCn{\rB} & \measureD{u} }\end{aligned}~=\!\!\!\!\begin{aligned}\Qcircuit @C=1em @R=.7em @!R { & \prepareC{\rho} & \qw \poloFantasmaCn{\rA} & \measureD{a} }\end{aligned}~=p_{*}~. \]Hence
eq.~\eqref{11} becomes\begin{equation}\label{12} \begin{aligned}\Qcircuit @C=1em @R=.7em @!R { & \multiprepareC{1}{\Psi} & \qw \poloFantasmaCn{\rA} & \measureD{a} \\ & \pureghost{\Psi} & \qw \poloFantasmaCn{\rB} & \qw }\end{aligned}~=p_{*}\!\!\!\!\begin{aligned}\Qcircuit @C=1em @R=.7em @!R { & \prepareC{\beta} & \qw \poloFantasmaCn{\rB} & \qw }\end{aligned} ~. \end{equation}This
condition implies that $\beta$ is an eigenvector of the marginal
state $\widetilde{\rho}=\mathrm{tr}_{\mathrm{A}}\Psi$. The implication
$1\Rightarrow2$ guarantees that $\left(b\middle|\widetilde{\rho}\right)=p_{*}$,
for every pure effect $b$ such that $\left(b\middle|\beta\right)=1$.
The last condition implies an equation very similar to eq.~\eqref{12}:\begin{equation}\label{22} \begin{aligned}\Qcircuit @C=1em @R=.7em @!R { & \multiprepareC{1}{\Psi} & \qw \poloFantasmaCn{\rA} & \qw \\ & \pureghost{\Psi} & \qw \poloFantasmaCn{\rB} & \measureD{b} }\end{aligned}~=p_{*}\!\!\!\!\begin{aligned}\Qcircuit @C=1em @R=.7em @!R { & \prepareC{\alpha'} & \qw \poloFantasmaCn{\rA} & \qw }\end{aligned} ~. \end{equation}for
some pure state $\alpha'$. Hence, $\alpha'$ is an eigenstate of
$\rho$ with eigenvalue $p_{*}$. To conclude the proof, it is enough
to observe that $\alpha'=\alpha$. Indeed, combining eqs.~\eqref{12}
and \eqref{22} we have\[ \left( a \middle| \alpha'\right) = \frac{1}{p_*} \!\!\!\!\begin{aligned}\Qcircuit @C=1em @R=.7em @!R { & \multiprepareC{1}{\Psi} & \qw \poloFantasmaCn{\rA} & \measureD{a} \\ & \pureghost{\Psi} & \qw \poloFantasmaCn{\rB} & \measureD{b} }\end{aligned}~= \left( b\middle|\beta\right) = 1 , \]which
implies $\alpha'=\alpha$ by proposition~\ref{prop:uniqueness of state}.
\end{proof}
For every state we can also define another probability 
\begin{equation}
p^{*}:=\sup_{a\in\mathsf{PurEff}\left(\mathrm{A}\right)}\left(a\middle|\rho\right).\label{eq:p^*}
\end{equation}
We will see soon that the supremum in the definition of $p^{*}$ is
in fact a maximum. Note that for pure states $p^{*}=1$, thanks to
proposition~\ref{prop:identifyingeffect}. 

By corollary~\ref{cor:pupstar}, for every state one has the bound
$p_{*}\le p^{*}$. We can in fact prove that one has the equality.
\begin{prop}
\label{prop:pstar=00003Dpstar}For every state $\rho\in\mathsf{St}_{1}\left(\mathrm{A}\right)$
one has $p^{*}=p_{*}$.
\end{prop}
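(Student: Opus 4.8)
The plan is to prove the reverse inequality $p^{*} \le p_{*}$, since the bound $p_{*} \le p^{*}$ is already available from corollary~\ref{cor:pupstar}. The key idea is to use a purification together with the probability balance theorem~\ref{thm:probability balance} to transfer the supremum defining $p^{*}$ into a statement about a maximum eigenvalue on the purifying system.

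\textbf{Step 1: Reduce to pure effects attaining their value on a pure state.}
Let $\rho \in \mathsf{St}_{1}(\mathrm{A})$ and take any pure effect $a \in \mathsf{PurEff}(\mathrm{A})$. Without loss of generality $a$ is normalised (rescaling a pure effect by a scalar in $(0,1]$ only decreases $(a|\rho)$, so the supremum in \eqref{eq:p^*} is unchanged if we restrict to $\mathsf{PurEff}_{1}(\mathrm{A})$). By proposition~\ref{prop:uniqueness of state} there is a unique pure state $\alpha$ with $(a|\alpha) = 1$. The goal is to show $(a|\rho) \le p_{*}$ for every such pair, which gives $p^{*} \le p_{*}$ and hence equality; along the way this also shows the supremum is attained.

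\textbf{Step 2: Purify and steer.}
Let $\Psi \in \mathsf{PurSt}_{1}(\mathrm{AB})$ be a purification of $\rho$. Applying $a$ to the $\mathrm{A}$-side of $\Psi$ yields an unnormalised state on $\mathrm{B}$; by Purity Preservation it is $q\beta$ for a pure state $\beta \in \mathsf{PurSt}_{1}(\mathrm{B})$ and $q = (a|\rho)$ (the normalisation computation is exactly the one appearing in the proof of corollary~\ref{cor:pupstar}). So we have the identity
\[
\begin{aligned}\Qcircuit @C=1em @R=.7em @!R { & \multiprepareC{1}{\Psi} & \qw \poloFantasmaCn{\rA} & \measureD{a} \\ & \pureghost{\Psi} & \qw \poloFantasmaCn{\rB} & \qw }\end{aligned}~=~q\!\!\!\!\begin{aligned}\Qcircuit @C=1em @R=.7em @!R { & \prepareC{\beta} & \qw \poloFantasmaCn{\rB} & \qw }\end{aligned}~.
\]
By Pure Steering, $\beta$ is contained in the marginal $\widetilde{\rho} = \mathrm{tr}_{\mathrm{A}}\Psi$ with weight $q$, i.e.\ $q \le p_{*,\mathrm{B}}(\widetilde{\rho})$ by the very definition of the maximum eigenvalue of $\widetilde{\rho}$ as a maximum over pure states in convex decompositions.

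\textbf{Step 3: Invoke the probability balance.}
By theorem~\ref{thm:probability balance}, the two marginals $\rho$ and $\widetilde{\rho}$ of the pure state $\Psi$ share the same maximum eigenvalue, so $p_{*,\mathrm{B}}(\widetilde{\rho}) = p_{*,\mathrm{A}}(\rho) = p_{*}$. Combining with Step 2, $(a|\rho) = q \le p_{*}$. Since $a$ was an arbitrary (normalised) pure effect, taking the supremum gives $p^{*} \le p_{*}$. Together with $p_{*} \le p^{*}$ this yields $p^{*} = p_{*}$, and moreover the supremum is achieved: taking $a$ to be a pure effect with $(a|\alpha) = 1$ where $\alpha$ is an eigenstate of $\rho$ with maximum eigenvalue, corollary~\ref{cor:pupstar} gives $(a|\rho) = p_{*} = p^{*}$.

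\textbf{Expected main obstacle.}
The routine steps (normalisation of $a$, reduction to normalised pure effects, the scalar bookkeeping $q = (a|\rho)$) are straightforward. The genuine content is entirely borrowed: the inequality $q \le p_{*,\mathrm{B}}(\widetilde{\rho})$ from Pure Steering and, crucially, the equality of maximum eigenvalues of the two marginals from theorem~\ref{thm:probability balance}. The main thing to be careful about is ensuring that the purifying system $\mathrm{B}$ is taken large enough for Pure Steering to apply cleanly, and that one is using "contained in" with the correct weight so that the definition of maximum eigenvalue of $\widetilde{\rho}$ directly bounds $q$ from above.
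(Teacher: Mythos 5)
Your proposal is correct and follows essentially the same route as the paper's own proof: purify $\rho$, apply the pure effect to obtain (by Purity Preservation) a pure state on the purifying system with weight $(a|\rho)$ in a convex decomposition of the other marginal, bound that weight by the maximum eigenvalue of that marginal, and identify the latter with $p_{*}$ via theorem~\ref{thm:probability balance}. The only differences are cosmetic: you make the appeal to the probability balance explicit where the paper leaves it implicit, and your Step 1 reduction to normalised effects is unnecessary since the bound works for arbitrary pure effects.
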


\begin{proof}
It is enough to show that $p^{*}\le p_{*}$. Pick a pure effect $a\in\mathsf{PurEff}\left(\mathrm{A}\right)$
such that $\left(a\middle|\rho\right)\neq0$. Such a pure effect exists
because any pure effect $a$ such that $\left(a\middle|\alpha\right)=1$,
where $\alpha$ is an eigenstate of $\rho$ with maximum eigenvalue
$p_{*}$, has the property $\left(a\middle|\rho\right)=p_{*}\neq0$.
Now consider a purification $\Psi\in\mathsf{PurSt}_{1}\left(\mathrm{AB}\right)$
of $\rho$, and define the pure state $\beta$ as\[ p\!\!\!\!\begin{aligned}\Qcircuit @C=1em @R=.7em @!R { & \prepareC{\beta} & \qw \poloFantasmaCn{\rB} & \qw }\end{aligned}~=\!\!\!\!\begin{aligned}\Qcircuit @C=1em @R=.7em @!R { & \multiprepareC{1}{\Psi} & \qw \poloFantasmaCn{\rA} & \measureD{a} \\ & \pureghost{\Psi} & \qw \poloFantasmaCn{\rB} & \qw }\end{aligned}~. \]Note
that $p$ is non-vanishing because it is given by $p=\left(a\middle|\rho\right)$.
So $\beta$ arises in a convex decomposition of the marginal of $\Psi$
on $\mathrm{B}$ with probability $p$. By construction $p\leq p_{*}$,
namely $\left(a\middle|\rho\right)\leq p_{*}$. Taking the supremum
over $a$, we finally obtain $p^{*}\le p_{*}$, thus proving that
$p^{*}=p_{*}$.
\end{proof}
As a consequence we have that $p^{*}$ is achieved by applying any
pure effect $a$ such that $\left(a\middle|\alpha\right)=1$ to $\rho$,
where $\alpha$ is an eigenstate of $\rho$ with maximum eigenvalue.
Therefore, the supremum in the definition of $p^{*}$ in eq.~\eqref{eq:p^*}
is in fact a maximum.

\subsubsection{Probability balance for purifications of the invariant state}

Recall that the invariant state $\chi\in\mathsf{St}_{1}\left(\mathrm{A}\right)$
can be written as 
\[
\chi=p_{*}\alpha+\left(1-p_{*}\right)\sigma
\]
for every \emph{pure} state $\alpha\in\mathsf{PurSt}_{1}\left(\mathrm{A}\right)$,
where $\sigma$ is a suitable state. In addition, we have the following
proposition.
\begin{prop}
\label{prop:steerpstar}Let $\Phi\in\mathsf{PurSt}_{1}\left(\mathrm{AB}\right)$
be a purification of the invariant state $\chi_{\mathrm{A}}$. Then,
for every pure state $\alpha\in\mathsf{PurSt}_{1}\left(\mathrm{A}\right)$
there exists a pure effect $b\in\mathsf{PurEff}\left(\mathrm{B}\right)$
such that\[ \begin{aligned}\Qcircuit @C=1em @R=.7em @!R { & \multiprepareC{1}{\Phi} & \qw \poloFantasmaCn{\rA} & \qw \\ & \pureghost{\Phi} & \qw \poloFantasmaCn{\rB} & \measureD{b} }\end{aligned}~=p_{*}\!\!\!\!\begin{aligned}\Qcircuit @C=1em @R=.7em @!R { & \prepareC{\alpha} & \qw \poloFantasmaCn{\rA} & \qw }\end{aligned} ~, \]where
$p_{*}$ is the maximum eigenvalue of $\chi_{\mathrm{A}}$. As a consequence,
every normalised pure effect $a\in\mathsf{PurEff}_{1}\left(\mathrm{A}\right)$
satisfies the condition $\left(a\middle|\chi_{A}\right)=p_{*}$.
\end{prop}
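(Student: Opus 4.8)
The statement has two parts. First, given a purification $\Phi \in \mathsf{PurSt}_1(\mathrm{AB})$ of the invariant state $\chi_{\mathrm{A}}$ and an arbitrary pure state $\alpha \in \mathsf{PurSt}_1(\mathrm{A})$, I must produce a pure effect $b$ on $\mathrm{B}$ that steers $\chi_{\mathrm{A}}$ to $p_* \alpha$ (up to the weight $p_*$). Second, I must deduce that every normalised pure effect $a$ on $\mathrm{A}$ satisfies $(a \mid \chi_{\mathrm{A}}) = p_*$.

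For the first part, the plan is to invoke Pure Steering (the steering theorem stated earlier). Recall that, because $\chi_{\mathrm{A}}$ is the invariant state, \emph{every} pure state $\alpha$ is an eigenstate of $\chi_{\mathrm{A}}$ with maximum eigenvalue $p_*$ — this was established in the excerpt just before Corollary~\ref{cor:pupstar} using transitivity. Hence $\alpha$ is contained in $\chi_{\mathrm{A}}$ with weight $p_*$, i.e.\ $\chi_{\mathrm{A}} = p_* \alpha + (1-p_*)\sigma$ for some state $\sigma$. By Pure Steering applied to the purification $\Phi$ of $\chi_{\mathrm{A}}$, there exist an effect $b_\alpha$ on $\mathrm{B}$ and a non-zero probability $q$ with
\[
q\,\alpha \;=\; (\mathcal{I}_{\mathrm{A}} \otimes b_\alpha)\,\Phi .
\]
Taking the deterministic effect on $\mathrm{A}$ on both sides gives $q = (b_\alpha \mid \mathrm{tr}_{\mathrm{A}}\Phi)$; but I then need to pin down $q = p_*$ and upgrade $b_\alpha$ to a \emph{pure} effect. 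For the purity of $b$: one can follow the pattern used in the proof of Pure Steering itself, writing $b_\alpha$ via a reversible channel $\mathcal{U}$ acting on $\mathrm{B}$ together with ancillary pure states and a pure effect on an ancilla (using Physicalisation of Readout and Purification), and then bundling everything into a single pure effect $b$ by Purity Preservation. For the value $q = p_*$: since $\alpha$ is pure, applying any pure effect $a$ with $(a \mid \alpha) = 1$ to the steered equation yields $q = q(a\mid\alpha) = (a \mid (\mathcal{I}_{\mathrm{A}}\otimes b)\Phi)$, and the marginal of $(\mathcal{I}_{\mathrm{A}}\otimes b)\Phi$ onto nothing is a sub-probability; comparing with the fact that $\alpha$ enters $\chi_{\mathrm{A}}$ with weight exactly $p_*$ (maximality), together with Proposition~\ref{prop:pstar=00003Dpstar}, forces $q = p_*$. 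Actually the cleanest route: $q\alpha = (\mathcal I_{\mathrm A}\otimes b)\Phi$ means $\alpha$ is contained in the $\mathrm A$-marginal of $\Phi$, which is $\chi_{\mathrm A}$, with weight $q$; since $p_*$ is the \emph{maximum} weight and $\alpha$ already achieves it, and since the steering equation exhibits weight $q$, we get $q \le p_*$; to get $q = p_*$ I would instead start from the known decomposition $\chi_{\mathrm A} = p_*\alpha + (1-p_*)\sigma$ and apply Pure Steering in the direction that produces precisely weight $p_*$ (Pure Steering gives, for each $\sigma'$ contained in $\rho$ with a \emph{specified} weight $p$, an effect realising exactly that weight — re-reading the statement, it yields $p\,\sigma = (\mathcal I\otimes b_\sigma)\Psi$ with the \emph{same} $p$ from the containment). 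So feeding in the containment $\chi_{\mathrm A} = p_*\alpha + (1-p_*)\sigma$ directly gives $b$ with $p_*\alpha = (\mathcal I_{\mathrm A}\otimes b)\Phi$.

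For the second part: fix a normalised pure effect $a \in \mathsf{PurEff}_1(\mathrm{A})$. By Proposition~\ref{prop:uniqueness of state} there is a pure state $\alpha$ with $(a \mid \alpha) = 1$, and $\alpha$ is unique with this property. Apply part one to get a pure effect $b$ on $\mathrm{B}$ with $(\mathcal{I}_{\mathrm{A}} \otimes b)\Phi = p_* \alpha$. Then
\[
(a \mid \chi_{\mathrm{A}}) \;=\; \bigl(a \otimes u_{\mathrm B} \,\big|\, \Phi\bigr) \;=\; \bigl(a \,\big|\, \mathrm{tr}_{\mathrm B}\Phi\bigr),
\]
and inserting a resolution via $b$ — more directly, compute $(a\otimes b \mid \Phi)$ two ways: contracting $b$ first gives $(a \mid p_*\alpha) = p_*(a\mid\alpha) = p_*$; contracting $a$ first gives $(b \mid (\text{some sub-state of }\mathrm B)) \le (b\mid \mathrm{tr}_{\mathrm A}\Phi)$. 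Hmm — to land on exactly $(a\mid\chi_{\mathrm A})$ I should instead note that since $0 \le b \le u_{\mathrm B}$, we have $(a\otimes b\mid\Phi) \le (a\otimes u_{\mathrm B}\mid\Phi) = (a\mid\chi_{\mathrm A})$, giving $p_* \le (a\mid\chi_{\mathrm A})$; and the reverse inequality $(a\mid\chi_{\mathrm A}) \le p^* = p_*$ is immediate from the definition~\eqref{eq:p^*} of $p^*$ and Proposition~\ref{prop:pstar=00003Dpstar}. Hence $(a\mid\chi_{\mathrm A}) = p_*$.

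\textbf{Main obstacle.} The delicate point is the upgrade of the steering effect to a genuinely \emph{pure} effect on $\mathrm{B}$ with exactly the right weight $p_*$; the bare Pure Steering statement gives an effect $b_\sigma$ but one must track through its construction (via Physicalisation of Readout, Purification of the intermediate state, and Purity Preservation to collapse the ancillas and reversible channel into one pure effect) to ensure purity, much as in the proof of Pure Steering itself. Everything else — the identification of every pure state as a maximal-eigenvalue eigenstate of $\chi$, and the two-way contraction argument for the final equality — is routine given the earlier results (Propositions~\ref{prop:uniqueness of state}, \ref{prop:pstar=00003Dpstar}, Corollary~\ref{cor:pupstar}, and Theorem~\ref{thm:probability balance}).
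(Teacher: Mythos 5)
There is a genuine gap in the first part, precisely at the point you flag as the ``main obstacle'': the purity of the steering effect $b$. Your proposed fix --- feed the containment $\chi_{\mathrm A}=p_*\alpha+(1-p_*)\sigma$ into Pure Steering and then ``bundle everything into a single pure effect by Purity Preservation'' by tracking the construction --- does not go through. In the proof of Pure Steering, the effect $b_\sigma$ is assembled from the effect $c_\sigma$ supplied by Physicalisation of Readout and from deterministic effects $u$ on the ancillary systems; none of these is guaranteed to be pure (and $u$ is generically \emph{not} pure on a non-trivial system), so Purity Preservation cannot be invoked to conclude that the composite is pure. Nor can you decompose a generic effect into pure ones at this stage: the spectral/diagonalisation machinery for effects is developed \emph{after} the state--effect duality, which itself rests on this very proposition, so any such appeal would be circular.

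The paper's route constructs $b$ so that purity is automatic, and you should adopt it: pick a pure effect $a$ with $\left(a\middle|\alpha\right)=1$ (Proposition~\ref{prop:identifyingeffect}); since $\alpha$ is an eigenstate of $\chi_{\mathrm A}$ with maximum eigenvalue, applying $a$ to the $\mathrm A$-wire of $\Phi$ yields $p_*\beta$ with $\beta$ \emph{pure} by Purity Preservation (pure effect composed with pure state). By the probability-balance theorem, $\beta$ is a maximum-eigenvalue eigenstate of the $\mathrm B$-marginal, so choosing a pure effect $b$ with $\left(b\middle|\beta\right)=1$ and running the implication $1\Rightarrow2$ of Corollary~\ref{cor:pupstar} on the $\mathrm B$-side gives $\left(\mathcal{I}_{\mathrm A}\otimes b\right)\Phi=p_*\alpha'$ for some pure $\alpha'$; the cross term $\left(a\otimes b\middle|\Phi\right)=p_*$ forces $\left(a\middle|\alpha'\right)=1$, hence $\alpha'=\alpha$ by Proposition~\ref{prop:uniqueness of state}. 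Your second part is fine as written (the sandwich $p_*=\left(a\otimes b\middle|\Phi\right)\le\left(a\middle|\chi_{\mathrm A}\right)\le p^*=p_*$ is a valid, slightly different route from the paper's direct citation of Corollary~\ref{cor:pupstar}), but it inherits the dependence on a correctly established first part.
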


\begin{proof}
We know that every pure state $\alpha\in\mathsf{PurSt}_{1}\left(\mathrm{A}\right)$
is an eigenstate of $\chi_{\mathrm{A}}$. In the proof of corollary~\ref{cor:pupstar}
we saw that, if $a$ is a pure effect such that $\left(a\middle|\alpha\right)=1$,
and $\Phi\in\mathsf{PurSt}_{1}\left(\mathrm{AB}\right)$ is a purification
of $\chi_{\mathrm{A}}$, we have\[ \begin{aligned}\Qcircuit @C=1em @R=.7em @!R { & \multiprepareC{1}{\Phi} & \qw \poloFantasmaCn{\rA} & \measureD{a} \\ & \pureghost{\Phi} & \qw \poloFantasmaCn{\rB} & \qw }\end{aligned}~=p_{*}\!\!\!\!\begin{aligned}\Qcircuit @C=1em @R=.7em @!R { & \prepareC{\beta} & \qw \poloFantasmaCn{\rB} & \qw }\end{aligned} ~. \]In
the same proof we saw that, if $b$ is a pure effect such that $\left(b\middle|\beta\right)=1$,
we have\[ \begin{aligned}\Qcircuit @C=1em @R=.7em @!R { & \multiprepareC{1}{\Phi} & \qw \poloFantasmaCn{\rA} & \qw \\ & \pureghost{\Phi} & \qw \poloFantasmaCn{\rB} & \measureD{b} }\end{aligned}~=p_{*}\!\!\!\!\begin{aligned}\Qcircuit @C=1em @R=.7em @!R { & \prepareC{\alpha} & \qw \poloFantasmaCn{\rA} & \qw }\end{aligned} ~. \]

To conclude the proof, recall that there is a pure state $\alpha$
associated with every normalised pure effect $a$ such that $\left(a\middle|\alpha\right)=1$,
and that every pure state is an eigenstate of $\chi_{\mathrm{A}}$
with maximum eigenvalue. Then by corollary~\ref{cor:pupstar}, we
have $\left(a\middle|\chi_{\mathrm{A}}\right)=p_{*}$.
\end{proof}

\subsection{State-effect duality}

Using the results of the previous subsection, one can establish a
one-to-one correspondence between normalised pure states and normalised
pure effects. We refer to this correspondence as the \emph{dagger}
of states and effects. 
\begin{prop}
\label{prop:atmostoneeffect}For every pure state $\alpha\in\mathsf{PurSt}_{1}\left(\mathrm{A}\right)$
there is a unique (normalised) pure effect $a\in\mathsf{PurEff}_{1}\left(\mathrm{A}\right)$
such that $\left(a\middle|\alpha\right)=1$.
\end{prop}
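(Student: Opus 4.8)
The plan is to handle existence in a sentence using what is already proved, and then to spend all the effort on uniqueness, which is the real content. For existence, Proposition~\ref{prop:identifyingeffect} already supplies a pure effect $a$ with $\left(a\middle|\alpha\right)=1$; since every effect satisfies $\left(a\middle|\rho\right)\in\left[0,1\right]$ for all states, one has $\left\Vert a\right\Vert=\sup_{\rho}\left(a\middle|\rho\right)\le1$, while $\left(a\middle|\alpha\right)=1$ forces $\left\Vert a\right\Vert\ge1$; hence $a\in\mathsf{PurEff}_{1}\left(\mathrm{A}\right)$. The case $\mathrm{A}=\mathrm{I}$ is trivial, because the only pure state and the only pure effect are both the number $1$, so I would assume $\mathrm{A}\neq\mathrm{I}$ from here on.

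For uniqueness I would work with a purification of the invariant state. Let $\chi\in\mathsf{St}_{1}\left(\mathrm{A}\right)$ be the invariant state, which is internal by proposition~\ref{prop:uniqueness invariant} (transitivity holds in sharp theories with purification) and hence mixed, with maximum eigenvalue $p_{*}<1$; recall that \emph{every} pure state of $\mathrm{A}$ is an eigenstate of $\chi$ with maximum eigenvalue. Fix a purification $\Phi\in\mathsf{PurSt}_{1}\left(\mathrm{AB}\right)$ of $\chi$. Suppose $a,a'\in\mathsf{PurEff}_{1}\left(\mathrm{A}\right)$ both satisfy $\left(a\middle|\alpha\right)=\left(a'\middle|\alpha\right)=1$. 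Since $\alpha$ is an eigenstate of $\chi$ with maximum eigenvalue and $\left(a\middle|\alpha\right)=1$, corollary~\ref{cor:pupstar} gives $\left(a\middle|\chi\right)=p_{*}$, so steering $\Phi$ through $a$ produces a \emph{pure} state on $\mathrm{B}$ (pure by Purity Preservation): $\left(a\otimes\mathcal{I}_{\mathrm{B}}\right)\Phi=p_{*}\beta$, and likewise $\left(a'\otimes\mathcal{I}_{\mathrm{B}}\right)\Phi=p_{*}\beta'$ with $\beta,\beta'$ pure.

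The crux is to show $\beta=\beta'$, and I expect this to be the only nontrivial step. Choose, by Proposition~\ref{prop:identifyingeffect}, a pure effect $b$ with $\left(b\middle|\beta\right)=1$ (so $b$ is normalised). Following the argument in the proof of proposition~\ref{prop:steerpstar} (using the probability balance theorem~\ref{thm:probability balance} applied to the marginal $\mathrm{tr}_{\mathrm{A}}\Phi$, together with the identification $\alpha'=\alpha$ coming from proposition~\ref{prop:uniqueness of state}), steering $\Phi$ through $b$ on $\mathrm{B}$ returns exactly $p_{*}\alpha$ on $\mathrm{A}$, i.e.\ $\left(\mathcal{I}_{\mathrm{A}}\otimes b\right)\Phi=p_{*}\alpha$. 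Now evaluate $\Phi$ against $a'\otimes b$ in the two orders: applying $b$ first gives $\left(a'\middle|\left(\mathcal{I}_{\mathrm{A}}\otimes b\right)\Phi\right)=p_{*}\left(a'\middle|\alpha\right)=p_{*}$, whereas applying $a'$ first gives $\left(b\middle|\left(a'\otimes\mathcal{I}_{\mathrm{B}}\right)\Phi\right)=p_{*}\left(b\middle|\beta'\right)$. Hence $\left(b\middle|\beta'\right)=1$; since $b$ is a normalised pure effect, proposition~\ref{prop:uniqueness of state} says the state on which $b$ attains the value $1$ is unique, so $\beta'=\beta$.

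It remains to conclude. From $\beta=\beta'$ we get $\left(a\otimes\mathcal{I}_{\mathrm{B}}\right)\Phi=p_{*}\beta=\left(a'\otimes\mathcal{I}_{\mathrm{B}}\right)\Phi$. Because the marginal of $\Phi$ on $\mathrm{A}$ is the internal state $\chi$, proposition~\ref{prop:faithful} tells us that $\Phi$ is dynamically faithful on $\mathrm{A}$; taking the target system to be the trivial one, effects are a special case of transformations, so the equality $\left(a\otimes\mathcal{I}_{\mathrm{B}}\right)\Phi=\left(a'\otimes\mathcal{I}_{\mathrm{B}}\right)\Phi$ forces $a=a'$. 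Thus the pure effect with $\left(a\middle|\alpha\right)=1$ is unique, completing the proof. In short, everything except the identity $\beta=\beta'$ is bookkeeping with steering identities, and that identity is exactly the "dual" of Proposition~\ref{prop:uniqueness of state}.
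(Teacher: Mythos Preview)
Your proof is correct and follows essentially the same route as the paper's: purify the invariant state $\chi$ by $\Phi$, steer through $a$ and $a'$ to obtain pure states $\beta,\beta'$ on the purifying system, use a pure effect $b$ together with the steering identity $\left(\mathcal{I}_{\mathrm{A}}\otimes b\right)\Phi=p_{*}\alpha$ to force $\left(b\middle|\beta\right)=\left(b\middle|\beta'\right)=1$, apply proposition~\ref{prop:uniqueness of state} to get $\beta=\beta'$, and conclude $a=a'$ by dynamic faithfulness of $\Phi$. The only cosmetic difference is that the paper first invokes proposition~\ref{prop:steerpstar} to produce a $b$ with $\left(\mathcal{I}_{\mathrm{A}}\otimes b\right)\Phi=p_{*}\alpha$ and then deduces $\left(b\middle|\beta\right)=\left(b\middle|\beta'\right)=1$, whereas you first pick $b$ with $\left(b\middle|\beta\right)=1$ and then rederive the steering identity via the argument of corollary~\ref{cor:pupstar}; these are the same calculation read in opposite directions.
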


\begin{proof}
The proof is essentially the same as in \cite[theorem 8]{Chiribella-informational},
even though we are assuming fewer axioms. Suppose that $a$ and $a'$
are two pure effects such that $\left(a\middle|\alpha\right)=\left(a'\middle|\alpha\right)=1$.
Then, let $\Phi\in\mathsf{PurSt}_{1}\left(\mathrm{AB}\right)$ be
a purification of the invariant state $\chi_{\mathrm{A}}$. By proposition~\ref{prop:steerpstar},
there exists a pure effect $b$ such that\begin{equation}\label{uffissima} \begin{aligned}\Qcircuit @C=1em @R=.7em @!R { & \multiprepareC{1}{\Phi} & \qw \poloFantasmaCn{\rA} & \qw \\ & \pureghost{\Phi} & \qw \poloFantasmaCn{\rB} & \measureD{b} }\end{aligned}~=p_{*}\!\!\!\!\begin{aligned}\Qcircuit @C=1em @R=.7em @!R { & \prepareC{\alpha} & \qw \poloFantasmaCn{\rA} & \qw }\end{aligned} ~, \end{equation}and
the two effects $a$ and $a'$ must satisfy
\begin{equation}
\left(a\middle|\chi_{\mathrm{A}}\right)=p_{*}=\left(a'\middle|\chi_{\mathrm{A}}\right).\label{eq:achiaprimchi}
\end{equation}
Now, let us define the pure states $\beta$ and $\beta'$ through
the relations\[ \begin{aligned}\Qcircuit @C=1em @R=.7em @!R { & \multiprepareC{1}{\Phi} & \qw \poloFantasmaCn{\rA} & \measureD{a} \\ & \pureghost{\Phi} & \qw \poloFantasmaCn{\rB} & \qw }\end{aligned}~ =:q\!\!\!\!\begin{aligned}\Qcircuit @C=1em @R=.7em @!R { & \prepareC{\beta} & \qw \poloFantasmaCn{\rB} & \qw }\end{aligned} ~ , \] \[ \begin{aligned}\Qcircuit @C=1em @R=.7em @!R { & \multiprepareC{1}{\Phi} & \qw \poloFantasmaCn{\rA} & \measureD{a'} \\ & \pureghost{\Phi} & \qw \poloFantasmaCn{\rB} & \qw }\end{aligned}~ =:q'\!\!\!\!\begin{aligned}\Qcircuit @C=1em @R=.7em @!R { & \prepareC{\beta'} & \qw \poloFantasmaCn{\rB} & \qw }\end{aligned} ~, \]where
$q$ and $q'$ are suitable probabilities. By applying the deterministic
effect on both sides and using eq.~\eqref{eq:achiaprimchi} one obtains
the equality $q=p_{*}=q'$. Hence, eqs.~\eqref{uffissima} and \eqref{eq:achiaprimchi}
lead to the equalities \[ \left( b \middle| \beta\right) = \frac {1} {p_*}\!\!\!\! \begin{aligned}\Qcircuit @C=1em @R=.7em @!R { & \multiprepareC{1}{\Phi} & \qw \poloFantasmaCn{\rA} & \measureD{a} \\ & \pureghost{\Phi} & \qw \poloFantasmaCn{\rB} & \measureD{b} }\end{aligned} ~= \left( a|\alpha\right) = 1 \]
\[
\left( b | \beta'\right) = \frac {1} {p_*} \!\!\!\! \begin{aligned}\Qcircuit @C=1em @R=.7em @!R { & \multiprepareC{1}{\Phi} & \qw \poloFantasmaCn{\rA} & \measureD{a'} \\ & \pureghost{\Phi} & \qw \poloFantasmaCn{\rB} & \measureD{b} }\end{aligned} = \left( a'\middle|\alpha\right) = 1 . \] By proposition~\ref{prop:uniqueness of state} we conclude that
$\beta$ and $\beta'$ must be equal. Recalling the definitions of
$\beta$ and $\beta'$, we then obtain the relation \[ \begin{aligned}\Qcircuit @C=1em @R=.7em @!R { & \multiprepareC{1}{\Phi} & \qw \poloFantasmaCn{\rA} & \measureD{a} \\ & \pureghost{\Phi} & \qw \poloFantasmaCn{\rB} & \qw }\end{aligned}~ = \!\!\!\! \begin{aligned}\Qcircuit @C=1em @R=.7em @!R { & \multiprepareC{1}{\Phi} & \qw \poloFantasmaCn{\rA} & \measureD{a'} \\ & \pureghost{\Phi} & \qw \poloFantasmaCn{\rB} & \qw }\end{aligned}~, \]which
implies $a=a'$ because the state $\Phi$ is dynamically faithful
on system $\mathrm{A}$ (proposition~\ref{prop:faithful}).
\end{proof}
Putting propositions~\ref{prop:uniqueness of state}, \ref{prop:identifyingeffect},
and \ref{prop:atmostoneeffect} together we finally obtain the desired
result.
\begin{thm}[State-effect duality]
\label{thm:duality}For every system $\mathrm{A}$ there exists a
bijective correspondence between normalised pure states and normalised
pure effects, called the \emph{dagger} and denoted by $\dagger:\mathsf{PurSt}_{1}\left(\mathrm{A}\right)\to\mathsf{PurEff}_{1}\left(\mathrm{A}\right)$.
The dagger satisfies the condition $\left(\alpha^{\dagger}\middle|\alpha\right)=1$,
for every $\alpha\in\mathsf{PurSt}_{1}\left(\mathrm{A}\right)$.
\end{thm}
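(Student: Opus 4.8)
The statement is the culmination of the three propositions just proved, so the plan is simply to package them into a single map and check it is a bijection. First I would \emph{define} the dagger $\dagger\colon\mathsf{PurSt}_1(\mathrm{A})\to\mathsf{PurEff}_1(\mathrm{A})$ by sending a normalised pure state $\alpha$ to the pure effect $\alpha^\dagger$ determined by the condition $\left(\alpha^\dagger\middle|\alpha\right)=1$. To see that this is a well-defined function I would note that such an effect \emph{exists} by proposition~\ref{prop:identifyingeffect}, that it is automatically \emph{normalised} (from $\left(\alpha^\dagger\middle|\alpha\right)=1$ together with the fact that every effect has operational norm at most $1$ one gets $\left\Vert\alpha^\dagger\right\Vert=1$), and that it is the \emph{only} normalised pure effect with this property by proposition~\ref{prop:atmostoneeffect}. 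By construction $\dagger$ then satisfies the claimed identity $\left(\alpha^\dagger\middle|\alpha\right)=1$ for every $\alpha\in\mathsf{PurSt}_1(\mathrm{A})$.

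Next I would establish injectivity and surjectivity. For injectivity, if $\alpha^\dagger=(\alpha')^\dagger=:a$ then $\left(a\middle|\alpha\right)=1=\left(a\middle|\alpha'\right)$, and proposition~\ref{prop:uniqueness of state}, which states that the state on which a normalised pure effect takes value $1$ is unique, forces $\alpha=\alpha'$. For surjectivity, given an arbitrary normalised pure effect $a\in\mathsf{PurEff}_1(\mathrm{A})$, proposition~\ref{prop:uniqueness of state} supplies a pure state $\alpha$ with $\left(a\middle|\alpha\right)=1$; since $\left(a\middle|\alpha\right)\le\left\Vert\alpha\right\Vert\le1$ this $\alpha$ is normalised, so $\alpha\in\mathsf{PurSt}_1(\mathrm{A})$. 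Then $a$ and $\alpha^\dagger$ are both normalised pure effects taking value $1$ on $\alpha$, whence $a=\alpha^\dagger$ by the uniqueness clause of proposition~\ref{prop:atmostoneeffect}, exhibiting $a$ in the image of $\dagger$. Combining the three steps gives that $\dagger$ is a bijection with the asserted property.

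I do not anticipate any real obstacle here: all of the substantive work has already been carried out in propositions~\ref{prop:uniqueness of state}, \ref{prop:identifyingeffect}, and~\ref{prop:atmostoneeffect}, which in turn rest on the probability balance (theorem~\ref{thm:probability balance}) and on dynamical faithfulness of purifications of the invariant state. The only points requiring a little care are the bookkeeping about normalisation, namely verifying that the effect picked out by proposition~\ref{prop:identifyingeffect} and the state picked out by proposition~\ref{prop:uniqueness of state} indeed lie in the normalised sets $\mathsf{PurEff}_1(\mathrm{A})$ and $\mathsf{PurSt}_1(\mathrm{A})$, and checking that the assignment $\alpha\mapsto\alpha^\dagger$ is genuinely a function (independent of any auxiliary choice), both of which are immediate from the uniqueness statements quoted above.
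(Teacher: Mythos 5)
Your proposal is correct and follows exactly the route the paper takes: the paper's own proof is simply the observation that propositions~\ref{prop:uniqueness of state}, \ref{prop:identifyingeffect}, and \ref{prop:atmostoneeffect} combine to give the bijection, and your write-up just fills in the (correct) bookkeeping about well-definedness, normalisation, injectivity, and surjectivity.
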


Therefore, for every normalised pure state $\alpha$, $\alpha^{\dagger}$
denotes the associated pure effect such that $\left(\alpha^{\dagger}\middle|\alpha\right)=1$.
With a little abuse of notation we will denote the unique normalised
pure state associated with a normalised pure effect $a$ also with
a dagger. For example, if $a$ is a normalised pure effect, $a^{\dagger}$
is the unique normalised pure state such that $\left(a\middle|a^{\dagger}\right)=1$.

Note that here this duality is limited only to states and effects,
thus it is weaker than the dagger introduced in \cite{Selby-dagger},
however it was shown that it can be extended to all transformations
too \cite{HOI}.

An easy corollary of the state-effect duality is the following (cf.\ \cite[corollary 13]{Chiribella-informational}).
\begin{cor}[Transitivity on pure effects]
For every pair of pure normalised effects $a,a'\in\mathsf{PurEff}_{1}\left(\mathrm{A}\right)$,
there exists a reversible channel $\mathcal{U}$ on $\mathrm{A}$
such that $a'=a\mathcal{U}$.
\end{cor}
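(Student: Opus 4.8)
The statement to prove is the corollary on transitivity on pure effects: for every pair of normalised pure effects $a, a' \in \mathsf{PurEff}_{1}\left(\mathrm{A}\right)$, there is a reversible channel $\mathcal{U}$ on $\mathrm{A}$ with $a' = a\mathcal{U}$. The natural strategy is to transport the problem to states via the state-effect duality (theorem~\ref{thm:duality}) and then apply transitivity on pure states, which is already available.

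First I would invoke the dagger: let $\alpha := a^{\dagger}$ and $\alpha' := a'^{\dagger}$ be the unique normalised pure states such that $\left(a\middle|\alpha\right) = 1$ and $\left(a'\middle|\alpha'\right) = 1$. By Transitivity (the consequence of Purification stated just after the axioms, also recalled as the proposition ``Transitivity'' in section~\ref{sec:The-axioms-and}), there exists a reversible channel $\mathcal{V}$ on $\mathrm{A}$ such that $\alpha = \mathcal{V}\alpha'$. Now set $\mathcal{U} := \mathcal{V}^{-1}$, which is again a reversible channel on $\mathrm{A}$, so $\alpha' = \mathcal{U}\alpha$. Next I would check that $a\mathcal{U}^{-1}$ — equivalently $a\mathcal{V}$ — is a pure effect satisfying $\left(a\mathcal{V}\middle|\alpha'\right) = 1$: indeed $a\mathcal{V}$ is pure because the composition of the pure effect $a$ with the reversible (hence pure, by \cite[corollary 28]{Chiribella-purification}) channel $\mathcal{V}$ is pure by Purity Preservation, and it is normalised because reversible channels preserve the norm (proposition~\ref{prop:norm non-increasing}); moreover $\left(a\mathcal{V}\middle|\alpha'\right) = \left(a\middle|\mathcal{V}\alpha'\right) = \left(a\middle|\alpha\right) = 1$.

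Then the uniqueness half of the state-effect duality applies: by proposition~\ref{prop:atmostoneeffect}, for the pure state $\alpha'$ there is a \emph{unique} normalised pure effect that evaluates to $1$ on it, and that effect is $a'$ by definition of the dagger. Since both $a'$ and $a\mathcal{V}$ are normalised pure effects with value $1$ on $\alpha'$, we conclude $a' = a\mathcal{V} = a\mathcal{U}^{-1}$. Relabelling $\mathcal{U} \leftrightarrow \mathcal{U}^{-1}$ (or equivalently taking $\mathcal{U} := \mathcal{V}$ from the outset) gives $a' = a\mathcal{U}$ with $\mathcal{U}$ reversible, as required.

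There is no real obstacle here — the argument is a short bookkeeping exercise combining transitivity on pure states with the bijectivity of the dagger. The only point that needs a little care is confirming that $a\mathcal{V}$ is genuinely a \emph{normalised pure} effect so that the uniqueness clause of proposition~\ref{prop:atmostoneeffect} is applicable; this uses Purity Preservation together with the fact that reversible channels are pure and norm-preserving, all of which are already established in the excerpt.
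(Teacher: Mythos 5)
Your proof is correct and is exactly the argument the paper intends when it calls this "an easy corollary of the state-effect duality": transport the problem to states via the dagger, apply transitivity on pure states, and use the uniqueness clause of proposition~\ref{prop:atmostoneeffect} to identify $a\mathcal{V}$ with $a'$. (A small simplification: normalisation of $a\mathcal{V}$ follows immediately from $\left(a\mathcal{V}\middle|\alpha'\right)=1$, without needing to invoke norm preservation separately.)
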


In other words, reversible channels act transitively on the set of
normalised pure effects too.

\section{No disturbance without information}

In sharp theories with purification, one can construct transformations
that are ``minimally-disturbing'' \cite{Chiribella-informational,Wehner-Pfister,Chiribella-Yuan2015}.
Besides being important in their own respect, these transformations
will provide the crucial ingredient to prove the diagonalisation theorem
in section~\ref{sec:Diagonalisation-of-states}.

The core result is the following proposition, from which important
and useful corollaries follow. Note that a similar result, but under
different axioms, was proved in \cite{Colleagues}.
\begin{prop}
\label{prop:non-disturbing measurement}Let $a$ be an effect such
that $\left(a\middle|\rho\right)=1$, for some $\rho\in\mathsf{St}_{1}\left(\mathrm{A}\right)$.
Then there exists a \emph{pure} transformation $\mathcal{T}$ on $\mathrm{A}$
such that $\mathcal{T}=_{\rho}\mathcal{I}$, where $\mathcal{I}$
is the identity, and $\left(u\middle|\mathcal{T}\middle|\sigma\right)\le\left(a\middle|\sigma\right)$,
for every state $\sigma\in\mathsf{St}_{1}\left(\mathrm{A}\right)$.
\end{prop}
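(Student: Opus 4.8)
The plan is to exploit Purification to turn the condition $\left(a\middle|\rho\right)=1$ into a statement about a purification of $\rho$, and then use the dilation theorem for reversible channels together with Purity Preservation to build the desired transformation. First I would take a purification $\Psi\in\mathsf{PurSt}_{1}\left(\mathrm{AB}\right)$ of $\rho$. Since $\left(a\middle|\rho\right)=1$, applying $a$ on the $\mathrm{A}$-side of $\Psi$ yields (up to the unit probability) a pure state $\beta$ of $\mathrm{B}$: that is, $\left(a\otimes\mathcal{I}_{\mathrm{B}}\right)\Psi=\beta$ with $\beta$ pure by Purity Preservation. The key observation is that we now have a physical realisation of the effect $a$ via measurement-on-a-purification, and we want to ``undo'' the disturbance on the $\rho$-part.

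\textbf{Key steps.} The natural construction is as follows. Because $a$ is an effect with $\left(a\middle|\rho\right)=1$, and because $\rho$ is the marginal of $\Psi$, I would set up the transformation $\mathcal{T}$ on $\mathrm{A}$ as the composition: prepare an ancilla, apply a suitable reversible channel, and then apply the effect $a$ (or a pure effect associated with $a$) on part of the output while keeping the rest. More concretely, using Physicalisation of Readout one realises the binary test $\left\{ a, u-a\right\}$ as a reversible interaction $\mathcal{V}$ from $\mathrm{A}$ to $\mathrm{AX}$ followed by a pure effect on $\mathrm{X}$ that reads off the outcome ``$a$''. Define $\mathcal{T}$ to be: apply $\mathcal{V}$, then apply on $\mathrm{X}$ the pure effect corresponding to outcome $a$ (composed via the dagger with a suitable recovery). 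Since $\left(a\middle|\sigma'\right)=1$ for every $\sigma'$ contained in $\rho$ (because $\left(a\middle|\rho\right)=1$ implies $a=_{\rho}u$, as noted in the discussion around Definition~\ref{def:equality upon input} and in the restricted-trit example), this outcome occurs with certainty upon input of such $\sigma'$, so $\mathcal{T}\sigma'=\mathcal{I}\sigma'$ for all $\sigma'$ contained in $\rho$, giving $\mathcal{T}=_{\rho}\mathcal{I}$. Purity of $\mathcal{T}$ follows from Purity Preservation: $\mathcal{V}$ is reversible hence pure (reversible channels are pure in sharp theories with purification, as stated after Proposition~\ref{prop:dilation channels}), the ancilla preparation is pure, and the pure effect we compose with is pure; their composition is pure.

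\textbf{The bound.} The remaining claim is $\left(u\middle|\mathcal{T}\middle|\sigma\right)\le\left(a\middle|\sigma\right)$ for all states $\sigma$. This should follow because $\left(u\middle|\mathcal{T}\middle|\sigma\right)$ is precisely the probability of the outcome that corresponds to reading ``$a$'' in the Physicalisation-of-Readout realisation, and summing this with the complementary outcome's probability gives $\left(u\middle|\sigma\right)=1$; by construction the ``$a$''-outcome probability equals $\left(a\middle|\sigma\right)$ exactly, or is dominated by it if we have composed with a further pure effect in the recovery step. The cleanest route is: $u\mathcal{T}$ is an effect on $\mathrm{A}$ with $\left(u\mathcal{T}\middle|\rho\right)=1$, and one argues $u\mathcal{T}\le a$ as effects, i.e.\ $a-u\mathcal{T}$ is still an effect. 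This inequality I expect is exactly where the state-effect duality and the structure of the observation-test $\left\{ a, u-a\right\}$ enter.

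\textbf{Main obstacle.} The hard part will be the disturbance-free requirement $\mathcal{T}=_{\rho}\mathcal{I}$ combined with ensuring $\mathcal{T}$ is genuinely \emph{pure} and not merely a channel; one must choose the recovery step carefully so that on the support of $\rho$ the reversible interaction's back-action is exactly cancelled, which requires invoking Pure Steering to identify how $a$ acts through the purification and then using the dagger to pick the unique pure effect that inverts it. Getting the inequality $u\mathcal{T}\le a$ to come out with the correct direction — rather than $u\mathcal{T}\ge a$ or equality only — is the subtle point, and I would expect to need Proposition~\ref{prop:pstar=00003Dpstar} or Corollary~\ref{cor:pupstar} (relating $p^{*}$ and $p_{*}$) to control the probability of the ``recovery succeeds'' outcome on states $\sigma$ not contained in $\rho$.
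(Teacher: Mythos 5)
Your high-level plan --- purify $\rho$, realise $a$ through a pure transformation, and append a recovery step that cancels the back-action on the support of $\rho$ --- is the right one, and you correctly flag the recovery step as the crux. But the argument as written has a genuine gap exactly there. The inference ``$\left(a\middle|\sigma'\right)=1$ for every $\sigma'$ contained in $\rho$, so the outcome occurs with certainty, so $\mathcal{T}\sigma'=\sigma'$'' is false: an outcome can occur with probability $1$ and still disturb the state (in quantum theory $M_{a}=U\sqrt{A}$ realises the POVM element $A$ and maps $\sigma$ to $U\sigma U^{\dagger}\neq\sigma$ even when $\mathrm{tr}\left(A\sigma\right)=1$). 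Non-disturbance is precisely what must be engineered, and ``using the dagger to pick the unique pure effect that inverts it'' cannot do the job --- an effect cannot invert a transformation; you need a pure \emph{channel} $\mathcal{P}$ back onto $\mathrm{A}$. The mechanism the paper uses is the uniqueness clause of Purification: writing $a=u_{\mathrm{B}}\mathcal{A}$ with $\mathcal{A}\in\mathsf{Transf}\left(\mathrm{A},\mathrm{B}\right)$ pure, the condition $a=_{\rho}u_{\mathrm{A}}$ shows (via proposition~\ref{prop:purifications -> input}) that $\left(\mathcal{A}\otimes\mathcal{I}_{\mathrm{A}'}\right)\Psi$ and $\Psi$ are two pure states with the same marginal on the purifying system $\mathrm{A}'$; padding with fixed pure states $\alpha_{0}$ and $\beta_{0}$ so that the purifying systems match, they differ by a reversible channel $\mathcal{U}$ on $\mathrm{AB}$, and $\mathcal{P}:=\left(\beta_{0}^{\dagger}\otimes\mathcal{I}_{\mathrm{A}}\right)\mathcal{U}\left(\alpha_{0}\otimes\cdot\right)$ is the recovery. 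Then $\mathcal{T}:=\mathcal{P}\mathcal{A}$ is pure by Purity Preservation and $\mathcal{T}=_{\rho}\mathcal{I}$ again by proposition~\ref{prop:purifications -> input}. Your Physicalisation-of-Readout route also has a typing problem: the physicalisation of the observation-test $\left\{ a,u-a\right\} $ is a channel into the pointer system $\mathrm{X}$ alone, so it does not by itself hand you a transformation that lands back on $\mathrm{A}$; that is what $\mathcal{P}$ is for.

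The bound is also much simpler than you anticipate: no state-effect duality, no effect inequality $u\mathcal{T}\le a$, and no appeal to $p^{*}=p_{*}$ is needed. One writes
\[
\left(u_{\mathrm{A}}\middle|\mathcal{T}\middle|\sigma\right)=\left(u_{\mathrm{A}}\middle|\mathcal{P}\mathcal{A}\middle|\sigma\right)\le\left(u_{\mathrm{B}}\middle|\mathcal{A}\middle|\sigma\right)=\left(a\middle|\sigma\right),
\]
where the inequality is monotonicity of the operational norm under the transformation $\mathcal{P}$ (proposition~\ref{prop:norm non-increasing}) applied to the vector $\mathcal{A}\sigma$, and the last equality is just the decomposition $a=u_{\mathrm{B}}\mathcal{A}$.
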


\begin{proof}
The starting point of the proof is a result of \cite{Chiribella-informational},
which guarantees that every normalised effect $a\in\mathsf{Eff}_{1}\left(\mathrm{A}\right)$
can be written as 
\begin{equation}
a=u_{\mathrm{B}}\mathcal{A},\label{eq:effdecomp}
\end{equation}
where $\mathcal{A}$ is a \emph{pure} transformation from $\mathrm{A}$
to $\mathrm{B}$, and $\mathrm{B}$ is a suitable system. 

Now, let $\Psi\in\mathsf{PurSt}_{1}\left(\mathrm{AA}'\right)$ be
a purification of $\rho$. By eq.\ \eqref{eq:effdecomp}, we have\begin{equation}\label{eq:uno} \begin{aligned} \Qcircuit @C=1em @R=.7em @!R { & \multiprepareC{1}{\Psi} & \qw \poloFantasmaCn{\rA} & \gate{\mathcal A} & \qw \poloFantasmaCn{\rB} &\measureD{u} \\ & \pureghost{\Psi} & \qw \poloFantasmaCn{\rA'} & \qw &\qw &\qw }\end{aligned} ~= \!\!\!\! \begin{aligned}\Qcircuit @C=1em @R=.7em @!R { & \multiprepareC{1}{\Psi} & \qw \poloFantasmaCn{\rA} & \measureD{a} \\ & \pureghost{\Psi} & \qw \poloFantasmaCn{\rA'} & \qw }\end{aligned} \end{equation}Now,
since $\left(a\middle|\rho\right)=1$, we have $a=_{\rho}u_{\mathrm{A}}$.
Hence, proposition~\ref{prop:purifications -> input} implies\begin{equation}\label{eq:due} \begin{aligned}\Qcircuit @C=1em @R=.7em @!R { & \multiprepareC{1}{\Psi} & \qw \poloFantasmaCn{\rA} & \measureD{a} \\ & \pureghost{\Psi} & \qw \poloFantasmaCn{\rA'} & \qw }\end{aligned}~= \!\!\!\! \begin{aligned}\Qcircuit @C=1em @R=.7em @!R { & \multiprepareC{1}{\Psi} & \qw \poloFantasmaCn{\rA} & \measureD{u} \\ & \pureghost{\Psi} & \qw \poloFantasmaCn{\rA'} & \qw }\end{aligned}~. \end{equation}Combining
eqs.~\eqref{eq:uno} and \eqref{eq:due}, we obtain\[ \begin{aligned} \Qcircuit @C=1em @R=.7em @!R { & \multiprepareC{1}{\Psi} & \qw \poloFantasmaCn{\rA} & \gate{\mathcal A} & \qw \poloFantasmaCn{\rB} &\measureD{u} \\ & \pureghost{\Psi} & \qw \poloFantasmaCn{\rA'} & \qw &\qw &\qw }\end{aligned} ~=\!\!\!\! \begin{aligned}\Qcircuit @C=1em @R=.7em @!R { & \multiprepareC{1}{\Psi} & \qw \poloFantasmaCn{\rA} & \measureD{u} \\ & \pureghost{\Psi} & \qw \poloFantasmaCn{\rA'} & \qw }\end{aligned}~, \]meaning
that the two pure states $\left(\mathcal{A}\otimes\mathcal{I}_{\mathrm{A}'}\right)\Psi$
and $\Psi$ have the same marginal on system $\mathrm{A}'$. By the
uniqueness of purification, for fixed pure states $\alpha_{0}\in\mathsf{PurSt}_{1}\left(\mathrm{A}\right)$
and $\beta_{0}\in\mathsf{PurSt}_{1}\left(\mathrm{B}\right)$, there
must exist a reversible channel $\mathcal{U}$ on $\mathrm{AB}$,
such that\[ \begin{aligned} \Qcircuit @C=1em @R=.7em @!R { & \prepareC{\alpha_0} & \qw \poloFantasmaCn{\rA} & \qw & \qw & \multigate{1}{\mathcal U} & \qw \poloFantasmaCn{\rB} &\qw \\ & \multiprepareC{1}{\Psi} & \qw \poloFantasmaCn{\rA} & \gate{\mathcal A} & \qw \poloFantasmaCn{\rB} &\ghost{\mathcal U} & \qw \poloFantasmaCn{\rA} &\qw \\ & \pureghost{\Psi} & \qw \poloFantasmaCn{\rA'} & \qw &\qw &\qw &\qw &\qw} \end{aligned}~=\!\!\!\! \begin{aligned}\Qcircuit @C=1em @R=.7em @!R { & \prepareC{\beta_0} & \qw \poloFantasmaCn{\rB} & \qw \\ & \multiprepareC{1}{\Psi} & \qw \poloFantasmaCn{\rA} & \qw \\ & \pureghost{\Psi} & \qw \poloFantasmaCn{\rA'} & \qw }\end{aligned}~. \]By
applying $\beta_{0}^{\dagger}$ to both sides, we obtain\[ \begin{aligned} \Qcircuit @C=1em @R=.7em @!R { & \multiprepareC{1}{\Psi} & \qw \poloFantasmaCn{\rA} & \gate{\mathcal A} & \qw \poloFantasmaCn{\rB} & \gate{\mathcal P} & \qw \poloFantasmaCn{\rA} &\qw \\ & \pureghost{\Psi} & \qw \poloFantasmaCn{\rA'} & \qw &\qw &\qw &\qw &\qw}\end{aligned} ~= \!\!\!\! \begin{aligned}\Qcircuit @C=1em @R=.7em @!R { & \multiprepareC{1}{\Psi} & \qw \poloFantasmaCn{\rA} & \qw \\ & \pureghost{\Psi} & \qw \poloFantasmaCn{\rA'} & \qw }\end{aligned}~, \]where
$\mathcal{P}$ is the pure transformation defined as\[ \begin{aligned} \Qcircuit @C=1em @R=.7em @!R { & \qw \poloFantasmaCn{\rB} & \gate{\mathcal P} & \qw \poloFantasmaCn{\rA} &\qw }\end{aligned} ~:= \!\!\! \begin{aligned} \Qcircuit @C=1em @R=.7em @!R { & \prepareC{\alpha_0} & \qw \poloFantasmaCn{\rA} & \multigate{1}{\mathcal U} & \qw \poloFantasmaCn{\rB} &\measureD{\beta_0^{\dagger}} \\ && \qw \poloFantasmaCn{\rB} & \ghost{\mathcal U} & \qw \poloFantasmaCn{\rA} &\qw } \end{aligned}~. \]Let
us define the transformation $\mathcal{T}:=\mathcal{P}\mathcal{A}$,
which is pure by Purity Preservation. With this choice, we have\[ \begin{aligned} \Qcircuit @C=1em @R=.7em @!R { & \multiprepareC{1}{\Psi} & \qw \poloFantasmaCn{\rA} & \gate{\mathcal T} & \qw \poloFantasmaCn{\rA} &\qw \\ & \pureghost{\Psi} & \qw \poloFantasmaCn{\rA'} & \qw &\qw &\qw }\end{aligned} ~= \!\!\!\! \begin{aligned}\Qcircuit @C=1em @R=.7em @!R { & \multiprepareC{1}{\Psi} & \qw \poloFantasmaCn{\rA} & \qw \\ & \pureghost{\Psi} & \qw \poloFantasmaCn{\rA'} & \qw } \end{aligned}~, \]which
implies $\mathcal{T}=_{\rho}\mathcal{I}$ by proposition~\ref{prop:purifications -> input}.
Finally, for all states $\sigma\in\mathsf{St}\left(\mathrm{A}\right)$
we have the inequality 
\[
\left(u_{\mathrm{A}}\middle|\mathcal{T}\middle|\sigma\right)=\left(u_{\mathrm{A}}\middle|\mathcal{P}\mathcal{A}\middle|\sigma\right)\le\left(u_{\mathrm{B}}\middle|\mathcal{A}\middle|\sigma\right)=\left(a\middle|\sigma\right).
\]
Here, the inequality follows because all transformations are norm-non-increasing,
and in this case we are dealing with the norm of $\mathcal{A}\sigma$
under the action of the transformation $\mathcal{P}$. The last equality
follows from eq.~\eqref{eq:effdecomp}.
\end{proof}
Note that the pure transformation $\mathcal{T}$ is non-disturbing
on $\rho$ because it acts as the identity on $\rho$ and on all the
states contained in it. In other words, whenever we have a (possibly
mixed) effect occurring with unit probability on some state $\rho$,
we can always find a transformation that does not disturb $\rho$
(i.e.\ a non-disturbing non-demolition measurement). Being non-disturbing
means that $\mathcal{T}$ occurs with unit probability on all the
states contained in $\rho$. The other notable result of this proposition
is that the probability of $\mathcal{T}$ occurring on a generic state
$\sigma$ is less than or equal to the probability of the original
effect occurring on the same state.

The first consequence of proposition~\ref{prop:non-disturbing measurement}
is quite a technical result that is widely used in the rest of the
chapter and of the thesis.
\begin{cor}
\label{cor:conjecture proved}Let $\left\lbrace \alpha_{i}\right\rbrace _{i=1}^{n}$
be perfectly distinguishable \emph{pure} states. Then one always has
$\left(\alpha_{i}^{\dagger}\middle|\alpha_{j}\right)=\delta_{ij}$.
\end{cor}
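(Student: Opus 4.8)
The goal is to show that if $\{\alpha_i\}_{i=1}^n$ are perfectly distinguishable pure states, then the pure effects $\alpha_i^\dagger$ (supplied by the state-effect duality, Theorem~\ref{thm:duality}) automatically form a perfectly distinguishing test, i.e.\ $\left(\alpha_i^\dagger\middle|\alpha_j\right)=\delta_{ij}$. The ``diagonal'' part $\left(\alpha_i^\dagger\middle|\alpha_i\right)=1$ is immediate from the defining property of the dagger. So the entire content is the \emph{off-diagonal} vanishing: $\left(\alpha_i^\dagger\middle|\alpha_j\right)=0$ for $i\neq j$. This is \emph{not} obvious: perfect distinguishability only guarantees the \emph{existence} of \emph{some} observation-test $\{a_i\}$ with $\left(a_i\middle|\alpha_j\right)=\delta_{ij}$, and a priori the $a_i$ need not be pure, let alone equal to $\alpha_i^\dagger$.

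The plan is to fix $i$ and work with the effect $a_i$ from the perfectly distinguishing test $\{a_k\}_{k=1}^n$. Since $\left(a_i\middle|\alpha_i\right)=1$, proposition~\ref{prop:non-disturbing measurement} gives a pure transformation $\mathcal{T}$ on $\mathrm{A}$ with $\mathcal{T}=_{\alpha_i}\mathcal{I}$ and $\left(u\middle|\mathcal{T}\middle|\sigma\right)\le\left(a_i\middle|\sigma\right)$ for all states $\sigma$. Apply the last inequality with $\sigma=\alpha_j$, $j\neq i$: since $\left(a_i\middle|\alpha_j\right)=0$ we get $\left(u\middle|\mathcal{T}\middle|\alpha_j\right)=0$, hence $\mathcal{T}\alpha_j=0$ (the norm, which in a causal theory is the trace, vanishes). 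On the other hand $\mathcal{T}\alpha_i=\alpha_i$ since $\mathcal{T}$ is non-disturbing on $\alpha_i$ (and in particular $\mathcal{T}$ is trace-preserving on $\alpha_i$, so no normalisation is lost). Thus $\mathcal{T}$ is a pure transformation sending $\alpha_i\mapsto\alpha_i$ and $\alpha_j\mapsto 0$. The idea now is to "read out" $\mathcal{T}$ by a suitable pure effect to manufacture a pure effect that annihilates $\alpha_j$ and equals $\alpha_i^\dagger$ on $\alpha_i$.

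Concretely, consider the effect $b:=\alpha_i^\dagger\circ\mathcal{T}$, i.e.\ $\left(b\middle|\sigma\right)=\left(\alpha_i^\dagger\middle|\mathcal{T}\middle|\sigma\right)$. By Purity Preservation this is a pure effect (composition of a pure transformation with a pure effect), and it is normalised: $\left(b\middle|\alpha_i\right)=\left(\alpha_i^\dagger\middle|\mathcal{T}\alpha_i\right)=\left(\alpha_i^\dagger\middle|\alpha_i\right)=1$, so $\left\Vert b\right\Vert=1$. Since $\left(b\middle|\alpha_i\right)=1$, the uniqueness clause of the state-effect duality (proposition~\ref{prop:atmostoneeffect}) forces $b=\alpha_i^\dagger$. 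But also $\left(b\middle|\alpha_j\right)=\left(\alpha_i^\dagger\middle|\mathcal{T}\alpha_j\right)=\left(\alpha_i^\dagger\middle|0\right)=0$. Combining, $\left(\alpha_i^\dagger\middle|\alpha_j\right)=\left(b\middle|\alpha_j\right)=0$, which is exactly what we want. Running $i$ over all indices completes the proof.

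The main obstacle — and the step that needs care — is the passage from $\left(u\middle|\mathcal{T}\middle|\alpha_j\right)=0$ to $\mathcal{T}\alpha_j=0$. This uses that in a causal theory $\left\Vert\rho\right\Vert=\mathrm{tr}\,\rho=\left(u\middle|\rho\right)$ for any (sub-normalised) state, so a vanishing deterministic effect forces the vector $\mathcal{T}\alpha_j\in\mathsf{St}_+(\mathrm{A})$ to be the zero vector; one should check $\mathcal{T}\alpha_j$ genuinely lies in the cone of (sub-normalised) states, which it does as the image of a state under a transformation. A secondary point worth spelling out is that $\mathcal{T}=_{\alpha_i}\mathcal{I}$ gives $\mathcal{T}\alpha_i=\alpha_i$ (taking $\sigma=\alpha_i$, which is trivially contained in itself), so no separate normalisation argument is needed there. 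Everything else is a direct invocation of results already established in the excerpt.
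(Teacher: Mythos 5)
Your proposal is correct and follows essentially the same route as the paper's own proof: invoke proposition~\ref{prop:non-disturbing measurement} on the distinguishing effect $a_i$ to get a pure transformation acting as the identity on $\alpha_i$ and annihilating the $\alpha_j$ for $j\neq i$, then compose it with $\alpha_i^{\dagger}$ and use Purity Preservation together with the uniqueness clause of the state-effect duality to identify the resulting pure effect with $\alpha_i^{\dagger}$ itself. The only differences are notational, so there is nothing further to add.
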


\begin{proof}
Clearly we need only to prove that, for every $i,j\in\left\lbrace 1,\ldots,n\right\rbrace $,
whenever $j\neq i$, one has $\left(\alpha_{i}^{\dagger}\middle|\alpha_{j}\right)=0$.
Let $\left\lbrace a_{i}\right\rbrace _{i=1}^{n}$ be the perfectly
distinguishing test for the pure states $\left\lbrace \alpha_{i}\right\rbrace _{i=1}^{n}$.
Since $\left(a_{i}\middle|\alpha_{i}\right)=1$, by proposition~\ref{prop:non-disturbing measurement},
for every $i\in\left\lbrace 1,\ldots,n\right\rbrace $ there exists
a \emph{pure} transformation $\mathcal{A}_{i}$ not disturbing $\alpha_{i}$,
namely 
\begin{equation}
\mathcal{A}_{i}\alpha_{i}=\alpha_{i}.\label{eq:invariant conjecture}
\end{equation}
Instead, for all $j\neq i$, by proposition~\ref{prop:non-disturbing measurement}
one has
\[
\left(u\middle|\mathcal{A}_{i}\middle|\alpha_{j}\right)\leq\left(a_{i}|\alpha_{j}\right)=0,
\]
as $\left\lbrace a_{i}\right\rbrace _{i=1}^{n}$ is perfectly distinguishing.
This implies that, for all $j\neq i$, 
\begin{equation}
\mathcal{A}_{i}\alpha_{j}=0.\label{eq:zero conjecture}
\end{equation}
Now, evaluating the expression $\left(\alpha_{i}^{\dagger}\middle|\mathcal{A}_{i}\middle|\alpha_{i}\right)$,
and recalling eq.~\eqref{eq:invariant conjecture}, we get
\[
\left(\alpha_{i}^{\dagger}\middle|\mathcal{A}_{i}\middle|\alpha_{i}\right)=\left(\alpha_{i}^{\dagger}\middle|\alpha_{i}\right)=1.
\]
Since $\alpha_{i}^{\dagger}\mathcal{A}_{i}$ is a pure effect by Purity
Preservation, and it occurs with unit probability on the state $\alpha_{i}$,
by the state-effect duality it must be 
\begin{equation}
\alpha_{i}^{\dagger}\mathcal{A}_{i}=\alpha_{i}^{\dagger}.\label{eq:dagger invariant conjecture}
\end{equation}
Now, suppose $j\neq i$. By eqs.~\eqref{eq:dagger invariant conjecture}
and \eqref{eq:zero conjecture}, one has
\[
\left(\alpha_{i}^{\dagger}\middle|\alpha_{j}\right)=\left(\alpha_{i}^{\dagger}\middle|\mathcal{A}_{i}\middle|\alpha_{j}\right)=0.
\]
This concludes the proof.
\end{proof}
Note that, at this stage, this result does \emph{not} mean that the
effects $\left\lbrace \alpha_{i}^{\dagger}\right\rbrace _{i=1}^{n}$
make up an observation-test. This will be instead a consequence of
corollary~\ref{cor:dagger distinguishable}.

Using the existence of non-disturbing transformations we can also
give a sufficient condition for the perfect distinguishability of
states. The following condition, and especially its version for pure
states, form the core of the proof of the diagonalisation theorem,
and it is a more rigorous version of the construction used in \cite{QPL15}.
\begin{lem}
\label{lem:distinguishable}Let $\left\{ \rho_{i}\right\} _{i=1}^{n}$
be a set of normalised states. If there exists a set of effects\footnote{Not necessarily an observation-test or a subset of an observation-test.}
$\left\{ a_{i}\right\} _{i=1}^{n}$ such that $\left(a_{i}\middle|\rho_{i}\right)=1$
for all $i$, and $\left(a_{i}\middle|\rho_{j}\right)=0$ for all
$j>i$, then the states $\left\{ \rho_{i}\right\} _{i=1}^{n}$ are
perfectly distinguishable.
\end{lem}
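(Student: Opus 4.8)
The plan is to build a perfectly distinguishing observation-test $\left\{ b_{i}\right\} _{i=1}^{n}$ for the states $\left\{ \rho_{i}\right\} _{i=1}^{n}$ by an iterative procedure, using the non-disturbing transformations provided by proposition~\ref{prop:non-disturbing measurement} to ``peel off'' one state at a time. The key idea is that since $\left(a_{1}\middle|\rho_{1}\right)=1$, proposition~\ref{prop:non-disturbing measurement} yields a pure transformation $\mathcal{T}_{1}$ on $\mathrm{A}$ with $\mathcal{T}_{1}=_{\rho_{1}}\mathcal{I}$ and $\left(u\middle|\mathcal{T}_{1}\middle|\sigma\right)\le\left(a_{1}\middle|\sigma\right)$ for every state $\sigma$. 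Since $\left(a_{1}\middle|\rho_{j}\right)=0$ for all $j>1$, this forces $\mathcal{T}_{1}\rho_{j}=0$ for $j>1$, while $\mathcal{T}_{1}\rho_{1}=\rho_{1}$.

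\textbf{The iteration.} First I would set $b_{1}:=a_{1}$; this correctly identifies $\rho_{1}$ (it yields $1$ on $\rho_{1}$ and $0$ on all $\rho_{j}$, $j>1$). Now I want to handle $\rho_{2},\dots,\rho_{n}$. Consider the complementary branch: the transformation $\mathcal{T}_{1}$ fits into a test, and we can form an effect $a_{1}':=u-u\mathcal{T}_{1}$ (using proposition~\ref{prop:sufficientfortest} / causality). We have $\left(a_{1}'\middle|\rho_{j}\right)=1$ for $j>1$ (since $\mathcal{T}_{1}\rho_{j}=0$) and $\left(a_{1}'\middle|\rho_{1}\right)=0$. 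More usefully, I would instead recurse on the ``post-$\mathcal{T}_{1}$'' picture: but $\mathcal{T}_{1}$ kills $\rho_{j}$ for $j>1$, so that is the wrong branch. The cleaner route: observe that once $\rho_{1}$ is stripped off by the effect $b_{1}=a_{1}$, the remaining states $\left\{ \rho_{i}\right\} _{i=2}^{n}$ together with the restricted effects $\left\{ a_{i}\right\} _{i=2}^{n}$ satisfy exactly the same hypotheses (for $2\le i$, $\left(a_{i}\middle|\rho_{i}\right)=1$, and $\left(a_{i}\middle|\rho_{j}\right)=0$ for $j>i$). So by induction on $n$ there is a perfectly distinguishing test $\left\{ c_{i}\right\} _{i=2}^{n}$ for $\left\{ \rho_{i}\right\} _{i=2}^{n}$, i.e.\ $\left(c_{i}\middle|\rho_{j}\right)=\delta_{ij}$ for $i,j\in\left\{ 2,\dots,n\right\}$. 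The problem is that the $c_{i}$ need not vanish on $\rho_{1}$.

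\textbf{Combining the branches.} This is where proposition~\ref{prop:non-disturbing measurement} does the real work. I would define $b_{i}:=c_{i}\,\mathcal{T}_{1}'$ for $i\ge 2$, where $\mathcal{T}_{1}'$ is a non-disturbing pure transformation associated with the effect $a_{1}':=u-a_{1}$ — note $\left(a_{1}'\middle|\rho_{j}\right)=1$ for all $j>1$ and $\left(a_{1}'\middle|\rho_{1}\right)=0$, so proposition~\ref{prop:non-disturbing measurement} gives $\mathcal{T}_{1}'=_{\rho_{j}}\mathcal{I}$ for each $j>1$ (applying the proposition to each $\rho_{j}$; more precisely one takes a single $\rho$ in which all $\rho_{j}$, $j>1$, are contained, e.g.\ a uniform mixture, on which $a_1'$ still occurs with probability $1$) and $\left(u\middle|\mathcal{T}_{1}'\middle|\rho_{1}\right)\le\left(a_{1}'\middle|\rho_{1}\right)=0$, hence $\mathcal{T}_{1}'\rho_{1}=0$. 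Then for $i\ge 2$: $\left(b_{i}\middle|\rho_{1}\right)=\left(c_{i}\middle|\mathcal{T}_{1}'\middle|\rho_{1}\right)=0$, $\left(b_{i}\middle|\rho_{i}\right)=\left(c_{i}\middle|\rho_{i}\right)=1$, and $\left(b_{i}\middle|\rho_{j}\right)=\left(c_{i}\middle|\rho_{j}\right)=\delta_{ij}$ for $j\ge 2$. Finally I must check $\left\{ b_{1},b_{2},\dots,b_{n}\right\}$ is a genuine observation-test: by proposition~\ref{prop:sufficientfortest} it suffices that $\sum_{i=1}^{n}b_{i}=u$, which can be arranged by instead defining $b_{n}:=u-\sum_{i=1}^{n-1}b_{i}$ and verifying this replacement still satisfies $\left(b_{n}\middle|\rho_{j}\right)=\delta_{nj}$ using the relations just established (the original $b_n$ and $u-\sum_{i<n}b_i$ agree on every $\rho_j$, so they give the same distinguishing statistics).

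\textbf{Main obstacle.} The delicate point is the branch-combining step: I need $\mathcal{T}_{1}'$ to act as the identity on \emph{all} of $\rho_{2},\dots,\rho_{n}$ simultaneously, not just on one of them, so that composing with the inductively-obtained $c_{i}$ preserves their statistics. This requires applying proposition~\ref{prop:non-disturbing measurement} to a state $\rho$ in which each $\rho_{j}$ ($j\ge 2$) is contained and on which $a_{1}'$ occurs with unit probability — for instance $\rho:=\frac{1}{n-1}\sum_{j=2}^{n}\rho_{j}$, which has $\left(a_{1}'\middle|\rho\right)=1$ — and then using that $\mathcal{T}_{1}'=_{\rho}\mathcal{I}$ means $\mathcal{T}_{1}'$ is identity on every state contained in $\rho$, in particular on each $\rho_{j}$. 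Getting this containment/uniformity argument airtight, together with the bookkeeping to make the final collection sum to $u_{\mathrm{A}}$, is the crux; the rest is routine diagram manipulation.
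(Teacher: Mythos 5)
Your proposal is correct and, once the recursion is unwound, it reproduces the paper's own construction: your $\mathcal{T}_{1}'$ (obtained from proposition~\ref{prop:non-disturbing measurement} applied to $u-a_{1}$ and the uniform mixture $\frac{1}{n-1}\sum_{j>1}\rho_{j}$) is exactly the paper's $\mathcal{A}_{1}^{\perp}$, and your effects $b_{i}=a_{i}\,\mathcal{T}_{i-1}'\cdots\mathcal{T}_{1}'$ are the paper's $t_{i}=a_{i}\mathcal{A}_{i-1}^{\perp}\cdots\mathcal{A}_{1}^{\perp}$, with the same final replacement $b_{n}\mapsto u-\sum_{i<n}b_{i}$. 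The only difference is packaging (induction versus an explicit $(2n+1)$-outcome adaptive protocol with padding transformations $\mathcal{A}_{i,0}$), and your justification that the complement effect is well defined by positivity is the same one the paper uses for $a_{i,0}$.
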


\begin{proof}
By hypothesis, the binary observation-test $\left\{ a_{i},u-a_{i}\right\} $
distinguishes perfectly between $\rho_{i}$ and all the other states
$\rho_{j}$ with $j>i$. Equivalently, this observation-test distinguishes
perfectly between $\rho_{i}$ and the state $\widetilde{\rho}_{i}:=\frac{1}{n-i}\sum_{j>i}\rho_{j}$.
Specifically, $\left(u-a_{i}\middle|\widetilde{\rho}_{i}\right)=1$.
Applying proposition~\ref{prop:non-disturbing measurement}, we can
construct a pure transformation $\mathcal{A}_{i}^{\perp}$ such that
$\mathcal{A}_{i}^{\perp}=_{\widetilde{\rho}_{i}}\mathcal{I}$, and,
specifically,
\begin{equation}
\mathcal{A}_{i}^{\perp}\rho_{j}=\rho_{j}\label{eq:orthogonal non disturbing}
\end{equation}
for all $j>i$. Moreover, proposition~\ref{prop:non-disturbing measurement}
implies 
\[
\left(u\middle|\mathcal{A}_{i}^{\perp}\middle|\rho_{i}\right)\le\left(u-a_{i}\middle|\rho_{i}\right)=0,
\]
meaning that the transformation $\mathcal{A}_{i}^{\perp}$ never occurs
on the state $\rho_{i}$. Let us define the effect $a_{i,0}:=u-a_{i}-u\mathcal{A}_{i}^{\perp}$.
Note that this effect is well-defined, because $\left(a_{i,0}\middle|\sigma\right)\geq0$,
for all $\sigma\in\mathsf{St}_{1}\left(\mathrm{A}\right)$. Indeed,
by proposition~\ref{prop:non-disturbing measurement}, we have $\left(u\middle|\mathcal{A}_{i}^{\perp}\middle|\sigma\right)\le\left(u-a_{i}\middle|\sigma\right)$,
for all $\sigma\in\mathsf{St}_{1}\left(\mathrm{A}\right)$, whence
$\left(u-a_{i}-u\mathcal{A}_{i}^{\perp}\middle|\sigma\right)\geq0$.
Note that $a_{i,0}$ never occurs on the states $\rho_{k}$ with $k\ge i$.

Now, define the transformations $\mathcal{A}_{i}=\left|\rho_{i}\right)\left(a_{i}\right|$
and $\mathcal{A}_{i,0}=\left|\rho_{0}\right)\left(a_{i,0}\right|$,
where $\rho_{0}$ is a fixed normalised state. By proposition~\ref{prop:sufficientfortest},
the transformations $\left\{ \mathcal{A}_{i},\mathcal{A}_{i}^{\perp},\mathcal{A}_{i,0}\right\} $
form a test. Summarising the above observations, the test satisfies
the properties
\begin{equation}
\left\{ \begin{array}{l}
\mathcal{A}_{i}\rho_{i}=\rho_{i}\\
\mathcal{A}_{i}\rho_{j}=0\qquad\forall j>i\\
\mathcal{A}_{i}^{\perp}\rho_{i}=0\\
\mathcal{A}_{i}^{\perp}\rho_{j}=\rho_{j}\qquad\forall j>i\\
\mathcal{A}_{i,0}\rho_{k}=0\qquad\forall k\ge i.
\end{array}\right..\label{eq:propertiesA}
\end{equation}
By construction, the test distinguishes without error between the
state $\rho_{i}$ and all the states $\rho_{j}$ with $j>i$, in such
a way that the latter are not disturbed. Indeed, by construction $\mathcal{A}_{i}$
can only occur if the state is $\rho_{i}$, instead $\mathcal{A}_{i}^{\perp}$
never occurs on $\rho_{i}$, but it occurs with probability 1 if the
state is any of the $\rho_{j}$'s, with $j>i$, and it leaves them
unchanged. Finally, $\mathcal{A}_{i,0}$ never occurs on the states
$\rho_{k}$'s with $k\ge i$, so it does not play a role in the discrimination
process. Essentially, $\mathcal{A}_{i,0}$ only plays the role of
making $\left\{ \mathcal{A}_{i},\mathcal{A}_{i}^{\perp},\mathcal{A}_{i,0}\right\} $
a test.

Using the tests $\left\{ \mathcal{A}_{i},\mathcal{A}_{i}^{\perp},\mathcal{A}_{i,0}\right\} $
it is easy to construct a protocol that distinguishes perfectly between
the states $\left\{ \rho_{i}\right\} _{i=1}^{n}$. The protocol works
as follows: starting from $i=1$ perform the test $\left\{ \mathcal{A}_{i},\mathcal{A}_{i}^{\perp},\mathcal{A}_{i,0}\right\} $.
If the transformation $\mathcal{A}_{i}$ takes place, then the state
is $\rho_{i}$. If the transformation $\mathcal{A}_{i}^{\perp}$ takes
place, then perform the test $\left\{ \mathcal{A}_{i+1},\mathcal{A}_{i+1}^{\perp},\mathcal{A}_{i+1,0}\right\} $
(this can be done because $\mathcal{A}_{i}^{\perp}$ is non-disturbing).
Using this protocol, every state in the set $\left\{ \rho_{i}\right\} _{i=1}^{n}$
will be identified without error in at most $n$ steps. Overall, the
protocol is described by a test with $2n+1$ outcomes, corresponding
to the transformations 
\[
\mathcal{T}_{1}=\mathcal{A}_{1}
\]
\[
\mathcal{T}_{2}=\mathcal{A}_{2}\mathcal{A}_{1}^{\perp}
\]
\[
\vdots
\]
\[
\mathcal{T}_{n}=\mathcal{A}_{n}\mathcal{A}_{n-1}^{\perp}\ldots\mathcal{A}_{1}^{\perp}
\]
\[
\mathcal{T}_{n+1}=\mathcal{A}_{1,0}
\]
\[
\mathcal{T}_{n+2}=\mathcal{A}_{2,0}\mathcal{A}_{1}^{\perp}
\]
\[
\vdots
\]
\[
\mathcal{T}_{2n}=\mathcal{A}_{n,0}\mathcal{A}_{n-1}^{\perp}\ldots\mathcal{A}_{1}^{\perp}
\]
\[
\mathcal{T}_{2n+1}=\mathcal{A}_{n}^{\perp}\ldots\mathcal{A}_{1}^{\perp}
\]
To show that these transformations form a test, we use proposition~\ref{prop:sufficientfortest}:
$\left\lbrace \mathcal{T}_{i}\right\rbrace _{i=1}^{2n+1}$ is a test
if and only if $\sum_{i=1}^{2n+1}u\mathcal{T}_{i}=u$. An easy check
shows that this is the case.

To complete the proof, we need to construct a perfectly distinguishing
test $\left\lbrace e_{i}\right\rbrace _{i=1}^{n}$ for the states
$\left\lbrace \rho_{i}\right\rbrace _{i=1}^{n}$. By discarding the
output of the transformations $\left\lbrace \mathcal{T}_{i}\right\rbrace _{i=1}^{2n+1}$,
we get an observation-test $\left\lbrace t_{i}\right\rbrace _{i=1}^{2n+1}$
with $2n+1$ outcomes and effects $t_{i}:=u\mathcal{T}_{i}$. We claim
that the observation-test 
\begin{equation}
\left\lbrace e_{i}\right\rbrace _{i=1}^{n}=\left\lbrace t_{1},\ldots,t_{n-1},u-t_{1}-\ldots-t_{n-1}\right\rbrace \label{eq:observation-test}
\end{equation}
is perfectly distinguishing for the states $\left\lbrace \rho_{i}\right\rbrace _{i=1}^{n}$.
First of all, since $t_{1},\ldots,t_{n-1}$ coexist in a ($2n+1$)-outcome
test, the effect $u-t_{1}-\ldots-t_{n-1}$ is well-defined. Now let
us prove that the observation-test~\eqref{eq:observation-test} perfectly
distinguishes the states $\rho_{i}$'s. We start from $t_{1}=u\mathcal{A}_{1}$;
by~\eqref{eq:propertiesA} we get
\[
\left(t_{1}\middle|\rho_{j}\right)=\left(u\middle|\mathcal{A}_{1}\middle|\rho_{j}\right)=\delta_{1j}\left(u\middle|\rho_{j}\right)=\delta_{1j}.
\]
Now, if $i>1$, 
\[
t_{i}=u\mathcal{A}_{i}\mathcal{A}_{i-1}^{\perp}\ldots\mathcal{A}_{1}^{\perp}.
\]
If we wish to calculate $\left(t_{i}\middle|\rho_{j}\right)$, by
eq.~\eqref{eq:propertiesA}, $\rho_{j}$ is left invariant by all
the $\mathcal{A}_{k}^{\perp}$ with $k<j$. If $i\neq j$, then 
\[
\left(t_{i}\middle|\rho_{j}\right)=\left(u\middle|\mathcal{A}_{i}\mathcal{A}_{i-1}^{\perp}\ldots\mathcal{A}_{j}^{\perp}\middle|\rho_{j}\right)=0,
\]
again by eq.~\eqref{eq:propertiesA}. If, instead $j=i$, 
\[
\left(t_{i}\middle|\rho_{i}\right)=\left(u\middle|\mathcal{A}_{i}\middle|\rho_{i}\right)=\left(u\middle|\rho_{i}\right)=1.
\]
As a consequence of these results,
\[
\left(u-t_{1}-\ldots-t_{n-1}\middle|\rho_{j}\right)=\delta_{nj}.
\]
We conclude that $\left\lbrace e_{i}\right\rbrace _{i=1}^{n}$ is
really a perfectly distinguishing test, because we have $\left(e_{i}\middle|\rho_{j}\right)=\delta_{ij}$.
\end{proof}
In the case when the states are pure, and the effects in the statement
of lemma~\ref{lem:distinguishable} are the daggers of those pure
states, we can prove something stronger.
\begin{cor}
\label{cor:dagger distinguishable}Let $\left\{ \alpha_{i}\right\} _{i=1}^{n}$
be a set of normalised \emph{pure} states such that $\left(\alpha_{i}^{\dagger}\middle|\alpha_{j}\right)=0$
for all $j>i$, then the states $\left\{ \alpha_{i}\right\} _{i=1}^{n}$
are perfectly distinguishable, and the pure effects $\left\{ \alpha_{i}^{\dagger}\right\} _{i=1}^{n}$
coexist in an observation-test, which distinguishes the states $\left\{ \alpha_{i}\right\} _{i=1}^{n}$
perfectly.
\end{cor}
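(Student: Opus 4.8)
The plan is to deduce the corollary from Lemma~\ref{lem:distinguishable}, Corollary~\ref{cor:conjecture proved}, and the state-effect duality, without re-running the whole discrimination construction. First I would observe that the hypotheses of Lemma~\ref{lem:distinguishable} are met with $\rho_i = \alpha_i$ and $a_i = \alpha_i^{\dagger}$: indeed $\left(\alpha_i^{\dagger}\middle|\alpha_i\right) = 1$ by the defining property of the dagger (Theorem~\ref{thm:duality}), and $\left(\alpha_i^{\dagger}\middle|\alpha_j\right) = 0$ for $j > i$ by assumption. Hence Lemma~\ref{lem:distinguishable} immediately gives that $\left\{\alpha_i\right\}_{i=1}^{n}$ are perfectly distinguishable. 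The nontrivial extra content of the corollary is that the specific effects $\left\{\alpha_i^{\dagger}\right\}_{i=1}^{n}$ themselves form (or sit inside) an observation-test that does the distinguishing, rather than the somewhat indirect test $\left\{e_i\right\}_{i=1}^{n}$ built in the proof of the lemma.

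To get that sharper conclusion, the key move is to upgrade the one-sided orthogonality hypothesis $\left(\alpha_i^{\dagger}\middle|\alpha_j\right)=0$ for $j>i$ to full orthogonality. Since we now know the $\alpha_i$ are perfectly distinguishable \emph{pure} states, Corollary~\ref{cor:conjecture proved} applies and yields $\left(\alpha_i^{\dagger}\middle|\alpha_j\right) = \delta_{ij}$ for \emph{all} $i,j$. With this in hand, I would consider the coarse-grained effect $a_0 := u_{\mathrm{A}} - \sum_{i=1}^{n}\alpha_i^{\dagger}$ and check it is a legitimate effect, i.e.\ that $\left(a_0\middle|\sigma\right) \ge 0$ for every normalised state $\sigma$. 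This is exactly the place I expect the main obstacle: a priori $\sum_i \alpha_i^{\dagger}$ need not be bounded above by $u_{\mathrm{A}}$ just because the $\alpha_i^{\dagger}$ are mutually ``orthogonal'' on the $\alpha_j$. I would handle this by invoking the perfectly distinguishing test $\left\{e_i\right\}_{i=1}^{n}$ produced by Lemma~\ref{lem:distinguishable}, together with the characterisation of effects equal upon input: since $\left(e_i\middle|\alpha_i\right) = 1 = \left(\alpha_i^{\dagger}\middle|\alpha_i\right)$ and $\alpha_i$ is pure, Proposition~\ref{prop:purifications -> input} (via Pure Steering / the state-effect duality applied to $\alpha_i^{\dagger}e_i\ldots$) forces a comparison $\alpha_i^{\dagger} \le e_i$ as effects — more precisely, one shows $e_i - \alpha_i^{\dagger}$ is a valid effect by writing $e_i$ through a dilation and using that $\alpha_i^{\dagger}$ is the \emph{unique} normalised pure effect with $\left(\alpha_i^{\dagger}\middle|\alpha_i\right)=1$. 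Summing over $i$ and using $\sum_i e_i = u_{\mathrm{A}}$ then gives $\sum_i \alpha_i^{\dagger} \le u_{\mathrm{A}}$, so $a_0$ is a well-defined effect.

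Finally, I would assemble the observation-test $\left\{\alpha_1^{\dagger}, \ldots, \alpha_n^{\dagger}, a_0\right\}$ and verify it is indeed a test using Proposition~\ref{prop:sufficientfortest}: the sum of its effects is $\sum_{i=1}^{n}\alpha_i^{\dagger} + \left(u_{\mathrm{A}} - \sum_{i=1}^{n}\alpha_i^{\dagger}\right) = u_{\mathrm{A}}$, which is exactly the required condition. Then $\left(\alpha_i^{\dagger}\middle|\alpha_j\right) = \delta_{ij}$ by Corollary~\ref{cor:conjecture proved} shows this test distinguishes the $\alpha_i$ perfectly, completing the proof. The bulk of the real work is the effect-inequality $\alpha_i^{\dagger} \le e_i$; if a cleaner argument is available — for instance directly showing $\alpha_i^{\dagger}$ coexists with $u_{\mathrm{A}} - \alpha_i^{\dagger}$ and then using that all the $\alpha_i^{\dagger}$ are jointly ``compatible'' via the daggers of a reversible-channel orbit — I would use that instead, but the dilation-plus-uniqueness route is the one I am most confident closes the gap.
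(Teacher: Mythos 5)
Your first step---feeding $\rho_i=\alpha_i$ and $a_i=\alpha_i^{\dagger}$ into Lemma~\ref{lem:distinguishable} to obtain perfect distinguishability, then invoking Corollary~\ref{cor:conjecture proved} for $\left(\alpha_i^{\dagger}\middle|\alpha_j\right)=\delta_{ij}$---matches the paper. Where you diverge is in how the effects $\left\{\alpha_i^{\dagger}\right\}$ are shown to coexist in a test. The paper never leaves the construction of Lemma~\ref{lem:distinguishable}: it shows that the effects $t_i=u\mathcal{A}_i\mathcal{A}_{i-1}^{\perp}\cdots\mathcal{A}_1^{\perp}$ of the $(2n+1)$-outcome test built there are \emph{literally equal} to $\alpha_i^{\dagger}$, because $\alpha_i^{\dagger}\mathcal{A}_j^{\perp}$ is pure by Purity Preservation, takes the value $1$ on $\alpha_i$, and hence equals $\alpha_i^{\dagger}$ by the uniqueness part of the state-effect duality. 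Coexistence is then automatic and no effect inequality is ever needed. Your route instead manufactures a fresh test $\left\{\alpha_1^{\dagger},\dots,\alpha_n^{\dagger},u-\sum_i\alpha_i^{\dagger}\right\}$ and therefore must prove that $u-\sum_i\alpha_i^{\dagger}$ is a \emph{physical} effect.

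That is where the gap sits. You correctly locate the burden in the comparison $\alpha_i^{\dagger}\le e_i$, but as sketched your argument threatens to deliver only membership of $e_i-\alpha_i^{\dagger}$ in the dual cone $\mathsf{St}_{+}^{*}\left(\mathrm{A}\right)$, i.e.\ non-negativity on all states. That is not enough: the thesis explicitly declines to assume the no-restriction hypothesis, so a dual-cone element bounded by $u$ need not be an allowed effect, and Proposition~\ref{prop:sufficientfortest} only applies to sets whose members are already known to be effects. The gap is closable along the lines you hint at: since $\left(e_i\middle|\alpha_i\right)=1$, $e_i$ is normalised, so by eq.~\eqref{eq:effdecomp} one writes $e_i=u_{\mathrm{B}}\mathcal{A}$ with $\mathcal{A}$ a \emph{pure} transformation; then $\beta:=\mathcal{A}\alpha_i$ is a normalised pure state, $\beta^{\dagger}\mathcal{A}$ is a normalised pure effect taking the value $1$ on $\alpha_i$, hence $\beta^{\dagger}\mathcal{A}=\alpha_i^{\dagger}$ by Proposition~\ref{prop:atmostoneeffect}, and therefore $e_i-\alpha_i^{\dagger}=\left(u_{\mathrm{B}}-\beta^{\dagger}\right)\mathcal{A}$ is a genuine effect (an effect composed with a transformation), not merely a dual-cone element. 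Then $\left\{\alpha_1^{\dagger},e_1-\alpha_1^{\dagger},\dots,\alpha_n^{\dagger},e_n-\alpha_n^{\dagger}\right\}$ is a set of effects summing to $u$, hence a test by Proposition~\ref{prop:sufficientfortest}, and coarse-graining yields your test. With that repair your argument is correct; note, though, that the decisive ingredients (purity of a dilation plus uniqueness of the dagger effect) are exactly the ones the paper applies to $\mathcal{A}_j^{\perp}$ instead of to $e_i$, which is why the paper's route is shorter.
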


\begin{proof}
If we take $a_{i}:=\alpha_{i}^{\dagger}$, by lemma~\ref{lem:distinguishable},
we know that the states $\left\{ \alpha_{i}\right\} _{i=1}^{n}$ are
perfectly distinguishable. Referring to the proof of lemma~\ref{lem:distinguishable},
note that, since $\mathcal{A}_{i}^{\perp}$ is pure, we have 
\begin{equation}
\alpha_{j}^{\dagger}\mathcal{A}_{i}^{\perp}=\alpha_{j}^{\dagger}\qquad\forall j>i.\label{eq:usadopo}
\end{equation}
by a similar argument to the one in the proof of corollary~\ref{cor:conjecture proved}.
Indeed, the effect $\alpha_{j}^{\dagger}\mathcal{A}_{i}^{\perp}$
is pure by Purity Preservation, and satisfies 
\[
\left(\alpha_{j}^{\dagger}\middle|\mathcal{A}_{i}^{\perp}\middle|\alpha_{j}\right)=\left(\alpha_{j}^{\dagger}\middle|\alpha_{j}\right)=1,
\]
where we have used eq.~\eqref{eq:orthogonal non disturbing}. Let
us construct the perfectly distinguishing observation-test like in
the proof of lemma~\ref{lem:distinguishable}, by considering the
effects $t_{i}=u\mathcal{A}_{i}$. One has, recalling that $\mathcal{A}_{i}=\left|\alpha_{i}\right)\left(\alpha_{i}^{\dagger}\right|$,
\[
t_{1}=u\mathcal{A}_{1}=\alpha_{1}^{\dagger}
\]
\[
t_{2}=u\mathcal{A}_{2}\mathcal{A}_{1}^{\perp}=\alpha_{2}^{\dagger}A_{1}^{\perp}=\alpha_{2}^{\dagger}
\]
\[
\vdots
\]
\[
t_{n}=u\mathcal{A}_{n}\mathcal{A}_{n-1}^{\perp}\ldots\mathcal{A}_{1}^{\perp}=\alpha_{n}^{\dagger},
\]
having used eq.~\eqref{eq:usadopo}. This proves that the effects
$\left\{ \alpha_{i}^{\dagger}\right\} _{i=1}^{n}$ coexist in a ($2n+1$)-outcome
observation-test. As a consequence, as shown in the proof of lemma~\ref{lem:distinguishable},
we have that 
\[
\left\lbrace \alpha_{1}^{\dagger},\ldots,\alpha_{n-1}^{\dagger},u-\alpha_{1}^{\dagger}-\ldots-\alpha_{n-1}^{\dagger}\right\rbrace 
\]
is perfectly distinguishing, and specifically $\left(\alpha_{i}^{\dagger}\middle|\alpha_{j}\right)=\delta_{ij}$.
\end{proof}
As a consequence of this corollary, whenever some pure states are
perfectly distinguishable, their daggers coexist in an observation-test
that distinguishes them perfectly. Corollaries~\ref{cor:conjecture proved}
and \ref{cor:dagger distinguishable}, taken together, state that
a necessary and sufficient condition for the pure states $\left\lbrace \alpha_{i}\right\rbrace _{i=1}^{n}$
to be perfectly distinguishable is that their daggers satisfy $\left(\alpha_{i}^{\dagger}\middle|\alpha_{j}\right)=\delta_{ij}$,
for all $i,j\in\left\lbrace 1,\ldots,n\right\rbrace $.

\section{Diagonalisation of states\label{sec:Diagonalisation-of-states}}

This section represents the core of the whole chapter, for it introduces
a key tool that will have plenty of consequences throughout this thesis:
the diagonalisation of states. Since states are not density matrices,
clearly diagonalisation here has a different meaning. To understand
it, let us look at quantum theory from an operational angle. Note
that in the diagonalisation of density matrices, a quantum state $\rho$
of a $d$-dimensional Hilbert space is diagonalised as $\rho=\sum_{j=1}^{d}p_{j}\ket{j}\bra{j},$
where $\left\{ \ket{j}\right\} _{j=1}^{d}$ is an orthonormal basis,
and $\left\{ p_{j}\right\} _{j=1}^{d}$ is a probability distribution.
Since $\ket{j}\bra{j}$ represent orthogonal pure states, we understand
the operational meaning of diagonalisation in quantum theory: a state
is diagonalised when it is written as a convex combination of perfectly
distinguishable pure states. Indeed the $p_{j}$'s\textemdash the
eigenvalues of $\rho$\textemdash are the coefficients of a convex
combination, and the pure states $\left\{ \ket{j}\bra{j}\right\} $
are distinguished perfectly by the projective measurement $\left\{ \ket{j}\bra{j}\right\} $.

Therefore it is natural to extend the definition of diagonalisation
to GPTs as follows: a \emph{diagonalisation} of $\rho$ is a convex
decomposition of $\rho$ into perfectly distinguishable pure states.
The probabilities in such a convex decomposition will be called the
\emph{eigenvalues} of $\rho$, and the perfectly distinguishable pure
states the \emph{eigenstates} \cite{QPL15}.

Note that, while it is true that every state $\rho$ in GPTs can be
decomposed as a convex combination of pure states, the key point about
diagonalisation is that every state should be written as a convex
combination of \emph{perfectly distinguishable} pure states. This
is a non-trivial property, for example the square bit \cite{Barrett}
does \emph{not} satisfy it (see fig.~\ref{fig:square no diag}).
\begin{figure}
\begin{centering}
\includegraphics{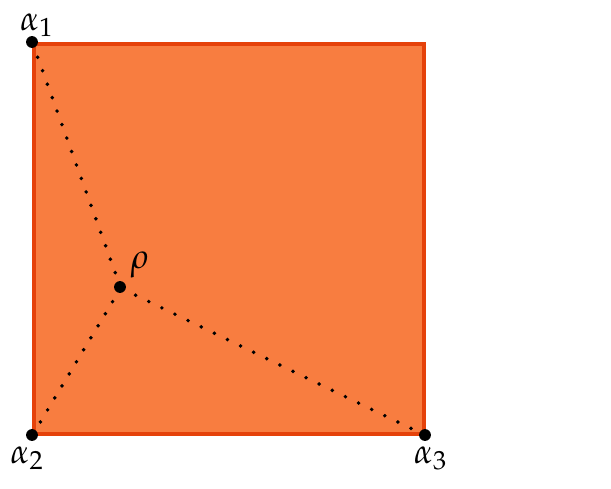}
\par\end{centering}
\caption{\label{fig:square no diag}The state $\rho$ is a non-trivial convex
combination of the pure states $\alpha_{1}$, $\alpha_{2}$, and $\alpha_{3}$,
which are \emph{not} jointly perfectly distinguishable. $\rho$ cannot
be diagonalised.}

\end{figure}

Rather than postulate the diagonalisation of all states like in \cite{Barnum-interference,Krumm-Muller,Colleagues},
here we \emph{derive} the diagonalisation of states from the axioms
of sharp theories with purification: Causality, Purity Preservation,
Pure Sharpness, and Purification. This result already appeared in
a preliminary form in \cite{QPL15}, here we recast it in a more rigorous
version, and we manage to prove the uniqueness of the eigenvalues
of states from the axioms of sharp theories with purification, a fact
that was only conjectured, and not proved in \cite{QPL15}.

We start with the actual diagonalisation theorem.
\begin{thm}
\label{thm:diago}In a sharp theory with purification, every state
of every (non-trivial) system can be diagonalised.
\end{thm}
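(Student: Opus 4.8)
The plan is to peel off one eigenstate at a time, using the probability balance theorem to produce the top eigenstate and the dagger characterisation of perfect distinguishability to reassemble the pieces. Fix a non-trivial system $\mathrm{A}$ and argue by induction on $d_{\rho}:=\dim\mathrm{span}\,F_{\rho}$, where $F_{\rho}$ is the smallest face of the cone $\mathsf{St}_{+}\left(\mathrm{A}\right)$ containing $\rho$. If $d_{\rho}=1$, then $F_{\rho}$ is an extreme ray, so the normalised state $\rho$ is an extreme point of $\mathsf{St}_{1}\left(\mathrm{A}\right)$, i.e.\ a pure state, which is trivially its own diagonalisation.

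For the inductive step, assume $\rho$ is mixed. Pick a purification $\Psi\in\mathsf{PurSt}_{1}\left(\mathrm{AB}\right)$ of $\rho$ (which exists by Purification) and apply theorem~\ref{thm:probability balance}: this gives a decomposition
\[
\rho=p_{*}\alpha_{1}+\left(1-p_{*}\right)\sigma_{1}
\]
with $0<p_{*}<1$, $\alpha_{1}$ a pure eigenstate of maximum eigenvalue, and $\alpha_{1}$ perfectly distinguishable from $\sigma_{1}$ by the test $\left\{ a_{1},u-a_{1}\right\}$ for some pure effect $a_{1}$ with $\left(a_{1}\middle|\alpha_{1}\right)=1$. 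By the state-effect duality (theorem~\ref{thm:duality}) we may take $a_{1}=\alpha_{1}^{\dagger}$, so in particular $\left(\alpha_{1}^{\dagger}\middle|\sigma_{1}\right)=0$. Since $\rho=p_{*}\alpha_{1}+\left(1-p_{*}\right)\sigma_{1}$ is a convex combination, both $\alpha_{1}$ and $\sigma_{1}$ lie in $F_{\rho}$; but $\sigma_{1}$ moreover lies in the exposed face $F_{\rho}\cap\ker\alpha_{1}^{\dagger}$, which is proper because it omits $\alpha_{1}$ (as $\left(\alpha_{1}^{\dagger}\middle|\alpha_{1}\right)=1\neq0$), and hence of strictly smaller dimension. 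Therefore $d_{\sigma_{1}}<d_{\rho}$, and the inductive hypothesis applies to $\sigma_{1}$: we obtain a diagonalisation $\sigma_{1}=\sum_{i=2}^{n}q_{i}\alpha_{i}$ into perfectly distinguishable pure states $\alpha_{i}$ with $q_{i}>0$.

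It remains to check that $\left\{ \alpha_{i}\right\}_{i=1}^{n}$ are \emph{jointly} perfectly distinguishable, for then $\rho=p_{*}\alpha_{1}+\sum_{i=2}^{n}\left(1-p_{*}\right)q_{i}\alpha_{i}$ is the desired diagonalisation. By corollary~\ref{cor:conjecture proved}, $\left(\alpha_{i}^{\dagger}\middle|\alpha_{j}\right)=\delta_{ij}$ for all $i,j\in\left\{ 2,\ldots,n\right\}$. From $\left(\alpha_{1}^{\dagger}\middle|\sigma_{1}\right)=0$, together with $q_{i}>0$ and $\left(\alpha_{1}^{\dagger}\middle|\alpha_{i}\right)\geq0$, we get $\left(\alpha_{1}^{\dagger}\middle|\alpha_{i}\right)=0$ for all $i\geq2$. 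Hence, ordering the states as $\alpha_{1},\ldots,\alpha_{n}$ and taking the effects $\alpha_{i}^{\dagger}$, we have $\left(\alpha_{i}^{\dagger}\middle|\alpha_{i}\right)=1$ for all $i$ and $\left(\alpha_{i}^{\dagger}\middle|\alpha_{j}\right)=0$ for all $j>i$, so lemma~\ref{lem:distinguishable} yields that $\left\{ \alpha_{i}\right\}_{i=1}^{n}$ are perfectly distinguishable, completing the induction.

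The main obstacle is the geometric bookkeeping that makes the induction go through: one must verify that the smallest face $F_{\rho}$ and its dimension behave as claimed — in particular that the residual $\sigma_{1}$ genuinely sits in a strictly lower-dimensional face (this is exactly where $\left(\alpha_{1}^{\dagger}\middle|\sigma_{1}\right)=0$ and $\alpha_{1}\in F_{\rho}$ are combined) and that a one-dimensional face forces purity (via the identification of normalised pure states with the extreme points of $\mathsf{St}_{1}\left(\mathrm{A}\right)$). Everything else is assembled from already established results: probability balance (theorem~\ref{thm:probability balance}), the state-effect duality (theorem~\ref{thm:duality}), and the two combinatorial facts about daggers and perfect distinguishability (corollary~\ref{cor:conjecture proved} and lemma~\ref{lem:distinguishable}). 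Note that the diagonalisation produced this way need not use a maximal set of perfectly distinguishable pure states, and that uniqueness of the eigenvalues is a separate statement, to be proved afterwards.
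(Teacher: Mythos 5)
Your proof is correct and follows essentially the same route as the paper's: both peel off the maximum eigenstate via theorem~\ref{thm:probability balance}, record the orthogonality $\left(\alpha_{1}^{\dagger}\middle|\sigma_{1}\right)=0$, and reassemble the pieces using corollary~\ref{cor:conjecture proved} together with lemma~\ref{lem:distinguishable} (equivalently corollary~\ref{cor:dagger distinguishable}). The only difference is organisational: the paper runs the recursion as an explicit algorithm whose termination rests on the finite bound on the number of perfectly distinguishable states, whereas you guarantee termination by inducting on the dimension of the minimal face containing the state, which is a valid and somewhat more self-contained piece of bookkeeping.
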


The proof is a constructive procedure that returns a diagonalisation
of $\rho$ with the eigenvalues naturally listed in decreasing order,
namely $p_{i}\geq p_{i+1}$ for every $i$. In particular, one has
$p_{1}=p_{*}$, which justifies why we called $p_{*}$ the ``maximum
eigenvalue''.
\begin{proof}
In order to diagonalise the state $\rho$, it is enough to proceed
along the following steps:
\begin{enumerate}
\item Set $\rho_{1}=\rho$ and $p_{*,0}=0$.
\item For $i$ starting from $i=1$, decompose $\rho_{i}$ as $\rho_{i}=p_{*,i}\alpha_{i}+\left(1-p_{*,i}\right)\sigma_{i}$,
where $p_{*,i}$ is the maximum eigenvalue of $\rho_{i}$. Set $\rho_{i+1}=\sigma_{i}$,
$p_{i}=p_{*,i}\prod_{j<i}\left(1-p_{*,j}\right)$. If $p_{*,i}=1$,
then terminate, otherwise continue to the step $i+1$.
\end{enumerate}
Recall that theorem~\ref{thm:probability balance} guarantees the
condition $\left(\alpha_{i}^{\dagger}\middle|\sigma_{i}\right)=0$
at every step of the procedure. Since by construction every state
$\alpha_{j}$ with $j>i$ is contained in $\sigma_{i}$, we also have
$\left(\alpha_{i}^{\dagger}\middle|\alpha_{j}\right)=0$ for every
$j>i$. Hence, corollary~\ref{cor:dagger distinguishable} implies
that the states $\left\{ \alpha_{k}\right\} _{k=1}^{i}$, generated
by the first $i$ iterations of the protocol, are perfectly distinguishable,
for any $i$. For a finite-dimensional system, the procedure must
terminate in a finite number of iterations. Once the procedure has
been completed, the state $\rho$ is decomposed as $\rho=\sum_{i=1}^{r}p_{i}\alpha_{i}$
where $r$ is some positive (finite) integer, and $\left\{ \alpha_{i}\right\} _{i=1}^{r}$
are perfectly distinguishable pure states.
\end{proof}
Later in subsection~\ref{subsec:Uniqueness-of-the} we will show
that the vector of the eigenvalues $\mathbf{p}$, also called the
\emph{spectrum} of $\rho$, is uniquely determined by the state $\rho$,
which means that all diagonalisations of $\rho$ have the same eigenvalues.

Before moving forward, it is important to note that the eigenvalues
can be characterised as the outcome probabilities of a pure measurement
performed on the system.
\begin{cor}
\label{cor:computeegv}Let $\rho$ be a generic state, diagonalised
as $\rho=\sum_{i=1}^{r}p_{i}\alpha_{i}$. Then, one has $p_{i}=\left(\alpha_{i}^{\dagger}\middle|\rho\right)$,
for every $i\in\left\{ 1,\dots,r\right\} $.
\end{cor}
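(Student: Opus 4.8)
The plan is to prove the claim by induction on the diagonalisation procedure from Theorem~\ref{thm:diago}, using Corollary~\ref{cor:dagger distinguishable} (which guarantees that the daggers $\left\{\alpha_{i}^{\dagger}\right\}_{i=1}^{r}$ coexist in an observation-test and satisfy $\left(\alpha_{i}^{\dagger}\middle|\alpha_{j}\right)=\delta_{ij}$) together with the probability balance (Theorem~\ref{thm:probability balance}).

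First I would recall the structure of the diagonalisation: $\rho=\sum_{i=1}^{r}p_{i}\alpha_{i}$ where $p_{i}=p_{*,i}\prod_{j<i}\left(1-p_{*,j}\right)$ and, at each step, $\rho_{i}=p_{*,i}\alpha_{i}+\left(1-p_{*,i}\right)\rho_{i+1}$ with $\left(\alpha_{i}^{\dagger}\middle|\rho_{i+1}\right)=0$ by Theorem~\ref{thm:probability balance}. The key observation is that $\alpha_{i}^{\dagger}$ annihilates $\rho_{i+1}=\sigma_{i}$, hence annihilates every state contained in $\sigma_{i}$, in particular all $\alpha_{j}$ with $j>i$; and for $j<i$ one has $\left(\alpha_{i}^{\dagger}\middle|\alpha_{j}\right)=0$ by Corollary~\ref{cor:dagger distinguishable} (the daggers form a perfectly distinguishing test), while $\left(\alpha_{i}^{\dagger}\middle|\alpha_{i}\right)=1$ by the state-effect duality (Theorem~\ref{thm:duality}). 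So applying $\alpha_{i}^{\dagger}$ directly to the decomposition $\rho=\sum_{j=1}^{r}p_{j}\alpha_{j}$ gives
\[
\left(\alpha_{i}^{\dagger}\middle|\rho\right)=\sum_{j=1}^{r}p_{j}\left(\alpha_{i}^{\dagger}\middle|\alpha_{j}\right)=\sum_{j=1}^{r}p_{j}\,\delta_{ij}=p_{i},
\]
which is exactly the claim. This is essentially a one-line computation once the orthonormality relation $\left(\alpha_{i}^{\dagger}\middle|\alpha_{j}\right)=\delta_{ij}$ is in hand.

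The only subtlety I would be careful about is that Corollary~\ref{cor:dagger distinguishable} is stated for pure states satisfying $\left(\alpha_{i}^{\dagger}\middle|\alpha_{j}\right)=0$ for all $j>i$; I need to verify that the eigenstates produced by the diagonalisation procedure do satisfy this hypothesis. But this was already checked inside the proof of Theorem~\ref{thm:diago}: since $\alpha_{j}$ for $j>i$ is contained in $\sigma_{i}$ and $\left(\alpha_{i}^{\dagger}\middle|\sigma_{i}\right)=0$ forces $\left(\alpha_{i}^{\dagger}\middle|\alpha_{j}\right)=0$ (as $\alpha_{j}$ appears in a convex decomposition of $\sigma_{i}$ with strictly positive weight and the effect is non-negative on all states). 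Hence Corollary~\ref{cor:dagger distinguishable} applies and yields $\left(\alpha_{i}^{\dagger}\middle|\alpha_{j}\right)=\delta_{ij}$ for all $i,j\in\left\{1,\dots,r\right\}$. I do not anticipate a genuine obstacle here; the main thing is simply to invoke the right already-proven ingredients in the right order rather than to prove anything new. One could alternatively give a self-contained argument: apply $\alpha_{i}^{\dagger}$ at the $i$-th stage, use $\left(\alpha_{i}^{\dagger}\middle|\rho_{i}\right)=p_{*,i}$ (from $\left(\alpha_{i}^{\dagger}\middle|\alpha_{i}\right)=1$ and $\left(\alpha_{i}^{\dagger}\middle|\sigma_{i}\right)=0$) and unwind the recursion $\rho_{i}=\frac{1}{\prod_{j<i}\left(1-p_{*,j}\right)}\bigl(\rho-\sum_{j<i}p_{j}\alpha_{j}\bigr)$ together with $\left(\alpha_{i}^{\dagger}\middle|\alpha_{j}\right)=0$ for $j<i$ — but the direct computation above is cleaner, so that is what I would write.
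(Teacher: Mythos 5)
Your proof is correct and follows essentially the same route as the paper: the paper's own proof is a one-liner invoking Theorem~\ref{thm:diago} together with Corollary~\ref{cor:conjecture proved} to get $\left(\alpha_{i}^{\dagger}\middle|\alpha_{j}\right)=\delta_{ij}$ and then applying $\alpha_{i}^{\dagger}$ to the decomposition, exactly as you do. Your only deviation is routing the orthonormality through Corollary~\ref{cor:dagger distinguishable} (via the triangular condition checked inside the diagonalisation procedure) rather than through Corollary~\ref{cor:conjecture proved} (via perfect distinguishability), which is an immaterial difference.
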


\begin{proof}
Immediate from the combination of theorem~\ref{thm:diago} and corollary~\ref{cor:conjecture proved},
because $\left(\alpha_{i}^{\dagger}\middle|\alpha_{j}\right)=\delta_{ij}$.
\end{proof}

\subsection{Diagonalisation of the invariant state}

Let us examine the properties of the diagonalisations of the invariant
states. In some respect they are special: first of all, all the eigenvalues
are equal.
\begin{prop}
\label{prop:diagonalization chi d-level}For every non-trivial system,
there exists a (strictly) positive integer $d$ such that
\begin{enumerate}
\item every diagonalisation of the invariant state consists of exactly $d$
pure states;
\item the eigenvalues of the invariant states are all equal to $\frac{1}{d}$.
\end{enumerate}
\end{prop}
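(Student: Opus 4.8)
The plan is to reduce the statement to two facts already proved: Proposition~\ref{prop:steerpstar}, which says that \emph{every} normalised pure effect takes the value $p_*$ on the invariant state, and Corollary~\ref{cor:computeegv}, which recovers the eigenvalues of a diagonalised state by applying to it the daggers of its eigenstates. First I would recall that, by transitivity of the action of reversible channels (a consequence of Purification) together with Proposition~\ref{prop:uniqueness invariant}, the invariant state $\chi$ of a non-trivial system is unique, and by Theorem~\ref{thm:diago} it admits at least one diagonalisation into perfectly distinguishable pure states; the job is to show that the number of terms and the eigenvalues are forced.

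The core step runs as follows. Let $\chi = \sum_{i=1}^{r} p_i \alpha_i$ be \emph{any} diagonalisation of $\chi$, with $\left\{\alpha_i\right\}_{i=1}^{r}$ perfectly distinguishable pure states. By Corollary~\ref{cor:computeegv}, $p_i = \left(\alpha_i^{\dagger} \middle| \chi\right)$ for each $i$. Since each $\alpha_i^{\dagger}$ is a normalised pure effect of the system, Proposition~\ref{prop:steerpstar} gives $\left(\alpha_i^{\dagger} \middle| \chi\right) = p_*$, where $p_*$ is the maximum eigenvalue of $\chi$. Hence $p_i = p_*$ for every $i$, i.e.\ all eigenvalues in this diagonalisation coincide with $p_*$. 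Applying the deterministic effect to the diagonalisation (equivalently, using $\mathrm{tr}\,\chi = 1$, valid since the theory is causal), one gets $1 = \sum_{i=1}^{r} p_i = r\, p_*$, so $r = 1/p_*$. Because $p_* > 0$ (the maximum eigenvalue of any state is strictly positive, and $p_* \le 1$), and because $r$ is by construction a positive integer, setting $d := r = 1/p_*$ yields a strictly positive integer that depends only on $\chi$, and not on the chosen diagonalisation. Therefore every diagonalisation of $\chi$ has exactly $d$ terms, and every eigenvalue of $\chi$ equals $p_* = 1/d$, which are exactly claims (1) and (2).

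I do not expect a real obstacle here, since the whole content has been front-loaded into Proposition~\ref{prop:steerpstar} and Corollary~\ref{cor:computeegv}. The only point requiring a little care is that the computation must be carried out for an \emph{arbitrary} diagonalisation, rather than only for the one produced by the algorithm in the proof of Theorem~\ref{thm:diago}, so that the independence of $d$ from the diagonalisation is genuinely established — note that this already furnishes the special case of uniqueness of the spectrum for the invariant state, anticipating subsection~\ref{subsec:Uniqueness-of-the}. But since Corollary~\ref{cor:computeegv} and Proposition~\ref{prop:steerpstar} are stated precisely for an arbitrary diagonalisation and an arbitrary normalised pure effect, this generality is automatic, and no additional argument is needed.
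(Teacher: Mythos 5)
Your proposal is correct and follows essentially the same route as the paper's own proof: both apply Corollary~\ref{cor:computeegv} to an arbitrary diagonalisation of $\chi$, invoke Proposition~\ref{prop:steerpstar} to identify every eigenvalue with $p_*$, and conclude $r=1/p_*$ is independent of the chosen diagonalisation. No gaps.
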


\begin{proof}
Let $\chi=\sum_{i=1}^{r}p_{i}\alpha_{i}$ be a diagonalisation of
the invariant state $\chi$. By corollary~\ref{cor:computeegv},
$p_{i}=\left(\alpha_{i}^{\dagger}\middle|\chi\right)$, but by proposition~\ref{prop:steerpstar}
we have $\left(\alpha_{i}^{\dagger}\middle|\chi\right)=p_{*}$, whence
$p_{i}=p_{*}$ for every $i$. It follows that $p_{*}=\frac{1}{r}$.
Now consider another diagonalisation of $\chi$: $\chi=\sum_{i=1}^{r'}p'_{i}\alpha'_{i}$.
Repeating the same argument, we conclude that $p'_{i}=p_{*}=\frac{1}{r'}$.
This means that $r=r'=:d$.
\end{proof}
We will refer to $d$ as the \emph{dimension} of the system, for reasons
that will become clear soon.

Let us show that the set of states $\left\{ \alpha_{i}\right\} _{i=1}^{d}$
arising in any diagonalisation of the invariant state is \emph{maximal}.
\begin{prop}
\label{prop:maximal}Let $\chi$ be written as a uniform mixture of
pure states of the form $\chi=\frac{1}{d}\sum_{i=1}^{d}\alpha_{i}$,
where $d$ is the dimension of the system. Then
\begin{enumerate}
\item $\left\{ \alpha_{i}\right\} _{i=1}^{d}$ is a maximal set of perfectly
distinguishable pure states;
\item $\left\{ \alpha_{i}^{\dagger}\right\} _{i=1}^{d}$ is a pure observation-test.
\end{enumerate}
\end{prop}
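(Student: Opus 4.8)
The strategy is to derive both claims purely from proposition~\ref{prop:steerpstar} (every normalised pure effect $a$ satisfies $\left(a\middle|\chi\right)=p_{*}$, with $p_{*}=\tfrac1d$ by proposition~\ref{prop:diagonalization chi d-level}) together with the internality of the invariant state. Note that, a priori, being given only the equation $\chi=\tfrac1d\sum_{i=1}^{d}\alpha_{i}$ does \emph{not} tell us that the $\alpha_{i}$ are perfectly distinguishable, or even distinct; this has to be established first.

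First I would show that $\left(\alpha_{k}^{\dagger}\middle|\alpha_{i}\right)=\delta_{ki}$ for all $i,k\in\left\{ 1,\dots,d\right\}$. Fix $k$. Since $\alpha_{k}^{\dagger}\in\mathsf{PurEff}_{1}\left(\mathrm{A}\right)$, proposition~\ref{prop:steerpstar} gives $\left(\alpha_{k}^{\dagger}\middle|\chi\right)=\tfrac1d$; on the other hand, expanding $\chi$ and using $\left(\alpha_{k}^{\dagger}\middle|\alpha_{k}\right)=1$ gives $\left(\alpha_{k}^{\dagger}\middle|\chi\right)=\tfrac1d\bigl(1+\sum_{i\ne k}\left(\alpha_{k}^{\dagger}\middle|\alpha_{i}\right)\bigr)$. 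Comparing, $\sum_{i\ne k}\left(\alpha_{k}^{\dagger}\middle|\alpha_{i}\right)=0$, and since every term is non-negative each one vanishes. In particular $\left(\alpha_{i}^{\dagger}\middle|\alpha_{j}\right)=0$ whenever $j>i$, so corollary~\ref{cor:dagger distinguishable} applies: the states $\left\{ \alpha_{i}\right\} _{i=1}^{d}$ are perfectly distinguishable pure states and the pure effects $\left\{ \alpha_{i}^{\dagger}\right\} _{i=1}^{d}$ coexist in a single observation-test. This already gives the distinguishability half of part~1 and the ``coexistence'' input needed for part~2.

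For maximality I would argue by contradiction. Since sharp theories with purification have transitive action (Transitivity), the invariant state $\chi$ is internal by proposition~\ref{prop:uniqueness invariant}. Suppose some state $\rho_{0}$ could be adjoined so that $\left\{ \alpha_{1},\dots,\alpha_{d},\rho_{0}\right\}$ remains perfectly distinguishable; then there is an effect $a_{0}$ in the distinguishing observation-test with $\left(a_{0}\middle|\rho_{0}\right)=1$ and $\left(a_{0}\middle|\alpha_{j}\right)=0$ for $j=1,\dots,d$. Hence $\left(a_{0}\middle|\chi\right)=\tfrac1d\sum_{j=1}^{d}\left(a_{0}\middle|\alpha_{j}\right)=0$. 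But internality gives a decomposition $\chi=q\rho_{0}+\left(1-q\right)\tau$ with $q\in\left(0,1\right]$ and $\tau$ normalised, so $\left(a_{0}\middle|\chi\right)\ge q\left(a_{0}\middle|\rho_{0}\right)=q>0$, a contradiction. Thus $\left\{ \alpha_{i}\right\} _{i=1}^{d}$ is maximal, proving part~1.

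Finally, for part~2, set $e:=\sum_{i=1}^{d}\alpha_{i}^{\dagger}$; by proposition~\ref{prop:sufficientfortest} it suffices to prove $e=u$. Because the $\alpha_{i}^{\dagger}$ coexist in an observation-test, coarse-graining shows $e$ is a legitimate effect with $\left(e\middle|\sigma\right)\le1$ for every normalised $\sigma$. Now $\left(e\middle|\chi\right)=\sum_{i=1}^{d}\left(\alpha_{i}^{\dagger}\middle|\chi\right)=d\cdot\tfrac1d=1$. Given any normalised state $\rho$, internality of $\chi$ yields $\chi=q\rho+\left(1-q\right)\tau$ with $q\in\left(0,1\right]$; then $1=\left(e\middle|\chi\right)=q\left(e\middle|\rho\right)+\left(1-q\right)\left(e\middle|\tau\right)$ with both $\left(e\middle|\rho\right),\left(e\middle|\tau\right)\le1$, which forces $\left(e\middle|\rho\right)=1$. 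Hence $e$ is a deterministic effect, so $e=u$ by the uniqueness of the deterministic effect, and $\left\{ \alpha_{i}^{\dagger}\right\} _{i=1}^{d}$ is an observation-test consisting of pure effects. The only delicate points are bookkeeping ones: recognising that distinguishability of the $\alpha_{i}$ must be \emph{derived} rather than assumed, and checking that $e$ is genuinely an effect bounded by $u$ so that coarse-graining is legitimately invoked; once these are secured, every step is a one-line consequence of the quoted results, so there is no serious obstacle.
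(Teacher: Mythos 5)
Your proof is correct and follows essentially the same route as the paper's: proposition~\ref{prop:steerpstar} gives $\left(\alpha_{i}^{\dagger}\middle|\chi\right)=\tfrac1d$, whence orthogonality and corollary~\ref{cor:dagger distinguishable} yield distinguishability, and internality of $\chi$ is used both for maximality and to show $\sum_{i}\alpha_{i}^{\dagger}$ is deterministic. You are merely a bit more explicit than the paper about deriving $\left(\alpha_{i}^{\dagger}\middle|\alpha_{j}\right)=\delta_{ij}$ and about checking $\sum_{i}\alpha_{i}^{\dagger}$ is a bona fide effect, which is a welcome but inessential refinement.
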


\begin{proof}
Let us prove the two properties.
\begin{enumerate}
\item Suppose the invariant state is decomposed as $\chi=\frac{1}{d}\sum_{i=1}^{d}\alpha_{i}$.
Then, by proposition~\ref{prop:steerpstar} one has $\left(\alpha_{i}^{\dagger}\middle|\chi\right)=\frac{1}{d}$,
for any $i\in\left\lbrace 1,\ldots,d\right\rbrace $, and consequently
$\left(\alpha_{i}^{\dagger}\middle|\alpha_{j}\right)=\delta_{ij}$
. By corollary~\ref{cor:dagger distinguishable}, the states $\left\{ \alpha_{i}\right\} _{i=1}^{d}$
are perfectly distinguishable. Suppose by contradiction that this
is \emph{not} a maximal set; then we can add the pure state $\alpha_{d+1}$
so that $\left\{ \alpha_{i}\right\} _{i=1}^{d+1}$ is a pure maximal
set. If $\left\{ a_{i}\right\} _{i=1}^{d+1}$ is the observation-test
that distinguishes them, one must have $\left(a_{d+1}\middle|\chi\right)=\frac{1}{d}\sum_{i=1}^{d}\left(a_{d+1}\middle|\alpha_{i}\right)=0$.
But $\left(a_{d+1}\middle|\chi\right)=0$ implies $\left(a_{d+1}\middle|\rho\right)=0$
for every $\rho$, since every state is contained in the invariant
state, which is internal. This is in contradiction with the hypothesis
$\left(a_{d+1}\middle|\alpha_{d+1}\right)=1$. Hence, we have proved
that the set $\left\{ \alpha_{i}\right\} _{i=1}^{d}$ is maximal.
\item Let us prove that $\left\{ \alpha_{i}^{\dagger}\right\} _{i=1}^{d}$
is an observation-test, namely $\sum_{i=1}^{d}\alpha_{i}^{\dagger}=u$.
By propositions~\ref{prop:steerpstar} and \ref{prop:diagonalization chi d-level},
we have
\[
\sum_{i=1}^{d}\left(\alpha_{i}^{\dagger}\middle|\chi\right)=\sum_{i=1}^{d}\frac{1}{d}=1.
\]
Since $\chi$ is internal, this means that $\sum_{i=1}^{d}\left(\alpha_{i}^{\dagger}\middle|\rho\right)=1$
for every normalised state $\rho\in\mathsf{St}_{1}\left(\mathrm{A}\right)$,
whence $\sum_{i=1}^{d}\alpha_{i}^{\dagger}$ is the deterministic
effect $u$.
\end{enumerate}
\end{proof}
Propositions~\ref{prop:diagonalization chi d-level} and \ref{prop:maximal}
imply that the invariant state is a uniform mixture of the states
in a pure maximal set. Remarkably, the converse holds too:\emph{ every}
pure maximal set, mixed with equal weights, yields the invariant state.
\begin{prop}
\label{prop:diagonalization chi d-level 2}Let $\left\{ \alpha_{i}\right\} _{i=1}^{r}$
be a pure maximal set. Then one has $r=d$ and $\chi=\frac{1}{d}\sum_{i=1}^{d}\alpha_{i}$.
\end{prop}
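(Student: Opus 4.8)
The plan is to construct, starting from the invariant state, a diagonalisation of $\chi$ that contains the given pure maximal set $\{\alpha_i\}_{i=1}^{r}$, and then to invoke maximality. I would first collect the facts I need: by Proposition~\ref{prop:diagonalization chi d-level} and Proposition~\ref{prop:steerpstar} the maximum eigenvalue of $\chi$ is $p_{*}=1/d$ and every normalised pure effect $a$ satisfies $\left(a\middle|\chi\right)=1/d$; $\chi$ is internal; and by Corollary~\ref{cor:conjecture proved} one has $\left(\alpha_i^{\dagger}\middle|\alpha_j\right)=\delta_{ij}$. Setting $\sigma_0:=\chi$, the claim I would prove by induction on $j=0,1,\dots,r-1$ is that
\[
\chi=\frac{1}{d}\sum_{i=1}^{j}\alpha_i+\frac{d-j}{d}\,\sigma_j,\qquad \sigma_j=\frac{d}{d-j}\Bigl(\chi-\frac{1}{d}\sum_{i=1}^{j}\alpha_i\Bigr),
\]
with $\sigma_j$ a normalised state satisfying $\left(\alpha_i^{\dagger}\middle|\sigma_j\right)=0$ for all $i\le j$.

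The inductive step is the only genuine computation. From $\left(\alpha_{j+1}^{\dagger}\middle|\chi\right)=1/d$ and $\left(\alpha_{j+1}^{\dagger}\middle|\alpha_i\right)=0$ for $i\le j$ one gets $\left(\alpha_{j+1}^{\dagger}\middle|\sigma_j\right)=\tfrac{1}{d-j}$, while for an arbitrary normalised pure effect $\psi^{\dagger}$ the same computation gives $\left(\psi^{\dagger}\middle|\sigma_j\right)=\tfrac{1}{d-j}\bigl(1-\sum_{i=1}^{j}(\psi^{\dagger}|\alpha_i)\bigr)\le\tfrac{1}{d-j}$. Hence, using $p_{*}=p^{*}$ (Proposition~\ref{prop:pstar=00003Dpstar} and eq.~\eqref{eq:p^*}), the maximum eigenvalue of $\sigma_j$ equals $\tfrac{1}{d-j}$ and is attained at $\alpha_{j+1}$; since $\alpha_{j+1}^{\dagger}$ is the unique pure effect giving $1$ on $\alpha_{j+1}$, Corollary~\ref{cor:pupstar} tells us $\alpha_{j+1}$ is a maximum-eigenvalue eigenstate of $\sigma_j$. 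Theorem~\ref{thm:probability balance} then yields $\sigma_j=\tfrac{1}{d-j}\alpha_{j+1}+\tfrac{d-j-1}{d-j}\sigma_{j+1}$ with $\left(\alpha_{j+1}^{\dagger}\middle|\sigma_{j+1}\right)=0$; and because $\sigma_{j+1}$ is contained in $\sigma_j$, the orthogonality $\left(\alpha_i^{\dagger}\middle|\sigma_{j+1}\right)=0$ is inherited for $i\le j$. One checks $\sigma_j$ is mixed for $j\le d-2$ (else $p_{*,\sigma_j}=1\le\tfrac{1}{d-j}$ fails), so Corollary~\ref{cor:pupstar} genuinely applies; for $j=d-1$ the equality $\left(\alpha_d^{\dagger}\middle|\sigma_{d-1}\right)=1$ forces $\sigma_{d-1}=\alpha_d$ directly by Proposition~\ref{prop:uniqueness of state}.

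Running the iteration: first $r\le d$, since if $r>d$ then after $d$ steps $\chi=\tfrac{1}{d}\sum_{i=1}^{d}\alpha_i$, and applying $\alpha_{d+1}^{\dagger}$ gives $\tfrac{1}{d}=\left(\alpha_{d+1}^{\dagger}\middle|\chi\right)=\tfrac{1}{d}\sum_{i=1}^{d}(\alpha_{d+1}^{\dagger}|\alpha_i)=0$, absurd. So the $r$ steps can be performed, producing $\chi=\tfrac{1}{d}\sum_{i=1}^{r}\alpha_i+\tfrac{d-r}{d}\,\sigma_r$ with $\left(\alpha_i^{\dagger}\middle|\sigma_r\right)=0$ for all $i\le r$. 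If $r<d$, diagonalise $\sigma_r=\sum_k q_k\gamma_k$ into perfectly distinguishable pure states $\gamma_k$ (there is at least one, $\sigma_r$ being a nonzero state); each $\gamma_k$ is contained in $\sigma_r$, so $\left(\alpha_i^{\dagger}\middle|\gamma_k\right)=0$ for all $i$, and Corollary~\ref{cor:dagger distinguishable}, applied with the ordering $\alpha_1,\dots,\alpha_r,\gamma_1,\gamma_2,\dots$, shows $\{\alpha_1,\dots,\alpha_r,\gamma_1,\dots\}$ is a strictly larger set of perfectly distinguishable pure states, contradicting the maximality of $\{\alpha_i\}_{i=1}^{r}$. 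Hence $r=d$, the iteration exhausts $\chi$, and $\chi=\tfrac{1}{d}\sum_{i=1}^{d}\alpha_i$, as desired. The case $d=1$ is trivial (a single pure state).

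The step I expect to be the main obstacle is the inductive one: establishing that $\alpha_{j+1}$ is indeed a maximum-eigenvalue eigenstate of $\sigma_j$, which is what makes Theorem~\ref{thm:probability balance} applicable and hands us the orthogonality relation driving the whole construction. This rests on the two-sided estimate $\left(\alpha_{j+1}^{\dagger}\middle|\sigma_j\right)=\tfrac{1}{d-j}=p_{*,\sigma_j}$, whose upper bound uses non-negativity of $(\psi^{\dagger}|\alpha_i)$ together with $p_{*}=p^{*}$, and whose conclusion via Corollary~\ref{cor:pupstar} crucially uses the uniqueness half of the state-effect duality (so that the hypothesis ``for every pure effect $a$ with $(a|\alpha)=1$'' is a single condition). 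The bookkeeping of which states are mixed (so that Corollary~\ref{cor:pupstar} is legitimate) and the treatment of the terminal step are the remaining places to be careful.
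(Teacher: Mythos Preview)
Your proof is correct and follows essentially the same inductive strategy as the paper: peel off the $\alpha_i$'s one at a time from $\chi$, showing at each step that $\alpha_{j+1}$ is a maximum-eigenvalue eigenstate of the remainder $\sigma_j$, and then invoke maximality to force $r=d$. The only minor differences are cosmetic---the paper bounds the maximum eigenvalue of $\sigma_j$ by a direct contradiction with $\left(\alpha^{\dagger}\middle|\chi\right)=1/d$ rather than via your explicit computation of $\left(\psi^{\dagger}\middle|\sigma_j\right)$, and it tracks perfect distinguishability of $\{\alpha_i\}_{i=1}^{n}\cup\{\sigma_n\}$ throughout rather than the orthogonality $\left(\alpha_i^{\dagger}\middle|\sigma_j\right)=0$---but the substance is the same.
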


\begin{proof}
We know that every pure state is an eigenstate of $\chi$ with maximum
eigenvalue. Specifically, we must have
\[
\chi=\frac{1}{d}\alpha_{1}+\frac{d-1}{d}\sigma_{1}
\]
for a state $\sigma_{1}$ that is perfectly distinguishable from $\alpha_{1}$
(cf.\ theorem~\ref{thm:probability balance}). For every $n<r$,
assume that the invariant state can be decomposed in the diagonalisation
process as 
\begin{equation}
\chi=\frac{1}{d}\left(\sum_{i=1}^{n}\alpha_{i}\right)+\frac{d-n}{d}\sigma_{n},\label{eq:inductionbasis}
\end{equation}
where the states $\left\{ \alpha_{i}\right\} _{i=1}^{n}\cup\left\{ \sigma_{n}\right\} $
are perfectly distinguishable, and we prove that a decomposition of
the same form can be found for $n+1$. To this purpose, we use the
relations
\begin{equation}
\left(\alpha_{i}\middle|\chi\right)=\frac{1}{d}\label{eq:same}
\end{equation}
for all $i\in\left\{ 1,\dots,r\right\} $, following from proposition~\ref{prop:steerpstar},
and valid for all normalised pure effects, and 
\begin{equation}
\left(\alpha_{i}^{\dagger}\middle|\alpha_{j}\right)=\delta_{ij}\label{eq:alfalfa}
\end{equation}
for all $i,j\in\left\{ 1,\dots,r\right\} $, following from the assumption
that the states $\left\lbrace \alpha_{i}\right\rbrace _{i=1}^{r}$
are perfectly distinguishable (cf.\ corollary~\ref{cor:conjecture proved}).
Eqs.~\eqref{eq:inductionbasis}, \eqref{eq:same}, and \eqref{eq:alfalfa}
yield the relation 
\[
\frac{1}{d}=\left(\alpha_{n+1}^{\dagger}\middle|\chi\right)=\frac{d-n}{d}\left(\alpha_{n+1}^{\dagger}\middle|\sigma_{n}\right),
\]
or, equivalently,
\begin{equation}
\left(\alpha_{n+1}^{\dagger}\middle|\sigma_{n}\right)=\frac{1}{d-n}.\label{eq:sigman}
\end{equation}
Hence, by proposition~\ref{prop:pstar=00003Dpstar}, the maximum
eigenvalue of $\sigma_{n}$ is greater than or equal to $\frac{1}{d-n}$.
In fact, it must be equal to $\frac{1}{d-n}$, because otherwise the
corresponding eigenstate $\alpha$ would lead to the contradiction,
recalling eq.~\eqref{eq:inductionbasis}:
\[
\frac{1}{d}=\left(\alpha^{\dagger}\middle|\chi\right)\ge\frac{d-n}{d}\left(\alpha^{\dagger}\middle|\sigma_{n}\right)>\frac{1}{d}.
\]
Hence, eq.~\eqref{eq:sigman} and corollary~\ref{cor:pupstar} imply
that $\alpha_{n+1}$ is an eigenstate of $\sigma_{n}$ with maximum
eigenvalue. Therefore, $\sigma_{n}$ can be decomposed as $\sigma_{n}=\frac{1}{d-n}\alpha_{n+1}+\frac{d-n-1}{d-n}\sigma_{n+1}$,
where the states $\alpha_{n+1}$ and $\sigma_{n+1}$ are perfectly
distinguishable. Inserting this relation into eq.~\eqref{eq:inductionbasis}
we obtain 
\[
\chi=\frac{1}{d}\left(\sum_{i=1}^{n+1}\alpha_{i}\right)+\frac{d-n-1}{d}\sigma_{n+1}.
\]
Now, since the states $\left\{ \alpha_{i}\right\} _{i=1}^{n}\cup\left\{ \sigma_{n}\right\} $
are perfectly distinguishable, so are the states $\left\{ \alpha_{i}\right\} _{i=1}^{n+1}\cup\left\{ \sigma_{n+1}\right\} $.
This proves the validity of eq.~\eqref{eq:inductionbasis} for every
$n\leq r$. To conclude the proof, consider eq.~\eqref{eq:inductionbasis}
for $n=r$. The condition that set $\left\{ \alpha_{i}\right\} _{i=1}^{r}$
is maximal implies that the state $\sigma_{r}$ should not arise in
the decomposition, otherwise the pure states contained in $\sigma_{r}$
would be perfectly distinguishable from the pure states $\left\{ \alpha_{i}\right\} _{i=1}^{r}$,
contradicting maximality. This is possible only if the corresponding
probability is zero, namely only if one has $r=d$.
\end{proof}
In summary, the above proposition guarantees that all pure maximal
sets of a system have the same cardinality, equal to $d$, and this
is why we called $d$ the dimension of the system. Moreover, any set
of $d$ perfectly distinguishable pure states is guaranteed to be
maximal. As a consequence, every state can have at most $d$ terms
in its diagonalisations. Then, clearly every diagonalisation $\rho=\sum_{i=1}^{r}p_{i}\alpha_{i}$,
where $p_{i}>0$ for all $i\in\left\{ 1,\ldots,r\right\} $, and $r\leq d$,
can be rewritten as $\rho=\sum_{i=1}^{d}p_{i}\alpha_{i}$ by completing
$\left\{ \alpha_{i}\right\} _{i=1}^{r}$ to a maximal set $\left\{ \alpha_{i}\right\} _{i=1}^{d}$,
and taking some of the eigenvalues to be zero. This means that the
spectrum of a state can always be taken to be a vector $\mathbf{p}$
with $d$ entries. This will play an important role in the next chapter. 

In other works \cite{Barnum-interference,Scandolo-thesis,QPL15,Krumm-Muller,Colleagues},
the above result and other properties of diagonalisations were derived
from the Strong Symmetry axiom \cite{Muller-self-duality,Barnum-interference},
stating that all pure maximal sets are connected by reversible channels
(see also subsection~\ref{subsec:Unrestricted-reversibility} for
its thermodynamic implications in sharp theories with purification).
Our result shows that the properties of diagonalisation can be derived
from a very different set of axiom: Causality, Purity Preservation,
Pure Sharpness, and Purification, and do not need Strong Symmetry,
unlike in our previous works \cite{Scandolo-thesis,QPL15}.

The diagonalisation of the invariant state induces a one-to-one correspondence
between maximal sets of perfectly distinguishable pure states and
\emph{pure sharp measurements} \cite{Chiribella-Yuan2014,Chiribella-Yuan2015},
which can be characterised as follows. 
\begin{defn}
An observation-test $\left\{ a_{i}\right\} _{i=1}^{n}$ is a \emph{pure
sharp measurement} if every effect $a_{i}$ is pure and normalised.
\end{defn}

Under the validity of our axioms, every pure sharp measurement can
be written as $\left\{ \alpha_{i}^{\dagger}\right\} _{i=1}^{n}$,
for some set of pure states $\left\{ \alpha_{i}\right\} _{i=1}^{n}$
(cf.\ theorem~\ref{thm:duality}). 
\begin{prop}
\label{prop:pure test chi}For every pure maximal set $\left\{ \alpha_{i}\right\} _{i=1}^{d}$,
the effects $\left\{ \alpha_{i}^{\dagger}\right\} _{i=1}^{d}$ form
a pure sharp measurement. Conversely, for every pure sharp measurement
$\left\{ \alpha_{i}^{\dagger}\right\} _{i=1}^{n}$, the states $\left\{ \alpha_{i}\right\} _{i=1}^{n}$
form a pure maximal set, and therefore $n=d$.
\end{prop}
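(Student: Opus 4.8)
The plan is to reduce both implications to facts already proved about the invariant state, chiefly Propositions~\ref{prop:diagonalization chi d-level}, \ref{prop:maximal}, and \ref{prop:diagonalization chi d-level 2}, together with the state-effect duality (Theorem~\ref{thm:duality}) and Corollaries~\ref{cor:conjecture proved} and \ref{cor:dagger distinguishable}. For the forward direction, let $\left\{\alpha_i\right\}_{i=1}^d$ be a pure maximal set. Proposition~\ref{prop:diagonalization chi d-level 2} then tells us both that its cardinality is forced to be $d$ and that $\chi=\frac{1}{d}\sum_{i=1}^d\alpha_i$. Proposition~\ref{prop:maximal}(2) applied to this decomposition says precisely that $\left\{\alpha_i^{\dagger}\right\}_{i=1}^d$ is a pure observation-test; since by Theorem~\ref{thm:duality} each $\alpha_i^{\dagger}$ lies in $\mathsf{PurEff}_1\left(\mathrm{A}\right)$, it is pure and normalised, so $\left\{\alpha_i^{\dagger}\right\}_{i=1}^d$ is a pure sharp measurement. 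This direction is essentially bookkeeping.

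For the converse, suppose $\left\{\alpha_i^{\dagger}\right\}_{i=1}^n$ is a pure sharp measurement. Being an observation-test, it satisfies $\sum_{i=1}^n\alpha_i^{\dagger}=u$, and by Theorem~\ref{thm:duality} each $\alpha_i$ is a well-defined normalised pure state with $\left(\alpha_i^{\dagger}\middle|\alpha_i\right)=1$. Evaluating the test on $\alpha_j$ gives $\sum_{i=1}^n\left(\alpha_i^{\dagger}\middle|\alpha_j\right)=1$; as all summands are probabilities and the $j$-th already equals $1$, the remaining ones vanish, so $\left(\alpha_i^{\dagger}\middle|\alpha_j\right)=\delta_{ij}$. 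In particular $\left(\alpha_i^{\dagger}\middle|\alpha_j\right)=0$ whenever $j>i$, so Corollary~\ref{cor:dagger distinguishable} makes $\left\{\alpha_i\right\}_{i=1}^n$ a set of perfectly distinguishable pure states.

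The remaining point, and the only one requiring a genuine (though brief) argument, is maximality. I would argue by contradiction: if $\left\{\alpha_i\right\}_{i=1}^n$ were not maximal, there would be a pure state $\alpha_{n+1}$ with $\left\{\alpha_i\right\}_{i=1}^{n+1}$ perfectly distinguishable, whence Corollary~\ref{cor:conjecture proved} forces $\left(\alpha_i^{\dagger}\middle|\alpha_{n+1}\right)=0$ for all $i\le n$; but applying $\sum_{i=1}^n\alpha_i^{\dagger}=u$ to $\alpha_{n+1}$ yields $\sum_{i=1}^n\left(\alpha_i^{\dagger}\middle|\alpha_{n+1}\right)=1$, a contradiction. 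Hence $\left\{\alpha_i\right\}_{i=1}^n$ is a pure maximal set, and Proposition~\ref{prop:diagonalization chi d-level 2} then gives $n=d$. The subtlety to keep in mind throughout is to ensure that the pure states $\alpha_i$ associated with a given pure sharp measurement are exactly the daggers of its effects, so that the identities $\left(\alpha_i^{\dagger}\middle|\alpha_i\right)=1$ hold verbatim — this is exactly what Theorem~\ref{thm:duality} provides.
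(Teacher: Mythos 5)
Your proof is correct and follows essentially the same route as the paper: the forward direction via the diagonalisation $\chi=\frac{1}{d}\sum_i\alpha_i$ and Proposition~\ref{prop:maximal}, and the converse by extracting $\left(\alpha_i^{\dagger}\middle|\alpha_j\right)=\delta_{ij}$ from the normalisation $\sum_i\alpha_i^{\dagger}=u$ and then arguing maximality by contradiction. The only (cosmetic) difference is in the last step: the paper extends the set to a maximal one and re-invokes the forward direction to force $\sum_{i=n+1}^{d}\alpha_i^{\dagger}=0$, whereas you apply $\sum_{i=1}^{n}\alpha_i^{\dagger}=u$ directly to the hypothetical extra state --- both hinge on the same identity.
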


\begin{proof}
Let $\left\{ \alpha_{i}\right\} _{i=1}^{d}$ be a pure maximal set.
By proposition~\ref{prop:diagonalization chi d-level 2}, we know
that $\frac{1}{d}\sum_{i=1}^{d}\alpha_{i}$ is a diagonalisation of
the invariant state $\chi$. Then, proposition~\ref{prop:maximal}
implies that $\left\{ \alpha_{i}^{\dagger}\right\} _{i=1}^{d}$ is
a pure sharp measurement.

Conversely, suppose that $\left\{ a_{i}\right\} _{i=1}^{n}$ is a
pure sharp measurement, then $a_{i}=\alpha_{i}^{\dagger}$ for some
state $\alpha_{i}$. By corollary~\ref{cor:conjecture proved}, we
know that $\left(\alpha_{i}^{\dagger}\middle|\alpha_{j}\right)=\delta_{ij}$,
and moreover $\left\{ a_{i}\right\} _{i=1}^{n}$ is an observation-test,
thus $\left\{ a_{i}\right\} _{i=1}^{n}$ distinguishes the states
$\left\{ \alpha_{i}\right\} _{i=1}^{n}$ perfectly. The states $\left\{ \alpha_{i}\right\} _{i=1}^{n}$
must form a pure maximal set. This can be proved by contradiction:
suppose the set $\left\{ \alpha_{i}\right\} _{i=1}^{n}$ is not maximal,
and extend it to a maximal set $\left\{ \alpha_{i}\right\} _{i=1}^{d}$.
Then, by the first part of this proof we have that $\left\{ \alpha_{i}^{\dagger}\right\} _{i=1}^{d}$
is an observation-test. By Causality, we then obtain 
\[
\sum_{i=1}^{d}\alpha_{i}^{\dagger}=u=\sum_{i=1}^{n}a_{i}=\sum_{i=1}^{n}\alpha_{i}^{\dagger},
\]
having used the equality $a_{i}=\alpha_{i}^{\dagger}$. In conclusion,
we have obtained the relation $\sum_{i=n+1}^{d}\alpha_{i}^{\dagger}=0$,
which can be satisfied only if $n=d$. Hence, the states $\left\{ \alpha_{i}\right\} _{i=1}^{d}$
form a pure maximal set.
\end{proof}
As a consequence, the product of two pure maximal sets is a pure maximal
set for the composite system. This property was called ``information
locality'' by Hardy \cite{Hardy-informational-2,hardy2013}, and
it has been recently shown, along with a weaker version of Purity
Preservation, to play a major role in the emergence of local classical
observers in GPTs \cite{Objectivity}. In words, the dimension of
a composite system $\mathrm{AB}$ is the product of the dimension
of the components: $d_{\mathrm{AB}}=d_{\mathrm{A}}d_{\mathrm{B}}$.
\begin{prop}[Information locality]
\label{prop:information locality}If $\left\{ \alpha_{i}\right\} _{i=1}^{d_{\mathrm{A}}}$
is a pure maximal set for system $\mathrm{A}$ and $\left\{ \beta_{j}\right\} _{j=1}^{d_{\mathrm{B}}}$
is a pure maximal set for system $\mathrm{B}$, then 
\[
\left\{ \alpha_{i}\otimes\beta_{j}\right\} _{i\in\left\{ 1,\dots,d_{\mathrm{A}}\right\} ,j\in\left\{ 1,\dots,d_{\mathrm{B}}\right\} }
\]
is a pure maximal set for the composite system $\mathrm{AB}$.
\end{prop}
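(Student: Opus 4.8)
The plan is to reduce the claim to the characterisation of pure sharp measurements established in Proposition~\ref{prop:pure test chi}. First I would observe that each state $\alpha_{i}\otimes\beta_{j}$ is pure, by Purity Preservation, being the parallel composition of the pure states $\alpha_{i}$ and $\beta_{j}$. It therefore suffices to exhibit a pure sharp measurement whose associated pure maximal set (via Proposition~\ref{prop:pure test chi}) is exactly $\left\{\alpha_{i}\otimes\beta_{j}\right\}$; the natural candidate is $\left\{\alpha_{i}^{\dagger}\otimes\beta_{j}^{\dagger}\right\}$.

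Second, I would check that each $\alpha_{i}^{\dagger}\otimes\beta_{j}^{\dagger}$ is a normalised pure effect. It is pure by Purity Preservation, and it is an effect, being the parallel composition of the effects $\alpha_{i}^{\dagger}$ and $\beta_{j}^{\dagger}$. Since it is an effect it takes values in $\left[0,1\right]$ on all states, and because $\left(\alpha_{i}^{\dagger}\otimes\beta_{j}^{\dagger}\middle|\alpha_{i}\otimes\beta_{j}\right)=\left(\alpha_{i}^{\dagger}\middle|\alpha_{i}\right)\left(\beta_{j}^{\dagger}\middle|\beta_{j}\right)=1$, its norm equals $1$, so it is normalised. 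The same computation, combined with the uniqueness clause of the state-effect duality (Theorem~\ref{thm:duality}), identifies it as the dagger of $\alpha_{i}\otimes\beta_{j}$, that is $\left(\alpha_{i}\otimes\beta_{j}\right)^{\dagger}=\alpha_{i}^{\dagger}\otimes\beta_{j}^{\dagger}$.

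Third, I would verify that $\left\{\alpha_{i}^{\dagger}\otimes\beta_{j}^{\dagger}\right\}$ is an observation-test. By Proposition~\ref{prop:maximal}, $\left\{\alpha_{i}^{\dagger}\right\}_{i=1}^{d_{\mathrm{A}}}$ and $\left\{\beta_{j}^{\dagger}\right\}_{j=1}^{d_{\mathrm{B}}}$ are observation-tests, so $\sum_{i}\alpha_{i}^{\dagger}=u_{\mathrm{A}}$ and $\sum_{j}\beta_{j}^{\dagger}=u_{\mathrm{B}}$; hence $\sum_{i,j}\alpha_{i}^{\dagger}\otimes\beta_{j}^{\dagger}=u_{\mathrm{A}}\otimes u_{\mathrm{B}}=u_{\mathrm{AB}}$, using that the deterministic effect of $\mathrm{AB}$ factorises. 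By Proposition~\ref{prop:sufficientfortest}, a set of effects summing to $u_{\mathrm{AB}}$ is an observation-test, so $\left\{\alpha_{i}^{\dagger}\otimes\beta_{j}^{\dagger}\right\}$ is a pure sharp measurement. Applying Proposition~\ref{prop:pure test chi} then yields that $\left\{\alpha_{i}\otimes\beta_{j}\right\}$ is a pure maximal set for $\mathrm{AB}$, and as a byproduct $d_{\mathrm{AB}}=d_{\mathrm{A}}d_{\mathrm{B}}$.

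There is essentially no serious obstacle: the argument is a bookkeeping chain through Purity Preservation, the factorisation $u_{\mathrm{AB}}=u_{\mathrm{A}}\otimes u_{\mathrm{B}}$, Proposition~\ref{prop:sufficientfortest}, and Proposition~\ref{prop:pure test chi}. The one point that should be handled with care is the identification $\left(\alpha_{i}\otimes\beta_{j}\right)^{\dagger}=\alpha_{i}^{\dagger}\otimes\beta_{j}^{\dagger}$, which must be justified by invoking the uniqueness part of the state-effect duality rather than taken for granted; everything else is routine.
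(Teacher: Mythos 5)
Your proof is correct and follows essentially the same route as the paper: form the product effects $\alpha_{i}^{\dagger}\otimes\beta_{j}^{\dagger}$, check they constitute a pure sharp measurement, and apply Proposition~\ref{prop:pure test chi} in both directions. The only difference is cosmetic — the paper simply asserts that the parallel composition of two observation-tests is an observation-test, whereas you derive it from the sum condition and Proposition~\ref{prop:sufficientfortest}, and you make explicit the identification $\left(\alpha_{i}\otimes\beta_{j}\right)^{\dagger}=\alpha_{i}^{\dagger}\otimes\beta_{j}^{\dagger}$, which the paper leaves implicit.
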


\begin{proof}
By proposition~\ref{prop:pure test chi}, $\left\{ \alpha_{i}^{\dagger}\right\} _{i=1}^{d_{\mathrm{A}}}$
and $\left\{ \beta_{j}^{\dagger}\right\} _{j=1}^{d_{\mathrm{B}}}$
are two observation-tests for systems $\mathrm{A}$ and $\mathrm{B}$,
respectively. Now, the product of two observation-tests is an observation-test
(physically, corresponding to two measurements performed in parallel).
Hence, the product $\left\{ \alpha_{i}^{\dagger}\otimes\beta_{j}^{\dagger}\right\} _{i\in\left\{ 1,\dots,d_{\mathrm{A}}\right\} ,j\in\left\{ 1,\dots,d_{\mathrm{B}}\right\} }$
is an observation-test on the composite system $\mathrm{AB}$. Moreover,
each effect $\alpha_{i}^{\dagger}\otimes\beta_{j}^{\dagger}$ is pure,
due to Purity Preservation, and normalised. Using proposition~\ref{prop:pure test chi}
again, we obtain that $\left\{ \alpha_{i}\otimes\beta_{j}\right\} _{i\in\left\{ 1,\dots,d_{\mathrm{A}}\right\} ,j\in\left\{ 1,\dots,d_{\mathrm{B}}\right\} }$
is a pure maximal set. 
\end{proof}

\subsubsection{Diagonalisation of internal states}

As a side remark, we show here that internal states have exactly $d$
non-zero eigenvalues. In the quantum case, this amounts to saying
that internal states are full-rank density matrices. 
\begin{prop}
Every internal state $\omega$ has precisely $d$ non-vanishing eigenvalues
in every diagonalisation.
\end{prop}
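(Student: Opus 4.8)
The plan is to argue by contradiction, exploiting the fact (established in Proposition~\ref{prop:diagonalization chi d-level 2} and in the discussion following it) that every pure maximal set has exactly $d$ elements, and that consequently every diagonalisation of a state has at most $d$ terms. So I would start by fixing a diagonalisation $\omega=\sum_{i=1}^{r}p_{i}\alpha_{i}$ with $p_{i}>0$ for all $i$ and $\left\{\alpha_{i}\right\}_{i=1}^{r}$ perfectly distinguishable pure states; since we already know $r\leq d$, it suffices to rule out the possibility $r<d$.

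Suppose then, for contradiction, that $r<d$. The set $\left\{\alpha_{i}\right\}_{i=1}^{r}$ cannot be maximal, because every pure maximal set has cardinality $d$. By the very definition of maximality, there must therefore exist a pure state $\beta$ such that $\left\{\alpha_{1},\ldots,\alpha_{r},\beta\right\}$ is still a set of perfectly distinguishable pure states. Applying Corollary~\ref{cor:conjecture proved} to this enlarged set yields $\left(\beta^{\dagger}\middle|\alpha_{i}\right)=0$ for every $i\in\left\{1,\ldots,r\right\}$, and hence
\[
\left(\beta^{\dagger}\middle|\omega\right)=\sum_{i=1}^{r}p_{i}\left(\beta^{\dagger}\middle|\alpha_{i}\right)=0.
\]

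On the other hand, $\omega$ is internal, so $\beta$ is contained in $\omega$: we can write $\omega=q\beta+\left(1-q\right)\tau$ for some $q\in\left(0,1\right]$ and some normalised state $\tau$. Since $\left(\beta^{\dagger}\middle|\beta\right)=1$ by the state-effect duality (Theorem~\ref{thm:duality}) and $\left(\beta^{\dagger}\middle|\tau\right)\geq0$, this gives $\left(\beta^{\dagger}\middle|\omega\right)\geq q>0$, contradicting the previous display. Hence $r=d$, i.e.\ every diagonalisation of $\omega$ has precisely $d$ non-vanishing eigenvalues.

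I do not expect a serious obstacle here: the argument is a direct combination of the diagonalisation theorem, the cardinality of pure maximal sets, the orthogonality relation $\left(\alpha_{i}^{\dagger}\middle|\alpha_{j}\right)=\delta_{ij}$ of Corollary~\ref{cor:conjecture proved}, and the definition of internal state. The only point deserving care is the extension step: one must invoke the definition of a maximal set (rather than merely the failure of some particular extension) to produce the pure state $\beta$, which is legitimate precisely because a non-maximal set of perfectly distinguishable states is, by definition, one admitting such an extension, and a set of fewer than $d$ perfectly distinguishable pure states is necessarily non-maximal.
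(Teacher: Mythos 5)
Your proof is correct and follows essentially the same route as the paper's: assume $r<d$, extend the non-maximal set by a pure state, observe that the dagger of the added state annihilates $\omega$, and contradict internality. The only cosmetic difference is that you add a single state $\beta$ and pair $\beta^{\dagger}$ with the decomposition $\omega=q\beta+(1-q)\tau$ directly, whereas the paper completes to a full pure maximal set first; the underlying argument is identical.
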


\begin{proof}
Consider a complete state $\omega$ and one of its diagonalisations
$\omega=\sum_{i=1}^{r}p_{i}\alpha_{i}$, where $r\leq d$, and the
$p_{i}$'s are non-vanishing, for all $i\in\left\lbrace 1,\ldots,r\right\rbrace $.
Suppose by contradiction that $r<d$; this means that the states $\left\lbrace \alpha_{i}\right\rbrace _{i=1}^{r}$
do not form a pure maximal set, and therefore we can complete it by
adding $d-r$ states $\left\lbrace \alpha_{i}\right\rbrace _{i=r+1}^{d}$.
In this way we can rewrite the diagonalisation of $\omega$ as $\omega=\sum_{i=1}^{d}p_{i}\alpha_{i}$,
where $p_{i}=0$ for $i\in\left\lbrace r+1,\ldots,d\right\rbrace $,
and the states $\left\lbrace \alpha_{i}\right\rbrace _{i=1}^{d}$
are a pure maximal set. Take any $\alpha_{i}$ with $i\in\left\lbrace r+1,\ldots,d\right\rbrace $;
we have 
\[
0=p_{i}=\left(\alpha_{i}^{\dagger}\middle|\omega\right).
\]
On the other hand, $\omega$ is internal, therefore $\left(\alpha_{i}^{\dagger}\middle|\rho\right)=0$
for all states $\rho\in\mathsf{St}_{1}\left(\mathrm{A}\right)$, and
$i\in\left\lbrace r+1,\ldots,d\right\rbrace $. Hence $\left(\alpha_{i}^{\dagger}\middle|\alpha_{i}\right)=0$,
which is a contradiction. We conclude that $r=d$.
\end{proof}
Consequently, the pure states arising in every diagonalisation of
$\omega$ form a maximal set.

The converse also holds, and the proof has already appeared in \cite[corollary 19]{Chiribella-informational}
(and does not make use of the stronger axioms assumed therein).
\begin{prop}
Let $\left\lbrace \alpha_{i}\right\rbrace _{i=1}^{d}$ be a maximal
set of perfectly distinguishable pure states. Every convex combination
of the $\alpha_{i}$'s with all \emph{non-zero} coefficients yields
a complete state.
\end{prop}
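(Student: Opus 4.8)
The plan is to reduce the statement to the internality of the invariant state $\chi$, which holds in any sharp theory with purification by Proposition~\ref{prop:uniqueness invariant}. Write the given state as $\omega=\sum_{i=1}^{d}p_{i}\alpha_{i}$ with $p_{i}>0$ for all $i$, and set $p_{\min}:=\min_{i}p_{i}>0$; since $\sum_{i}p_{i}=1$ we have $dp_{\min}\le 1$. The first step is to display $\chi$ inside a convex decomposition of $\omega$. By Proposition~\ref{prop:diagonalization chi d-level 2}, the pure maximal set $\left\{ \alpha_{i}\right\} _{i=1}^{d}$ gives $\chi=\frac{1}{d}\sum_{i=1}^{d}\alpha_{i}$, so
\[
\omega=dp_{\min}\,\chi+\sum_{i=1}^{d}\left(p_{i}-p_{\min}\right)\alpha_{i}.
\]
If $dp_{\min}=1$ then all $p_{i}=\frac{1}{d}$ and $\omega=\chi$, which is internal, and we are done. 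Otherwise the residual term equals $\left(1-dp_{\min}\right)\sigma$, where $\sigma:=\frac{1}{1-dp_{\min}}\sum_{i=1}^{d}\left(p_{i}-p_{\min}\right)\alpha_{i}$ is a normalised state, since its coefficients are non-negative and sum to $1$. Hence $\omega=q\chi+\left(1-q\right)\sigma$ with $q:=dp_{\min}\in\left(0,1\right]$, i.e.\ $\chi$ is contained in $\omega$.

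The second step is a short transitivity argument for the ``contained in'' relation. Fix an arbitrary $\rho\in\mathsf{St}_{1}\left(\mathrm{A}\right)$. Because $\chi$ is internal, $\chi=r\rho+\left(1-r\right)\tau$ for some $r\in\left(0,1\right]$ and $\tau\in\mathsf{St}_{1}\left(\mathrm{A}\right)$. Substituting into the decomposition of $\omega$ gives
\[
\omega=qr\,\rho+\left[q\left(1-r\right)\tau+\left(1-q\right)\sigma\right].
\]
The bracketed vector has total weight $q\left(1-r\right)+\left(1-q\right)=1-qr$; when $qr<1$ it can be rewritten as $\left(1-qr\right)\tau'$ with $\tau'$ a convex combination of the normalised states $\tau$ and $\sigma$, hence itself normalised, so $\omega=qr\,\rho+\left(1-qr\right)\tau'$ exhibits $\rho$ as contained in $\omega$; when $qr=1$ we simply have $\omega=\rho$, trivially contained in $\omega$. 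Since $\rho$ was arbitrary, $\omega$ is internal (complete).

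I do not expect a genuine obstacle here: the whole argument is the reduction to $\chi$ via Proposition~\ref{prop:diagonalization chi d-level 2} together with bookkeeping of convex weights, using that $\chi$ is internal (Proposition~\ref{prop:uniqueness invariant}). The only points requiring a little care are the degenerate cases $dp_{\min}=1$ and $qr=1$, and checking that the auxiliary states $\sigma$ and $\tau'$ are properly normalised convex combinations.
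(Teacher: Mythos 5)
Your proof is correct and follows essentially the same route as the paper: exhibit $\chi$ inside a convex decomposition of $\omega$ via $\chi=\frac{1}{d}\sum_{i}\alpha_{i}$ and then invoke the internality of $\chi$ (the paper uses the weight $p_{\min}$ where you use the slightly larger $dp_{\min}$, and it leaves the transitivity-of-containment step implicit where you spell it out, but these are cosmetic differences).
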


\begin{proof}
Let $\omega$ be a mixture of the pure states $\left\lbrace \alpha_{i}\right\rbrace _{i=1}^{d}$
with $d$ non-zero probabilities $\omega=\sum_{i=1}^{d}p_{i}\alpha_{i}$,
where $p_{i}>0$. Consider the minimum eigenvalue $p_{\min}=\min_{i}\left\{ p_{i}\right\} $.
Then we can write $\omega=p_{\min}\chi+\left(1-p_{\min}\right)\sigma$,
where $\sigma$ is defined as
\[
\sigma:=\frac{1}{1-p_{\min}}\sum_{i=1}^{d}\left(p_{i}-\frac{p_{\min}}{d}\right)\alpha_{i},
\]
and it is well-defined because $p_{i}\geq\frac{p_{\min}}{d}$. Since
$\chi$ is contained in $\rho$, and $\chi$ is internal, we conclude
that $\rho$ is internal too.
\end{proof}

\subsubsection{Double stochasticity of the transition matrices}

Given two pure maximal sets, $\left\{ \alpha_{i}\right\} _{i=1}^{d}$
and $\left\{ \alpha'_{i}\right\} _{i=1}^{d}$, we call the matrix
$T_{ij}=\left(\alpha_{i}^{\dagger}\middle|\alpha'_{j}\right)$ a \emph{transition
matrix}. With this definition, we have the following result \cite[lemma 4]{QPL15}.
\begin{lem}
\label{lem:doubly-stochastic}In sharp theories with purification
all transition matrices are doubly stochastic\footnote{Recall that a doubly stochastic matrix is a matrix with non-negative
entries, in which every row and every column sum to 1.}.
\end{lem}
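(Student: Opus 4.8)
The plan is to verify the three defining properties of a doubly stochastic matrix separately: non-negativity of the entries, unit row sums, and unit column sums. Non-negativity is immediate: by the state-effect duality (Theorem~\ref{thm:duality}) each $\alpha_i^{\dagger}$ is a well-defined normalised pure effect and each $\alpha'_j$ is a normalised state, so $T_{ij}=\left(\alpha_i^{\dagger}\middle|\alpha'_j\right)$ is a probability, hence lies in $[0,1]$.

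For the column sums, fix $j$. Since $\left\{\alpha_i\right\}_{i=1}^{d}$ is a pure maximal set, Proposition~\ref{prop:pure test chi} tells us that $\left\{\alpha_i^{\dagger}\right\}_{i=1}^{d}$ is a pure sharp measurement, in particular an observation-test, so $\sum_{i=1}^{d}\alpha_i^{\dagger}=u$. Applying both sides to $\alpha'_j$ and using that $\alpha'_j$ is normalised gives $\sum_{i=1}^{d}T_{ij}=\sum_{i=1}^{d}\left(\alpha_i^{\dagger}\middle|\alpha'_j\right)=\left(u\middle|\alpha'_j\right)=1$.

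For the row sums, fix $i$, and here I would use the invariant state. Since $\left\{\alpha'_j\right\}_{j=1}^{d}$ is a pure maximal set, Proposition~\ref{prop:diagonalization chi d-level 2} gives $\chi=\frac{1}{d}\sum_{j=1}^{d}\alpha'_j$, that is $\sum_{j=1}^{d}\alpha'_j=d\chi$. On the other hand, $\alpha_i^{\dagger}$ is a normalised pure effect, so Propositions~\ref{prop:steerpstar} and~\ref{prop:diagonalization chi d-level} give $\left(\alpha_i^{\dagger}\middle|\chi\right)=p_{*}=\frac{1}{d}$. Combining these, $\sum_{j=1}^{d}T_{ij}=\sum_{j=1}^{d}\left(\alpha_i^{\dagger}\middle|\alpha'_j\right)=d\left(\alpha_i^{\dagger}\middle|\chi\right)=1$. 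This establishes double stochasticity.

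The only point that requires attention is that the row and column arguments are genuinely different: unlike in quantum theory there is no a priori symmetry exchanging $\left(\alpha_i^{\dagger}\middle|\alpha'_j\right)$ with $\left(\alpha_j'^{\dagger}\middle|\alpha_i\right)$, so one cannot deduce the row statement from the column statement by transposition. One invokes the ``observation-test'' property (Proposition~\ref{prop:pure test chi}) for one index and the ``uniform mixture equals $\chi$'' property (Propositions~\ref{prop:diagonalization chi d-level 2} and~\ref{prop:steerpstar}) for the other. Everything else is a direct bookkeeping application of the results already established about the diagonalisation of the invariant state, so I expect no serious obstacle.
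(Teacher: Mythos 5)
Your proof is correct and is essentially the paper's own argument: the paper omits a direct proof of this lemma and instead proves the stronger lemma~\ref{lem:channelmatrix} (with a unital channel $\mathcal{D}$ inserted between effect and state), whose proof, specialised to $\mathcal{D}=\mathcal{I}$, uses exactly your three steps — probabilities for non-negativity, the observation-test property of $\left\{\alpha_i^{\dagger}\right\}_{i=1}^{d}$ for one set of sums, and $\chi=\frac{1}{d}\sum_j\alpha'_j$ together with $\left(\alpha_i^{\dagger}\middle|\chi\right)=\frac{1}{d}$ for the other. Your closing remark that the row and column arguments are genuinely asymmetric is accurate and matches the structure of the paper's proof.
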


We do not report the proof of this lemma, since in section~\ref{sec:Majorisation-and-unital}
we will prove a stronger result (lemma~\ref{lem:channelmatrix}),
which will imply lemma~\ref{lem:doubly-stochastic}. Moreover, the
proof of lemma~\ref{lem:channelmatrix} will be virtually identical
to the proof of lemma~\ref{lem:doubly-stochastic}.

\subsection{Uniqueness of the diagonalisation\label{subsec:Uniqueness-of-the}}

Thanks to our axioms, the diagonalisation of a state is unique, up
to the obvious freedom arising in the presence of degeneracy among
the eigenvalues. This is a non-trivial consequence of the axioms:
notably \cite{Krumm-Muller,Krumm-thesis} exhibited examples of GPTs
where states can be diagonalised, but the same state can have more
than one diagonalisation, with different spectra.

To take degeneracy into account, given a diagonalisation $\rho=\sum_{i=1}^{r}p_{i}\alpha_{i}$
of $\rho$, we define the \emph{reduced spectrum} of $\rho$, as the
set of the \emph{distinct} eigenvalues of $\rho$, ordered in strictly
decreasing order $\lambda_{1}>\lambda_{2}>\ldots>\lambda_{s}>0$,
and we rewrite the diagonalisation as 
\[
\rho=\sum_{k=1}^{s}\lambda_{k}\Pi_{k},
\]
where
\[
\Pi_{k}:=\sum_{i:p_{i}=\lambda_{k}}\alpha_{i},
\]
and the sum is over the $\alpha_{i}$'s arising in the given diagonalisation
of $\rho$ whose eigenvalue is $\lambda_{k}$. When expressed in this
form, the diagonalisation is unique. Now we present the main theorem.
\begin{thm}
\label{thm:uniqueness diago}Let $\rho=\sum_{k=1}^{s}\lambda_{k}\Pi_{k}$
and $\rho=\sum_{l=1}^{s'}\lambda'_{l}\Pi'_{l}$ be two diagonalisations
of the same state. Then, one has $s=s'$, $\lambda_{k}=\lambda'_{k}$,
$\Pi_{k}=\Pi'_{k}$, for all $k\in\left\{ 1,\dots,s\right\} $.
\end{thm}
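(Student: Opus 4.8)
The plan is to split the statement into two halves: first, that the full list of eigenvalues counted with multiplicity is determined by $\rho$ (which already gives $s=s'$, $\lambda_{k}=\lambda'_{k}$, and equal multiplicities $m_{k}$), and second, that the projectors then agree. For the eigenvalues I would complete both diagonalisations to $d$ terms, extending $\left\{ \alpha_{i}\right\} $ and $\left\{ \beta_{j}\right\} $ to pure maximal sets $\left\{ \alpha_{i}\right\} _{i=1}^{d}$, $\left\{ \beta_{j}\right\} _{j=1}^{d}$ by adjoining eigenstates of eigenvalue $0$, so $\rho=\sum_{i=1}^{d}p_{i}\alpha_{i}=\sum_{j=1}^{d}q_{j}\beta_{j}$ with $\mathbf{p},\mathbf{q}$ in weakly decreasing order. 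By corollary~\ref{cor:computeegv} and corollary~\ref{cor:conjecture proved}, $p_{i}=\left(\alpha_{i}^{\dagger}\middle|\rho\right)=\sum_{j}\left(\alpha_{i}^{\dagger}\middle|\beta_{j}\right)q_{j}$, i.e.\ $\mathbf{p}=T\mathbf{q}$ where $T_{ij}:=\left(\alpha_{i}^{\dagger}\middle|\beta_{j}\right)$, and symmetrically $\mathbf{q}=S\mathbf{p}$ with $S_{ji}:=\left(\beta_{j}^{\dagger}\middle|\alpha_{i}\right)$. By lemma~\ref{lem:doubly-stochastic} both $T$ and $S$ are doubly stochastic, so $\mathbf{p}$ is majorised by $\mathbf{q}$ and $\mathbf{q}$ is majorised by $\mathbf{p}$ (Hardy--Littlewood--P\'olya); since $\sum_{i}p_{i}=\sum_{j}q_{j}=\mathrm{tr}\,\rho$ and both vectors are already sorted, mutual majorisation forces $\mathbf{p}=\mathbf{q}$, whence $s=s'$, $\lambda_{k}=\lambda'_{k}$, $m_{k}=m'_{k}$.

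Next I would extract the structure of $T$. Knowing $\mathbf{p}=\mathbf{q}$, the relation $\mathbf{p}=T\mathbf{p}$ with $T$ doubly stochastic forces $T$ to be block-diagonal with respect to the partition $\left\{ B_{k}\right\} $ of $\left\{ 1,\dots,d\right\} $ into level sets of $\mathbf{p}$: extremality of the largest value $\lambda_{1}$ forces the rows indexed by $B_{1}$ to be supported on the columns indexed by $B_{1}$, a mass count then forces the columns indexed by $B_{1}$ to be supported on those rows, and one iterates. From block-diagonality one reads off $\left(\alpha_{i}^{\dagger}\middle|\Pi'_{l}\right)=\delta_{[i\in B_{l}]}=\left(\alpha_{i}^{\dagger}\middle|\Pi_{l}\right)$ for all $i,l$, dually $\left(\beta_{j}^{\dagger}\middle|\Pi_{l}-\Pi'_{l}\right)=0$, and moreover the effect $e_{k}:=\sum_{i\in B_{k}}\alpha_{i}^{\dagger}$ satisfies $\left(e_{k}\middle|\Pi'_{l}\right)=m_{k}\delta_{kl}=\left(e_{k}\middle|\Pi_{l}\right)$. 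This already settles the non-degenerate blocks: if $m_{k}=1$ the $1\times1$ block of $T$ equals $1$, i.e.\ $\left(\alpha_{i_{0}}^{\dagger}\middle|\beta_{j_{0}}\right)=1$ for the unique $i_{0},j_{0}\in B_{k}$, and since $\alpha_{i_{0}}^{\dagger}$ is a normalised pure effect, proposition~\ref{prop:uniqueness of state} gives $\alpha_{i_{0}}=\beta_{j_{0}}$, hence $\Pi_{k}=\Pi'_{k}$.

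The main obstacle is the degenerate blocks $m_{k}\ge2$, where the individual eigenstates genuinely are not unique and only their sum can be. Here the plan is to apply proposition~\ref{prop:non-disturbing measurement} to $e_{k}$, which occurs with probability $1$ on $\omega_{k}:=\Pi_{k}/m_{k}$, obtaining a pure transformation $\mathcal{T}_{k}$ with $\mathcal{T}_{k}=_{\omega_{k}}\mathcal{I}$ and $\left(u\middle|\mathcal{T}_{k}\middle|\sigma\right)\le\left(e_{k}\middle|\sigma\right)$ for all $\sigma$. Then $\mathcal{T}_{k}\alpha_{i}=\alpha_{i}$ for $i\in B_{k}$ and $\mathcal{T}_{k}\alpha_{i}=0$ otherwise give $\mathcal{T}_{k}\rho=\lambda_{k}\Pi_{k}$, while $\left(e_{k}\middle|\Pi'_{l}\right)=m_{k}\delta_{kl}$ forces $\mathcal{T}_{k}\Pi'_{l}=0$ for $l\neq k$, so also $\mathcal{T}_{k}\rho=\lambda_{k}\mathcal{T}_{k}\Pi'_{k}$; hence $\Pi_{k}=\mathcal{T}_{k}\Pi'_{k}$, and symmetrically $\Pi'_{k}=\mathcal{T}'_{k}\Pi_{k}$.

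What remains — and this is the hard part — is to promote $\Pi_{k}=\mathcal{T}_{k}\Pi'_{k}$ to $\mathcal{T}_{k}\Pi'_{k}=\Pi'_{k}$. I expect to do this by first showing $e_{k}$ occurs with probability $1$ on each $\mathcal{T}_{k}\beta_{j}$, $j\in B_{k}$: from $\left(u\middle|\mathcal{T}_{k}\middle|\beta_{j}\right)\le\left(e_{k}\middle|\beta_{j}\right)=1$ and summing over $j\in B_{k}$ against $e'_{k}:=\sum_{l\in B_{k}}\beta_{l}^{\dagger}$ one gets a chain of inequalities whose total mass is $m_{k}$ on both ends, forcing equality throughout. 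This makes $\left\{ \alpha_{i}\right\} _{i\in B_{k}}$ and $\left\{ \beta_{j}\right\} _{j\in B_{k}}$ generate the same face of the state space; on that face the reversible channels preserving it act transitively on its pure states, so by proposition~\ref{prop:uniqueness invariant} the face has a unique invariant state, and by proposition~\ref{prop:diagonalization chi d-level 2} applied inside the face, $\Pi_{k}=m_{k}\chi_{\mathrm{face}}=\Pi'_{k}$. Pinning down the face characterisation of $e_{k}$ and transitivity within faces in sharp theories with purification is the step needing the most care, and I would handle it using the non-disturbing transformations of proposition~\ref{prop:non-disturbing measurement} together with propositions~\ref{prop:steerpstar} and~\ref{prop:uniqueness invariant}.
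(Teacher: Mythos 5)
Your treatment of the eigenvalues is correct and is in fact a cleaner route than the paper's: mutual majorisation of the two sorted spectra via the two doubly stochastic transition matrices immediately gives $\mathbf{p}=\mathbf{q}$, whereas the paper only extracts the top eigenvalue at each stage and iterates. The block-diagonality of $T$ with respect to the level sets of $\mathbf{p}$, and the resolution of the non-degenerate blocks via proposition~\ref{prop:uniqueness of state}, are also sound. The derivation $\Pi_{k}=\mathcal{T}_{k}\Pi'_{k}$ from proposition~\ref{prop:non-disturbing measurement} is fine as well.

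The genuine gap is in the final step for degenerate blocks. To promote $\mathcal{T}_{k}\Pi'_{k}=\Pi_{k}$ to $\Pi'_{k}=\Pi_{k}$ you would need $\mathcal{T}_{k}\beta_{j}=\beta_{j}$ for $j\in B_{k}$, but proposition~\ref{prop:non-disturbing measurement} only guarantees that $\mathcal{T}_{k}$ acts as the identity on states \emph{contained in} $\omega_{k}=\Pi_{k}/m_{k}$ — and whether $\beta_{j}$ is contained in $\omega_{k}$ is precisely what is at stake, so the argument is circular. Knowing $\left(e_{k}\middle|\beta_{j}\right)=1$ is not enough, because the non-disturbing transformation produced by proposition~\ref{prop:non-disturbing measurement} depends on the state (through its purification), not just on the effect; applying the proposition to $\Pi'_{k}/m_{k}$ instead yields a \emph{different} transformation $\mathcal{T}'_{k}$ with $\mathcal{T}'_{k}\Pi_{k}=\Pi'_{k}$, and composing the two only gives mutual convertibility, not equality. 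The fallback via faces — that the reversible channels preserving the face generated by $e_{k}$ act transitively on its pure states, that this face has a unique invariant state, and that proposition~\ref{prop:diagonalization chi d-level 2} applies "inside the face" — rests on a theory of faces that is nowhere developed in the paper; propositions~\ref{prop:uniqueness invariant} and~\ref{prop:diagonalization chi d-level 2} are statements about whole systems, and transferring them to faces is a substantial (and here unproved) claim. The paper closes exactly this gap without any face machinery: from block-diagonality one has $\left(\alpha_{i}^{\dagger}\middle|\chi'_{k}\right)=1/m'_{k}$ for all $i\in B_{k}$, where $\chi'_{k}:=\Pi'_{k}/m'_{k}$; since $1/m'_{k}$ is the maximum eigenvalue of $\chi'_{k}$, corollary~\ref{cor:pupstar} makes every $\alpha_{i}$ with $i\in B_{k}$ a maximum-eigenvalue eigenstate of $\chi'_{k}$, and the peeling induction of proposition~\ref{prop:diagonalization chi d-level 2} (which uses only theorem~\ref{thm:probability balance} and corollary~\ref{cor:pupstar}) then forces $m_{k}=m'_{k}$ and $\chi'_{k}=\frac{1}{m_{k}}\sum_{i\in B_{k}}\alpha_{i}$, i.e.\ $\Pi'_{k}=\Pi_{k}$. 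Replacing your face argument with this one would complete the proof.
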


\begin{proof}
Let the two diagonalisations be $\rho=\sum_{i}p_{i}\alpha_{i}$ and
$\rho=\sum_{j}p'_{j}\alpha'_{j}$. First of all, let us prove that
$\lambda_{1}=\lambda'_{1}=p_{*}$. This is a non-trivial statement
to prove. Indeed, our diagonalisation algorithm of theorem~\ref{thm:diago}
outputs the first eigenvalue to be the maximum eigenvalue $p_{*}$,
but there might exist other diagonalisation algorithms yielding different
eigenvalues, none of which equal to $p_{*}$. Let us define the degeneracies
$d_{1}=\left|\left\{ i:p_{i}=\lambda_{1}\right\} \right|$ and $d'_{1}=\left|\left\{ j:p'_{j}=\lambda'_{1}\right\} \right|$,
and assume $d_{1}\ge d_{1}'$ without loss of generality. By definition,
we have for $i\in\left\{ 1,\ldots,d_{1}\right\} $
\[
\lambda_{1}=\left(\alpha_{i}^{\dagger}\middle|\rho\right)=\sum_{j}p'_{j}\left(\alpha_{i}^{\dagger}\middle|\alpha_{j}'\right)=\sum_{j}T_{ij}p'_{j}\le\lambda'_{1},
\]
having used the fact that the transition matrix $T_{ij}=\left(\alpha_{i}^{\dagger}\middle|\alpha_{j}'\right)$
is doubly stochastic for $i,j\in\left\{ 1,\ldots,d\right\} $, with
$d$ the dimension of the system. By a similar argument, with $\lambda_{1}$
and $\lambda'_{1}$ interchanged, finally we get the equality $\lambda_{1}=\lambda'_{1}$.
Since this applies to all diagonalisations of $\rho$, including those
obtained with the algorithm of theorem~\ref{thm:diago}, for which
$\lambda_{1}=p_{*}$, we conclude that $\lambda_{1}=\lambda_{1}'=p_{*}$.

The above relation implies the equality $\sum_{j=1}^{d_{1}'}\left(\alpha_{i}^{\dagger}\middle|\alpha_{j}'\right)=1$,
for every $i\in\left\{ 1,\dots,d_{1}\right\} $ or, equivalently,
$\left(\alpha_{i}^{\dagger}|\chi_{1}'\right)=\frac{1}{d_{1}'}$ for
all $i\in\left\{ 1,\dots,d_{1}\right\} $, where $\chi_{1}':=\frac{1}{d_{1}'}\Pi'_{1}$.
Note that $\frac{1}{d_{1}'}$ is the maximum eigenvalue of $\chi_{1}'$
because the states $\left\lbrace \alpha'_{j}\right\rbrace _{j=1}^{d'_{1}}$
are perfectly distinguishable, and we have just proved that the maximum
eigenvalue of a state arises in every diagonalisation. Thus corollary~\ref{cor:pupstar}
implies that $\alpha_{i}$ is an eigenstate with maximum eigenvalue.
In particular, choosing $i=1$ we obtain the decomposition
\[
\chi_{1}'=\frac{1}{d_{1}'}\alpha_{1}+\frac{d_{1}'-1}{d_{1}'}\sigma_{1},
\]
where $\sigma_{1}$ is a suitable state, perfectly distinguishable
from $\alpha_{1}$. We are now in the position to repeat the argument
in the proof of proposition~\ref{prop:diagonalization chi d-level 2}
for the states $\left\lbrace \alpha_{i}\right\rbrace _{i=1}^{d_{1}}$,
to find that $d_{1}=d_{1}'$ and 
\[
\chi_{1}'=\frac{1}{d_{1}}\sum_{i=1}^{d_{1}}\alpha_{1}=\frac{1}{d_{1}}\Pi_{1}.
\]
Hence, we proved the equality $\Pi_{1}'=\Pi_{1}$. We can now define
the state 
\[
\rho_{2}:=\frac{1}{1-d_{1}\lambda_{1}}\left(\rho-\lambda_{1}\Pi_{1}\right)=\frac{1}{1-d_{1}\lambda_{1}}\left(\sum_{k=2}^{s}\lambda_{k}\Pi_{k}\right)=\frac{1}{1-d_{1}\lambda_{1}}\left(\sum_{l=2}^{s'}\lambda'_{l}\Pi_{l}'\right).
\]
Repeating the above argument, we can prove the equalities $\lambda_{2}=\lambda_{2}'$
and $\Pi_{2}=\Pi_{2}'$. Once all distinct eigenvalues have been scanned,
the normalisation of the probability distribution implies the condition
$s=s'$.
\end{proof}
A very close result was proved by Wilce in the framework of probabilistic
models with conjugates and Jordan algebras \cite{Royal-road-0,Royal-road}. 

Theorem~\ref{thm:uniqueness diago} shows that the diagonalisation
is unique up to the choice of the eigenstates when we have degeneracy:
only then do we have the freedom of choice of the eigenstates relative
to degenerate eigenvalues, i.e.\ eigenvalues arising more than once
in a diagonalisation. See \cite{QPL15} for another proof of the uniqueness
of the eigenvalues based on majorisation (and also a further axiom).

Theorem~\ref{thm:uniqueness diago} implies that we can associate
a unique probability distribution with every state: its spectrum.
The spectrum of a state will be the basis to define entropies and
resource monotones for the thermodynamics of isolated systems.

\subsection{Extending the diagonalisation to arbitrary vectors\label{subsec:Extending-the-diagonalisation}}

The diagonalisation theorem, proved for normalised states, can be
easily extended to arbitrary elements of the vector space $\mathsf{St}_{\mathbb{R}}\left(\mathrm{A}\right)$.
\begin{prop}
For every system $\mathrm{A}$ and for every vector $\xi\in\mathsf{St}_{\mathbb{R}}\left(\mathrm{A}\right)$
there exist a unique set of $d$ real numbers $\left\{ x_{i}\right\} _{i=1}^{d}$
and a maximal set of perfectly distinguishable states $\left\{ \alpha_{i}\right\} _{i=1}^{d}$
such that 
\[
\xi=\sum_{i=1}^{d}x_{i}\alpha_{i}.
\]
\end{prop}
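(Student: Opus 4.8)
The plan is to reduce the statement about an arbitrary vector $\xi$ to the diagonalisation theorem for normalised states (theorem~\ref{thm:diago}) together with the uniqueness theorem (theorem~\ref{thm:uniqueness diago}). The key observation is that every vector in $\mathsf{St}_{\mathbb{R}}\left(\mathrm{A}\right)$ can be written as a real linear combination of (at most two) normalised states, using the proper-cone decomposition $\xi=\xi_{+}-\xi_{-}$ with $\xi_{\pm}\in\mathsf{St}_{+}\left(\mathrm{A}\right)$.

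First I would handle the existence part. Given $\xi$, if $\xi=0$ the statement is trivial, so assume $\xi\neq0$. Write $\xi=\xi_{+}-\xi_{-}$ with $\xi_{\pm}$ in the cone of states; each $\xi_{\pm}$ is a non-negative multiple of a normalised state, say $\xi_{\pm}=c_{\pm}\overline{\xi}_{\pm}$ with $c_{\pm}\geq0$. The naive approach of diagonalising $\overline{\xi}_{+}$ and $\overline{\xi}_{-}$ separately fails because the two diagonalisations need not share the same maximal set $\left\{\alpha_{i}\right\}$. The fix is to consider instead the normalised state $\rho:=\frac{1}{2}\left(\overline{\xi}_{+}+\overline{\xi}_{-}\right)$ (or, more carefully, a convex combination chosen so that both $\overline{\xi}_{+}$ and $\overline{\xi}_{-}$ are ``supported'' on its eigenstates). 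Actually the cleanest route: pick any internal state $\omega$ and consider $\sigma_{\epsilon}:=\left(1-\epsilon\right)\overline{\xi}_{+}+\epsilon\,\overline{\xi}_{-}$; but the truly robust statement is that $\xi$ lies in the linear span of a single maximal set. To get this, I would diagonalise the normalised state $\tau:=\frac{c_{+}}{c_{+}+c_{-}}\overline{\xi}_{+}+\frac{c_{-}}{c_{+}+c_{-}}\overline{\xi}_{-}$ as $\tau=\sum_{i=1}^{d}q_{i}\alpha_{i}$ with $q_{i}\geq0$, and then argue that $\xi$ is a real linear combination of the very same $\alpha_{i}$'s. This last point is the crux: since $\xi=\left(c_{+}+c_{-}\right)\tau-2c_{-}\overline{\xi}_{-}$ does not obviously stay in $\mathrm{span}\left\{\alpha_{i}\right\}$, I instead argue directly that $\overline{\xi}_{-}$ (being ``contained'' in $\tau$ up to rescaling, via Pure Steering) is itself a non-negative combination of the $\alpha_{i}$'s — this follows because each state contained in $\tau$ has its eigenstates among those of $\tau$, which one extracts from corollary~\ref{cor:pupstar} and the uniqueness theorem. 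Then $\xi=\sum_{i}x_{i}\alpha_{i}$ with $x_{i}\in\mathbb{R}$, and padding with zero coefficients gives exactly $d$ terms.

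Next I would establish uniqueness. Suppose $\xi=\sum_{i=1}^{d}x_{i}\alpha_{i}=\sum_{j=1}^{d}y_{j}\beta_{j}$ for two maximal sets. Applying the pure effect $\alpha_{k}^{\dagger}$ and using corollary~\ref{cor:conjecture proved} ($\left(\alpha_{k}^{\dagger}\middle|\alpha_{i}\right)=\delta_{ki}$) gives $x_{k}=\left(\alpha_{k}^{\dagger}\middle|\xi\right)$, so the coefficients are determined by $\xi$ once the maximal set is fixed; similarly $y_{k}=\left(\beta_{k}^{\dagger}\middle|\xi\right)$. To see the two maximal sets yield the same multiset of numbers, I would split $\xi$ into its positive and negative parts relative to each decomposition (grouping the $\alpha_{i}$ with $x_{i}>0$, $x_{i}<0$, $x_{i}=0$), obtaining $\xi_{+}=\sum_{x_{i}>0}x_{i}\alpha_{i}$ and similarly for $\beta$; by uniqueness of the cone decomposition $\xi=\xi_{+}-\xi_{-}$ (which holds because the cone is proper and these are genuine elements of $\mathsf{St}_{+}$), and then by the uniqueness of diagonalisation of the normalised states $\overline{\xi}_{\pm}$ (theorem~\ref{thm:uniqueness diago}), the nonzero $x_{i}$ match the nonzero $y_{j}$ up to relabelling, and the eigenstates match up to the degeneracy freedom. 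Strictly speaking the proposition as stated asserts uniqueness of the set $\left\{x_{i}\right\}$, not of the $\alpha_{i}$, so it suffices to combine $x_{k}=\left(\alpha_{k}^{\dagger}\middle|\xi\right)$ with the matching of spectra of $\overline{\xi}_{+}$ and $\overline{\xi}_{-}$.

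The main obstacle I anticipate is the existence step — specifically, showing that a single maximal set of perfectly distinguishable pure states simultaneously diagonalises both $\xi_{+}$ and $\xi_{-}$. This requires a genuine argument (not just quoting theorem~\ref{thm:diago} twice), and the natural tool is the observation that the eigenstates of any state contained in $\tau$ are among the eigenstates of $\tau$, which in turn leans on corollary~\ref{cor:pupstar}, the Pure Steering theorem, and the uniqueness theorem~\ref{thm:uniqueness diago}. Once that common maximal set is secured, everything else is bookkeeping with the dagger relations $\left(\alpha_{i}^{\dagger}\middle|\alpha_{j}\right)=\delta_{ij}$.
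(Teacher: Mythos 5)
You have correctly located the crux---the need for a \emph{single} maximal set that simultaneously accommodates the positive and negative parts of $\xi$---but the resolution you propose does not work, and the uniqueness argument has a parallel flaw. The claim that ``each state contained in $\tau$ has its eigenstates among those of $\tau$'' is false already in quantum theory: every qubit state is contained in $\chi=\frac{1}{2}\mathbf{1}$, yet $\ket{+}\bra{+}$ is not a non-negative combination of $\ket{0}\bra{0}$ and $\ket{1}\bra{1}$. Containment controls supports, not eigenbases, so $\overline{\xi}_{-}$ need not lie in the span of the eigenstates of $\tau$, and the existence step collapses. Likewise, ``uniqueness of the cone decomposition $\xi=\xi_{+}-\xi_{-}$'' does not follow from properness of the cone: $\xi=\left(\xi_{+}+\eta\right)-\left(\xi_{-}+\eta\right)$ is another such decomposition for any $\eta\in\mathsf{St}_{+}\left(\mathrm{A}\right)$, and even the ``orthogonal'' decomposition you implicitly construct (with $\left(\alpha_{i}^{\dagger}\middle|\xi_{-}\right)=0$ whenever $x_{i}>0$) would require a separate uniqueness proof essentially as hard as the proposition itself.

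Both difficulties are resolved by a different device, the one used in the proof the paper cites: shift by the invariant state. Since $\chi$ is internal, for $\lambda>0$ large enough one has $\xi+\lambda\chi\in\mathsf{St}_{+}\left(\mathrm{A}\right)$ (a uniform bound follows from compactness of the state space). Diagonalise the normalised state $\left(\xi+\lambda\chi\right)/\left\Vert \xi+\lambda\chi\right\Vert =\sum_{i=1}^{d}p_{i}\alpha_{i}$ by theorem~\ref{thm:diago}; by proposition~\ref{prop:diagonalization chi d-level 2} the \emph{same} maximal set satisfies $\chi=\frac{1}{d}\sum_{i=1}^{d}\alpha_{i}$, so
\[
\xi=\sum_{i=1}^{d}\left(\left\Vert \xi+\lambda\chi\right\Vert p_{i}-\frac{\lambda}{d}\right)\alpha_{i},
\]
which gives existence with a single maximal set. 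For uniqueness, given two decompositions $\xi=\sum_{i}x_{i}\alpha_{i}=\sum_{j}y_{j}\beta_{j}$, add $\lambda\chi$ to both with $\lambda$ large enough that all coefficients $x_{i}+\frac{\lambda}{d}$ and $y_{j}+\frac{\lambda}{d}$ are non-negative; after normalisation these are two diagonalisations of one normalised state, so theorem~\ref{thm:uniqueness diago} forces the two multisets of eigenvalues to coincide, hence $\left\{ x_{i}\right\} =\left\{ y_{j}\right\} $. Your observation that $x_{k}=\left(\alpha_{k}^{\dagger}\middle|\xi\right)$ once the maximal set is fixed is correct and useful, but it does not by itself compare coefficients across different maximal sets.
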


We omit the proof, which is the same as in \cite[corollary 21]{Chiribella-informational}.
Again the $x_{i}$'s are called the eigenvalues of $\xi$, and the
$\alpha_{i}$'s are called the eigenstates of $\xi$. A similar result
was obtained also in \cite{Krumm-Muller} under different axioms.

Note that, since $\xi$ is a generic vector of $\mathsf{St}_{\mathbb{R}}\left(\mathrm{A}\right)$,
the eigenvalues are arbitrary real numbers; if instead $\xi$ is in
the cone $\mathsf{St}_{+}\left(\mathrm{A}\right)$, the eigenvalues
are non-negative real numbers.

Using this result, we can prove that the operational norm of a vector
in $\mathsf{St}_{\mathbb{R}}\left(\mathrm{A}\right)$ coincides with
the 1-norm of its spectrum $\mathbf{x}$ \cite{HOI}.
\begin{prop}
Let $\xi\in\mathsf{St}_{\mathbb{R}}\left(\mathrm{A}\right)$ be diagonalised
as $\xi=\sum_{i=1}^{d}x_{i}\alpha_{i}$. Then $\left\Vert \xi\right\Vert =\sum_{i=1}^{d}\left|x_{i}\right|$.
\end{prop}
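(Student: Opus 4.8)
The plan is to reduce the claim to the already-established facts that (i) the operational norm of a state $\rho$ in a causal theory equals $\mathrm{tr}\,\rho$, and (ii) for a vector $\xi\in\mathsf{St}_{\mathbb{R}}(\mathrm{A})$ the daggers $\{\alpha_i^{\dagger}\}_{i=1}^{d}$ of the eigenstates in a diagonalisation $\xi=\sum_{i=1}^{d}x_i\alpha_i$ form a pure sharp measurement (an observation-test) with $(\alpha_i^{\dagger}|\alpha_j)=\delta_{ij}$, by proposition~\ref{prop:pure test chi} and corollary~\ref{cor:conjecture proved}. First I would compute, for an arbitrary effect $a\in\mathsf{Eff}(\mathrm{A})$, the number $(a|\xi)=\sum_{i=1}^{d}x_i(a|\alpha_i)$. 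Writing $t_i:=(a|\alpha_i)\in[0,1]$, we get $(a|\xi)=\sum_i x_i t_i$, and hence $\sup_{a}(a|\xi)\le\sum_{i:x_i>0}x_i$ and $\inf_{a}(a|\xi)\ge -\sum_{i:x_i<0}(-x_i)=\sum_{i:x_i<0}x_i$, so that
\[
\|\xi\|=\sup_{a}(a|\xi)-\inf_{a}(a|\xi)\le\sum_{i:x_i>0}x_i-\sum_{i:x_i<0}x_i=\sum_{i=1}^{d}|x_i|.
\]
This gives the easy inequality and uses only that effects take values in $[0,1]$.

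For the reverse inequality, the key point is to exhibit a single effect that (nearly) attains the supremum and a single effect that attains the infimum. Let $P:=\{i:x_i>0\}$ and $N:=\{i:x_i<0\}$. I would take $a_+:=\sum_{i\in P}\alpha_i^{\dagger}$ and $a_-:=\sum_{i\in N}\alpha_i^{\dagger}$; these are genuine effects because $\{\alpha_i^{\dagger}\}_{i=1}^{d}$ is an observation-test (proposition~\ref{prop:pure test chi}), so any sub-sum of it is bounded above by $u$ and is therefore a valid effect. Using $(\alpha_i^{\dagger}|\alpha_j)=\delta_{ij}$ from corollary~\ref{cor:conjecture proved}, we compute $(a_+|\xi)=\sum_{i\in P}x_i=\sum_{i:x_i>0}x_i$ and $(a_-|\xi)=\sum_{i\in N}x_i=\sum_{i:x_i<0}x_i$. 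Hence $\sup_{a}(a|\xi)\ge\sum_{i:x_i>0}x_i$ and $\inf_{a}(a|\xi)\le\sum_{i:x_i<0}x_i$, which combined with the definition of $\|\xi\|$ yields $\|\xi\|\ge\sum_{i:x_i>0}x_i-\sum_{i:x_i<0}x_i=\sum_{i=1}^{d}|x_i|$. Together with the first paragraph this gives the equality.

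I do not expect a serious obstacle here: the only thing to be careful about is making sure the candidate effects $a_+,a_-$ are legitimate, which is exactly where I invoke that the daggers of a pure maximal set form an observation-test (proposition~\ref{prop:maximal}/\ref{prop:pure test chi}) — without diagonalisability and the state-effect duality this step would fail, as the square-bit example shows. A minor point: one should note that the same argument, specialised to $\xi=\rho$ a normalised state (all $x_i=p_i\ge0$), recovers $\|\rho\|=\sum_i p_i=\mathrm{tr}\,\rho$, consistent with the earlier proposition on causal theories; and for $\xi=\rho-\sigma$ it gives the trace-norm interpretation promised in subsection~\ref{subsec:Extending-the-diagonalisation}. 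This is worth a one-line remark but requires no extra work.
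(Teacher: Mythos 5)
Your proof is correct and follows essentially the same route as the paper: both obtain the upper bound from the fact that effects give values in $[0,1]$ on the eigenstates, and both attain the supremum (resp.\ infimum) with the effect $\sum_{x_i\geq 0}\alpha_i^{\dagger}$ (resp.\ $\sum_{x_i<0}\alpha_i^{\dagger}$), relying on the daggers of a pure maximal set forming an observation-test. Your phrasing of the upper bound via $t_i=(a|\alpha_i)\in[0,1]$ is if anything slightly more explicit than the paper's, but the argument is the same.
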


\begin{proof}
Let us separate the terms with non-negative eigenvalues from the terms
with negative eigenvalues, so that we can write $\xi=\xi_{+}-\xi_{-}$,
where $\xi_{+}:=\sum_{x_{i}\geq0}x_{i}\alpha_{i}$, and $\xi_{-}=\sum_{x_{i}<0}\left(-x_{i}\right)\alpha_{i}$.
Clearly, $\xi_{+},\xi_{-}\in\mathsf{St}_{+}\left(\mathrm{A}\right)$.
Recall the definition of the operational norm of a vector:
\[
\left\Vert \xi\right\Vert =\sup_{a\in\mathsf{Eff}\left(\mathrm{A}\right)}\left(a\middle|\xi\right)-\inf_{a\in\mathsf{Eff}\left(\mathrm{A}\right)}\left(a\middle|\xi\right).
\]
In order to achieve the supremum of $\left(a\middle|\xi\right)$ we
must have $\left(a\middle|\xi_{-}\right)=0$. Moreover, 
\[
\left(a\middle|\xi_{+}\right)=\sum_{x_{i}\geq0}x_{i}\left(a\middle|\alpha_{i}\right)\leq\sum_{x_{i}\geq0}x_{i}
\]
since $\left(a\middle|\alpha_{i}\right)\leq1$ for every $i$. The
supremum of $\left(a\middle|\xi_{+}\right)$ is achieved by $a=\sum_{x_{i}\geq0}\alpha_{i}^{\dagger}$.
Hence $\sup_{a}\left(a\middle|\xi\right)=\sum_{x_{i}\geq0}x_{i}$.
By a similar argument, one shows that $\inf_{a}\left(a\middle|\xi\right)=\sum_{x_{i}<0}x_{i}$.
Therefore
\[
\left\Vert \xi\right\Vert =\sum_{x_{i}\geq0}x_{i}+\sum_{x_{i}<0}\left(-x_{i}\right)=\sum_{i=1}^{d}\left|x_{i}\right|.
\]
\end{proof}
For $p\geq1$, the $p$-norm of a vector $\mathbf{x}\in\mathbb{R}^{d}$
is defined as $\left\Vert \mathbf{x}\right\Vert _{p}:=\left(\sum_{i=1}^{d}\left|x_{i}\right|^{p}\right)^{\frac{1}{p}}$,
thus we have $\left\Vert \xi\right\Vert =\left\Vert \mathbf{x}\right\Vert _{1}$,
where $\mathbf{x}$ is the spectrum of $\xi$.

\subsection{Extending the dagger map}

Thanks to the diagonalisation theorems, the dagger map $\dagger:\mathsf{PurSt}_{1}\left(\mathrm{A}\right)\to\mathsf{PurEff}_{1}\left(\mathrm{A}\right)$
can be extended to arbitrary vectors via the relation 
\[
\xi=\sum_{i=1}^{d}x_{i}\alpha_{i}\quad\longmapsto\quad\xi^{\dagger}:=\sum_{i=1}^{d}x_{i}\alpha_{i}^{\dagger}.
\]
Note that, since the diagonalisation is unique (up to degeneracy),
the vector $\xi^{\dagger}$ is well-defined, i.e.\ it does not depend
on the choice of the $\alpha_{i}$'s as long as they are eigenstates
of $\xi$. Writing $\xi$ like in theorem~\ref{thm:uniqueness diago},
$\xi=\sum_{k=1}^{s}\lambda_{k}\Pi_{k}$, to prove this fact, it suffices
to show the following.
\begin{prop}
\label{prop:well-defined}Let $\left\{ \alpha_{i}\right\} _{i=1}^{r}$
and $\left\{ \alpha_{j}'\right\} _{j=1}^{r}$ be two sets of perfectly
distinguishable pure states. Then, if $\sum_{i=1}^{r}\alpha_{i}=\sum_{j=1}^{r}\alpha_{j}'$,
one has $\sum_{i=1}^{r}\alpha_{i}^{\dagger}=\sum_{j=1}^{r}\alpha_{j}^{\prime\dagger}$.
\end{prop}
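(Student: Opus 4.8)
The plan is to read Proposition~\ref{prop:well-defined} as the statement that makes the extended dagger map well defined. Set $\omega:=\sum_{i=1}^{r}\alpha_{i}=\sum_{j=1}^{r}\alpha'_{j}$. If $r=d$, both families are pure maximal sets, so by Proposition~\ref{prop:maximal} we get $\sum_{i=1}^{r}\alpha_{i}^{\dagger}=u=\sum_{j=1}^{r}\alpha_{j}^{\prime\dagger}$ at once; hence assume $r<d$ and extend $\left\{ \alpha_{i}\right\} _{i=1}^{r}$ and $\left\{ \alpha'_{j}\right\} _{j=1}^{r}$ to pure maximal sets $\left\{ \alpha_{i}\right\} _{i=1}^{d}$, $\left\{ \alpha'_{j}\right\} _{j=1}^{d}$. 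Write $e:=\sum_{i=1}^{r}\alpha_{i}^{\dagger}$ and $f:=\sum_{j=1}^{r}\alpha_{j}^{\prime\dagger}$; these are effects, being partial sums of the observation-tests $\left\{ \alpha_{i}^{\dagger}\right\} _{i=1}^{d}$ and $\left\{ \alpha_{j}^{\prime\dagger}\right\} _{j=1}^{d}$ furnished by Proposition~\ref{prop:maximal}. The strategy has two parts: (i) show that $e$ and $f$ take the same values, all lying in $\left\{ 0,1\right\} $, on the whole maximal set $\left\{ \alpha'_{j}\right\} _{j=1}^{d}$; (ii) show that any effect with such $0/1$ values on a pure maximal set is uniquely determined, namely equals $\sum_{j:\,\text{value}=1}\alpha_{j}^{\prime\dagger}$.

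For (i) I would use the transition matrix $T_{ij}:=\left(\alpha_{i}^{\dagger}\middle|\alpha'_{j}\right)$, which is doubly stochastic by Lemma~\ref{lem:doubly-stochastic}. Applying $\alpha_{i}^{\dagger}$ to the identity $\sum_{i'=1}^{r}\alpha_{i'}=\sum_{j=1}^{r}\alpha'_{j}$ and using Corollary~\ref{cor:conjecture proved} gives $\sum_{j=1}^{r}T_{ij}=1$ for $i\le r$, and $\sum_{j=1}^{r}T_{ij}=0$, hence $T_{ij}=0$, for $i>r$ and $j\le r$. Double stochasticity of $T$ then forces $T_{ij}=0$ also for $i\le r$ and $j>r$ (each such row already sums to $1$ over $j\le r$), so $T$ is block diagonal with an $r\times r$ doubly stochastic block in the top-left corner. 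Reading off column sums, $\left(e\middle|\alpha'_{j}\right)=\sum_{i=1}^{r}T_{ij}$ equals $1$ for $j\le r$ and $0$ for $j>r$, while $\left(f\middle|\alpha'_{j}\right)$ equals $1$ for $j\le r$ and $0$ for $j>r$ directly by Corollary~\ref{cor:conjecture proved}. Thus $e$ and $f$ agree on $\left\{ \alpha'_{j}\right\} _{j=1}^{d}$.

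For (ii), let $g$ be any effect with $\left(g\middle|\alpha'_{j}\right)=1$ for $j\le r$ and $=0$ for $j>r$; the claim is $g=f$. Since $\left(g\middle|\frac{1}{r}\sum_{j\le r}\alpha'_{j}\right)=1$, Proposition~\ref{prop:non-disturbing measurement} yields a pure transformation $\mathcal{T}$ with $\mathcal{T}\alpha'_{j}=\alpha'_{j}$ for $j\le r$, with $\left(u\middle|\mathcal{T}\middle|\sigma\right)\le\left(g\middle|\sigma\right)$ for every state $\sigma$ (so $u\mathcal{T}\le g$ pointwise), and with $\mathcal{T}\alpha'_{j}=0$ for $j>r$ since $\left(u\middle|\mathcal{T}\middle|\alpha'_{j}\right)\le\left(g\middle|\alpha'_{j}\right)=0$. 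For $j\le r$ the effect $\alpha_{j}^{\prime\dagger}\mathcal{T}$ is pure by Purity Preservation and satisfies $\left(\alpha_{j}^{\prime\dagger}\mathcal{T}\middle|\alpha'_{j}\right)=1$, hence $\alpha_{j}^{\prime\dagger}\mathcal{T}=\alpha_{j}^{\prime\dagger}$ by the state-effect duality (Theorem~\ref{thm:duality}); therefore $u\mathcal{T}=\sum_{j=1}^{d}\alpha_{j}^{\prime\dagger}\mathcal{T}=f+g_{0}$ with $g_{0}:=\sum_{j>r}\alpha_{j}^{\prime\dagger}\mathcal{T}$, a nonnegative functional. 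Evaluating on each $\alpha'_{j}$ gives $\left(g_{0}\middle|\alpha'_{j}\right)=0$ for all $j$, hence $\left(g_{0}\middle|\chi\right)=0$; since $\chi$ is internal (Proposition~\ref{prop:uniqueness invariant}), a nonnegative functional vanishing on $\chi$ vanishes on every state, so $g_{0}=0$ and $u\mathcal{T}=f\le g$. Running the same argument with $u-g$ and the state $\frac{1}{d-r}\sum_{j>r}\alpha'_{j}$ gives $u-f\le u-g$, i.e.\ $g\le f$, whence $g=f$. Applying this with $g=e$ and invoking (i) completes the proof. The main obstacle is precisely step (ii): upgrading the merely pointwise inequality delivered by Proposition~\ref{prop:non-disturbing measurement} to an exact identity, which is what the explicit evaluation of $u\mathcal{T}$ through the dagger together with the internality of $\chi$ accomplishes.
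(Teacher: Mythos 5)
Your proof is correct, but it takes a genuinely different route from the paper's. The paper's argument is much shorter: from $\sum_{i=1}^{r}\alpha_{i}=\sum_{j=1}^{r}\alpha'_{j}$ and the two diagonalisations $\chi=\frac{1}{d}\sum_{i=1}^{d}\alpha_{i}=\frac{1}{d}\sum_{j=1}^{d}\alpha'_{j}$ it deduces $\sum_{i=r+1}^{d}\alpha_{i}=\sum_{j=r+1}^{d}\alpha'_{j}$, so that $\chi=\frac{1}{d}\left(\sum_{j=1}^{r}\alpha'_{j}+\sum_{i=r+1}^{d}\alpha_{i}\right)$ is again a uniform mixture of $d$ pure states; Proposition~\ref{prop:maximal} then makes the \emph{mixed} family $\left\{ \alpha'_{j}\right\} _{j=1}^{r}\cup\left\{ \alpha_{i}\right\} _{i=r+1}^{d}$ a pure maximal set, its daggers sum to $u$ by Proposition~\ref{prop:pure test chi} and Causality, and subtracting the normalisation $\sum_{i=1}^{d}\alpha_{i}^{\dagger}=u$ gives the claim in one line. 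You instead (i) show via the doubly stochastic transition matrix that $e=\sum_{i\le r}\alpha_{i}^{\dagger}$ and $f=\sum_{j\le r}\alpha_{j}^{\prime\dagger}$ take the same $0/1$ values on the primed maximal set, and (ii) prove that an effect with prescribed $0/1$ values on a pure maximal set is \emph{unique}, by sandwiching it between $u\mathcal{T}$ and $u-u\mathcal{T}'$ for the non-disturbing pure transformations of Proposition~\ref{prop:non-disturbing measurement}, and killing the remainder term by evaluating on the internal state $\chi$. You correctly identify that mere agreement on $d$ states is not tomographically sufficient, which is exactly why step (ii) is needed; your resolution of that obstacle is sound. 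The trade-off: your argument is considerably longer but isolates a reusable fact (uniqueness of the ``projector'' effect attached to a subset of a maximal set, essentially the effect-side counterpart of the face projectors discussed in the paper's references), whereas the paper's argument gets the result almost for free by noticing that interleaving the two families still diagonalises $\chi$.
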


\begin{proof}
Let us extend $\left\{ \alpha_{i}\right\} _{i=1}^{r}$ and $\left\{ \alpha_{j}'\right\} _{j=1}^{r}$
to two pure maximal sets $\left\{ \alpha_{i}\right\} _{i=1}^{d}$
and $\left\{ \alpha_{j}'\right\} _{j=1}^{d}$. Then, the invariant
state has the two diagonalisations $\chi=\frac{1}{d}\sum_{i=1}^{d}\alpha_{i}$
and $\chi=\frac{1}{d}\sum_{j=1}^{d}\alpha'_{j}$ (proposition~\ref{prop:diagonalization chi d-level 2}).
Using this fact and the condition $\sum_{i=1}^{r}\alpha_{i}=\sum_{j=1}^{r}\alpha_{j}'$,
we obtain $\sum_{i=r+1}^{d}\alpha_{i}=\sum_{j=r+1}^{d}\alpha_{j}'$.
Hence, the invariant state can be decomposed as 
\[
\chi=\frac{1}{d}\left(\sum_{j=1}^{r}\alpha'_{j}+\sum_{i=r+1}^{d}\alpha_{i}\right).
\]
By proposition~\ref{prop:maximal}, this implies that the states
$\left\{ \alpha'_{j}\right\} _{j=1}^{r}\cup\left\{ \alpha_{i}\right\} _{i=r+1}^{d}$
form a maximal set. Now, the correspondence between maximal sets and
pure sharp measurements (proposition~\ref{prop:pure test chi}) implies
that the effects $\left\{ \alpha_{j}^{\prime\dagger}\right\} _{j=1}^{r}\cup\left\{ \alpha_{i}^{\dagger}\right\} _{i=r+1}^{d}$
form a measurement. Causality yields 
\[
\sum_{j=1}^{r}\alpha_{j}^{\prime\dagger}+\sum_{i=r+1}^{d}\alpha_{i}^{\dagger}=u.
\]
On the other hand, the normalisation of the measurement $\left\{ \alpha_{i}^{\dagger}\right\} _{i=1}^{d}$
reads 
\[
\sum_{i=1}^{d}\alpha_{i}^{\dagger}=u.
\]
Comparing the two equalities we obtain the desired relation $\sum_{i=1}^{r}\alpha_{i}^{\dagger}=\sum_{j=1}^{r}\alpha_{j}^{\prime\dagger}$.
\end{proof}
Similarly to what we did in section~\ref{sec:State-effect-duality},
with a little abuse of notation we will denote as $\dagger$ even
the inverse map, from $\mathsf{Eff}_{\mathbb{R}}\left(\mathrm{A}\right)$
to $\mathsf{St}_{\mathbb{R}}\left(\mathrm{A}\right)$. Now we are
ready to define observables, and to introduce a functional calculus
on them.

\subsection{Functional calculus on observables\label{subsec:Functional-calculus-on}}

Using Steering, one can convert the diagonalisation result for the
elements of $\mathsf{St}_{\mathbb{R}}\left(\mathrm{A}\right)$ into
a diagonalisation result for the elements of $\mathsf{Eff}_{\mathbb{R}}\left(\mathrm{A}\right)$
(see \cite{Chiribella-informational,Chiribella}, and, for a different
approach, \cite{Wilce-spectral,Colleagues,Royal-road-0,Royal-road}).
\begin{prop}
For every finite-dimensional system $\mathrm{A}$ and for every vector
$X\in\mathsf{Eff}_{\mathbb{R}}\left(\mathrm{A}\right)$ there exist
a set of $d$ real numbers $\left\{ x_{i}\right\} _{i=1}^{d}$, and
a pure maximal set of states $\left\{ \alpha_{i}\right\} _{i=1}^{d}$
such that
\[
X=\sum_{i=1}^{d}x_{i}\alpha_{i}^{\dagger}.
\]
\end{prop}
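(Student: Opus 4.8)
The plan is to obtain the diagonalisation of an arbitrary vector $X\in\mathsf{Eff}_{\mathbb{R}}\left(\mathrm{A}\right)$ by transporting the already-established diagonalisation result for vectors in $\mathsf{St}_{\mathbb{R}}\left(\mathrm{A}\right)$ across the dagger map. Since the excerpt has just extended $\dagger$ to a linear map $\mathsf{St}_{\mathbb{R}}\left(\mathrm{A}\right)\to\mathsf{Eff}_{\mathbb{R}}\left(\mathrm{A}\right)$ (and, abusing notation, also in the reverse direction), the cleanest route is: given $X\in\mathsf{Eff}_{\mathbb{R}}\left(\mathrm{A}\right)$, consider its preimage $\xi:=X^{\dagger}\in\mathsf{St}_{\mathbb{R}}\left(\mathrm{A}\right)$, diagonalise $\xi$ using the already-proved proposition as $\xi=\sum_{i=1}^{d}x_{i}\alpha_{i}$ with $\left\{\alpha_{i}\right\}_{i=1}^{d}$ a pure maximal set, and then apply $\dagger$ back to conclude $X=\left(X^{\dagger}\right)^{\dagger}=\xi^{\dagger}=\sum_{i=1}^{d}x_{i}\alpha_{i}^{\dagger}$, which is exactly the claimed form.

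First I would make precise that $\dagger$ gives a bijection between $\mathsf{St}_{\mathbb{R}}\left(\mathrm{A}\right)$ and $\mathsf{Eff}_{\mathbb{R}}\left(\mathrm{A}\right)$: it is linear by construction on the extended domain, it sends the spanning set $\mathsf{PurSt}_{1}\left(\mathrm{A}\right)$ bijectively onto $\mathsf{PurEff}_{1}\left(\mathrm{A}\right)$ by the state-effect duality (theorem~\ref{thm:duality}), and since $\dim\mathsf{St}_{\mathbb{R}}\left(\mathrm{A}\right)=\dim\mathsf{Eff}_{\mathbb{R}}\left(\mathrm{A}\right)$ (finite dimension, dual spaces) a linear surjection is an isomorphism; well-definedness of the inverse follows from proposition~\ref{prop:well-defined}. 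With this in hand, the argument is: write $\xi:=X^{\dagger}\in\mathsf{St}_{\mathbb{R}}\left(\mathrm{A}\right)$, invoke the diagonalisation of vectors in $\mathsf{St}_{\mathbb{R}}\left(\mathrm{A}\right)$ to get real coefficients $\left\{x_{i}\right\}_{i=1}^{d}$ and a maximal set $\left\{\alpha_{i}\right\}_{i=1}^{d}$ of perfectly distinguishable pure states with $\xi=\sum_{i=1}^{d}x_{i}\alpha_{i}$, and then apply $\dagger$ to both sides, using linearity of the extended dagger, to obtain $X=\xi^{\dagger}=\sum_{i=1}^{d}x_{i}\alpha_{i}^{\dagger}$.

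The one point requiring a little care is confirming that applying the dagger to a state-vector diagonalisation recovers $X$ rather than something else — i.e.\ that $\left(X^{\dagger}\right)^{\dagger}=X$. This is immediate once the bijection statement above is settled, because the two daggers (state-to-effect and effect-to-state) are mutually inverse: on pure normalised generators this is the content of theorem~\ref{thm:duality}, namely $\left(\alpha^{\dagger}\right)^{\dagger}=\alpha$, and both maps are linear, so they are inverse on all of $\mathsf{St}_{\mathbb{R}}\left(\mathrm{A}\right)$ and $\mathsf{Eff}_{\mathbb{R}}\left(\mathrm{A}\right)$ respectively. I would remark, as the excerpt does for the state case, that if $X$ lies in the effect cone $\mathsf{Eff}_{+}\left(\mathrm{A}\right)$ then its eigenvalues are non-negative (since $x_{i}=\left(X\middle|\alpha_{i}\right)\geq0$, using corollary~\ref{cor:conjecture proved}), and if moreover $X$ is a genuine effect they also satisfy $x_{i}=\left(X\middle|\alpha_{i}\right)\leq1$.

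The main obstacle, such as it is, is purely bookkeeping: one must be careful that the extended dagger is genuinely a well-defined linear bijection and not merely a map defined piecemeal via diagonalisations. The paper has already done the substantive work — proposition~\ref{prop:well-defined} guarantees the extended $\dagger$ does not depend on the choice of eigenstates, and the diagonalisation and uniqueness theorems (theorems~\ref{thm:diago} and~\ref{thm:uniqueness diago}) together with their extension to $\mathsf{St}_{\mathbb{R}}\left(\mathrm{A}\right)$ supply everything needed — so the proof is genuinely short. Indeed the excerpt itself signals this by noting the analogous result in \cite[corollary 21]{Chiribella-informational} is obtained "using Steering": one could alternatively phrase the whole argument by directly steering a diagonalisation of the marginal of a purification of the invariant state onto system $\mathrm{A}$, but the dagger-transport argument is the most economical given the tools already assembled in this section.
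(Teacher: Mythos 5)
There is a genuine gap here, and it is a circularity. The extended map $\dagger:\mathsf{St}_{\mathbb{R}}\left(\mathrm{A}\right)\to\mathsf{Eff}_{\mathbb{R}}\left(\mathrm{A}\right)$ is defined \emph{only} through diagonalisations, $\xi=\sum_{i}x_{i}\alpha_{i}\mapsto\sum_{i}x_{i}\alpha_{i}^{\dagger}$, and proposition~\ref{prop:well-defined} establishes nothing beyond independence of the choice of eigenstates. It is not ``linear by construction'': additivity would require relating the eigenbases of $\xi$, $\eta$ and $\xi+\eta$, which are a priori three unrelated pure maximal sets, and in fact linearity of the dagger is essentially the strong self-duality of these theories, which this thesis explicitly defers to \cite{HOI} and does not prove in this chapter. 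Without linearity the image of $\dagger$ is only a union of $d$-dimensional subspaces (one for each pure maximal set, since $\left(\alpha_{i}^{\dagger}\middle|\alpha_{j}\right)=\delta_{ij}$ makes each family $\left\{ \alpha_{i}^{\dagger}\right\} _{i=1}^{d}$ linearly independent), whereas $\dim\mathsf{Eff}_{\mathbb{R}}\left(\mathrm{A}\right)$ is typically much larger than $d$; the dimension-count argument ``a linear surjection between spaces of equal dimension is an isomorphism'' therefore has nothing to bite on. Surjectivity of the forward dagger --- equivalently, the existence of the preimage $X^{\dagger}$ for an \emph{arbitrary} $X\in\mathsf{Eff}_{\mathbb{R}}\left(\mathrm{A}\right)$ --- is precisely the statement being proved, so declaring the ``inverse dagger'' to be defined on all of $\mathsf{Eff}_{\mathbb{R}}\left(\mathrm{A}\right)$ and starting from $\xi:=X^{\dagger}$ begs the question. (Even granting linearity you would still need either injectivity plus the dimension count, or the separate fact that $\mathsf{PurEff}_{1}\left(\mathrm{A}\right)$ spans $\mathsf{Eff}_{\mathbb{R}}\left(\mathrm{A}\right)$, which is again not available at this stage.)

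The route you relegate to a closing remark is the one the paper actually intends (``using Steering''), and it is where the real work happens. Take a purification $\Phi\in\mathsf{PurSt}_{1}\left(\mathrm{AB}\right)$ of the invariant state $\chi_{\mathrm{A}}$. The map $X\mapsto\left(X\otimes\mathcal{I}_{\mathrm{B}}\right)\Phi$ from $\mathsf{Eff}_{\mathbb{R}}\left(\mathrm{A}\right)$ into $\mathsf{St}_{\mathbb{R}}\left(\mathrm{B}\right)$ is manifestly linear, and it is injective because $\Phi$ is dynamically faithful on $\mathrm{A}$ (proposition~\ref{prop:faithful}). One then diagonalises the image vector on the purifying system, uses proposition~\ref{prop:steerpstar} together with the Schmidt-type correspondence to recognise each relevant eigenstate of $\mathrm{B}$ as $\frac{1}{p_{*}}\left(\alpha_{i}^{\dagger}\otimes\mathcal{I}_{\mathrm{B}}\right)\Phi$ for a pure maximal set $\left\{ \alpha_{i}\right\} _{i=1}^{d}$ of $\mathrm{A}$, and invokes faithfulness to pull the decomposition back to $X=\sum_{i=1}^{d}x_{i}\alpha_{i}^{\dagger}$. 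The linearity and injectivity that your argument needs from the dagger are supplied here, for free, by the steering map; the dagger-transport shortcut conceals this content rather than providing it.
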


This result allows us to give a concrete characterisation of the norm
on the vector space of effects, along the same lines as we did for
the norm on the vector space of states.
\begin{prop}
Let $X\in\mathsf{Eff}_{\mathbb{R}}\left(\mathrm{A}\right)$ be diagonalised
as $X=\sum_{i=1}^{d}x_{i}\alpha_{i}^{\dagger}$. Then $\left\Vert X\right\Vert =\max_{i}\left|x_{i}\right|$.
\end{prop}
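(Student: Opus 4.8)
The plan is to mirror the proof of the analogous statement for states (the one showing $\lVert \xi\rVert = \sum_i |x_i|$), using the characterisation of the norm on effects established earlier, namely $\lVert X\rVert = \sup_{\rho\in\mathsf{St}(\mathrm{A})}\lvert(X|\rho)\rvert$. So the first step is to write $X = \sum_{i=1}^d x_i \alpha_i^\dagger$ in its diagonal form, with $\{\alpha_i\}_{i=1}^d$ a pure maximal set, and to compute $(X|\rho)$ for an arbitrary normalised state $\rho$. For this I would diagonalise $\rho$ itself, but it is cleaner to bound directly: since $\{\alpha_i^\dagger\}_{i=1}^d$ is an observation-test (proposition~\ref{prop:pure test chi}), the numbers $q_i := (\alpha_i^\dagger|\rho)$ are non-negative and sum to $1$ by Causality. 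Hence
\[
\lvert(X|\rho)\rvert = \Bigl\lvert\sum_{i=1}^d x_i q_i\Bigr\rvert \le \sum_{i=1}^d q_i \lvert x_i\rvert \le \max_i\lvert x_i\rvert,
\]
which gives the inequality $\lVert X\rVert \le \max_i\lvert x_i\rvert$.

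For the reverse inequality I need to exhibit a state on which $\lvert(X|\rho)\rvert$ attains (or approaches) $\max_i\lvert x_i\rvert$. Let $i_0$ be an index with $\lvert x_{i_0}\rvert = \max_i\lvert x_i\rvert$. The natural candidate is $\rho = \alpha_{i_0}$, the eigenstate dagger-dual to the effect $\alpha_{i_0}^\dagger$. Then, using corollary~\ref{cor:conjecture proved} (or the state-effect duality together with the perfect distinguishability of a pure maximal set), $(\alpha_i^\dagger|\alpha_{i_0}) = \delta_{i i_0}$, so $(X|\alpha_{i_0}) = x_{i_0}$ and therefore $\lVert X\rVert \ge \lvert(X|\alpha_{i_0})\rvert = \lvert x_{i_0}\rvert = \max_i\lvert x_i\rvert$. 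Combining the two bounds yields $\lVert X\rVert = \max_i\lvert x_i\rvert$.

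The main thing to be careful about—rather than a genuine obstacle—is the justification that $\{\alpha_i^\dagger\}_{i=1}^d$ really is an observation-test so that $\sum_i q_i = 1$; this is exactly proposition~\ref{prop:pure test chi} applied to the pure maximal set $\{\alpha_i\}_{i=1}^d$, and the fact that the spectrum of $X$ can always be padded to a full pure maximal set of cardinality $d$ (possibly with some $x_i = 0$) follows from the discussion after proposition~\ref{prop:diagonalization chi d-level 2}. One should also note the supremum in the norm is attained (it is a max) because the set of pure states is compact, but since we only need a lower bound via one specific state, this point is not even required. No new machinery is needed beyond what has already been developed: diagonalisation of effects, the dagger correspondence, $(\alpha_i^\dagger|\alpha_j)=\delta_{ij}$, Causality, and the simplified form of the effect norm.
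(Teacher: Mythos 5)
Your proof is correct and follows essentially the same route as the paper: bound $\lvert(X|\rho)\rvert$ by a convex combination of the $\lvert x_i\rvert$'s using that $\{\alpha_i^\dagger\}_{i=1}^d$ is an observation-test, then attain the bound on the eigenstate $\alpha_{i_0}$ via $(\alpha_i^\dagger|\alpha_j)=\delta_{ij}$. One small imprecision: the supremum in the effect norm runs over all of $\mathsf{St}(\mathrm{A})$, which includes sub-normalised states, so $\sum_i q_i=\mathrm{tr}\,\rho\le 1$ rather than $=1$; this does not affect the argument, since the upper bound $\sum_i q_i\lvert x_i\rvert\le\max_i\lvert x_i\rvert$ holds for any sub-normalised $\mathbf{q}$.
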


\begin{proof}
Recall the definition of the norm of a vector in $\mathsf{Eff}_{\mathbb{R}}\left(\mathrm{A}\right)$:
\[
\left\Vert X\right\Vert =\sup_{\rho\in\mathsf{St}\left(\mathrm{A}\right)}\left|\left(X\middle|\rho\right)\right|.
\]
Now, we have
\[
\left|\left(X\middle|\rho\right)\right|=\left|\sum_{i=1}^{d}x_{i}\left(\alpha_{i}^{\dagger}\middle|\rho\right)\right|\leq\sum_{i=1}^{d}\left|x_{i}\right|\left(\alpha_{i}^{\dagger}\middle|\rho\right).
\]
This is a sub-normalised ``convex'' combination of the $\left|x_{i}\right|$'s.
Therefore
\[
\sum_{i=1}^{d}\left|x_{i}\right|\left(\alpha_{i}^{\dagger}\middle|\rho\right)\leq\max_{i}\left|x_{i}\right|.
\]
In conclusion $\left|\left(X\middle|\rho\right)\right|\leq\max_{i}\left|x_{i}\right|$.
The bound is achieved when $\rho$ is the state $\alpha_{i_{0}}$
associated with $\max_{i}\left|x_{i}\right|=:x_{i_{0}}$. Therefore,
\[
\left\Vert X\right\Vert =\sup_{\rho\in\mathsf{St}\left(\mathrm{A}\right)}\left|\left(X\middle|\rho\right)\right|=\max_{i}\left|x_{i}\right|.
\]
\end{proof}
In other words, the norm of a vector $X\in\mathsf{Eff}_{\mathbb{R}}\left(\mathrm{A}\right)$
is $\left\Vert X\right\Vert =\left\Vert \mathbf{x}\right\Vert _{\infty}$,
where $\mathbf{x}$ is the spectrum of $X$, and $\left\Vert \mathbf{x}\right\Vert _{\infty}=\lim_{p\rightarrow+\infty}\left\Vert \mathbf{x}\right\Vert _{p}=\max_{i}\left|x_{i}\right|$.

Thanks to the diagonalisation theorem, all the elements of the vector
space $\mathsf{Eff}_{\mathbb{R}}\left(\mathrm{A}\right)$ can be regarded
as \emph{observables}, in a similar sense to the use of the term in
quantum theory. Indeed, given a diagonalisation $X=\sum_{i=1}^{d}x_{i}\alpha_{i}^{\dagger}$
one can think of the eigenvalues as the ``values'' associated with
the outcomes of the sharp measurement $\left\{ \alpha_{i}^{\dagger}\right\} _{i=1}^{d}$.
In this way, one can interpret
\[
\left\langle X\right\rangle _{\rho}:=\left(X\middle|\rho\right)=\sum_{i=1}^{d}x_{i}\left(\alpha_{i}^{\dagger}\middle|\rho\right)
\]
as the \emph{expectation value} of the observable $X$ on $\rho$,
because $\left(\alpha_{i}^{\dagger}\middle|\rho\right)$ are probabilities. 

Like in quantum theory, the spectral theorem allows one to define
a functional calculus on observables:\footnote{See also \cite{Colleagues,Royal-road-0,Royal-road} for a different
approach.} given an observable $X$ and a function $f:\mathbb{R}\to\mathbb{R}$,
one can define the observable 
\[
f\left(X\right):=\sum_{i=1}^{d}f\left(x_{i}\right)\alpha_{i}^{\dagger}.
\]
Note that the observable $f\left(X\right)$ is well-defined, because
the eigenvalues of $X$ are unique, and as a consequence of proposition~\ref{prop:well-defined}.
In particular, one can choose the observable $X$ to be the dagger
of a state $\rho=\sum_{i=1}^{d}p_{i}\alpha_{i}$, thus obtaining 
\[
f\left(\rho^{\dagger}\right)=\sum_{i=1}^{d}f\left(p_{i}\right)\alpha_{i}^{\dagger}.
\]
In the following chapter, we will take $f$ to be the logarithm function
on some base greater than 1, defining the ``surprisal observable'':
\[
-\log_{a}\rho^{\dagger}=-\sum_{i=1}^{d}\left(\log_{a}p_{i}\right)\alpha_{i}^{\dagger}.
\]
Therefore, Shannon-von Neumann entropy can be defined as the expectation
value of the surprisal observable
\[
S\left(\rho\right):=\left(-\log_{a}\rho^{\dagger}\middle|\rho\right).
\]
The functional calculus on observables will also be the basis to define
the generalised relative entropy in section~\ref{sec:Properties-of-Shannon-von}.

\section{Schmidt decomposition}

Using the diagonalisation theorem we can prove an operational version
of the Schmidt decomposition of pure bipartite states \cite{Nielsen-Chuang,Preskill}.
The intuitive content of the Schmidt decomposition is that for every
state of a bipartite system there exist two perfectly correlated pure
observation-tests on the component systems, a similar situation to
having conjugates \cite{Wilce-spectral,Royal-road-0,Royal-road}.
More formally, this property is stated by the following theorem.
\begin{thm}[Schmidt decomposition]
\label{thm:schmidt}Let $\Psi$ be a pure state of the composite
system $\mathrm{AB}$. Then, there exist a pure sharp measurement
$\left\{ a_{i}\right\} _{i=1}^{d_{\mathrm{A}}}$ on system $\mathrm{A}$,
and a pure sharp measurement $\left\{ b_{j}\right\} _{j=1}^{d_{\mathrm{B}}}$
on system $\mathrm{B}$ such that \begin{equation}\label{eq:schmidt} \begin{aligned}\Qcircuit @C=1em @R=.7em @!R { & \multiprepareC{1}{\Psi} & \qw \poloFantasmaCn{\rA} & \measureD{a_i} \\ & \pureghost{\Psi} & \qw \poloFantasmaCn{\rB} & \measureD{b_j} }\end{aligned}~=p_{i}\delta_{ij}\qquad \forall i \in \left\{1,\ldots, r\right\}, \end{equation}where
$r\le\min\left\{ d_{\mathrm{A}},d_{\mathrm{B}}\right\} $ is a suitable
integer, the \emph{Schmidt rank}, $\left\{ p_{i}\right\} _{i=1}^{r}$
is a probability distribution, with all non-vanishing elements.

Moreover, one has the diagonalisations $\rho_{\mathrm{A}}=\sum_{i=1}^{r}p_{i}a_{i}^{\dagger}$
and $\rho_{\mathrm{B}}=\sum_{i=1}^{r}p_{i}b_{i}^{\dagger}$, where
$\rho_{\mathrm{A}}$ and $\rho_{\mathrm{B}}$ are the marginals of
$\Psi$ on systems $\mathrm{A}$ and $\mathrm{B}$ respectively.
\end{thm}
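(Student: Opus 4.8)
The plan is to mimic the familiar quantum proof of the Schmidt decomposition, replacing the use of the singular value decomposition by the diagonalisation theorem (theorem~\ref{thm:diago}) together with the probability balance (theorem~\ref{thm:probability balance}) and the steering property (Pure Steering). First I would take the pure bipartite state $\Psi\in\mathsf{PurSt}_1\left(\mathrm{AB}\right)$ and diagonalise its marginal $\rho_{\mathrm{A}}=\mathrm{tr}_{\mathrm{B}}\Psi$ as $\rho_{\mathrm{A}}=\sum_{i=1}^{r}p_{i}\alpha_{i}$, where $\left\{\alpha_{i}\right\}_{i=1}^{r}$ are perfectly distinguishable pure states with $p_{i}>0$ and $r\le d_{\mathrm{A}}$. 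Set $a_{i}:=\alpha_{i}^{\dagger}$; by corollary~\ref{cor:dagger distinguishable} these pure effects coexist in a pure sharp measurement, and by corollary~\ref{cor:computeegv} we have $\left(a_{i}\middle|\rho_{\mathrm{A}}\right)=p_{i}$. Extending $\left\{\alpha_i\right\}_{i=1}^r$ to a pure maximal set gives the full measurement $\left\{a_i\right\}_{i=1}^{d_{\mathrm{A}}}$.

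Next I would use Pure Steering to define the states on $\mathrm{B}$ that $\Psi$ is steered into. For each $i$ with $p_i>0$, applying the pure effect $a_i$ to the $\mathrm{A}$-side of $\Psi$ yields an unnormalised state on $\mathrm{B}$; since $\left(a_i\middle|\alpha_i\right)=1$, proposition~\ref{prop:non-disturbing measurement}-style reasoning (or directly Pure Steering plus Purity Preservation) shows this equals $q_i\beta_i$ for a pure state $\beta_i\in\mathsf{PurSt}_1\left(\mathrm{B}\right)$, with $q_i=\left(a_i\middle|\rho_{\mathrm{A}}\right)=p_i$. The crucial claim is then that the $\beta_i$ are perfectly distinguishable and that $\rho_{\mathrm{B}}=\sum_{i=1}^r p_i\beta_i$ is the diagonalisation of the $\mathrm{B}$-marginal. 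For the second part: applying $u$ to the $\mathrm{A}$-side after the measurement $\left\{a_i\right\}$ and summing reconstructs $\rho_{\mathrm{B}}$, so $\rho_{\mathrm{B}}=\sum_i p_i\beta_i$. By the probability balance (theorem~\ref{thm:probability balance}), $\rho_{\mathrm{A}}$ and $\rho_{\mathrm{B}}$ share the same maximum eigenvalue, and more importantly the argument of theorem~\ref{thm:probability balance} iterated along the diagonalisation algorithm shows the eigenstates appearing on the $\mathrm{B}$-side in the steering are mutually perfectly distinguishable — this is where I expect to lean hardest on the machinery, possibly invoking corollary~\ref{cor:conjecture proved}: once I show $\left(\beta_i^{\dagger}\middle|\beta_j\right)=0$ for $i\neq j$, corollary~\ref{cor:dagger distinguishable} delivers perfect distinguishability and makes $\left\{b_j:=\beta_j^{\dagger}\right\}$ into a pure sharp measurement, completed to size $d_{\mathrm{B}}$.

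The orthogonality $\left(\beta_i^{\dagger}\middle|\beta_j\right)=\delta_{ij}$ is the main obstacle, and the cleanest route is symmetry: run the whole construction starting instead from $\rho_{\mathrm{B}}=\sum_j p'_j\beta'_j$, steer back to $\mathrm{A}$ to get pure states $\alpha'_j$ with $\rho_{\mathrm{A}}=\sum_j p'_j\alpha'_j$, and then invoke the uniqueness of diagonalisation (theorem~\ref{thm:uniqueness diago}) to match the two decompositions of $\rho_{\mathrm{A}}$, forcing (up to relabelling within degenerate eigenspaces) $\alpha'_j=\alpha_j$, $p'_j=p_j$, and hence identifying the $\beta_j$ with the $\beta'_j$ which are perfectly distinguishable by construction. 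With both marginal diagonalisations in hand, the correlation identity follows by a direct diagram computation: applying $a_i\otimes b_j$ to $\Psi$ gives $\left(b_j\middle|q_i\beta_i\right)=p_i\left(\beta_j^{\dagger}\middle|\beta_i\right)=p_i\delta_{ij}$ by corollary~\ref{cor:conjecture proved}, which is exactly eq.~\eqref{eq:schmidt}. Finally $r\le\min\left\{d_{\mathrm{A}},d_{\mathrm{B}}\right\}$ is immediate since $r$ is the number of non-zero eigenvalues, bounded by the number of diagonalisation terms of either marginal, which is at most $d_{\mathrm{A}}$ and at most $d_{\mathrm{B}}$ respectively.

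I would present the proof in the order: (1) diagonalise $\rho_{\mathrm{A}}$ and fix $\left\{a_i\right\}$; (2) steer to obtain the pure states $\beta_i$ on $\mathrm{B}$ and the probabilities $p_i$; (3) show $\rho_{\mathrm{B}}=\sum_i p_i\beta_i$ by coarse-graining; (4) establish perfect distinguishability of the $\beta_i$ via the symmetric construction and uniqueness of diagonalisation; (5) define $\left\{b_j\right\}$, complete both measurements, and verify the correlation identity and the bound on $r$ by the diagram computation. One subtlety to flag is that Local Tomography is \emph{not} assumed, so I must be careful that every equation between transformations is argued through dynamically faithful states / proposition~\ref{prop:purifications -> input} rather than by evaluating only on product effects; fortunately all the lemmas cited above were proved in exactly this Local-Tomography-free setting.
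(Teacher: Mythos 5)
Your steps (1)--(3) and (5) match the paper's proof, but step (4) --- establishing the perfect distinguishability of the steered states $\beta_{i}$ --- has a genuine gap, and it is precisely the crux of the theorem. Your ``symmetric construction'' proposes to diagonalise $\rho_{\mathrm{B}}=\sum_{j}p'_{j}\beta'_{j}$, steer back to obtain $\rho_{\mathrm{A}}=\sum_{j}p'_{j}\alpha'_{j}$, and then invoke theorem~\ref{thm:uniqueness diago} to force $\alpha'_{j}=\alpha_{j}$. This fails twice over. First, the uniqueness theorem compares two \emph{diagonalisations}, i.e.\ decompositions into \emph{perfectly distinguishable} pure states; the steered decomposition $\sum_{j}p'_{j}\alpha'_{j}$ is only known to be a convex decomposition into pure states, so the theorem does not apply --- you have merely transported the property you are trying to prove from the $\mathrm{B}$ side to the $\mathrm{A}$ side. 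Second, even granting both were diagonalisations, uniqueness only identifies the degenerate blocks $\Pi_{k}$, not individual eigenstates, and in the degenerate case the conclusion $\alpha'_{j}=\alpha_{j}$ is simply false: for $\Psi=\frac{1}{\sqrt{2}}\left(\ket{00}+\ket{11}\right)$ in quantum theory, diagonalising $\rho_{\mathrm{B}}$ in the $\ket{\pm}$ basis and steering back yields $\alpha'_{j}=\ket{\pm}\bra{\pm}$, which cannot be matched with the computational-basis eigenstates $\alpha_{j}$ you started from, so the subsequent identification of $\beta_{j}$ with $\beta'_{j}$ collapses.

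The fix is the cross-evaluation the paper uses, which stays entirely on your one chosen diagonalisation of $\rho_{\mathrm{A}}$. Pure Steering, in the ensemble form established in its proof, supplies an observation-test $\left\{ \widetilde{b}_{i}\right\} _{i=1}^{r}$ on $\mathrm{B}$ with $\left(\mathcal{I}\otimes\widetilde{b}_{i}\right)\Psi=p_{i}\alpha_{i}$. Evaluating the closed diagram $\left(\alpha_{i}^{\dagger}\otimes\widetilde{b}_{j}\right)\Psi$ in the two possible orders gives $p_{j}\left(\alpha_{i}^{\dagger}\middle|\alpha_{j}\right)=p_{j}\delta_{ij}$ on one hand and $p_{i}\left(\widetilde{b}_{j}\middle|\beta_{i}\right)$ on the other, whence $\left(\widetilde{b}_{j}\middle|\beta_{i}\right)=\delta_{ij}$: the observation-test $\left\{ \widetilde{b}_{j}\right\} $ itself perfectly distinguishes the $\beta_{i}$. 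Corollary~\ref{cor:conjecture proved} then yields $\left(\beta_{i}^{\dagger}\middle|\beta_{j}\right)=\delta_{ij}$, and your remaining steps go through unchanged. Note that this argument never requires an independently chosen diagonalisation of $\rho_{\mathrm{B}}$, which is exactly why it is immune to the degeneracy problem.
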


\begin{proof}
Let $\rho_{\mathrm{A}}$ be the marginal of $\Psi$ on system $\mathrm{A}$
and let $\rho_{\mathrm{A}}=\sum_{i=1}^{r}p_{i}\alpha_{i}$ be a diagonalisation
of $\rho_{\mathrm{A}}$, where $p_{i}>0$ for all $i\in\left\lbrace 1,\ldots,r\right\rbrace $.
By Pure Steering, there exists an observation-test on $\mathrm{B}$,
call it $\left\{ \widetilde{b}_{i}\right\} _{i=1}^{r}$, such that\[ \begin{aligned}\Qcircuit @C=1em @R=.7em @!R { & \multiprepareC{1}{\Psi} & \qw \poloFantasmaCn{\rA} & \qw \\ & \pureghost{\Psi} & \qw \poloFantasmaCn{\rB} & \measureD{\widetilde {b}_i} }\end{aligned}~=p_{i}\!\!\!\!\begin{aligned}\Qcircuit @C=1em @R=.7em @!R { & \prepareC{\alpha_i} & \qw \poloFantasmaCn{\rA} & \qw }\end{aligned}~, \]
for every $i\in\left\{ 1,\ldots,r\right\} $. On the other hand, by
corollary~\ref{cor:computeegv}, the pure sharp measurement $\left\{ a_{i}\right\} _{i=1}^{d_{\mathrm{A}}}$,
where $d_{\mathrm{A}}\geq r$ and $a_{i}=\alpha_{i}^{\dagger}$ for
$i\in\left\lbrace 1,\ldots,r\right\rbrace $, induces pure states
on system $\mathrm{B}$, as follows \begin{equation}\label{eq:sterb} \begin{aligned}\Qcircuit @C=1em @R=.7em @!R { & \multiprepareC{1}{\Psi} & \qw \poloFantasmaCn{\rA} & \measureD{\alpha_i^\dagger} \\ & \pureghost{\Psi} & \qw \poloFantasmaCn{\rB} & \qw}\end{aligned}~=p_{i}\!\!\!\!\begin{aligned}\Qcircuit @C=1em @R=.7em @!R { & \prepareC{\beta_i} & \qw \poloFantasmaCn{\rB} & \qw }\end{aligned}~, \end{equation}where
each state $\beta_{i}$ is pure and normalised, for every $i\in\left\lbrace 1,\ldots,r\right\rbrace $.
Note that the right-hand side vanishes if $i\in\left\lbrace r+1,\ldots,d_{\mathrm{A}}\right\rbrace $.
Combining the two equations above, we obtain \[ \left( \widetilde{b}_j \middle| \beta_i\right)=\frac 1 {p_i}\!\!\!\!\begin{aligned}\Qcircuit @C=1em @R=.7em @!R { & \multiprepareC{1}{\Psi} & \qw \poloFantasmaCn{\rA} & \measureD{\alpha_i^\dagger} \\ & \pureghost{\Psi} & \qw \poloFantasmaCn{\rB} & \measureD{\widetilde{b}_j}}\end{aligned}=\frac {p_j} {p_i} \left(\alpha_i^\dagger \middle| \alpha_j\right)=\delta_{ij}, \]
for all $i,j\in\left\{ 1,\ldots,r\right\} $. Hence, the pure states
$\left\{ \beta_{i}\right\} _{i=1}^{r}$ are perfectly distinguishable.
This means that, if $\rho_{\mathrm{B}}$ is the marginal of $\Psi$
on system $\mathrm{B}$, the pure sharp measurement $\left\{ a_{i}\right\} _{i=1}^{d_{\mathrm{A}}}$
induces a diagonalisation of $\rho_{\mathrm{B}}$ in terms of the
states $\left\{ \beta_{i}\right\} _{i=1}^{r}$. Indeed $\sum_{i=1}^{r}p_{i}\beta_{i}=\rho_{\mathrm{B}}$
because $\sum_{i=1}^{d_{\mathrm{A}}}a_{i}=u$. By corollary~\ref{cor:conjecture proved},
we know that the effects $\left\{ \beta_{i}^{\dagger}\right\} _{i=1}^{r}$
are such that \[ \delta_{ij}=\left( \beta_j^\dagger	\middle| \beta_i\right)=\frac 1 {p_i}\!\!\!\!\begin{aligned}\Qcircuit @C=1em @R=.7em @!R { & \multiprepareC{1}{\Psi} & \qw \poloFantasmaCn{\rA} & \measureD{\alpha_i^\dagger} \\ & \pureghost{\Psi} & \qw \poloFantasmaCn{\rB} & \measureD{\beta_j^\dagger}}\end{aligned}~. \] 

Hence, choosing $\left\{ a_{i}\right\} _{i=1}^{d_{\mathrm{A}}}$ and
$\left\{ b_{j}\right\} _{j=1}^{d_{\mathrm{B}}}$ to be pure sharp
measurements with $a_{i}:=\alpha_{i}^{\dagger}$ and $b_{i}:=\beta_{i}^{\dagger}$,
for $i=\left\lbrace 1,\ldots,r\right\rbrace $, one obtains eq.~\eqref{eq:schmidt}.
To conclude the proof, recall that $\rho_{\mathrm{A}}=\sum_{i=1}^{r}p_{i}\alpha_{i}$
is a diagonalisation of $\rho_{\mathrm{A}}$. Moreover, $\rho_{\mathrm{B}}=\sum_{i=1}^{r}p_{i}\beta_{i}$
is a diagonalisation of $\rho_{\mathrm{B}}$ thanks to eq.~\eqref{eq:sterb}.
\end{proof}
Theorem~\ref{thm:schmidt} guarantees that the diagonalisations of
the two marginals of a pure bipartite state have the same non-vanishing
eigenvalues. Moreover, it implies that we can induce the pure states
in the diagonalisation of a state $\rho$ by applying suitable normalised
\emph{pure} effects on the purifying system of \emph{any} purification
of $\rho$.

What about general convex decompositions of $\rho$ into pure states
that are not diagonalisations? The following corollary guarantees
that they can always be induced by pure sharp measurements on the
purifying system of \emph{some} purification of $\rho$.
\begin{cor}
\label{cor:pure ensembles}Let $\rho=\sum_{i=1}^{n}\lambda_{i}\psi_{i}$
be a convex decomposition of $\rho\in\mathsf{St}_{1}\left(\mathrm{A}\right)$
into pure states, with $\lambda_{i}>0$ for every $i\in\left\{ 1,\ldots,n\right\} $.
Then there exist a purification $\Psi\in\mathsf{PurSt}_{1}\left(\mathrm{AB}\right)$
of $\rho$, with $d_{\mathrm{B}}\geq n$, and a pure sharp measurement
$\left\{ b_{j}\right\} _{j=1}^{d_{\mathrm{B}}}$ on $\mathrm{B}$,
such that\[
\lambda_{i}\!\!\!\!\begin{aligned}\Qcircuit @C=1em @R=.7em @!R { & \prepareC{\psi_i} & \qw \poloFantasmaCn{\rA} & \qw }\end{aligned}~=\!\!\!\!\begin{aligned}\Qcircuit @C=1em @R=.7em @!R { & \multiprepareC{1}{\Psi} & \qw \poloFantasmaCn{\rA} & \qw \\ & \pureghost{\Psi} & \qw \poloFantasmaCn{\rB} & \measureD{b_i} }\end{aligned}~,
\]for every $i\in\left\{ 1,\ldots,n\right\} $.
\end{cor}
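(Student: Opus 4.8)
The plan is to realise the given ensemble by "coherently recording" the classical label of which pure state is prepared, and then decoding it. First I would pick an auxiliary system $\mathrm{B}_{0}$ with $d_{\mathrm{B}_{0}}\geq n$ — such a system always exists, since by information locality (proposition~\ref{prop:information locality}) the tensor powers of any non-trivial system have unbounded dimension — together with a pure maximal set $\left\{\beta_{i}\right\}_{i=1}^{d_{\mathrm{B}_{0}}}$ of $\mathrm{B}_{0}$. Then I would form the state $\sigma:=\sum_{i=1}^{n}\lambda_{i}\left(\psi_{i}\otimes\beta_{i}\right)\in\mathsf{St}_{1}\left(\mathrm{A}\mathrm{B}_{0}\right)$. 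The states $\left\{\psi_{i}\otimes\beta_{i}\right\}_{i=1}^{n}$ are perfectly distinguishable pure states — they are told apart by (a subset of) the observation-test $\left\{u_{\mathrm{A}}\otimes\beta_{i}^{\dagger}\right\}_{i=1}^{d_{\mathrm{B}_{0}}}$, using $\left(\beta_{j}^{\dagger}\middle|\beta_{i}\right)=\delta_{ij}$ from corollary~\ref{cor:conjecture proved} and the fact that $\left\{\beta_{i}^{\dagger}\right\}$ is a pure sharp measurement (proposition~\ref{prop:pure test chi}) — so this expression is a diagonalisation of $\sigma$ with all eigenvalues $\lambda_{i}>0$; moreover $\mathrm{tr}_{\mathrm{B}_{0}}\sigma=\sum_{i}\lambda_{i}\psi_{i}=\rho$.

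Next I would take any purification $\Psi\in\mathsf{PurSt}_{1}\left(\mathrm{A}\mathrm{B}_{0}\mathrm{C}\right)$ of $\sigma$ (Purification). Discarding $\mathrm{C}$ as well gives $\mathrm{tr}_{\mathrm{B}_{0}\mathrm{C}}\Psi=\mathrm{tr}_{\mathrm{B}_{0}}\sigma=\rho$, so $\Psi$ is a purification of $\rho$ with purifying system $\mathrm{B}:=\mathrm{B}_{0}\mathrm{C}$, and $d_{\mathrm{B}}=d_{\mathrm{B}_{0}}d_{\mathrm{C}}\geq n$ as required. I would then apply the Schmidt decomposition (theorem~\ref{thm:schmidt}) to $\Psi$ viewed as a pure state of $\left(\mathrm{A}\mathrm{B}_{0}\right)\otimes\mathrm{C}$, with the chosen diagonalisation $\sigma=\sum_{i}\lambda_{i}\left(\psi_{i}\otimes\beta_{i}\right)$ of its marginal on $\mathrm{A}\mathrm{B}_{0}$; this produces a pure sharp measurement $\left\{d_{j}\right\}_{j=1}^{d_{\mathrm{C}}}$ on $\mathrm{C}$ such that, by the argument in the proof of that theorem (the symmetric counterpart of eq.~\eqref{eq:sterb}, together with the state-effect duality), $\left(\mathcal{I}_{\mathrm{A}\mathrm{B}_{0}}\otimes d_{i}\right)\Psi=\lambda_{i}\left(\psi_{i}\otimes\beta_{i}\right)$ for $i\in\left\{1,\dots,n\right\}$. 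Finally I would decode the label register: set $b_{i}:=\beta_{i}^{\dagger}\otimes d_{i}$ for $i\leq n$, a pure normalised effect (Purity Preservation) equal to $\left(\beta_{i}\otimes d_{i}^{\dagger}\right)^{\dagger}$ (theorem~\ref{thm:duality}). The states $\left\{\beta_{i}\otimes d_{i}^{\dagger}\right\}_{i=1}^{n}$ are perfectly distinguishable pure states of $\mathrm{B}$, hence extend to a pure maximal set $\left\{\nu_{m}\right\}_{m=1}^{d_{\mathrm{B}}}$ (proposition~\ref{prop:diagonalization chi d-level}), and $\left\{b_{m}:=\nu_{m}^{\dagger}\right\}_{m=1}^{d_{\mathrm{B}}}$ is then a pure sharp measurement (proposition~\ref{prop:pure test chi}) with $b_{i}=\beta_{i}^{\dagger}\otimes d_{i}$ for $i\leq n$. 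The claim follows by composing the two steerings: $\left(\mathcal{I}_{\mathrm{A}}\otimes b_{i}\right)\Psi=\left(\mathcal{I}_{\mathrm{A}}\otimes\beta_{i}^{\dagger}\right)\left[\left(\mathcal{I}_{\mathrm{A}\mathrm{B}_{0}}\otimes d_{i}\right)\Psi\right]=\lambda_{i}\,\psi_{i}\left(\beta_{i}^{\dagger}\middle|\beta_{i}\right)=\lambda_{i}\psi_{i}$.

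I expect the only genuinely delicate point to be the bookkeeping around the Schmidt step: one must invoke the \emph{symmetric} form of theorem~\ref{thm:schmidt}, so that the pure sharp measurement $\left\{d_{j}\right\}$ on the purifying system $\mathrm{C}$ steers the \emph{composite} marginal on $\mathrm{A}\mathrm{B}_{0}$ precisely onto the chosen eigenstates $\psi_{i}\otimes\beta_{i}$ with the chosen probabilities $\lambda_{i}$ — this is exactly what makes the label register pay off. It is worth recording why the naive route fails: applying Pure Steering directly to each $\psi_{i}$ contained in $\rho$ yields an effect on the purifying system that steers to $\psi_{i}$ only with \emph{some} probability (not controllably $\lambda_{i}$), with no guarantee that the resulting effects are pure or that they coexist in a single observation-test; and in the presence of degenerate $\lambda_{i}$ one cannot even align a diagonalisation-based purification with the given ensemble. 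Introducing the register $\mathrm{B}_{0}$ removes all of this at once, by promoting the arbitrary ensemble to a genuine diagonalisation on the enlarged system $\mathrm{A}\mathrm{B}_{0}$, to which the diagonalisation and Schmidt machinery of this chapter applies verbatim. The remaining verifications — existence of $\mathrm{B}_{0}$, that products of pure states and of pure effects are pure with the expected daggers, that $d_{\mathrm{C}}\geq n$ (forced, since $\rho_{\mathrm{C}}$ has the same non-vanishing spectrum as $\sigma$), and that perfectly distinguishable sets of cardinality at most $d_{\mathrm{B}}$ extend to maximal ones — are routine.
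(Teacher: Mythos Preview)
Your proof is correct and follows essentially the same approach as the paper's: introduce an auxiliary register (your $\mathrm{B}_0$, the paper's $\mathrm{X}$) to promote the ensemble to a diagonalisation, purify, steer from the purifying system via the Schmidt decomposition, and then decode the register with the effects $\beta_i^\dagger\otimes d_i$. The only cosmetic difference is that the paper completes these effects to a pure sharp measurement by embedding them in the full product measurement $\{\xi_i^\dagger\otimes c_k\}_{i,k}$ (invoking proposition~\ref{prop:information locality}) rather than by extending a set of perfectly distinguishable pure states to a maximal one as you do.
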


\begin{proof}
Consider a system $\mathrm{X}$ of dimension $n$, and let $\left\{ \xi_{i}\right\} _{i=1}^{n}$
be a pure maximal set of $\mathrm{X}$. Consider now the state $\Sigma\in\mathsf{St}_{1}\left(\mathrm{AX}\right)$,
given by $\Sigma:=\sum_{i=1}^{n}\lambda_{i}\psi_{i}\otimes\xi_{i}$.
This is a diagonalisation of $\Sigma$, for the states $\left\{ \psi_{i}\otimes\xi_{i}\right\} _{i=1}^{n}$
are pure by Purity Preservation, and they are distinguished perfectly
by the observation-test $\left\{ u_{\mathrm{A}}\otimes\xi_{i}^{\dagger}\right\} _{i=1}^{n}$.
Now, let us consider a purification $\Psi\mathsf{\in PurSt}_{1}\left(\mathrm{AXC}\right)$
of $\Sigma$. This is clearly a purification of $\rho$ too, indeed\[
\begin{aligned}\Qcircuit @C=1em @R=.7em @!R { & \multiprepareC{2}{\Psi} & \qw \poloFantasmaCn{\rA} & \qw \\ & \pureghost{\Psi} & \qw \poloFantasmaCn{\rX} & \measureD{u} \\ & \pureghost{\Psi} & \qw \poloFantasmaCn{\rC} & \measureD{u}}\end{aligned}~=\!\!\!\!\begin{aligned}\Qcircuit @C=1em @R=.7em @!R { & \multiprepareC{1}{\Sigma} & \qw \poloFantasmaCn{\rA} & \qw \\ & \pureghost{\Sigma} & \qw \poloFantasmaCn{\rX} & \measureD{u}}\end{aligned}~=\sum_{i=1}^{n}\lambda_i\!\!\!\!\begin{aligned}\Qcircuit @C=1em @R=.7em @!R { & \prepareC{\psi_i} & \qw \poloFantasmaCn{\rA} & \qw \\ & \prepareC{\xi_i} & \qw \poloFantasmaCn{\rX} & \measureD{u}}\end{aligned}~=\!\!\!\!\begin{aligned}\Qcircuit @C=1em @R=.7em @!R { & \prepareC{\rho} & \qw \poloFantasmaCn{\rA} & \qw}\end{aligned}~.
\]Now, by theorem~\ref{thm:schmidt} applied to the purification $\Psi$
of $\Sigma$, there exists a pure sharp measurement\footnote{Clearly we have $d_{\mathrm{C}}\geq nd_{\mathrm{A}}$.}
$\left\{ c_{k}\right\} _{k=1}^{d_{\mathrm{C}}}$ on $\mathrm{C}$,
that induces the pure states in the diagonalisation of $\Sigma$:\[
\lambda_{i}\!\!\!\!\begin{aligned}\Qcircuit @C=1em @R=.7em @!R { & \prepareC{\psi_i} & \qw \poloFantasmaCn{\rA} & \qw \\ & \prepareC{\xi_i} & \qw \poloFantasmaCn{\rX} & \qw }\end{aligned}~=\!\!\!\!\begin{aligned}\Qcircuit @C=1em @R=.7em @!R { & \multiprepareC{2}{\Psi} & \qw \poloFantasmaCn{\rA} & \qw \\ & \pureghost{\Psi} & \qw \poloFantasmaCn{\rX} & \qw \\ & \pureghost{\Psi} & \qw \poloFantasmaCn{\rC} & \measureD{c_i}}\end{aligned}~,
\]for $i\in\left\{ 1,\ldots,n\right\} $. Now, take the pure sharp measurement
$\left\{ \xi_{i}^{\dagger}\right\} _{i=1}^{n}$ on $\mathrm{X}$,
which yields\[
\lambda_{i}\!\!\!\!\begin{aligned}\Qcircuit @C=1em @R=.7em @!R { & \prepareC{\psi_i} & \qw \poloFantasmaCn{\rA} & \qw}\end{aligned}~=\!\!\!\!\begin{aligned}\Qcircuit @C=1em @R=.7em @!R { & \multiprepareC{2}{\Psi} & \qw \poloFantasmaCn{\rA} & \qw \\ & \pureghost{\Psi} & \qw \poloFantasmaCn{\rX} & \measureD{\xi_i^{\dagger}} \\ & \pureghost{\Psi} & \qw \poloFantasmaCn{\rC} & \measureD{c_i}}\end{aligned}~,
\]for $i\in\left\{ 1,\ldots,n\right\} $. To complete the proof, note
that $\left\{ \xi_{i}^{\dagger}\otimes c_{k}\right\} _{i=1,}^{n}\phantom{}_{k=1}^{d_{\mathrm{C}}}$
is still a pure sharp measurement $\left\{ b_{j}\right\} $ on $\mathrm{XC}$
(by proposition~\ref{prop:information locality}), with $nd_{\mathrm{C}}\geq n$
effects, where $j$ runs on the pairs $\left(i,k\right)$. Now it
is enough to take the purifying system $\mathrm{B}$ to be $\mathrm{XC}$,
and to take $k=i$ in $\left\{ \xi_{i}^{\dagger}\otimes c_{k}\right\} _{i=1,}^{n}\phantom{}_{k=1}^{d_{\mathrm{C}}}$.
\end{proof}
We will use this corollary in the proof of theorem~\ref{thm:measurement =00003D preparation}.

\section{Example: doubled quantum theory\label{sec:Example:-doubled-quantum}}

In this section we present a new example of a sharp theory with purification
\cite{Purity}, called ``doubled quantum theory''. This theory will
provide a counterexample to thermodynamic convertibility and majorisation
in section~\ref{sec:Sufficiency-of-majorisation}.

Consider a theory where every non-trivial system is the direct sum
of two isomorphic quantum systems with Hilbert spaces $\mathcal{H}_{0}$
and $\mathcal{H}_{1}$, respectively. Physically, we can think of
the two Hilbert spaces as two superselection sectors. We associate
each ``doubled quantum system'' with a pair of isomorphic Hilbert
spaces $\left(\mathcal{H}_{0},\mathcal{H}_{1}\right)$, with $\mathcal{H}_{0}\approx\mathcal{H}_{1}$.
We define the states of the doubled quantum system to be of the form
\begin{equation}
\rho=p\rho_{0}\oplus\left(1-p\right)\rho_{1}\label{eq:state doubled}
\end{equation}
where $\rho_{0}$ and $\rho_{1}$ are two density matrices in the
two sectors and $p\in\left[0,1\right]$. The direct sum in eq.~\eqref{eq:state doubled}
means that there is no coherence between the two sectors.

Likewise, we define the effects to be all quantum effects of the form
$e=e_{0}\oplus e_{1}$, where $e_{0}$ and $e_{1}$ are two quantum
effects in the two sectors. The allowed channels from the input system
$\left(\mathcal{H}_{0},\mathcal{H}_{1}\right)$ to the output system
$\left(\mathcal{K}_{0},\mathcal{K}_{1}\right)$ are the quantum channels
(completely positive trace-preserving maps) that
\begin{enumerate}
\item send operators on $\mathcal{H}_{0}\oplus\mathcal{H}_{1}$ to operators
on $\mathcal{K}_{0}\oplus\mathcal{K}_{1}$;
\item map block-diagonal operators to block-diagonal operators.
\end{enumerate}
The set of allowed tests is defined as the set of quantum instruments
$\left\{ \mathcal{C}_{j}\right\} _{j\in X}$, where each quantum operation
$\mathcal{C}_{j}$ respects the two conditions above for channels.

This means that in the allowed unitary channels $\mathcal{U}\left(\cdot\right)=U\cdot U^{\dagger}$,
$U$ must be of the form $U=\left(U_{0}\oplus U_{1}\right)S^{k}$,
where $S$ is the unitary transformation that exchanges the two sectors
(it exists because they are isomorphic), $k\in\left\{ 0,1\right\} $,
and $U_{0}$ and $U_{1}$ are unitary transformations that act only
on $\mathcal{H}_{0}$ and $\mathcal{H}_{1}$, respectively. Therefore,
if $k=0$, there is no hopping of sector, and if $k=1$ the two sectors
are exchanged. 

\paragraph{Doubled quantum theory satisfies Causality, Pure Sharpness, and Purity
Preservation}

Causality is immediate: for every system, the only deterministic effect
is the identity matrix. Pure Sharpness is also immediate: every rank-one
projector is a pure sharp effect. As to Purity Preservation, note
that the only pure transformations are quantum operations of the single-Kraus
form $\mathcal{Q}\left(\cdot\right)=Q\cdot Q^{\dagger}$. Clearly,
the composition of two single-Kraus operations (both in parallel and
in sequence) is a single-Kraus operation. In other words, the composition
of two pure transformations is pure.

\subsection{Composite systems}

To study if doubled quantum theory satisfies Purification, it is necessary
to specify how systems compose in this theory.

The peculiarity of doubled quantum theory is the way systems are composed,
which is \emph{not} the intuitive way to compose systems with superselection
rules. The product of two doubled quantum systems $\left(\mathcal{H}_{0}^{\mathrm{A}},\mathcal{H}_{1}^{\mathrm{A}}\right)$
and $\left(\mathcal{H}_{0}^{\mathrm{B}},\mathcal{H}_{1}^{\mathrm{B}}\right)$
is the doubled quantum system $\left(\mathcal{H}_{0}^{\mathrm{AB}},\mathcal{H}_{1}^{\mathrm{AB}}\right)$,
with the two sectors defined by
\begin{equation}
\left\{ \begin{array}{l}
\mathcal{H}_{0}^{\mathrm{AB}}:=\left(\mathcal{H}_{0}^{\mathrm{A}}\otimes\mathcal{H}_{0}^{\mathrm{B}}\right)\oplus\left(\mathcal{H}_{1}^{\mathrm{A}}\otimes\mathcal{H}_{1}^{\mathrm{B}}\right)\\
\mathcal{H}_{1}^{\mathrm{AB}}:=\left(\mathcal{H}_{0}^{\mathrm{A}}\otimes\mathcal{H}_{1}^{\mathrm{B}}\right)\oplus\left(\mathcal{H}_{1}^{\mathrm{A}}\otimes\mathcal{H}_{0}^{\mathrm{B}}\right)
\end{array}\right..\label{eq:hhahhb}
\end{equation}
Note that the direct sum inside each sector does \emph{not} mean the
presence of additional sectors. We illustrate this with an example.
\begin{example}
Consider the composite system of two doubled qubits, corresponding
to $\mathcal{H}_{0}^{\mathrm{A}}\approx\mathcal{H}_{1}^{\mathrm{A}}\approx\mathcal{H}_{0}^{\mathrm{B}}\approx\mathcal{H}_{1}^{\mathrm{B}}\approx\mathbb{C}^{2}$.
An example of state of the composite system is the pure state (the
first index denotes the sector) 
\begin{equation}
\ket{\Psi}=\frac{1}{\sqrt{2}}\left(\ket{0,0}_{\mathrm{A}}\ket{0,0}_{\mathrm{B}}+\ket{1,0}_{\mathrm{A}}\ket{1,0}_{\mathrm{B}}\right),\label{eq:example0011}
\end{equation}
where $\left\{ \ket{0,0},\ket{0,1}\right\} $ is an orthonormal basis
for $\mathcal{H}_{0}$ and $\left\{ \ket{1,0},\ket{1,1}\right\} $
is an orthonormal basis for $\mathcal{H}_{1}$. Thus we see that there
\emph{is} coherence allowed between $\mathcal{H}_{0}^{\mathrm{A}}\otimes\mathcal{H}_{0}^{\mathrm{B}}$
and $\mathcal{H}_{1}^{\mathrm{A}}\otimes\mathcal{H}_{1}^{\mathrm{B}}$.
However, note that, when one of the two systems is traced out, the
remaining local state has the block-diagonal form $\rho=\frac{1}{2}\ket{0,0}\bra{0,0}\oplus\frac{1}{2}\ket{1,0}\bra{1,0}$.
This means that the coherence between the two terms in the state~\eqref{eq:example0011}
is invisible at the single-system level.
\end{example}

From a physical point of view, doubled quantum theory can be thought
of as ordinary quantum theory with a superselection rule on the \emph{total}
parity. Every system is split into two identical sectors of even and
odd parity, respectively. When systems are composed, the sectors are
grouped together based on the total parity, so that superpositions
between subspaces with the same parity are allowed.

Here we summarise the basic operational features of doubled quantum
theory concerning the composition of systems.

\paragraph{Doubled quantum theory violates Local Tomography}

An equivalent formulation of Local Tomography is that the dimension
of the vector space spanned by the states of a composite system is
equal to the product of the dimensions of the vector spaces spanned
by the states of the components \cite{Hardy-informational-1,Chiribella-purification,Hardy-informational-2},
in formula $D_{\mathrm{AB}}=D_{\mathrm{A}}D_{\mathrm{B}}$, where
$D$ is the dimension of the vector space of states. This is because,
if Local Tomography holds, one has $\mathsf{St}_{\mathbb{R}}\left(\mathrm{AB}\right)=\mathsf{St}_{\mathbb{R}}\left(\mathrm{A}\right)\otimes\mathsf{St}_{\mathbb{R}}\left(\mathrm{B}\right)$
(cf.\ remark~\ref{rem:local tomography}).

The equality $D_{\mathrm{AB}}=D_{\mathrm{A}}D_{\mathrm{B}}$ fails
to hold in doubled quantum theory, where the dimension of the global
vector space is strictly larger than the product of the dimensions
of the individual vector spaces. To see it, note that the block-diagonal
states of the form~\eqref{eq:state doubled} span a vector space
of dimension $D=2d^{2}$, where $d$ is the dimension of the Hilbert
spaces $\mathcal{H}_{0}$ and $\mathcal{H}_{1}$. Given two systems
$\mathrm{A}$ and $\mathrm{B}$, the product of the individual dimensions
is $D_{\mathrm{A}}D_{\mathrm{B}}=2d_{\mathrm{A}}^{2}\cdot2d_{\mathrm{B}}^{2}=\left(2d_{\mathrm{A}}d_{\mathrm{B}}\right)^{2}$.
On the other hand, each of the Hilbert spaces $\mathcal{H}_{0}^{\mathrm{AB}}$
and $\mathcal{H}_{1}^{\mathrm{AB}}$ in eq.~\eqref{eq:hhahhb} has
dimension $d_{\mathrm{AB}}=2d_{\mathrm{A}}d_{\mathrm{B}}$. Hence,
the vector space spanned by the states of the composite system has
dimension $D_{\mathrm{AB}}=2d_{\mathrm{AB}}^{2}=2\left(2d_{\mathrm{A}}d_{\mathrm{B}}\right)^{2}$,
that is, twice the dimension of the vector space spanned by product
states. This means that when systems are composed, genuinely new states
arise, that cannot be reduced to states of the two components.

\paragraph{Doubled quantum theory satisfies Purification }

A generic state of a system $\left(\mathcal{H}_{0},\mathcal{H}_{1}\right)$
can be diagonalised as
\[
\rho=\left(\sum_{j=1}^{d}\lambda_{j}\ket{\varphi_{j,0}}\bra{\varphi_{j,0}}\right)\oplus\left(\sum_{j=1}^{d}\mu_{j}\ket{\psi_{j,1}}\bra{\psi_{j,1}}\right),
\]
where $\left\{ \ket{\varphi_{j,0}}\right\} _{j=1}^{d}$ is an orthonormal
basis for $\mathcal{H}_{0}$ and $\left\{ \ket{\psi_{j,1}}\right\} _{j=1}^{d}$
is an orthonormal basis for $\mathcal{H}_{1}$. The state can be purified
e.g.\ by adding a copy of system $\left(\mathcal{H}_{0},\mathcal{H}_{1}\right)$.
Since the composite system has two superselection sectors, there will
be two types of purification: purifications in the even subspace $\mathcal{H}_{0}^{\mathrm{AB}}$
and purifications in the odd subspace $\mathcal{H}_{1}^{\mathrm{AB}}$.
A purification in the subspace $\mathcal{H}_{0}^{\mathrm{AB}}$ has
the form 
\[
\ket{\Psi_{0}}=\left(\sum_{j=1}^{d}\sqrt{\lambda_{j}}\ket{\varphi_{j,0}}\ket{\alpha_{j,0}}\right)+\left(\sum_{j=1}^{d}\sqrt{\mu_{j}}\ket{\psi_{j,1}}\ket{\beta_{j,1}}\right),
\]
where $\left\{ \ket{\alpha_{j,0}}\right\} _{j=1}^{d}$ is an orthonormal
basis for $\mathcal{H}_{0}$ and $\left\{ \ket{\beta_{j,1}}\right\} _{j=1}^{d}$
is an orthonormal basis for $\mathcal{H}_{1}$. A purification in
the subspace $\mathcal{H}_{1}^{\mathrm{AB}}$ will have the form 
\[
\ket{\Psi_{1}}=\left(\sum_{j=1}^{d}\sqrt{\lambda_{j}}\ket{\varphi_{j,0}}\ket{\alpha'_{j,1}}\right)+\left(\sum_{j=1}^{d}\sqrt{\mu_{j}}\ket{\psi_{j,1}}\ket{\beta'_{j,0}}\right)
\]
where $\left\{ \ket{\alpha'_{j,1}}\right\} _{j=1}^{d}$ is an orthonormal
basis for $\mathcal{H}_{1}$ and $\left\{ \ket{\beta'_{j,0}}\right\} _{j=1}^{d}$
is an orthonormal basis for $\mathcal{H}_{0}$. Note that any two
such purifications are equivalent under local unitary transformations:
indeed, one has $\ket{\Psi_{1}}=\left(\mathbf{1}\otimes U\right)\ket{\Psi_{0}}$,
where $U$ is the unitary matrix defined by
\[
U=\left(\sum_{j=1}^{d}\ket{\alpha'_{j,1}}\bra{\alpha_{j,0}}\right)+\left(\sum_{j=1}^{d}\ket{\beta'_{j,0}}\bra{\beta_{j,1}}\right).
\]
The same arguments apply to purifications within the same sector and
to purifications where the purifying system is not a copy of the original
system. In summary, every state can be purified and every two purifications
with the same purifying system are equivalent under local unitaries.

\section{Example: extended classical theory\label{sec:Example:-extended-classical}}

In this section we introduce another new example of a sharp theory
with purification \cite{TowardsThermo}. This example has a great
importance because it shows that classical theory, which does \emph{not}
satisfy Purification, can be regarded as part of a larger theory obeying
the Purification principle. Specifically, in this extended classical
theory there are some systems that look entirely classical at the
single-system level, but they compose in a different, coherent, way,
so as to save the validity of Purification. This example therefore
shows that all the results obtained above for sharp theories with
purification (e.g.\ diagonalisation, etc.) can be carried over to
classical theory, at least at the single-system level. Therefore classical
theory is not excluded by our treatment.

Extended classical theory will include classical and non-classical
systems, called \emph{coherent dits} (or \emph{codits} for short)\emph{,}
in analogy with the similar notion in quantum Shannon theory \cite{Harrow}.
The guiding idea is to entangle classical systems with each other,
to provide the desired purifications, while at the same time to keep
them classical at the single-system level. In principle, we could
have modelled classical theory as a sub-theory of quantum theory,
but clearly in this case we would observe interference at the level
of single systems, which we do not want. 

\subsection{Coherent composition of bits }

To understand how the construction of extended classical theory works,
let us illustrate the 2-dimensional case first. Recall that the state
of a classical bit can be represented using the density matrix formalism
as
\begin{equation}
\rho=p\ket{0}\bra{0}\oplus\left(1-p\right)\ket{1}\bra{1},\label{eq:cbitstate}
\end{equation}
where $p\in\left[0,1\right]$, and the direct sum sign is a reminder
that the off-diagonal elements are forbidden. The composite system
of two classical bits $\mathrm{A}$ and $\mathrm{B}$ is a 4-dimensional
classical system, which is represented as a quantum system with 4
superselection sectors. In formulas, if $\left(\mathcal{H}_{0}^{\mathrm{A}},\mathcal{H}_{1}^{\mathrm{A}}\right)$
represents the classical bit $\mathrm{A}$, with its 2 superselection
sectors $\mathcal{H}_{0}^{\mathrm{A}}=\mathrm{Span}\left\{ \ket{0}_{\mathrm{A}}\right\} $
and $\mathcal{H}_{1}^{\mathrm{A}}=\mathrm{Span}\left\{ \ket{1}_{\mathrm{A}}\right\} $,
and $\left(\mathcal{H}_{0}^{\mathrm{B}},\mathcal{H}_{1}^{\mathrm{B}}\right)$
is the classical system $\mathrm{B}$, the composite system is $\left(\mathcal{H}_{00}^{\mathrm{AB}},\mathcal{H}_{01}^{\mathrm{AB}},\mathcal{H}_{10}^{\mathrm{AB}},\mathcal{H}_{11}^{\mathrm{AB}}\right)$,
with 4 sectors of dimension 1. Here the composition is the usual one
in the presence of superselection sectors, namely
\[
\left\{ \begin{array}{l}
\mathcal{H}_{00}^{\mathrm{AB}}:=\mathcal{H}_{0}^{\mathrm{A}}\otimes\mathcal{H}_{0}^{\mathrm{B}}\\
\mathcal{H}_{01}^{\mathrm{AB}}:=\mathcal{H}_{0}^{\mathrm{A}}\otimes\mathcal{H}_{1}^{\mathrm{B}}\\
\mathcal{H}_{10}^{\mathrm{AB}}:=\mathcal{H}_{1}^{\mathrm{A}}\otimes\mathcal{H}_{0}^{\mathrm{B}}\\
\mathcal{H}_{11}^{\mathrm{AB}}:=\mathcal{H}_{1}^{\mathrm{A}}\otimes\mathcal{H}_{1}^{\mathrm{B}}
\end{array}.\right.
\]
Clearly a state of this classical composite system is of the form
\[
\rho_{\mathrm{AB}}=p_{00}\ket{0}\bra{0}\otimes\ket{0}\bra{0}\oplus p_{01}\ket{0}\bra{0}\otimes\ket{1}\bra{1}\oplus
\]
\[
\oplus p_{10}\ket{1}\bra{1}\otimes\ket{0}\bra{0}\oplus p_{11}\ket{1}\bra{1}\otimes\ket{1}\bra{1}
\]
where $\left\{ p_{ij}\right\} $, $i,j\in\left\{ 0,1\right\} $, is
a probability distribution.

Now, to construct extended classical theory, let us consider a single
2-dimensional system with the same states as the classical bit; they
are still of the form of eq.~\eqref{eq:cbitstate}. We will change
the way two classical bits compose, by imposing a superselection rule
given by the \emph{total} parity, in the same way we did for doubled
quantum theory. Therefore the composition of two classical bits $\left(\mathcal{H}_{0}^{\mathrm{A}},\mathcal{H}_{1}^{\mathrm{A}}\right)$
and $\left(\mathcal{H}_{0}^{\mathrm{B}},\mathcal{H}_{1}^{\mathrm{B}}\right)$,
where $\mathcal{H}_{0}^{\mathrm{A}}\approx\mathcal{H}_{1}^{\mathrm{A}}\approx\mathcal{H}_{0}^{\mathrm{B}}\approx\mathcal{H}_{1}^{\mathrm{B}}\approx\mathbb{C}$
yields the system $\left(\mathcal{H}_{0}^{\mathrm{AB}},\mathcal{H}_{1}^{\mathrm{AB}}\right)$
where
\begin{equation}
\left\{ \begin{array}{l}
\mathcal{H}_{0}^{\mathrm{AB}}:=\left(\mathcal{H}_{0}^{\mathrm{A}}\otimes\mathcal{H}_{0}^{\mathrm{B}}\right)\oplus\left(\mathcal{H}_{1}^{\mathrm{A}}\otimes\mathcal{H}_{1}^{\mathrm{B}}\right)\\
\mathcal{H}_{1}^{\mathrm{AB}}:=\left(\mathcal{H}_{0}^{\mathrm{A}}\otimes\mathcal{H}_{1}^{\mathrm{B}}\right)\oplus\left(\mathcal{H}_{1}^{\mathrm{A}}\otimes\mathcal{H}_{0}^{\mathrm{B}}\right)
\end{array}\right..\label{eq:composite cobits}
\end{equation}
Again, the direct sum inside each sector does \emph{not} denote an
internal superselection rule, so we allow coherence inside each sector,
which is invisible at the single-system level. This fact will be true
in all the direct sums we will write in this section about extended
classical theory.

Note that eq.~\eqref{eq:composite cobits} gives exactly the same
composition rule of doubled quantum theory, but restricted to 1-dimensional
Hilbert spaces. In the composition rule, the Hilbert spaces are grouped
together according to the residue classes modulo 2: if $k\in\left\{ 0,1\right\} $,
in $\mathcal{H}_{k}^{\mathrm{AB}}$ there is the direct sum of all
terms whose indices sum to $k$ modulo 2. Therefore
\[
\mathcal{H}_{0}^{\mathrm{AB}}=\mathrm{Span}\left\{ \ket{0}_{\mathrm{A}}\ket{0}_{\mathrm{B}},\ket{1}_{\mathrm{A}}\ket{1}_{\mathrm{B}}\right\} ;
\]
and
\[
\mathcal{H}_{1}^{\mathrm{AB}}=\mathrm{Span}\left\{ \ket{0}_{\mathrm{A}}\ket{1}_{\mathrm{B}},\ket{1}_{\mathrm{A}}\ket{0}_{\mathrm{B}}\right\} .
\]
Consequently, the pure states of the composite systems can be represented
as unit vectors either of the form ($\alpha,\beta\in\mathbb{C}$)
\[
\ket{\Phi_{0}}_{\mathrm{AB}}=\alpha\ket{0}_{\mathrm{A}}\ket{0}_{\mathrm{B}}+\beta\ket{1}\ket{1},
\]
or of the form 
\[
\ket{\Phi_{1}}_{\mathrm{AB}}=\alpha\ket{0}_{\mathrm{A}}\ket{1}_{\mathrm{B}}+\beta\ket{1}_{\mathrm{A}}\ket{0}_{\mathrm{B}}.
\]

Note that in the composite system the only allowed states are those
of the form $\rho=p\rho_{0}\oplus\left(1-p\right)\rho_{1}$, where
$p\in\left[0,1\right]$, $\rho_{0}$ is a density matrix on $\mathcal{H}_{0}^{\mathrm{AB}}$,
and $\rho_{1}$ is a density matrix on $\mathcal{H}_{1}^{\mathrm{AB}}$.
The allowed effects are of the form $e=e_{0}\oplus e_{1}$, where
$e_{0}$ is a quantum effect on $\mathcal{H}_{0}^{\mathrm{AB}}$,
and $e_{1}$ is a quantum effect on $\mathcal{H}_{1}^{\mathrm{AB}}$.
The allowed channels on this systems are those quantum channels that
respect the block-diagonal structure.
\begin{rem}
In the composition of the two classical bits we \emph{cannot} have
the full 4-dimensional quantum system
\[
\mathrm{Span}\left\{ \ket{0}_{\mathrm{A}}\ket{0}_{\mathrm{B}},\ket{1}_{\mathrm{A}}\ket{1}_{\mathrm{B}},\ket{0}_{\mathrm{A}}\ket{1}_{\mathrm{B}},\ket{1}_{\mathrm{A}}\ket{0}_{\mathrm{B}}\right\} .
\]
Indeed, if this were the case, an allowed (pure) state of the composite
system would be $\frac{1}{\sqrt{2}}\left(\ket{0}_{\mathrm{A}}+\ket{1}_{\mathrm{A}}\right)\ket{0}_{\mathrm{B}}$,
which, when system $\mathrm{B}$ is traced out, would yield the forbidden
state $\frac{1}{\sqrt{2}}\left(\ket{0}_{\mathrm{A}}+\ket{1}_{\mathrm{A}}\right)$.
Therefore, to keep the state of $\mathrm{A}$ and $\mathrm{B}$ classical,
we \emph{need} the presence of the two superselection sectors $\mathcal{H}_{0}^{\mathrm{AB}}$
and $\mathcal{H}_{1}^{\mathrm{AB}}$.
\end{rem}

With the above settings, it is easy to see that every state of a classical
bit can be purified. For example, the generic bit state $\rho=p\ket{0}\bra{0}\oplus\left(1-p\right)\ket{1}\bra{1}$
has the purification 
\[
\ket{\Psi_{0}}_{\mathrm{AB}}=\sqrt{p}\ket{0}_{\mathrm{A}}\ket{0}_{\mathrm{B}}+\sqrt{1-p}\ket{1}_{\mathrm{A}}\ket{1}_{\mathrm{B}}.
\]
In addition, it is possible to show that every two purifications of
the same state differ by a local unitary operation on the purifying
system. This includes, for example, the purification 
\[
\ket{\Psi_{1}}_{\mathrm{AB}}=\sqrt{p}\ket{0}_{\mathrm{A}}\ket{1}_{\mathrm{B}}+\sqrt{1-p}\ket{1}_{\mathrm{A}}\ket{0}_{\mathrm{B}},
\]
obtained from $\ket{\Psi_{0}}_{\mathrm{AB}}$ through the application
of a bit flip on system $\mathrm{B}$, which is an allowed transformation
because it preserves the block-diagonal structure of system $\mathrm{B}$.

Notice that the composition of two classical bits gives rise to a
4-dimensional system that is \emph{not} classical. Therefore, in the
theory we are constructing we will have classical and non-classical
systems with the same dimension (e.g.\ a classical 4-dimensional
system, and the coherent composition of 2 classical bits).

\subsection{Coherent composition of dits}

Let us generalise the results of the previous subsection to the coherent
composition of two classical dits. A classical dit can be represented
as a Hilbert space with $d$ superselection sectors of dimension 1:
$\left(\mathcal{H}_{0},\ldots,\mathcal{H}_{d-1}\right)$, where $\mathcal{H}_{k}=\mathrm{Span}\left\{ \ket{k}\right\} $.
The states of a classical dit are of the form
\[
\rho=\bigoplus_{k=0}^{d-1}p_{k}\ket{k}\bra{k},
\]
where $\left\{ p_{k}\right\} _{k=0}^{d-1}$ is a probability distribution.
Like for bits, the new composition rule for dits is based on residue
classes (modulo $d$): the composite of two classical dits $\left(\mathcal{H}_{0}^{\mathrm{A}},\ldots,\mathcal{H}_{d-1}^{\mathrm{A}}\right)$
and $\left(\mathcal{H}_{0}^{\mathrm{B}},\ldots,\mathcal{H}_{d-1}^{\mathrm{B}}\right)$
is the system $\left(\mathcal{H}_{0}^{\mathrm{AB}},\ldots,\mathcal{H}_{d-1}^{\mathrm{AB}}\right)$
with $d$ sectors of dimension $d$.
\[
\left\{ \begin{array}{l}
\mathcal{H}_{0}^{\mathrm{AB}}:=\left(\mathcal{H}_{0}^{\mathrm{A}}\otimes\mathcal{H}_{0}^{\mathrm{B}}\right)\oplus\left(\mathcal{H}_{1}^{\mathrm{A}}\otimes\mathcal{H}_{d-1}^{\mathrm{B}}\right)\oplus\ldots\oplus\left(\mathcal{H}_{d-1}^{\mathrm{A}}\otimes\mathcal{H}_{1}^{\mathrm{B}}\right)\\
\mathcal{H}_{1}^{\mathrm{AB}}:=\left(\mathcal{H}_{0}^{\mathrm{A}}\otimes\mathcal{H}_{1}^{\mathrm{B}}\right)\oplus\left(\mathcal{H}_{1}^{\mathrm{A}}\otimes\mathcal{H}_{0}^{\mathrm{B}}\right)\oplus\ldots\oplus\left(\mathcal{H}_{d-1}^{\mathrm{A}}\otimes\mathcal{H}_{2}^{\mathrm{B}}\right)\\
\vdots\\
\mathcal{H}_{d-1}^{\mathrm{AB}}:=\left(\mathcal{H}_{0}^{\mathrm{A}}\otimes\mathcal{H}_{d-1}^{\mathrm{B}}\right)\oplus\left(\mathcal{H}_{1}^{\mathrm{A}}\otimes\mathcal{H}_{d-2}^{\mathrm{B}}\right)\oplus\ldots\oplus\left(\mathcal{H}_{d-1}^{\mathrm{A}}\otimes\mathcal{H}_{0}^{\mathrm{B}}\right)
\end{array}\right..
\]
Note that in $\mathcal{H}_{k}^{\mathrm{AB}}$, for $k\in\left\{ 0,\ldots,d-1\right\} $
there is a direct sum of all the terms $\mathcal{H}_{j}^{\mathrm{A}}\otimes\mathcal{H}_{l}^{\mathrm{B}}$
such that $j+l\equiv k\mod d$. In this way
\[
\mathcal{H}_{k}^{\mathrm{AB}}=\mathrm{Span}\left\{ \ket{j}_{\mathrm{A}}\ket{k-j\mod d}_{\mathrm{B}}:j=0,\ldots,d-1\right\} .
\]
The states of the composite system $\mathrm{AB}$ are therefore density
matrices of the form
\begin{equation}
\rho_{\mathrm{AB}}=\bigoplus_{k=0}^{d-1}p_{k}\rho_{k}^{\mathrm{AB}},\label{eq:coherent composite}
\end{equation}
where $\rho_{k}^{\mathrm{AB}}$ is a density matrix of $\mathcal{H}_{k}^{\mathrm{AB}}$,
and $\left\{ p_{k}\right\} _{k=0}^{d-1}$ is a probability distribution.
Again, the allowed effects are of the form $e=\bigoplus_{k=0}^{d-1}e_{k}$,
where $e_{k}$ is a quantum effect on $\mathcal{H}_{k}^{\mathrm{AB}}$.
The allowed channels on this systems are those quantum channels that
respect the block-diagonal structure. 

\paragraph{The new composition of classical dits satisfies Purification}

It is easy to see that any state of a classical dit $\rho=\bigoplus_{j=0}^{d-1}p_{j}\ket{j}\bra{j}$
can be purified. Specifically, we can find a purification in every
sector of a composite system $\mathrm{AB}$, where $\mathrm{B}$ is
another classical dit. Indeed, for every sector $k$ in $\mathrm{AB}$,
$\rho$ can be purified as
\[
\ket{\Psi_{k}}_{\mathrm{AB}}=\sum_{j=0}^{d-1}\sqrt{p_{j}}\ket{j}_{\mathrm{A}}\ket{k-j\mod d}_{\mathrm{B}}.
\]
These purifications are all related to each other by a local unitary
on $\mathrm{B}$, which hops between the $d$ sectors in $\mathrm{B}$.

So far we have dealt with the composition of classical systems with
the same dimension. Let us define the new, coherent, composition of
a $d_{\mathrm{A}}$-dimensional classical system $\mathrm{A}$ with
a $d_{\mathrm{B}}$-dimensional classical system $\mathrm{B}$. The
standard way to compose system $\mathrm{A}$, given by $\left(\mathcal{H}_{0}^{\mathrm{A}},\ldots,\mathcal{H}_{d_{\mathrm{A}}-1}^{\mathrm{A}}\right)$,
and system $\mathrm{B}$, given by $\left(\mathcal{H}_{0}^{\mathrm{B}},\ldots,\mathcal{H}_{d_{\mathrm{B}}-1}^{\mathrm{B}}\right)$,
where each sector is 1-dimensional, is a system $\mathrm{AB}$ with
$d_{\mathrm{A}}d_{\mathrm{B}}$ 1-dimensional sectors $\left(\mathcal{H}_{00}^{\mathrm{AB}},\ldots,\mathcal{H}_{d_{\mathrm{A}}-1,d_{\mathrm{B}}-1}^{\mathrm{AB}}\right)$.

In this case, to define the new composition, we consider $\max\left\{ d_{\mathrm{A}},d_{\mathrm{B}}\right\} $
sectors, each of dimension $\min\left\{ d_{\mathrm{A}},d_{\mathrm{B}}\right\} $,
so that altogether the Hilbert space for $\mathrm{AB}$ will have
dimension $d_{\mathrm{A}}d_{\mathrm{B}}$. For concreteness, suppose
$d_{\mathrm{A}}\leq d_{\mathrm{B}}$; then the composite system will
be described by $\left(\mathcal{H}_{0}^{\mathrm{AB}},\ldots,\mathcal{H}_{d_{\mathrm{B}}-1}^{\mathrm{AB}}\right)$,
where
\[
\left\{ \begin{array}{l}
\mathcal{H}_{0}^{\mathrm{AB}}:=\left(\mathcal{H}_{0}^{\mathrm{A}}\otimes\mathcal{H}_{0}^{\mathrm{B}}\right)\oplus\left(\mathcal{H}_{1}^{\mathrm{A}}\otimes\mathcal{H}_{d_{\mathrm{B}}-1}^{\mathrm{B}}\right)\oplus\ldots\oplus\left(\mathcal{H}_{d_{\mathrm{A}}-1}^{\mathrm{A}}\otimes\mathcal{H}_{d_{\mathrm{B}}-d_{\mathrm{A}}+1}^{\mathrm{B}}\right)\\
\mathcal{H}_{1}^{\mathrm{AB}}:=\left(\mathcal{H}_{0}^{\mathrm{A}}\otimes\mathcal{H}_{1}^{\mathrm{B}}\right)\oplus\left(\mathcal{H}_{1}^{\mathrm{A}}\otimes\mathcal{H}_{0}^{\mathrm{B}}\right)\oplus\ldots\oplus\left(\mathcal{H}_{d_{\mathrm{A}}-1}^{\mathrm{A}}\otimes\mathcal{H}_{d_{\mathrm{B}}-d_{\mathrm{A}}+2}^{\mathrm{B}}\right)\\
\vdots\\
\mathcal{H}_{d_{\mathrm{B}}-1}^{\mathrm{AB}}:=\left(\mathcal{H}_{0}^{\mathrm{A}}\otimes\mathcal{H}_{d_{\mathrm{B}}-1}^{\mathrm{B}}\right)\oplus\left(\mathcal{H}_{1}^{\mathrm{A}}\otimes\mathcal{H}_{d_{\mathrm{B}}-2}^{\mathrm{B}}\right)\oplus\ldots\oplus\left(\mathcal{H}_{d_{\mathrm{A}}-1}^{\mathrm{A}}\otimes\mathcal{H}_{d_{\mathrm{B}}-d_{\mathrm{A}}}^{\mathrm{B}}\right)
\end{array}\right..
\]
Again, in $\mathcal{H}_{k}^{\mathrm{AB}}$, for $k\in\left\{ 0,\ldots,d_{\mathrm{B}}-1\right\} $
there is a direct sum of all the terms $\mathcal{H}_{j}^{\mathrm{A}}\otimes\mathcal{H}_{l}^{\mathrm{B}}$
such that $j+l\equiv k\mod d_{\mathrm{B}}$. In this way
\[
\mathcal{H}_{k}^{\mathrm{AB}}=\mathrm{Span}\left\{ \ket{j}_{\mathrm{A}}\ket{k-j\mod d_{\mathrm{B}}}_{\mathrm{B}}:j=0,\ldots,d_{\mathrm{A}}-1\right\} .
\]
The states are still of the form~\eqref{eq:coherent composite},
with $d=d_{\mathrm{B}}$.

\subsection{The other composites}

From the previous subsection we know that in extended classical theory,
the generic system we have encountered so far is made of $N$ superselection
sectors $\left(\mathcal{H}_{0},\ldots,\mathcal{H}_{N-1}\right)$,
each of which of dimension $n\leq N$. Note that this covers also
the usual $d$-dimensional classical systems, for which $N=d$, and
$n=1$. To complete the theory, we must specify how these generic
systems compose.

We \emph{define} the composition of two general systems of extended
classical theory to follow the same rules explained above for classical
systems. More specifically, consider $\left(\mathcal{H}_{0}^{\mathrm{A}},\ldots,\mathcal{H}_{N-1}^{\mathrm{A}}\right)$,
with sectors of dimension $n\leq N$, and $\left(\mathcal{H}_{0}^{\mathrm{B}},\ldots,\mathcal{H}_{M-1}^{\mathrm{B}}\right)$
with sectors of dimension $m\leq M$, and for concreteness suppose
$N\leq M$. The composite system $\mathrm{AB}$ will have $M=\max\left\{ N,M\right\} $
sectors:
\begin{equation}
\left\{ \begin{array}{l}
\mathcal{H}_{0}^{\mathrm{AB}}:=\left(\mathcal{H}_{0}^{\mathrm{A}}\otimes\mathcal{H}_{0}^{\mathrm{B}}\right)\oplus\left(\mathcal{H}_{1}^{\mathrm{A}}\otimes\mathcal{H}_{M-1}^{\mathrm{B}}\right)\oplus\ldots\oplus\left(\mathcal{H}_{N-1}^{\mathrm{A}}\otimes\mathcal{H}_{M-N+1}^{\mathrm{B}}\right)\\
\mathcal{H}_{1}^{\mathrm{AB}}:=\left(\mathcal{H}_{0}^{\mathrm{A}}\otimes\mathcal{H}_{1}^{\mathrm{B}}\right)\oplus\left(\mathcal{H}_{1}^{\mathrm{A}}\otimes\mathcal{H}_{0}^{\mathrm{B}}\right)\oplus\ldots\oplus\left(\mathcal{H}_{N-1}^{\mathrm{A}}\otimes\mathcal{H}_{M-N+2}^{\mathrm{B}}\right)\\
\vdots\\
\mathcal{H}_{M-1}^{\mathrm{AB}}:=\left(\mathcal{H}_{0}^{\mathrm{A}}\otimes\mathcal{H}_{M-1}^{\mathrm{B}}\right)\oplus\left(\mathcal{H}_{1}^{\mathrm{A}}\otimes\mathcal{H}_{M-2}^{\mathrm{B}}\right)\oplus\ldots\oplus\left(\mathcal{H}_{N-1}^{\mathrm{A}}\otimes\mathcal{H}_{M-N}^{\mathrm{B}}\right)
\end{array}\right.,\label{eq:general composition extended}
\end{equation}
where, again in $\mathcal{H}_{k}^{\mathrm{AB}}$ there is the direct
sum of $N$ terms $\mathcal{H}_{j}^{\mathrm{A}}\otimes\mathcal{H}_{l}^{\mathrm{B}}$
such that $j+l\equiv k\mod M$. We see that in this case each sector
$\mathcal{H}_{k}^{\mathrm{AB}}$ in the composite system has dimension
$nmN$. If we take $n=m=1$, we recover the coherent composition law
for classical systems. Note that, in general it is \emph{not} true
that $nmN\leq M$. Indeed, composing two systems arising from the
coherent composition of dits, the resulting system has $d$ sectors,
each of which of dimension $d^{3}$. Therefore, in the most general
system of extended classical theory there is no restriction on the
dimension of sectors, and it is just a system with $N\geq2$ isomorphic
superselection sectors $\left(\mathcal{H}_{0},\ldots,\mathcal{H}_{N-1}\right)$.
The same rule~\eqref{eq:general composition extended} still applies
to these systems.

Generic states are of the form $\rho=\bigoplus_{k=0}^{N-1}p_{k}\rho_{k}$,
where $\rho_{k}$ is a density matrix of the sector $\mathcal{H}_{k}$,
and $\left\{ p_{k}\right\} _{k=0}^{N-1}$ is a probability distribution.
Effects are of the form $e=\bigoplus_{k=0}^{N-1}e_{k}$, where $e_{k}$
is a quantum effect on $\mathcal{H}_{k}$. Finally, all transformations
between $\mathrm{A}$ and $\mathrm{B}$ are those quantum operations
from $\mathrm{A}$ to $\mathrm{B}$ that preserve the block-diagonal
structure.

Now we can finally show that extended classical theory is a sharp
theory with purification. It is straightforward to show that the theory
satisfies Causality, Purity Preservation and Pure Sharpness.

\paragraph{Extended classical theory satisfies Purification}

A state of $\left(\mathcal{H}_{0}^{\mathrm{A}},\ldots,\mathcal{H}_{N-1}^{\mathrm{A}}\right)$,
with sectors of dimension $n$, can be diagonalised as
\[
\rho=\bigoplus_{k=0}^{N-1}\left(\sum_{j=0}^{n-1}\lambda_{j,k}\ket{\varphi_{j,k}}\bra{\varphi_{j,k}}\right),
\]
where $\left\{ \lambda_{j,k}\right\} $ is a probability distribution,
and $\left\{ \ket{\varphi_{j,k}}\right\} $ is an orthonormal basis
of $\mathcal{H}_{k}^{\mathrm{A}}$, for every $k$. Then to obtain
a purification of $\rho$, it is enough to take the same system as
the purifying system $\mathrm{B}$. Specifically a purification in
the sector $\mathcal{H}_{l}^{\mathrm{AB}}$ is given by
\[
\ket{\Psi_{l}}_{\mathrm{AB}}=\sum_{k=0}^{N-1}\sum_{j=0}^{n-1}\sqrt{\lambda_{j,k}}\ket{\varphi_{j,k}}_{\mathrm{A}}\ket{\alpha_{j,l-k}}_{\mathrm{B}}
\]
where $\left\{ \ket{\alpha_{j,l-k}}\right\} _{j=0}^{n-1}$ is an orthonormal
basis of sector $\mathcal{H}_{l-k}^{\mathrm{B}}$.\footnote{Here, as above, $l-k$ is to be intended modulo $N$.}
Since all superselection sectors are isomorphic, one can convert a
purification on $\mathcal{H}_{l}^{\mathrm{AB}}$ into a purification
on $\mathcal{H}_{l'}^{\mathrm{AB}}$ by a local hopping unitary on
system $\mathrm{B}$.

\paragraph{Extended classical theory violates Local Tomography}

We can show that Local Tomography fails in general. For a system $\left(\mathcal{H}_{0}^{\mathrm{A}},\ldots,\mathcal{H}_{N-1}^{\mathrm{A}}\right)$
with sectors of dimension $n$, the dimension of $\mathsf{St}_{\mathbb{R}}\left(\mathrm{A}\right)$
is $D_{\mathrm{A}}=Nn^{2}$. Similarly for a system with $M$ sectors
of dimension $m$, we have $D_{\mathrm{B}}=Mm^{2}$. Now, by eq.~\eqref{eq:general composition extended}
we have $D_{\mathrm{AB}}=MN^{2}m^{2}n^{2}$. Therefore
\[
D_{\mathrm{AB}}=MN^{2}m^{2}n^{2}>MNm^{2}n^{2}=D_{\mathrm{A}}D_{\mathrm{B}},
\]
which means that all composites in extended classical theory violate
Local Tomography.

\chapter{Operational thermodynamics\label{chap:Operational-thermodynamics}}

After analysing the properties of sharp theories with purification
in great detail, in this chapter finally we move to the actual study
of thermodynamic properties of GPTs, with a special focus on sharp
theories with purification. We mainly examine the simplest instance
of thermodynamics, namely for systems with fixed energy, also known
as \emph{microcanonical thermodynamics}. Even this case will provide
us with a lot of foundational insights. In accordance with recent
thermodynamic results, we will use a resource-theoretic approach to
microcanonical thermodynamics, which will allow us to extend its scope
beyond classical and quantum theory. Here the relevant resource into
play is the purity of states, therefore a resource-theoretic treatment
of microcanonical thermodynamics involves setting up a resource theory
of purity. It turns out that there are three fairly natural choices
for such a resource theory, but in general they give rise to inequivalent
notions of resources \cite{Purity}, unlike in quantum theory \cite{Nicole}.

After a general treatment of microcanonical thermodynamics in causal
GPTs \cite{Purity} in sections~\ref{sec:The-microcanonical-framework}
and \ref{sec:Three-resource-theories}, the axioms of sharp theories
with purification are introduced in the thermodynamic analysis only
from section~\ref{sec:Majorisation-and-unital}. These theories will
be the subject of the rest of the chapter, even when not specified
explicitly. Majorisation will play a central role, for it gives a
necessary criterion for the thermodynamic conversion of states under
all three resource theories in sharp theories with purification. Requiring
it to be sufficient too\textemdash or, in other words, that the three
resource theories generate equivalent preorders\textemdash will result
in a non-trivial constraint on the dynamics of the theory, called
\emph{unrestricted reversibility} \cite{Purity}.

Given the interplay between the three resource theories of purity
and majorisation, we use the diagonalisation of states in sharp theories
with purification to define mixedness monotones as Schur-concave functions
of the spectrum of a state \cite{TowardsThermo}. We show that a large
class of ``spectral'' monotones coincide with the ``non-spectral''
definitions already put forward in the literature \cite{Entropy-Barnum,Entropy-Short,Entropy-Kimura,TowardsThermo}.
An important example of mixedness monotone is Shannon-von Neumann
entropy, which exhibits some properties close to its quantum counterpart
\cite{TowardsThermo}.

Moving beyond the setting of microcanonical thermodynamics, we use
the entropic machinery of sharp theories with purification to define
generalised Gibbs states using Jaynes' maximum entropy principle \cite{Jaynes1,Jaynes2},
thus introducing temperature in GPTs. We use these states to carry
out an operational derivation of Landauer's principle, showing also
how the bounds on energy dissipation can be overcome by using non-classical
correlations, witnessed by negative conditional entropy \cite{TowardsThermo}.

\section{The microcanonical framework\label{sec:The-microcanonical-framework}}

We start this chapter by examining the simplest example of thermodynamic
situation: an isolated thermodynamic system. In the usual statistical
mechanical treatment, this is described by the microcanonical ensemble
\cite{Huang,Kardar1}, extensively studied in classical and quantum
theory. Here we want to extend the microcanonical description to arbitrary
physical theories, to understand if all theories admit a sensible
microcanonical ensemble. Being an isolated system, the energy of a
microcanonical system is known.\footnote{Sometimes, for mathematical convenience, one tolerates a small uncertainty
$\Delta$ on the energy \cite{Huang}.} This of course restricts the microstates of the system to a subset
of the allowed microstates: a submanifold of the phase space in classical
theory, and a subspace of the Hilbert space in quantum theory. One
of the basic principles of statistical mechanics is that we can recover
the thermodynamic properties of microcanonical systems from a suitable
statistical mechanical state \cite{Kardar1}. What state do we assign
to isolated thermodynamic systems? The prescription comes from the
\emph{equal a priori probability postulate}.
\begin{ax}[Equal a priori probability postulate]
The state of a system with a macroscopic constraint is given by the
uniform mixture of all the microstates compatible with that constraint.\footnote{In the quantum case, some authors \cite{Kardar1} formulate this postulate
as the fact that the microcanonical state is an incoherent mixture
of the basis states of the subspace identified by the macroscopic
constraint.}
\end{ax}

Justifying this principle is generally hard, and it usually involves
arguments based on ergodic theory (see \cite[chapter 15]{Fasano-Marmi}
for a review of the approach), at least in classical theory, or more
recently, based on quantum entanglement \cite{Popescu-Short-Winter,Canonical-typicality}.

In the following we want to extend the microcanonical framework to
arbitrary causal theories\footnote{Here Causality is assumed for the sake of simplicity, but it is not
necessary, cf.~\cite{Purity}.}, without assuming any of the axioms of chapter~\ref{chap:Sharp-theories-with}.
The microcanonical framework requires the extension of two ingredients
to GPTs: the implementation of the macroscopic constraint, and the
equal a priori probability postulate. As to the latter, we do not
want to address the thorny issue of justifying it, which inevitably
requires specifying a lot of details about the physical theory under
examination, but rather to identify the conditions that allow one
to formulate this principle in a theory more general than classical
or quantum theory.

\subsection{Theories of systems with constraints}

We start by analysing how to define constrained systems in GPTs, and
we will take inspiration from quantum theory. Note that constrained
systems appear very often in physics and often as a result of conservation
laws. Think e.g.\ of a system of particles confined in a fixed volume,
or a system with conserved total angular momentum. In these scenarios,
it is always meaningful to ask ourselves about the state of ``minimum
information'' compatible with the macroscopic constraints, which
will behave as a sort of generalised microcanonical state. Note that
imposing constraints, at least for quantum and classical systems,
leads to the definition of an \emph{effective system} of smaller dimension
than the original one. Let us see if this idea works in GPTs too,
and if we can set up an effective OPT whose systems are these effective,
constrained, systems.

To have a clearer picture, let us briefly review how the microcanonical
constraint is usually imposed in quantum theory. In this case, we
know the Hamiltonian $H$, and the specific value $E$ of its spectrum.
This restricts the allowed quantum states to (mixtures of) the eigenstates
of $H$ in the eigenspace associated with $E$. These are exactly
the states $\rho$ for which $P_{E}\rho P_{E}=\rho$, where $P_{E}$
is the projector on the eigenspace with eigenvalue $E$. For example,
the system $\mathrm{S}$ could be an electron in a hydrogen atom,
in the absence of external fields. In general, the basis states of
the electron are labelled as $\ket{n,\ell,m_{\ell},m_{s}}$, where
$n$, $\ell$, $m_{\ell}$, and $m_{s}$ are the principal, orbital,
magnetic, and spin quantum number respectively. Suppose we know the
electron is in the ground state, corresponding to $n=1$ and $\ell=m_{\ell}=0$.
In this case, the subspace associated with the ground state is a two-dimensional
one, spanned by the ``spin-up'' and ``spin-down'' states $\ket{n=1,\ell=0,m_{\ell}=0,m_{s}=\frac{1}{2}}$
and $\ket{n=1,\ell=0,m_{\ell}=0,m_{s}=-\frac{1}{2}}$. This constrained
system can be regarded as an \emph{effective qubit}. Clearly, constraints
different from energy can be treated in a similar way. Another example
of an effective qubit is a single photon with wave vector $\mathbf{k}$:
it is enough to restrict ourselves to the two-dimensional space spanned
by the states $\ket{\mathbf{k},H,1}$ and $\ket{\mathbf{k},V,1}$,
corresponding to vertical and horizontal polarisation respectively.
In this case, we can see two constraints working together: a constraint
on the wave vector, and a constraint on the energy of the field, namely
the fact that we are dealing with a single photon.

The important feature of these examples of constrained systems is
that they arise from a \emph{linear} constraint $\mathcal{L}\left(\rho\right)=0$
on the space of density matrices. For instance, in the microcanonical
case described above, the linear map $\mathcal{L}$ is $\mathcal{L}\left(\cdot\right)=P_{E}\left(\cdot\right)P_{E}-\mathcal{I}\left(\cdot\right)$,
$\mathcal{I}$ being the identity channel.

Motivated by the analysis of the quantum case, we define a \emph{constrained
}(or \emph{effective}) \emph{system} in GPTs as the original system
$\mathrm{S}$ with some \emph{linear} maps $\left\{ \mathcal{L}_{i}\right\} _{i=1}^{n}$
implementing the constraints, where each $\mathcal{L}_{i}$ is an
element of $\mathsf{Transf}_{\mathbb{R}}\left(\mathrm{S}\right)$,
the vector space spanned by physical transformations (cf.\ section~\ref{sec:The-probabilistic-structure}):
\[
\mathrm{A}:=\left(\mathrm{S},\left\{ \mathcal{L}_{i}\right\} _{i=1}^{n}\right).
\]
Clearly now we need to specify the set of states, effects, and transformations
for the constrained system $\mathrm{A}$. The states of $\mathrm{A}$
are clearly those that satisfy the constraints enforced by the linear
maps $\left\{ \mathcal{L}_{i}\right\} _{i=1}^{n}$:
\[
\mathsf{St}\left(\mathrm{A}\right)=\left\{ \rho\in\mathsf{St}\left(\mathrm{S}\right):\mathcal{L}_{i}\left(\rho\right)=0,i=1,\dots,n\right\} .
\]
The transformations of the effective system $\mathrm{A}$ are those
transformations of $\mathrm{S}$ that send states of $\mathrm{A}$
to states of $\mathrm{A}$. The effects of $\mathrm{A}$ are just
the effects of the original system $\mathrm{S}$, restricted to the
states in $\mathsf{St}\left(\mathrm{A}\right)$, and possibly identified
if they become tomographically indistinguishable because of the reduced
number of states, as shown in example~\ref{exa:decoherence}.
\begin{rem}
In the classical case, a Hamiltonian is generally of the form $H\left(\mathbf{q},\mathbf{p}\right)=\frac{\left|\mathbf{p}\right|^{2}}{2m}+V\left(\mathbf{q}\right)$.
Thus imposing the microcanonical restriction $H\left(\mathbf{q},\mathbf{p}\right)=E$
leads to the constraint $\frac{\left|\mathbf{p}\right|^{2}}{2m}+V\left(\mathbf{q}\right)=E$
on $\mathbf{p}$ and $\mathbf{q}$, which is \emph{not} linear. From
this situation, one may be tempted to think that treating constraints
as linear in all physical theories is not the right way. In fact,
there is no contradiction to what we wrote above: the constraint $H\left(\mathbf{q},\mathbf{p}\right)=E$
is not linear on the phase space, i.e.\ the set of normalised pure
states of classical theory. However, our constraints are applied to
\emph{all} states, not just the pure ones, so the right space on which
to study the microcanonical restriction is the space of all probability
density functions $\rho\left(\mathbf{q},\mathbf{p}\right)$ on the
phase space. In that space, the microcanonical constraint becomes
that the probability density function $\rho\left(\mathbf{q},\mathbf{p}\right)$
be supported on the submanifold $H\left(\mathbf{q},\mathbf{p}\right)=E$,
which is a linear constraint on $\rho$.
\end{rem}

Clearly, the formalism of constrained systems can be applied to more
general situations than microcanonical thermodynamics. Example~\ref{exa:decoherence}
shows that even classical sub-theories of a given GPT can be studied
in the same way \cite{Objectivity}.
\begin{example}
\label{exa:decoherence}Given a pure maximal set $\left\{ \alpha_{i}\right\} _{i=1}^{d}$
of a system $\mathrm{S}$ of a causal theory, let $\boldsymbol{\alpha}$
be its convex hull: $\boldsymbol{\alpha}=\mathrm{Conv}\left\{ \alpha_{i}:i=1,\ldots,d\right\} $,
which represents the states of a classical sub-theory. Define the
decoherence on $\boldsymbol{\alpha}$ as a channel $D_{\boldsymbol{\alpha}}$
such that $D_{\boldsymbol{\alpha}}\rho\in\boldsymbol{\alpha}$ for
every state $\rho$, and $D_{\boldsymbol{\alpha}}\gamma=\gamma$ for
every state $\gamma$ in $\boldsymbol{\alpha}$ \cite{Objectivity}.
By this very definition, it is immediate to see that the classical
states (i.e.\ the states in $\boldsymbol{\alpha}$) are exactly those
for which $D_{\boldsymbol{\alpha}}\rho=\rho$. Hence the decoherence
$D_{\boldsymbol{\alpha}}$ can be viewed as implementing a constraint
on the states of a system, as explained above. What happens to the
effects of this constrained subsystem? According to the recipe presented
above, we simply restrict the effects of the original theory to the
effective system $\boldsymbol{\alpha}$. Since now there are fewer
states in $\boldsymbol{\alpha}$ than in the original theory, we must
identify those effects that are no longer tomographically distinct
on $\boldsymbol{\alpha}$. More precisely, we can introduce the following
equivalence relation on the original set of effects $\mathsf{Eff}\left(\mathrm{S}\right)$:
$e\sim_{\boldsymbol{\alpha}}f$ if $\left(e\middle|\gamma\right)=\left(f\middle|\gamma\right)$
for every state $\gamma$ in $\boldsymbol{\alpha}$. Therefore, the
set of effects of the classical sub-theory is the set of equivalence
classes $\mathsf{Eff}\left(\mathrm{S}\right)/\boldsymbol{\alpha}:=\mathsf{Eff}\left(\mathrm{S}\right)/\sim_{\boldsymbol{\alpha}}$.

To complete our analysis, let us show that $\mathsf{Eff}\left(\mathrm{S}\right)/\boldsymbol{\alpha}$
is actually the set of effects of some classical theory. Recall that
in classical theory, every element in the cone of effects arises as
a conical combination of the effects that distinguish the pure states.
In our setting this means checking that every element of $\mathsf{Eff}_{+}\left(\mathrm{S}\right)/\boldsymbol{\alpha}$
arises as a conical combination\footnote{Note that it is not hard to see that $\mathsf{Eff}_{+}\left(\mathrm{S}\right)/\boldsymbol{\alpha}$
is still a cone, with the sum and the multiplication by a scalar inherited
from $\mathsf{Eff}_{+}\left(\mathrm{S}\right)$.} of the equivalence classes $\left[a_{i}\right]$ of the effects that
distinguish the pure states $\alpha_{i}$ in $\boldsymbol{\alpha}$.
Consider a generic element $\xi$ in $\mathsf{Eff}_{+}\left(\mathrm{S}\right)$,
and let us show that it is in the same equivalence class as $\xi'=\sum_{i=1}^{d}\lambda_{i}a_{i}$,
where $\lambda_{i}=\left(\xi\middle|\alpha_{i}\right)$ for all $i$.
By linearity, to check the equivalence of two elements of $\mathsf{Eff}_{+}\left(\mathrm{S}\right)$,
it is enough to check that they produce the same numbers when applied
to all pure states $\alpha_{j}$. Now,
\[
\left(\xi'\middle|\alpha_{j}\right)=\sum_{i=1}^{d}\lambda_{i}\left(a_{i}\middle|\alpha_{j}\right)=\lambda_{j}=\left(\xi\middle|\alpha_{j}\right)
\]
This shows that the restricted effect cone $\mathsf{Eff}_{+}\left(\mathrm{S}\right)/\boldsymbol{\alpha}$
of the sub-theory is actually a classical effect cone, generated by
the effects that distinguish the pure states in $\boldsymbol{\alpha}$.
This shows that the formalism of constrained systems works well also
for the study of the emergence of classicality in GPTs.
\end{example}

So far, we have defined effective systems at the single-system level.
In order to complete our picture, we need to define the composition
of constrained systems too. Consider two effective systems $\mathrm{A}:=\left(\mathrm{S}_{\mathrm{A}},\left\{ \mathcal{L}_{\mathrm{A},i}\right\} _{i=1}^{n}\right)$
and $\mathrm{B}:=\left(\mathrm{S}_{\mathrm{B}},\left\{ \mathcal{L}_{\mathrm{B},j}\right\} _{j=1}^{m}\right)$.
A natural way to define the effective composite system $\mathrm{AB}$
is to select the states of the unconstrained composite system $\mathrm{S}_{\mathrm{A}}\mathrm{S}_{\mathrm{B}}$
that satisfy \emph{both} constraints\textemdash i.e.\ to select the
density matrices $\rho_{\mathrm{S}_{\mathrm{A}}\mathrm{S}_{\mathrm{B}}}$
such that 
\begin{equation}
\left\{ \begin{array}{l}
\left(\mathcal{L}_{\mathrm{A},i}\otimes\mathcal{I}_{\mathrm{S}_{\mathrm{B}}}\right)\left(\rho_{\mathrm{S}_{\mathrm{A}}\mathrm{S}_{\mathrm{B}}}\right)=0\\
\left(\mathcal{I}_{\mathrm{S}_{\mathrm{A}}}\otimes\mathcal{L}_{\mathrm{B},j}\right)\left(\rho_{\mathrm{S}_{\mathrm{A}}\mathrm{S}_{\mathrm{B}}}\right)=0
\end{array}\right.,\label{eq:compose constraints}
\end{equation}
for all $i\in\left\{ 1,\dots,n\right\} $, and all $j\in\left\{ 1,\dots,m\right\} $.

When the effective systems $\mathrm{A}$ and $\mathrm{B}$ result
from an energy constraint, the effective system $\mathrm{AB}$ describes
a system consisting of two parts, each with its own well-defined energy.
In the quantum case, the constraints~\eqref{eq:compose constraints}
amount to $\left(P_{E_{\mathrm{A}}}\otimes Q_{E_{\mathrm{B}}}\right)\rho_{\mathrm{S}_{\mathrm{A}}\mathrm{S}_{\mathrm{B}}}\left(P_{E_{\mathrm{A}}}\otimes Q_{E_{\mathrm{B}}}\right)=\rho_{\mathrm{S}_{\mathrm{A}}\mathrm{S}_{\mathrm{B}}}$,
where $E_{\mathrm{A}}$ and $E_{\mathrm{B}}$ are the energies of
the two local systems, and $P_{E_{\mathrm{A}}}$ and $Q_{E_{\mathrm{B}}}$
are the projectors on the corresponding eigenspaces. Note that this
differs from the case where in the composition of two microcanonical
systems only a restriction on the \emph{global} energy is imposed.
In this latter case, the states satisfy the \emph{weaker} condition
$\Pi_{E_{\mathrm{A}}+E_{\mathrm{B}}}\rho_{\mathrm{S}_{\mathrm{A}}\mathrm{S}_{\mathrm{B}}}\Pi_{E_{\mathrm{A}}+E_{\mathrm{B}}}=\rho_{\mathrm{S}_{\mathrm{A}}\mathrm{S}_{\mathrm{B}}}$,
where $\Pi_{E_{\mathrm{A}}+E_{\mathrm{B}}}$ is the projector on the
eigenspace of $H_{\mathrm{S}_{\mathrm{A}}}+H_{\mathrm{S}_{\mathrm{B}}}$
with eigenvalue $E_{\mathrm{A}}+E_{\mathrm{B}}$.

The reason why we adopt eq.~\eqref{eq:compose constraints} as the
rule of composition of effective systems is because we want to keep
the systems $\mathrm{A}$ and $\mathrm{B}$ as independently constrained
systems, which can be addressed separately on their own, even in a
composite setting, still retaining their status of effective systems.
This property is particularly important if we want to develop an operational
theory of effective systems, so that we can treat them as actual physical
systems, forgetting that they arise from constraints imposed on a
physical system. In the case of the effective qubits arising from
photon polarisation, this fact is particularly apparent. Consider
two photons of different spatial modes, with wave vectors $\mathbf{k}_{\mathrm{A}}$
and $\mathbf{k}_{\mathrm{B}}$. If we put a constraint on the \emph{total}
energy of the two photons, one of the allowed states would be
\[
\ket{\Psi}=\frac{1}{\sqrt{2}}\left(\ket{\mathbf{k}_{\mathrm{A}},H,2}\ket{\mathbf{k}_{\mathrm{B}},H,0}+\ket{\mathbf{k}_{\mathrm{A}},H,0}\ket{\mathbf{k}_{\mathrm{B}},H,2}\right),
\]
where the third entry denotes the number of photons, but this cannot
be interpreted as a state of two single photons, because the number
of photons is undefined. Imposing the constraint~\eqref{eq:compose constraints}
avoids this problem. For this reason, we will use the notation $\mathrm{AB}$
for effective systems defined by the constraint~\eqref{eq:compose constraints}.

In summary, given a theory and a set of constraints composed as in
eq.~\eqref{eq:compose constraints}, one can build a new \emph{effective
theory}, which consists only of effective systems. In the microcanonical
case, we can build an effective theory where every system has definite
energy, and where all subsystems of a composite systems have definite
energy too. For a given system $\mathrm{A}$ in such a theory, all
the states in $\mathsf{St}\left(\mathrm{A}\right)$ have the same
energy by construction. Likewise, all the transformations in $\mathsf{Transf}\left(\mathrm{A}\right)$
will be energy-preserving. For every pair of systems $\mathrm{A}$
and $\mathrm{B}$, the composite system $\mathrm{AB}$ consists of
two parts, each of which with its own, well-defined energy. The joint
transformations in $\mathsf{Transf}\left(\mathrm{AB}\right)$ will
be interpreted as operations that preserve the energy of the first
part \emph{and} the energy of the second part.

The advantage of this effective description is that we need not specify
the constraints: in principle, \emph{every} linear constraint can
fit into the framework. In this way, we can circumvent the thorny
issue of defining the notion of Hamiltonian in GPTs \cite{Hamiltonian-GPTs}
(cf.\ section~\ref{sec:Generalised-Gibbs-states}): in the effective
description, we can simply regard each effective system as a system
with trivial Hamiltonian, which assigns the same energy to all states
of the system. Moreover, since effective system have the same operational
structure of unconstrained ones, we can forget we are dealing with
constrained systems, and treat them as ordinary ones. This is why
we call them ``effective systems''. When dealing with microcanonical
systems, one should always bear in mind that they arise from a constraint
on the energy of the system, but for simplicity we will not mention
this explicitly henceforth.

\subsection{The principle of equal a priori probability}

Let us now move to the equal a priori probability principle, according
to which one should assign the same probability to all the microstates
of the system compatible with a given macroscopic constraint. But
what are the microstates? To understand it, let us see what happens
in classical and quantum theory. In classical theory, they are the
points in the phase space; in quantum theory they are vectors in the
Hilbert space. In both cases they are normalised pure states. Therefore
it is natural for us to define microstates as \emph{normalised pure
states}, representing those preparations of the system that are both
deterministic and maximally fine-grained. Then the principle of equal
a priori probability states that the system should be described by
a uniform mixture of all deterministic pure states satisfying the
constraint. For example, the microcanonical state of a finite-dimensional
quantum system at energy $E$ is described by the density matrix $\chi_{E}:=\int_{S_{E}}p_{E}\ket{\psi}\bra{\psi}\mathrm{d}\psi,$
where $S_{E}$ is the set of pure states in the eigenspace corresponding
to the eigenvalue $E$, and $p_{E}\mathrm{d}\psi$ is the uniform
probability measure over $S_{E}$. In the effective picture, where
we call this subspace $\mathrm{A}$, the microcanonical state is nothing
but the\emph{ maximally mixed state} $\chi_{\mathrm{A}}:=\int\ket{\psi}\bra{\psi}\mathrm{d}\psi$
where $\mathrm{d}\psi$ is the uniform probability measure over the
pure states of the system.

In GPTs the key problem is to define what we mean by ``equal a priori
probability'', by which we could define the microcanonical state
as 
\begin{equation}
\chi:=\int\psi\thinspace\mathrm{d}\psi.\label{eq:chi integrale}
\end{equation}
In quantum mechanics there is a canonical choice: the unitarily invariant
probability measure on the pure states of the system. The natural
extension to GPTs is to consider the probability measures that are
invariant under all reversible channels. The problem is, however,
that, in general, there may be more than one invariant probability
measure, as illustrated in the following example.
\begin{example}
\begin{figure}
\begin{centering}
\subfloat[\label{fig:semicerchio}]{\begin{centering}
\includegraphics[scale=0.8]{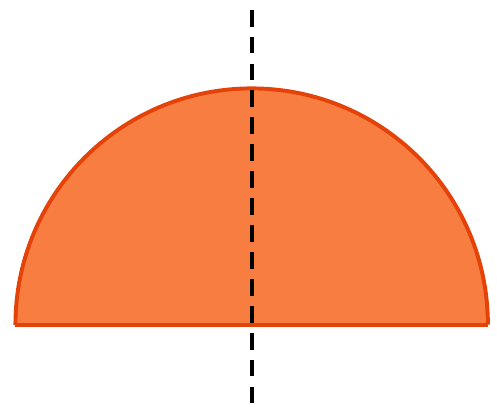}
\par\end{centering}
}\qquad{}\qquad{}\qquad{}\subfloat[\label{fig:cerchio}]{\begin{centering}
\includegraphics[scale=0.8]{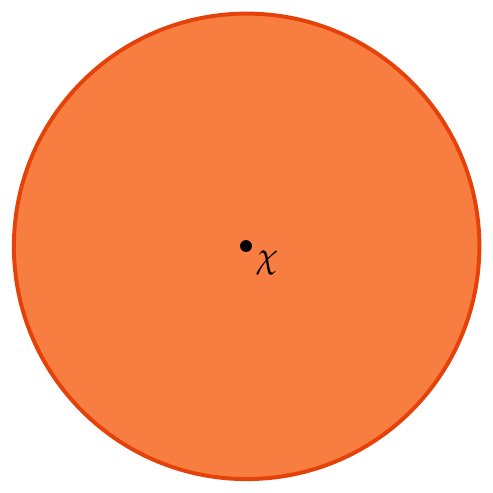}
\par\end{centering}
}
\par\end{centering}
\caption{Two different state spaces. In fig.~\ref{fig:semicerchio}, pure
states form a half-circle. Owing to the limited symmetry of the state
space, there is no canonical notion of equal a priori probability
on the set of pure states. As such, there are a lot of invariant states:
all the points of the state space on the symmetry axis. For the set
in fig.~\ref{fig:cerchio}, pure states form a circle, and the notion
of uniform probability distribution is uniquely defined. This means
that there is a unique microcanonical state $\chi$, which is the
only invariant state under all the symmetries of the disk.}
\end{figure}
Consider a system where the state space is a half-disk, like in fig.~\ref{fig:semicerchio}.
Pure states are the states on the half-circle, and they can be parametrised
by a polar angle $\vartheta\in\left[0,\pi\right]$. Now, reversible
channels must be symmetry transformations of the state space. For
the half-disk, the only symmetry transformations are the identity
and the reflection across the vertical symmetry axis (the black dashed
line in fig.~\ref{fig:semicerchio}). Hence, every probability density
function that assigns the same value to the points $\vartheta$ and
$\pi-\vartheta$ is guaranteed to be invariant under reversible channels.
This means that the ``equal a priori probability'' is \emph{not}
unique.

The situation is different if the state space of the system is a full
disk, as illustrated in fig.~\ref{fig:cerchio}, corresponding to
a rebit, a two-level system in real quantum mechanics \cite{Stuckelberg,Araki-real,Wootters-real,Hardy-real}.
In this case, every rotation of the disk is (at least in principle)
a reversible channel of the system. The invariant probability measure
is unique and given by the probability density function $p\left(\vartheta\right)=\frac{1}{2\pi}$. 
\end{example}

This example shows that there exist probabilistic theories where the
notion of ``equal a priori probability'' on pure states is not uniquely
defined. The physical implications of this is that in general there
is \emph{not} a unique microcanonical state. Indeed, by eq.~\eqref{eq:chi integrale},
$\mathrm{d}\psi$ is that invariant probability measure on pure states,
therefore different choices of $\mathrm{d}\psi$ correspond to different
microcanonical states. This is of course in contrast with the usual
statistical mechanical treatment, which assumes that the macroscopic
constraint is enough to determine the equilibrium state. In other
words, there are no further constraints coming from additional conserved
quantities that restrict the evolution of the system in the set of
pure states. In order to formulate the principle of equal a priori
probability, we introduce the following condition.
\begin{condition}
\label{cond:unique microcanonical}For every (finite-dimensional)
system there exists a \emph{unique} invariant probability measure
on normalised pure states.
\end{condition}

We saw that some systems do not admit a unique invariant probability
measure, like the one in fig.~\ref{fig:semicerchio}. However, are
we sure that at least one invariant probability measure always exist?
The answer is positive, and now we will show how to construct it.
Recall that the group of reversible channels $G$ is a compact group
(cf.\ subsection~\ref{subsec:The-group-of-reversible}), and that
compact groups admit a finite Haar measure $h$, which can be renormalised
so that $h\left(G\right)=1$ (see subsection~\ref{subsec:The-group-of-reversible}).
Now, there is a (continuous) group action of $G$ on normalised pure
states $\cdot:G_{\mathrm{A}}\times\mathsf{PurSt}_{1}\left(\mathrm{A}\right)\rightarrow\mathsf{PurSt}_{1}\left(\mathrm{A}\right)$,
given by $\mathcal{U}\psi$, for every $\mathcal{U}\in G_{\mathrm{A}}$,
and every $\psi\in\mathsf{PurSt}_{1}\left(\mathrm{A}\right)$. The
idea is to induce a probability measure on the set of normalised pure
states from the Haar probability measure on $G_{\mathrm{A}}$. To
do this, let us \emph{fix} a normalised pure state $\psi_{0}$ in
the action of $G_{\mathrm{A}}$, and consider the function $F_{\psi_{0}}:G_{\mathrm{A}}\rightarrow\mathsf{PurSt}_{1}\left(\mathrm{A}\right)$
such that $\mathcal{U}\mapsto\mathcal{U}\psi_{0}$. Since $F_{\psi_{0}}$
is continuous, we can induce a probability measure $\mu_{\psi_{0}}$
on $\mathsf{PurSt}_{1}\left(\mathrm{A}\right)$ (called the ``image
measure'' \cite{Folland_real}) by setting
\[
\mu_{\psi_{0}}\left(S\right):=h\left(F_{\psi_{0}}^{-1}\left(S\right)\right)
\]
for every Borel subset of $\mathsf{PurSt}_{1}\left(\mathrm{A}\right)$.
Let us show that this probability measure $\mu_{\psi_{0}}$ is invariant
under the action of $G$. We have $\mu_{\psi_{0}}\left(\mathcal{U}S\right)=h\left(F_{\psi_{0}}^{-1}\left(\mathcal{U}S\right)\right)$,
where $F_{\psi_{0}}^{-1}\left(\mathcal{U}S\right)=\left\{ \mathcal{V}\in G:\mathcal{V}\psi_{0}\in\mathcal{U}S\right\} $.
Now $F_{\psi_{0}}^{-1}\left(\mathcal{U}S\right)=\mathcal{U}F_{\psi_{0}}^{-1}\left(S\right)$.
To see it, take $\mathcal{V}\in G$ such that $\mathcal{V}\psi_{0}\in\mathcal{U}S$.
Now, this implies that $\mathcal{U}^{-1}\mathcal{V}\psi_{0}\in S$,
so $\mathcal{U}^{-1}\mathcal{V}\in F_{\psi_{0}}^{-1}\left(S\right)$,
and clearly $\mathcal{V}\in F_{\psi_{0}}^{-1}\left(\mathcal{U}S\right)$
can be written as $\mathcal{U}$ times an element in $F_{\psi_{0}}^{-1}\left(S\right)$.
Therefore $F_{\psi_{0}}^{-1}\left(\mathcal{U}S\right)\subseteq\mathcal{U}F_{\psi_{0}}^{-1}\left(S\right)$.
To show the other inclusion, take $\mathcal{V}\in G$ such that $\mathcal{V}\psi_{0}\in S$.
Clearly $\mathcal{U}\mathcal{V}\psi_{0}\in\mathcal{U}S$, so $F_{\psi_{0}}^{-1}\left(\mathcal{U}S\right)\supseteq\mathcal{U}F_{\psi_{0}}^{-1}\left(S\right)$,
and therefore $F_{\psi_{0}}^{-1}\left(\mathcal{U}S\right)=\mathcal{U}F_{\psi_{0}}^{-1}\left(S\right)$.
In conclusion
\[
\mu_{\psi_{0}}\left(\mathcal{U}S\right)=h\left(\mathcal{U}F_{\psi_{0}}^{-1}\left(S\right)\right)=h\left(F_{\psi_{0}}^{-1}\left(S\right)\right)=\mu_{\psi_{0}}\left(S\right),
\]
where we have used the invariance of the Haar probability measure
$h$. This shows that the image probability measure $\mu_{\psi_{0}}$
on $\mathsf{PurSt}_{1}\left(\mathrm{A}\right)$ is invariant under
reversible channels. This means that there \emph{always} exists an
invariant probability measure on normalised pure states.

Condition~\ref{cond:unique microcanonical} enforces the uniqueness
of such a measure. By its very definition $\mu_{\psi_{0}}$ depends
on the pure state $\psi_{0}$ we fix, so in general we expect that
different choices of $\psi_{0}$ will result in different invariant
probability measures on the set of normalised pure states. Let us
examine this issue in greater detail. Consider now $\psi'_{0}$ such
that $\psi'_{0}=\mathcal{U}\psi_{0}$ for some $\mathcal{U}\in G$.
What is the relationship between $\mu_{\psi_{0}}$ and $\mu_{\psi'_{0}}$?
To answer the question, we need to understand how to write $F_{\psi'_{0}}^{-1}\left(S\right)$
in terms of $F_{\psi_{0}}^{-1}\left(S\right)$ for any Borel set $S$.
We have $F_{\psi'_{0}}^{-1}\left(S\right)=F_{\psi_{0}}^{-1}\left(S\right)\mathcal{U}^{-1}$.
To see it, consider $\mathcal{V}\in G$ such that $\mathcal{V}\psi'_{0}\in S$,
which means $\mathcal{V}\mathcal{U}\psi_{0}\in S$. Therefore $\mathcal{V}$
can be written as an element of $F_{\psi_{0}}^{-1}\left(S\right)$
(i.e.\ $\mathcal{V}\mathcal{U}$) times $\mathcal{U}^{-1}$. This
shows $F_{\psi'_{0}}^{-1}\left(S\right)\subseteq F_{\psi_{0}}^{-1}\left(S\right)\mathcal{U}^{-1}$
. To prove the other inclusion, consider $\mathcal{V}\in G$ such
that $\mathcal{V}\psi_{0}\in S$, and let us show that $\mathcal{V}\mathcal{U}^{-1}$
is such that $\mathcal{V}\mathcal{U}^{-1}\psi'_{0}\in S$. This is
immediate, as $\psi'_{0}=\mathcal{U}\psi_{0}$. This proves that $F_{\psi'_{0}}^{-1}\left(S\right)\supseteq F_{\psi_{0}}^{-1}\left(S\right)\mathcal{U}^{-1}$,
so $F_{\psi'_{0}}^{-1}\left(S\right)=F_{\psi_{0}}^{-1}\left(S\right)\mathcal{U}^{-1}$.
Hence
\[
\mu_{\mathcal{U}\psi_{0}}\left(S\right)=h\left(F_{\mathcal{U}\psi_{0}}^{-1}\left(S\right)\right)=h\left(F_{\psi_{0}}^{-1}\left(S\right)\mathcal{U}^{-1}\right)=h\left(F_{\psi_{0}}^{-1}\left(S\right)\right)=\mu_{\psi_{0}}\left(S\right),
\]
where we have used the invariance properties of the Haar probability
measure once more. This shows that pure states in the same orbit of
the group action generate the same invariant probability measure.
Therefore having a unique orbit (i.e.\ a transitive action) is a
\emph{sufficient} condition to have a unique invariant probability
measure on normalised pure states (and hence a unique microcanonical
state).

This condition is \emph{necessary} too. To prove it, suppose there
is more than one orbit, and take a pure state $\psi_{0}$ in an orbit,
and another $\psi'_{0}$ in a different one. Let us show that the
induced probability measures are different. Take $S$ to be the orbit
$G\psi_{0}$ of $\psi_{0}$; since the action is continuous, this
is a closed set, hence a Borel set. Then $F_{\psi_{0}}^{-1}\left(G\psi_{0}\right)=\left\{ \mathcal{U}\in G:\mathcal{U}\psi_{0}\in G\psi_{0}\right\} =G$,
by definition of orbit of $\psi_{0}$. Hence $\mu_{\psi_{0}}\left(G\psi_{0}\right)=h\left(G\right)=1$.
On the other hand, for the same orbit $G\psi_{0}$, we have $F_{\psi'_{0}}^{-1}\left(G\psi_{0}\right)=\left\{ \mathcal{U}\in G:\mathcal{U}\psi'_{0}\in G\psi_{0}\right\} =\varnothing$,
because orbits are disjoint sets, so there are no elements in the
orbits of $\psi'_{0}$ also in the orbit of $\psi_{0}$. Hence $\mu_{\psi'_{0}}\left(G\psi_{0}\right)=h\left(\varnothing\right)=0$.
In conclusion, we have found a set to which the probability measures
generated by $\psi_{0}$ and $\psi'_{0}$ assign different values,
therefore they are different measures.\footnote{In fact, we can say even more: the probability measures associated
with different orbits are \emph{mutually singular} \cite{Folland_real}:
there exist two complementary subsets of $\mathsf{PurSt}_{1}\left(\mathrm{A}\right)$,
one with zero probability for one measure, and the other with zero
probability for the other. This means that the two probability measures
``live on disjoint sets''.}

To summarise: an invariant probability measure on $\mathsf{PurSt}_{1}\left(\mathrm{A}\right)$
always exists, and it is unique if and only if the action of reversible
channels is transitive on $\mathsf{PurSt}_{1}\left(\mathrm{A}\right)$.
We collect all these remarks in the following theorem.
\begin{thm}
\label{thm:unique probability}For every finite system $\mathrm{A}$,
the following are equivalent:
\begin{enumerate}
\item Condition~\ref{cond:unique microcanonical} is satisfied.
\item For every pair of normalised pure states, there exists a reversible
channel connecting them.
\end{enumerate}
\end{thm}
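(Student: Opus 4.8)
The plan is to establish the equivalence in Theorem~\ref{thm:unique probability} by proving $(2)\Rightarrow(1)$ and $(1)\Rightarrow(2)$, relying on the detailed construction of the image measure $\mu_{\psi_0}$ already carried out in the preceding text. The crucial fact to invoke is that, for a compact group $G_{\mathrm{A}}$ of reversible channels (Proposition~\ref{prop:group compact}), an invariant probability measure on $\mathsf{PurSt}_1(\mathrm{A})$ \emph{always} exists: one fixes $\psi_0$, uses the continuous surjection $F_{\psi_0}:G_{\mathrm{A}}\to\mathsf{PurSt}_1(\mathrm{A})$, $\mathcal{U}\mapsto\mathcal{U}\psi_0$ (well, onto its orbit), pushes the normalised Haar measure $h$ forward, and checks invariance via $F_{\psi_0}^{-1}(\mathcal{U}S)=\mathcal{U}F_{\psi_0}^{-1}(S)$. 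So the real content of the theorem is the uniqueness clause, and both implications compare measures associated with different base points $\psi_0,\psi_0'$.

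For $(2)\Rightarrow(1)$: assume the action is transitive, so $\mathsf{PurSt}_1(\mathrm{A})$ is a single orbit. Given any invariant probability measure $\nu$ on $\mathsf{PurSt}_1(\mathrm{A})$, I would show $\nu=\mu_{\psi_0}$ for a fixed $\psi_0$. The standard way is to realise $\nu$ as a pushforward: define a measure on $G_{\mathrm{A}}$ by pulling back along $F_{\psi_0}$ (this requires a measurable section or the observation that $F_{\psi_0}$ is a continuous surjection between compact metric spaces, hence has a Borel section by the Jankov--von Neumann or Federer--Morse selection theorem), then average over $G_{\mathrm{A}}$ against $h$. By uniqueness of the Haar measure up to normalisation, the resulting $G_{\mathrm{A}}$-invariant measure on $G_{\mathrm{A}}$ is $h$ itself, so its pushforward is $\mu_{\psi_0}$; and the averaging leaves $\nu$ unchanged because $\nu$ is already invariant. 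An alternative, cleaner route — which I would actually prefer — is to test both $\nu$ and $\mu_{\psi_0}$ against an arbitrary continuous function $f$ on $\mathsf{PurSt}_1(\mathrm{A})$: invariance of $\nu$ gives $\int f\,\mathrm{d}\nu=\int_{G_{\mathrm{A}}}\!\int f(\mathcal{U}\psi)\,\mathrm{d}\nu(\psi)\,\mathrm{d}h(\mathcal{U})$, and by transitivity $\int f(\mathcal{U}\psi)\,\mathrm{d}h(\mathcal{U})$ is independent of $\psi$ (it equals $\int f\,\mathrm{d}\mu_{\psi_0}$ by the orbit argument already in the text), so $\int f\,\mathrm{d}\nu = \int f\,\mathrm{d}\mu_{\psi_0}$ for all $f$, whence $\nu=\mu_{\psi_0}$ by the Riesz representation theorem. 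Thus Condition~\ref{cond:unique microcanonical} holds.

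For $(1)\Rightarrow(2)$: I argue by contraposition. Suppose the action is \emph{not} transitive; pick $\psi_0$ and $\psi_0'$ in distinct orbits. As shown in the passage immediately above the theorem, taking $S$ to be the (closed, hence Borel) orbit $G_{\mathrm{A}}\psi_0$ gives $\mu_{\psi_0}(G_{\mathrm{A}}\psi_0)=h(G_{\mathrm{A}})=1$ while $\mu_{\psi_0'}(G_{\mathrm{A}}\psi_0)=h(\varnothing)=0$, because orbits are pairwise disjoint. Hence $\mu_{\psi_0}\neq\mu_{\psi_0'}$, so there are at least two distinct invariant probability measures, contradicting Condition~\ref{cond:unique microcanonical}. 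This direction is essentially already written out; I would just package it as the contrapositive of $(1)\Rightarrow(2)$ and note that the orbit $G_{\mathrm{A}}\psi_0$ is closed since the action $G_{\mathrm{A}}\times\mathsf{PurSt}_1(\mathrm{A})\to\mathsf{PurSt}_1(\mathrm{A})$ is continuous and $G_{\mathrm{A}}$ is compact (using also that $\mathsf{PurSt}_1(\mathrm{A})$ is compact — Proposition~\ref{prop:pure compactness} — so orbits are compact, hence closed, in the Hausdorff space of states).

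The main obstacle is the rigorous handling of $(2)\Rightarrow(1)$: asserting that \emph{every} invariant probability measure equals the pushforward $\mu_{\psi_0}$ is where care is needed, since a priori there could be invariant measures not arising from $h$. The ``test against continuous functions'' argument sidesteps the measurable-section technicalities and is the cleanest, but one must be slightly careful that $\mathsf{PurSt}_1(\mathrm{A})$ is a compact metric (or at least compact Hausdorff) space so that Riesz representation and Fubini apply — both follow from the finite-dimensionality and the compactness results already established in the excerpt. Everything else is a straightforward assembly of facts proved earlier, so I would keep the write-up short, citing the relevant propositions.
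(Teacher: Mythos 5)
Your proposal is correct and follows the same overall strategy as the paper: the paper proves this theorem entirely in the discussion preceding its statement, constructing the pushforward $\mu_{\psi_0}$ of the Haar measure, showing that states in the same orbit induce the same measure, and establishing necessity by evaluating $\mu_{\psi_0}$ and $\mu_{\psi_0'}$ on the closed orbit $G_{\mathrm{A}}\psi_0$ for $\psi_0,\psi_0'$ in distinct orbits --- exactly your contrapositive argument for $(1)\Rightarrow(2)$. Where you genuinely diverge is in $(2)\Rightarrow(1)$, and this is to your credit: the paper only shows that all the \emph{pushforward} measures $\mu_{\psi_0}$ coincide under transitivity and then asserts uniqueness, leaving implicit the step you correctly flag as the main obstacle, namely that an \emph{arbitrary} invariant probability measure $\nu$ need not a priori arise as a pushforward of $h$. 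Your averaging argument --- writing $\int f\,\mathrm{d}\nu=\int\bigl(\int_{G_{\mathrm{A}}}f(\mathcal{U}\psi)\,\mathrm{d}h(\mathcal{U})\bigr)\mathrm{d}\nu(\psi)$ via invariance and Fubini, observing that the inner integral equals $\int f\,\mathrm{d}\mu_{\psi_0}$ for every $\psi$ by transitivity, and concluding $\nu=\mu_{\psi_0}$ by Riesz representation --- is the standard unique-ergodicity argument for transitive compact group actions and is exactly what is needed to close this gap; the hypotheses (compactness of $G_{\mathrm{A}}$ and of $\mathsf{PurSt}_1(\mathrm{A})$, continuity of the action) are all available from the propositions you cite. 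Your remark that closedness of orbits uses compactness of $G_{\mathrm{A}}$, not merely continuity of the action, is also a small but genuine sharpening of the paper's phrasing.
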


Note that enforcing the uniqueness of the invariant probability measure
on normalised pure states leads to a non-trivial requirement\textemdash transitivity\textemdash ,
which has appeared several times, either directly or indirectly, in
various reconstructions of quantum theory \cite{Hardy-informational-1,Chiribella-informational,Brukner,Masanes-physical-derivation,Hardy-informational-2,Masanes+all}.
Theorem~\ref{thm:unique probability} provides one more motivation
for this condition, this time from a completely different perspective,
a thermodynamic one. The transitivity requirement already appeared
in \cite{Chiribella-Scandolo-entanglement} as a requirement for a
``canonical resource theory of purity'' in the context of the duality
between pure-state entanglement and purity in GPTs.

Note that transitivity means that, at least in principle, the dynamics
allowed by the physical theory enable one to explore the entire set
of normalised pure states compatible with the macroscopic constraint
on the energy, a fact that is often assumed or remarked in textbook
presentations about the microcanonical ensemble \cite{Kardar1}.

We finish this subsection by noting that sharp theories with purification,
having a transitive action (see section~\ref{sec:The-axioms-and}),
satisfy condition~\ref{cond:unique microcanonical}. With these theories
we are on the right track to have a sensible microcanonical thermodynamics.

\subsection{The microcanonical state}

Once we know that condition~\ref{cond:unique microcanonical} holds
in a physical theory, we can use eq.~\eqref{eq:chi integrale} to
construct the microcanonical state, where now $\mathrm{d}\psi$ is
the unique invariant probability measure over the normalised pure
states of a system. The convexity of the state space guarantees that
the microcanonical state is indeed a state. Moreover, since the state
space is finite-dimensional, thanks to Carathéodory's theorem for
convex geometry \cite{Caratheodory,Steinitz}, it is possible to replace
the integral in eq.~\eqref{eq:chi integrale} with a \emph{finite}
sum. This means that the microcanonical state can (in principle) be
generated by picking pure states at random from a finite set.

The following proposition highlights two important properties of the
microcanonical state.
\begin{prop}
\label{prop:properties microcanonical}The microcanonical state $\chi$
satisfies the following two properties:
\begin{enumerate}
\item it is invariant under all reversible channels of the system;\label{enu:invariance chi}
\item it can be generated from any normalised pure state by a fixed random
reversible dynamic.\label{enu:RaRe chi}
\end{enumerate}
\end{prop}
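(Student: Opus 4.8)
The plan is to prove the two properties of the microcanonical state $\chi := \int \psi\,\mathrm{d}\psi$ separately, using only the material developed in this section (most importantly that the invariant probability measure $\mathrm{d}\psi$ on normalised pure states exists, is unique, and that uniqueness is equivalent to transitivity of the action of $G_{\mathrm{A}}$, by theorem~\ref{thm:unique probability}).

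For property~\ref{enu:invariance chi}, I would argue exactly as in the proof that ``\emph{In every causal theory there is at least one invariant state}'' earlier in the excerpt, but now using the \emph{canonical} measure rather than an arbitrary one built from a fixed pure state. Fix a reversible channel $\mathcal{U}\in G_{\mathrm{A}}$. Then
\[
\mathcal{U}\chi = \int \mathcal{U}\psi\,\mathrm{d}\psi = \int \psi'\,\mathrm{d}\left(\mathcal{U}^{-1}\psi'\right),
\]
after the change of variables $\psi' = \mathcal{U}\psi$. Now $\mathrm{d}\left(\mathcal{U}^{-1}\psi'\right)$ is, by construction, the pushforward of $\mathrm{d}\psi$ under $\mathcal{U}$, which is again an invariant probability measure on $\mathsf{PurSt}_{1}\left(\mathrm{A}\right)$ (invariance of the pushforward follows because composing with $\mathcal{U}$ is a bijection of the pure state space commuting with the $G_{\mathrm{A}}$-action, exactly the computation $\mu_{\psi_0}(\mathcal{U}S) = \mu_{\psi_0}(S)$ already carried out in the text for the image measures). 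By condition~\ref{cond:unique microcanonical} the invariant probability measure is unique, so the pushforward equals $\mathrm{d}\psi$, whence $\mathcal{U}\chi = \int \psi'\,\mathrm{d}\psi' = \chi$. Since $\mathcal{U}$ was arbitrary, $\chi$ is invariant. (One could equally invoke proposition~\ref{prop:uniqueness invariant} together with transitivity, noting that $\chi = \chi_{\psi}$ for every pure $\psi$; I would mention this as the quicker route.)

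For property~\ref{enu:RaRe chi}, the idea is to produce a single fixed randomisation — a random reversible channel — that maps \emph{every} normalised pure state to $\chi$. By transitivity, for a fixed reference pure state $\psi_0$ one has $\chi = \int_{G_{\mathrm{A}}} \mathcal{V}\psi_0\,\mathrm{d}\mathcal{V}$ with $\mathrm{d}\mathcal{V}$ the Haar probability measure (this is eq.~\eqref{eq:invariant from rho} specialised to pure states, using that every pure state is $\mathcal{V}\psi_0$ for some $\mathcal{V}$). More strongly, by the argument around eq.~\eqref{eq:invariant from rho} and Carath\'eodory's theorem, $\chi = \sum_i p_i \,\mathcal{V}_i\,\rho$ for \emph{every} state $\rho$, with the \emph{same} finite family $\{(p_i,\mathcal{V}_i)\}$; in particular this holds simultaneously for all normalised pure states $\psi$, so the randomised test $\{p_i\mathcal{V}_i\}$ (a random reversible channel in the sense of definition~\ref{def:A-resource-theory} and the terminology of the introduction) sends any pure input to $\chi$. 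I would write out that the coarse-graining $\sum_i p_i \mathcal{V}_i$ is a legitimate channel (randomisation of reversible channels, available in any non-deterministic causal theory by the discussion of randomised tests), and that $\left(\sum_i p_i\mathcal{V}_i\right)\psi = \chi$ for every $\psi \in \mathsf{PurSt}_1(\mathrm{A})$.

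The main obstacle is the uniqueness step in property~\ref{enu:invariance chi}: one must be careful that the pushforward measure $\mathcal{U}_*(\mathrm{d}\psi)$ really is an \emph{invariant probability measure} (not merely a probability measure), so that condition~\ref{cond:unique microcanonical} can be applied to conclude it coincides with $\mathrm{d}\psi$. This is routine measure theory — it is the same manipulation already performed in the text for the image measures $\mu_{\psi_0}$ — but it should be stated explicitly. A secondary subtlety in property~\ref{enu:RaRe chi} is to make sure the finite convex decomposition of $\chi$ obtained from Carath\'eodory can be taken \emph{independent of the pure input state}; the cleanest way is to note, as the text already does for the internal-state argument, that $\chi = \int_{G}\mathcal{V}\rho\,\mathrm{d}\mathcal{V}$ holds for \emph{all} $\rho$ with the one Haar measure, and then discretise that single integral once and for all.
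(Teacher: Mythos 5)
Your proof is correct and follows essentially the same route as the paper: the change of variables $\psi'=\mathcal{U}\psi$ together with invariance of $\mathrm{d}\psi$ for property~1, and the Haar-averaged channel $\int_G\mathcal{U}\,\mathrm{d}\mathcal{U}$ discretised once via Carath\'eodory for property~2. The only (harmless) difference is that for property~1 you go through uniqueness of the invariant measure to identify the pushforward, whereas the invariance of $\mathrm{d}\psi$ already says directly that its pushforward under any $\mathcal{U}\in G_{\mathrm{A}}$ is itself.
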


\begin{proof}
Let us prove the two properties.
\begin{enumerate}
\item For every reversible channel $\mathcal{U}$, one has
\begin{equation}
\mathcal{U}\chi=\int\mathcal{U}\psi\thinspace\mathrm{d}\psi=\int\psi'\thinspace\mathrm{d}\left(\mathcal{U}^{-1}\psi'\right)=\int\psi'\thinspace\mathrm{d}\psi'=\chi,\label{eq:chi invariant reversible}
\end{equation}
where we have defined $\psi':=\mathcal{U}\psi$, and we have used
the invariance of the probability measure $\mathrm{d}\psi$.
\item Consider the transformation $\mathcal{T}=\int_{G}\mathcal{U}\thinspace\mathrm{d}\mathcal{U}$,
where $\mathrm{d}\mathcal{U}$ is the Haar probability measure on
the group of reversible channels. $\mathcal{T}$ is a channel because
\[
u\mathcal{T}=\int_{G}u\mathcal{U}\thinspace\mathrm{d}\mathcal{U}=u\int_{G}\thinspace\mathrm{d}\mathcal{U}=u.
\]
The channel $\mathcal{T}$ maps every normalised pure state $\psi$
to the microcanonical state: indeed, one has 
\[
\mathcal{T}\psi=\int_{G}\mathcal{U}\psi\thinspace\mathrm{d}\mathcal{U}=\chi,
\]
like in eq.~\eqref{eq:chi invariant reversible}. Since we are working
with finite-dimensional systems, the integral in the definition of
$\mathcal{T}$ can be replaced by a \emph{finite} convex combination
of reversible channels, by Carathéodory's theorem. Clearly, this writing
of $\mathcal{T}$ does \emph{not} depend on the pure state $\psi$
it is applied to. Therefore, there is a fixed convex combination of
reversible channels\textemdash a random reversible dynamic\textemdash such
that when it is applied to any pure state, it yields the microcanonical
state.
\end{enumerate}
\end{proof}
Property~\ref{enu:invariance chi} expresses the fact that the microcanonical
state is an \emph{equilibrium state}, in the sense that it does not
evolve under any of the reversible dynamics compatible with the macroscopic
constraint. Note that the notion of equilibrium here is different
from the notion of thermal equilibrium, which refers to interactions
with an external bath. This is rather a \emph{dynamical equilibrium}:
the probability assignments to pure states made in the definition
of the microcanonical state are stable under all possible evolutions
of the system.

By condition~\ref{cond:unique microcanonical} we are dealing with
GPTs with transitive action of reversible channels on pure states,
and they have a unique invariant state (cf.\ proposition~\ref{prop:uniqueness invariant}).
Thus property~\ref{enu:invariance chi} identifies microcanonical
states with the invariant states of those GPTs: looking for the microcanonical
state is equivalent to searching for the invariant state. In the following,
given the overall thermodynamic character of this chapter we will
stick to ``microcanonical states'' as terminology.
\begin{example}
In fig.~\eqref{fig:cerchio} we showed the state space of a rebit,
a disk, which admits a unique invariant probability density function
on pure states. In this case the microcanonical state is the unique
invariant state under all the symmetries of the disk, namely its centre.

In any sharp theory with purification, given any pure maximal set
$\left\{ \alpha_{i}\right\} _{i=1}^{d}$ the microcanonical state
is the invariant state, diagonalised as 
\begin{equation}
\chi=\frac{1}{d}\sum_{i=1}^{d}\alpha_{i}.\label{eq:microcanonical diagonalised}
\end{equation}
In quantum theory, the microcanonical state is the maximally mixed
state $\frac{1}{d}\mathbf{1}$.

It is worth noting that eq.~\eqref{eq:microcanonical diagonalised}
is often taken as the \emph{definition} of microcanonical state in
quantum statistical mechanics \cite{Huang,Kardar1}. According to
it, the microcanonical state is defined as the uniform mixture of
orthonormal basis states of the eigenspace of the fixed energy. Proposition~\ref{prop:diagonalization chi d-level 2}
shows that a similar definition is possible in every sharp theory
with purification. One might be tempted to use eq.~\eqref{eq:microcanonical diagonalised}
to define the microcanonical state in \emph{arbitrary} physical theories.
However, the fact that the state resulting from the uniform mixture
of the $\alpha_{i}$'s is independent of the choice of maximal set
is not guaranteed to hold in every theory. Moreover, in general there
is no relationship between the writing of eq.~\eqref{eq:microcanonical diagonalised}
and that of eq.~\eqref{eq:chi integrale}. For this reason, we prefer
to stick to eq.~\eqref{eq:chi integrale}, and define the microcanonical
state as the uniform mixture of \emph{all} pure states with a given
energy, rather than the uniform mixture of a particular pure maximal
set. From a physical angle, the uniform mixture of all pure states
represents the result of fully uncontrolled, but energy conserving
fluctuations in the experimental setup. This describes the situation
of a total lack of knowledge besides the knowledge of the value of
the energy: not even the ``energy eigenbasis''\textemdash the pure
maximal set $\left\{ \alpha_{i}\right\} _{i=1}^{d}$\textemdash is
known.
\end{example}

Property~\ref{enu:RaRe chi} instead refers to the fact that the
system can, at least in principle, be brought to equilibrium, so there
exists a sort of thermalisation process. Physically, the fact that
this process is a random reversible one can be interpreted as a situation
where the experimenter has no full control on the preparation, but
has control on the dynamics of the system, maybe through some classical
control fields \cite{Chiribella-Scandolo-entanglement}. In this picture,
property~\ref{enu:RaRe chi} guarantees that the experimenter can
prepare the microcanonical state by drawing the parameters of their
control fields at random. Further along this line, one can also imagine
scenarios where the randomisation occurs naturally as a result of
fluctuations of the fields. Property~\ref{enu:RaRe chi} is important
for the resource-theoretic approach (section~\ref{sec:Three-resource-theories}),
because it guarantees that the microcanonical state is ``easy to
prepare'', therefore it can be rightfully considered a free state.
Indeed we can reach the microcanonical state $\chi$ from \emph{any}
state $\rho$, even from mixed states. This follows from the linearity
of $\mathcal{T}$, because every $\rho$ can be written as a convex
combination of pure states; therefore $\mathcal{T}\rho=\chi$.

\subsection{Composition of microcanonical states}

At the level of single systems, condition~\ref{cond:unique microcanonical}
guarantees the existence of a microcanonical state. But how does the
microcanonical state behave under the composition of systems? It is
important to answer this question because, from the operational point
of view, it is natural to consider scenarios where the experimenter
has more than one system at their disposal. This aspect is not so
stressed in traditional textbook presentations of statistical mechanics,
where the composition of microcanonical systems (done fixing only
the global energy, therefore differently from us) is only mentioned
to show that in the thermodynamic limit if a composite system is in
the microcanonical state, so are its components \cite{Huang,Kardar1}.

Here we want to take a closer look at the composition of microcanonical
systems, according the rules of eq.~\eqref{eq:compose constraints}.
Composition is especially important in the context of resource theories,
where it is natural to ask how resources interact when combined together.
Think for example of the quantum resource theory of noisy operations
\cite{Local-information,Horodecki-Oppenheim,Nicole}. There microcanonical
states are treated as free. Since the experimenter can generate the
microcanonical states $\chi_{\mathrm{A}}$ and $\chi_{\mathrm{B}}$
at no cost, then they can generate the product state $\chi_{\mathrm{A}}\otimes\chi_{\mathrm{B}}$
at no cost too.

If we insist that microcanonical states are the \emph{only} free states
in a resource-theoretic treatment of microcanonical thermodynamics,
the product state $\chi_{\mathrm{A}}\otimes\chi_{\mathrm{B}}$ of
two microcanonical states must be the microcanonical state $\chi_{\mathrm{AB}}$
of the composite system $\mathrm{AB}$. This is consistent with the
intuitive interpretation of the microcanonical state as ``the state
of minimum information compatibly with the constraints''. In other
words, if one has minimum information on the parts of a system, then
one has minimum information about the whole. This is indeed the case
in quantum theory, where the product of two maximally mixed states
is maximally mixed: $\frac{1}{d_{\mathrm{A}}}\mathbf{1}\otimes\frac{1}{d_{\mathrm{B}}}\mathbf{1}=\frac{1}{d_{\mathrm{AB}}}\mathbf{1}$.

Recall that here we are dealing with effective systems, with their
corresponding constraints. For energy constraints, the composite of
two effective systems $\mathrm{A}$ and $\mathrm{B}$ is defined as
a system consisting of two parts, each constrained to a specific value
of the energy. Consistently with this interpretation, the microcanonical
state of system $\mathrm{AB}$ is the ``maximally mixed state''
in the set of states with fixed local energies. It is therefore natural
to require that minimum information about the parts should imply minimum
information about the whole.
\begin{condition}
\label{cond:informational equibrium}The microcanonical state of a
composite system is the product of the microcanonical states of its
components. In formula:
\begin{equation}
\chi_{\mathrm{A}}\otimes\chi_{\mathrm{B}}=\chi_{\mathrm{AB}},\label{eq:product chi}
\end{equation}
for every pair of effective systems $\mathrm{A}$ and $\mathrm{B}$.
\end{condition}

We call eq.~\eqref{eq:product chi} the\emph{ condition of informational
equilibrium}. Note that, again, here we are not referring to thermal
equilibrium between the two subsystems. This is clear from the fact
that we do not allow an energy flow between the two systems $\mathrm{A}$
and $\mathrm{B}$. Instead, we allow a flow of information, implemented
by the joint dynamics of the composite system $\mathrm{AB}$.

It is natural to ask which physical principles guarantee the condition
of informational equilibrium. One such principle is Local Tomography,
as shown in \cite{Chiribella-purification,Masanes-physical-derivation}.
However, Local Tomography is not necessary for informational equilibrium.
Indeed quantum theory on real Hilbert spaces violates Local Tomography,
but still satisfies the condition of informational equilibrium. As
already stressed, in our analysis we will \emph{not} assume Local
Tomography in our set of physical principles. Nevertheless, our principles
of sharp theories with purification guarantee the validity of the
condition of informational equilibrium, which is the really important
condition to set up a sensible theory of (microcanonical) thermodynamics.
\begin{example}
Sharp theories with purifications satisfy the condition of informational
equilibrium \cite{Purity}. To prove it, consider two systems $\mathrm{A}$
and $\mathrm{B}$, and pick two pure maximal sets for $\mathrm{A}$
and $\mathrm{B}$ respectively, say$\left\{ \alpha_{i}\right\} _{i=1}^{d_{\mathrm{A}}}$
and $\left\{ \beta_{j}\right\} _{j=1}^{d_{\mathrm{B}}}$. Then, the
product set $\left\{ \alpha_{i}\otimes\beta_{j}\right\} _{i\in\left\{ 1,\dots,d_{\mathrm{A}}\right\} ,j\in\left\{ 1,\dots,d_{\mathrm{B}}\right\} }$
is maximal for the composite system $\mathrm{AB}$, by proposition~\ref{prop:information locality}.
Using the decomposition~\eqref{eq:microcanonical diagonalised},
we obtain 
\[
\chi_{\mathrm{AB}}=\frac{1}{d_{\mathrm{AB}}}\sum_{i=1}^{d_{\mathrm{A}}}\sum_{j=1}^{d_{\mathrm{B}}}\alpha_{i}\otimes\beta_{j}=\frac{1}{d_{\mathrm{A}}d_{\mathrm{B}}}\left(\sum_{i=1}^{d_{\mathrm{A}}}\alpha_{i}\right)\otimes\left(\sum_{i=1}^{d_{\mathrm{B}}}\beta_{j}\right)=\chi_{\mathrm{A}}\otimes\chi_{\mathrm{B}},
\]
where we have used the information locality condition $d_{\mathrm{AB}}=d_{\mathrm{A}}d_{\mathrm{B}}$
(again proposition~\ref{prop:information locality}). In summary,
sharp theories with purification satisfy our two conditions for the
general microcanonical framework.
\end{example}

Now we are ready to extend the microcanonical framework from quantum
and classical theory to general physical theories. 
\begin{defn}
An OPT, interpreted as a theory of effective systems, is \emph{microcanonical}
if conditions~\ref{cond:unique microcanonical} and \ref{cond:informational equibrium}
are satisfied.
\end{defn}

Physically, a microcanonical theory is a theory where:
\begin{enumerate}
\item every system has a well-defined notion of uniform mixture of all pure
states;
\item uniform mixtures are stable under parallel composition of systems.
\end{enumerate}
Microcanonical theories provide the foundation for the definition
of three important resource theories, analysed in the following sections.

\section{Three microcanonical resource theories\label{sec:Three-resource-theories}}

In this section we study three different notions of state convertibility
in microcanonical theories. We adopt the resource-theoretic framework
of chapter~\ref{chap:Resource-theories}, where one fixes a set of
\emph{free operations}, closed under sequential and parallel composition.
As we saw in section~\ref{sec:The-resource-preorder}, the basic
question in the resource-theoretic framework is whether a given state
$\rho$ can be transformed into another state $\sigma$ by means of
free operations. This gives rise to a preorder between states, denoted
as\footnote{Here we adopt the notation $\succeq_{F}$ for the preorder, instead
of the notation $\succsim$ of section~\ref{sec:The-resource-preorder},
because of the analogy with the notation for majorisation, as we will
become clear in section~\ref{sec:Majorisation-and-unital}.} $\rho\succeq_{F}\sigma$, where $F$ is the set of free operations.
In the following we define three resource theories for microcanonical
thermodynamics.

\subsection{The RaRe resource theory\label{subsec:The-RaRe-resource}}

Our first resource theory takes random reversible channels \cite{Uhlmann1,Uhlmann2,Uhlmann3,Muller3D,Chiribella-Scandolo-entanglement}
as free operations.
\begin{defn}
A \emph{random reversible (RaRe) channel} on system $\mathrm{A}$
is a channel $\mathcal{R}$ of the form $\mathcal{R}=\sum_{i}p_{i}\mathcal{U}_{i}$,
where $\left\{ p_{i}\right\} $ is a probability distribution and
$\mathcal{U}_{i}$ is a reversible channel on system $\mathrm{A}$
for every $i$.
\end{defn}

Physically, RaRe channels are the operations that can be implemented
with limited control over the reversible dynamics of the system: if
the dynamics are subject to random fluctuations, the lack of control
on these fluctuations gives rise to a RaRe channel.

We already encountered a RaRe channel in proposition~\ref{prop:properties microcanonical}
as the channel enacting the ``thermalisation'' process, i.e.\ mapping
every pure state to the microcanonical state. From this point of view,
it is fairly natural to take RaRe channels as free operations. Moreover,
from a more mathematical perspective, RaRe channels have all the properties
required of free operations: the identity channel is RaRe, the sequential
composition of two RaRe channels is a RaRe channel, and so is the
parallel composition. We call the resulting resource theory the \emph{RaRe
resource theory} and we denote the corresponding preorder by $\succeq_{\mathrm{RaRe}}$.

Note that, in principle, the RaRe resource theory can be formulated
in \emph{every} GPT, even those that do \emph{not} satisfy conditions~\ref{cond:unique microcanonical}
and \ref{cond:informational equibrium}, because there is no microcanonical
state explicitly involved in the definition of RaRe channels. Such
generality, however, comes at a price: from the structure of RaRe
channels we \emph{cannot} infer free states by taking the input system
of a free operation to be the trivial system. Indeed, RaRe channels
have the same input and output system, so taking the input system
to be trivial makes the channel itself trivial. Therefore, strictly
speaking, the RaRe resource theory has no free states. As explained
in section~\ref{sec:The-resource-preorder}, if we want to add free
states, we have to look for almost free states. Now, assume the theory
on which RaRe channels are defined satisfies condition~\ref{cond:unique microcanonical}
and \ref{cond:informational equibrium}.

By proposition~\ref{prop:properties microcanonical} we know that
$\mathcal{T}\rho=\chi$ for every $\rho$, where $\mathcal{T}$ is
the RaRe thermalisation channel, and $\chi$ the microcanonical state.
This clearly means that $\rho\succeq_{\mathrm{RaRe}}\chi$ for every
$\rho$. Therefore all states equivalent to $\chi$ in the preorder
$\succeq_{\mathrm{RaRe}}$ will be the minima as resources. Now, for
a state $\rho$ to be equivalent to $\chi$, it must be $\chi\succeq_{\mathrm{RaRe}}\rho$.
This means that there exists a RaRe channel $\mathcal{R}$ such that
$\mathcal{R}\chi=\rho$. Since $\chi$ is an invariant state, $\mathcal{R}\chi=\chi$,
therefore there is only $\chi$ in its equivalence class. This means
that there is a unique almost free state, the microcanonical state
$\chi$. It is stable under tensor product, so the construction presented
in section~\ref{sec:The-resource-preorder} tells us that in microcanonical
theories we can rightfully consider the microcanonical state as a
free state, even though we cannot derive it from the structure of
free operations \cite{Gour-review}.

On the other extreme we have pure states, which are the maximum resource.
First of all, they are all equivalent resources thanks to transitivity:
reversible channels are a special case of RaRe channels. Then take
any pure state $\psi$, and any state $\rho$. $\rho$ can always
written as a (possibly trivial) convex combination of pure states
$\psi_{i}$: $\rho=\sum_{i}p_{i}\psi_{i}$. Because of the transitive
action of reversible channels on pure states, there exist some reversible
channels $\mathcal{U}_{i}$ such that $\psi_{i}=\mathcal{U}_{i}\psi$,
whence $\rho$ can be written as
\[
\rho=\sum_{i}p_{i}\mathcal{U}_{i}\psi.
\]
This means that $\psi\succeq_{\mathrm{RaRe}}\rho$ for any pure state
$\psi$ and any state $\rho$.

\subsection{The noisy resource theory}

Whilst the RaRe resource theory can be defined in every OPT, we now
discuss a second resource theory that can only be defined in physical
theories satisfying conditions~\ref{cond:unique microcanonical}
and \ref{cond:informational equibrium}. In this resource theory,
free operations mimic a thermalisation process, and are generated
by the following three processes:
\begin{enumerate}
\item bringing in an ancillary system in the microcanonical state;
\item letting the target system and the ``thermal'' ancilla jointly evolve
together;
\item removing the ``thermal'' ancilla.
\end{enumerate}
More precisely, these operations, usually called ``noisy'' \cite{Local-information,Horodecki-Oppenheim,Nicole},
are defined as follows.
\begin{defn}
\label{def:basic noisy}A channel $\mathcal{B}$, from system $\mathrm{A}$
to system $\mathrm{A}'$, is a \emph{basic noisy operation} if it
can be written as\footnote{Again, here Causality is assumed for convenience, it is possible to
put forward a definition of noisy operations also in non-causal theories
\cite{Purity}.}\begin{equation}\label{eq:noisy} \begin{aligned}\Qcircuit @C=1em @R=.7em @!R { & & \qw \poloFantasmaCn{\rA} & \gate{\cB} & \qw \poloFantasmaCn{\rA'} &\qw }\end{aligned} ~= \!\!\!\! \begin{aligned}\Qcircuit @C=1em @R=.7em @!R { & & \qw \poloFantasmaCn{\rA} & \multigate{1}{\cU} & \qw \poloFantasmaCn{\rA'} &\qw \\ & \prepareC{\chi} & \qw \poloFantasmaCn{\rE} &\ghost{\cU} & \qw \poloFantasmaCn{\rE'} & \measureD{u} }\end{aligned}~, \end{equation}where
$\mathrm{E}$ and $\mathrm{E}'$ are suitable systems such that $\mathrm{AE}\approx\mathrm{A'}\mathrm{E}'$,
and $\mathcal{U}$ is a reversible channel.
\end{defn}

Note that, if we take $\mathrm{A}$ to be the trivial system, harnessing
the invariance of the microcanonical state under reversible channels,
we obtain that the microcanonical state is a free state from the very
expression of basic noisy operations.

Definition~\ref{def:basic noisy}, unlike RaRe channels, does not
rely on the availability of external sources of randomness: all the
randomness is accounted for in the preparation of the microcanonical
state in the right-hand side of eq.~\eqref{eq:noisy}. In other words,
an external microcanonical state becomes the source of ``thermalisation''
for the target system, instead of the random fluctuations of reversible
dynamics in the case of the RaRe resource theory.

Definition~\ref{def:basic noisy} uses the term ``basic noisy operation''
instead of the customary ``noisy operation'' to account for a mathematical
subtlety which will arise in subsection~\ref{subsec:Inclusion-relations}:
the set of basic noisy operations is generally \emph{not} topologically
closed. In quantum theory, for example, there exist counterexamples
where the limit of a sequence of basic noisy operations is not a basic
noisy operation \cite{Shor}. It is then convenient to take the closure
of the set of basic noisy operations.
\begin{defn}
\label{def:noisy}A channel $\mathcal{N}$ is a\emph{ noisy operation}
if it is the limit of a sequence of basic noisy operations $\left\{ \mathcal{B}_{n}\right\} $.
\end{defn}

The set of noisy operations satisfies all the requirements for being
a set of free operations: the identity is a noisy operation, and the
parallel and sequential composition of two noisy operations are noisy
operations, thanks to the condition of informational equilibrium.
The resource theory where the set of free operations is the set of
noisy operations will be called the \emph{noisy resource theory}.
The corresponding preorder on states will be denoted by $\succeq_{\mathrm{noisy}}$.

\subsection{The unital resource theory}

In the third resource theory, the set of free operations includes
all the operations that transform microcanonical states into microcanonical
states. This is the largest set of free operations compatible with
$\chi$ being the free state, as described in chapter~\ref{chap:Resource-theories}.
These channels, called \emph{unital}, are the most general operations
that do not create resources out of free states. Mathematically, they
are defined as follows.
\begin{defn}
A channel $\mathcal{D}$ from system $\mathrm{A}$ to system $\mathrm{A}'$
is called \emph{unital} if $\mathcal{D}\chi_{\mathrm{A}}=\chi_{\mathrm{A}'}$.
\end{defn}

Unital channels are the operational generalisation of doubly stochastic
matrices in classical probability theory \cite{Olkin,Streater,Mendl-Wolf}.

Note that, by condition~\ref{cond:informational equibrium}, all
free states of a composite system are of the product form, therefore,
focusing on resource-non-generating operations is enough, they are
automatically completely resource-non-generating. If $\mathrm{A}$
is the trivial system, the preparation of the microcanonical state
is a (free) unital channel, so we recover that the microcanonical
state is free.

The set of unital channels enjoys all the properties required of a
set of free operations: the identity is a unital channel, and thanks
to the condition of informational equilibrium, the sequential and
parallel composition of unital channels is a unital channel. The resource
theory where free operations are unital channels will be called the\emph{
unital resource theory}. The corresponding preorder on states will
be denoted by $\succeq_{\mathrm{unital}}$.

\subsection{Inclusion relations\label{subsec:Inclusion-relations}}

Let us highlight the relations between the three sets of free operations
defined so far. First, RaRe channels are examples of unital channels.
This is clear because every RaRe channel can be written as a mixture
of reversible channels, each of which preserves the microcanonical
state.
\[
\mathcal{R}\chi=\sum_{i}p_{i}\mathcal{U}_{i}\chi=\sum_{i}p_{i}\chi=\chi
\]
Hence, we have the inclusion 
\begin{equation}
\mathsf{RaRe}\subseteq\mathsf{Unital}.\label{eq:RaRe unital}
\end{equation}
In classical probability theory, the inclusion is actually an equality,
as a consequence of Birkhoff's theorem \cite{Birkhoff,Olkin}. Remarkably,
in quantum theory there exist unital channels that are not random
unitary, meaning that the inclusion~\eqref{eq:RaRe unital} is generally
strict. The simplest example is due to Landau and Streater \cite{Streater}.

We have seen that all RaRe channels are unital. Noisy operations are
unital too.
\begin{prop}
Every noisy operation is unital.
\end{prop}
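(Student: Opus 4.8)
The plan is to show that a basic noisy operation is unital, and then extend to noisy operations by a topological closure argument. First I would take a basic noisy operation $\mathcal{B}$ from $\mathrm{A}$ to $\mathrm{A}'$, written in the form of eq.~\eqref{eq:noisy} as $\mathcal{B} = (\mathcal{I}_{\mathrm{A}'} \otimes u_{\mathrm{E}'}) \circ \mathcal{U} \circ (\mathcal{I}_{\mathrm{A}} \otimes \chi_{\mathrm{E}})$, where $\mathrm{AE} \approx \mathrm{A}'\mathrm{E}'$ and $\mathcal{U}$ is reversible. The key computation is to evaluate $\mathcal{B}\chi_{\mathrm{A}}$. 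Feeding in the microcanonical state of $\mathrm{A}$, the input to the reversible channel $\mathcal{U}$ is $\chi_{\mathrm{A}} \otimes \chi_{\mathrm{E}}$, which by the condition of informational equilibrium (condition~\ref{cond:informational equibrium}, eq.~\eqref{eq:product chi}) equals $\chi_{\mathrm{AE}}$. Now $\mathcal{U}$ is a reversible channel on the system $\mathrm{AE}$ (up to the operational equivalence $\mathrm{AE} \approx \mathrm{A}'\mathrm{E}'$), and by the invariance of the microcanonical state under reversible channels (property~\ref{enu:invariance chi} of proposition~\ref{prop:properties microcanonical}) we get $\mathcal{U}\chi_{\mathrm{AE}} = \chi_{\mathrm{A}'\mathrm{E}'}$. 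Applying the deterministic effect on $\mathrm{E}'$ produces the marginal on $\mathrm{A}'$, which by informational equilibrium again is $\chi_{\mathrm{A}'}$; so $\mathcal{B}\chi_{\mathrm{A}} = \chi_{\mathrm{A}'}$, i.e.\ $\mathcal{B}$ is unital.

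Next I would handle the passage from basic noisy operations to general noisy operations (definition~\ref{def:noisy}). A noisy operation $\mathcal{N}$ is, by definition, the limit of a sequence $\{\mathcal{B}_n\}$ of basic noisy operations. Each $\mathcal{B}_n$ is unital by the argument above, so $\mathcal{B}_n \chi_{\mathrm{A}} = \chi_{\mathrm{A}'}$ for every $n$. Since $\mathcal{B}_n \to \mathcal{N}$ in the operational topology, in particular $\mathcal{B}_n \chi_{\mathrm{A}} \to \mathcal{N}\chi_{\mathrm{A}}$ (this follows directly from the definition of convergence of transformations, taking the ancillary system $\mathrm{S}$ to be trivial). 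The constant sequence $\chi_{\mathrm{A}'}$ has $\chi_{\mathrm{A}'}$ as its limit, and since limits in the operational topology are unique (the set of states is Hausdorff in this topology, as it embeds in a space of real-valued functions), we conclude $\mathcal{N}\chi_{\mathrm{A}} = \chi_{\mathrm{A}'}$, so $\mathcal{N}$ is unital.

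I do not expect any serious obstacle here: the entire content rests on two facts already established — informational equilibrium (condition~\ref{cond:informational equibrium}) and invariance of $\chi$ under reversible channels (proposition~\ref{prop:properties microcanonical}) — plus the elementary observation that unitality is preserved under limits. The only mild subtlety is the bookkeeping around the operational equivalence $\mathrm{AE} \approx \mathrm{A}'\mathrm{E}'$: strictly, $\mathcal{U}$ is a reversible channel between operationally equivalent systems rather than literally a channel on a single system, so one should note that the microcanonical state is preserved under such reversible channels as well (which is immediate, since operational equivalence means there is a reversible channel identifying the two systems, and $\chi$ is defined intrinsically from the invariant probability measure on pure states). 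This is a routine point and can be dispatched in one sentence.
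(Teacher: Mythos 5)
Your proposal is correct and follows essentially the same route as the paper: feed $\chi_{\mathrm{A}}$ in, use informational equilibrium to get $\chi_{\mathrm{AE}}$, use invariance of the microcanonical state under the reversible channel $\mathcal{U}$ (with the same remark about $\mathrm{AE}\approx\mathrm{A}'\mathrm{E}'$, which the paper relegates to a footnote), and marginalise; the closure step you spell out is exactly what the paper dispatches in one line by "taking the closure."
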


\begin{proof}
Suppose that $\mathcal{B}$ is a basic noisy operation, written as
in eq.~\eqref{eq:noisy}. Then, one has\[\begin{aligned}\Qcircuit @C=1em @R=.7em @!R { &\prepareC{\chi} & \qw \poloFantasmaCn{\rA} & \gate{\cB} & \qw \poloFantasmaCn{\rA'} &\qw }\end{aligned} ~= \!\!\!\! \begin{aligned}\Qcircuit @C=1em @R=.7em @!R { &\prepareC{\chi} & \qw \poloFantasmaCn{\rA} & \multigate{1}{\cU} & \qw \poloFantasmaCn{\rA'} &\qw \\ & \prepareC{\chi} & \qw \poloFantasmaCn{\rE} &\ghost{\cU} & \qw \poloFantasmaCn{\rE'} & \measureD{u} }\end{aligned} ~  = \!\!\!\! \begin{aligned}\Qcircuit @C=1em @R=.7em @!R { &\multiprepareC{1}{\chi} & \qw \poloFantasmaCn{\rA} & \multigate{1}{\cU} & \qw \poloFantasmaCn{\rA'} &\qw \\ & \pureghost{\chi} & \qw \poloFantasmaCn{\rE} &\ghost{\cU} & \qw \poloFantasmaCn{\rE'} & \measureD{u} } \end{aligned}~=
\]
\[  = \!\!\!\! \begin{aligned}\Qcircuit @C=1em @R=.7em @!R { &\multiprepareC{1}{\chi} & \qw \poloFantasmaCn{\rA'} &\qw \\ & \pureghost{\chi} & \qw \poloFantasmaCn{\rE'} & \measureD{u} } \end{aligned}~ = \!\!\!\! \begin{aligned}\Qcircuit @C=1em @R=.7em @!R { &\prepareC{\chi} & \qw \poloFantasmaCn{\rA'} &\qw \\ & \prepareC{\chi} & \qw \poloFantasmaCn{\rE'} & \measureD{u} } \end{aligned}~ = \!\!\!\! \begin{aligned}\Qcircuit @C=1em @R=.7em @!R { &\prepareC{\chi} & \qw \poloFantasmaCn{\rA'} &\qw }\end{aligned} ~,
\]having used the condition of informational equilibrium~\eqref{eq:product chi},
the invariance of the microcanonical state $\chi_{\mathrm{AE}}$ under
reversible channels.\footnote{To be precise, here we have used the fact that $\mathcal{U}\chi_{\mathrm{AE}}=\chi_{\mathrm{A}'\mathrm{E}'}$,
taking advantage of the fact that $\mathrm{AE}\approx\mathrm{A'}\mathrm{E}'$,
so $\mathrm{AE}$ and $\mathrm{A'}\mathrm{E}'$ can be treated for
all practical purposes as though they were the same system.} Hence, every basic noisy operation is unital. Taking the closure,
one gets that all noisy operations are unital.
\end{proof}
In summary, one has the inclusion
\begin{equation}
\mathsf{Noisy}\subseteq\mathsf{Unital}.\label{eq:noisy unital}
\end{equation}
 The inclusion is strict in quantum theory, where Haagerup and Musat
found examples of unital channels that cannot be realised as noisy
operations \cite{Haagerup-Musat}. 

It remains to understand the relation between RaRe channels and noisy
operations. In quantum theory, the set of noisy operations (strictly)
contains the set of RaRe channels as a proper subset \cite{Shor}.
In a generic theory, however, this inclusion relation need not hold.
\begin{example}
As a counterexample, consider a theory where, besides the $\mathtt{SWAP}$,
only local reversible channels are allowed, like in PR boxes \cite{PR-trivial}.
In this case, noisy operations on a given system $\mathrm{A}$ are
just reversible channels. Indeed\[
\begin{aligned}\Qcircuit @C=1em @R=.7em @!R { & & \qw \poloFantasmaCn{\rA} & \gate{\cB} & \qw \poloFantasmaCn{\rA} &\qw }\end{aligned} ~= \!\!\!\! \begin{aligned}\Qcircuit @C=1em @R=.7em @!R { & & \qw \poloFantasmaCn{\rA} & \gate{\cU} & \qw \poloFantasmaCn{\rA} &\qw \\ & \prepareC{\chi} & \qw \poloFantasmaCn{\rE} &\gate{\cV} & \qw \poloFantasmaCn{\rE} & \measureD{u} }\end{aligned}~=~\begin{aligned}\Qcircuit @C=1em @R=.7em @!R {  & \qw \poloFantasmaCn{\rA} & \gate{\cU} & \qw \poloFantasmaCn{\rA} &\qw}\end{aligned},
\]where we have used the invariance of $\chi$ and the fact that it
is a normalised state. In this theory, noisy operations are \emph{strictly}
contained in RaRe channels.
\end{example}

The inclusions~\eqref{eq:RaRe unital} and \eqref{eq:noisy unital}
are the most general result one can derive from the definitions alone.
The general situation is depicted in fig.~\ref{fig:general inclusions}.
\begin{figure}
\begin{centering}
\includegraphics[scale=0.8]{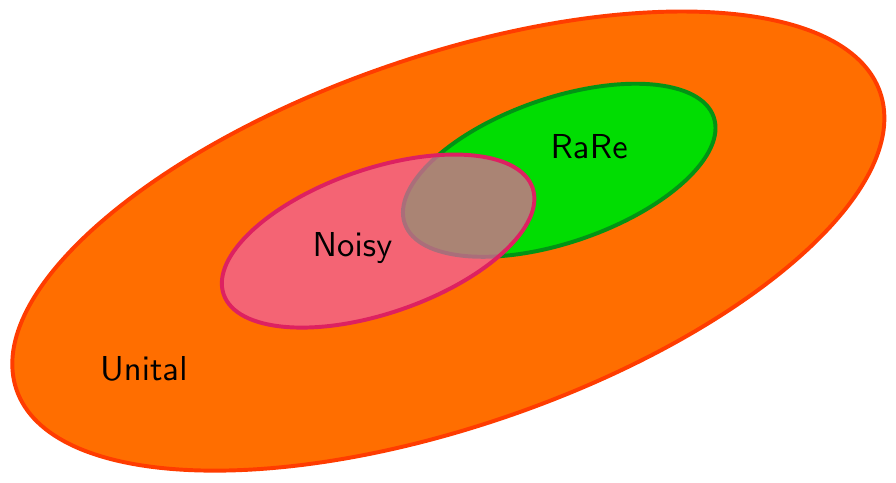}
\par\end{centering}
\caption{\label{fig:general inclusions}The most general inclusions between
the three sets of free operations. At this stage we cannot say anything
about the intersection between RaRe channels and noisy operations.}

\end{figure}
 To go further, we need to introduce axioms, which we will do in the
next subsection for sharp theories with purification.

\subsubsection{Inclusion relations in sharp theories with purification}

In sharp theories with purification, one can establish an inclusion
between RaRe channels and noisy operations, the same we have in quantum
theory. To obtain this result, we first restrict our attention to
\emph{rational} RaRe channels, i.e.\ RaRe channels of the form $\mathcal{R}=\sum_{i}p_{i}\mathcal{U}_{i}$
where each $p_{i}$ is a rational number. With this definition, we
have the following lemma.
\begin{lem}
\label{lem:RaRe noisy}In every sharp theory with purification, every
rational RaRe channel is a basic noisy operation.
\end{lem}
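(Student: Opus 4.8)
The goal is to realise a rational RaRe channel $\mathcal{R} = \sum_{i=1}^{k} p_i \mathcal{U}_i$ on system $\mathrm{A}$, with $p_i = n_i/N$ for positive integers $n_i$ summing to $N$, as a basic noisy operation of the form \eqref{eq:noisy}. The natural strategy, borrowed from the quantum case, is to use an ancilla $\mathrm{E}$ large enough to ``store'' a uniformly-distributed classical label, arrange the joint reversible channel $\mathcal{U}$ to read that label and apply the corresponding $\mathcal{U}_i$ to $\mathrm{A}$, and then discard $\mathrm{E}$. The microcanonical state of $\mathrm{E}$ supplies exactly the uniform randomness needed: if $\mathrm{E}$ is a system of dimension $N$ with pure maximal set $\{\varepsilon_j\}_{j=1}^N$, then by \eqref{eq:microcanonical diagonalised} $\chi_{\mathrm{E}} = \frac1N \sum_{j=1}^N \varepsilon_j$, a uniform mixture of $N$ perfectly distinguishable pure states. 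Partitioning $\{1,\dots,N\}$ into blocks $B_1,\dots,B_k$ with $|B_i| = n_i$, we want $\mathcal{U}$ to act, on input $\rho \otimes \varepsilon_j$ with $j \in B_i$, as $(\mathcal{U}_i \rho) \otimes (\text{something on } \mathrm{E}')$.

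First I would construct the candidate reversible channel $\mathcal{U}$ on $\mathrm{AE}$. Take $\mathrm{E} = \mathrm{E}'$ to be an $N$-dimensional system; the pure states of $\mathrm{AE}$ of the form $\psi \otimes \varepsilon_j$ (with $\psi$ pure on $\mathrm{A}$) are pure by Purity Preservation, and $\{\alpha_\ell \otimes \varepsilon_j\}$ is a pure maximal set by Information Locality (proposition~\ref{prop:information locality}). The idea is to define $\mathcal{U}$ on these ``basis'' states and argue it extends to a reversible channel. For $j \in B_i$, set $\mathcal{U}(\alpha_\ell \otimes \varepsilon_j) := (\mathcal{U}_i \alpha_\ell) \otimes \varepsilon_{\pi_i(j)}$, where $\pi_i$ is a permutation of $B_i$ (e.g.\ a cyclic shift) chosen so that $\mathcal{U}$ is genuinely invertible. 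Since $\mathcal{U}_i$ is reversible, it maps the pure maximal set $\{\alpha_\ell\}$ to another pure maximal set, and combining with the permutation of the $\varepsilon_j$-labels, $\mathcal{U}$ sends the pure maximal set $\{\alpha_\ell \otimes \varepsilon_j\}$ to a pure maximal set bijectively. Here is where I would invoke the inverse transition-matrix / diagonalisation machinery: a linear map defined on a pure maximal set sending it bijectively to another pure maximal set is a reversible channel (this is essentially the content of the diagonalisation results together with proposition~\ref{prop:pure test chi}, which identifies maximal sets with pure sharp measurements whose daggers sum to $u$, so the map preserves the deterministic effect and is hence a channel, and likewise for its inverse). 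I should double-check that this linear extension is well-defined and independent of the choice of maximal set using the uniqueness of diagonalisation (theorem~\ref{thm:uniqueness diago}) and proposition~\ref{prop:well-defined}.

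Then I would verify the noisy-operation identity. Feeding $\rho \otimes \chi_{\mathrm{E}} = \frac1N \sum_{j=1}^N \rho \otimes \varepsilon_j$ into $\mathcal{U}$ and using linearity,
\[
\mathcal{U}(\rho \otimes \chi_{\mathrm{E}}) = \frac1N \sum_{i=1}^k \sum_{j \in B_i} (\mathcal{U}_i \rho) \otimes \varepsilon_{\pi_i(j)} = \frac1N \sum_{i=1}^k n_i (\mathcal{U}_i \rho) \otimes \chi_{\mathrm{E}},
\]
since $\pi_i$ permutes $B_i$ and $\frac1{n_i}\sum_{j\in B_i}\varepsilon_{\pi_i(j)}$ is again $\frac1{n_i}\sum_{j\in B_i}\varepsilon_j$, a sub-uniform mixture that sums over a block. (I need to be slightly careful: $\frac1N\sum_{j\in B_i}\varepsilon_j$ is $\frac{n_i}{N}$ times the uniform mixture over $B_i$, not over all of $\mathrm{E}$; but after discarding $\mathrm{E}'$ with $u$ this does not matter.) Applying $u_{\mathrm{E}'}$ to the ancilla then gives $\sum_{i=1}^k \frac{n_i}{N} \mathcal{U}_i \rho = \sum_i p_i \mathcal{U}_i \rho = \mathcal{R}\rho$, as desired, by proposition~\ref{prop:characterization channel} ($u\,\mathcal{U}_i = u$). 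Since this holds for all $\rho$, and transformations are determined by their action on (halves of bipartite) states, one can extend to the full tomographic statement by tensoring with a reference system — here I would use that $\mathcal{U}_i \otimes \mathcal{I}_{\mathrm{S}}$ is still reversible and run the same computation, so the constructed $\mathcal{B}$ equals $\mathcal{R}$ as a transformation.

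\textbf{The main obstacle.} The delicate point is the claim that $\mathcal{U}$, defined by its action on one pure maximal set, is a bona fide \emph{reversible channel} on $\mathrm{AE}$ — i.e.\ that it (and its inverse) are physical transformations of the theory, not merely linear maps on $\mathsf{St}_{\mathbb{R}}(\mathrm{AE})$. In quantum theory this is automatic (a bijection between orthonormal bases extends to a unitary), but in a general sharp theory with purification one must argue it. The cleanest route I can see is: the ``permutation-like'' map built from $\mathcal{U}_i$ and the block permutations $\pi_i$ can itself be written as a composition of channels already known to be physical — namely a classically-controlled application of the $\mathcal{U}_i$'s conditioned on a pure sharp measurement of $\mathrm{E}$, followed by a relabelling of $\mathrm{E}$ — but that controlled operation is not obviously reversible. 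Alternatively, one builds $\mathcal{U}$ directly as a reversible ``controlled-$\mathcal{U}$'' gate, which in sharp theories with purification one can guarantee exists only under the extra \emph{reversible controllability} / unrestricted-reversibility axiom, which we do \emph{not} want to assume here. So the real work is to show that the specific permutation of the pure maximal set $\{\alpha_\ell\otimes\varepsilon_j\}$ we need is always reversibly implementable in a sharp theory with purification — presumably by exhibiting it as a genuine permutation of perfectly distinguishable pure states that, via the diagonalisation and state-effect duality results (in particular the doubly-stochastic transition-matrix lemma~\ref{lem:doubly-stochastic} and proposition~\ref{prop:pure test chi}), pins down a unique candidate transformation whose inverse also preserves $u$. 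Getting this foundational point right, without smuggling in Strong Symmetry, is where I expect the proof to require care, and it is likely what the authors' proof spends its effort on.
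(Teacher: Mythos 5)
Your plan is the quantum control-unitary construction, and the obstacle you flag at the end is not merely ``where the proof requires care'' --- it is a genuine dead end, for two reasons. First, the map you define by $\alpha_\ell\otimes\varepsilon_j\mapsto(\mathcal{U}_i\alpha_\ell)\otimes\varepsilon_{\pi_i(j)}$ on a pure maximal set of $\mathrm{AE}$ does not determine a linear map on $\mathsf{St}_{\mathbb{R}}\left(\mathrm{AE}\right)$: a maximal set has only $d_{\mathrm{AE}}$ elements, far fewer than $\dim\mathsf{St}_{\mathbb{R}}\left(\mathrm{AE}\right)$, so ``a bijection between pure maximal sets extends to a reversible channel'' is false (in quantum theory the identity and the transpose agree on the computational basis). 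Second, even granting a well-defined candidate, what you are asking for is exactly a control-reversible transformation, and the paper later proves (proposition~\ref{prop:permstrong} together with the doubled-quantum-theory counterexample of section~\ref{sec:Sufficiency-of-majorisation}) that the existence of control-reversible transformations is \emph{equivalent} to unrestricted reversibility, which fails in some sharp theories with purification where the lemma nevertheless holds. So the gap cannot be closed along the route you propose.

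The paper's proof takes a different path that never needs a reversible controlled gate. It starts from the channel $\mathcal{C}=\sum_{j=1}^{n}\mathcal{V}_{j}\otimes\beta_{j}^{\dagger}$ from $\mathrm{AB}$ to $\mathrm{A}$ (measure the $n$-dimensional ancilla $\mathrm{B}$ with the pure sharp measurement $\left\{ \beta_{j}^{\dagger}\right\} $ and apply $\mathcal{V}_{j}$), which is a physical channel because classically-controlled tests are allowed by Causality. Purification then gives a \emph{reversible dilation} $\mathcal{U}$ of $\mathcal{C}$, and corollary~\ref{cor:pure map factorise} forces $\mathcal{U}\left(\cdot\otimes\beta_{j}\otimes\gamma\right)=\mathcal{V}_{j}\left(\cdot\right)\otimes\gamma_{j}$. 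The key trick is then to ``double'' the construction: using $\mathcal{U}$ and $\mathcal{U}^{-1}$ together with an extra copy of $\mathrm{A}$, one builds a pure transformation $\mathcal{P}$ on $\mathrm{ABA}$ acting as $\mathcal{V}_{j}\otimes\beta_{j}\otimes\mathcal{V}_{j}^{-1}$ on the relevant inputs; because the inverse is applied on the spare copy, $\mathcal{P}$ preserves the deterministic effect, hence is a pure channel, hence reversible. Feeding $\rho\otimes\chi_{\mathrm{B}}\otimes\chi_{\mathrm{A}}$ and discarding the environment $\mathrm{E}=\mathrm{BA}$ recovers $\mathcal{R}\rho$. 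If you want to salvage your write-up, this dilate-then-double step is the missing idea you would need to supply.
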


\begin{proof}
Let $\mathcal{R}$ be a rational RaRe channel, written as 
\[
\mathcal{R}=\sum_{i}\frac{n_{i}}{n}\mathcal{U}_{i},
\]
with $n_{i}\ge0$ and $\sum_{i}n_{i}=n$. Let $\mathrm{B}$ be an
$n$-dimensional system, and pick the pure maximal set $\left\{ \beta_{j}\right\} _{j=1}^{n}$.
Let $\mathcal{C}$ be the channel from $\mathrm{AB}$ to $\mathrm{A}$,
defined as 
\[
\mathcal{C}=\sum_{j=1}^{n}\mathcal{V}_{j}\otimes\beta_{j}^{\dagger},
\]
where $\left\{ \mathcal{V}_{j}\right\} _{j=1}^{n}$ are $n$ reversible
channels on $\mathrm{A}$, with $n_{1}$ of them equal to $\mathcal{U}_{1}$,
$n_{2}$ equal to $\mathcal{U}_{2}$, and so on. Since the theory
satisfies Purification, the channel $\mathcal{C}$ has a reversible
dilation \cite{Chiribella-purification,Chiribella14}, namely\[ \begin{aligned} \Qcircuit @C=1em @R=.7em @!R { & \qw \poloFantasmaCn{\rA} & \multigate{1}{\cC} & \qw \poloFantasmaCn{\rA} & \qw \\ & \qw \poloFantasmaCn{\rB} & \ghost{\cC} & & } \end{aligned}~ = ~ \begin{aligned} \Qcircuit @C=1em @R=.7em @!R { & \qw \poloFantasmaCn{\rA} & \multigate{2}{\cU} & \qw \poloFantasmaCn{\rA} & \qw \\ & \qw \poloFantasmaCn{\rB} & \ghost{\cU} & & \\ \prepareC{\gamma}& \qw \poloFantasmaCn{\rC} & \ghost{\cU} & \qw \poloFantasmaCn{\rC'} & \measureD{u} } \end{aligned} ~ , \]where
$\mathrm{C}$ and $\mathrm{C}'$ are suitable systems, $\gamma$ is
a pure state, and $\mathcal{U}$ is a reversible channel. Now, by
construction we have\[ \begin{aligned} \Qcircuit @C=1em @R=.7em @!R { & \qw \poloFantasmaCn{\rA} & \multigate{2}{\cU} & \qw \poloFantasmaCn{\rA} & \qw \\ \prepareC{\beta_j}& \qw \poloFantasmaCn{\rB} & \ghost{\cU} & & \\ \prepareC{\gamma}& \qw \poloFantasmaCn{\rC} & \ghost{\mathcal U} & \qw \poloFantasmaCn{\rC'} & \measureD{u} } \end{aligned} ~  = ~ \begin{aligned} \Qcircuit @C=1em @R=.7em @!R { & \qw \poloFantasmaCn{\rA} & \multigate{1}{\cC} & \qw \poloFantasmaCn{\rA} & \qw \\ \prepareC{\beta_j}& \qw \poloFantasmaCn{\rB} & \ghost{\cC} & & } \end{aligned} ~= ~\begin{aligned} \Qcircuit @C=1em @R=.7em @!R { & \qw \poloFantasmaCn{\rA} & \gate{\cV_j} & \qw \poloFantasmaCn{\rA} & \qw } \end{aligned}~,
\]for every $j\in\left\{ 1,\ldots,n\right\} $. Since $\mathcal{V}_{j}$
is reversible, the joint map must factorise (corollary~\ref{cor:pure map factorise}):\begin{equation}\label{eq:Uprog} \begin{aligned} \Qcircuit @C=1em @R=.7em @!R { & \qw \poloFantasmaCn{\rA} & \multigate{2}{\cU} & \qw \poloFantasmaCn{\rA} & \qw \\ \prepareC{\beta_j}& \qw \poloFantasmaCn{\rB} & \ghost{\cU} & & \\ \prepareC{\gamma}& \qw \poloFantasmaCn{\rC} & \ghost{\cU} & \qw \poloFantasmaCn{\rC'} & \qw } \end{aligned} ~= ~ \begin{aligned} \Qcircuit @C=1em @R=.7em @!R { & \qw \poloFantasmaCn{\rA} & \gate{\cV_j} & \qw \poloFantasmaCn{\rA} & \qw \\ & &\prepareC{\gamma_j} & \qw \poloFantasmaCn{\rC'} & \qw } \end{aligned} ~, \end{equation}for
some pure state $\gamma_{j}$ of system $\mathrm{C}'$. Composing
both sides with $\mathcal{V}_{j}^{-1}$ on the left, and with $\mathcal{U}^{-1}$
on the right we obtain\begin{equation}\label{eq:Uprogdag} \begin{aligned} \Qcircuit @C=1em @R=.7em @!R { & \qw \poloFantasmaCn{\rA} & \gate{\mathcal {V}_j^{-1}} & \qw \poloFantasmaCn{\rA} & \qw \\ &&\prepareC{\beta_j}& \qw \poloFantasmaCn{\rB} & \qw  \\ &&\prepareC{\gamma}& \qw \poloFantasmaCn{\rC} & \qw  } \end{aligned} ~ = ~ \begin{aligned} \Qcircuit @C=1em @R=.7em @!R { & \qw \poloFantasmaCn{\rA} & \multigate{2}{\cU^{-1}} & \qw \poloFantasmaCn{\rA} & \qw \\ & & \pureghost{\cU^{-1}} & \qw \poloFantasmaCn{\rB} & \qw \\ \prepareC{\gamma_j} & \qw \poloFantasmaCn{\rC'} & \ghost{\cU^{-1}} & \qw \poloFantasmaCn{\rC} & \qw } \end{aligned} ~. \end{equation}Now
we can feed the expression for $\gamma_{j}$ in terms of $\mathcal{U}$
from eq.~\eqref{eq:Uprog}  to the input of $\mathcal{U}^{-1}$ in
eq.~\eqref{eq:Uprogdag}, getting\begin{equation}\label{eq:doubleprog} \begin{aligned} \Qcircuit @C=1em @R=.7em @!R { & \qw \poloFantasmaCn{\rA} & \multigate{2}{\mathcal U} & \qw \poloFantasmaCn{\rA} & \qw & \qw &\qw &\qw \\ \prepareC{\beta_j}& \qw \poloFantasmaCn{\rB} & \ghost{\mathcal U} & & & \pureghost{\cU^{-1}} & \qw \poloFantasmaCn{\rB} & \qw \\ \prepareC{\gamma}& \qw \poloFantasmaCn{\rC} & \ghost{\mathcal U} & \qw \poloFantasmaCn{\rC'} & \qw & \ghost{\cU^{-1}} & \qw \poloFantasmaCn{\rC} & \qw \\ & \qw \poloFantasmaCn{\rA} & \qw& \qw & \qw & \multigate{-2}{\cU^{-1}} & \qw \poloFantasmaCn{\rA} & \qw } \end{aligned} ~ = ~ \begin{aligned} \Qcircuit @C=1em @R=.7em @!R { & \qw \poloFantasmaCn{\rA} & \gate{\cV_j} &\qw \poloFantasmaCn{\rA}&\qw \\ &&\prepareC{\beta_j}& \qw \poloFantasmaCn{\rB} & \qw  \\ &&\prepareC{\gamma}& \qw \poloFantasmaCn{\rC} & \qw  \\ & \qw \poloFantasmaCn{\rA} & \gate{\cV_j^{-1}} & \qw \poloFantasmaCn{\rA} & \qw } \end{aligned} ~ \end{equation}where
most of the right-hand side comes from the left-hand side of eq.~\eqref{eq:Uprogdag}.
At this point, we define the pure transformation\[
\begin{aligned} \Qcircuit @C=1em @R=.7em @!R { & \qw \poloFantasmaCn{\rA} & \multigate{2}{\mathcal P} &\qw \poloFantasmaCn{\rA}&\qw \\ & \qw \poloFantasmaCn{\rB} & \ghost{\mathcal P} & \qw \poloFantasmaCn{\rB} & \qw \\ & \qw \poloFantasmaCn{\rA} &\ghost{\mathcal P} & \qw \poloFantasmaCn{\rA} & \qw } \end{aligned} ~: =~ \begin{aligned} \Qcircuit @C=1em @R=.7em @!R { & \qw \poloFantasmaCn{\rA} & \multigate{2}{\mathcal U} & \qw \poloFantasmaCn{\rA} & \qw & \qw &\qw &\qw \\ & \qw \poloFantasmaCn{\rB} & \ghost{\mathcal U} & & & \pureghost{\cU^{-1}} & \qw \poloFantasmaCn{\rB} & \qw \\ \prepareC{\gamma}& \qw \poloFantasmaCn{\rC} & \ghost{\mathcal U} & \qw \poloFantasmaCn{\rC'} & \qw & \ghost{\cU^{-1}} & \qw \poloFantasmaCn{\rC} & \measureD{\gamma^\dag} \\ & \qw \poloFantasmaCn{\rA} & \qw& \qw & \qw & \multigate{-2}{\cU^{-1}} & \qw \poloFantasmaCn{\rA} & \qw } \end{aligned} ~.
\]By eq.~\eqref{eq:doubleprog} $\mathcal{P}$ satisfies the relation\[ \begin{aligned} \Qcircuit @C=1em @R=.7em @!R { & \qw \poloFantasmaCn{\rA} & \multigate{2}{\mathcal P} &\qw \poloFantasmaCn{\rA}&\qw \\ \prepareC{\beta_j}& \qw \poloFantasmaCn{\rB} & \ghost{\mathcal P} & \qw \poloFantasmaCn{\rB} & \qw \\ & \qw \poloFantasmaCn{\rA} &\ghost{\mathcal P} & \qw \poloFantasmaCn{\rA} & \qw } \end{aligned} ~ = ~ \begin{aligned} \Qcircuit @C=1em @R=.7em @!R { & \qw \poloFantasmaCn{\rA} & \gate{\cV_j} &\qw \poloFantasmaCn{\rA}&\qw \\ && \prepareC{\beta_j}& \qw \poloFantasmaCn{\rB} & \qw \\ & \qw \poloFantasmaCn{\rA} & \gate{\cV_j^{-1}} & \qw \poloFantasmaCn{\rA} & \qw } \end{aligned} ~, \]for
all values of $j$. Using this relation and the expression of $\chi_{\mathrm{B}}$
in terms of the $\beta_{x}$'s, we can reconstruct $\mathcal{R}$
from $\mathcal{P}$:\[ \begin{aligned} \Qcircuit @C=1em @R=.7em @!R { & \qw \poloFantasmaCn{\rA} & \multigate{2}{\mathcal P} &\qw \poloFantasmaCn{\rA}&\qw \\ \prepareC{\chi}& \qw \poloFantasmaCn{\rB} & \ghost{\mathcal P} & \qw \poloFantasmaCn{\rB} & \measureD{u} \\ \prepareC{\chi}& \qw \poloFantasmaCn{\rA} &\ghost{\mathcal P} & \qw \poloFantasmaCn{\rA} & \measureD{u} } \end{aligned} ~ = \frac 1n \sum_{j=1}^n ~ \begin{aligned} \Qcircuit @C=1em @R=.7em @!R { & \qw \poloFantasmaCn{\rA} & \multigate{2}{\mathcal P} &\qw \poloFantasmaCn{\rA}&\qw \\ \prepareC{\beta_j}& \qw \poloFantasmaCn{\rB} & \ghost{\mathcal P} & \qw \poloFantasmaCn{\rB} & \measureD{u} \\ \prepareC{\chi}& \qw \poloFantasmaCn{\rA} &\ghost{\mathcal P} & \qw \poloFantasmaCn{\rA} & \measureD{u} } \end{aligned} ~=
\]
\[
=\frac 1n \sum_{j=1}^n ~ \begin{aligned} \Qcircuit @C=1em @R=.7em @!R { &\qw \poloFantasmaCn{\rA} & \gate{\cV_j} &\qw \poloFantasmaCn{\rA}&\qw \\ \prepareC{\beta_j}& \qw \poloFantasmaCn{\rB} &\qw &\qw &\measureD{u} & \\ \prepareC{\chi} & \qw \poloFantasmaCn{\rA} & \gate{\cV_j^{-1}} & \qw \poloFantasmaCn{\rA} & \measureD{u} } \end{aligned} ~= \frac 1n \sum_{j=1}^n ~ \begin{aligned} \Qcircuit @C=1em @R=.7em @!R { &\qw \poloFantasmaCn{\rA} & \gate{\cV_j} &\qw \poloFantasmaCn{\rA}&\qw } \end{aligned} ~=
\]
\begin{equation}\label{eq:simplenoisy}
= \begin{aligned} \Qcircuit @C=1em @R=.7em @!R { &\qw \poloFantasmaCn{\rA} & \gate{\mathcal R} &\qw \poloFantasmaCn{\rA}&\qw } \end{aligned} ~, 
\end{equation}where we have used the fact that $\sum_{j=1}^{n}\mathcal{V}_{j}=\sum_{i}n_{i}\mathcal{U}_{i}$.
Finally, let us show that $\mathcal{P}$ is a channel. To this end,
it is enough to show that $u\mathcal{P}=u$ (see proposition~\ref{prop:characterization channel}).
This property is satisfied if and only if $\left(u\middle|\mathcal{P}\middle|\chi\right)=1$,
because every state lies in some convex decomposition of $\chi$ (see
proposition~\ref{prop:uniqueness invariant}). By the condition of
informational equilibrium and eq.~\eqref{eq:simplenoisy}, we have\[
\begin{aligned} \Qcircuit @C=1em @R=.7em @!R { \multiprepareC{2}{\chi}& \qw \poloFantasmaCn{\rA} & \multigate{2}{\mathcal P} &\qw \poloFantasmaCn{\rA}&\measureD{u} \\ \pureghost{\chi}& \qw \poloFantasmaCn{\rB} & \ghost{\mathcal P} & \qw \poloFantasmaCn{\rB} & \measureD{u} \\ \pureghost{\chi}& \qw \poloFantasmaCn{\rA} &\ghost{\mathcal P} & \qw \poloFantasmaCn{\rA} & \measureD{u} } \end{aligned}~=~\begin{aligned} \Qcircuit @C=1em @R=.7em @!R { \prepareC{\chi}& \qw \poloFantasmaCn{\rA} & \multigate{2}{\mathcal P} &\qw \poloFantasmaCn{\rA}&\measureD{u} \\ \prepareC{\chi}& \qw \poloFantasmaCn{\rB} & \ghost{\mathcal P} & \qw \poloFantasmaCn{\rB} & \measureD{u} \\ \prepareC{\chi}& \qw \poloFantasmaCn{\rA} &\ghost{\mathcal P} & \qw \poloFantasmaCn{\rA} & \measureD{u} }\end{aligned}~=~\begin{aligned} \Qcircuit @C=1em @R=.7em @!R { \prepareC{\chi} &\qw \poloFantasmaCn{\rA} & \gate{\mathcal R} &\qw \poloFantasmaCn{\rA}&\measureD{u} }\end{aligned}~=1~,
\]so $\mathcal{P}$ is a channel. Since every pure channel on a fixed
system (here $\mathrm{ABA}$) is reversible \cite{Chiribella-purification},
$\mathcal{P}$ is reversible. Hence, eq.\ \eqref{eq:simplenoisy}
shows that $\mathcal{R}$ is a basic noisy operation, with environment
$\mathrm{E}=\mathrm{BA}$, and reversible channel $\mathcal{P}$.
\end{proof}
In quantum theory, this statement is quite immediate, as pointed out
in \cite{Nicole}: a generic RaRe channel with rational probabilities
$\left\{ \frac{n_{i}}{n}\right\} _{i=1}^{r}$ and unitary gates $\left\lbrace U_{i}\right\rbrace _{i=1}^{r}$
can be realised as the basic noisy operation 
\[
\mathcal{B}\left(\rho\right):=\mathrm{tr}_{{\rm anc}}\left[U\left(\rho\otimes\frac{1}{n}\mathbf{1}\right)U^{\dagger}\right],
\]
where $\mathrm{tr}_{{\rm anc}}$ is the partial trace on the $n$-dimensional
system used as ancilla, and $U$ is the control-unitary gate
\[
U:=\sum_{j=1}^{n}V_{j}\otimes\ket{j}\bra{j},
\]
$\left\{ \ket{j}\right\} _{j=1}^{n}$ being an orthonormal basis for
the ancillary system, and $\left\lbrace V_{j}\right\rbrace _{j=1}^{n}$
being a list of unitary gates, $n_{1}$ of which are equal to $U_{1}$,
$n_{2}$ equal to $U_{2}$, and so on.

The proof of lemma~\ref{lem:RaRe noisy} shows that the situation
is in general more complicated in sharp theories with purification.
The reason is that the simple construction of quantum theory based
on control-unitary channels cannot be carried over. The analogue of
the control-unitary $U$ is a control-reversible transformation, which
performs a reversible transformation on the target system depending
on the state of a control system \cite{Control-reversible}. However,
in section~\ref{sec:Sufficiency-of-majorisation} we will show that
not every sharp theory with purification admits control-reversible
transformations. Specifically, the existence of control-reversible
transformations is equivalent to a non-trivial property of the dynamics,
which we will call ``unrestricted reversibility'' \cite{Purity}.
The non-trivial content of lemma~\ref{lem:RaRe noisy} is that the
inclusion of rational RaRe channels in noisy operations holds \emph{in
every sharp theory with purification}, without the need to assume
unrestricted reversibility or the existence of control-reversible
transformations. 

Now, since rational RaRe channels are dense in the set of RaRe channels,
and since the set of noisy operations is closed (see definition~\ref{def:noisy}),
we obtain the following theorem.
\begin{thm}
In every sharp theory with purification, RaRe channels are noisy operations.
\end{thm}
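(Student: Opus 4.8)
The plan is to combine Lemma~\ref{lem:RaRe noisy} with a density argument and the closedness of the set of noisy operations. Concretely, I would proceed in three steps.

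\emph{Step 1: rational RaRe channels are dense in the set of RaRe channels.} Let $\mathcal{R}$ be an arbitrary RaRe channel on $\mathrm{A}$. By Carath\'eodory's theorem for convex geometry and the finite-dimensionality of $\mathsf{Transf}_{\mathbb{R}}\left(\mathrm{A}\right)$, we may write $\mathcal{R}=\sum_{i=1}^{r}p_{i}\mathcal{U}_{i}$ as a \emph{finite} convex combination of reversible channels. Since the rationals are dense in $\mathbb{R}$, one can choose a sequence of rational probability vectors $\left(p_{1}^{(n)},\dots,p_{r}^{(n)}\right)$ converging entrywise to $\left(p_{1},\dots,p_{r}\right)$: for instance, approximate $p_{1},\dots,p_{r-1}$ from below by rationals with error at most $1/(rn)$, and put the remaining (rational) mass on the $r$-th coordinate. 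Setting $\mathcal{R}_{n}:=\sum_{i=1}^{r}p_{i}^{(n)}\mathcal{U}_{i}$, each $\mathcal{R}_{n}$ is a rational RaRe channel. Using subadditivity and homogeneity of the norm on $\mathsf{Transf}_{\mathbb{R}}\left(\mathrm{A}\right)$ together with the bound $\left\Vert \mathcal{U}_{i}\right\Vert \le 1$, we get
\[
\left\Vert \mathcal{R}_{n}-\mathcal{R}\right\Vert \le\sum_{i=1}^{r}\left|p_{i}^{(n)}-p_{i}\right|\left\Vert \mathcal{U}_{i}\right\Vert \le\sum_{i=1}^{r}\left|p_{i}^{(n)}-p_{i}\right|,
\]
and the right-hand side tends to $0$ as $n\to\infty$. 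Hence $\mathcal{R}_{n}\to\mathcal{R}$ in norm, and therefore also in the operational topology by Proposition~\ref{prop:equivalence topologies}.

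\emph{Step 2: each approximant is a noisy operation.} By Lemma~\ref{lem:RaRe noisy}, every rational RaRe channel is a basic noisy operation; in particular each $\mathcal{R}_{n}$ is a basic noisy operation, hence a noisy operation.

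\emph{Step 3: pass to the limit.} By Definition~\ref{def:noisy}, the set of noisy operations is the closure of the set of basic noisy operations, hence it is topologically closed. Since $\mathcal{R}$ is the limit of the sequence $\left\{ \mathcal{R}_{n}\right\}$ of noisy operations, $\mathcal{R}$ is itself a noisy operation. As $\mathcal{R}$ was an arbitrary RaRe channel, this proves the theorem. The only point requiring genuine (though elementary) care is the first step: one must make sure the rational approximants $p_{i}^{(n)}$ can be taken to be bona fide probability distributions, not merely entrywise close to $\left(p_{1},\dots,p_{r}\right)$; everything else is a routine application of results already in the excerpt — Carath\'eodory's theorem, the norm inequalities, the equivalence of the operational and norm topologies, Lemma~\ref{lem:RaRe noisy}, and the closedness of the noisy operations.
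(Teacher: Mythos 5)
Your proof is correct and follows essentially the same route as the paper: the paper's own argument is exactly "rational RaRe channels are dense in the set of RaRe channels, and the set of noisy operations is closed," combined with Lemma~\ref{lem:RaRe noisy}. The only difference is that you spell out the density step (finite convex decomposition, rational approximation of the weights, the norm bound, and the equivalence of topologies), which the paper leaves implicit; your care in making the approximants genuine probability distributions is exactly the right detail to check.
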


This theorem is important from a physical point of view, for it states
that the ``thermalisation'' RaRe channel in proposition~\ref{prop:properties microcanonical}
can be regarded as a physical ``thermalisation'' process in which
a system is put into contact with a ``thermal bath'' (the microcanonical
state), and left there to ``thermalise''.

Note that the inclusion of RaRe channels in the set of noisy operations
is generally strict: for example, in quantum theory there exist noisy
operations that are not RaRe channels \cite{Shor} . In summary, we
have the inclusions 
\begin{equation}
\mathsf{RaRe}\subseteq\mathsf{Noisy}\subseteq\mathsf{Unital},\label{eq:inclusions sharp}
\end{equation}
illustrated in fig.~\ref{fig:inclusions}.
\begin{figure}
\begin{centering}
\includegraphics[scale=0.8]{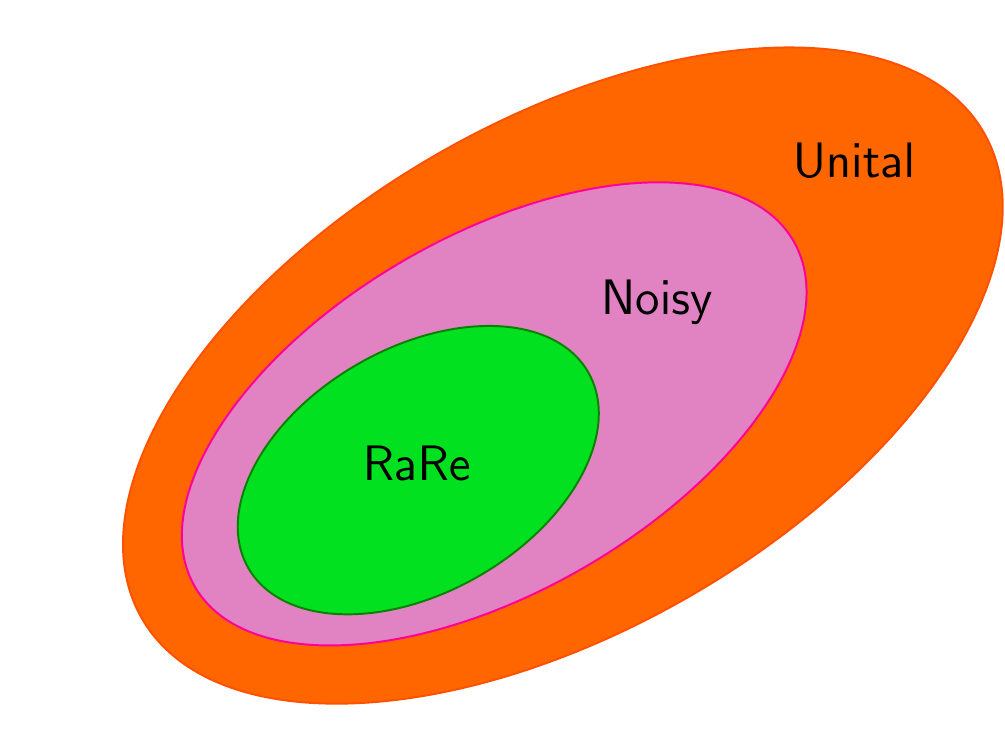}
\par\end{centering}
\caption{\label{fig:inclusions}Inclusion relations between the three sets
of free operations in sharp theories with purification.}

\end{figure}
 These inclusions imply the relations
\[
\rho\succeq_{\mathrm{RaRe}}\sigma\quad\Longrightarrow\quad\rho\succeq_{\mathrm{noisy}}\sigma\quad\Longrightarrow\quad\rho\succeq_{\mathrm{unital}}\sigma,
\]
valid for every pair of states $\rho$ and $\sigma$ of the \emph{same}
system.\footnote{Since RaRe channels have the same input and output system, we will
restrict ourselves to state convertibility in the same system, so
as to be able to compare the convertibility properties under \emph{all
three} resource theories.} Note that the unital relation $\succeq_{\mathrm{unital}}$ is the
weakest, i.e.\ the easiest to satisfy, whereas RaRe convertibility
is the strongest. Starting from RaRe convertibility we can already
characterise the states that are extremal in all the three preorders.
In subsection~\ref{subsec:The-RaRe-resource} we already saw that
$\rho\succeq_{\mathrm{RaRe}}\chi$ for every $\rho$. This clearly
implies that $\rho\succeq_{\mathrm{noisy}}\chi$ and $\rho\succeq_{\mathrm{unital}}\chi$,
so the microcanonical state is the resource with minimum value according
all the three preorders. This should not surprise, since it is a free
state (or can be regarded as such) for all the three resource theories.
At the other extreme, we also saw that for every pure state $\psi$
and every state $\rho$, $\psi\succeq_{\mathrm{RaRe}}\rho$, and that
pure states are all equivalent to each other in the RaRe resource
theory. By the inclusions~\eqref{eq:inclusions sharp}, these properties
carry over to the other two resource theories.

\section{Majorisation and unital channels\label{sec:Majorisation-and-unital}}

From this section on, we will always work in sharp theories with purification,
even when we do not state it explicitly. We know that in these theories
states can be diagonalised (see section~\ref{sec:Diagonalisation-of-states}),
so we wonder whether their spectra (i.e.\ the vectors of their eigenvalues)
play any role in determining the convertibility properties for the
three resource theories, like in quantum theory \cite{Nicole}. Indeed,
in general, determining when a state $\rho$ can be converted into
a state $\sigma$ is a hard task, because it involves checking \emph{all}
possible free operations in order to find out if a transition is possible.
Therefore a practical criterion that depends only on the two states
is highly desirable. Here, we will explore the role of majorisation
in this respect, which will provide us with a necessary and sufficient
condition for the unital preorder.

In a broad sense, unital channels are the generalisation of doubly
stochastic matrices, and in sharp theories with purification there
is a more explicit connection.
\begin{lem}
\label{lem:channelmatrix}Let $\mathcal{D}$ be a unital channel on
system $\mathrm{A}$, and let $\left\{ \alpha_{i}\right\} _{i=1}^{d}$
and $\left\{ \alpha_{i}'\right\} _{i=1}^{d}$ be two pure maximal
sets of $\mathrm{A}$. Then, the matrix $D$ with entries $D_{ij}:=\left(\alpha_{i}'^{\dagger}\middle|\mathcal{D}\middle|\alpha_{j}\right)$
is doubly stochastic.
\end{lem}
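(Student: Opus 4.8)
The plan is to show that the matrix $D$ with entries $D_{ij}=\left(\alpha_i'^{\dagger}\middle|\mathcal{D}\middle|\alpha_j\right)$ has non-negative entries summing to $1$ along every row and every column. Non-negativity is immediate: $\alpha_i'^{\dagger}\mathcal{D}$ is an effect on $\mathrm{A}$ (composition of an effect with a channel), so $\left(\alpha_i'^{\dagger}\middle|\mathcal{D}\middle|\alpha_j\right)\geq 0$ since $\alpha_j$ is a (pure) normalised state.

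For the row sums, I would use the fact that $\left\{\alpha_i'\right\}_{i=1}^{d}$ is a pure maximal set, so by proposition~\ref{prop:pure test chi} the effects $\left\{\alpha_i'^{\dagger}\right\}_{i=1}^{d}$ form a pure sharp observation-test, i.e.\ $\sum_{i=1}^{d}\alpha_i'^{\dagger}=u_{\mathrm{A}}$. Therefore, for every fixed $j$,
\[
\sum_{i=1}^{d}D_{ij}=\sum_{i=1}^{d}\left(\alpha_i'^{\dagger}\middle|\mathcal{D}\middle|\alpha_j\right)=\left(u_{\mathrm{A}}\middle|\mathcal{D}\middle|\alpha_j\right)=\left(u_{\mathrm{A}}\middle|\alpha_j\right)=1,
\]
where the penultimate equality uses that $\mathcal{D}$ is a channel (proposition~\ref{prop:characterization channel}, $u_{\mathrm{A}}\mathcal{D}=u_{\mathrm{A}}$) and the last uses that $\alpha_j$ is normalised.

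For the column sums, I would invoke unitality together with the diagonalisation of the invariant state. Since $\left\{\alpha_j\right\}_{j=1}^{d}$ is a pure maximal set, proposition~\ref{prop:diagonalization chi d-level 2} gives $\chi_{\mathrm{A}}=\frac{1}{d}\sum_{j=1}^{d}\alpha_j$, and the same for the primed set. Applying $\mathcal{D}$ and using $\mathcal{D}\chi_{\mathrm{A}}=\chi_{\mathrm{A}}$, we get $\frac{1}{d}\sum_{j=1}^{d}\mathcal{D}\alpha_j=\frac{1}{d}\sum_{k=1}^{d}\alpha_k'$. Now pair both sides with the pure effect $\alpha_i'^{\dagger}$ and use corollary~\ref{cor:conjecture proved} (or corollary~\ref{cor:dagger distinguishable}), which gives $\left(\alpha_i'^{\dagger}\middle|\alpha_k'\right)=\delta_{ik}$ since the $\alpha_k'$ are perfectly distinguishable pure states:
\[
\sum_{j=1}^{d}D_{ij}=\sum_{j=1}^{d}\left(\alpha_i'^{\dagger}\middle|\mathcal{D}\middle|\alpha_j\right)=\sum_{k=1}^{d}\left(\alpha_i'^{\dagger}\middle|\alpha_k'\right)=1.
\]
This establishes that $D$ is doubly stochastic.

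I do not anticipate a serious obstacle here — the only subtlety is making sure all the ingredients invoked (proposition~\ref{prop:pure test chi}, proposition~\ref{prop:diagonalization chi d-level 2}, corollary~\ref{cor:conjecture proved}, and the channel characterisation) are genuinely available at this point in the development, which they are, since we are working in a sharp theory with purification. The proof is essentially parallel to (and strictly stronger than) the proof of lemma~\ref{lem:doubly-stochastic}, as the excerpt already anticipates: setting $\mathcal{D}=\mathcal{I}$ recovers the transition matrix statement.
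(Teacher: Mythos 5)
Your proof is correct and follows essentially the same route as the paper: non-negativity from entries being probabilities, one sum via $\sum_i\alpha_i'^{\dagger}=u$ together with $u\mathcal{D}=u$, and the other via unitality and the diagonalisation $\chi=\frac{1}{d}\sum_j\alpha_j$. The only cosmetic difference is that you evaluate $\left(\alpha_i'^{\dagger}\middle|\chi\right)$ by expanding $\chi$ in the primed maximal set and using $\left(\alpha_i'^{\dagger}\middle|\alpha_k'\right)=\delta_{ik}$, whereas the paper cites this value ($=1/d$) directly.
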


\begin{proof}
Every entry $D_{ij}$ is a probability, so it is non-negative. Moreover,
one has
\[
\sum_{i=1}^{d}D_{ij}=\sum_{i=1}^{d}\left(\alpha_{i}'^{\dagger}\middle|\mathcal{D}\middle|\alpha_{j}\right)=\left(u\middle|\mathcal{D}\middle|\alpha_{j}\right)=\mathrm{tr}\:\alpha_{j}=1,
\]
for all $j=1,\ldots,d$, where we have used the fact that the effects
$\left\{ \alpha_{i}'^{\dagger}\right\} _{i=1}^{d}$ form an observation-test
(proposition~\ref{prop:pure test chi}), and that $\mathcal{D}$
is a channel, so $u\mathcal{D}=u$. Finally, one has
\[
\sum_{j=1}^{d}D_{ij}=\sum_{j=1}^{d}\left(\alpha_{i}'^{\dagger}\middle|\mathcal{D}\middle|\alpha_{j}\right)=d\left(\alpha_{i}'^{\dagger}\middle|\mathcal{D}\middle|\chi\right)=d\left(\alpha_{i}'^{\dagger}\middle|\chi\right)=d\cdot\frac{1}{d}=1,
\]
for all $i=1,\ldots,d$, where we have used proposition~\ref{prop:diagonalization chi d-level 2},
and the fact that unital channels leave $\chi$ invariant. In conclusion
we have shown that $D$ is doubly-stochastic.
\end{proof}
This lemma is a generalisation of lemma~\ref{lem:doubly-stochastic},
which can be obtained by taking $\mathcal{D}$ to be the identity
channel $\mathcal{I}$.

Vice versa, every doubly stochastic matrix defines a unital channel.
\begin{lem}
\label{lem:matrixchannel}Let $D$ be a $d\times d$ doubly stochastic
matrix, and let $\left\{ \alpha_{i}\right\} _{i=1}^{d}$ and $\left\{ \alpha_{i}'\right\} _{i=1}^{d}$
be any two pure maximal sets of system $\mathrm{A}$. Then, the channel
defined by $\mathcal{D}:=\sum_{j=1}^{d}\left|\rho_{j}\right)\left(\alpha_{j}^{\dagger}\right|$,
with $\rho_{j}:=\sum_{i=1}^{d}D_{ij}\alpha'_{i}$, is unital.
\end{lem}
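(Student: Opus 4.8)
The statement to prove is Lemma~\ref{lem:matrixchannel}: given a $d\times d$ doubly stochastic matrix $D$ and two pure maximal sets $\{\alpha_i\}_{i=1}^d$, $\{\alpha_i'\}_{i=1}^d$, the channel $\mathcal{D}:=\sum_{j=1}^d |\rho_j)(\alpha_j^\dagger|$ with $\rho_j:=\sum_{i=1}^d D_{ij}\alpha_i'$ is unital. The plan is to verify two things: first, that $\mathcal{D}$ is a legitimate channel (deterministic transformation), and second, that $\mathcal{D}\chi=\chi$.

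\textbf{Step 1: $\mathcal{D}$ is a channel.} Each $\rho_j$ is a convex combination of the normalised pure states $\alpha_i'$ (the coefficients $D_{ij}$ are non-negative and $\sum_i D_{ij}=1$ by double stochasticity of the columns), hence $\rho_j$ is a normalised state. By Proposition~\ref{prop:pure test chi}, $\{\alpha_j^\dagger\}_{j=1}^d$ is a pure sharp measurement, i.e.\ an observation-test, so $\sum_{j=1}^d \alpha_j^\dagger = u$. Then the transformations $\{|\rho_j)(\alpha_j^\dagger|\}_{j=1}^d$ satisfy $\sum_{j=1}^d u_{\mathrm{A}}|\rho_j)(\alpha_j^\dagger| = \sum_{j=1}^d (\alpha_j^\dagger| = u_{\mathrm{A}}$, using $u\,\rho_j = \mathrm{tr}\,\rho_j = 1$. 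By Proposition~\ref{prop:sufficientfortest}, this means $\{|\rho_j)(\alpha_j^\dagger|\}_{j=1}^d$ is a test, and its coarse-graining $\mathcal{D}=\sum_j |\rho_j)(\alpha_j^\dagger|$ is a channel (equivalently, one checks directly $u_{\mathrm{A}}\mathcal{D}=u_{\mathrm{A}}$ and invokes Proposition~\ref{prop:characterization channel}).

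\textbf{Step 2: $\mathcal{D}\chi=\chi$.} By Proposition~\ref{prop:diagonalization chi d-level 2}, since $\{\alpha_j\}_{j=1}^d$ is a pure maximal set, $\chi=\frac{1}{d}\sum_{j=1}^d \alpha_j$; likewise $\chi=\frac{1}{d}\sum_{i=1}^d \alpha_i'$. Also, by Corollary~\ref{cor:conjecture proved}, $(\alpha_j^\dagger|\alpha_k)=\delta_{jk}$. Therefore
\[
\mathcal{D}\chi = \sum_{j=1}^d |\rho_j)\left(\alpha_j^\dagger\middle|\frac{1}{d}\sum_{k=1}^d \alpha_k\right) = \frac{1}{d}\sum_{j=1}^d \rho_j = \frac{1}{d}\sum_{j=1}^d \sum_{i=1}^d D_{ij}\alpha_i' = \frac{1}{d}\sum_{i=1}^d \left(\sum_{j=1}^d D_{ij}\right)\alpha_i' = \frac{1}{d}\sum_{i=1}^d \alpha_i' = \chi,
\]
where the fourth equality uses that the rows of $D$ sum to $1$ (double stochasticity). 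Hence $\mathcal{D}$ is unital.

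\textbf{Expected main obstacle.} There is no serious obstacle here — the lemma is essentially a bookkeeping exercise assembling the diagonalisation of $\chi$, the observation-test property of $\{\alpha_j^\dagger\}$, and the orthonormality relation $(\alpha_j^\dagger|\alpha_k)=\delta_{jk}$. The only point requiring a moment's care is making sure $\mathcal{D}$ is genuinely a channel and not merely a transformation: this rests on Proposition~\ref{prop:sufficientfortest} (or Proposition~\ref{prop:characterization channel}), whose use in turn depends on $\{\alpha_j^\dagger\}$ summing to the deterministic effect, which is exactly Proposition~\ref{prop:pure test chi}. One should also note that the construction of $\mathcal{D}$ does not depend on matching up the two maximal sets in any particular way — the lemma holds for arbitrary $\{\alpha_i\}$ and $\{\alpha_i'\}$ — and the proof above indeed never uses any relation between them.
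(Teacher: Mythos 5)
Your proof is correct and follows essentially the same route as the paper's: the paper asserts in one line that $\mathcal{D}$ is a measure-and-prepare channel (which your Step 1 verifies explicitly via Propositions~\ref{prop:pure test chi} and \ref{prop:sufficientfortest}) and then computes $\mathcal{D}\chi$ exactly as in your Step 2, using $\left(\alpha_{j}^{\dagger}\middle|\chi\right)=\frac{1}{d}$ and the row sums of $D$. The only cosmetic difference is that you expand $\chi$ over the $\alpha_{k}$'s and invoke $\left(\alpha_{j}^{\dagger}\middle|\alpha_{k}\right)=\delta_{jk}$ rather than applying $\alpha_{j}^{\dagger}$ to $\chi$ directly, which is equivalent.
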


\begin{proof}
The transformation $\mathcal{D}$ is a measure-and-prepare channel:
it can be implemented by performing the observation-test $\left\{ \alpha_{j}^{\dagger}\right\} _{j=1}^{d}$
and by preparing the state $\rho_{j}$ conditionally on outcome $j$.
Moreover, one has 
\[
\mathcal{D}\chi=\sum_{j=1}^{d}\rho_{j}\left(\alpha_{j}^{\dagger}\middle|\chi\right)=\frac{1}{d}\sum_{j=1}^{d}\sum_{i=1}^{d}D_{ij}\alpha'_{i}=\frac{1}{d}\sum_{i=1}^{d}\alpha'_{i}=\chi,
\]
the third equality following from the definition of doubly stochastic
matrix, and the fourth from the diagonalisation of the state $\chi$
(proposition~\ref{prop:diagonalization chi d-level 2}).
\end{proof}
Lemmas~\ref{lem:channelmatrix} and \ref{lem:matrixchannel} establish
a direct connection between unital channels and doubly stochastic
matrices. Using this connection now we show that the ability to convert
states in the unital resource theory is completely determined by a
suitable majorisation criterion. Let us start by recalling the definition
of majorisation \cite{Olkin}.
\begin{defn}
Let $\mathbf{x}$ and $\mathbf{y}$ be two generic vectors in $\mathbb{R}^{d}$.
One says that $\mathbf{x}$ \emph{majorises} $\mathbf{y}$, denoted
${\bf x}\succeq{\bf y}$, if, when the entries of $\mathbf{x}$ and
$\mathbf{y}$ are rearranged in decreasing order, one has 
\[
\sum_{i=1}^{k}x_{i}\geq\sum_{i=1}^{k}y_{i}\quad k=1,\ldots,d-1
\]
and 
\[
\sum_{i=1}^{d}x_{i}=\sum_{i=1}^{d}y_{i}.
\]
\end{defn}

Majorisation can be equivalently characterised in terms of doubly
stochastic matrices: one has $\mathbf{x}\succeq\mathbf{y}$ if and
only if $\mathbf{y}=D\mathbf{x}$, where $D$ is a doubly stochastic
matrix \cite{Hardy-Littlewood-Polya1929,Olkin}. The idea is that
$\mathbf{y}\preceq\mathbf{x}$ if $\mathbf{y}$ is ``more random''
than $\mathbf{x}$.

In every sharp theory with purification, majorisation is a necessary
and sufficient condition for convertibility under unital channels.
\begin{thm}
\label{thm:unital-majorisation}Let $\rho$ and $\sigma$ be normalised
states, and let $\mathbf{p}$ and $\mathbf{q}$ be their spectra respectively.
The state $\rho$ can be converted into the state $\sigma$ by a unital
channel if and only if $\mathbf{p}\succeq\mathbf{q}$.
\end{thm}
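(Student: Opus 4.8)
The plan is to prove the two directions separately, using the dictionary between unital channels and doubly stochastic matrices established in Lemmas~\ref{lem:channelmatrix} and \ref{lem:matrixchannel}, together with the classical characterisation of majorisation via doubly stochastic matrices ($\mathbf{q}\preceq\mathbf{p}$ iff $\mathbf{q}=D\mathbf{p}$ for some doubly stochastic $D$). Throughout, fix diagonalisations $\rho=\sum_{i=1}^{d}p_{i}\alpha_{i}$ and $\sigma=\sum_{i=1}^{d}q_{i}\alpha_{i}'$ into pure maximal sets (padding the spectra with zeros so both have $d$ entries, as justified after Proposition~\ref{prop:diagonalization chi d-level 2}), and recall from Corollary~\ref{cor:computeegv} that $p_{i}=(\alpha_{i}^{\dagger}|\rho)$ and similarly for $\sigma$.

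For sufficiency ($\mathbf{p}\succeq\mathbf{q}\Rightarrow\rho\succeq_{\mathrm{unital}}\sigma$): assume $\mathbf{p}\succeq\mathbf{q}$, so there is a $d\times d$ doubly stochastic matrix $D$ with $\mathbf{q}=D\mathbf{p}$, i.e.\ $q_{i}=\sum_{j}D_{ij}p_{j}$. Apply Lemma~\ref{lem:matrixchannel} with the two maximal sets $\{\alpha_{j}\}$ and $\{\alpha_{i}'\}$ to get a unital channel $\mathcal{D}=\sum_{j}|\rho_{j})(\alpha_{j}^{\dagger}|$ with $\rho_{j}=\sum_{i}D_{ij}\alpha_{i}'$. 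Then compute
\[
\mathcal{D}\rho=\sum_{j}\rho_{j}\,(\alpha_{j}^{\dagger}|\rho)=\sum_{j}p_{j}\rho_{j}=\sum_{j}\sum_{i}D_{ij}p_{j}\,\alpha_{i}'=\sum_{i}q_{i}\alpha_{i}'=\sigma,
\]
using $(\alpha_{j}^{\dagger}|\rho)=p_{j}$ from Corollary~\ref{cor:computeegv}. Hence $\rho\succeq_{\mathrm{unital}}\sigma$.

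For necessity ($\rho\succeq_{\mathrm{unital}}\sigma\Rightarrow\mathbf{p}\succeq\mathbf{q}$): suppose $\mathcal{D}\rho=\sigma$ for a unital channel $\mathcal{D}$ on $\mathrm{A}$. By Lemma~\ref{lem:channelmatrix}, the matrix $D_{ij}:=(\alpha_{i}'^{\dagger}|\mathcal{D}|\alpha_{j})$ is doubly stochastic. Now evaluate the $i$-th eigenvalue of $\sigma$ by applying $\alpha_{i}'^{\dagger}$:
\[
q_{i}=(\alpha_{i}'^{\dagger}|\sigma)=(\alpha_{i}'^{\dagger}|\mathcal{D}|\rho)=\sum_{j}p_{j}\,(\alpha_{i}'^{\dagger}|\mathcal{D}|\alpha_{j})=\sum_{j}D_{ij}p_{j},
\]
where the third step uses linearity together with the diagonalisation $\rho=\sum_{j}p_{j}\alpha_{j}$ and Corollary~\ref{cor:computeegv}. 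Thus $\mathbf{q}=D\mathbf{p}$ with $D$ doubly stochastic, which by the Hardy--Littlewood--P\'olya characterisation means precisely $\mathbf{p}\succeq\mathbf{q}$.

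I do not expect a serious obstacle here: both directions are short once the two lemmas are in hand, and the only points requiring care are bookkeeping ones — making sure the spectra are consistently padded to length $d$ so that a genuine $d\times d$ doubly stochastic matrix appears, and correctly matching up which maximal set plays the role of input eigenstates versus output eigenstates in Lemmas~\ref{lem:channelmatrix} and \ref{lem:matrixchannel}. The mildest subtlety is that in the necessity direction the channel $\mathcal{D}$ is given abstractly, so one must extract its matrix in a basis adapted to \emph{both} $\rho$ and $\sigma$; this is exactly what Lemma~\ref{lem:channelmatrix} permits, since it allows two arbitrary pure maximal sets. If one wanted to avoid invoking the Hardy--Littlewood--P\'olya equivalence as a black box, one could instead verify the partial-sum inequalities directly from $q_{i}=\sum_{j}D_{ij}p_{j}$, but citing the classical result is cleaner.
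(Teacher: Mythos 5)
Your proposal is correct and follows essentially the same route as the paper: both directions are obtained exactly as you describe, by applying $\alpha_i'^{\dagger}$ to $\mathcal{D}\rho=\sigma$ and invoking Lemma~\ref{lem:channelmatrix} for necessity, and by building the measure-and-prepare channel $\mathcal{D}=\sum_{j}\left|\rho_{j}\right)\left(\alpha_{j}^{\dagger}\right|$ from Lemma~\ref{lem:matrixchannel} for sufficiency. The bookkeeping points you flag (padding spectra to length $d$, using two different maximal sets) are handled the same way in the paper.
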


\begin{proof}
Let $\rho=\sum_{j=1}^{d}p_{j}\alpha_{j}$ and $\sigma=\sum_{j=1}^{d}q_{j}\alpha'_{j}$
be diagonalisations of $\rho$ and $\sigma$ respectively. We first
show that $\rho\succeq_{\mathrm{unital}}\sigma$ implies $\mathbf{p}\succeq\mathbf{q}$.
Since $\sigma=\mathcal{D}\rho$, with $\mathcal{D}$ unital channel,
one has
\[
\sum_{j=1}^{d}q_{j}\alpha'_{j}=\sum_{j=1}^{d}p_{j}\mathcal{D}\alpha_{j}
\]
Applying $\alpha_{i}'^{\dagger}$ to both sides, we obtain
\begin{equation}
q_{i}=\sum_{j=1}^{d}p_{j}\left(\alpha_{i}'^{\dagger}\middle|\mathcal{D}\middle|\alpha_{j}\right)=:\sum_{j=1}^{d}D_{ij}p_{j},\label{eq:matrixD}
\end{equation}
where we have set $D_{ij}:=\left(\alpha_{i}'^{\dagger}\middle|\mathcal{D}\middle|\alpha_{j}\right)$.
Now, the $D_{ij}$'s are the entries of a doubly stochastic matrix
$D$ by lemma~\ref{lem:channelmatrix}. Hence, eq.~\eqref{eq:matrixD}
implies $\mathbf{p}\succeq\mathbf{q}$. 

To prove sufficiency, suppose that $\mathbf{p}\succeq\mathbf{q}$
and let $D$ be a doubly stochastic matrix such that $\mathbf{q}=D\mathbf{p}$.
Define the measure-and-prepare channel $\mathcal{D}:=\sum_{j=1}^{d}\left|\rho_{j}\right)\left(\alpha_{j}^{\dagger}\right|$,
with $\rho_{j}:=\sum_{i=1}^{d}D_{ij}\alpha'_{i}$. By construction,
one has
\[
\mathcal{D}\rho=\sum_{j=1}^{d}\rho_{j}\left(\alpha_{j}^{\dagger}\middle|\rho\right)=\sum_{i=1}^{d}\alpha'_{i}\sum_{j=1}^{d}D_{ij}p_{j}=\sum_{i=1}^{d}q_{i}\alpha'_{i}=\sigma.
\]
Now, the channel $\mathcal{D}$ is unital by lemma~\ref{lem:matrixchannel}.
Hence, $\rho$ can be converted into $\sigma$ by a unital channel.
\end{proof}
Note that since RaRe channels and noisy operations are special cases
of unital channels, majorisation is a \emph{necessary} condition for
convertibility in the RaRe and noisy resource theories. We will examine
when it is sufficient for RaRe channels in section~\ref{sec:Sufficiency-of-majorisation}.
So far we have proved
\begin{equation}
\rho\succeq_{\mathrm{RaRe}}\sigma\quad\Longrightarrow\quad\rho\succeq_{\mathrm{noisy}}\sigma\quad\Longrightarrow\quad\rho\succeq_{\mathrm{unital}}\sigma\quad\Longleftrightarrow\quad\mathbf{p}\succeq\mathbf{q}.\label{eq:chain implications}
\end{equation}
Now the relations $\psi\succeq_{F}\rho\succeq_{F}\chi$, where $F$
denotes any of the three sets of free operations, can be easily understood
in terms of majorisation. Indeed the spectrum of a pure state is $\left(\begin{array}{cccc}
1 & 0 & \ldots & 0\end{array}\right)^{T}$, whereas the spectrum of the microcanonical state is $\left(\begin{array}{ccc}
\frac{1}{d} & \ldots & \frac{1}{d}\end{array}\right)^{T}$, and it is straightforward to check that
\[
\left(\begin{array}{c}
1\\
0\\
\vdots\\
0
\end{array}\right)\succeq\mathbf{p}\succeq\left(\begin{array}{c}
\frac{1}{d}\\
\vdots\\
\frac{1}{d}
\end{array}\right).
\]

\subsection{Operational characterisation of the eigenvalues}

Besides giving a necessary and sufficient condition for the convertibility
under unital channels, majorisation also provides an operational characterisation
of the eigenvalues of a state: the eigenvalues are the least random
probability distribution that can be generated by pure observation-tests.
A similar result was also proved in GPTs satisfying other axioms in
\cite{Krumm-Muller,Krumm-thesis,Colleagues}.

To prove the main result we need a lemma on the structure of pure
observation-tests.
\begin{lem}
\label{lem:characterisation pure tests}Let $\left\{ a_{i}\right\} _{i=1}^{n}$
be a \emph{pure} observation-test. Then, for every $i\in\left\lbrace 1,\ldots,n\right\rbrace $,
$a_{i}=\lambda_{i}\alpha_{i}^{\dagger}$, for some pure state $\alpha_{i}$,
and $\lambda_{i}\in\left(0,1\right]$. Moreover $\sum_{i=1}^{n}\lambda_{i}=d$,
and $n\geq d$. One has $n=d$ if and only if $\left\{ a_{i}\right\} _{i=1}^{n}$
is a pure sharp measurement.
\end{lem}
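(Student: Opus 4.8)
The plan is to analyze a pure observation-test $\left\{a_i\right\}_{i=1}^n$ by first classifying each individual pure effect $a_i$, then extracting the normalization constraint $\sum_i \lambda_i = d$ from Causality, and finally deducing the bound $n \geq d$ together with the equality case. First I would recall that, by definition, each $a_i$ is a pure effect, so it admits a refinement decomposition; more precisely, I claim that every nonzero pure effect $a$ can be written as $a = \lambda\,\alpha^\dagger$ for a unique normalised pure effect $\alpha^\dagger$ and a scalar $\lambda \in (0,1]$. To see this, note that $\|a\| \in (0,1]$ (it is an effect), and the rescaled effect $a/\|a\|$ is a normalised effect; one must check it is still pure (scaling by a positive constant cannot introduce a nontrivial refinement, since a refinement of $a/\|a\|$ would rescale to a refinement of $a$) and that a normalised pure effect is $\alpha^\dagger$ for a pure state $\alpha$ by the state-effect duality (Theorem~\ref{thm:duality}). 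So set $\lambda_i := \|a_i\|$ and $\alpha_i := a_i^\dagger$, discarding any $a_i = 0$ (which cannot occur in a test on a nontrivial system anyway, since then the remaining effects would still sum to $u$ and one could drop the zero outcome — but to be safe I would just note $\lambda_i > 0$ because an observation-test outcome with $\lambda_i = 0$ would be $a_i = 0$, excluded).

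Next I would pin down $\sum_{i=1}^n \lambda_i$. Since $\left\{a_i\right\}_{i=1}^n$ is an observation-test, Causality gives $\sum_{i=1}^n a_i = u$. Apply both sides to the invariant state $\chi$: using Proposition~\ref{prop:steerpstar}, every normalised pure effect satisfies $\left(\alpha_i^\dagger\middle|\chi\right) = p_* = \frac{1}{d}$, so
\[
1 = \left(u\middle|\chi\right) = \sum_{i=1}^n \lambda_i\left(\alpha_i^\dagger\middle|\chi\right) = \frac{1}{d}\sum_{i=1}^n \lambda_i,
\]
whence $\sum_{i=1}^n \lambda_i = d$. Since each $\lambda_i \leq 1$, this immediately forces $n \geq d$.

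Finally, for the equality case: if $n = d$, then $\sum_{i=1}^d \lambda_i = d$ with each $\lambda_i \leq 1$ forces $\lambda_i = 1$ for all $i$, i.e.\ every $a_i = \alpha_i^\dagger$ is a normalised pure effect, so $\left\{a_i\right\}_{i=1}^d$ is a pure sharp measurement by definition. Conversely, if $\left\{a_i\right\}_{i=1}^n$ is a pure sharp measurement, then each $a_i = \alpha_i^\dagger$ is already normalised, so $\lambda_i = 1$ for all $i$, and the relation $\sum_i \lambda_i = d$ gives $n = d$. I expect the only mildly delicate point is the very first step — verifying that the rescaled effect $a/\|a\|$ is genuinely pure and hence of the form $\alpha^\dagger$ — but this follows cleanly from the definition of refinement (scaling by a positive scalar is a bijection on refinements up to the same scalar) combined with the state-effect duality already established; everything after that is a short computation with $\chi$. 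There is no serious obstacle here, since all the heavy lifting (the value $\left(\alpha^\dagger\middle|\chi\right)=1/d$, the duality, and Causality's characterization of observation-tests) is available from the preceding sections.
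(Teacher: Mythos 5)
Your computation of $\sum_{i=1}^{n}\lambda_{i}=d$ from Causality together with $\left(\alpha_{i}^{\dagger}\middle|\chi\right)=\frac{1}{d}$, the deduction $n\geq d$ from $\lambda_{i}\leq1$, and both directions of the equality case are correct and match the paper's argument; the paper handles the direction ``pure sharp measurement $\Rightarrow n=d$'' by citing proposition~\ref{prop:pure test chi}, but your route through $\lambda_{i}=1$ for all $i$ is equally valid and arguably more self-contained.

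The gap is in your very first step, and it is not the one you flagged. To write $a_{i}=\lambda_{i}\alpha_{i}^{\dagger}$ you rescale by the norm and apply the state-effect duality to $a_{i}/\left\Vert a_{i}\right\Vert $. But theorem~\ref{thm:duality} applies to normalised pure \emph{effects}, i.e.\ to physical effects of the theory, and it is not automatic that $a_{i}/\left\Vert a_{i}\right\Vert $ is a physical effect at all: rescaling upward by $1/\left\Vert a_{i}\right\Vert \geq1$ produces an element of the dual cone taking values in $\left[0,1\right]$ on all states, but in a theory where the no-restriction hypothesis fails such an element need not belong to $\mathsf{Eff}\left(\mathrm{A}\right)$. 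The restricted trit of chapter~2 is built on exactly this phenomenon: there $e_{13}$ is an allowed effect with norm $\frac{1}{2}$, yet $2e_{13}=e_{13}/\left\Vert e_{13}\right\Vert $ is explicitly \emph{not} a physical effect. The thesis deliberately refuses to assume no-restriction, and its validity in sharp theories with purification is a nontrivial derived fact only quoted from elsewhere, so you cannot lean on it here. The half you did address---purity surviving the rescaling---is the easy half. The paper sidesteps the whole issue by invoking the diagonalisation of arbitrary elements of $\mathsf{Eff}_{\mathbb{R}}\left(\mathrm{A}\right)$ over a pure sharp measurement (subsection~\ref{subsec:Functional-calculus-on}): writing $a_{i}=\sum_{j}x_{j}\alpha_{j}^{\dagger}$ with $x_{j}=\left(a_{i}\middle|\alpha_{j}\right)\geq0$, the purity of $a_{i}$ forces a single non-vanishing term, which directly yields $a_{i}=\lambda_{i}\alpha_{i}^{\dagger}$ with $\alpha_{i}^{\dagger}$ a genuine normalised pure effect. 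Replacing your rescaling step with that argument closes the gap; everything after it in your proof stands as written.
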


\begin{proof}
Since we can extend the diagonalisation to elements of $\mathsf{Eff}_{\mathbb{R}}\left(\mathrm{A}\right)$,
we can write $a_{i}=\lambda_{i}\alpha_{i}^{\dagger}$, and there is
one term in the diagonalisation of $a_{i}$ because it is pure. Being
$a_{i}$ an effect, it must be $\lambda_{i}\in\left(0,1\right]$,
because $\lambda_{i}=\left(a_{i}\middle|\alpha_{i}\right)$. Now let
us prove that $\sum_{i=1}^{n}\lambda_{i}=d$. By Causality, $\sum_{i=1}^{n}\lambda_{i}\alpha_{i}^{\dagger}=u$.
Now consider 
\[
1=\mathrm{tr}\:\chi=\sum_{i=1}^{n}\lambda_{i}\left(\alpha_{i}^{\dagger}\middle|\chi\right)=\sum_{i=1}^{n}\lambda_{i}\cdot\frac{1}{d},
\]
whence $\sum_{i=1}^{n}\lambda_{i}=d$. Since $\lambda_{i}\leq1$,
we have 
\[
d=\sum_{i=1}^{n}\lambda_{i}\leq\sum_{i=1}^{n}1=n,
\]
so $n\geq d$. Now let us prove that $n=d$ if and only if $\left\{ a_{i}\right\} _{i=1}^{n}$
is a pure sharp measurement. Suppose $\left\{ a_{i}\right\} _{i=1}^{n}$
is a pure sharp measurement, then $n=d$ by proposition~\ref{prop:pure test chi}.
Conversely, suppose we know that $n=d$. Then in this case, the only
possibility of having $\sum_{i=1}^{n}\lambda_{i}=d$ is when $\lambda_{i}=1$
for every $i$. Therefore all the effects are normalised, and the
observation-test can be rewritten as $\left\{ \alpha_{i}^{\dagger}\right\} _{i=1}^{d}$
for some pure states $\left\{ \alpha_{i}\right\} _{i=1}^{d}$.
\end{proof}
Now we are ready to prove the main result. It is useful to introduce
the following notation: given two vectors $\mathbf{x}\in\mathbb{R}^{n}$,
and $\mathbf{y}\in\mathbb{R}^{m}$, let us define
\begin{equation}
\mathbf{x}\oplus\mathbf{y}:=\left(\begin{array}{c}
\mathbf{x}\\
\hline \mathbf{y}
\end{array}\right),\label{eq:direct sum vectors}
\end{equation}
which is a vector of $\mathbb{R}^{n+m}$. This operation is nothing
but appending $\mathbf{y}$ under $\mathbf{x}$ to create a larger
vector.
\begin{prop}
\label{prop:majorisation measurement}Consider a \emph{pure} observation-test
$\boldsymbol{a}=\left\lbrace a_{i}\right\rbrace _{i=1}^{n}$ and state
$\rho$. Let $\mathbf{q}_{\boldsymbol{a}}\in\mathbb{R}^{n}$ be the
vector with entries $q_{a,i}=\left(a_{i}\middle|\rho\right)$ for
$i\in\left\lbrace 1,\ldots,n\right\rbrace $. Then if $\mathbf{p}$
is the spectrum of $\rho$, define $\widetilde{\mathbf{p}}:=\mathbf{p}\oplus\mathbf{0}$,
where $\mathbf{0}$ is the $n-d$ dimensional null vector. Then $\mathbf{q}_{\boldsymbol{a}}\preceq\widetilde{\mathbf{p}}$.
\end{prop}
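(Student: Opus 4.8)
The plan is to reduce everything to the known correspondence between pure observation-tests and doubly stochastic matrices, together with Theorem~\ref{thm:unital-majorisation} (or rather its necessity direction, suitably padded to dimension $n$). First I would diagonalise $\rho$ as $\rho=\sum_{j=1}^{d}p_{j}\alpha_{j}$ with $\left\{\alpha_{j}\right\}_{j=1}^{d}$ a pure maximal set, so that by Corollary~\ref{cor:conjecture proved} we have $\left(\alpha_{j}^{\dagger}\middle|\alpha_{k}\right)=\delta_{jk}$ and hence $p_{j}=\left(\alpha_{j}^{\dagger}\middle|\rho\right)$. Then I would use Lemma~\ref{lem:characterisation pure tests} to write each effect of the pure observation-test as $a_{i}=\lambda_{i}\beta_{i}^{\dagger}$ with $\lambda_{i}\in\left(0,1\right]$, $\sum_{i=1}^{n}\lambda_{i}=d$, and $n\geq d$. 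The key quantity is the $n\times d$ matrix $M$ with entries $M_{ij}:=\left(a_{i}\middle|\alpha_{j}\right)=\lambda_{i}\left(\beta_{i}^{\dagger}\middle|\alpha_{j}\right)$, since then $q_{a,i}=\left(a_{i}\middle|\rho\right)=\sum_{j=1}^{d}M_{ij}p_{j}$, i.e.\ $\mathbf{q}_{\boldsymbol{a}}=M\mathbf{p}$.

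The crux is to show that $M$ (after padding $\mathbf{p}$ to $\widetilde{\mathbf{p}}=\mathbf{p}\oplus\mathbf{0}\in\mathbb{R}^{n}$) extends to an $n\times n$ doubly stochastic matrix, because then $\mathbf{q}_{\boldsymbol{a}}\preceq\widetilde{\mathbf{p}}$ follows immediately from the doubly-stochastic characterisation of majorisation. The row sums of $M$ are $\sum_{j=1}^{d}M_{ij}=\left(a_{i}\middle|\sum_{j=1}^{d}\alpha_{j}\right)=d\left(a_{i}\middle|\chi\right)=d\lambda_{i}\left(\beta_{i}^{\dagger}\middle|\chi\right)=d\lambda_{i}\cdot\frac{1}{d}=\lambda_{i}$, using Proposition~\ref{prop:diagonalization chi d-level 2} and Proposition~\ref{prop:steerpstar} (or Proposition~\ref{prop:diagonalization chi d-level}). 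The column sums are $\sum_{i=1}^{n}M_{ij}=\left(\sum_{i=1}^{n}a_{i}\middle|\alpha_{j}\right)=\left(u\middle|\alpha_{j}\right)=\mathrm{tr}\:\alpha_{j}=1$, since $\left\{a_{i}\right\}_{i=1}^{n}$ is an observation-test. So $M$ has all entries in $[0,1]$, all column sums equal to $1$, and row sums $\lambda_{i}$ summing to $d$. To turn this into an $n\times n$ doubly stochastic matrix I would append $n-d$ extra columns $j=d+1,\dots,n$ (corresponding to the zero padding in $\widetilde{\mathbf{p}}$, so their values are irrelevant to $M\widetilde{\mathbf p}$), whose entries must be chosen so that each new column sums to $1$ and each row $i$ gains total mass $1-\lambda_{i}$; the total deficit is $\sum_{i}(1-\lambda_{i})=n-d$, which matches the $n-d$ new unit-sum columns, so such a nonnegative completion exists (e.g.\ by a standard Birkhoff--von~Neumann / transportation-polytope feasibility argument, or explicitly by distributing the row deficits greedily). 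This padded matrix $\widetilde{M}$ is doubly stochastic and satisfies $\widetilde{M}\widetilde{\mathbf{p}}=M\mathbf{p}=\mathbf{q}_{\boldsymbol{a}}$, whence $\mathbf{q}_{\boldsymbol{a}}\preceq\widetilde{\mathbf{p}}$.

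The main obstacle I anticipate is the combinatorial completion step: one must be careful that a nonnegative completion of $M$ to a doubly stochastic matrix genuinely exists. This is where the constraints $M_{ij}\in[0,1]$, $\sum_{i}M_{ij}=1$, and $\sum_{j}M_{ij}=\lambda_{i}\le 1$ are all essential — without $\lambda_{i}\le 1$ the row deficits could be negative. The cleanest way is probably to invoke feasibility of the transportation polytope with margins $(1-\lambda_{1},\dots,1-\lambda_{n})$ on rows and $(1,\dots,1)$ on the $n-d$ new columns (both summing to $n-d$), which is always nonempty; alternatively, since only $M\widetilde{\mathbf p}$ matters and the padded entries of $\widetilde{\mathbf p}$ are zero, one could even avoid explicit doubly-stochasticity and instead argue directly that for any $k$, choosing the $k$ largest entries of $\mathbf{q}_{\boldsymbol{a}}$ and bounding $\sum$ of them using $M_{ij}\le 1$ and the column-sum constraint yields $\sum_{i=1}^{k}q_{a,i}^{\downarrow}\le\sum_{i=1}^{k}\widetilde p_{i}^{\downarrow}$, with equality of the full sums from $\sum_{i}q_{a,i}=\left(u\middle|\rho\right)=1=\sum_{j}p_{j}$. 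I would present the doubly-stochastic-completion route as the primary argument since it parallels the proof of Theorem~\ref{thm:unital-majorisation}, and remark that it also gives, as in Lemma~\ref{lem:characterisation pure tests}, that equality $n=d$ forces $\widetilde M$ to be genuinely $d\times d$ doubly stochastic and $\boldsymbol a$ a pure sharp measurement.
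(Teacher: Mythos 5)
Your proposal is correct and follows essentially the same route as the paper's proof: diagonalise $\rho$, use lemma~\ref{lem:characterisation pure tests} to write $a_{i}=\lambda_{i}\beta_{i}^{\dagger}$, form the $n\times d$ matrix $M_{ij}=\left(a_{i}\middle|\alpha_{j}\right)$ with unit column sums and row sums $\lambda_{i}$, and pad it to an $n\times n$ doubly stochastic matrix acting on $\widetilde{\mathbf{p}}$. The only cosmetic difference is that the paper completes $M$ explicitly with $n-d$ identical columns having entries $\frac{1-\lambda_{i}}{n-d}$, whereas you invoke feasibility of the transportation polytope; the explicit uniform completion is exactly the witness your feasibility argument guarantees.
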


Note that, since by lemma~\ref{lem:characterisation pure tests}
$n\geq d$, the vector $\mathbf{q}_{\boldsymbol{a}}$ and the spectrum
$\mathbf{p}$ are not directly comparable in terms of majorisation,
because they have different dimensions, $\mathbf{q}_{\boldsymbol{a}}$
being the larger. Therefore, to circumvent the problem, we make $\mathbf{p}$
larger by attaching $n-d$ 0 entries. In this way we get $\widetilde{\mathbf{p}}$,
which is an $n$-dimensional vector like $\mathbf{q}_{\boldsymbol{a}}$.
Now the let us see the proof, whose lines will be close to \cite[lemma B.1]{Entropy-Short}
for quantum theory.
\begin{proof}
By lemma~\ref{lem:characterisation pure tests}, for each $a_{i}\in\left\lbrace a_{i}\right\rbrace _{i=1}^{n}$,
we have $a_{i}=\lambda_{i}\alpha_{i}^{\dagger}$, for some $0<\lambda_{i}\leq1$,
and for some pure state $\alpha_{i}$. Consider a diagonalisation
of $\rho=\sum_{j=1}^{d}p_{j}\alpha'_{j}$. We have 
\[
q_{a,i}:=\left(a_{i}\middle|\rho\right)=\sum_{j=1}^{d}p_{j}\left(a_{i}\middle|\alpha'_{j}\right)=\sum_{j=1}^{d}\lambda_{i}p_{j}\left(\alpha_{i}^{\dagger}\middle|\alpha'_{j}\right).
\]
Now, $M_{ij}:=\lambda_{i}\left(\alpha_{i}^{\dagger}\middle|\alpha'_{j}\right)$
are the entries of an $n\times d$ matrix $M$ such that $\mathbf{q}_{\boldsymbol{a}}=M\mathbf{p}$.
Clearly $M_{ij}\geq0$ for all $i=1,\ldots n$, and $j=1,\ldots,d$.
Calculating $\sum_{i=1}^{n}M_{ij}$, we have
\begin{equation}
\sum_{i=1}^{n}M_{ij}=\sum_{i=1}^{d}\left(\lambda_{i}\alpha_{i}^{\dagger}\middle|\alpha'_{j}\right)=\mathrm{tr}\:\alpha'_{j}=1,\label{eq:sum column M}
\end{equation}
whence the column of the matrix $M$ sum to 1. Now let us move to
$\sum_{j=1}^{d}M_{ij}$.
\begin{equation}
\sum_{j=1}^{d}M_{ij}=\lambda_{i}\sum_{j=1}^{d}\left(\alpha_{i}^{\dagger}\middle|\alpha'_{j}\right)=\lambda_{i}d\left(\alpha_{i}^{\dagger}\middle|\chi\right)=\lambda_{i}d\cdot\frac{1}{d}=\lambda_{i}\leq1\label{eq:sum row M}
\end{equation}
If $n=d$ we are done, because in this case $\lambda_{i}=1$ for every
$i$ (by lemma~\ref{lem:characterisation pure tests}), $M$ is doubly
stochastic and $\mathbf{q}_{\boldsymbol{a}}=M\mathbf{p}$, whence
the thesis.

Now, suppose $n>d$; we wish to construct an $n\times n$ doubly stochastic
matrix $D$ from $M$, such that we can write $q_{a,i}=\sum_{j=1}^{n}D_{ij}\widetilde{p}_{j}$,
where $\widetilde{\mathbf{p}}=\mathbf{p}\oplus\mathbf{0}$ is the
vector of probabilities defined as
\[
\widetilde{p}_{j}:=\begin{cases}
p_{j} & 1\leq j\leq d\\
0 & d+1\leq j\leq n
\end{cases}.
\]
Let us define $D$ as 
\[
D:=\left(\begin{array}{c|c}
\underbrace{M}_{d\textrm{ columns}} & \underbrace{\begin{array}{ccc}
\frac{1-\boldsymbol{\lambda}}{n-d} & \ldots & \frac{1-\boldsymbol{\lambda}}{n-d}\end{array}}_{n-d\textrm{ columns}}\end{array}\right),
\]
where the last $n-d$ columns are all equal to each other, with their
$i$th entry equal to $\frac{1-\lambda_{i}}{n-d}$.

Now, $D$ is doubly stochastic. Indeed each entry is non-negative,
because $\lambda_{i}\leq1$ for all $i\in\left\{ 1,\ldots,n\right\} $
and $n\geq d$. Furthermore, 
\[
\sum_{i=1}^{n}D_{ij}=\begin{cases}
\sum_{i=1}^{n}M_{ij} & 1\leq j\leq d\\
\frac{n-\sum_{i=1}^{n}\lambda_{i}}{n-d} & d+1\leq j\leq n
\end{cases}=1
\]
by eq.~\eqref{eq:sum column M}, and because $\sum_{i=1}^{n}\lambda_{i}=d$
(by lemma~\ref{lem:characterisation pure tests}). Finally 
\[
\sum_{j=1}^{n}D_{ij}=\sum_{j=1}^{d}M_{ij}+\sum_{j=d+1}^{n}\frac{1-\lambda_{i}}{n-d}=1,
\]
having used eq.~\eqref{eq:sum row M}. Clearly now we have $q_{a,i}=\sum_{j=1}^{n}D_{ij}\widetilde{p}_{j}$,
because, by construction of $\widetilde{\mathbf{p}}$ and $D$, 
\[
\mathbf{q}_{\boldsymbol{a}}=\left(\begin{array}{c|c}
M & \begin{array}{ccc}
\frac{1-\boldsymbol{\lambda}}{n-d} & \ldots & \frac{1-\boldsymbol{\lambda}}{n-d}\end{array}\end{array}\right)\left(\begin{array}{c}
\mathbf{p}\\
\hline \mathbf{0}
\end{array}\right).
\]
Therefore $\mathbf{q}_{\boldsymbol{a}}\preceq\widetilde{\mathbf{p}}$.
\end{proof}

\section{Mixedness monotones\label{sec:Mixedness-monotones}}

In the previous section we saw that the majorisation criterion determines
whether a state is more resourceful than another in the unital resource
theory, and it is a necessary condition in the other two. To be more
quantitative, let us introduce monotones (cf.\ section~\ref{sec:Resource-monotones}).
We noted that pure states are the most valuable states in all three
resource theories (see subsection~\ref{subsec:Inclusion-relations}),
so it is natural to call the three resource theories ``resource theories
of purity''. A \emph{purity monotone under the free operations $F$}
for system $\mathrm{A}$ is a function $P:\mathsf{St}_{1}\left(\mathrm{A}\right)\to\mathbb{R}$
satisfying the condition $P\left(\rho\right)\ge P\left(\sigma\right)$
if $\rho\succeq_{F}\sigma$ \cite{Chiribella-Scandolo-entanglement}.

In sharp theories with purification, thanks to theorem~\ref{thm:unital-majorisation},
unital purity monotones have a complete mathematical characterisation
in terms of Schur-convex functions.
\begin{defn}
A function $f:\mathbb{R}^{d}\to\mathbb{R}$ is called \emph{Schur-convex}
if $\mathbf{x}\preceq\mathbf{y}$ implies $f\left(\mathbf{x}\right)\leq f\left(\mathbf{y}\right)$.
\end{defn}

These functions assign a greater real number to more ordered vectors.
The following proposition is therefore quite natural.
\begin{prop}
\label{prop:unital monotones}A function on the state space $P:\mathsf{St}_{1}\left(\mathrm{A}\right)\to\mathbb{R}$
is a unital purity monotone if and only if $P\left(\rho\right)=f\left(\mathbf{p}\right)$,
where $\mathbf{p}$ is the spectrum of $\rho$ and $f:\mathbb{R}^{d}\to\mathbb{R}$
is a Schur-convex function.
\end{prop}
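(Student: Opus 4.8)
The plan is to prove the two implications separately, exploiting the equivalence (theorem~\ref{thm:unital-majorisation}) between $\rho\succeq_{\mathrm{unital}}\sigma$ and $\mathbf{p}\succeq\mathbf{q}$ for the spectra. The first thing to observe is that for this equivalence to be used cleanly, spectra must be compared as vectors of a fixed length; by the remarks following theorem~\ref{thm:uniqueness diago}, every state of a $d$-dimensional system has a spectrum that can be taken to be a vector $\mathbf{p}\in\mathbb{R}^{d}$ (padding with zeros if necessary), and this vector is uniquely determined by the state by theorem~\ref{thm:uniqueness diago}. So the map $\rho\mapsto\mathbf{p}$ is well-defined on $\mathsf{St}_{1}\left(\mathrm{A}\right)$.

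For the ``if'' direction, suppose $P\left(\rho\right)=f\left(\mathbf{p}\right)$ with $f$ Schur-convex. Take $\rho,\sigma$ with $\rho\succeq_{\mathrm{unital}}\sigma$; by theorem~\ref{thm:unital-majorisation} this gives $\mathbf{p}\succeq\mathbf{q}$, i.e.\ $\mathbf{q}\preceq\mathbf{p}$, and Schur-convexity of $f$ immediately yields $f\left(\mathbf{q}\right)\le f\left(\mathbf{p}\right)$, that is $P\left(\sigma\right)\le P\left(\rho\right)$. Hence $P$ is a unital purity monotone. This direction is essentially immediate once the diagonalisation/uniqueness machinery is in place.

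For the ``only if'' direction, suppose $P$ is a unital purity monotone. The key step is to show that $P\left(\rho\right)$ depends on $\rho$ only through its spectrum $\mathbf{p}$; once this is established we may \emph{define} $f\left(\mathbf{p}\right):=P\left(\rho\right)$ for any (hence every) $\rho$ with spectrum $\mathbf{p}$, and then check $f$ is Schur-convex. To show well-definedness, take two states $\rho,\sigma$ with the same spectrum $\mathbf{p}=\mathbf{q}$. Then trivially $\mathbf{p}\succeq\mathbf{q}$ and $\mathbf{q}\succeq\mathbf{p}$, so by theorem~\ref{thm:unital-majorisation} there are unital channels in both directions, whence $\rho\succeq_{\mathrm{unital}}\sigma$ and $\sigma\succeq_{\mathrm{unital}}\rho$; monotonicity of $P$ forces $P\left(\rho\right)=P\left(\sigma\right)$. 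Thus $f$ is well-defined on the set of spectra, i.e.\ on probability vectors in $\mathbb{R}^{d}$. To extend $f$ to a genuine function $\mathbb{R}^{d}\to\mathbb{R}$ one can, for instance, restrict attention to the probability simplex (where all the content lives) and extend arbitrarily outside it, or better, note that Schur-convexity only needs to be checked between comparable vectors, which in the majorisation order forces equal sums — so only the normalisation-$1$ slice matters and any extension preserving Schur-convexity there works (e.g.\ extend by the same formula on each constant-sum slice via a suitable rescaling, or simply set $f$ to be, say, $+\infty$ off the simplex, which does not affect Schur-convexity among comparable vectors of sum $1$). Finally, Schur-convexity of $f$ on probability vectors: if $\mathbf{q}\preceq\mathbf{p}$ with both probability vectors, realise them as spectra of states $\sigma,\rho$ (possible since any probability vector with at most $d$ nonzero entries is the spectrum of some diagonalisable state, e.g.\ $\rho=\sum_i p_i\alpha_i$ for a pure maximal set $\left\{\alpha_i\right\}_{i=1}^{d}$, using theorem~\ref{thm:diago}); then $\mathbf{p}\succeq\mathbf{q}$ gives $\rho\succeq_{\mathrm{unital}}\sigma$ by theorem~\ref{thm:unital-majorisation}, so $f\left(\mathbf{p}\right)=P\left(\rho\right)\ge P\left(\sigma\right)=f\left(\mathbf{q}\right)$, which is exactly Schur-convexity.

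The main obstacle is the slightly delicate bookkeeping around the domain of $f$: the statement asserts $f:\mathbb{R}^{d}\to\mathbb{R}$ but the monotone $P$ only naturally determines $f$ on probability vectors of length $d$, and Schur-convexity is a statement about all of $\mathbb{R}^{d}$. The resolution — that comparable vectors under majorisation have equal component sums, so the extension off the probability simplex is immaterial to the Schur-convexity requirement — is standard but should be stated carefully to avoid the appearance of a gap. Everything else reduces to invoking theorems~\ref{thm:diago}, \ref{thm:uniqueness diago}, and \ref{thm:unital-majorisation}.
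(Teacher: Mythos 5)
Your proof is correct and follows essentially the same route as the paper's: both directions reduce to theorem~\ref{thm:unital-majorisation}, with the ``only if'' direction first establishing that $P$ depends only on the spectrum (via mutual unital convertibility of states with equal spectra) and then reading off Schur-convexity from the majorisation criterion. Your additional bookkeeping about extending $f$ from the probability simplex to all of $\mathbb{R}^{d}$ is a point the paper glosses over, and your resolution (comparable vectors under majorisation have equal component sums, so the extension is immaterial) is the standard and correct one.
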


\begin{proof}
Theorem~\ref{thm:unital-majorisation} shows that the convertibility
of states under unital channels is fully captured by their eigenvalues.
Consequently, a unital monotone will be a function only of the eigenvalues
of a state: there exists a function $f:\mathbb{R}^{d}\to\mathbb{R}$
such that $P\left(\rho\right)=f\left(\mathbf{p}\right)$, for every
normalised state $\rho$. Now, suppose that $\mathbf{p}$ and $\mathbf{q}$
are two probability distributions satisfying $\mathbf{p}\succeq\mathbf{q}$.
Then, theorem~\ref{thm:unital-majorisation} implies that there is
a unital channel transforming the state $\rho=\sum_{i=1}^{d}p_{i}\alpha_{i}$
into the state $\sigma=\sum_{i=1}^{d}q_{i}\alpha_{i}$, for any pure
maximal set $\left\{ \alpha_{i}\right\} _{i=1}^{d}$. As a result,
we obtain the relation 
\[
f\left(\mathbf{p}\right)=P\left(\rho\right)\geq P\left(\sigma\right)=f\left(\mathbf{q}\right).
\]
This means that $f$ is Schur-convex.

Conversely, given a Schur-convex function $f$ one can define a function
$P_{f}$ on the state space, as $P_{f}\left(\rho\right):=f\left(\mathbf{p}\right)$,
$\mathbf{p}$ being the spectrum of $\rho$. By theorem~\ref{thm:unital-majorisation},
if $\rho\succeq_{\mathrm{unital}}\sigma$, then $\mathbf{p}\succeq\mathbf{q}$,
where $\mathbf{p}$ and $\mathbf{q}$ are the spectra of the two states,
respectively. Therefore
\[
P_{f}\left(\rho\right)=f\left(\mathbf{p}\right)\geq f\left(\mathbf{q}\right)=P_{f}\left(\sigma\right),
\]
where we have used the fact that $f$ is Schur-convex. This shows
that $P_{f}$ is a unital purity monotone.
\end{proof}
Since majorisation is a necessary condition for convertibility in
all the three resource theories, $P_{f}\left(\rho\right)=f\left(\mathbf{p}\right)$,
with $\mathbf{p}$ is the spectrum of $\rho$, and $f$ Schur-convex,
is a purity monotone in \emph{all} the three resource theories. Hence
these functions will simply be called ``purity monotones'', without
the need to specify the resource theory. The fact that we have not
proved the sufficiency of majorisation for RaRe channels means that
there might exist RaRe purity monotones that are not generated by
Schur-convex functions.
\begin{rem}
\label{rem:remark monotone invariant }Note that the purity monotones
arising from Schur-convex functions are invariant under reversible
channels: $P_{f}\left(\rho\right)=P_{f}\left(\mathcal{U}\rho\right)$.
This is because $\rho$ and $\mathcal{U}\rho$ have the same spectrum.
Indeed if $\rho=\sum_{i=1}^{d}p_{i}\alpha_{i}$ is a diagonalisation
of $\rho$, then $\mathcal{U}\rho=\sum_{i=1}^{d}p_{i}\mathcal{U}\alpha_{i}$
is a diagonalisation of $\mathcal{U}\rho$, because the pure states
$\mathcal{U}\alpha_{i}$ are distinguished by the observation-test
$\left\{ \alpha_{i}^{\dagger}\mathcal{U}^{-1}\right\} _{i=1}^{d}$.
Then obviously $\rho$ and $\mathcal{U}\rho$ have the same spectrum.
\end{rem}

In thermodynamics the relevant quantities for isolated systems are
entropies, which are often presented popularly as ``measures of disorder''.
From this perspective, entropies are the opposite of purity monotones,
which instead assign a higher value to ordered states. To comply with
the standard thermodynamic treatment, we define \emph{mixedness monotones},
and we will stick to them in the following presentation.
\begin{defn}
A \emph{mixedness monotone} for system $\mathrm{A}$ is a function
$M:\mathsf{St}_{1}\left(\mathrm{A}\right)\rightarrow\mathbb{R}$ such
that $-M$ is a purity monotone.
\end{defn}

In other words, if $\rho\succeq_{F}\sigma$, we have $M\left(\rho\right)\leq M\left(\sigma\right)$.
In the same spirit, one defines Schur-concave functions.
\begin{defn}
A function $f:\mathbb{R}^{d}\to\mathbb{R}$ is called \emph{Schur-concave}
if $-f$ is Schur-convex.
\end{defn}

This means that for a Schur-concave function, if $\mathbf{x}\preceq\mathbf{y}$
then $f\left(\mathbf{x}\right)\geq f\left(\mathbf{y}\right)$.

Viewing proposition~\ref{prop:unital monotones} in the light of
mixedness monotones, given a Schur-concave function $f$, one can
generate a mixedness monotone $M_{f}\left(\rho\right):=f\left(\mathbf{p}\right)$,
where $\rho$ is a state, and $\mathbf{p}$ its spectrum. In the case
of the unital resource theory, mixedness monotones are \emph{all}
generated by Schur-concave functions.

A slightly more restrictive notion is that of \emph{generalised entropy}.
\begin{defn}
For every system $\mathrm{A}$, let $M:\mathsf{St}_{1}\left(\mathrm{A}\right)\to\mathbb{R}$
be a mixedness monotone. We say that $M$ is a \emph{generalised entropy}
if it is additive, that is 
\[
M\left(\rho_{\mathrm{A}}\otimes\sigma_{\mathrm{B}}\right)=M\left(\rho_{\mathrm{A}}\right)+M\left(\sigma_{\mathrm{B}}\right)
\]
for all $\rho_{\mathrm{A}}\in\mathsf{St}_{1}\left(\mathrm{A}\right)$,
and all $\sigma_{\mathrm{B}}\in\mathsf{St}_{1}\left(\mathrm{B}\right)$.
\end{defn}

By proposition~\ref{prop:unital monotones}, some generalised entropies
can be obtained from particular Schur-concave functions.
\begin{cor}
\label{cor:generalised entropies}Let $f:\mathbb{R}^{d}\rightarrow\mathbb{R}$
be a Schur-concave function for all $d$, satisfying the additivity
property $f\left({\bf p}\otimes{\bf q}\right)=f\left({\bf p}\right)+f\left({\bf q}\right)$,
where ${\bf p}\otimes{\bf q}$ denotes the Kronecker product. Then,
the corresponding mixedness monotone $M_{f}\left(\rho\right)=f\left(\mathbf{p}\right)$,
where $\mathbf{p}$ is the spectrum of $\rho$, is a generalised entropy.
\end{cor}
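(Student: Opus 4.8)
\textbf{Proof proposal for Corollary \ref{cor:generalised entropies}.}

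The plan is to verify the two defining conditions of a generalised entropy for $M_f$: that it is a mixedness monotone, and that it is additive. The first is immediate from Proposition \ref{prop:unital monotones} (or rather its mixedness-monotone counterpart): since $f$ is Schur-concave on $\mathbb{R}^d$ for every $d$, the function $M_f(\rho) := f(\mathbf{p})$, with $\mathbf{p}$ the spectrum of $\rho$, is a mixedness monotone. Indeed, if $\rho \succeq_F \sigma$ for any of the three sets of free operations $F$, then in particular $\rho \succeq_{\mathrm{unital}} \sigma$, so by Theorem \ref{thm:unital-majorisation} the spectra satisfy $\mathbf{p} \succeq \mathbf{q}$, and Schur-concavity gives $f(\mathbf{p}) \le f(\mathbf{q})$, i.e.\ $M_f(\rho) \le M_f(\sigma)$. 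So it only remains to establish additivity.

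The key step is to relate the spectrum of a product state $\rho_{\mathrm{A}} \otimes \sigma_{\mathrm{B}}$ to the spectra of its factors. First I would diagonalise $\rho_{\mathrm{A}} = \sum_{i=1}^{d_{\mathrm{A}}} p_i \alpha_i$ and $\sigma_{\mathrm{B}} = \sum_{j=1}^{d_{\mathrm{B}}} q_j \beta_j$ using Theorem \ref{thm:diago}, where $\{\alpha_i\}_{i=1}^{d_{\mathrm{A}}}$ and $\{\beta_j\}_{j=1}^{d_{\mathrm{B}}}$ are pure maximal sets. Then, by Purity Preservation, each $\alpha_i \otimes \beta_j$ is a pure state, and by Proposition \ref{prop:information locality} (information locality) the collection $\{\alpha_i \otimes \beta_j\}_{i,j}$ is a pure maximal set for $\mathrm{AB}$. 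Hence
\[
\rho_{\mathrm{A}} \otimes \sigma_{\mathrm{B}} = \sum_{i=1}^{d_{\mathrm{A}}} \sum_{j=1}^{d_{\mathrm{B}}} p_i q_j\, \alpha_i \otimes \beta_j
\]
is a diagonalisation of $\rho_{\mathrm{A}} \otimes \sigma_{\mathrm{B}}$ into perfectly distinguishable pure states, so its spectrum is the Kronecker product $\mathbf{p} \otimes \mathbf{q}$. By the uniqueness of the spectrum (Theorem \ref{thm:uniqueness diago}), this is the spectrum, up to reordering, on which $f$ must be evaluated; since $f$ is Schur-concave it is in particular symmetric (invariant under permutations of its argument's entries), so the value $f(\mathbf{p} \otimes \mathbf{q})$ is well-defined regardless of ordering.

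Putting these together,
\[
M_f(\rho_{\mathrm{A}} \otimes \sigma_{\mathrm{B}}) = f(\mathbf{p} \otimes \mathbf{q}) = f(\mathbf{p}) + f(\mathbf{q}) = M_f(\rho_{\mathrm{A}}) + M_f(\sigma_{\mathrm{B}}),
\]
where the middle equality is exactly the hypothesised additivity property of $f$ with respect to the Kronecker product. This proves additivity and completes the proof. The only genuinely non-routine ingredient is the identification of the spectrum of the product state as the Kronecker product of the factors' spectra; everything else is bookkeeping. That identification itself is not hard given the machinery already in place — Purity Preservation, information locality (Proposition \ref{prop:information locality}), and uniqueness of diagonalisation (Theorem \ref{thm:uniqueness diago}) — so I would expect the main subtlety to be merely making sure that the symmetry of Schur-concave functions is invoked to handle the ordering of eigenvalues cleanly, rather than any substantive obstacle.
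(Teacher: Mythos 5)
Your proposal is correct and follows exactly the route the paper intends (the paper states this corollary without a written proof, as an immediate consequence of Proposition~\ref{prop:unital monotones}): the monotonicity comes from Schur-concavity via Theorem~\ref{thm:unital-majorisation}, and additivity comes from identifying the spectrum of $\rho_{\mathrm{A}}\otimes\sigma_{\mathrm{B}}$ with $\mathbf{p}\otimes\mathbf{q}$ through Purity Preservation, information locality, and uniqueness of the diagonalisation. Your remark that Schur-concave functions are automatically symmetric, so the ordering of the Kronecker-product entries is immaterial, correctly handles the only point of care.
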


Again, in the case of the unital resource theory, all generalised
entropies are of the form of corollary~\ref{cor:generalised entropies}.
Let us see some examples.
\begin{example}
An important example of additive Schur-concave functions are Rényi
entropies \cite{Renyi} 
\[
H_{\alpha}\left({\bf p}\right)=\frac{1}{1-\alpha}\log_{a}\left(\sum_{i=1}^{d}p_{i}^{\alpha}\right),
\]
with $a>1$ and $\alpha\geq0$, where $\mathbf{p}$ is a vector of
probabilities\footnote{This means a vector representing a probability distribution.}.
If we take the limit $\alpha\rightarrow1$ we recover \emph{Shannon-von
Neumann entropy} \cite{vonNeumann,Shannon}:
\[
H\left({\bf p}\right)=-\sum_{i=1}^{d}p_{i}\log_{a}p_{i}=\lim_{\alpha\to1}H_{\alpha}\left({\bf p}\right).
\]
Particularly important cases are when $\alpha=0$, and $\alpha\rightarrow+\infty$.
In this cases,
\[
H_{0}\left({\bf p}\right)=\log_{a}\left|\mathrm{supp}\:\mathbf{p}\right|,
\]
where $\left|\mathrm{supp}\:\mathbf{p}\right|$ denotes the number
of the non-vanishing entries of $\mathbf{p}$. Moreover,
\[
H_{\infty}\left(\mathbf{p}\right):=\lim_{\alpha\rightarrow+\infty}H_{\alpha}\left({\bf p}\right)=-\log_{a}p_{\mathrm{max}},
\]
where $p_{\mathrm{max}}$ denotes the maximum entry of $\mathbf{p}$.
Rényi entropies are decreasing in $\alpha$, so
\[
H_{\infty}\left(\mathbf{p}\right)\leq H_{\alpha}\left({\bf p}\right)\leq H_{0}\left({\bf p}\right)
\]
for $\alpha\geq0$. For this reason, $H_{\infty}$ is also known as
the \emph{min-entropy}, and $H_{0}$ as the \emph{max-entropy}.

Now we can define the generalised Rényi entropies as $S_{\alpha}\left(\rho\right):=H_{\alpha}\left({\bf p}\right)$,
for $\alpha\in\left[0,+\infty\right]$, where $\mathbf{p}$ is the
spectrum of $\rho$. In particular $S\left(\rho\right)=H\left(\mathbf{p}\right)$
is the generalised Shannon-von Neumann entropy. Note that one has
the obvious bounds 
\[
0\le S_{\alpha}\left(\rho\right)\le\log_{a}d,
\]
for every state $\rho$, and every $\alpha\in\left[0,+\infty\right]$,
where $d$ is the dimension of the system. The lower bound is achieved
by any pure state, and the upper bound by the microcanonical state
$\chi$.
\end{example}

\subsection{Preparation and measurement monotones}

In some cases, it is possible to connect generalised entropies, defined
on the spectra of states, to measures of the minimum randomness we
can extract from a state by pure measurements, or of the minimum randomness
needed to prepare that state. To this end, it is useful to identify
a particular class of Schur-concave functions, which we call \emph{reducible}.
\begin{defn}
A Schur-concave function $f$ is called \emph{reducible} if for every
vector $\mathbf{x}\in\mathbb{R}^{d}$ one has 
\[
f\left(\mathbf{x}\oplus\mathbf{0}\right)=f\left(\mathbf{x}\right),
\]
where $\mathbf{0}$ is a null vector with any number of components.
\end{defn}

Here we have used the same notation as in eq.~\eqref{eq:direct sum vectors}
to denote appending some zero entries to a vector. In words, for a
reducible Schur-concave function the zero entries of a vector do not
matter, it only sees the non-vanishing ones. Examples of reducible
Schur-concave functions are Rényi entropies (and hence Shannon-von
Neumann entropy). Not all Schur-concave functions are reducible though,
as shown in the following example.
\begin{example}
Given a vector of probabilities $\mathbf{p}$ of dimension $d$, consider
the function $V\left(\mathbf{p}\right)=\frac{1}{d}\left(1-\sum_{i=1}^{d}p_{i}^{2}\right)$,
arising from the variance of $\mathbf{p}$. $V$ is Schur-concave,
but it is \emph{not} reducible. Indeed, consider the vectors $\mathbf{p}=\left(\begin{array}{cc}
\frac{1}{2} & \frac{1}{2}\end{array}\right)^{T}$, and $\widetilde{\mathbf{p}}=\left(\begin{array}{ccc}
\frac{1}{2} & \frac{1}{2} & 0\end{array}\right)^{T}$; we have $V\left(\mathbf{p}\right)=\frac{1}{4}$, but $V\left(\widetilde{\mathbf{p}}\right)=\frac{1}{6}$,
whence $V$ is not reducible.
\end{example}

It is worth noting that the marginals of a pure bipartite state have
the same value for any reducible Schur-concave function. So if $\Psi$
is a pure state of $\mathrm{AB}$ and $\rho_{\mathrm{A}}$ and $\rho_{\mathrm{B}}$
its marginals on $\mathrm{A}$ and $\mathrm{B}$ respectively, one
has
\[
M_{f}\left(\rho_{\mathrm{A}}\right)=M_{f}\left(\rho_{\mathrm{B}}\right),
\]
for every reducible Schur-concave function $f$. This easily follows
from the generalised Schmidt decomposition (theorem~\ref{thm:schmidt}),
which ensures that the marginals of a pure bipartite state have the
same non-vanishing eigenvalues. These non-vanishing eigenvalues are
the only ones seen by a reducible Schur-concave function.

In every sharp theory with purification, the mixedness monotones arising
from reducible Schur-concave functions have a nice characterisation
in terms of optimal measurements, or, dually, in terms of optimal
ensemble decompositions. Let us start from two definitions.
\begin{defn}
Given a reducible Schur-concave function $f$, the \emph{measurement
monotone} $M_{f}^{{\rm meas}}$ of a state $\rho\in\mathsf{St}_{1}\left(\mathrm{A}\right)$
is defined as
\[
M_{f}^{{\rm meas}}\left(\rho\right):=\inf_{\boldsymbol{a}}f\left({\bf q}\right),
\]
where the infimum is taken over all pure observation-tests $\boldsymbol{a}:=\left\lbrace a_{i}\right\rbrace $
of system $\mathrm{A}$, and $q_{i}:=\left(a_{i}\middle|\rho\right)$. 
\end{defn}

\begin{defn}
The \emph{preparation monotone} $M_{f}^{{\rm prep}}$ is defined as
\[
M_{f}^{{\rm prep}}\left(\rho\right):=\inf_{\sum_{i}\lambda_{i}\psi_{i}=\rho}f\left(\boldsymbol{\lambda}\right),
\]
where the infimum is over all convex decompositions $\sum_{i}\lambda_{i}\psi_{i}$
of the state $\rho$ in terms of pure states $\psi_{i}$.
\end{defn}

In words, the measurement monotone $M_{f}^{{\rm meas}}$ is the smallest
amount of mixedness (as measured by the function $f$) present in
the probability distributions generated by pure observation-tests
on $\rho$. Dually, the preparation monotone $M_{f}^{{\rm prep}}$
is the smallest amount of mixedness necessary to prepare $\rho$ as
an ensemble of pure states. Why do we have to take the infimum? To
understand it, consider the following example for a preparation monotone.
\begin{example}
Consider a pure state $\psi$ and a preparation monotone $M_{f}^{{\rm prep}}$.
Clearly, convex decompositions of $\psi$ into pure states are all
trivial, i.e.\ of the form $\sum_{i}\lambda_{i}\psi$. Comparing
the probability distribution $\left\{ \lambda_{i}\right\} $ with
the extremal one $\boldsymbol{\delta}$, we have
\[
\boldsymbol{\lambda}\preceq\left(\begin{array}{c}
1\\
0\\
\vdots\\
0
\end{array}\right)=\boldsymbol{\delta}.
\]
Therefore $f\left(\boldsymbol{\lambda}\right)\geq f\left(\boldsymbol{\delta}\right)$;
however this is a spurious effect, because the pure states involved
in a convex realisation of $\psi$ are exactly the same, and all equal
to $\psi$! The presence of the infimum cancels these spurious effects,
and we have $M_{f}^{{\rm prep}}\left(\psi\right)=f\left(\boldsymbol{\delta}\right)$.
\end{example}

Let us clarify why we need $f$ to be reducible with an example from
quantum theory.
\begin{example}
Consider a state $\rho=p\ket{0}\bra{0}+\left(1-p\right)\ket{1}\bra{1}$
of a qutrit, where $\left\{ \ket{0},\ket{1},\ket{2}\right\} $ is
the computational basis, and $p\in\left(0,1\right)$. Clearly if we
use the spectral measurement $\left\{ \ket{0}\bra{0},\ket{1}\bra{1},\ket{2}\bra{2}\right\} $,
in the evaluation of the measurement monotone we obtain the vector
${\bf q}=\left(\begin{array}{ccc}
p & 1-p & 0\end{array}\right)^{T}$. Now take the pure POVM $\left\{ \ket{0}\bra{0},\ket{1}\bra{1},\frac{1}{2}\ket{2}\bra{2},\frac{1}{2}\ket{2}\bra{2}\right\} $,
which gives rise to the vector ${\bf q}'=\left(\begin{array}{cccc}
p & 1-p & 0 & 0\end{array}\right)^{T}$. Clearly there is no difference in the randomness one can extract
from the former and the latter measurement when performed on $\rho$.
However, if $f$ is non-reducible, there is no guarantee that $f\left(\mathbf{q}\right)=f\left(\mathbf{q}'\right)$.

Similarly, in the evaluation of the preparation monotone, the writing
$\rho=p\ket{0}\bra{0}+\left(1-p\right)\ket{1}\bra{1}$ gives rise
to $\boldsymbol{\lambda}=\left(\begin{array}{cc}
p & 1-p\end{array}\right)^{T}$. However, one can also write $\rho$ as $\rho=p\ket{0}\bra{0}+\left(1-p\right)\ket{1}\bra{1}+0\ket{2}\bra{2},$
giving rise to $\boldsymbol{\lambda}'=\left(\begin{array}{ccc}
p & 1-p & 0\end{array}\right)^{T}$. Again, there is no more randomness involved in the latter preparation
than in the former, so we should have $f\left(\boldsymbol{\lambda}\right)=f\left(\boldsymbol{\lambda}'\right)$.
\end{example}

The definitions of measurement and preparation monotones can be put
forward in any causal GPT, since they involve very primitive elements
of GPTs, such as the convex structure, and the fact that observation-tests
yield probabilities when performed on states. Specifically \cite{Entropy-Barnum,Entropy-Short,Entropy-Kimura}
defined preparation and measurement Shannon entropy in GPTs.

Despite the name we used, in general GPTs, or even in microcanonical
ones, $M_{f}^{{\rm meas}}$ and $M_{f}^{{\rm prep}}$ lack a clear
interpretation as mixedness monotones, unless further assumptions
are made on $f$ (or on the theory). For instance, taking $f$ concave
makes $M_{f}^{{\rm meas}}$ a RaRe mixedness monotone \cite{Chiribella-Scandolo-entanglement}.
However, in sharp theories with purification we find that $M_{f}^{{\rm meas}}$
and $M_{f}^{{\rm prep}}$ are actual mixedness monotones.
\begin{thm}
\label{thm:measurement =00003D preparation}In every sharp theory
with purification one has 
\[
M_{f}^{{\rm meas}}\left(\rho\right)=M_{f}^{{\rm prep}}\left(\rho\right)=M_{f}\left(\rho\right),
\]
 for every reducible Schur-concave function $f$ and for every state
$\rho$.
\end{thm}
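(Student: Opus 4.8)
The plan is to prove the chain of equalities by sandwiching $M_f^{\mathrm{meas}}(\rho)$ and $M_f^{\mathrm{prep}}(\rho)$ between $M_f(\rho)$ from above and below, exploiting the fact that the spectral diagonalisation of $\rho$ simultaneously realises the optimal measurement and the optimal preparation. Concretely, fix a diagonalisation $\rho=\sum_{i=1}^{d}p_i\alpha_i$ with spectrum $\mathbf p$, so that $M_f(\rho)=f(\mathbf p)$. The pure sharp measurement $\{\alpha_i^\dagger\}_{i=1}^d$ yields, by corollary~\ref{cor:computeegv}, the probability vector with entries $(\alpha_i^\dagger|\rho)=p_i$; hence this particular pure observation-test contributes the value $f(\mathbf p)$ to the infimum defining $M_f^{\mathrm{meas}}(\rho)$, giving $M_f^{\mathrm{meas}}(\rho)\le f(\mathbf p)=M_f(\rho)$. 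Dually, the very diagonalisation $\rho=\sum_i p_i\alpha_i$ is a convex decomposition of $\rho$ into pure states with weight vector $\mathbf p$, so $M_f^{\mathrm{prep}}(\rho)\le f(\mathbf p)=M_f(\rho)$. These are the easy halves.

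The two remaining inequalities $M_f^{\mathrm{meas}}(\rho)\ge M_f(\rho)$ and $M_f^{\mathrm{prep}}(\rho)\ge M_f(\rho)$ are where the real work is, and both rest on majorisation together with Schur-concavity and reducibility of $f$. For the measurement side: given an arbitrary pure observation-test $\boldsymbol a=\{a_i\}_{i=1}^n$ with outcome probabilities $q_i=(a_i|\rho)$, proposition~\ref{prop:majorisation measurement} gives $\mathbf q_{\boldsymbol a}\preceq\widetilde{\mathbf p}:=\mathbf p\oplus\mathbf 0$, where the padding has $n-d$ zero entries (recall $n\ge d$ by lemma~\ref{lem:characterisation pure tests}). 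Since $f$ is Schur-concave, $f(\mathbf q_{\boldsymbol a})\ge f(\widetilde{\mathbf p})$, and since $f$ is reducible, $f(\widetilde{\mathbf p})=f(\mathbf p)=M_f(\rho)$. Taking the infimum over all pure observation-tests $\boldsymbol a$ then yields $M_f^{\mathrm{meas}}(\rho)\ge M_f(\rho)$, completing that identity.

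For the preparation side I would invoke corollary~\ref{cor:pure ensembles}: any convex decomposition $\rho=\sum_{i=1}^n\lambda_i\psi_i$ into pure states (with all $\lambda_i>0$) can be induced by a pure sharp measurement $\{b_j\}_{j=1}^{d_{\mathrm B}}$ on the purifying system $\mathrm B$ of some purification $\Psi\in\mathsf{PurSt}_1(\mathrm{AB})$ of $\rho$, via $\lambda_i(\psi_i|_{\mathrm A})=(\Psi|(\,\cdot\,)\otimes b_i|\cdot)$ for $i\le n$, with the remaining effects giving zero. Applying the deterministic effect on $\mathrm A$ shows that $\boldsymbol\lambda$ (padded with zeros up to length $d_{\mathrm B}$) is precisely the outcome-probability vector of the pure observation-test $\{b_j\}$ performed on the marginal $\rho_{\mathrm B}=\mathrm{tr}_{\mathrm A}\Psi$. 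By the Schmidt decomposition (theorem~\ref{thm:schmidt}), $\rho_{\mathrm B}$ has the same non-vanishing spectrum $\mathbf p$ as $\rho$, so by proposition~\ref{prop:majorisation measurement} applied to $\rho_{\mathrm B}$ we get $\widetilde{\boldsymbol\lambda}\preceq\widetilde{\mathbf p}$; Schur-concavity and reducibility then give $f(\boldsymbol\lambda)=f(\widetilde{\boldsymbol\lambda})\ge f(\widetilde{\mathbf p})=f(\mathbf p)$. Taking the infimum over all pure convex decompositions yields $M_f^{\mathrm{prep}}(\rho)\ge M_f(\rho)$, and the theorem follows. The main obstacle I anticipate is the careful bookkeeping in this last paragraph — matching the padding dimensions between $\boldsymbol\lambda$, $\mathbf q$ and $\mathbf p$, and justifying that the reducibility of $f$ really does wash out all the differences in vector length introduced by padding with zeros and by the detour through $\rho_{\mathrm B}$; everything else is a direct appeal to results already established.
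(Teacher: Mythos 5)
Your proposal is correct and follows essentially the same route as the paper's proof: the easy upper bounds via the diagonalisation, the measurement lower bound via proposition~\ref{prop:majorisation measurement} with Schur-concavity and reducibility, and the preparation lower bound via corollary~\ref{cor:pure ensembles}, steering, and the Schmidt decomposition. The only cosmetic difference is that in the last step you apply the majorisation proposition to $\rho_{\mathrm{B}}$ directly, whereas the paper routes the same fact through the already-established identity $M_{f}^{\mathrm{meas}}\left(\rho_{\mathrm{B}}\right)=M_{f}\left(\rho_{\mathrm{B}}\right)$; these are the same argument.
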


\begin{proof}
Let us prove that $M_{f}^{{\rm meas}}$ coincides with $M_{f}$. Let
$\rho=\sum_{i=1}^{d}p_{i}\alpha_{i}$ be a diagonalisation of $\rho\in\mathsf{St}_{1}\left(\mathrm{A}\right)$.
If we take the pure sharp measurement $\left\lbrace \alpha_{i}^{\dagger}\right\rbrace _{i=1}^{d}$,
we have $\left(\alpha_{i}^{\dagger}\middle|\rho\right)=p_{i}$. Hence,
\[
M_{f}^{{\rm meas}}\left(\rho\right)\le f\left({\bf p}\right)=M_{f}\left(\rho\right).
\]
To prove the converse inequality, recall proposition~\ref{prop:majorisation measurement}:
for every pure observation-test $\left\lbrace a_{i}\right\rbrace $,
one has $\mathbf{q}\preceq\widetilde{\mathbf{p}}$, where $\mathbf{q}$
is the vector of probabilities $q_{i}=\left(a_{i}\middle|\rho\right)$
and $\widetilde{\mathbf{p}}$ is the vector of the eigenvalues of
$\rho$ (with additional zeros appended, if needed). Since $f$ is
Schur-concave, we have $f\left({\bf q}\right)\ge f\left(\widetilde{\mathbf{p}}\right)$
and, taking the infimum over all pure measurements 
\[
M_{f}^{{\rm meas}}\left(\rho\right)\ge f\left(\widetilde{\mathbf{p}}\right)=f\left(\mathbf{p}\right)=M_{f}\left(\rho\right),
\]
where we have used the fact that $f$ is reducible. Summarising, we
have obtained the equality $M_{f}^{{\rm meas}}=M_{f}$. 

We now prove the equality $M_{f}^{{\rm prep}}=M_{f}$. By definition,
we have 
\[
M_{f}^{{\rm prep}}\left(\rho\right)\le f\left(\mathbf{p}\right)=M_{f}\left(\rho\right),
\]
because the diagonalisation is a special case of pure-state decomposition.
The converse inequality follows from Pure Steering. By corollary~\ref{cor:pure ensembles},
every convex decomposition of $\rho=:\rho_{\mathrm{A}}$ into pure
states can be induced by a pure sharp measurement applied to a suitable
purification of $\rho$ (which in general depends on the particular
decomposition considered). Now take the decomposition $\rho_{\mathrm{A}}=\sum_{i}\lambda_{i}\alpha_{i}$
of $\rho_{\mathrm{A}}$ into pure states, and let $\Psi\in\mathsf{PurSt}_{1}\left(\mathrm{AB}\right)$
be an associated purification as per corollary~\ref{cor:pure ensembles}.
Then there is a \emph{pure} sharp measurement $\left\{ b_{i}\right\} $
on system $\mathrm{B}$ that will induce the states\footnote{Here we are allowing some of the $\lambda_{i}$'s to be zero; this
is why we are taking the same index $i$ for the effects in the pure
sharp measurement $\left\{ b_{i}\right\} $, unlike in corollary~\ref{cor:pure ensembles}.} $\left\{ \lambda_{i}\alpha_{i}\right\} $:\[ \begin{aligned}\Qcircuit @C=1em @R=.7em @!R { & \multiprepareC{1}{\Psi} & \qw \poloFantasmaCn{\rA} & \qw \\ & \pureghost{\Psi} & \qw \poloFantasmaCn{\rB} & \measureD{b_i}}\end{aligned}~=\lambda_{i}\!\!\!\!\begin{aligned}\Qcircuit @C=1em @R=.7em @!R { & \prepareC{\alpha_i} & \qw \poloFantasmaCn{\rA} & \qw }\end{aligned}~. \]Discarding
system $\mathrm{A}$ on both sides we obtain 
\[
\left(b_{i}\middle|\rho_{\mathrm{B}}\right)=\lambda_{i},
\]
where $\rho_{\mathrm{B}}$ is the marginal state on system $\mathrm{B}$.
In other words, $\boldsymbol{\lambda}$ is the vector of the outcome
probabilities for the pure observation-test $\left\{ b_{i}\right\} $.
By definition of measurement monotone, we must have 
\begin{equation}
f\left(\boldsymbol{\lambda}\right)\ge M_{f}^{{\rm meas}}\left(\rho_{\mathrm{B}}\right)=M_{f}\left(\rho_{\mathrm{B}}\right).\label{eq:first step preparation}
\end{equation}
Since $f$ is reducible, for all the purifying systems used to induce
pure-state decompositions of $\rho$, $M_{f}\left(\rho_{\mathrm{B}}\right)=M_{f}\left(\rho_{\mathrm{A}}\right)=M_{f}\left(\rho\right)$.
Now, taking the infimum over all pure-state decompositions of $\rho_{\mathrm{A}}=\rho$
in eq.~\eqref{eq:first step preparation}, we obtain the desired
inequality
\[
M_{f}^{{\rm prep}}\left(\rho\right)\geq M_{f}\left(\rho\right).
\]
\end{proof}
This result, in particular the equality between the measurement and
preparation max-entropies $S_{0}$, was linked, in the presence of
Strong Symmetry \cite{Muller-self-duality,Barnum-interference}, to
the absence of higher-order interference, i.e.\ to the lack of irreducible
behaviour in the interference pattern obtained from three (or more)
slits \cite{Colleagues}. Therefore, this points out that sharp theories
with purification and Strong Symmetry (cf.\ also subsection~\ref{subsec:Unrestricted-reversibility})
do not have higher-order interference. In fact, it was proved that
Strong Symmetry is not necessary to prove the lack of higher-order
interference \cite{HOI}.

\section{Properties of Shannon-von Neumann entropy\label{sec:Properties-of-Shannon-von}}

The first property follows from theorem~\ref{thm:measurement =00003D preparation}:
since the measurement Shannon-von Neumann entropy was proved to be
concave in \cite{Entropy-Barnum,Entropy-Short,Entropy-Kimura}, it
follows that the ``spectral'' Shannon-von Neumann entropy we defined
here is concave as too. This means that
\[
S\left(\sum_{i}p_{i}\rho_{i}\right)\geq\sum_{i}p_{i}S\left(\rho_{i}\right),
\]
where $\left\{ p_{i}\right\} $ is a probability distribution.

As seen in subsection~\ref{subsec:Functional-calculus-on}, Shannon-von
Neumann entropy can be expressed as
\[
S\left(\rho\right)=\left(-\log_{a}\rho^{\dagger}\middle|\rho\right),
\]
meaning that $S\left(\rho\right)$ is the expectation value of the
\emph{surprisal observable} $-\log_{a}\rho^{\dagger}$. This alternative
formulation is useful because it suggests a generalisation of the
relative entropy to sharp theories with purification.
\begin{defn*}
Let $\rho$ and $\sigma$ be two normalised states. The \emph{relative
entropy} of $\rho$ to $\sigma$ is 
\[
S\left(\rho\parallel\sigma\right):=\left(\log_{a}\rho^{\dagger}-\log_{a}\sigma^{\dagger}\middle|\rho\right).
\]
\end{defn*}
The key property of the relative entropy is Klein's inequality (cf.\ also
\cite{Scandolo-thesis,Colleagues}.
\begin{lem}[Klein's inequality]
Let $\rho$ and $\sigma$ be two normalised states. One has $S\left(\rho\parallel\sigma\right)\geq0$
and $S\left(\rho\parallel\sigma\right)=0$ if and only if $\rho=\sigma$.
\end{lem}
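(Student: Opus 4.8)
The plan is to reduce Klein's inequality to the corresponding classical statement about probability vectors, using the diagonalisation machinery developed in this chapter. First I would diagonalise both states with respect to \emph{two different} pure maximal sets: write $\rho=\sum_{i=1}^{d}p_{i}\alpha_{i}$ and $\sigma=\sum_{j=1}^{d}q_{j}\beta_{j}$, where $\{\alpha_i\}_{i=1}^d$ and $\{\beta_j\}_{j=1}^d$ are pure maximal sets and $\mathbf{p}$, $\mathbf{q}$ are the (unique, by theorem~\ref{thm:uniqueness diago}) spectra. By the functional calculus on observables (subsection~\ref{subsec:Functional-calculus-on}), $\log_a\rho^{\dagger}=\sum_i(\log_a p_i)\alpha_i^{\dagger}$ and $\log_a\sigma^{\dagger}=\sum_j(\log_a q_j)\beta_j^{\dagger}$. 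Then, using corollary~\ref{cor:conjecture proved} (so that $(\alpha_i^{\dagger}|\alpha_k)=\delta_{ik}$) and expanding $\rho$ in its eigenbasis, I would compute
\[
S\left(\rho\parallel\sigma\right)=\sum_{i=1}^{d}p_{i}\log_a p_{i}-\sum_{i=1}^{d}\sum_{j=1}^{d}p_{i}\left(\beta_{j}^{\dagger}\middle|\alpha_{i}\right)\log_a q_{j}.
\]
Writing $T_{ji}:=\left(\beta_{j}^{\dagger}\middle|\alpha_{i}\right)$, this is a transition matrix in the sense of lemma~\ref{lem:doubly-stochastic}, hence doubly stochastic: $\sum_j T_{ji}=1$ and $\sum_i T_{ji}=1$.

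\textbf{Key steps.} With the doubly stochastic matrix $T$ in hand, the inequality becomes a purely classical fact. Define $q'_i:=\sum_j T_{ji}q_j$ (the pushforward of $\mathbf{q}$ through $T$); note $\sum_i q'_i=\sum_j q_j=1$. Using concavity of $\log_a$ together with $\sum_j T_{ji}=1$, Jensen's inequality gives $\sum_j T_{ji}\log_a q_j\le\log_a q'_i$, so
\[
S\left(\rho\parallel\sigma\right)\ge\sum_{i=1}^{d}p_i\log_a p_i-\sum_{i=1}^{d}p_i\log_a q'_i=\sum_{i=1}^{d}p_i\log_a\frac{p_i}{q'_i},
\]
which is the classical relative entropy $D(\mathbf{p}\parallel\mathbf{q}')\ge0$ by the standard classical Klein/Gibbs inequality (itself a one-line consequence of $\ln x\le x-1$). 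This settles $S\left(\rho\parallel\sigma\right)\ge0$. For the equality case, I would track equality through both inequalities: $D(\mathbf{p}\parallel\mathbf{q}')=0$ forces $\mathbf{p}=\mathbf{q}'$, and equality in Jensen's step (applied wherever $p_i>0$) forces $T_{ji}$ to be supported, for each such $i$, on indices $j$ with $q_j$ constant. One then argues that these constraints, combined with double stochasticity of $T$, force $T$ (restricted to the relevant blocks) to be a permutation matrix matching equal eigenvalues, so that $\sigma$ has the same spectrum as $\rho$ and is diagonal in the same eigenbasis up to degeneracy — i.e.\ $\rho=\sigma$. The converse ($\rho=\sigma\Rightarrow S(\rho\parallel\sigma)=0$) is immediate: take a single diagonalisation, so $T$ is the identity and every term cancels.

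\textbf{Main obstacle.} The positivity half is routine once the transition matrix appears; the delicate part is the equality analysis. The subtlety is handling degeneracy: when $\mathbf{p}$ or $\mathbf{q}$ has repeated entries, the eigenbases are not unique, so one cannot simply conclude ``same eigenvectors.'' I expect the cleanest route is to package the equality conditions using the reduced-spectrum form $\rho=\sum_k\lambda_k\Pi_k$, $\sigma=\sum_l\mu_l\Pi'_l$ of theorem~\ref{thm:uniqueness diago}, show that equality forces $s=s'$, $\lambda_k=\mu_k$, and $\Pi_k=\Pi'_k$ for all $k$ (exploiting that $\mathbf{p}=\mathbf{q}'$ already pins down the spectra and that the support constraints on $T$ identify the spectral projectors), and then invoke uniqueness of diagonalisation to conclude $\rho=\sigma$. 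An alternative, possibly shorter, argument for the equality case: if $S(\rho\parallel\sigma)=0$ one shows $\rho$ and $\sigma$ must majorise each other through $T$ in a way that, by the converse analysis already used for theorem~\ref{thm:unital-majorisation}, is only possible when $T$ acts as a permutation intertwining the two diagonalisations, again giving $\rho=\sigma$ by uniqueness.
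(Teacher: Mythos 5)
Your proposal is correct and follows essentially the same route as the paper's proof: diagonalise both states, express $S\left(\rho\parallel\sigma\right)$ via the doubly stochastic transition matrix $T_{ij}=\left(\alpha_i'^{\dagger}\middle|\alpha_j\right)$, use concavity of the logarithm to lower-bound it by the classical relative entropy $D\left(\mathbf{p}\parallel\mathbf{r}\right)\geq0$ with $r_j=\sum_i T_{ij}q_i$, and treat equality via strict concavity forcing $T$ to act as a permutation, whence $\rho=\sigma$ by the state-effect duality. The only difference is that you are somewhat more explicit about the degeneracy bookkeeping in the equality case, which the paper compresses into the assertion that $T$ is a permutation matrix.
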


\begin{proof}
The proof follows similar lines to the quantum case (see e.g.\ \cite{Nielsen-Chuang}).
Let $\rho=\sum_{i=1}^{d}p_{i}\alpha_{i}$ and $\sigma=\sum_{i=1}^{d}q_{i}\alpha'_{i}$
be diagonalisations of $\rho$ and $\sigma$. Now, let us compute
$S\left(\rho\parallel\sigma\right)$ explicitly. Assume that all the
eigenvalues of $\rho$ and $\sigma$ are non-zero, for the result
in the general case can be obtained by using the continuity of the
logarithm, and by taking limits suitably. Hence, 
\[
\left(\log\rho^{\dagger}\middle|\rho\right)=\sum_{i=1}^{d}p_{i}\log_{a}p_{i},
\]
and 
\[
\left(\log_{a}\sigma^{\dagger}\middle|\rho\right)=\sum_{i,j=1}^{d}\left(\alpha_{i}'^{\dagger}\middle|\alpha_{j}\right)p_{j}\log_{a}q_{i}=\sum_{i,j=1}^{d}T_{ij}p_{j}\log_{a}q_{i},
\]
where $T_{ij}:=\left(\alpha_{i}'^{\dagger}\middle|\alpha_{j}\right)$
are the entries of a doubly stochastic matrix (lemma~\ref{lem:doubly-stochastic}).
Then 
\begin{equation}
S\left(\rho\parallel\sigma\right)=\sum_{j=1}^{d}p_{j}\left(\log_{a}p_{j}-\sum_{i=1}^{d}T_{ij}\log_{a}q_{i}\right)\geq\sum_{j=1}^{d}p_{j}\left(\log_{a}p_{j}-\log_{a}r_{j}\right),\label{eq:KL bound}
\end{equation}
having used the concavity of the logarithm, and having set $r_{j}:=\sum_{i=1}^{d}T_{ij}q_{i}$.
The right-hand side of the last inequality is the classical relative
entropy $D\left(\mathbf{p}\parallel\mathbf{r}\right)$. Since $D\left(\mathbf{p}\parallel\mathbf{r}\right)$
is always non-negative, we obtain the bound 
\[
S\left(\rho\parallel\sigma\right)\ge D\left(\mathbf{p}\parallel\mathbf{r}\right)\geq0.
\]

The fact that $S\left(\rho\parallel\sigma\right)=0$ if $\rho=\sigma$
is obvious from the very definition of relative entropy. Let us prove
the converse. Since the classical relative entropy vanishes if and
only if $\mathbf{p}=\mathbf{r}$, the condition $S\left(\rho\parallel\sigma\right)=0$
implies $p_{j}=\sum_{i=1}^{d}T_{ij}q_{i}$, for all $i\in\left\{ 1,\ldots,d\right\} $.
Inserting this equality into eq.~\eqref{eq:KL bound} we obtain the
relation 
\[
0=\sum_{j=1}^{d}p_{j}\left[\log\left(\sum_{i=1}^{d}T_{ij}q_{i}\right)-\sum_{i=1}^{d}T_{ij}\log q_{i}\right].
\]
Since the logarithm is a strictly concave function, the equality implies
that $T$ is a permutation matrix. Hence, we have $T_{ij}=\delta_{i,\pi\left(j\right)}$,
for a suitable permutation $\pi$. Recalling the definition of $T$,
we obtain 
\[
T_{ij}=\left(\alpha_{i}'^{\dagger}\middle|\alpha_{j}\right)=\delta_{i,\pi\left(j\right)},
\]
which in turn implies $\alpha_{i}=\alpha'_{\pi\left(i\right)}$, for
all $i\in\left\{ 1,\ldots,d\right\} $ due to the state-effect duality.
In conclusion, we have obtained
\[
\rho=\sum_{i=1}^{d}p_{i}\alpha_{i}=\sum_{i=1}^{d}q_{\pi\left(i\right)}\alpha'_{\pi\left(i\right)}=\sigma.
\]
\end{proof}
Like in quantum theory, this version of Klein's inequality allows
one to prove a number of important properties. The easiest application
is the subadditivity of Shannon-von Neumann entropy.
\begin{prop}[Subadditivity]
Let $\rho_{\mathrm{AB}}$ be a bipartite state of system $\mathrm{AB}$,
and let $\rho_{\mathrm{A}}$ and $\rho_{\mathrm{B}}$ be its marginals
on system $\mathrm{A}$ and $\mathrm{B}$ respectively. Shannon-von
Neumann is \emph{subadditive}, namely 
\[
S\left(\rho_{\mathrm{AB}}\right)\leq S\left(\rho_{\mathrm{A}}\right)+S\left(\rho_{\mathrm{B}}\right).
\]
The equality holds if and only if $\rho_{\mathrm{AB}}$ is a product
state.
\end{prop}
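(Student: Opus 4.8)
The plan is to derive subadditivity from Klein's inequality applied to the pair $\rho_{\mathrm{AB}}$ and $\rho_{\mathrm{A}}\otimes\rho_{\mathrm{B}}$, exactly as in quantum theory. The only genuinely new ingredient I need is that the dagger map is multiplicative on product states, i.e.\ $\left(\rho_{\mathrm{A}}\otimes\rho_{\mathrm{B}}\right)^{\dagger}=\rho_{\mathrm{A}}^{\dagger}\otimes\rho_{\mathrm{B}}^{\dagger}$. This I would obtain by diagonalising $\rho_{\mathrm{A}}=\sum_{i}p_{i}\alpha_{i}$ and $\rho_{\mathrm{B}}=\sum_{j}q_{j}\beta_{j}$: by Purity Preservation the states $\alpha_{i}\otimes\beta_{j}$ are pure, and by information locality (proposition~\ref{prop:information locality}) they form a pure maximal set, so $\rho_{\mathrm{A}}\otimes\rho_{\mathrm{B}}=\sum_{i,j}p_{i}q_{j}\,\alpha_{i}\otimes\beta_{j}$ is a diagonalisation. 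Applying the dagger map, which is well-defined by theorem~\ref{thm:uniqueness diago} and proposition~\ref{prop:well-defined}, gives $\left(\rho_{\mathrm{A}}\otimes\rho_{\mathrm{B}}\right)^{\dagger}=\sum_{i,j}p_{i}q_{j}\,\alpha_{i}^{\dagger}\otimes\beta_{j}^{\dagger}=\rho_{\mathrm{A}}^{\dagger}\otimes\rho_{\mathrm{B}}^{\dagger}$.

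Next I would compute the surprisal observable of the product. Assuming first that all eigenvalues are non-zero (the general case following by continuity of the logarithm and a limiting argument, as in the proof of Klein's inequality), the functional calculus gives
\[
-\log_{a}\left(\rho_{\mathrm{A}}\otimes\rho_{\mathrm{B}}\right)^{\dagger}=-\sum_{i,j}\left(\log_{a}p_{i}+\log_{a}q_{j}\right)\alpha_{i}^{\dagger}\otimes\beta_{j}^{\dagger}=\left(-\log_{a}\rho_{\mathrm{A}}^{\dagger}\right)\otimes u_{\mathrm{B}}+u_{\mathrm{A}}\otimes\left(-\log_{a}\rho_{\mathrm{B}}^{\dagger}\right),
\]
where I have used $\sum_{j}\beta_{j}^{\dagger}=u_{\mathrm{B}}$ and $\sum_{i}\alpha_{i}^{\dagger}=u_{\mathrm{A}}$ (proposition~\ref{prop:pure test chi}). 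Pairing this observable with $\rho_{\mathrm{AB}}$ and using $\mathrm{tr}_{\mathrm{B}}\rho_{\mathrm{AB}}=\rho_{\mathrm{A}}$ and $\mathrm{tr}_{\mathrm{A}}\rho_{\mathrm{AB}}=\rho_{\mathrm{B}}$ yields $\left(-\log_{a}\left(\rho_{\mathrm{A}}\otimes\rho_{\mathrm{B}}\right)^{\dagger}\middle|\rho_{\mathrm{AB}}\right)=S\left(\rho_{\mathrm{A}}\right)+S\left(\rho_{\mathrm{B}}\right)$. Combining with $\left(-\log_{a}\rho_{\mathrm{AB}}^{\dagger}\middle|\rho_{\mathrm{AB}}\right)=S\left(\rho_{\mathrm{AB}}\right)$, I get
\[
S\left(\rho_{\mathrm{AB}}\parallel\rho_{\mathrm{A}}\otimes\rho_{\mathrm{B}}\right)=\left(\log_{a}\rho_{\mathrm{AB}}^{\dagger}-\log_{a}\left(\rho_{\mathrm{A}}\otimes\rho_{\mathrm{B}}\right)^{\dagger}\middle|\rho_{\mathrm{AB}}\right)=-S\left(\rho_{\mathrm{AB}}\right)+S\left(\rho_{\mathrm{A}}\right)+S\left(\rho_{\mathrm{B}}\right).
\]
By Klein's inequality the left-hand side is non-negative, which is precisely subadditivity.

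For the equality case, Klein's inequality states that $S\left(\rho_{\mathrm{AB}}\parallel\rho_{\mathrm{A}}\otimes\rho_{\mathrm{B}}\right)=0$ if and only if $\rho_{\mathrm{AB}}=\rho_{\mathrm{A}}\otimes\rho_{\mathrm{B}}$, i.e.\ $\rho_{\mathrm{AB}}$ is a product state. Conversely, if $\rho_{\mathrm{AB}}=\sigma_{\mathrm{A}}\otimes\sigma_{\mathrm{B}}$ for some normalised $\sigma_{\mathrm{A}}$, $\sigma_{\mathrm{B}}$, then applying the deterministic effects to the appropriate factors gives $\rho_{\mathrm{A}}=\sigma_{\mathrm{A}}$ and $\rho_{\mathrm{B}}=\sigma_{\mathrm{B}}$, hence $\rho_{\mathrm{AB}}=\rho_{\mathrm{A}}\otimes\rho_{\mathrm{B}}$ and equality holds. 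The main obstacle I anticipate is the bookkeeping needed to justify that the surprisal observable of a product splits additively as above: this rests on the multiplicativity of the dagger, which in turn uses information locality and Purity Preservation, and one should phrase the additive splitting directly at the level of the diagonalisation rather than manipulating $\log_{a}$ of a tensor product formally. Once this is in place, together with the careful treatment of zero eigenvalues via continuity, the argument reduces to the standard quantum one.
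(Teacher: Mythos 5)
Your proposal is correct and follows essentially the same route as the paper, which simply states that subadditivity follows from applying Klein's inequality to the pair $\rho_{\mathrm{A}}\otimes\rho_{\mathrm{B}}$ and $\rho_{\mathrm{AB}}$. The details you supply — the multiplicativity of the dagger on products via Purity Preservation and information locality, the additive splitting of the surprisal observable, and the reduction to the marginals via the deterministic effects — are exactly the bookkeeping the paper leaves implicit, and they check out.
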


The proof follows from the application of Klein's inequality to the
states $\rho:=\rho_{\mathrm{A}}\otimes\rho_{\mathrm{B}}$ and $\sigma:=\rho_{\mathrm{AB}}$.
The subadditivity of the entropy guarantees that the \emph{mutual
information}, defined as 
\[
I\left(\mathrm{A};\mathrm{B}\right)_{\rho_{\mathrm{AB}}}:=S\left(\rho_{\mathrm{A}}\right)+S\left(\rho_{\mathrm{B}}\right)-S\left(\rho_{\mathrm{AB}}\right)
\]
is a non-negative quantity, and vanishes if and only if $\rho_{\mathrm{AB}}$
is a product state. Therefore, the mutual information can be used
as a measure of correlations. On the other hand, the \emph{conditional
entropy} 
\[
S\left(\mathrm{A}|\mathrm{B}\right):=S\left(\rho_{\mathrm{AB}}\right)-S\left(\rho_{\mathrm{B}}\right)
\]
can be negative, thanks to Purification, because $\mathrm{AB}$ can
be in a pure state, $S\left(\rho_{\mathrm{AB}}\right)=0$, while $\rho_{\mathrm{B}}$
can be mixed, so $S\left(\rho_{\mathrm{B}}\right)>0$. Like in quantum
theory, the negativity of conditional entropy can be exploited for
novel thermodynamic protocols \cite{Negative-entropy}, as we will
see in subsection~\ref{subsec:Memories}.

Another consequence of Klein's inequality is the \emph{triangle inequality}.
\begin{prop}[Triangle inequality]
For every bipartite state $\rho_{\mathrm{AB}}$ one has 
\[
S\left(\rho_{\mathrm{AB}}\right)\geq\left|S\left(\rho_{\mathrm{A}}\right)-S\left(\rho_{\mathrm{B}}\right)\right|,
\]
where $\rho_{\mathrm{A}}$ and $\rho_{\mathrm{B}}$ are the marginals
of $\rho_{\mathrm{AB}}$ on $\mathrm{A}$ and $\mathrm{B}$ respectively.
\end{prop}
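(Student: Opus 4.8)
The plan is to derive the triangle inequality (also known as the Araki–Lieb inequality in quantum theory) from subadditivity, by applying subadditivity to a purification of $\rho_{\mathrm{AB}}$. This is the standard quantum-information argument, and all the ingredients needed to run it are available in the excerpt: Purification guarantees a purifying system, the Schmidt decomposition (theorem~\ref{thm:schmidt}) guarantees that the marginals of a pure bipartite state have the same non-vanishing eigenvalues, and subadditivity has just been established.

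First I would take a purification $\Psi \in \mathsf{PurSt}_{1}\left(\mathrm{ABC}\right)$ of $\rho_{\mathrm{AB}}$, which exists by the Purification axiom, with $\mathrm{C}$ a suitable purifying system. Applying subadditivity to the bipartition $(\mathrm{A})(\mathrm{BC})$ of the tripartite state $\rho_{\mathrm{ABC}} := \Psi$, and using that $\rho_{\mathrm{ABC}}$ is pure so $S\left(\rho_{\mathrm{ABC}}\right) = 0$, I get
\[
0 = S\left(\rho_{\mathrm{ABC}}\right) \leq S\left(\rho_{\mathrm{A}}\right) + S\left(\rho_{\mathrm{BC}}\right).
\]
Now $\Psi$ is also a purification of $\rho_{\mathrm{BC}}$ with purifying system $\mathrm{A}$, so by theorem~\ref{thm:schmidt} the marginals $\rho_{\mathrm{BC}}$ and $\rho_{\mathrm{A}}$ have the same non-vanishing eigenvalues, hence the same spectrum up to appended zeros; since Shannon–von Neumann entropy is a reducible Schur-concave function of the spectrum, $S\left(\rho_{\mathrm{BC}}\right) = S\left(\rho_{\mathrm{A}}\right)$. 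Wait — I need to be careful: what I actually want is a relation between $S(\rho_{\mathrm{AB}})$, $S(\rho_{\mathrm{A}})$ and $S(\rho_{\mathrm{B}})$. The correct move is: subadditivity on $(\mathrm{A})(\mathrm{C})$ within the pure state $\Psi$ gives $S\left(\rho_{\mathrm{AC}}\right) \leq S\left(\rho_{\mathrm{A}}\right) + S\left(\rho_{\mathrm{C}}\right)$; but since $\Psi$ is pure on $\mathrm{ABC}$, $\rho_{\mathrm{AC}}$ has the same spectrum as $\rho_{\mathrm{B}}$ and $\rho_{\mathrm{C}}$ has the same spectrum as $\rho_{\mathrm{AB}}$ (again by theorem~\ref{thm:schmidt} applied to the appropriate bipartitions of the pure state $\Psi$). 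Reducibility of $S$ then yields $S\left(\rho_{\mathrm{B}}\right) \leq S\left(\rho_{\mathrm{A}}\right) + S\left(\rho_{\mathrm{AB}}\right)$, i.e. $S\left(\rho_{\mathrm{AB}}\right) \geq S\left(\rho_{\mathrm{B}}\right) - S\left(\rho_{\mathrm{A}}\right)$. By the symmetric argument (bipartition $(\mathrm{B})(\mathrm{C})$ of $\Psi$), $S\left(\rho_{\mathrm{AB}}\right) \geq S\left(\rho_{\mathrm{A}}\right) - S\left(\rho_{\mathrm{B}}\right)$. Combining the two bounds gives $S\left(\rho_{\mathrm{AB}}\right) \geq \left|S\left(\rho_{\mathrm{A}}\right) - S\left(\rho_{\mathrm{B}}\right)\right|$, as claimed.

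The one point requiring care — and the main obstacle, though a mild one — is the bookkeeping of which marginals of the pure tripartite state $\Psi$ share spectra with which. The clean way is to invoke theorem~\ref{thm:schmidt} in the form: for any bipartition of a pure state, the two reduced states have the same non-vanishing eigenvalues. Thus $\rho_{\mathrm{C}}$ (one part) and $\rho_{\mathrm{AB}}$ (the complementary part of $\mathrm{ABC}$) have equal spectra up to zeros, and $\rho_{\mathrm{AC}}$ and $\rho_{\mathrm{B}}$ likewise. Since $S$ is built from a reducible Schur-concave function (Rényi entropies, and in particular Shannon–von Neumann entropy, are reducible, as noted after the definition of reducibility), the appended zeros are invisible and the entropies coincide. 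With that observation in hand the rest is a two-line chain of inequalities, so I would not belabour it in the written proof: state the purification, quote subadditivity once for each of the two bipartitions, quote the spectral equalities from the Schmidt decomposition plus reducibility, and conclude.
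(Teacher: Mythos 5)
Your proof is correct and is exactly the argument the paper has in mind: the paper omits the proof, stating only that it is ``the same as in the quantum case (see e.g.\ Nielsen--Chuang)'', and that standard argument is precisely your purification-plus-subadditivity route, with theorem~\ref{thm:schmidt} and the reducibility of $S$ supplying the equalities $S\left(\rho_{\mathrm{AC}}\right)=S\left(\rho_{\mathrm{B}}\right)$ and $S\left(\rho_{\mathrm{C}}\right)=S\left(\rho_{\mathrm{AB}}\right)$. Your initial application of subadditivity to the bipartition $\left(\mathrm{A}\right)\left(\mathrm{BC}\right)$ was indeed a dead end (it only yields the trivial bound $0\leq2S\left(\rho_{\mathrm{A}}\right)$), but you caught and corrected it, and the final chain of inequalities is sound.
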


Its proof is the same as in the quantum case (see e.g.\ \cite{Nielsen-Chuang}),
and we will omit it for brevity. Combining subadditivity and the triangle
inequality, one obtains the bound 
\[
\left|S\left(\rho_{\mathrm{A}}\right)-S\left(\rho_{\mathrm{B}}\right)\right|\leq S\left(\rho_{\mathrm{AB}}\right)\leq S\left(\rho_{\mathrm{A}}\right)+S\left(\rho_{\mathrm{B}}\right),
\]
valid in all sharp theories with purification.

\subsection{The second law lemma\label{subsec:The-second-law-lemma}}

Using the mathematical properties of Shannon-von Neumann entropy we
can prove a physical one that arises naturally when considering a
system evolving jointly with its environment. Assume the system and
the surrounding environment are uncorrelated at the initial time.
Consistently with Purification, here we assume that, by suitably enlarging
the environment, the interaction can be modelled by a reversible channel
$\mathcal{U}$. We denote the initial states of the system and the
environment by $\rho_{\mathrm{S}}$ and $\rho_{\mathrm{E}}$ respectively,
so the initial state of the composite system is $\rho_{\mathrm{SE}}=\rho_{\mathrm{S}}\otimes\rho_{\mathrm{E}}$.
Primed states will denote the states after the interaction.\[
\begin{aligned} \Qcircuit @C=1em @R=.7em @!R { \prepareC{\rho_\rS}& \qw \poloFantasmaCn{\rS} & \multigate{1}{\mathcal U} &\qw \poloFantasmaCn{\rS}&\qw \\ \prepareC{\rho_\rE}& \qw \poloFantasmaCn{\rE} & \ghost{\mathcal U} & \qw \poloFantasmaCn{\rE} & \qw}\end{aligned}
\]

The result of the interaction is typically to create correlations
between the system and the environment, thus increasing the mutual
information from the initial zero value to a final non-zero value.
The creation of correlations can be equivalently phrased as an increase
of the sum of the system and environment entropies. Indeed, the positivity
of the mutual information gives the bound
\[
0\le I\left(\mathrm{S};\mathrm{E}\right)_{\rho'_{\mathrm{SE}}}=S\left(\rho'_{\mathrm{S}}\right)+S\left(\rho'_{\mathrm{E}}\right)-S\left(\rho'_{\mathrm{SE}}\right)=
\]
\[
=S\left(\rho'_{\mathrm{S}}\right)+S\left(\rho'_{\mathrm{E}}\right)-S\left(\rho_{\mathrm{SE}}\right)=S\left(\rho'_{\mathrm{S}}\right)+S\left(\rho'_{\mathrm{E}}\right)-S\left(\rho_{\mathrm{S}}\right)-S\left(\rho_{\mathrm{E}}\right),
\]
the second equality coming from the fact that reversible channels
do not change the entropy (see remark~\ref{rem:remark monotone invariant }),
so $S\left(\rho'_{\mathrm{SE}}\right)=S\left(\rho_{\mathrm{SE}}\right)$.
The resulting bound 
\begin{equation}
S\left(\rho'_{\mathrm{S}}\right)+S\left(\rho'_{\mathrm{E}}\right)\ge S\left(\rho_{\mathrm{S}}\right)+S\left(\rho_{\mathrm{E}}\right)\label{eq:almost2law}
\end{equation}
is sometimes regarded as an elementary instance of the second law
of thermodynamics \cite{Preskill}. It is important, however, not
to confuse the sum of the entropies $S\left(\rho'_{\mathrm{S}}\right)+S\left(\rho'_{\mathrm{E}}\right)$
with the total entropy $S\left(\rho'_{\mathrm{SE}}\right)$, which
remains \emph{unchanged} due to the reversibility of the global evolution.
The best reading of eq.~\eqref{eq:almost2law} is probably that a
decrease in the entropy of the system must be accompanied by an increase
of the entropy of the environment. Following Reeb and Wolf \cite{Reeb-Wolf}
we will refer to eq.~\eqref{eq:almost2law} as the \emph{second law
lemma}.

Operationally, the second law lemma is the statement that uncorrelated
systems can only become more correlated as a result of reversible
interactions. The interesting part of it is that ``correlations''
here are measured in terms of entropies: the existence of an entropic
measure of correlations is a non-trivial consequence of the axioms.

\section{Generalised Gibbs states\label{sec:Generalised-Gibbs-states}}

After analysing microcanonical thermodynamics in GPTs, and in particular
in sharp theories with purification, it is time to move beyond it,
and begin studying the role of temperature. In classical and quantum
statistical mechanics, we know that temperature, and the associated
equilibrium state, the \emph{canonical ensemble}, are tightly related
to the Hamiltonian of the system. Here we encounter the first difficulty:
defining the Hamiltonian in GPTs is a rather thorny issue \cite{Hamiltonian-GPTs}.
To circumvent the problem, instead of determining which observable
is the Hamiltonian, let us instead focus on the fact that the Hamiltonian
is some observable, and recall that in subsection~\ref{subsec:Functional-calculus-on},
we learnt a few things about observables in sharp theories with purification.
Now, suppose we are given an observable $H$, which we may think of
as the Hamiltonian of the system. However, for the following derivation,
this identification is not at all necessary, and our treatment provides
an immediate way to see generalisations of the canonical ensemble
beyond the Hamiltonian case even in quantum theory. 

In the following, since Shannon-von Neumann entropy can be seen as
the expectation value of the surprisal observable, we will make use
of it to derive the form of thermal states following Jaynes' maximum
entropy principle \cite{Jaynes1,Jaynes2} 

Now, suppose that the only information we have about the state of
system $\mathrm{A}$ is the expectation value of a certain observable
$H$ (e.g.\ the Hamiltonian). Which state should we assign to the
system? The maximum entropy principle posits that, among the states
with the given expectation value, we should choose the one that maximises
Shannon-von Neumann entropy, namely the state $\rho_{\mathrm{max}}$
such that
\begin{equation}
\rho_{\mathrm{max}}=\arg\max\left\{ S\left(\rho\right):\left\langle H\right\rangle _{\rho}=E\right\} .\label{eq:argmax}
\end{equation}
Before proceeding, let us make a brief remark about the choice of
the base $a$ for the logarithm in the definition of Shannon-von Neumann
entropy. The argument in the following can be carried out using any
base $a>1$ for the logarithm in the definition of entropy. However,
if we want the entropy to have a thermodynamic meaning, we must recover
thermodynamic predictions, such as the correct calculation of entropy
differences for the ideal gas. An argument by von Neumann \cite{vonNeumann},
later extended to GPTs \cite{Krumm-thesis,Colleagues}, shows that
one must take the natural logarithm. With this in mind, let us determine
the state that maximises Shannon-von Neumann entropy in sharp theories
with purification, subject to the constraint $\left\langle H\right\rangle =E$.
Like in quantum theory, it turns out that there is a one-parameter
family of states: \emph{Gibbs states}. They are of the form 
\begin{equation}
\gamma_{\beta}:=\frac{\mathrm{e}^{-\beta H^{\dagger}}}{\mathrm{tr}\:\mathrm{e}^{-\beta H^{\dagger}}},\label{eq:Gibbs concise}
\end{equation}
where $\beta\in\left[-\infty,+\infty\right]$, and the value of the
parameter $\beta$ is a function of $E$. The expression in eq.~\eqref{eq:Gibbs concise}
means in fact
\[
\gamma_{\beta}=\frac{1}{Z}\sum_{i=1}^{d}\mathrm{e}^{-\beta E_{i}}\varphi_{i},
\]
with $Z:=\sum_{i=1}^{d}\mathrm{e}^{-\beta E_{i}}$, where the $E_{i}$'s
are the eigenvalues of $H$, and each $\varphi_{i}$ is a pure state
such that $\left(H\middle|\varphi_{i}\right)=E_{i}$, namely the corresponding
eigenstate. How can we check that the Gibbs states are exactly the
solution to the maximisation problem in eq.~\eqref{eq:argmax}? Instead
of solving it directly, let us show that the entropy of Gibbs states
is higher than of any other state with the same expectation value
$E$ for $H$. To this end, let us calculate the expectation value
of $H$ on a Gibbs state:
\[
E\left(\beta\right):=\left\langle H\right\rangle _{\gamma_{\beta}}=-\frac{\mathrm{d}}{\mathrm{d}\beta}\ln Z.
\]
This can assume all values between $E_{\mathrm{min}}$ and $E_{\mathrm{max}}$
(the minimum and maximum eigenvalue of $H$). Now, in the maximisation
problem~\eqref{eq:argmax}, $E$ is fixed, rather than a function
of $\beta$. It is instead $\beta$ to be a function of $E$. In other
words, we have to invert the function $E\left(\beta\right)$. Now,
if $E_{\mathrm{min}}<E_{\mathrm{max}}$, namely $H$ is not fully
degenerate, the function $E\left(\beta\right)$ is invertible for
all $\beta\in\left[-\infty,+\infty\right]$ \cite[lemma 9]{Reeb-Wolf}.
If $H$ is fully degenerate, the Gibbs state is the microcanonical
state $\chi=\frac{1}{d}\sum_{i=1}^{d}\varphi_{i}$, for every $\beta$,
a case we treated in the previous sections. Therefore, it is not restrictive
to assume $E_{\mathrm{min}}<E_{\mathrm{max}}$. In this case, there
is an inverse function $\beta\left(E\right)$, and now we are ready
to prove that the Gibbs state $\gamma_{\beta\left(E\right)}$ is the
maximum entropy state with expectation value $E$. The proof is based
on an argument by Preskill \cite{Preskill}. First, note that the
entropy of a Gibbs state is
\begin{equation}
S\left(\gamma_{\beta\left(E\right)}\right)=\beta\left(E\right)E+\ln Z.\label{eq:entropy Gibbs}
\end{equation}
Then use Klein's inequality between any $\rho$ with $\left\langle H\right\rangle _{\rho}=E$,
and $\gamma_{\beta\left(E\right)}$. 
\[
0\le S\left(\rho\parallel\gamma_{\beta\left(E\right)}\right)=\left(\ln\rho^{\dagger}-\ln\gamma_{\beta\left(E\right)}^{\dagger}\middle|\rho\right)=\left(\ln\rho^{\dagger}+\beta\left(E\right)H+u\ln Z\middle|\rho\right)=
\]
\[
=-S\left(\rho\right)+\beta\left(E\right)E+\ln Z=-S\left(\rho\right)+S\left(\gamma_{\beta\left(E\right)}\right),
\]
where we have used the fact that $S\left(\rho\right)=-\left(\ln\rho^{\dagger}\middle|\rho\right)$,
that $E=\left\langle H\right\rangle _{\rho}=\left(H\middle|\rho\right)$,
and eq.~\eqref{eq:entropy Gibbs}. This yields the bound 
\[
S\left(\rho\right)\le S\left(\gamma_{\beta\left(E\right)}\right),
\]
for every $\rho$ such that $\left\langle H\right\rangle _{\rho}=E$.

Motivated by this characterisation, we regard the Gibbs state $\gamma_{\beta\left(E\right)}$
as the equilibrium state of a system with fixed expectation value
$E$ of the observable $H$. In the following we will focus on the
case where $H$ is the ``energy of the system''. In this case, we
will write the parameter $\beta$ as $\beta=\frac{1}{kT}$, where
$k$ is Boltzmann's constant, and $T$ is interpreted as the ``temperature''.
Consistently, we will regard $\gamma_{\beta}$ as the ``equilibrium
state at temperature $T$''. Since the system is finite-dimensional,
in principle we allow \emph{negative} temperatures \cite{Negative-temperatures1,Negative-temperatures2},
corresponding to a situation in which higher energy levels are favoured.
However, the interpretation of negative $\beta$'s as sensible inverse
\emph{thermodynamic} temperatures has been recently questioned \cite{No-negative-temperatures},
so in the following treatment we will always assume temperatures to
be non-negative.

\section[An operational Landauer's principle]{An operational derivation of Landauer's principle}

The entropic tools constructed from the axioms allow us to prove an
operational version of Landauer's principle, based on a recent argument
by Reeb and Wolf \cite{Reeb-Wolf}. The scenario considered here is
that of a system $\mathrm{S}$ that interacts reversibly with an environment
$\mathrm{E}$, initially in the equilibrium state at temperature $T$,
i.e.\ in a Gibbs state. Like in subsection~\ref{subsec:The-second-law-lemma},
primed states and quantities denote the states and the quantities
after the interaction.\[
\begin{aligned} \Qcircuit @C=1em @R=.7em @!R { \prepareC{\rho_\rS}& \qw \poloFantasmaCn{\rS} & \multigate{1}{\mathcal U} &\qw \poloFantasmaCn{\rS}&\qw \\ \prepareC{\gamma_\beta}& \qw \poloFantasmaCn{\rE} & \ghost{\mathcal U} & \qw \poloFantasmaCn{\rE} & \qw}\end{aligned}
\]

In this context, Landauer's principle amounts to the statement that
a decrease in the entropy of the system must be accompanied by an
increase in the expected energy of the environment. More formally,
we have the following theorem, exactly equal to the result by Reeb
and Wolf \cite{Reeb-Wolf}, but where quantum entropies are replaced
by our definition of Shannon-von Neumann entropy \cite{TowardsThermo}.
\begin{thm}
Suppose that the system and the environment are initially in the product
state $\rho_{\mathrm{SE}}=\rho_{\mathrm{S}}\otimes\gamma_{\mathrm{E},\beta}$
where $\gamma_{\mathrm{E},\beta}$ is Gibbs state at inverse temperature
$\beta$. Let $H_{\mathrm{E}}$ be the energy observable of the environment.
After a reversible interaction $\mathcal{U}$, the system and the
environment will satisfy the equality
\begin{equation}
\left\langle H_{\mathrm{E}}\right\rangle '-\left\langle H_{\mathrm{E}}\right\rangle =kT\left(S\left(\rho_{\mathrm{S}}\right)-S\left(\rho'_{\mathrm{S}}\right)+I\left(\mathrm{S};\mathrm{E}\right)_{\rho'_{\mathrm{SE}}}+S\left(\rho'_{\mathrm{E}}\parallel\gamma_{\mathrm{E},\beta}\right)\right),\label{eq:Landauer}
\end{equation}
where $\left\langle H_{\mathrm{E}}\right\rangle =\left(H_{\mathrm{E}}\middle|\gamma_{\mathrm{E},\beta}\right)$
and $\left\langle H_{\mathrm{E}}\right\rangle '=\left(H_{\mathrm{E}}\middle|\rho'_{\mathrm{E}}\right)$
are the expectation values of the environment energy at the initial
and final times respectively.
\end{thm}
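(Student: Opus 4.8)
The plan is to follow the Reeb--Wolf argument, transcribing it into the operational formalism of sharp theories with purification and using the entropic machinery developed in the previous sections. The starting point is the identity $I\left(\mathrm{S};\mathrm{E}\right)_{\rho'_{\mathrm{SE}}}=S\left(\rho'_{\mathrm{S}}\right)+S\left(\rho'_{\mathrm{E}}\right)-S\left(\rho'_{\mathrm{SE}}\right)$, combined with the fact that the global evolution $\mathcal{U}$ is reversible, hence entropy-preserving (remark~\ref{rem:remark monotone invariant }), so $S\left(\rho'_{\mathrm{SE}}\right)=S\left(\rho_{\mathrm{SE}}\right)=S\left(\rho_{\mathrm{S}}\right)+S\left(\gamma_{\mathrm{E},\beta}\right)$, the last equality following from additivity of Shannon--von Neumann entropy on the product state $\rho_{\mathrm{S}}\otimes\gamma_{\mathrm{E},\beta}$. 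Rearranging gives $S\left(\rho_{\mathrm{S}}\right)-S\left(\rho'_{\mathrm{S}}\right)+I\left(\mathrm{S};\mathrm{E}\right)_{\rho'_{\mathrm{SE}}}=S\left(\rho'_{\mathrm{E}}\right)-S\left(\gamma_{\mathrm{E},\beta}\right)$.

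Next I would expand the relative-entropy term $S\left(\rho'_{\mathrm{E}}\parallel\gamma_{\mathrm{E},\beta}\right)$ using the definition from section~\ref{sec:Properties-of-Shannon-von} and the explicit form of the Gibbs state. Since $\gamma_{\mathrm{E},\beta}^{\dagger}=\frac{1}{Z}\mathrm{e}^{-\beta H_{\mathrm{E}}}$ (more precisely, its dagger is diagonal on the same eigenbasis, with the $\log$ giving $-\beta H_{\mathrm{E}}-u\ln Z$ by the functional calculus of subsection~\ref{subsec:Functional-calculus-on}), we get
\[
S\left(\rho'_{\mathrm{E}}\parallel\gamma_{\mathrm{E},\beta}\right)=\left(\ln\rho_{\mathrm{E}}'^{\dagger}-\ln\gamma_{\mathrm{E},\beta}^{\dagger}\middle|\rho'_{\mathrm{E}}\right)=-S\left(\rho'_{\mathrm{E}}\right)+\beta\left(H_{\mathrm{E}}\middle|\rho'_{\mathrm{E}}\right)+\ln Z.
\]
Using the analogous computation for $\gamma_{\mathrm{E},\beta}$ itself, namely $S\left(\gamma_{\mathrm{E},\beta}\right)=\beta\left(H_{\mathrm{E}}\middle|\gamma_{\mathrm{E},\beta}\right)+\ln Z$ (eq.~\eqref{eq:entropy Gibbs}), subtracting yields $S\left(\rho'_{\mathrm{E}}\right)-S\left(\gamma_{\mathrm{E},\beta}\right)=\beta\left(\left\langle H_{\mathrm{E}}\right\rangle'-\left\langle H_{\mathrm{E}}\right\rangle\right)-S\left(\rho'_{\mathrm{E}}\parallel\gamma_{\mathrm{E},\beta}\right)$. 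Combining this with the rearranged identity from the first paragraph and multiplying through by $kT=1/\beta$ gives exactly eq.~\eqref{eq:Landauer}.

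The steps are mostly bookkeeping once the right ingredients are lined up, so the main obstacle is making sure each ingredient is actually licensed in this framework: that the conditional/mutual information manipulations are valid (guaranteed by subadditivity, proved earlier via Klein's inequality), that reversible channels genuinely preserve Shannon--von Neumann entropy (remark~\ref{rem:remark monotone invariant }), and above all that the functional calculus on observables behaves well enough that $\ln\gamma_{\mathrm{E},\beta}^{\dagger}=-\beta H_{\mathrm{E}}-u\ln Z$ as an identity of effects --- this is where one must invoke the uniqueness of diagonalisation (theorem~\ref{thm:uniqueness diago}) and proposition~\ref{prop:well-defined} so that $\ln$ applied to the dagger of a Gibbs state is unambiguous and agrees term-by-term with $-\beta H_{\mathrm{E}}-u\ln Z$. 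One subtlety worth flagging explicitly is the degenerate case $E_{\mathrm{min}}=E_{\mathrm{max}}$, where the Gibbs state collapses to the microcanonical state; there the statement still holds trivially, but it is cleanest to note, as in the discussion preceding this theorem, that we may assume $E_{\mathrm{min}}<E_{\mathrm{max}}$ so that $\beta$ is well-defined and finite. The pairing identities $\left\langle H_{\mathrm{E}}\right\rangle=\left(H_{\mathrm{E}}\middle|\gamma_{\mathrm{E},\beta}\right)$ and $\left\langle H_{\mathrm{E}}\right\rangle'=\left(H_{\mathrm{E}}\middle|\rho'_{\mathrm{E}}\right)$ are just the definition of expectation value of an observable, so no extra work is needed there.
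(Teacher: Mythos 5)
Your proof is correct and follows exactly the Reeb--Wolf argument that the paper itself invokes (the paper omits the details, stating only that the proof is identical to the quantum case): the same splitting into the entropy-balance identity via reversibility and additivity, plus the expansion of $S\left(\rho'_{\mathrm{E}}\parallel\gamma_{\mathrm{E},\beta}\right)$ using $\ln\gamma_{\mathrm{E},\beta}^{\dagger}=-\beta H_{\mathrm{E}}-u\ln Z$, which is the same computation the paper performs in its maximum-entropy derivation of the Gibbs state. The points you flag --- entropy invariance under reversible channels, well-definedness of the functional calculus, and the fully degenerate case --- are exactly the ingredients the framework supplies, so nothing is missing.
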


The proof is identical to the quantum case \cite{Reeb-Wolf}, and
we do not report it for brevity. The key point is, again, Klein's
inequality, which implies that the last two terms in the right-hand
side of eq.~\eqref{eq:Landauer} are always non-negative, and therefore
one has the bound 
\begin{equation}
\left\langle H_{\mathrm{E}}\right\rangle '-\left\langle H_{\mathrm{E}}\right\rangle \ge kT\left(S\left(\rho_{\mathrm{S}}\right)-S\left(\rho'_{\mathrm{S}}\right)\right),\label{eq:landauerbound}
\end{equation}
stating that it is impossible to reduce the entropy of the system
without heating up the environment. Furthermore, the equality condition
in Klein's inequality implies that the lower bound~\eqref{eq:landauerbound}
is attained if and only if
\begin{enumerate}
\item the system and the environment remain uncorrelated after the interaction;
\item the environment remains in the equilibrium state.
\end{enumerate}
Note that in this case $\left\langle H_{\mathrm{E}}\right\rangle '=\left\langle H_{\mathrm{E}}\right\rangle $,
and there is in fact no dissipation. This is because eq.~\eqref{eq:Landauer}
prescribes $S\left(\rho_{\mathrm{S}}\right)=S\left(\rho'_{\mathrm{S}}\right)$,
so the entropy of the state cannot decrease, and no real erasure takes
place here.

\subsection{The role of memories and negative conditional entropy\label{subsec:Memories}}

Combining the settings of \cite{Negative-entropy} and \cite{Reeb-Wolf},
we can explore the thermodynamic meaning of \emph{negative} conditional
entropy in sharp theories with purification. We will see that non-classical
correlations, witnessed by negative conditional entropy, can be harnessed
to overcome Landauer's bound \eqref{eq:landauerbound}: we can reduce
the entropy of the system without heating up the environment. 

Suppose there is a system $\mathrm{M}$, the \emph{memory}, which
contains some information about the system $\mathrm{S}$ because it
is correlated with it. Now we can consider the case where the composite
system $\mathrm{SME}$ is in the state $\rho_{\mathrm{SM}}\otimes\gamma_{\mathrm{E},\beta}$,
with $\gamma_{\mathrm{E},\beta}$ a Gibbs state.\[
\begin{aligned} \Qcircuit @C=1em @R=.7em @!R { \multiprepareC{1}{\rho_{\mathrm{SM}}}& \qw \poloFantasmaCn{\rS} & \multigate{2}{\mathcal U} &\qw \poloFantasmaCn{\rS}&\qw \\ \pureghost{\rho_{\mathrm{SM}}}& \qw \poloFantasmaCn{\rM} & \ghost{\mathcal U} &\qw \poloFantasmaCn{\rM}&\qw \\  \prepareC{\gamma_\beta}& \qw \poloFantasmaCn{\rE} & \ghost{\mathcal U} & \qw \poloFantasmaCn{\rE} & \qw}\end{aligned}
\]We will also assume that the overall reversible evolution does not
increase the entropy of the memory, viz.\ $S\left(\rho'_{\mathrm{M}}\right)\leq S\left(\rho_{\mathrm{M}}\right)$
\cite{Negative-entropy}. It is immediate to see that the second law
lemma (subsection~\ref{subsec:The-second-law-lemma}) takes the form
\[
S\left(\rho'_{\mathrm{SM}}\right)-S\left(\rho_{\mathrm{SM}}\right)+S\left(\rho'_{\mathrm{E}}\right)-S\left(\rho_{\mathrm{E}}\right)=I\left(\mathrm{SM};\mathrm{E}\right)_{\rho'_{\mathrm{SME}}}
\]
which allows one to reformulate Landauer's principle as follows \cite{Reeb-Wolf}.
\begin{prop}
\label{prop:landauer negative}Suppose that the system, the memory,
and the environment are initially in the state $\rho_{\mathrm{SME}}=\rho_{\mathrm{SM}}\otimes\gamma_{\mathrm{E},\beta}$
with $\gamma_{\mathrm{E},\beta}$ a Gibbs state. After a reversible
interaction $\mathcal{U}$ such that $S\left(\rho'_{\mathrm{M}}\right)\leq S\left(\rho_{\mathrm{M}}\right)$,
the system, the memory, and the environment satisfy the equality 
\[
\left\langle H_{\mathrm{E}}\right\rangle '-\left\langle H_{\mathrm{E}}\right\rangle =kT\left(S\left(\mathrm{S}|\mathrm{M}\right)_{\rho_{\mathrm{SM}}}-S\left(\mathrm{S}|\mathrm{M}\right)_{\rho'_{\mathrm{SM}}}+\right.
\]
\[
\left.+S\left(\rho_{\mathrm{M}}\right)-S\left(\rho'_{\mathrm{M}}\right)+I\left(\mathrm{SM};\mathrm{E}\right)_{\rho'_{\mathrm{SME}}}+S\left(\rho'_{\mathrm{E}}\parallel\gamma_{\mathrm{E},\beta}\right)\right).
\]
\end{prop}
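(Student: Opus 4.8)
The plan is to follow, line by line, the Reeb--Wolf argument quoted in the preceding theorem: once the entropic machinery of sharp theories with purification is in place, the asserted equality is a purely algebraic identity. The ingredients I would invoke are the definitions of conditional entropy and mutual information given in section~\ref{sec:Properties-of-Shannon-von}; the fact that reversible channels leave the spectrum, and hence Shannon--von Neumann entropy, unchanged (remark~\ref{rem:remark monotone invariant }), so that $S\left(\rho'_{\mathrm{SME}}\right)=S\left(\rho_{\mathrm{SME}}\right)$; additivity of the entropy on the product state $\rho_{\mathrm{SME}}=\rho_{\mathrm{SM}}\otimes\gamma_{\mathrm{E},\beta}$; the explicit form of the Gibbs state together with its entropy $S\left(\gamma_{\mathrm{E},\beta}\right)=\beta\left\langle H_{\mathrm{E}}\right\rangle+\ln Z$ (eq.~\eqref{eq:entropy Gibbs}); and the functional calculus on observables of subsection~\ref{subsec:Functional-calculus-on}, applied to the surprisal observable of $\gamma_{\mathrm{E},\beta}$.

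First I would collapse the memory terms. Writing $S\left(\mathrm{S}|\mathrm{M}\right)_{\sigma}=S\left(\sigma_{\mathrm{SM}}\right)-S\left(\sigma_{\mathrm{M}}\right)$ for $\sigma\in\left\{\rho_{\mathrm{SM}},\rho'_{\mathrm{SM}}\right\}$, the combination $S\left(\mathrm{S}|\mathrm{M}\right)_{\rho_{\mathrm{SM}}}-S\left(\mathrm{S}|\mathrm{M}\right)_{\rho'_{\mathrm{SM}}}+S\left(\rho_{\mathrm{M}}\right)-S\left(\rho'_{\mathrm{M}}\right)$ telescopes to $S\left(\rho_{\mathrm{SM}}\right)-S\left(\rho'_{\mathrm{SM}}\right)$. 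Hence the right-hand side of the claimed identity (dropping the factor $kT$) equals $S\left(\rho_{\mathrm{SM}}\right)-S\left(\rho'_{\mathrm{SM}}\right)+I\left(\mathrm{SM};\mathrm{E}\right)_{\rho'_{\mathrm{SME}}}+S\left(\rho'_{\mathrm{E}}\parallel\gamma_{\mathrm{E},\beta}\right)$. Next I would expand $I\left(\mathrm{SM};\mathrm{E}\right)_{\rho'_{\mathrm{SME}}}=S\left(\rho'_{\mathrm{SM}}\right)+S\left(\rho'_{\mathrm{E}}\right)-S\left(\rho'_{\mathrm{SME}}\right)$ and replace $S\left(\rho'_{\mathrm{SME}}\right)$ by $S\left(\rho_{\mathrm{SM}}\right)+S\left(\gamma_{\mathrm{E},\beta}\right)$ using reversibility and additivity; the $S\left(\rho_{\mathrm{SM}}\right)$ and $S\left(\rho'_{\mathrm{SM}}\right)$ contributions cancel, leaving $S\left(\rho'_{\mathrm{E}}\right)-S\left(\gamma_{\mathrm{E},\beta}\right)+S\left(\rho'_{\mathrm{E}}\parallel\gamma_{\mathrm{E},\beta}\right)$.

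The final step evaluates the two $\gamma_{\mathrm{E},\beta}$-dependent quantities. Since $\gamma_{\mathrm{E},\beta}=\frac{1}{Z}\sum_{i}\mathrm{e}^{-\beta E_{i}}\varphi_{i}$, with $\varphi_{i}$ the eigenstates of $H_{\mathrm{E}}$, the functional calculus gives $\ln\gamma_{\mathrm{E},\beta}^{\dagger}=-\beta H_{\mathrm{E}}-\left(\ln Z\right)u_{\mathrm{E}}$, whence $S\left(\rho'_{\mathrm{E}}\parallel\gamma_{\mathrm{E},\beta}\right)=\left(\ln\rho_{\mathrm{E}}'^{\dagger}-\ln\gamma_{\mathrm{E},\beta}^{\dagger}\middle|\rho'_{\mathrm{E}}\right)=-S\left(\rho'_{\mathrm{E}}\right)+\beta\left\langle H_{\mathrm{E}}\right\rangle'+\ln Z$, using $\left(u_{\mathrm{E}}\middle|\rho'_{\mathrm{E}}\right)=1$. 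Substituting this together with $S\left(\gamma_{\mathrm{E},\beta}\right)=\beta\left\langle H_{\mathrm{E}}\right\rangle+\ln Z$ makes every term cancel except $\beta\left(\left\langle H_{\mathrm{E}}\right\rangle'-\left\langle H_{\mathrm{E}}\right\rangle\right)$; multiplying through by $kT=1/\beta$ yields the stated equality.

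There is no genuine obstacle here: all the entropic identities used have already been established in this framework (Klein's inequality is in fact needed only a posteriori, to pass from the equality to the inequality form of Landauer's principle), and the computation is term-by-term identical to Reeb--Wolf's quantum proof. The two points that require a little care are: (i) the additivity step is legitimate only because the initial global state is a product across the $\mathrm{SM}\mid\mathrm{E}$ cut, which is exactly the hypothesis $\rho_{\mathrm{SME}}=\rho_{\mathrm{SM}}\otimes\gamma_{\mathrm{E},\beta}$; and (ii) the side condition $S\left(\rho'_{\mathrm{M}}\right)\le S\left(\rho_{\mathrm{M}}\right)$ plays no role in the equality itself and should be flagged as such — it is retained only because it is what later lets one discard the non-negative terms $I\left(\mathrm{SM};\mathrm{E}\right)_{\rho'_{\mathrm{SME}}}$, $S\left(\rho'_{\mathrm{E}}\parallel\gamma_{\mathrm{E},\beta}\right)$, and $S\left(\rho_{\mathrm{M}}\right)-S\left(\rho'_{\mathrm{M}}\right)$ to obtain the bound that makes negative conditional entropy thermodynamically useful.
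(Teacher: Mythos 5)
Your proposal is correct and is essentially the computation the paper has in mind: it omits the proof, deferring to the Reeb--Wolf argument and the $\mathrm{SM}$-version of the second law lemma, and your term-by-term expansion (telescoping the conditional-entropy terms, using reversibility plus additivity across the $\mathrm{SM}\mid\mathrm{E}$ cut, and evaluating $\ln\gamma_{\mathrm{E},\beta}^{\dagger}$ via the functional calculus) fills in exactly those details. Your observation that the hypothesis $S\left(\rho'_{\mathrm{M}}\right)\leq S\left(\rho_{\mathrm{M}}\right)$ is not used in the equality itself, only in the subsequent inequality, is also accurate.
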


Given that $S\left(\rho'_{\mathrm{M}}\right)\leq S\left(\rho_{\mathrm{M}}\right)$,
this means again that 
\begin{equation}
\left\langle H_{\mathrm{E}}\right\rangle '-\left\langle H_{\mathrm{E}}\right\rangle \geq kT\left(S\left(\mathrm{S}|\mathrm{M}\right)_{\rho_{\mathrm{SM}}}-S\left(\mathrm{S}|\mathrm{M}\right)_{\rho'_{\mathrm{SM}}}\right)\label{eq:landauermemory}
\end{equation}
Comparing this with eq.~\eqref{eq:landauerbound}, we notice the
presence of conditional entropies which may be \emph{negative}. In
the particular case of $\mathrm{SM}$ in a \emph{pure} state, $S\left(\mathrm{S}|\mathrm{M}\right)_{\rho_{\mathrm{SM}}}=-S\left(\rho_{\mathrm{S}}\right)$
due to Schmidt decomposition (theorem~\ref{thm:schmidt}). Then eq.~\eqref{eq:landauermemory}
reads 
\begin{equation}
\left\langle H_{\mathrm{E}}\right\rangle '-\left\langle H_{\mathrm{E}}\right\rangle \geq kT\left(S\left(\rho'_{\mathrm{S}}\right)-S\left(\rho_{\mathrm{S}}\right)\right),\label{eq:landauermemorypure}
\end{equation}
where the roles of the initial and final states are swapped with respect
to eq.~\eqref{eq:landauerbound}. Now we will show that this swapping
allows us to perform the erasure of a mixed state of $\mathrm{S}$
towards a fixed pure state of $\mathrm{S}$ at \emph{no} thermodynamic
cost, thus overcoming Landauer's bound \eqref{eq:landauerbound}.

Suppose $\mathrm{SM}$ is initially in a \emph{pure entangled} state
$\Psi\in\mathsf{PurSt}_{1}\left(\mathrm{SM}\right)$; in this case
$\rho_{\mathrm{S}}$ is mixed \cite{Chiribella-Scandolo-entanglement},
and we have $S\left(\rho_{\mathrm{M}}\right)=S\left(\rho_{\mathrm{S}}\right)>0$.
Suppose we want to erase $\rho_{\mathrm{S}}$ to a fixed pure state
$\psi_{0}$ of $\mathrm{S}$. Now, let us consider the joint reversible
evolution of $\mathrm{SME}$ to be $\mathcal{U}_{\mathrm{SM}}\otimes\mathcal{I}_{\mathrm{E}}$,
where $\mathcal{U}_{\mathrm{SM}}$ is the reversible channel mapping
$\Psi$ to $\psi_{0}\otimes\varphi_{0}$, where $\varphi_{0}$ is
some pure state of the memory $\mathrm{M}$ ($\mathcal{U}_{\mathrm{SM}}$
exists thanks to transitivity). Clearly this reversible evolution
respects the hypotheses of proposition~\ref{prop:landauer negative}
because $0=S\left(\rho'_{\mathrm{M}}\right)<S\left(\rho_{\mathrm{M}}\right)$,
so it performs the erasure of $\rho_{\mathrm{S}}$ to $\psi_{0}$.
Let us evaluate its thermodynamic cost $\left\langle H_{\mathrm{E}}\right\rangle '-\left\langle H_{\mathrm{E}}\right\rangle $.
Since initially the environment is uncorrelated in the state $\gamma_{\mathrm{E},\beta}$,
and the evolution is $\mathcal{U}_{\mathrm{SM}}\otimes\mathcal{I}_{\mathrm{E}}$,
we have $\rho'_{\mathrm{E}}=\gamma_{\mathrm{E},\beta}$, so the erasure
occurs at zero thermodynamic cost. Note that eq.~\eqref{eq:landauermemorypure}
is satisfied, indeed its left-hand side vanishes, while its right-hand
side is negative and equal to $-kTS\left(\rho_{\mathrm{S}}\right)$.

Again, pure-state entanglement, captured by the negativity of the
conditional entropy, is a resource even in sharp theories with purification,
in that it allows us to overcome Landauer's principle, and to perform
erasure at no thermodynamic cost (cf.\ the quantum case in \cite{Negative-entropy}). 

\section[Majorisation and unrestricted reversibility]{Sufficiency of majorisation and unrestricted reversibility\label{sec:Sufficiency-of-majorisation}}

After the above digression about entropies in sharp theories with
purification, and their role in determining the equilibrium state
when the expectation value of an observable is known, it is time to
go back to the three resource theories of section~\ref{sec:Three-resource-theories},
and to deal with the last question we left unanswered: is majorisation
sufficient for the convertibility under RaRe channels? If not, when
is it so?

The answer is \emph{negative} \cite{Purity}. First, let us try to
give an intuitive reason for why it is so, which will guide us in
the search for an actual counterexample. Note that majorisation is
merely a condition on the spectra of states, and carries no information
about the dynamics allowed by the theory. Instead, RaRe convertibility
is all about the dynamics: if a theory does not have enough reversible
dynamics, a state could majorise another without a RaRe channel transforming
the former into the latter. So, a priori majorisation and RaRe convertibility
might not related, and it is instead surprising that in quantum theory
majorisation is sufficient to characterise the RaRe preorder. If we
want to look for a counterexample, we need to focus on theories where
there is a ``restriction'' on the allowed reversible channels. For
this reason, let us focus on doubled quantum theory, presented in
section~\ref{sec:Example:-doubled-quantum}, where reversible channels
are constrained by the parity superselection rule.

Consider the following states of a doubled qubit (the first index
denotes the sector $\mathcal{H}_{0}$ or $\mathcal{H}_{1}$): 
\begin{equation}
\rho=\frac{1}{2}\left(\ket{0,0}\bra{0,0}+\ket{0,1}\bra{0,1}\right)\label{eq:rho}
\end{equation}
and 
\begin{equation}
\sigma=\frac{1}{2}\ket{0,0}\bra{0,0}\oplus\frac{1}{2}\ket{1,0}\bra{1,0},\label{eq:sigma}
\end{equation}
where $\left\{ \ket{0,0},\ket{0,1}\right\} $ is an orthonormal basis
for $\mathcal{H}_{0}$ and $\left\{ \ket{1,0},\ket{1,1}\right\} $
an orthonormal basis for $\mathcal{H}_{1}$. The key point here is
that the state $\rho$ is fully contained in one sector (the even
parity sector), while the state $\sigma$ is a mixture of two states
in two different sectors. 

The two states have the same spectrum, and therefore they are equivalent
in terms of majorisation of their spectra. However, there is no RaRe
channel transforming one state into the other. To see this, we use
the following lemmas.
\begin{lem}
\label{lem:equivalent states RaRe}If any two states $\rho$ and $\sigma$
are equivalent under RaRe channels, there exists a reversible channel
$\mathcal{U}$ such that $\sigma=\mathcal{U}\rho$.
\end{lem}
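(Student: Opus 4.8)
The plan is to exploit the convex-combination structure of RaRe channels together with the majorisation characterisation of the unital preorder. First I would unpack the hypothesis: $\rho$ and $\sigma$ being equivalent under RaRe channels means there are RaRe channels $\mathcal{R}=\sum_{i}p_{i}\mathcal{U}_{i}$ (with all $p_{i}>0$) and $\mathcal{S}$ satisfying $\mathcal{R}\rho=\sigma$ and $\mathcal{S}\sigma=\rho$. Since RaRe channels are unital, Theorem~\ref{thm:unital-majorisation} applied in both directions gives $\mathbf{p}\succeq\mathbf{q}$ and $\mathbf{q}\succeq\mathbf{p}$, where $\mathbf{p},\mathbf{q}$ are the spectra of $\rho,\sigma$; hence $\rho$ and $\sigma$ have the same spectrum. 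Moreover each $\tau_{i}:=\mathcal{U}_{i}\rho$ has the same spectrum as $\rho$ (reversible channels preserve the spectrum, Remark~\ref{rem:remark monotone invariant }), hence the same spectrum as $\sigma$. So the whole statement reduces to the following claim: if $\sigma=\sum_{i}p_{i}\tau_{i}$ is a convex decomposition in which every $\tau_{i}$ has the same spectrum as $\sigma$, then $\tau_{i}=\sigma$ for every $i$; given this, $\mathcal{U}:=\mathcal{U}_{1}$ does the job.

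To prove the claim I would induct on the number $r$ of non-zero eigenvalues of $\sigma$. If $r=1$ then $\sigma$ is pure, each $\tau_{i}$ is pure (same spectrum), and a pure state admits only trivial convex decompositions, so $\tau_{i}=\sigma$. For $r>1$, let $p_{*}$ be the maximum eigenvalue of $\sigma$, let $\beta$ be a corresponding eigenstate, and let $b=\beta^{\dagger}$ be its dagger effect. From $\left(b\middle|\sigma\right)=p_{*}$ (Corollary~\ref{cor:computeegv}) and $\left(b\middle|\tau_{i}\right)\le p^{*}(\tau_{i})=p_{*}(\tau_{i})=p_{*}$ (Proposition~\ref{prop:pstar=00003Dpstar}, using that $\tau_{i}$ has maximum eigenvalue $p_{*}$), the identity $p_{*}=\left(b\middle|\sigma\right)=\sum_{i}p_{i}\left(b\middle|\tau_{i}\right)$ forces $\left(b\middle|\tau_{i}\right)=p_{*}$ for every $i$. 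Since $\beta^{\dagger}$ is the unique pure effect attaining $1$ on $\beta$ (Theorem~\ref{thm:duality}), Corollary~\ref{cor:pupstar} then gives that $\beta$ is an eigenstate of each $\tau_{i}$ with maximum eigenvalue. Writing $\sigma=p_{*}\beta+(1-p_{*})\sigma'$ and $\tau_{i}=p_{*}\beta+(1-p_{*})\tau_{i}'$ (with $p_{*}<1$, as $\sigma$ is not pure), Theorem~\ref{thm:probability balance} ensures $\beta$ is perfectly distinguishable from $\sigma'$ and from each $\tau_{i}'$, so completing these decompositions with diagonalisations of $\sigma'$ and $\tau_{i}'$ shows that $\sigma'$ and every $\tau_{i}'$ share the spectrum obtained from that of $\sigma$ by deleting one copy of $p_{*}$ and renormalising; in particular $\sigma'$ has $r-1$ non-zero eigenvalues. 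Subtracting $p_{*}\beta$ from $\sigma=\sum_{i}p_{i}\tau_{i}$ yields $\sigma'=\sum_{i}p_{i}\tau_{i}'$, and the inductive hypothesis gives $\tau_{i}'=\sigma'$, whence $\tau_{i}=\sigma$.

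The main obstacle, I expect, is the reduction claim rather than its inductive execution: one must resist peeling off a single term $\mathcal{U}_{1}\rho$ and arguing directly that it equals $\sigma$, because a two-term decomposition $\sigma=p\tau+(1-p)\eta$ does not control the spectrum of the complementary piece $\eta$, so the maximum-eigenvalue comparison would not be tight. The argument only closes because all $n$ terms of the RaRe decomposition have the same spectrum, making $p_{*}(\tau_{i})=p_{*}$ for every $i$ and rendering the averaging inequality $p_{*}=\sum_{i}p_{i}\left(b\middle|\tau_{i}\right)\le\sum_{i}p_{i}p_{*}$ an equality; keeping the full convex combination, rather than a binary coarse-graining of it, is therefore essential. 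The remaining steps are routine bookkeeping with diagonalisations, which I would not grind through in detail here.
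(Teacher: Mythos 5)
Your proof is correct, but it is not the route the paper takes: the paper offers no argument of its own for this lemma and simply defers to \cite{Muller3D}. The standard proof there is group-theoretic and needs none of the axioms of sharp theories with purification: from $\mathcal{R}\rho=\sigma$ and $\mathcal{S}\sigma=\rho$ one writes $\rho=\sum_{i,j}p_{i}q_{j}\mathcal{V}_{j}\mathcal{U}_{i}\rho$ as a convex combination of reversible images of $\rho$, and equality in Jensen's inequality for a strictly convex function invariant under the (compact) group of reversible channels\textemdash e.g.\ the squared norm of a Haar-averaged inner product\textemdash forces every term $\mathcal{V}_{j}\mathcal{U}_{i}\rho$ to equal $\rho$, whence $\mathcal{U}_{i}\rho=\mathcal{V}_{j}^{-1}\rho=\sigma$ for all $i$. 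Your argument instead stays entirely inside the spectral machinery of this thesis: mutual majorisation via theorem~\ref{thm:unital-majorisation} gives equal spectra, and your inductive peeling of the maximum eigenvalue\textemdash using proposition~\ref{prop:pstar=00003Dpstar} and corollary~\ref{cor:pupstar} to force $\left(\beta^{\dagger}\middle|\tau_{i}\right)=p_{*}$ for every term, theorem~\ref{thm:probability balance} to split off $\beta$, and the uniqueness of the spectrum (theorem~\ref{thm:uniqueness diago}) to identify the spectra of the remainders $\sigma'$ and $\tau'_{i}$\textemdash correctly establishes the key claim that a convex mixture of states all sharing the spectrum of the mixture must be trivial. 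I checked the points you deferred (the tightness of $p_{*}=\sum_{i}p_{i}\left(\beta^{\dagger}\middle|\tau_{i}\right)\le p_{*}$, the splicing of a diagonalisation of $\sigma'$ onto $\beta$ via corollary~\ref{cor:dagger distinguishable}, and $p_{*}<1$ in the inductive step) and they all go through; your observation that one must keep the full convex combination rather than a binary coarse-graining is exactly the right warning. The trade-off is scope: your proof is confined to sharp theories with purification, whereas the cited one is not; since the lemma is only ever invoked in that setting (doubled quantum theory and theorem~\ref{thm:unitalrare}), this costs nothing here and buys a proof that is self-contained within the thesis's own toolbox.
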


The proof can be found in \cite{Muller3D}.
\begin{lem}
No unitary matrices $U$ in doubled quantum theory are such that $\sigma=U\rho U^{\dagger}$,
where $\rho$ and $\sigma$ the states of eqs.\ \eqref{eq:rho} and
\eqref{eq:sigma}.
\end{lem}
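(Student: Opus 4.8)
The plan is to exploit the very restricted form of the allowed unitary channels in doubled quantum theory, combined with the fact that $\rho$ is supported entirely in one superselection sector while $\sigma$ genuinely spreads across both. Recall from section~\ref{sec:Example:-doubled-quantum} that every allowed unitary in doubled quantum theory is of the form $U=\left(U_{0}\oplus U_{1}\right)S^{k}$, where $S$ is the sector-swapping unitary, $k\in\left\{0,1\right\}$, and $U_{0}$, $U_{1}$ act only on $\mathcal{H}_{0}$ and $\mathcal{H}_{1}$ respectively. The argument is then a case analysis on $k$.

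First I would observe that $\rho$, as given in eq.~\eqref{eq:rho}, is a density matrix supported entirely in the even-parity sector $\mathcal{H}_{0}$, whereas $\sigma$, as given in eq.~\eqref{eq:sigma}, has a non-vanishing component $\frac{1}{2}\ket{1,0}\bra{1,0}$ in the odd-parity sector $\mathcal{H}_{1}$ \emph{and} a non-vanishing component $\frac{1}{2}\ket{0,0}\bra{0,0}$ in $\mathcal{H}_{0}$. In the case $k=0$, the unitary $U=U_{0}\oplus U_{1}$ preserves each sector, so $U\rho U^{\dagger}$ is again supported entirely in $\mathcal{H}_{0}$; it therefore cannot equal $\sigma$, whose restriction to $\mathcal{H}_{1}$ is non-zero. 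In the case $k=1$, we have $U=\left(U_{0}\oplus U_{1}\right)S$, and since $S$ maps $\mathcal{H}_{0}$ onto $\mathcal{H}_{1}$ and $U_{0}\oplus U_{1}$ preserves sectors, $U\rho U^{\dagger}$ is supported entirely in $\mathcal{H}_{1}$; it therefore cannot equal $\sigma$, whose restriction to $\mathcal{H}_{0}$ is non-zero. In either case we reach a contradiction, so no allowed unitary $U$ satisfies $\sigma=U\rho U^{\dagger}$.

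The computation itself is routine once the structural fact about allowed unitaries is in hand; the only thing requiring a little care is the bookkeeping of which block each term lands in after conjugation by $S$, but this is immediate from $S\mathcal{H}_{0}=\mathcal{H}_{1}$ and the block-diagonality of $U_{0}\oplus U_{1}$. I do not anticipate a genuine obstacle here: the content of the lemma is simply that the parity superselection rule forbids mixing the support of a single-sector state with a second sector, so a ``block-concentrated'' state such as $\rho$ can never be rotated into a ``block-split'' state such as $\sigma$. Together with Lemma~\ref{lem:equivalent states RaRe}, this lemma yields the promised counterexample: $\rho$ and $\sigma$ have identical spectra, hence are majorisation-equivalent, yet they are inequivalent under RaRe channels, showing that majorisation is not sufficient for RaRe convertibility in a general sharp theory with purification.
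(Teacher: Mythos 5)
Your proof is correct. It does, however, take a slightly different route from the one in the paper. You lean on the structural classification of allowed unitaries in doubled quantum theory, $U=\left(U_{0}\oplus U_{1}\right)S^{k}$, which is stated when the theory is introduced, and then run a clean support argument: a sector-preserving unitary keeps $U\rho U^{\dagger}$ inside $\mathcal{H}_{0}$, a sector-swapping one moves it entirely into $\mathcal{H}_{1}$, and neither can reproduce a state like $\sigma$ whose restrictions to both sectors are non-zero. The paper instead argues without invoking that classification: it observes that $\sigma=U\rho U^{\dagger}$ would force $U\ket{0,0}$ and $U\ket{0,1}$ to land in \emph{different} sectors (up to phases, onto $\ket{0,0}$ and $\ket{1,0}$ in some order), and then exhibits the explicit obstruction that such a $U$ would map the valid state $\ket{0,+}=\frac{1}{\sqrt{2}}\left(\ket{0,0}+\ket{0,1}\right)$ to the forbidden coherent superposition $\frac{1}{\sqrt{2}}\left(\ket{0,0}+\ket{1,0}\right)$. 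Your version is shorter and arguably more transparent once the classification of allowed unitaries is granted; the paper's version is more self-contained, needing only the defining property that allowed channels preserve block-diagonality, and it pinpoints the physical culprit, namely that the intra-sector coherence present in the support of $\rho$ cannot be split across sectors. Both are valid, and both feed into Lemma~\ref{lem:equivalent states RaRe} in the same way to produce the counterexample to sufficiency of majorisation.
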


\begin{proof}
The proof is by contradiction. Suppose that one has $\sigma=U\rho U^{\dagger}$,
for some unitary matrix $U$. Then, define the vectors $\ket{\varphi_{0}}:=U\ket{0,0}$
and $\ket{\varphi_{1}}:=U\ket{0,1}$. With this definition, we have
\[
U\rho U^{\dagger}=\frac{1}{2}\left(\ket{\varphi_{0}}\bra{\varphi_{0}}+\ket{\varphi_{1}}\bra{\varphi_{1}}\right).
\]
Now, $U\rho U^{\dagger}$ must be an allowed state in doubled quantum
theory. This means that there are only two possibilities: either $\ket{\varphi_{0}}$
and $\ket{\varphi_{1}}$ belong to the same sector, or they do not.
But $\sigma$ is a mixture of states in both sectors. Hence, $\ket{\varphi_{0}}$
and $\ket{\varphi_{1}}$ must belong to different sectors, if the
relation $U\rho U^{\dagger}=\sigma$ is to hold. At this point, there
are only two possibilities: either
\[
\left\{ \begin{array}{l}
U\ket{0,0}=\ket{0,0}\\
U\ket{0,1}=\ket{1,0}
\end{array}\right.
\]
or 
\[
\left\{ \begin{array}{l}
U\ket{0,0}=\ket{1,0}\\
U\ket{0,1}=\ket{0,0}
\end{array}\right..
\]
However, neither of these conditions can be satisfied by a unitary
matrix in doubled quantum theory: every unitary matrix satisfying
either condition would map the valid state $\ket{0,+}=\frac{1}{\sqrt{2}}\left(\ket{0,0}+\ket{0,1}\right)$
into the \emph{invalid} state $\frac{1}{\sqrt{2}}\left(\ket{0,0}+\ket{1,0}\right)$,
which is forbidden by the parity superselection rule.
\end{proof}
Since unitary channels are the only reversible channels in doubled
quantum theory, we conclude that no RaRe channel can convert $\rho$
into $\sigma$ (and vice versa), despite being equivalent in terms
of majorisation.\footnote{Note that a very similar counterexample can be constructed in extended
classical theory using the coherent composition of classical bits.
However, in this theory, at the level of single classical systems,
majorisation is still sufficient for RaRe convertibility.} Summarising: in general, in sharp theories with purification, majorisation
is \emph{not} sufficient for the convertibility via RaRe channels.
Clearly, this means that in general, in sharp theories with purification,
the three resource theories give rise to inequivalent preorders on
resources.

If instead majorisation is sufficient for RaRe convertibility, the
three preorders become equivalent: this is the missing step to close
the chain of implications~\eqref{eq:chain implications}. In fact,
the sufficiency of majorisation is equivalent to the equivalence of
the three resource preorders.

\subsection{Unrestricted reversibility\label{subsec:Unrestricted-reversibility}}

The condition for the equivalence of the three resource theories can
be expressed in three, mutually equivalent ways, corresponding to
three axioms independently introduced by different authors.
\begin{ax}[Permutability \cite{Hardy-informational-2,hardy2013}]
Every permutation of every pure maximal set can be implemented by
a reversible channel.
\end{ax}

\begin{ax}[Strong Symmetry \cite{Muller-self-duality,Barnum-interference}]
For every two pure maximal sets, there exists a reversible channel
that converts the states in one set into the states in the other.
\end{ax}

\begin{ax}[Reversible Controllability \cite{Control-reversible}]
For every pair of systems $\mathrm{A}$ and $\mathrm{B}$, every
pure maximal set $\left\{ \alpha_{i}\right\} _{i=1}^{d}$ of system
$\mathrm{A}$ and every set of reversible channels $\left\{ \mathcal{U}_{i}\right\} _{i=1}^{d}$
on system $\mathrm{B}$, not necessarily distinct, there exists a
reversible channel $\mathcal{U}$ on the composite system $\mathrm{AB}$
such that\[
\begin{aligned} \Qcircuit @C=1em @R=.7em @!R { & \prepareC{\alpha_i} & \qw \poloFantasmaCn{\rA} & \multigate{1}{\mathcal U} & \qw \poloFantasmaCn{\rA} & \qw \\ & & \qw \poloFantasmaCn{\rB} & \ghost{\mathcal U} & \qw \poloFantasmaCn{\rB} & \qw } \end{aligned} = \begin{aligned} \Qcircuit @C=1em @R=.7em @!R { & \prepareC{\alpha_i} & \qw \poloFantasmaCn{\rA} & \qw &\qw &\qw \\ & & \qw \poloFantasmaCn{\rB} & \gate{\mathcal U_i} & \qw \poloFantasmaCn{\rB} &\qw} \end{aligned}
\]for every $i\in\left\{ 1,\dots,d\right\} $.
\end{ax}

This last axiom states the possibility of implementing control-reversible
transformations in a globally reversible fashion.

Permutability, Strong Symmetry, and Reversible Controllability are
logically distinct requirements. For example, Strong Symmetry implies
Permutability, but the converse is not true in general, as shown in
the following example.
\begin{example}
Consider the square bit \cite{Barrett}. Here the state space is a
square, and the pure states are its vertices. The group of reversible
channels is the symmetry group of the square, which is the dihedral
group $D_{4}$. Every pair of vertices is a set of perfectly distinguishable
pure states. Fig.~\ref{fig:square} shows the situation for the pure
states 
\[
\alpha_{1}=\left(\begin{array}{c}
-1\\
1\\
1
\end{array}\right)\qquad\alpha_{2}=\left(\begin{array}{c}
-1\\
-1\\
1
\end{array}\right)\qquad\alpha_{3}=\left(\begin{array}{c}
1\\
-1\\
1
\end{array}\right),
\]
where the third component gives the normalisation.
\begin{figure}
\begin{centering}
\includegraphics{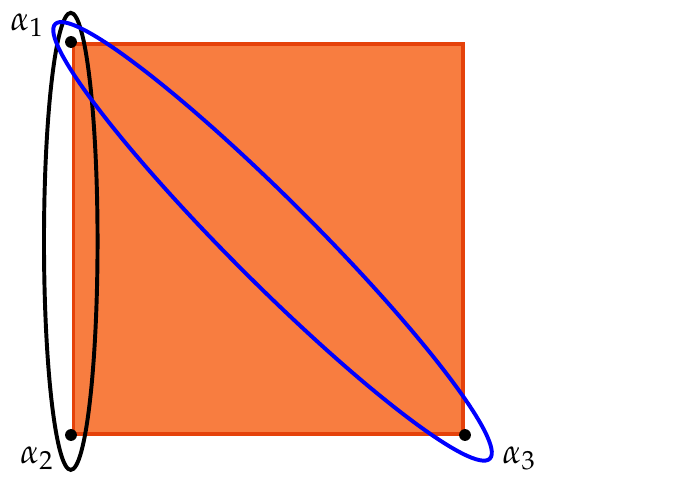}
\par\end{centering}
\caption{\label{fig:square}Normalised states of the square bit. The two sets
$\left\{ \alpha_{1},\alpha_{2}\right\} $ (circled in black) and $\left\{ \alpha_{1},\alpha_{3}\right\} $
(circled in blue) consist of perfectly distinguishable pure states.
Permutability holds, because every permutation of every pair of perfectly
distinguishable pure states can be implemented by a reversible channel,
corresponding to a symmetry of the square. However, no reversible
channel can transform $\alpha_{2}$ into $\alpha_{3}$ while leaving
$\alpha_{1}$ unchanged. Hence, Strong Symmetry cannot hold for the
square bit.}
\end{figure}
 The pure observation-test $\left\{ a_{1},a_{2}\right\} $, where
\[
a_{1}=\frac{1}{2}\left(\begin{array}{ccc}
0 & 1 & 1\end{array}\right)\qquad a_{2}=\frac{1}{2}\left(\begin{array}{ccc}
0 & -1 & 1\end{array}\right),
\]
is the perfectly distinguishing test for the two sets $\left\{ \alpha_{1},\alpha_{2}\right\} $
and $\left\{ \alpha_{1},\alpha_{3}\right\} $.

Now, since every set of perfectly distinguishable pure states has
two elements, the only non-trivial permutation of the elements of
such a set is the transposition. This permutation can be implemented
by considering the reflection across the axis of the segment connecting
the two points. Hence the square bit satisfies Permutability. On the
other hand, the square bit does \emph{not} satisfy Strong Symmetry.
Consider the two maximal sets $\left\{ \alpha_{1},\alpha_{2}\right\} $
and $\left\{ \alpha_{1},\alpha_{3}\right\} $. There are no reversible
channels mapping the former to the latter because no symmetries of
the square map a side to a diagonal.
\end{example}

Although different in general, Permutability, Strong Symmetry, and
Reversible Controllability become equivalent in sharp theories with
purification.
\begin{prop}
\label{prop:permstrong}In every sharp theory with purification, Permutability,
Strong Symmetry, and Reversible Controllability are equivalent requirements.
\end{prop}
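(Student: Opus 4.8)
The plan is to prove a cycle of implications: Strong Symmetry $\Rightarrow$ Permutability $\Rightarrow$ Reversible Controllability $\Rightarrow$ Strong Symmetry, working throughout in a sharp theory with purification and freely using the results already established (diagonalisation, the state-effect dagger, the correspondence between pure maximal sets and pure sharp measurements in proposition~\ref{prop:pure test chi}, information locality in proposition~\ref{prop:information locality}, and the dilation theorems proposition~\ref{prop:dilation channels} and corollary~\ref{cor:pure map factorise}).

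First I would dispatch the easy implication Strong Symmetry $\Rightarrow$ Permutability: a permutation $\pi$ of a pure maximal set $\left\{\alpha_{i}\right\}_{i=1}^{d}$ sends it to the pure maximal set $\left\{\alpha_{\pi(i)}\right\}_{i=1}^{d}$, which is the \emph{same} set as a set, so Strong Symmetry directly supplies a reversible channel $\mathcal{U}$ with $\mathcal{U}\alpha_{i}=\alpha_{\pi(i)}$ for every $i$; this requires no extra hypotheses. Next, Permutability $\Rightarrow$ Strong Symmetry in the presence of our axioms: given two pure maximal sets $\left\{\alpha_{i}\right\}$ and $\left\{\alpha_{i}'\right\}$, transitivity (a consequence of Purification) already gives a reversible $\mathcal{U}$ with $\mathcal{U}\alpha_{1}=\alpha_{1}'$; then $\left\{\mathcal{U}\alpha_{i}\right\}$ and $\left\{\alpha_{i}'\right\}$ are two pure maximal sets sharing the first element, and I would argue inductively, using proposition~\ref{prop:maximal} and proposition~\ref{prop:diagonalization chi d-level 2} (every pure maximal set mixed uniformly is the invariant state, hence maximal sets containing a fixed sub-collection can be ``rotated'' onto one another by a reversible channel fixing the common part) to reduce the problem step by step to a permutation, which Permutability then implements. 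This is where the bookkeeping is heaviest, but the machinery of corollary~\ref{cor:conjecture proved} ($\left(\alpha_{i}^{\dagger}\middle|\alpha_{j}\right)=\delta_{ij}$) and proposition~\ref{prop:atmostoneeffect} keeps it under control.

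The substantive implication is Permutability (equivalently Strong Symmetry) $\Rightarrow$ Reversible Controllability, and conversely. For the forward direction, given a pure maximal set $\left\{\alpha_{i}\right\}_{i=1}^{d}$ on $\mathrm{A}$ and reversible channels $\left\{\mathcal{U}_{i}\right\}_{i=1}^{d}$ on $\mathrm{B}$, I would first build the measure-and-prepare channel $\mathcal{C}=\sum_{i=1}^{d}\left|\alpha_{i}\right)\!\left(\alpha_{i}^{\dagger}\right|\otimes\mathcal{U}_{i}$ from $\mathrm{AB}$ to $\mathrm{AB}$ (using that $\left\{\alpha_{i}^{\dagger}\right\}$ is an observation-test), dilate it to a reversible channel via proposition~\ref{prop:dilation channels}, and then use the same factorisation argument as in the proof of lemma~\ref{lem:RaRe noisy} (corollary~\ref{cor:pure map factorise}, applied to each $\alpha_{i}$ which is left ``invariant up to $\mathcal{U}_{i}$'') to strip off the ancilla and obtain a genuinely reversible $\mathcal{U}$ on $\mathrm{AB}$ with $\mathcal{U}(\alpha_{i}\otimes\rho)=\alpha_{i}\otimes\mathcal{U}_{i}\rho$; Strong Symmetry is what guarantees enough reversible dynamics on $\mathrm{A}$ and on the dilating systems for this construction to close (in doubled quantum theory it fails precisely because the control states lie in different sectors). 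For the converse, Reversible Controllability $\Rightarrow$ Permutability, I would apply the controlled transformation with $\mathrm{B}$ a $d$-dimensional system carrying its own pure maximal set and choose the $\mathcal{U}_{i}$ cleverly (e.g.\ a cyclic shift), so that tracing out or post-selecting on $\mathrm{B}$ with suitable pure effects realises an arbitrary permutation of the $\alpha_{i}$ by a reversible channel on $\mathrm{A}$ alone; information locality (proposition~\ref{prop:information locality}) is needed here to know that products of pure maximal sets are pure maximal.

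The main obstacle I expect is the inductive step in Permutability $\Rightarrow$ Strong Symmetry — pinning down rigorously that a reversible channel fixing $\alpha_{1},\dots,\alpha_{k}$ can be chosen to act as an arbitrary ``rotation'' on the orthogonal complement spanned by the remaining eigenstates. The clean way to handle it is to use proposition~\ref{prop:diagonalization chi d-level 2} to identify the uniform mixture of the complementary states with the invariant state of an ``effective subsystem'' (in the sense of the face-projection/constrained-system language available in this framework) and apply transitivity there; the potential subtlety is whether the complement genuinely supports such an effective system with transitive reversible action, which is exactly where Purification plus Purity Preservation do the work. I would flag that this is the place a careful reader should slow down, and otherwise the proof is an assembly of already-proved lemmas.
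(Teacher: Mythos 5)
The fatal weak point is exactly the one you flagged: your route from Permutability to Strong Symmetry. Matching $\alpha_{1}$ by transitivity and then trying to ``rotate'' the remaining $d-1$ eigenstates onto each other while fixing the common prefix requires a reversible channel acting transitively on the face spanned by the complementary states. Nothing in the axioms delivers this: Permutability only permutes the elements of a \emph{single} maximal set, and the constrained-/effective-system formalism of the paper does not promote a face of the state space to a system of the theory with its own transitive group of reversible dynamics (that would require an ideal-compression-type axiom which is not assumed, and in any case proving transitivity between two maximal sets of such a ``subsystem'' is precisely Strong Symmetry again, so the induction is circular at each step). The paper's proof of this implication avoids the issue entirely with a doubling trick you did not consider: by Purity Preservation and information locality, $\left\{ \varphi_{i}\otimes\psi_{j}\right\} $ is a single pure maximal set of $\mathrm{AA}$, so Permutability applied \emph{on the composite} yields a reversible $\mathcal{U}$ with $\mathcal{U}\left(\varphi_{i}\otimes\psi_{1}\right)=\varphi_{1}\otimes\psi_{i}$ for all $i$; sandwiching $\mathcal{U}$ between the preparation $\psi_{1}$ on the second factor and the effect $\varphi_{1}^{\dagger}$ on the first output gives a pure transformation $\mathcal{P}$ on $\mathrm{A}$ with $\mathcal{P}\varphi_{i}=\psi_{i}$, which occurs with probability $1$ on every $\varphi_{i}$, hence on the internal state $\chi$, hence on every state; a pure deterministic transformation of a system into itself is reversible, and Strong Symmetry follows with no induction at all.

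Two smaller remarks. First, your cycle as written (Strong Symmetry $\Rightarrow$ Permutability $\Rightarrow$ Reversible Controllability $\Rightarrow$ Permutability) does not close back to Strong Symmetry, so everything still hangs on the broken implication above; the paper instead proves Reversible Controllability $\Rightarrow$ Strong Symmetry directly by composing two controlled channels (one conditioning on the $\varphi$'s, one on the $\psi$'s) and again invoking ``pure $+$ deterministic $\Rightarrow$ reversible''. Second, in your Permutability $\Rightarrow$ Reversible Controllability step, corollary~\ref{cor:pure map factorise} does not apply as stated to the dilation of $\mathcal{C}=\sum_{i}\left|\alpha_{i}\right)\left(\alpha_{i}^{\dagger}\right|\otimes\mathcal{U}_{i}$, because the map obtained by fixing the control input $\alpha_{i}$ is a channel from $\mathrm{B}$ to $\mathrm{AB}$, not a reversible channel of a single system; this is precisely where unrestricted reversibility has to enter (the paper emphasises that doubled quantum theory satisfies all four axioms yet violates Reversible Controllability), and the paper handles this implication by citing the Lee--Selby construction rather than re-deriving it.
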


\begin{proof}
The implication ``Strong Symmetry $\Rightarrow$ Permutability''
follows immediately from the definitions. The implication ``Strong
Symmetry $\Rightarrow$ Reversible Controllability'' was proved by
Lee and Selby \cite{Control-reversible} using Causality, Purification,
and the property that the product of two pure states is pure, which
is guaranteed by Purity Preservation. Hence, we only need to prove
the implications ``Permutability $\Rightarrow$ Strong Symmetry''
and ``Reversible Controllability $\Rightarrow$ Strong Symmetry''. 

Let us prove that Permutability implies Strong Symmetry. The first
part of the proof is similar to \cite[theorem 30]{Hardy-informational-2}.
Consider two maximal sets of perfectly distinguishable pure states
$\left\{ \varphi_{i}\right\} _{i=1}^{d}$ and $\left\{ \psi_{i}\right\} _{i=1}^{d}$.
Assuming Permutability, we will show that there exists a reversible
channel $\mathcal{U}$ such that $\psi_{i}=\mathcal{U}\varphi_{i}$,
for all $i=1,\ldots,d$. First of all, note that the states $\left\{ \varphi_{i}\otimes\psi_{j}\right\} $
are pure (by Purity Preservation) and perfectly distinguishable. Then
Permutability implies there exists a reversible channel $\mathcal{U}$
such that for all $i=1,\ldots,d$ \cite{hardy2013}\[
\begin{aligned} \Qcircuit @C=1em @R=.7em @!R { & \prepareC{\varphi_i} & \qw \poloFantasmaCn{\rA} & \multigate{1}{\mathcal U} & \qw \poloFantasmaCn{\rA} & \qw \\ & \prepareC{\psi_1} & \qw \poloFantasmaCn{\rA} & \ghost{\mathcal U} & \qw \poloFantasmaCn{\rA} & \qw } \end{aligned} ~= \!\!\!\! \begin{aligned} \Qcircuit @C=1em @R=.7em @!R { & \prepareC{\varphi_1} & \qw \poloFantasmaCn{\rA} &\qw \\ & \prepareC{\psi_i} & \qw \poloFantasmaCn{\rA} &\qw } \end{aligned}~.
\]Applying the pure effect $\varphi_{1}^{\dagger}$ to both sides of
the equation we obtain\begin{equation}\label{eq:connectP} \begin{aligned} \Qcircuit @C=1em @R=.7em @!R { & \prepareC{\varphi_i} & \qw \poloFantasmaCn{\rA} & \gate{\mathcal P} & \qw \poloFantasmaCn{\rA} & \qw } \end{aligned} ~=\!\!\!\! \begin{aligned} \Qcircuit @C=1em @R=.7em @!R { & \prepareC{\psi_i} & \qw \poloFantasmaCn{\rA} &\qw } \end{aligned}~, \end{equation}with\[ \begin{aligned} \Qcircuit @C=1em @R=.7em @!R { &\qw \poloFantasmaCn{\rA} & \gate{\mathcal P} & \qw \poloFantasmaCn{\rA} &\qw } \end{aligned} ~:=\!\!\!\! \begin{aligned} \Qcircuit @C=1em @R=.7em @!R { & & \qw \poloFantasmaCn{\rA} & \multigate{1}{\mathcal U} & \qw \poloFantasmaCn{\rA} & \measureD{\varphi_1^\dag} \\ & \prepareC{\psi_1} & \qw \poloFantasmaCn{\rA} & \ghost{\mathcal U} & \qw \poloFantasmaCn{\rA} & \qw } \end{aligned}~. \]By
construction, $\mathcal{P}$ is pure (by Purity Preservation, and
because reversible channels are pure), and occurs with probability
1 on all the states $\left\{ \varphi_{i}\right\} _{i=1}^{d}$. Moreover,
the diagonalisation $\chi=\frac{1}{d}\sum_{i=1}^{d}\varphi_{i}$ implies
that $\mathcal{P}$ occurs with probability 1 on every state because
$\left(u\middle|\mathcal{P}\middle|\chi\right)=1$, and $\chi$ is
internal. Since $\mathcal{P}$ is a pure deterministic transformation
on $\mathrm{A}$, it must be reversible \cite{Chiribella-purification}.
Hence, eq.~\eqref{eq:connectP} proves that the states $\left\{ \varphi_{i}\right\} _{i=1}^{d}$
can be reversibly transformed into the states $\left\{ \psi_{i}\right\} _{i=1}^{d}$.
In short, Permutability implies Strong Symmetry. 

Let us prove now that Reversible Controllability implies Strong Symmetry.
Let $\left\{ \varphi_{i}\right\} _{i=1}^{d}$ and $\left\{ \psi_{i}\right\} _{i=1}^{d}$
be two pure maximal sets of a generic system $\mathrm{A}$. Since
reversible channels act transitively on pure states, for every $i\in\left\lbrace 1,\ldots,d\right\rbrace $,
one can find a reversible channel $\mathcal{U}_{i}$ that maps $\psi_{1}$
into $\psi_{i}$, in formula $\mathcal{U}_{i}\psi_{1}=\psi_{i}.$
Moreover, Reversible Controllability implies that we can find a reversible
channel $\mathcal{U}$ such that\begin{equation}\label{eq:2cont} \begin{aligned} \Qcircuit @C=1em @R=.7em @!R { & \prepareC{\varphi_i} & \qw \poloFantasmaCn{\rA} & \multigate{1}{\mathcal U} & \qw \poloFantasmaCn{\rA} & \qw \\ & & \qw \poloFantasmaCn{\rA} & \ghost{\mathcal U} & \qw \poloFantasmaCn{\rA} & \qw } \end{aligned} ~=\!\!\!\! \begin{aligned} \Qcircuit @C=1em @R=.7em @!R { & \prepareC{\varphi_i} & \qw \poloFantasmaCn{\rA} & \qw & \qw &\qw \\ & & \qw \poloFantasmaCn{\rA} & \gate{\mathcal U_i} & \qw \poloFantasmaCn{\rA} & \qw } \end{aligned} \end{equation}for
every $i\in\left\lbrace 1,\ldots,d\right\rbrace $. Likewise, for
every $i\in\left\lbrace 1,\ldots,d\right\rbrace $, one can always
find a reversible channel $\mathcal{V}_{i}$ that transforms $\varphi_{i}$
into $\varphi_{1}$, in formula $\mathcal{V}_{i}\varphi_{i}=\varphi_{1}.$
And again, one can find a reversible channel $\mathcal{V}$ such that
\begin{equation}\label{eq:4cont} \begin{aligned} \Qcircuit @C=1em @R=.7em @!R { & \qw \poloFantasmaCn{\rA} & \multigate{1}{\mathcal V} & \qw \poloFantasmaCn{\rA} & \qw \\ \prepareC{\psi_i} & \qw \poloFantasmaCn{\rA} & \ghost{\mathcal V} & \qw \poloFantasmaCn{\rA} & \qw } \end{aligned}~ = \!\!\!\!\begin{aligned} \Qcircuit @C=1em @R=.7em @!R { & & \qw \poloFantasmaCn{\rA} & \gate{\mathcal V_i} & \qw \poloFantasmaCn{\rA} &\qw \\ & \prepareC{\psi_i} & \qw \poloFantasmaCn{\rA} & \qw & \qw &\qw } \end{aligned} \end{equation}
for every $i\in\left\lbrace 1,\ldots,d\right\rbrace $. Combining
eqs.~\eqref{eq:2cont} and \eqref{eq:4cont} with the definition
of $\mathcal{U}_{i}$ and $\mathcal{V}_{i}$, we obtain\[ \begin{aligned} \Qcircuit @C=1em @R=.7em @!R { & \prepareC{\varphi_i} & \qw \poloFantasmaCn{\rA} & \multigate{1}{\mathcal U} & \qw \poloFantasmaCn{\rA} & \multigate{1}{\mathcal V} & \qw \poloFantasmaCn{\rA} &\qw \\ & \prepareC{\psi_1} & \qw \poloFantasmaCn{\rA} & \ghost{\mathcal U} & \qw \poloFantasmaCn{\rA} & \ghost{\mathcal V } &\qw \poloFantasmaCn{\rA} &\qw } \end{aligned}~ = \!\!\!\!\begin{aligned} \Qcircuit @C=1em @R=.7em @!R { & \prepareC{\varphi_1} & \qw \poloFantasmaCn{\rA} & \qw \\ & \prepareC{\psi_i} & \qw \poloFantasmaCn{\rA} & \qw } \end{aligned} \]for
every $i$. Hence, one has\begin{equation}\label{eq:connectP2} \begin{aligned} \Qcircuit @C=1em @R=.7em @!R { & \prepareC{\varphi_i} & \qw \poloFantasmaCn{\rA} & \gate{\mathcal P} & \qw \poloFantasmaCn{\rA} & \qw } \end{aligned} ~=\!\!\!\! \begin{aligned} \Qcircuit @C=1em @R=.7em @!R { & \prepareC{\psi_i} & \qw \poloFantasmaCn{\rA} &\qw } \end{aligned}~, \end{equation}with\[ \begin{aligned} \Qcircuit @C=1em @R=.7em @!R { &\qw \poloFantasmaCn{\rA} & \gate{\mathcal P} & \qw \poloFantasmaCn{\rA} &\qw } \end{aligned} ~:=\!\!\!\! \begin{aligned} \Qcircuit @C=1em @R=.7em @!R { & & \qw \poloFantasmaCn{\rA} & \multigate{1}{\mathcal U} & \qw \poloFantasmaCn{\rA} & \multigate{1}{\mathcal V} & \qw \poloFantasmaCn{\rA} & \measureD{\varphi_1^\dag} \\ & \prepareC{\psi_1} & \qw \poloFantasmaCn{\rA} & \ghost{\mathcal U} & \qw \poloFantasmaCn{\rA} & \ghost{\mathcal V} & \qw \poloFantasmaCn{\rA} & \qw } \end{aligned}~. \]
By the same argument used in the first part of the proof, we conclude
that $\mathcal{P}$ is a reversible channel. Hence, eq.~\eqref{eq:connectP2}
implies that the set $\left\{ \varphi_{i}\right\} _{i=1}^{d}$ can
be reversibly converted into the set $\left\{ \psi_{i}\right\} _{i=1}^{d}$.
In short, Reversible Controllability implies Strong Symmetry.
\end{proof}
Since Permutability, Strong Symmetry, and Reversible Controllability
are equivalent in the present context, we conflate them into a single
notion.
\begin{defn}
A sharp theory with purification has \emph{unrestricted reversibility}
if the theory satisfies Permutability, or Strong Symmetry, or Reversible
Controllability.
\end{defn}

The fact that three desirable properties of GPTs become equivalent
under our axioms gives further evidence that the axioms of sharp theories
with purification capture an important structure of physical theories.

Examples of sharp theories with purification and unrestricted reversibility
are quantum theory on real and complex Hilbert space.

\subsection{When the three resource theories are equivalent}

Now we characterise exactly when the RaRe, Noisy, and Unital Resource
theories are equivalent in terms of state convertibility. Owing to
the inclusions $\mathsf{RaRe}\subseteq\mathsf{Noisy}\subseteq\mathsf{Unital}$,
a sufficient condition for the equivalence is that the convertibility
under unital channels implies the convertibility under RaRe channels,
or in other words, that majorisation is sufficient for RaRe convertibility.
The characterisation is as follows.
\begin{thm}
\label{thm:unitalrare}In every sharp theory with purification, the
following statements are equivalent:
\begin{enumerate}
\item the RaRe, noisy, and unital Resource theories are equivalent in terms
of state convertibility;\label{enu:equivalence}
\item the theory has unrestricted reversibility.
\end{enumerate}
\end{thm}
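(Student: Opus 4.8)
The plan is to prove the theorem as a cycle of implications, using the chain of implications \eqref{eq:chain implications} together with the characterisation of unital convertibility by majorisation (theorem~\ref{thm:unital-majorisation}). The non-trivial content is to show that unrestricted reversibility makes majorisation \emph{sufficient} for RaRe convertibility; the converse direction (that equivalence of the three resource theories forces unrestricted reversibility) should follow by exhibiting a RaRe convertibility that is forced by majorisation but fails without, say, Strong Symmetry.

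\textbf{Direction $2\Rightarrow1$.} Assume the theory has unrestricted reversibility; by proposition~\ref{prop:permstrong} we may freely invoke Permutability, Strong Symmetry, or Reversible Controllability. By \eqref{eq:chain implications} it suffices to show that $\mathbf{p}\succeq\mathbf{q}$ implies $\rho\succeq_{\mathrm{RaRe}}\sigma$ whenever $\rho$ and $\sigma$ are states of the same system with spectra $\mathbf{p}$ and $\mathbf{q}$. First I would reduce the classical majorisation relation to its standard building blocks: $\mathbf{q}=D\mathbf{p}$ for a doubly stochastic $D$, and by Birkhoff's theorem $D$ is a convex combination of permutation matrices; moreover, by the theory of T-transforms (Hardy--Littlewood--P\'olya), $\mathbf{p}\succeq\mathbf{q}$ can be realised by a finite sequence of averaging operations, each acting non-trivially only on two coordinates (a ``pinching'' $\mathbf{x}\mapsto \lambda\mathbf{x}+(1-\lambda)\Pi_{ij}\mathbf{x}$ with $\Pi_{ij}$ a transposition). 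The key step is then: diagonalise $\rho=\sum_i p_i\alpha_i$ for a fixed pure maximal set $\{\alpha_i\}_{i=1}^d$; by Permutability every transposition $\Pi_{ij}$ of this pure maximal set is implemented by a reversible channel $\mathcal{U}_{ij}$ with $\mathcal{U}_{ij}\alpha_k=\alpha_{\pi(k)}$, so the RaRe channel $\lambda\mathcal{I}+(1-\lambda)\mathcal{U}_{ij}$ sends $\rho$ to the state with spectrum $\lambda\mathbf{p}+(1-\lambda)\Pi_{ij}\mathbf{p}$ written in the \emph{same} eigenbasis. Composing the RaRe channels corresponding to the successive T-transforms produces a RaRe channel mapping $\rho$ to the state $\sum_i q_i\alpha_i$; since $\sum_i q_i\alpha_i$ and $\sigma$ have the same spectrum, Strong Symmetry (equivalently transitivity on pure maximal sets, via corollary~\ref{cor:dagger distinguishable}) supplies a reversible channel carrying one to the other, and reversible channels are RaRe, so $\rho\succeq_{\mathrm{RaRe}}\sigma$ by transitivity of the preorder. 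Care must be taken that each intermediate state really is obtained as claimed: one must verify that $\mathcal{U}_{ij}$ permutes \emph{all} eigenstates simultaneously so that the spectrum transforms coordinatewise, which is exactly what Permutability guarantees.

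\textbf{Direction $1\Rightarrow2$.} I would argue contrapositively. Suppose the theory does not have unrestricted reversibility, i.e.\ Strong Symmetry fails, so there exist two pure maximal sets $\{\varphi_i\}_{i=1}^d$ and $\{\psi_i\}_{i=1}^d$ not connected by any reversible channel. Consider the states $\rho=\sum_i p_i\varphi_i$ and $\sigma=\sum_i p_i\psi_i$ for a generic non-degenerate probability vector $\mathbf{p}$ (e.g.\ strictly decreasing entries). They have the same spectrum, hence $\mathbf{p}\succeq\mathbf{q}$ and $\mathbf{q}\succeq\mathbf{p}$, so $\rho$ and $\sigma$ are equivalent in the unital resource theory by theorem~\ref{thm:unital-majorisation}, and by inclusion also convertible under noisy operations in both directions. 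If they were equivalent under RaRe channels, lemma~\ref{lem:equivalent states RaRe} would force the existence of a reversible channel $\mathcal{U}$ with $\sigma=\mathcal{U}\rho$; comparing the (unique, by theorem~\ref{thm:uniqueness diago}) diagonalisations and using that $\mathcal{U}$ sends the eigenstates $\varphi_i$ of $\rho$ with eigenvalue $p_i$ to eigenstates of $\sigma$ with the same eigenvalue $p_i$, non-degeneracy pins down $\mathcal{U}\varphi_i=\psi_i$ for all $i$, contradicting the choice of the two pure maximal sets. Hence the RaRe preorder is strictly finer than the unital (and noisy) one, so statement~\ref{enu:equivalence} fails. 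This mirrors the explicit doubled-quantum-theory counterexample (states~\eqref{eq:rho} and~\eqref{eq:sigma}) of section~\ref{sec:Sufficiency-of-majorisation}.

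\textbf{Main obstacle.} The delicate point is the first direction: turning the combinatorial fact that majorisation decomposes into two-level T-transforms into a statement about RaRe channels acting on \emph{states of a GPT}. One must be sure that, after fixing an eigenbasis $\{\alpha_i\}$, each $\lambda\mathcal{I}+(1-\lambda)\mathcal{U}_{ij}$ genuinely realises the coordinatewise pinching on the spectrum (not merely majorisation of it) and that the composite RaRe channel reaches the \emph{exact} target spectrum, not just something majorised by it; this is where Permutability, and specifically the simultaneous permutability of an entire pure maximal set, is indispensable, and it is the reason Strong Symmetry (rather than mere transitivity on pure states) is the right hypothesis. A secondary subtlety is handling degenerate spectra in the $1\Rightarrow2$ direction — there I avoid it simply by choosing $\mathbf{p}$ non-degenerate, which is legitimate since we only need one counterexample.
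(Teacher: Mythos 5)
Your proof is correct and follows essentially the same route as the paper: both directions rest on the majorisation characterisation of unital convertibility (theorem~\ref{thm:unital-majorisation}), with a Birkhoff-type decomposition plus unrestricted reversibility giving $2\Rightarrow1$, and a pair of same-spectrum states built on two unconnected pure maximal sets plus lemma~\ref{lem:equivalent states RaRe} giving $1\Rightarrow2$. The only differences are cosmetic: the paper applies Birkhoff's decomposition in a single convex combination of reversible channels (absorbing the basis change into each term) rather than chaining T-transforms followed by a final reversible basis change, and in $1\Rightarrow2$ it extracts $\mathcal{U}\varphi_i=\psi_i$ by iterating a doubly-stochastic-matrix argument rather than by invoking the uniqueness of the diagonalisation, but the substance is identical.
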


\begin{proof}
To prove the implication $2\Rightarrow1$, it is enough to show that
unrestricted reversibility implies that majorisation is sufficient
for the RaRe preorder \cite{QPL15}. Consider two states $\rho$ and
$\sigma$, diagonalised as $\rho=\sum_{i=1}^{d}p_{i}\alpha_{i}$ and
$\sigma=\sum_{i=1}^{d}q_{i}\alpha'_{i}$. Suppose $\mathbf{p}\succeq\mathbf{q}$,
then $\mathbf{q}=D\mathbf{p}$ for some doubly stochastic matrix $D$.
By Birkhoff\textquoteright s theorem \cite{Birkhoff,Olkin} $D$ can
be written as a convex combination of permutation matrices $D=\sum_{k}\lambda_{k}\Pi_{k}$,
where the $\Pi_{k}$'s are permutation matrices, and $\left\{ \lambda_{k}\right\} $
is a probability distribution. Therefore $\mathbf{q}=\sum_{k}\lambda_{k}\Pi_{k}\mathbf{p}$;
specifically, this means that $q_{i}=\sum_{k}\lambda_{k}\sum_{j=1}^{d}\left(\Pi_{k}\right)_{ij}p_{j}$
for every $i\in\left\{ 1,\ldots,d\right\} $. Therefore, we have 
\begin{equation}
\sigma=\sum_{i=1}^{d}q_{i}\alpha'_{i}=\sum_{i=1}^{d}\sum_{k}\lambda_{k}\sum_{j=1}^{d}\left(\Pi_{k}\right)_{ij}p_{j}\alpha'_{i}=\sum_{k}\lambda_{k}\sum_{j=1}^{d}p_{j}\sum_{i=1}^{d}\left(\Pi_{k}\right)_{ij}\alpha'_{i}\label{eq:majorisation-RaRe}
\end{equation}
Now, $\sum_{i=1}^{d}\left(\Pi_{k}\right)_{ij}\alpha'_{i}$ is a pure
state, given by $\alpha'_{\pi_{k}\left(j\right)}$, for a suitable
permutation $\pi_{k}\in S_{d}$, the symmetric group with $d$ elements.
By unrestricted reversibility, the permutation $\pi_{k}$ is implemented
by a reversible channel $\mathcal{V}_{k}$. Moreover, by unrestricted
reversibility there exists a reversible channel $\mathcal{U}$ such
that $\mathcal{U}\alpha_{j}=\alpha'_{j}$ for every $j\in\left\{ 1,\ldots,d\right\} $.
Defining $\mathcal{U}_{k}=\mathcal{V}_{k}\mathcal{U}$, we then have
\[
\mathcal{U}_{k}\alpha_{j}=\mathcal{V}_{k}\alpha'_{j}=\alpha'_{\pi_{k}\left(j\right)}=\sum_{i=1}^{d}\left(\Pi_{k}\right)_{ij}\alpha'_{i},
\]
which, combined with eq.~\eqref{eq:majorisation-RaRe}, yields
\[
\sigma=\sum_{k}\lambda_{k}\sum_{j=1}^{d}p_{j}\mathcal{U}_{k}\alpha_{j}=\sum_{k}\lambda_{k}\mathcal{U}_{k}\rho.
\]
Hence, $\rho\succeq_{\mathrm{RaRe}}\sigma$.

To prove the implication $1\Rightarrow2$, we show that condition~\ref{enu:equivalence}
implies the validity of Strong Symmetry. Let $\left\{ \alpha_{i}\right\} _{i=1}^{d}$
and $\left\{ \alpha'_{i}\right\} _{i=1}^{d}$ be two pure maximal
sets, and let $\left\{ p_{i}\right\} _{i=1}^{d}$ be a probability
distribution, with $p_{1}>p_{2}>\ldots>p_{d}>0$. Consider the two
states $\rho$ and $\sigma$ defined by $\rho=\sum_{i=1}^{d}p_{i}\alpha_{i}$,
and $\sigma=\sum_{i=1}^{d}p_{i}\alpha'_{i}$. Since the two states
$\rho$ and $\sigma$ have the same eigenvalues, the majorisation
criterion guarantees that $\rho$ can be converted into $\sigma$,
and $\sigma$ into $\rho$, by a unital channel (theorem~\ref{thm:unital-majorisation}).
Now, our hypothesis is that convertibility under unital channels implies
convertibility under RaRe channels. The mutual convertibility of $\rho$
and $\sigma$ under RaRe channels implies that there exists a reversible
channel $\mathcal{U}$ such that $\sigma=\mathcal{U}\rho$ (lemma~\ref{lem:equivalent states RaRe}).
Applying the effect $\alpha_{1}'^{\dagger}$ to both sides of the
equality $\sigma=\mathcal{U}\rho$, we obtain 
\[
p_{1}=\left(\alpha_{1}'^{\dagger}\middle|\sigma\right)=\sum_{j=1}^{d}p_{j}\left(\alpha_{1}'^{\dagger}\middle|\mathcal{U}\middle|\alpha_{j}\right)=\sum_{j=1}^{d}D_{1j}p_{j}\le p_{1},
\]
having used the fact that $D_{ij}:=\left(\alpha_{i}'^{\dagger}\middle|\mathcal{U}\middle|\alpha_{j}\right)$
are the entries of a doubly stochastic matrix (lemma~\ref{lem:channelmatrix}).
The above condition is satisfied only if $\left(\alpha_{1}'^{\dagger}\middle|\mathcal{U}\middle|\alpha_{1}\right)=1$.
By the state-effect duality (theorem~\ref{thm:duality}), this condition
is equivalent to the condition 
\begin{equation}
\mathcal{U}\alpha_{1}=\alpha'_{1}.\label{eq:ualpha}
\end{equation}
Now, decompose the states $\rho$ and $\sigma$ as $\rho=p_{1}\alpha_{1}+\left(1-p_{1}\right)\rho_{1}$
and $\sigma=p_{1}\alpha'_{1}+\left(1-p_{1}\right)\sigma_{1}$, where
\[
\rho_{1}:=\frac{\sum_{i=2}^{d}p_{i}\alpha_{i}}{\sum_{i=2}^{d}p_{i}}
\]
and 
\[
\sigma_{1}:=\frac{\sum_{i=2}^{d}p_{i}\alpha'_{i}}{\sum_{i=2}^{d}p_{i}}.
\]
Combining eq.~\eqref{eq:ualpha} with the equality $\mathcal{U}\rho=\sigma$,
we obtain the condition $\mathcal{U}\rho_{1}=\sigma_{1}$. Applying
to $\rho_{1}$ and $\sigma_{1}$ the same argument we used for $\rho$
and $\sigma$, we obtain the equality $\mathcal{U}\alpha_{2}=\alpha'_{2}$.
Iterating the procedure $d-1$ times, we finally obtain the equality
$\mathcal{U}\alpha_{i}=\alpha'_{i}$ for every $i\in\left\lbrace 1,\dots,d\right\rbrace $.
Hence, every two pure maximal sets are connected by a reversible channel.
\end{proof}
Theorem~\ref{thm:unitalrare} gives necessary and sufficient conditions
for the equivalence of the three resource theories of microcanonical
thermodynamics. In addition, it provides a thermodynamic motivation
for the condition of unrestricted reversibility: the equivalence of
the three resource theories of purity. Again, thermodynamics constrains
the underlying structure of a physical theory. We see that from this
argument we can rule out doubled quantum theory and extended classical
theory: in those these theories, majorisation is \emph{not} sufficient,
and they do \emph{not} satisfy unrestricted reversibility.

The results of this section can be summed up in the following theorem:
\begin{thm}
In every sharp theory with purification and unrestricted reversibility,
the following are equivalent
\begin{enumerate}
\item $\rho\succeq_{\mathrm{RaRe}}\sigma$
\item $\rho\succeq_{\mathrm{noisy}}\sigma$
\item $\rho\succeq_{\mathrm{unital}}\sigma$
\item $\mathbf{p}\succeq\mathbf{q}$
\end{enumerate}
for arbitrary normalised states $\rho$ and $\sigma$, where $\mathbf{p}$
and $\mathbf{q}$ are the spectra of $\rho$ and $\sigma$, respectively.
\end{thm}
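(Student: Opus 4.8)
The plan is to assemble a single cycle of implications
\[
(1)\Longrightarrow(2)\Longrightarrow(3)\Longleftrightarrow(4)\Longrightarrow(1),
\]
drawing entirely on results already established in this chapter; no new machinery is required, and the genuinely difficult work has in fact already been done.

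First I would dispose of $(1)\Rightarrow(2)\Rightarrow(3)$, which are immediate from the inclusion relations $\mathsf{RaRe}\subseteq\mathsf{Noisy}\subseteq\mathsf{Unital}$ recorded in eq.~\eqref{eq:inclusions sharp}: a RaRe channel converting $\rho$ into $\sigma$ is in particular a noisy operation, and a noisy operation converting $\rho$ into $\sigma$ is in particular a unital channel, so $\rho\succeq_{\mathrm{RaRe}}\sigma$ implies $\rho\succeq_{\mathrm{noisy}}\sigma$, which in turn implies $\rho\succeq_{\mathrm{unital}}\sigma$. Next, $(3)\Leftrightarrow(4)$ is exactly Theorem~\ref{thm:unital-majorisation}, which asserts that $\rho$ can be converted into $\sigma$ by a unital channel if and only if $\mathbf{p}\succeq\mathbf{q}$; here unrestricted reversibility is not even needed, only the diagonalisation machinery of sharp theories with purification together with Lemmas~\ref{lem:channelmatrix} and~\ref{lem:matrixchannel}.

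The link that closes the cycle is $(4)\Rightarrow(1)$, and this is precisely the content of the implication ``$2\Rightarrow1$'' in the proof of Theorem~\ref{thm:unitalrare}. I would simply invoke that argument: given $\mathbf{p}\succeq\mathbf{q}$, write $\mathbf{q}=D\mathbf{p}$ with $D$ doubly stochastic, decompose $D=\sum_k\lambda_k\Pi_k$ as a convex combination of permutation matrices by Birkhoff's theorem, realise each eigenbasis permutation $\pi_k$ by a reversible channel $\mathcal{V}_k$ via Permutability, combine it with a fixed reversible channel $\mathcal{U}$ sending the eigenstates of $\rho$ to those of $\sigma$ via Strong Symmetry, and read off the RaRe channel $\sum_k\lambda_k\mathcal{V}_k\mathcal{U}$ taking $\rho$ to $\sigma$. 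Since Permutability, Strong Symmetry, and Reversible Controllability coincide in this setting (Proposition~\ref{prop:permstrong}), ``unrestricted reversibility'' is exactly what licenses these two steps.

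There is no real obstacle here — the theorem is a synthesis of Theorems~\ref{thm:unital-majorisation} and~\ref{thm:unitalrare} — so the only thing to be careful about is that the cycle of implications is logically complete, which it is: from $(1)\Rightarrow(2)\Rightarrow(3)$, $(3)\Leftrightarrow(4)$, and $(4)\Rightarrow(1)$ one deduces that all four statements are mutually equivalent. As a by-product one also obtains $(2)\Leftrightarrow(3)$, i.e.\ the noisy resource theory collapses onto the unital one as well, so under unrestricted reversibility all three microcanonical resource theories, and the majorisation preorder on spectra, define one and the same notion of resource.
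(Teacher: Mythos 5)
Your proposal is correct and follows essentially the same route as the paper: the chain $(1)\Rightarrow(2)\Rightarrow(3)$ from the inclusions~\eqref{eq:inclusions sharp}, the equivalence $(3)\Leftrightarrow(4)$ from theorem~\ref{thm:unital-majorisation}, and the closing implication $(4)\Rightarrow(1)$ from the relevant direction of theorem~\ref{thm:unitalrare}. The explicit Birkhoff-decomposition argument you sketch for $(4)\Rightarrow(1)$ is exactly the one the paper uses inside the proof of theorem~\ref{thm:unitalrare}, so nothing further is needed.
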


\begin{proof}
The implications $1\Rightarrow2$ and $2\Rightarrow3$ follow from
the inclusions~\eqref{eq:inclusions sharp}. The implication $3\Rightarrow4$
follows from theorem~\ref{thm:unital-majorisation}. The implication
$4\Rightarrow1$ follows from theorem~\ref{thm:unitalrare}.
\end{proof}
Theorem~\ref{thm:unitalrare} tells us that the RaRe, noisy, and
unital resource theories are all equivalent in terms of state convertibility.
It is important to stress that the equivalence holds despite the fact
that the three sets of operations are generally different. Since the
preorders $\succeq_{\mathrm{RaRe}}$, $\succeq_{\mathsf{\mathrm{noisy}}}$,
and $\succeq_{\mathrm{unital}}$ coincide, we can say that the RaRe,
noisy, and unital resource theories define the same notion of resource,
which one may rightfully call ``\emph{purity}''. Accordingly, we
will talk about ``the resource theory of purity'' without specifying
the set of free operations. 

An important consequence of the equivalence is that the RaRe, noisy,
and unital resource theories have the same quantitative measures of
resourcefulness.
\begin{prop}
Let $P:\mathsf{St}_{1}\left(\mathrm{A}\right)\to\mathbb{R}$ be a
real-valued function on the state space of system $\mathrm{A}$. If
$P$ is a monotone under one of the sets $\mathsf{RaRe}$, $\mathsf{Noisy}$
and $\mathsf{Unital}$, then it is a monotone under all the other
sets.
\end{prop}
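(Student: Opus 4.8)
The plan is to leverage the equivalence of the three resource preorders established in the immediately preceding theorem. By hypothesis, the theory is a sharp theory with purification and unrestricted reversibility, so by that theorem we have $\rho\succeq_{\mathrm{RaRe}}\sigma$ if and only if $\rho\succeq_{\mathrm{noisy}}\sigma$ if and only if $\rho\succeq_{\mathrm{unital}}\sigma$, for all normalised states $\rho,\sigma$ of the same system. In other words, the three preorders on $\mathsf{St}_{1}\left(\mathrm{A}\right)$ coincide as relations.

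The argument is then essentially immediate from the definition of monotone. Recall that $P$ is a monotone under a set of free operations $F$ if $\rho\succeq_{F}\sigma$ implies $P\left(\rho\right)\geq P\left(\sigma\right)$. Suppose $P$ is a monotone under $\mathsf{Unital}$, say; I would argue it is a monotone under $\mathsf{RaRe}$ as follows. Take any $\rho,\sigma\in\mathsf{St}_{1}\left(\mathrm{A}\right)$ with $\rho\succeq_{\mathrm{RaRe}}\sigma$. By the preceding theorem, $\rho\succeq_{\mathrm{RaRe}}\sigma$ implies $\rho\succeq_{\mathrm{unital}}\sigma$ (this is one of the inclusions $\mathsf{RaRe}\subseteq\mathsf{Noisy}\subseteq\mathsf{Unital}$, which hold in every sharp theory with purification anyway), and since $P$ is a unital monotone, $P\left(\rho\right)\geq P\left(\sigma\right)$. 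Hence $P$ is a RaRe monotone. The same reasoning handles $\mathsf{Noisy}$, and symmetrically one goes in the other direction: if $P$ is a RaRe monotone and $\rho\succeq_{\mathrm{unital}}\sigma$, then by the theorem $\rho\succeq_{\mathrm{RaRe}}\sigma$, whence $P\left(\rho\right)\geq P\left(\sigma\right)$, so $P$ is a unital monotone. Running through all six ordered pairs of the three sets (or simply observing that the three preorders are literally the same relation, so ``monotone under $\succeq_{\mathrm{RaRe}}$'', ``monotone under $\succeq_{\mathrm{noisy}}$'', and ``monotone under $\succeq_{\mathrm{unital}}$'' are the same condition on $P$) completes the proof.

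There is no real obstacle here: the entire content has been pushed into the equivalence theorem, whose proof in turn rests on Theorem~\ref{thm:unitalrare} (unrestricted reversibility makes majorisation sufficient for RaRe convertibility) and Theorem~\ref{thm:unital-majorisation} (majorisation characterises the unital preorder). The only thing to be careful about in writing is to state explicitly that we are comparing states of the \emph{same} system, since RaRe channels have equal input and output systems and the equivalence of the preorders is formulated in that setting; so the monotones in question are functions $P:\mathsf{St}_{1}\left(\mathrm{A}\right)\to\mathbb{R}$ for a fixed $\mathrm{A}$, and the claim is about all such $\mathrm{A}$. If one wanted to be maximally explicit, one could also note that this proposition, together with the completeness results of Chapter~\ref{chap:Resource-theories}, shows that the three resource theories of purity share the same complete family of monotones, but that is a remark rather than part of the proof.
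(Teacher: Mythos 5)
Your proof is correct and matches the paper's intent exactly: the proposition is stated there as an immediate consequence of the preceding equivalence theorem, since once the three preorders coincide as relations the defining condition for a monotone is literally the same for all three sets of free operations. Your added care about comparing states of the same fixed system is a sensible clarification but does not change the argument.
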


In the light of proposition~\ref{prop:unital monotones}, this means
that \emph{all} purity monotones in \emph{all three} resource theories
are of the form $P\left(\rho\right)=f\left(\mathbf{p}\right)$, where
$f$ is a Schur-convex function, and $\mathbf{p}$ the spectrum of
$\rho$. Dually, all mixedness monotones are of the form $M\left(\rho\right)=f\left(\mathbf{p}\right)$,
where this time $f$ is a Schur-concave function.

\section{Entanglement-thermodynamics duality}

We conclude the chapter by showing that sharp theories with purification
and unrestricted reversibility exhibit a fundamental duality between
the resource theory of purity and the resource theory of pure bipartite
entanglement \cite{Chiribella-Scandolo-entanglement}, where free
operations are LOCC channels \cite{LOCC1,LOCC2,Lo-Popescu}. The content
of the duality is that a pure bipartite state is more entangled than
another if and only if the marginal states of the latter are purer
than the marginal states of the former. More formally, the duality
can be stated as follows \cite{Chiribella-Scandolo-entanglement}.
\begin{defn}
A theory satisfies the \emph{entanglement-thermodynamics duality}
if, for every pair of systems $\mathrm{A}$ and $\mathrm{B}$, and
every pair of pure states $\Phi,\Psi\in\mathsf{PurSt}_{1}\left(\mathrm{AB}\right)$
the following are equivalent:
\begin{enumerate}
\item $\Psi$ can be converted into $\Phi$ by local operations and classical
communication\footnote{Note that classical communication between the two parties can be easily
modelled as an example of classical control, allowed by Causality:
Bob chooses what to do based on the classical outcome of a previous
test performed by Alice.};
\item the marginal of $\Phi$ on system $\mathrm{A}$ can be converted into
the marginal of $\Psi$ on system $\mathrm{A}$ by a RaRe channel;
\item the marginal of $\Phi$ on system $\mathrm{B}$ can be converted into
the marginal of $\Psi$ on system $\mathrm{B}$ by a RaRe channel.
\end{enumerate}
\end{defn}

The duality can be illustrated by the diagrams
\[
\begin{CD}\Psi@>\textrm{LOCC}>>\Phi\\
@V\mathrm{tr}_{\mathrm{B}}VV@VV\mathrm{tr}_{\mathrm{B}}V\\
\rho_{\mathrm{A}}@<\textrm{RaRe}<<\rho'_{\mathrm{A}}
\end{CD}\qquad\qquad\begin{CD}\Psi@>\textrm{LOCC}>>\Phi\\
@V\mathrm{tr}_{\mathrm{A}}VV@VV\mathrm{tr}_{\mathrm{A}}V\\
\rho_{\mathrm{B}}@<\textrm{RaRe}<<\rho'_{\mathrm{B}}
\end{CD}\quad.
\]

Our earlier work \cite{Chiribella-Scandolo-entanglement} showed that
the entanglement-thermodynamics duality can be proved from four axioms:
Causality, Purity Preservation, Purification, and Local Exchangeability\textemdash the
last defined as follows.
\begin{ax}[Local Exchangeability \cite{Chiribella-Scandolo-entanglement}]
For every pair of systems $\mathrm{A}$ and $\mathrm{B}$, and for
every pure state $\Psi\in\mathsf{PurSt}_{1}\left(\mathrm{AB}\right)$
there exist two channels $\mathcal{C}\in\mathsf{Transf}\left(\mathrm{A},\mathrm{B}\right)$
and $\mathcal{D}\in\mathsf{Transf}\left(\mathrm{B},\mathrm{A}\right)$,
which in general depend on $\Psi$, such that \[ \begin{aligned}\Qcircuit @C=1em @R=.7em @!R { &\multiprepareC{1}{\Psi} & \qw \poloFantasmaCn{\rA} &\gate{\cC} & \qw \poloFantasmaCn{\rB} & \qw \\ & \pureghost{\Psi} & \qw \poloFantasmaCn{\rB} & \gate{\cD} & \qw \poloFantasmaCn{\rA} & \qw } \end{aligned} ~=\!\!\!\! \begin{aligned}\Qcircuit @C=1em @R=.7em @!R { &\multiprepareC{1}{\Psi} & \qw \poloFantasmaCn{\rA} &\multigate{1}{\tt SWAP} & \qw \poloFantasmaCn{\rB} & \qw \\ & \pureghost{\Psi} & \qw \poloFantasmaCn{\rB} & \ghost{\tt SWAP} & \qw \poloFantasmaCn{\rA} & \qw } \end{aligned} ~. \]
\end{ax}

Since Causality, Purity Preservation, and Purification are already
assumed among our axioms, proving the entanglement-thermodynamics
duality is reduced to proving the validity of Local Exchangeability.
\begin{prop}
Every sharp theory with purification and unrestricted reversibility
satisfies Local Exchangeability.
\end{prop}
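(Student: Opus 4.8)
The plan is to build the two local channels explicitly, using the operational Schmidt decomposition (Theorem~\ref{thm:schmidt}) together with unrestricted reversibility to ``coherently relabel'' one half of $\Psi$, and a purification-uniqueness argument to synthesise the map acting on the other half. Throughout I will also use a small reusable lemma: if $\Phi_{1}\in\mathsf{PurSt}_{1}(\mathrm{SE}_{1})$ and $\Phi_{2}\in\mathsf{PurSt}_{1}(\mathrm{SE}_{2})$ have the same marginal on $\mathrm{S}$, then there is a channel $\mathcal{M}\colon\mathrm{E}_{1}\to\mathrm{E}_{2}$ with $(\mathcal{I}_{\mathrm{S}}\otimes\mathcal{M})\Phi_{1}=\Phi_{2}$. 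This follows by tensoring a fixed pure state $\gamma\in\mathsf{PurSt}_{1}(\mathrm{E}_{2})$ onto $\Phi_{1}$ and a fixed pure state $\delta\in\mathsf{PurSt}_{1}(\mathrm{E}_{1})$ onto $\Phi_{2}$: by Purity Preservation both are pure, both purify the common marginal with the same purifying system $\mathrm{E}_{1}\mathrm{E}_{2}\approx\mathrm{E}_{2}\mathrm{E}_{1}$, so uniqueness of purification connects them by a reversible channel $\mathcal{W}$, and one sets $\mathcal{M}:=(u_{\mathrm{E}_{1}}\otimes\mathcal{I}_{\mathrm{E}_{2}})\,\mathcal{W}\,(\,\cdot\,\otimes\gamma)$. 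A special case, also needed below, is that a pure state of a composite system whose marginal on one factor is pure must be a product of pure states.

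First I would apply Schmidt decomposition to $\Psi\in\mathsf{PurSt}_{1}(\mathrm{AB})$, obtaining pure sharp measurements $\{a_{i}\}_{i=1}^{d_{\mathrm{A}}}$ on $\mathrm{A}$ and $\{b_{j}\}_{j=1}^{d_{\mathrm{B}}}$ on $\mathrm{B}$, probabilities $p_{i}>0$ for $i\le r\le\min\{d_{\mathrm{A}},d_{\mathrm{B}}\}$ with $(a_{i}\otimes b_{j}|\Psi)=p_{i}\delta_{ij}$, and diagonalisations $\rho_{\mathrm{A}}=\sum_{i\le r}p_{i}\alpha_{i}$, $\rho_{\mathrm{B}}=\sum_{i\le r}p_{i}\beta_{i}$ of the marginals, where $\alpha_{i}:=a_{i}^{\dagger}$, $\beta_{i}:=b_{i}^{\dagger}$ are parts of pure maximal sets of $\mathrm{A}$ and $\mathrm{B}$. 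Note $\mathtt{SWAP}\,\Psi$ is a pure state of $\mathrm{BA}$ with marginals $\rho_{\mathrm{B}}$ and $\rho_{\mathrm{A}}$, so it is in particular a purification of $\rho_{\mathrm{B}}$ with purifying system $\mathrm{A}$.

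Next I would construct $\mathcal{C}\colon\mathrm{A}\to\mathrm{B}$. Fix pure states $\eta\in\mathsf{PurSt}_{1}(\mathrm{B})$ and $\theta_{0}\in\mathsf{PurSt}_{1}(\mathrm{A})$. The states $\{\alpha_{i}\otimes\eta\}_{i}$ are perfectly distinguishable pure states of $\mathrm{AB}$ and $\{\beta_{i}\otimes\theta_{0}\}_{i}$ are perfectly distinguishable pure states of $\mathrm{BA}$; extend each to a pure maximal set of $\mathrm{AB}\approx\mathrm{BA}$ and invoke unrestricted reversibility in the Strong Symmetry form to get a reversible channel $\mathcal{U}\colon\mathrm{AB}\to\mathrm{BA}$ with $\mathcal{U}(\alpha_{i}\otimes\eta)=\beta_{i}\otimes\theta_{0}$ for all $i\le\min\{d_{\mathrm{A}},d_{\mathrm{B}}\}$. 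Put $\mathcal{V}:=\mathcal{U}(\,\cdot\,\otimes\eta)$, which is pure (Purity Preservation, using that reversible channels are pure in these theories), and $\mathcal{C}:=(\mathcal{I}_{\mathrm{B}}\otimes u)\,\mathcal{V}$. Since $\mathcal{V}\rho_{\mathrm{A}}=\sum_{i\le r}p_{i}(\beta_{i}\otimes\theta_{0})=\rho_{\mathrm{B}}\otimes\theta_{0}$, the pure state $(\mathcal{V}\otimes\mathcal{I}_{\mathrm{B}})\Psi$ has pure marginal $\theta_{0}$ on the factor $\mathcal{C}$ discards, hence equals $\Psi'\otimes\theta_{0}$ with $\Psi'\in\mathsf{PurSt}_{1}(\mathrm{BB})$; discarding that factor gives $(\mathcal{C}\otimes\mathcal{I}_{\mathrm{B}})\Psi=\Psi'$, a pure state of $\mathrm{BB}$ both of whose marginals are $\rho_{\mathrm{B}}$.

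Finally, $\Psi'$ and $\mathtt{SWAP}\,\Psi$ are both purifications of $\rho_{\mathrm{B}}$, with purifying systems $\mathrm{B}$ and $\mathrm{A}$ respectively, so the lemma yields a channel $\mathcal{D}\colon\mathrm{B}\to\mathrm{A}$ with $(\mathcal{I}_{\mathrm{B}}\otimes\mathcal{D})\Psi'=\mathtt{SWAP}\,\Psi$. Composing, $(\mathcal{C}\otimes\mathcal{D})\Psi=(\mathcal{I}_{\mathrm{B}}\otimes\mathcal{D})(\mathcal{C}\otimes\mathcal{I}_{\mathrm{B}})\Psi=(\mathcal{I}_{\mathrm{B}}\otimes\mathcal{D})\Psi'=\mathtt{SWAP}\,\Psi$, which is precisely Local Exchangeability. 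The delicate step is the construction of $\mathcal{C}$: because a locally applied channel generically leaks information to its environment, one must ensure that no such leakage occurs on the inputs that actually arise in $\Psi$. This is exactly why $\mathcal{C}$ is engineered through a $\mathcal{U}$ that sends the eigenstates of $\rho_{\mathrm{A}}$ to product states with a fixed pure ancilla state, and it is the point where unrestricted reversibility is indispensable --- plain transitivity of the reversible group on pure states would only align one eigenstate of $\rho_{\mathrm{A}}$ at a time, whereas Strong Symmetry aligns the whole maximal set simultaneously.
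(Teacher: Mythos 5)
Your proof is correct, and it shares the paper's skeleton: Schmidt-decompose $\Psi$, use unrestricted reversibility on a composite system to transport the Schmidt eigenstates of $\rho_{\mathrm{A}}$ onto those of $\rho_{\mathrm{B}}$, and then invoke the uniqueness of purification to correct the resulting state to $\mathtt{SWAP}\,\Psi$. The execution differs in two genuine ways. First, the paper builds \emph{both} local maps symmetrically from unrestricted reversibility, each as a pure transformation of the form ``prepare a fixed eigenstate, apply a reversible channel supplied by Permutability, apply a pure effect $\alpha_1^{\dagger}$ or $\beta_1^{\dagger}$''; it must then separately argue that these pure transformations are deterministic, and it corrects them with \emph{local reversible} channels $\mathcal{W}_{\mathrm{A}},\mathcal{W}_{\mathrm{B}}$. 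You instead build only $\mathcal{C}$ from unrestricted reversibility, via an ancilla-and-discard construction: determinism is then automatic, and the role of the pure-effect step is played by your factorisation argument (pure global state with a pure marginal is a product). The second channel $\mathcal{D}$ you obtain entirely from Purification, through the standard ``channel connecting two purifications of the same state'' lemma, with no further use of unrestricted reversibility. Second, as a consequence of discarding, your $\mathcal{C}$ and $\mathcal{D}$ are not manifestly pure, whereas the paper's construction yields \emph{pure} local channels and explicitly records this (it is used nowhere in the duality, but it is a strictly stronger conclusion). Both routes are sound; yours is slightly more economical in its use of unrestricted reversibility, the paper's buys purity of the simulating channels.
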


\begin{proof}
Let $\Psi\in\mathsf{PurSt}_{1}\left(\mathrm{AB}\right)$ be a generic
pure state and let $\rho_{\mathrm{A}}$ and $\rho_{\mathrm{B}}$ be
its marginal states, diagonalised as $\rho_{\mathrm{A}}=\sum_{i=1}^{r}p_{i}\alpha_{i}$
and $\rho_{\mathrm{B}}=\sum_{i=1}^{r}p_{i}\beta_{i}$, where $p_{i}>0$
for all $i=1,\dots,r$, and $r\leq\min\left\lbrace d_{\mathrm{A}},d_{\mathrm{B}}\right\rbrace $.
Here we are invoking Schmidt decomposition (theorem~\ref{thm:schmidt}),
by which the marginals of a pure bipartite state have the same spectrum
up to vanishing elements. Now, we extend the set of eigenstates of
$\rho_{\mathrm{A}}$ and $\rho_{\mathrm{B}}$ to two pure maximal
sets. Without loss of generality assume $d_{\mathrm{A}}\leq d_{\mathrm{B}}$.
By Permutability, there must exist a reversible channel $\mathcal{U}\in\mathsf{Transf}\left(\mathrm{BA},\mathrm{AB}\right)$
such that\footnote{Strictly speaking $\mathrm{BA}\neq\mathrm{AB}$, whereas Permutability
refers to states of the same system. This can be easily accommodated
by inserting the $\mathtt{SWAP}$ channel suitably.} $\mathcal{U}\left(\beta_{1}\otimes\alpha_{i}\right)=\alpha_{1}\otimes\beta_{i}$
for every $i\in\left\{ 1,\ldots,d_{\mathrm{A}}\right\} $. Similarly,
there must exist a reversible channel $\mathcal{V}\in\mathsf{Transf}\left(\mathrm{BA},\mathrm{AB}\right)$
such that $\mathcal{V}\left(\beta_{i}\otimes\alpha_{1}\right)=\alpha_{i}\otimes\beta_{1}$
for every $i\in\left\{ 1,\ldots,d_{\mathrm{A}}\right\} $ At this
point, we define the pure transformations\[ \begin{aligned} \Qcircuit @C=1em @R=.7em @!R { &\qw \poloFantasmaCn{\rA} & \gate{\mathcal P} & \qw \poloFantasmaCn{\rB} &\qw } \end{aligned} ~:=\!\!\!\! \begin{aligned} \Qcircuit @C=1em @R=.7em @!R { & \prepareC{\beta_1} & \qw \poloFantasmaCn{\rB} & \multigate{1}{\mathcal U} & \qw \poloFantasmaCn{\rA} & \measureD{\alpha_1^\dag} \\ & & \qw \poloFantasmaCn{\rA} & \ghost{\mathcal U} & \qw \poloFantasmaCn{\rB} & \qw } \end{aligned}~, \]\[\begin{aligned} \Qcircuit @C=1em @R=.7em @!R { &\qw \poloFantasmaCn{\rB} & \gate{\mathcal Q} & \qw \poloFantasmaCn{\rA} &\qw } \end{aligned} ~:=\!\!\!\! \begin{aligned} \Qcircuit @C=1em @R=.7em @!R { & & \qw \poloFantasmaCn{\rB} & \multigate{1}{\mathcal V} & \qw \poloFantasmaCn{\rA} & \qw \\ & \prepareC{\alpha_1} & \qw \poloFantasmaCn{\rA} & \ghost{\mathcal V} & \qw \poloFantasmaCn{\rB} & \measureD{\beta_1^\dag} } \end{aligned}~. \]and
the pure state\[ \begin{aligned} \Qcircuit @C=1em @R=.7em @!R { & \multiprepareC{1}{\Psi'} & \qw \poloFantasmaCn{\rB} &\qw \\ & \pureghost{\Psi'} & \qw \poloFantasmaCn{\rA} & \qw }\end{aligned} ~:= \!\!\!\! \begin{aligned}\Qcircuit @C=1em @R=.7em @!R { & \multiprepareC{1}{\Psi} & \qw \poloFantasmaCn{\rA} & \gate{\mathcal P} & \qw \poloFantasmaCn{\rB} &\qw \\ & \pureghost{\Psi} & \qw \poloFantasmaCn{\rB} & \gate{\mathcal Q} & \qw \poloFantasmaCn{\rA} &\qw }\end{aligned}~, \]where
the purity of $\mathcal{P}$, $\mathcal{Q}$, and $\Psi'$ follows
from Purity Preservation. Like in the proof of proposition~\ref{prop:permstrong},
we can prove that $\mathcal{P}$ and $\mathcal{Q}$ are in fact channels,
so $u_{\mathrm{B}}\mathcal{P}=u_{\mathrm{A}}$ and $u_{\mathrm{A}}\mathcal{Q}=u_{\mathrm{B}}$.
Hence $\Psi'$ and ${\tt SWAP}\Psi$ have the same marginals. Then,
the uniqueness of purification applied to both systems $\mathrm{A}$
and $\mathrm{B}$ (viewed as purifying systems of one another) implies
that there exist two reversible channels $\mathcal{W}_{\mathrm{A}}$
and $\mathcal{W}_{\mathrm{B}}$ such that\[ \begin{aligned}\Qcircuit @C=1em @R=.7em @!R { &\multiprepareC{1}{\Psi} & \qw \poloFantasmaCn{\rA} &\multigate{1}{\tt SWAP} & \qw \poloFantasmaCn{\rB} & \qw \\ & \pureghost{\Psi} & \qw \poloFantasmaCn{\rB} & \ghost{\tt SWAP} & \qw \poloFantasmaCn{\rA} & \qw } \end{aligned}~=\!\!\!\!\begin{aligned}\Qcircuit @C=1em @R=.7em @!R { & \multiprepareC{1}{\Psi'} & \qw \poloFantasmaCn{\rB} & \gate{\mathcal W_\rB} & \qw \poloFantasmaCn{\rB} &\qw \\ & \pureghost{\Psi'} & \qw \poloFantasmaCn{\rA} & \gate{\mathcal W_\rA} & \qw \poloFantasmaCn{\rA} &\qw }\end{aligned} ~= \]\[ =\!\!\!\!\begin{aligned}\Qcircuit @C=1em @R=.7em @!R { & \multiprepareC{1}{\Psi} & \qw \poloFantasmaCn{\rA} & \gate{\mathcal P} & \qw \poloFantasmaCn{\rB} & \gate{\mathcal W_\rB} & \qw \poloFantasmaCn{\rB} &\qw \\ & \pureghost{\Psi} & \qw \poloFantasmaCn{\rB} & \gate{\mathcal Q} & \qw \poloFantasmaCn{\rA} & \gate{\mathcal W_\rA} & \qw \poloFantasmaCn{\rA} &\qw }\end{aligned}~. \]Hence,
we have shown that there exist two local \emph{pure} channels $\mathcal{C}:=\mathcal{W}_{\mathrm{B}}\mathcal{P}$
and $\mathcal{D}:=\mathcal{W}_{\mathrm{A}}\mathcal{Q}$ that reproduce
the action of the swap channel on the state $\Psi$.
\end{proof}
Note that Local Exchangeability is implemented by \emph{pure} channels
in sharp theories with purification and unrestricted reversibility.

To sum up, every sharp theory with purification and unrestricted reversibility
satisfies the entanglement-thermodynamics duality. As a consequence
of the duality, mixedness monotones, characterised at the end of the
previous section, are in one-to-one correspondence with measures of
pure bipartite entanglement. For example, Shannon-von Neumann entropy
of the marginals of a pure bipartite state can be regarded as the
\emph{entanglement entropy} \cite{Entanglement-entropy1,Entanglement-entropy2,Janzing2009},
an entropic measure of entanglement that is playing an increasingly
important role in quantum field theory \cite{Ryu1,Ryu2} and condensed
matter \cite{Area-law}. 

In passing, we also mention that the validity of Local Exchangeability
implies that every state admits a \emph{symmetric purification} \cite[theorem 3]{Chiribella-Scandolo-entanglement},
in the following sense.
\begin{defn}
Let $\rho$ be a state of system $\mathrm{A}$ and let $\Psi$ be
a pure state of $\mathrm{AA}$. We say that $\Psi$ is a \emph{symmetric
purification} of $\rho$ if\[ \begin{aligned}\Qcircuit @C=1em @R=.7em @!R { & \multiprepareC{1}{\Psi} & \qw \poloFantasmaCn{\rA} & \qw \\ & \pureghost{\Psi} & \qw \poloFantasmaCn{\rA} & \measureD{u}}\end{aligned}~=\!\!\!\!\begin{aligned}\Qcircuit @C=1em @R=.7em @!R {& \prepareC {\rho} & \qw \poloFantasmaCn{\rA} & \qw }\end{aligned}~, \]and\[ \begin{aligned}\Qcircuit @C=1em @R=.7em @!R { & \multiprepareC{1}{\Psi} & \qw \poloFantasmaCn{\rA} & \measureD{u} \\ &\pureghost{\Psi} &\qw \poloFantasmaCn{\rA} &\qw}\end{aligned}~=\!\!\!\!\begin{aligned}\Qcircuit @C=1em @R=.7em @!R {& \prepareC {\rho} & \qw \poloFantasmaCn{\rA} & \qw }\end{aligned}~. \] 
\end{defn}

With the above notation, we have the following.
\begin{prop}
In every sharp theory with purification and unrestricted reversibility,
every state admits a symmetric purification.
\end{prop}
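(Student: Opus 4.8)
The plan is to use the just-established Local Exchangeability together with the existing result from \cite{Chiribella-Scandolo-entanglement} that Local Exchangeability guarantees symmetric purifications. Concretely, since this chapter has already shown (the preceding proposition) that every sharp theory with purification and unrestricted reversibility satisfies Local Exchangeability, and since the paper cites \cite[theorem 3]{Chiribella-Scandolo-entanglement} for the implication ``Local Exchangeability $\Rightarrow$ existence of symmetric purifications,'' the statement follows by simply chaining these two facts. So the first and essentially only step is to invoke these two results in sequence. However, since the whole spirit of the surrounding sections is to give self-contained proofs under the axioms of sharp theories with purification (the paper repeatedly re-derives things ``without Local Tomography''), I would also sketch the direct construction.

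First I would take an arbitrary state $\rho$ of a system $\mathrm{A}$ and pick any purification $\Psi \in \mathsf{PurSt}_1(\mathrm{AB})$ of $\rho$, which exists by Purification. The goal is to produce a pure state $\Psi'$ of $\mathrm{AA}$ whose two marginals are both $\rho$. The natural move is to first purify $\rho$ with a \emph{copy} of $\mathrm{A}$ itself: one needs a purification $\Phi \in \mathsf{PurSt}_1(\mathrm{AA})$ of $\rho$. Such a $\Phi$ exists because $\mathrm{A}$ is a valid purifying system for any state of $\mathrm{A}$ of Schmidt rank at most $d_{\mathrm{A}}$, which every state of $\mathrm{A}$ satisfies by the Schmidt decomposition (theorem~\ref{thm:schmidt}) — indeed, diagonalising $\rho = \sum_{i=1}^r p_i \alpha_i$ with $r \le d_{\mathrm{A}}$, one can write down an explicit Schmidt-form pure state $\Phi$ of $\mathrm{AA}$ with Schmidt coefficients $\sqrt{p_i}$ relative to the pure maximal set $\{\alpha_i\}$ and a second pure maximal set, so that both marginals of $\Phi$ equal $\rho$. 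Wait — that last clause needs care: the two marginals of a generic Schmidt-form state have the \emph{same spectrum} but need not literally be the same state; making them equal is precisely where Local Exchangeability enters. So the correct route is: let $\Phi$ be \emph{any} purification of $\rho$ on the purifying system $\mathrm{A}$ (same dimension, so this works), and let $\rho_{\mathrm{A}}^{(1)}, \rho_{\mathrm{A}}^{(2)}$ be its two marginals. By Schmidt decomposition these have the same spectrum, hence (by Strong Symmetry / unrestricted reversibility) there is a reversible channel $\mathcal{U}$ on $\mathrm{A}$ with $\mathcal{U}\rho_{\mathrm{A}}^{(2)} = \rho$ while $\rho_{\mathrm{A}}^{(1)}$ is already $\rho$ — actually we want both to be $\rho$, so we should start from a $\Phi$ whose first marginal is already $\rho$ and then correct the second marginal by applying such a $\mathcal{U}$ to the second copy; $\Psi' := (\mathcal{I}_{\mathrm{A}} \otimes \mathcal{U})\Phi$ is pure by Purity Preservation, has first marginal $\rho$ (unchanged) and second marginal $\mathcal{U}\rho_{\mathrm{A}}^{(2)} = \rho$. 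This $\Psi'$ is the desired symmetric purification.

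The main obstacle I anticipate is getting the bookkeeping of the two marginals right: one must ensure that the reversible correction applied to the second copy does not disturb the first marginal (immediate, since it is a local operation on the second tensor factor), and one must verify that a reversible channel equating the two marginals actually exists — this is exactly where unrestricted reversibility (in the Strong Symmetry formulation of subsection~\ref{subsec:Unrestricted-reversibility}) is indispensable, because the two marginals share a spectrum and hence admit a common diagonalising pure maximal set on each side, and Strong Symmetry connects those two maximal sets by a reversible channel. In a purely expository write-up I would present this explicitly; but given that the paper has already done the heavy lifting (Local Exchangeability is proved just above, and \cite[theorem 3]{Chiribella-Scandolo-entanglement} is cited for the symmetric-purification consequence), the cleanest proof is the one-liner: by the previous proposition the theory satisfies Local Exchangeability, and by \cite[theorem 3]{Chiribella-Scandolo-entanglement} this implies every state admits a symmetric purification.
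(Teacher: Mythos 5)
Your one-line argument is exactly what the paper does: it offers no independent proof, but simply combines the preceding proposition (Local Exchangeability holds under these axioms) with \cite[theorem 3]{Chiribella-Scandolo-entanglement}, so your proposal is correct and takes essentially the same approach. The supplementary direct construction you sketch is not needed and rests on an unverified premise — that every state of $\mathrm{A}$ admits a purification whose purifying system is a second copy of $\mathrm{A}$, which the Purification axiom does not by itself guarantee — so I would drop it or justify that step before relying on it.
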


\chapter{Conclusions and outlook\label{chap:Conclusions}}

In this DPhil thesis, the first doctoral thesis on this topic, we
studied the foundations of thermodynamics and statistical mechanics
using the tool of general probabilistic theories. In particular, we
focused on microcanonical thermodynamics, describing systems where
the energy is known and fixed. For the first time, we extended the
microcanonical framework to arbitrary physical theories, beyond the
known and well-studied cases of classical and quantum theory. Being
a general paradigm, one expects thermodynamics to be valid in all
physical theories, yet we discovered that, in order to have a physically
meaningful microcanonical thermodynamics, we need to impose two requirements
on the underlying theory. These two requirements are:
\begin{enumerate}
\item for every system there is a unique invariant probability measure;\label{enu:MC1}
\item the product of microcanonical states is still a microcanonical state.\label{enu:MC2}
\end{enumerate}
Requirement~\ref{enu:MC1} implies that for every system, once the
energy is fixed, the microcanonical state is uniquely determined.
This is because that invariant probability measure is used to define
the microcanonical state. Requirement~\ref{enu:MC2} expresses the
stability of the equilibrium state under parallel composition, and
it is similar in spirit to the notion of \emph{complete passivity}
for quantum thermal states \cite{Pusz,Lenard,Sparaciari-passive}.
In other words, there is no ``activation process'' when composing
equilibrium states that could bring us out of equilibrium (and therefore
extract work).

Specifically, requirement~\ref{enu:MC1} is fully equivalent to one
of the conditions (transitivity) that appeared in several reconstructions
of quantum theory from first principles: the fact that for every pair
of pure states there exists a reversible channel connecting them \cite{Hardy-informational-1,Chiribella-informational,Brukner,Masanes-physical-derivation,Hardy-informational-2,Masanes+all}.
Therefore, our results offer a new perspective on transitivity, this
time from thermodynamics: transitivity is a necessary condition to
have a well-posed microcanonical thermodynamics. Since thermodynamics
is not considered a fundamental theory, but rather emergent from an
underlying theory through the paradigm of statistical mechanics, we
usually expect that it is the underlying theory to constrain the thermodynamic
behaviour. Instead, in our case we found the opposite: the reasonable
thermodynamic desiderata of requirements~\ref{enu:MC1} and \ref{enu:MC2}
constrain the underlying theory.

For theories satisfying requirements~\ref{enu:MC1} and \ref{enu:MC2}
we set up a resource-theoretic treatment, where the microcanonical
state was taken as the only free state, and we chose three different
sets of free operations, with similar definitions to quantum theory:
\begin{enumerate}
\item random reversible (RaRe) channels;
\item noisy operations;
\item unital channels.
\end{enumerate}
We studied these three resource theories in a class of physical theories,
sharp theories with purification, which we propose as an axiomatic
foundation for statistical mechanics. The fundamental feature of these
theories is that they admit a level of description where all states
are pure, and all evolutions reversible. By this we mean that every
mixed state can be written as the marginal of a pure state of a larger
system, and that every channel can be written as discarding one of
the outputs of a bipartite reversible channel. Therefore, in these
theories ``impurity'', partial information (viz.\ probabilistic
mixtures), and irreversibility arise because of discarding. There
is no need for the presence of an external agent who assigns probabilities
subjectively or performs some coarse-graining. For this reason, sharp
theories with purification are particularly suitable as candidate
theories to solve the remaining tension between the pure and deterministic
character of the fundamental dynamics, and the subjectivity of statistical
ensembles. In these theories, mixed states, in particular thermal
states, and their associated probabilities arise because in the thermodynamic
description one is tracing out some degrees of freedom, typically
those of the environment. This opens the way to the derivation of
equilibrium states from typicality arguments based on entanglement
\cite{Popescu-Short-Winter,Canonical-typicality,Dahlsten-typicality,Muller-blackhole},
since it is not hard to see that every sharp theory with purification
\emph{must} have entangled states \cite{Chiribella-Scandolo-entanglement}.

With these properties, sharp theories with purification appear to
be reasonably close to quantum theory. Indeed the convex examples
known so far \cite{Stuckelberg,Araki-real,Wootters-real,Hardy-real,Fermionic1,Fermionic2,TowardsThermo,Purity}
are variations on quantum theory, obtained by imposing superselection
rules \cite{Fermionic1,Fermionic2,TowardsThermo,Purity}, or by considering
real, instead of complex, amplitudes \cite{Stuckelberg,Araki-real,Wootters-real,Hardy-real}.
In fact, it is possible to prove that all sharp theories with purification
are Euclidean Jordan algebras, and that they can exhibit at most second-order
interference \cite{HOI,Sorkin1,Sorkin2,Barnum-interference,Lee-Selby-interference}.
Besides these examples, we showed that, quite surprisingly, even classical
theory can be extended to a sharp theory with purification. In this
extension, at the level of single systems, classical systems look
perfectly classical, and have all the properties of classical systems.
What changes is the way they compose, because we need to have entangled
states in composite systems: the composition of two classical systems
is no longer a classical system.

After introducing sharp theories with purification, we studied their
properties in relation to thermodynamics. The first is a state-effect
duality, by which with every normalised pure state we can associate
a unique normalised pure effect (and vice versa) that occurs with
unit probability on that pure state. In \cite{HOI} this was proved
to be the stepping stone for the definition of the dagger of all transformations.
The second important property is that all states can be diagonalised,
i.e.\ written as a convex combination of perfectly distinguishable
pure states, with unique coefficients, the eigenvalues of the state.
Finally, sharp theories with purification satisfy requirements~\ref{enu:MC1}
and \ref{enu:MC2}, so they admit a well-defined microcanonical thermodynamics.
In these theories it is therefore possible to introduce the three
resource theories.

We showed that the sets of free operations obey the same inclusion
relations as in quantum theory: the set of RaRe channels is included
in the set of noisy operations, which is in turn included in the set
of unital channels. In addition, the convertibility of states under
unital channels is fully described by majorisation on the spectra
of states, thanks to the diagonalisation theorem. This allowed us
to find mixedness monotones aplenty: they are Schur-concave functions
on the spectrum of states. We were able to prove that, for a large
class of them, their definition on the spectrum coincides with two
definitions given in the GPT literature \cite{Entropy-Barnum,Entropy-Short,Entropy-Kimura},
based on pure measurements and pure preparations respectively.

Among mixedness monotones, we focused on Shannon-von Neumann entropy,
and we used it to define the generalised Gibbs states through Jaynes'
maximum entropy principle \cite{Jaynes1,Jaynes2}: we fixed the expectation
value of an energy observable, and determined the state that maximises
the Shannon-von Neumann entropy with that constraint, promoting it
to an equilibrium state. This was the only part of the thesis where
we departed from microcanonical thermodynamics and we explored the
role of temperature. This was instrumental in proving an operational
version of Landauer's principle in sharp theories with purification,
where we linked the reduction of entropy in the system to the heat
dissipated into the environment, following the approach by Reeb and
Wolf \cite{Reeb-Wolf}.

We showed that, if we want the three resource theories to be equivalent
and define the same preorder on states, the axioms of sharp theories
with purification are not enough, and we \emph{must} add a further
principle, unrestricted reversibility, expressing the richness of
the reversible dynamics of the theory. Again, from a thermodynamic
requirement we derived a constraint on the underlying theory. This
constraint restricts the set of allowed theories even further. For
instance, doubled quantum theory and the coherent composites of extended
classical theory are ruled out, and we get even closer to quantum
theory. This could be an indication that quantum theory is eventually
the only theory supporting a physically sensible thermodynamics.

From unrestricted reversibility, we proved that the three resource
theories obey a duality with the resource theory of entanglement \cite{Chiribella-Scandolo-entanglement}.
This connects the entanglement of pure bipartite states with the purity
of their marginals: a pure state is more entangled than another if
and only if the marginals of the latter are purer than the marginals
of the former. In this way entanglement becomes a fertile ground for
the foundations of thermodynamics, at least in the microcanonical
setting. 

The results of this thesis are only the surface of a deep operational
structure, where thermodynamic and information-theoretic features
are interwoven at the level of fundamental principles. The work initiated
here clearly still has a lot of potentialities for further exploration.
For instance, we still lack an operational derivation of strong subadditivity
of Shannon-von Neumann entropy \cite{Strong-subadditivity}, and of
the monotonicity of the relative entropy under the action of channels
\cite{Petz}. The proof of these results is notoriously difficult
even in ordinary quantum theory, but the motivation is extremely strong,
for they are the key to the derivation of the second law of thermodynamics
\cite{Preskill} and of its quantum generalisations \cite{2ndlaws},
not to mention the consequences for information processing. An operational
derivation of these results will shed a new light on quantum theory
too, highlighting the principles leading to strong subadditivity,
which at present are hidden behind the technical character of the
existing proofs.

Another area of future research is the completion of the characterisation
of microcanonical thermodynamics in sharp theories with purification.
The aspects related to the thermodynamic limit are currently under
investigation \cite{East-article}, but it is worth exploring other
sides, such as catalysis, and the role of entropies in the single-shot
work extraction \cite{Nicole,Dahlsten-extractable}.

Clearly, the whole area of canonical thermodynamics, viz.\ for systems
at a fixed \emph{temperature}, rather than fixed energy, in general
probabilistic theories is still largely unexplored. In this thesis,
we briefly touched on it with the derivation of the generalised Gibbs
states using the maximum entropy principle. However, we did not study
the possible resource theories one can introduce in this setting \cite{Athermality1,Horodecki-Oppenheim-2,Athermality2,Gibbs-preserving-maps},
nor did we study the existence of the canonical state in full generality,
viz.\ without assuming the axioms of sharp theories with purification.
To this end, a promising way to derive the canonical state is to harness
the idea of complete passivity \cite{Pusz,Lenard}, and enforce it
in arbitrary physical theories. Alternatively, we can pursue the derivation
of equilibrium states (including the canonical ones) in sharp theories
with purification based on typicality and entanglement, along the
lines of \cite{Popescu-Short-Winter,Dahlsten-typicality,Muller-blackhole}.
Results in this direction would bring further evidence that sharp
theories with purification provide the appropriate ground for the
construction of a well-founded statistical mechanics.

As we noted, the essence of sharp theories with purification is entanglement
and the possibility of purifying states and transformations. What
about classical theory? Is classical thermodynamics well founded?
The answer to this question is not obvious nor easy. Clearly, our
experience and the physical results, both theoretical and experimental,
tell us that this is the case. Although we do it all the time, and
it works very well, in classical theory there is no principle that
\emph{formally} justifies enlarging an open system to recover the
isolated picture. The missing principle we are looking for is Purification
\cite{Chiribella-purification}, which is quintessentially thermodynamic.
Indeed, when in thermodynamics we model an open system as part of
an isolated system, where the other part has been discarded or neglected,
it is precisely what happens in a theory where Purification is at
work. There, all non-reversible channels can be seen as a reversible
evolution in a larger system (the isolated system, where evolution
is assumed to be reversible), part of which has been discarded. Instead
in classical theory, whatever is mixed and irreversible, it stays
so, irrespective of how much we enlarge the system. This is precisely
due to the lack of entanglement. Then why does thermodynamics work
in classical theory?

A possible answer might be that classical thermodynamics works because
Nature is ultimately quantum, where Purification holds. Therefore
the formal underpinnings of classical thermodynamics may be found
in quantum thermodynamics, of which it is a sub-theory. Another possible
answer might come from our results: classical theory admits an extension
to a sharp theory with purification with actual classical systems
among its systems. This extended classical theory offers a new possibility
for the foundations of classical statistical mechanics, allowing one
to view classical ensembles, at least from a formal point of view,
as arising from joint pure entangled states. At the same time, it
allows us to export the results and the proof techniques of sharp
theories with purification to classical theory. This motivates the
following conjecture.
\begin{conjecture*}
Every theory with a ``well-behaved'' thermodynamics can be extended
to a sharp theory with purification.
\end{conjecture*}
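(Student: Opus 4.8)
The first task is to pin down what ``well-behaved thermodynamics'' means, since the conjecture is only as strong as this hypothesis. The natural reading, suggested by the preceding chapters, is: the theory is causal, satisfies Conditions~\ref{cond:unique microcanonical} and~\ref{cond:informational equibrium} (so that the microcanonical state exists, is unique, and is stable under composition), admits a spectral decomposition of states into perfectly distinguishable pure states with unique spectra, and has unrestricted reversibility in the sense of Section~\ref{subsec:Unrestricted-reversibility} --- in other words, exactly the structure that makes majorisation the universal convertibility criterion in Chapter~\ref{chap:Sharp-theories-with}. The target notion of ``extension'' is the effective-system embedding of Chapter~\ref{chap:Operational-thermodynamics}: the original theory should appear as the image of a decoherence-type constraint inside an ambient sharp theory with purification, so that single systems retain all their original states, effects, and transformations, and only the composition rule changes. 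With these definitions fixed, the plan is to abstract the two constructions of Sections~\ref{sec:Example:-doubled-quantum} and~\ref{sec:Example:-extended-classical} into a general ``sectorisation plus twisted composition'' recipe that starts from any theory satisfying the hypotheses.

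Concretely I would proceed as follows. (i) Use the hypotheses to put the given theory into spectral form: every system carries a distinguished family of pure maximal sets, all connected by reversible channels, and every state diagonalises uniquely; this is the analogue of the diagonalisation theorem and supplies the ``energy eigenbases'' that the twisted composition will group. (ii) For each system $\mathrm{S}$ of dimension $d$, define an ambient ``$N$-sector'' system $\widehat{\mathrm{S}}$ consisting of $N \ge d$ mutually isomorphic copies of $\mathrm{S}$'s state space, stacked as block-diagonal states $\rho = \bigoplus_k p_k \rho_k$ exactly as for codits; the original system is recovered as the constraint $D_{\boldsymbol{\alpha}} \rho = \rho$ onto a single sector, and one checks faithfulness (no tomographic identifications, no lost transformations) by the argument of the decoherence example. (iii) Define composition of ambient systems by the residue-class rule of eq.~\eqref{eq:general composition extended}, so that composites acquire genuinely new entangled states whose marginals nevertheless fall back into the original theory; here the key lemma is that the twisted composite of two single-sector ambient systems, after decohering each factor, reproduces the original composition of the underlying systems. (iv) Verify the four axioms of the ambient theory: Causality is immediate from the block-diagonal identity effect, Pure Sharpness from the rank-one pure effects inside each sector, Purity Preservation from the single-block structure of pure transformations, and Purification by building, for each block-diagonal state, a purification living in one ambient sector and borrowing an isomorphic sector of the purifying system to carry the Schmidt coefficients, with the uniqueness clause furnished by local sector-hopping reversibles.

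The hard part, I expect, is step (iv), and within it the \emph{uniqueness} half of Purification together with the compatibility of the twisted composition with the strict symmetric monoidal structure. Uniqueness up to local reversibles demands that the ambient purifying systems carry a reversible group rich enough to intertwine any two purifications; this is essentially forcing Strong Symmetry onto the ambient theory, which is precisely why the hypothesis on the original theory must already include unrestricted reversibility --- and why one should not expect the conjecture without it. A second genuine difficulty is associativity and symmetry of the residue-class composition when the factors have different numbers of sectors: the bookkeeping that works for codits must be shown to extend, and one must confirm that $\mathtt{SWAP}$ remains a legitimate reversible channel of the ambient theory. A third, more structural, concern is whether a purely combinatorial sectorisation suffices for \emph{every} admissible state space, or whether some theories embed only after first being realised inside a Euclidean Jordan algebra, using the fact (cf.\ \cite{HOI}) that sharp theories with purification are EJAs; in that case the plan branches, with a Jordan-algebraic representation theorem doing the work of step (ii) and the sectorisation supplying the composition rule on top of it. Showing that at least one of these two routes always goes through is, I believe, the crux of the conjecture.
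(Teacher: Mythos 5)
The statement you are addressing is explicitly left \emph{open} in the paper: the authors state that, lacking a formal definition of ``well-behaved'' thermodynamics, the conjecture ``is not a mathematical statement for the time being, but rather an open research programme.'' There is therefore no proof in the paper to compare against, and your text is not a proof either --- it is a research plan, and you say so yourself (``The hard part, I expect\dots'', ``the crux of the conjecture''). As a plan it is a reasonable reading of the authors' own hints (Lieb--Yngvason-style desiderata, the decoherence-constraint embedding, the two worked examples), but two of its steps conceal the entire difficulty rather than resolving it.

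First, your choice of hypotheses --- Conditions~\ref{cond:unique microcanonical} and~\ref{cond:informational equibrium}, unique diagonalisation, and unrestricted reversibility --- already imports most of the machinery that in the paper is \emph{derived from} the sharp-theory axioms; under that reading the conjecture risks collapsing to ``theories that already look almost like sharp theories with purification embed into one,'' which is weaker than what the authors intend (their only stated desideratum is a well-behaved thermodynamics, not spectrality). Second, and more seriously, the ``sectorisation plus twisted composition'' recipe of sections~\ref{sec:Example:-doubled-quantum} and~\ref{sec:Example:-extended-classical} is not a combinatorial operation on an arbitrary GPT: both examples are sectorisations \emph{of quantum theory}. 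The purification of a block-diagonal state there is a vector $\sum_j \sqrt{p_j}\,\ket{\varphi_j}\ket{\alpha_j}$ living in an ambient Hilbert space whose inner-product structure supplies the amplitudes, the uniqueness-up-to-local-unitaries clause, and Purity Preservation for free. For a general state space satisfying your hypotheses there is no such ambient arena: stacking $N$ isomorphic copies of a system that itself lacks purification gives block-diagonal mixed states within a single sector that still have nothing to purify them, and the residue-class rule~\eqref{eq:general composite cobits}-style composition does not manufacture one. Your fallback --- first represent the theory as a Euclidean Jordan algebra --- presupposes the conclusion, since the EJA structure is a \emph{consequence} of the sharp-theory axioms, not a property of the hypothesis theory. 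Until one exhibits, for an arbitrary admissible single-system state space, an ambient composite in which every mixed state acquires a pure dilation with the required uniqueness, the conjecture remains exactly as open as the paper says it is.
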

As we currently lack a formal definition of ``well-behaved'' thermodynamics,
our conjecture is not a mathematical statement for the time being,
but rather an open research programme. Addressing this programme directly
will be rather hard, and will mean rigorously formulating a set of
desiderata about thermodynamics, from which to \emph{derive} the requirements
that the underlying physical theory should meet. An example of thermodynamic
desiderata is provided by Lieb and Yngvason's axioms \cite{Lieb-Yngvason},
recently revisited from a quantum information perspective \cite{Weilenmann1,Weilenmann2},
which capture the fundamental structures underpinning the second law
of thermodynamics. Connecting general probabilistic theories with
Lieb and Yngvason's desiderata is a promising route to approach our
conjecture, and produce new results in the axiomatic foundations of
thermodynamics.

\chapter*{\markboth{ACKNOWLEDGEMENTS}{ACKNOWLEDGEMENTS}Acknowledgements}

I am really grateful to a great number of people in these three years
as a DPhil student. They have been great years, in terms of scientific
and personal growth, and I can truly say I have learnt a lot during
this time. All the people I have met, in some way or another, have
taught me something, therefore they all deserve a thank you. Here
I will mention those who deserve a special acknowledgement. I am rather
sure to forget somebody, so I will apologise in advance to those I
forgot to mention.

My first and biggest thanks goes to my supervisor Prof.\ Jon Barrett,
a brilliant and sharp scientist, always keen to discuss topics in
the foundations of quantum theory and, more generally, of physics.
He has guided me in my DPhil research, and he has granted me the freedom
to pursue my independent lines of research, always available to discuss
my results and to provide feedback or interesting insights. Without
his support and guidance, all this research would not have been possible.

Another big thanks goes to my collaborator and former supervisor Prof.\ Giulio
Chiribella, with whom I did most of the research presented here. It
has been a great pleasure to work and have exciting scientific discussions
with him. I have learnt really a lot from him, and he has always provided
very good advice and scientific support, as well as excellent guidance
in the research project of this thesis.

I really want to thank my examiners, Prof.\ Sam Staton and Dr.\ Markus
Müller, for the careful reading of my thesis and the insightful and
stimulating comments they made, which led to a fruitful scientific
discussion during my viva.

Then Prof.\ Bob Coecke deserves a special mention, for co-leading
the Quantum Group at the Department of Computer Science, and for creating
such a nice, relaxed, and friendly atmosphere. This is the ideal environment
to foster new ideas and collaborations. I have had the opportunity
to collaborate with him various times, on a scientific article, on
an MSc thesis supervision, and teaching the classes for his course.
They all were very nice experiences.

Of all the other people in the Quantum Group, Dr.\ Matty Hoban deserves
a special thanks. He has helped me a lot in several ways, sharing
his knowledge and expertise, or simply providing guidance and encouragement
in the less bright times one inevitably encounters in one's DPhil.
I am indebted to him for his patience and support.

In fact all the Quantum Group deserve a thank you, starting from the
faculty members, including Prof.\ Samson Abramsky and Dr.\ Jamie
Vicary, to postdocs, Dr.\ Nathan Walk, Dr.\ Ognyan Oreshkov, Dr.\ Niel
de Beaudrap, Dr.\ Matt Pusey, Dr.\ Stefano Gogioso, down to my fellow
DPhil students Robin Lorenz, Subhayan Roy Moulik, Vojt\v{e}ch Havlí\v{c}ek,
Marietta Stasinou, Fatimah Ahmadi, David Reutter, Fabrizio Genovese.
I want also to mention some former fellow DPhil students, who have
successfully graduated: Dr.\ Ciarán Lee, Dr.\ John Selby, and Dr.\ John-Mark
Allen. It has been a pleasure to work with them, and I have learnt
a lot from them all.

Being at the University of Oxford I have had the opportunity to interact
with brilliant researchers outside the my group. I would like to mention
Dr.\ David Jennings, and Prof.\ Harvey Brown, for the interesting
and inspiring discussions I have had with them.

I would like to thank all my collaborators, from Oxford and other
institutions: Prof.\ Giulio Chiribella, Prof.\ Bob Coecke, Prof.\ Howard
Barnum, Prof.\ Jonathan Oppenheim, Prof.\ Pawe\l{} Horodecki, Dr.\ Ciarán
Lee, Dr.\ John Selby, Dr.\ Stefano Gogioso, Dr.\ Lídia del Rio,
Dr.\ Philippe Faist, Dr.\ Roberto Salazar, Dr.\ Jarek Jorbicz,
and Carlo Sparaciari. A particular thanks goes to Prof.\ Jonathan
Oppenheim and Prof.\ Pawe\l{} Horodecki, who invited me to visit
their institutions. Prof.\ Pawe\l{} Horodecki deserves a special
mention, for being always so nice and friendly, available and willing
to help, despite being very busy as one of the world-leading experts
in quantum information theory. It has been a sheer pleasure to work
with him and learn from his brilliant ideas.

I wish to thank also Perimeter Institute, especially Prof.\ Lucien
Hardy and Prof.\ Rob Spekkens, for hosting me during my visits there;
and the Paris Centre for Quantum Computing, especially Dr.\ Alexei
Grinbaum, for inviting me to a QuPa meeting in May 2018. I had a great
and fruitful time in these visits, and I benefited a lot from inspiring
conversations.

I want to conclude the academic part of my acknowledgements by mentioning
Prof.\ Paolo Ciatti, Prof.\ Pieralberto Marchetti, Prof.\ Kurt
Lechner, Prof.\ Antonio Saggion, Prof.\ Paolo Villoresi at the University
of Padua, and Dr.\ Francesco Sorge for their support, and for passing
their passion for theoretical physics, quantum mechanics, thermodynamics,
and the foundations of physics on to me.

I feel very grateful for the help and support from many (old and new)
friends during my time at Oxford. First of all, I would like to thank
Aris Filos-Ratsikas, who has been an amazing friend in these three
years; I had a great time with him, even in less bright moments, and
he has always given me very good advice. I wish to thank some other
friends for their time together, it would be too long to thank each
of them separately, therefore I will list them here: Stefano Steffenoni,
Enzo Arnosti, Alex Kavvos, Philip Lazos, Ninad Rajagopal, Stefano
Gogioso, Marietta Stasinou, Vanessa Monteventi, Rosalba García Millán,
Sheng Peng, George Vere, Gabriele Damiani, and Stefania Monterisi.
Thanks for making these DPhil years better and lighter.

Clearly, all this would not have been possible without the love and
support of my Mum and Dad, to whom this thesis is dedicated. They
have always supported me and my passions throughout my life, and have
always encouraged me to work hard and do my best to achieve my goals.
They have been my guidance and inspiration.

To conclude, I want to thank Miriam Saccon for her love, support,
and patience during my DPhil and the writing of this thesis; without
her this would have been a much harder time. Despite not being her
area of expertise, she did a very good proofreading of earlier drafts
of this thesis, and I am very grateful to her for this.

Finally, I acknowledge the support of the EPSRC doctoral training
grant 1652538 ``Information-theoretic foundations of quantum thermodynamics'',
and of the Oxford-Google DeepMind graduate scholarship.

\bibliographystyle{hustthesis}
\bibliography{bibliographyPhD}

\end{document}